%% file: lord_of_entropies_rewritten.tex
\pdfoutput=1
\documentclass[11pt]{article}

\usepackage[margin=1in]{geometry}
\usepackage[T1]{fontenc}
\usepackage[utf8]{inputenc}
\usepackage[english]{babel}
\usepackage{amsmath,amssymb,amsthm,mathtools}
\usepackage{bbm}
\usepackage{graphicx}
\usepackage{tikz}
\usepackage{placeins}
\usepackage{xcolor}
\usepackage{tcolorbox}
\tcbuselibrary{breakable,skins}
\usepackage{needspace}
\usepackage{enumitem}
\usepackage{physics}
\usepackage{aliascnt}
\usepackage[colorlinks=true,
  linkcolor=blue!60!black,
  citecolor=red!55!black,
  urlcolor=blue!60!black,
  bookmarks=true,
  hypertexnames=false]{hyperref}
\usepackage[capitalise,nameinlink]{cleveref}

\allowdisplaybreaks
\newtheorem{definition}{Definition}[section]
\newaliascnt{theorem}{definition}
\newtheorem{theorem}[theorem]{Theorem}
\aliascntresetthe{theorem}
\newaliascnt{lemma}{definition}
\newtheorem{lemma}[lemma]{Lemma}
\aliascntresetthe{lemma}
\newaliascnt{proposition}{definition}
\newtheorem{proposition}[proposition]{Proposition}
\aliascntresetthe{proposition}
\newaliascnt{corollary}{definition}
\newtheorem{corollary}[corollary]{Corollary}
\aliascntresetthe{corollary}
\theoremstyle{remark}
\newaliascnt{remark}{definition}
\newtheorem{remark}[remark]{Remark}
\aliascntresetthe{remark}
\newaliascnt{example}{definition}
\newtheorem{example}[example]{Example}
\aliascntresetthe{example}

\definecolor{mainobjectbackground}{RGB}{246,249,253}
\definecolor{mainobjectframe}{RGB}{94,119,145}
\definecolor{mainobjecttitle}{RGB}{231,239,247}
\newtcolorbox{mainobject}[1]{
  enhanced,
  breakable,
  colback=mainobjectbackground,
  colframe=mainobjectframe,
  colbacktitle=mainobjecttitle,
  coltitle=black,
  fonttitle=\bfseries,
  title={#1},
  boxrule=0.6pt,
  arc=1.5mm,
  outer arc=1.5mm,
  left=2.5mm,
  right=2.5mm,
  top=1.5mm,
  bottom=1.5mm,
  toptitle=1mm,
  bottomtitle=1mm,
  before skip=8pt plus 2pt minus 1pt,
  after skip=8pt plus 2pt minus 1pt
}

\DeclareMathOperator{\supp}{supp}
\DeclareMathOperator{\id}{id}
\DeclareMathOperator{\os}{os}
\DeclareMathOperator{\ts}{ts}
\DeclareMathOperator{\rev}{rev}
\newcommand{\1}{\mathbbm{1}}

\DeclareMathOperator{\R}{\mathbb{R}}

\DeclareMathOperator{\cH}{\mathcal{H}}

\DeclareMathOperator{\cK}{\mathcal{K}}

\title{Geodesic Quantum $f$-Divergences}

\author{%
\'Angela Capel\\
{\small Department of Applied Mathematics and Theoretical Physics,}\\
{\small University of Cambridge, Wilberforce Road,}\\
{\small Cambridge CB3 0WA, United Kingdom, and}\\
{\small Fachbereich Mathematik, Universit\"at T\"ubingen,}\\
{\small 72076 T\"ubingen, Germany}\\
{\small \href{mailto:ac2722@cam.ac.uk}{\texttt{ac2722@cam.ac.uk}}}\\[1.1em]
Pablo Costa Rico\\
{\small Institute for Quantum Information,}\\
{\small RWTH Aachen University,
}\\
{\small 52074 Aachen, Germany}\\
{\small \href{mailto:pablo.costa@rwth-aachen.de}{\texttt{pablo.costa@rwth-aachen.de}}}\\[1.1em]
}

\date{}

\begin{document}
\maketitle

\begin{abstract}
We introduce the geodesic quantum $f$-divergences $D_f^t$,
$0\leq t\leq1$, obtained from the affine-invariant geodesic between
the standard and maximal relative modular operators.  They reduce to
the classical $f$-divergence for commuting states.  The logarithmic
generator yields geodesic relative entropies joining the Umegaki and
Belavkin--Staszewski entropies, while the power generators yield
$(t,\alpha)$-R\'enyi divergences joining the Petz and geometric
families.

Our first main result is data processing of $D_f^t$ for every finite
operator-convex generator $f$ and every $t\in[0,1]$.  In particular, this gives DPI for the geodesic relative entropies and for the $(t,\alpha)$-R\'enyi divergences when
$0<\alpha<1$ or $1<\alpha\leq2$. Our second main result identifies equality in the DPI: for invertible states, every equality-determining operator-convex generator
has, at each nonmaximal parameter $0\leq t<1$, exactly the Petz
sufficiency class; at $t=1$, this changes to the generally larger
maximal, or BS, class, which also coincides with the equality class of the divergences associated with the quadratic generator
for every $t$. Our third main result is the corresponding collapse of invertible
geodesic quantum Markov chains: for each of the three ordered
conditional-mutual-information constructions, the $t$-quantum Markov
chains are precisely the quantum Markov chains for $0\leq t<1$,
whereas at $t=1$ they are the invertible BS quantum Markov chains, a
class that can be strictly larger.

We also determine parameter-monotonicity regimes and the universal
intersections with the $(\alpha,z)$ family.  For $C^1$ generators, we
give an exact ordinary-matrix characterization of the universal
$t$-monotonicity cone; on its finite operator-convex slice, we obtain
an equivalent canonical-measure polar characterization.  We prove strengthened data-processing and
reconstruction estimates; we compare the three conditional
orientations; we establish continuity bounds under positive
lower-eigenvalue assumptions together with complementary
discontinuity results; and we give a capacity-per-unit-cost
interpretation.  The Hessian-metric,
likelihood-statistical and restricted hypothesis-testing, and
horospherical-geometric developments are treated in a companion paper.
\end{abstract}

\newpage
\tableofcontents

\input{rewritten_front_sections}
\input{dpi_and_recovery_sections}
\input{continuity_conditional_sections}

\input{operational_interpretations_section}
\input{discussion_outlook_section}

\section*{Acknowledgments}
The authors are grateful to Andreas Bluhm for fruitful conversations related to the topic of this paper, to Michael Wolf for posing the question that gave rise to this project, and to Matteo Scandi for suggesting the connection between QMCs and symmetry with respect to power-weighted inner products.
 Á.C. acknowledges support from the Deutsche Forschungsgemeinschaft
(DFG, German Research Foundation), Project-ID 470903074, TRR 352.
This project was funded within the QuantERA II Programme, which
received funding from the European Union's Horizon 2020 research and
innovation programme under Grant Agreement No.~101017733.

\section*{AI statement}

The project began 1,5 years ago and many of the results contained in this paper, such as the DPI for the $(t,f)$-divergences, had already been obtained before the use of AI. However, the authors wanted to show a hierarchy for the $t$-QMCs, and even conjectured that for $t<1$ they would collapse to QMCs (for their connections to weighted inner products), but failed to prove these results for a long time. ChatGPT 5.6 Sol finally helped obtain the solution to this and finalise some other results in the paper. Hence, the authors used ChatGPT 5.6 Sol to assist with proofs, numerical computations, figure preparation, and manuscript drafting; the scientific ideas, research questions, and conceptual direction underlying the project originated with the authors, who take full responsibility for the final content, and have reviewed the proofs provided by the LLM carefully.

\bibliographystyle{abbrv}
\bibliography{literatura}

\end{document}

%% file: rewritten_front_sections.tex
% Paper-ready front matter for the rewritten manuscript.
% This file intentionally contains no preamble.

\section{Introduction}
\label{rw:sec:introduction}

In classical information theory and statistics, the
Kullback--Leibler (KL) divergence
\begin{equation}
 D_{\rm KL}(p\Vert q):=\sum_xp(x)\log\frac{p(x)}{q(x)}
 \label{rw:eq:intro-KL}
\end{equation}
is the canonical measure of relative information.  Since its
introduction by Kullback and Leibler, it has become ubiquitous in
statistical inference, coding, information processing, and related
areas \cite{kullback1951information,cover2006elements}.

Passing from probability distributions to quantum states creates a
basic ambiguity.  The requirement that a quantum quantity reduces to
\eqref{rw:eq:intro-KL} when the two states commute does not determine
how the noncommuting operators should be ordered.  Consequently,
infinitely many inequivalent quantum extensions of the KL divergence
can be constructed.  More generally, the same ambiguity arises when
extending arbitrary classical divergences to noncommuting states.
Standard, measured, and maximal quantum divergences, together with
sandwiched and $\alpha$--$z$ R\'enyi divergences, are prominent
quantum extensions of their corresponding classical divergences
\cite{petz1986quasi,hiai2011quantum,hiai2017different,
matsumoto2018new,audenaert2015alphaz}.  In practice, the Umegaki
relative entropy is the customary extension of the KL divergence.
Its central role is supported by its data-processing, convexity,
strong-subadditivity, and sufficiency properties, and by its many
uses throughout quantum information theory
\cite{Umegaki-RelativeEntropy-1962,lieb1973strong,
Petz-SufficiencyChannels-1978,vedral2002role,petz2007quantum}.
More concretely, it generates the standard quantum mutual information
and the relative entropy of entanglement, while its value relative to
a Gibbs state equals the nonequilibrium free-energy difference up to
the inverse-temperature factor
\cite{vedral2002role,wilming2017axiomatic}.

There is, however, another distinguished extension.  The
Belavkin--Staszewski (BS) relative entropy was introduced in
\cite{BelavkinStaszewski-BSentropy-1982} and is the logarithmic
member of the maximal quantum $f$-divergences
\cite{hiai2017different}.  It and its geometric R\'enyi relatives
have recently attracted renewed attention through strengthened data
processing and recoverability \cite{bluhm2020strengthened},
quasi-factorization
\cite{BluhmCapelPerezHernandez-WeakQFBSentropy-2021}, continuity of
conditional quantities \cite{bluhm2023continuity}, quantum channel
estimation, discrimination, and capacities
\cite{fang2021geometric,katariya2021geometric}, conditional
independence of Gibbs states and efficient learning
\cite{alhambra2026conditional}, and quantum Markov structure
\cite{Bluhm:2025kai}.  The Umegaki and BS
entropies coincide on commuting states, but they are genuinely
different in the noncommutative regime.

Geometric and interpolation-based constructions have previously been
used to organize larger families of quantum distinguishability
measures.  The $\alpha$--$z$ R\'enyi divergences place, among other
examples, the Petz and sandwiched families within a common
two-parameter scheme \cite{audenaert2015alphaz}.  At a more general
modular level, Kubo--Ando operator means and noncommutative $L_p$
interpolation have been used to generate broad classes of monotone
multi-state quantum $f$-divergences \cite{furuya2023monotonic}.
Other geometry-driven generalizations instead use geodesics defined
by deformed exponential functions on the manifold of invertible states
\cite{andrade2020generalized}.
Most directly related to the two relative-entropy endpoints above,
the construction of \cite{mosonyi2024geometric} deforms a given
quantum relative entropy $D^q$ according to
$D^{q,\#_\gamma}(\rho\Vert\sigma)
=(1-\gamma)^{-1}D^q(\rho\Vert\sigma\#_\gamma\rho)$, where
$\#_\gamma$ is the weighted Kubo--Ando geometric mean.  Starting from
the Umegaki entropy, this gives a different monotone and additive
interpolation to the BS entropy on invertible states; the same work also
constructs further barycentric R\'enyi divergences.  More recently,
varying the base point in Bures--Wasserstein geometry has produced a
family unifying several quantum fidelities and associated R\'enyi
trace functionals, including the Petz, sandwiched, reverse-sandwiched,
and geometric cases \cite{afham2025riemannian}.

Against this background, we take a different geometric route: we
follow the affine-invariant geodesic directly between the standard and
maximal relative modular operators before applying functional
calculus.  The resulting family of quantum relative entropies always
reduces to the KL divergence on commuting states and joins Umegaki to
BS.  More generally, we introduce the \emph{geodesic
$f$-divergences}, also called the \emph{$(t,f)$-divergences}, which
interpolate between the standard and maximal quantum
$f$-divergences \cite{hiai2017different}.  For the logarithmic generator $f(x)=x\log x$ they
give the $t$-relative entropies $D^t$; for the power generators they
give the $(t,\alpha)$-R\'enyi divergences $D_\alpha^t$, interpolating
between the Petz and geometric R\'enyi divergences.  The two
independent operations---choosing the generator and choosing an
endpoint of the geodesic---are displayed in
\Cref{rw:fig:divergence-family-roadmap}.

The central question is whether the intermediate points remain
genuine quantum divergences.  We answer it by proving data processing
and then study their monotonicity, recovery theory, conditional
quantities, quantitative remainders, continuity, and a
capacity-per-unit-cost interpretation.  The companion paper constructs
the canonical finite likelihood experiment associated with $\Gamma_t$
and proves its exact classical representation of the geodesic quantum
$f$-divergences for operator-convex generators
\cite[Theorem~3.1]{capel2026informationgeometry}.  It also develops the
induced Hessian metrics, their universal RLD endpoint, and their
BKM--RLD interpolation
\cite[Theorem~2.3, Corollary~2.4, and Theorem~2.5]{capel2026informationgeometry},
identifies $t$ intrinsically as a boundary index and compares this
metric path with the Hasegawa--Petz interpolation
\cite[Theorem~2.6 and Proposition~2.7]{capel2026informationgeometry},
derives the statistical and restricted hypothesis-testing consequences
of the canonical experiment
\cite[Propositions~3.2--3.3]{capel2026informationgeometry}, and
develops its Busemann and horospherical geometry
\cite[Propositions~4.1--4.2 and Section~4.3]{capel2026informationgeometry}.

\begin{figure}[!t]
\centering
\definecolor{roadmapPetz}{RGB}{31,78,121}
\definecolor{roadmapMaximal}{RGB}{202,88,32}
\definecolor{roadmapInterior}{RGB}{0,126,137}
\definecolor{roadmapTransform}{RGB}{104,85,135}
\begin{tikzpicture}[
  x=1cm,
  y=1cm,
  >=stealth,
  font=\small,
  roadmap node/.style={
    rounded corners=2pt,
    minimum height=1.15cm,
    text width=3.15cm,
    align=center,
    inner sep=4pt,
    line width=0.65pt
  },
  central node/.style={
    roadmap node,
    text width=3.55cm,
    draw=roadmapInterior!85!black,
    fill=roadmapInterior!12,
    line width=1pt
  },
  family node/.style={
    roadmap node,
    text width=3.55cm,
    draw=roadmapInterior!80!black,
    fill=roadmapInterior!6
  },
  petz node/.style={
    roadmap node,
    draw=roadmapPetz,
    fill=roadmapPetz!6
  },
  maximal node/.style={
    roadmap node,
    draw=roadmapMaximal,
    fill=roadmapMaximal!7
  },
  petz arrow/.style={
    ->,
    draw=roadmapPetz,
    line width=0.9pt,
    shorten <=2pt,
    shorten >=2pt
  },
  maximal arrow/.style={
    ->,
    draw=roadmapMaximal,
    line width=0.9pt,
    shorten <=2pt,
    shorten >=2pt
  },
  generator arrow/.style={
    ->,
    draw=roadmapInterior!90!black,
    line width=0.9pt,
    shorten <=2pt,
    shorten >=2pt
  },
  commuting arrow/.style={
    ->,
    draw=roadmapTransform,
    densely dotted,
    line width=0.65pt,
    shorten <=2pt,
    shorten >=2pt
  },
  arrow label/.style={
    fill=white,
    inner sep=1.5pt,
    align=center,
    font=\scriptsize
  },
  limit note/.style={
    rounded corners=2pt,
    draw=roadmapTransform!55,
    fill=roadmapTransform!5,
    inner sep=4pt,
    font=\footnotesize,
    align=center
  }
]
  \node[central node] (tf) at (0,0)
    {\textbf{$(t,f)$-divergences}\\[1mm]
     $D_f^t$, \quad $0\leq t\leq1$};

  \node[petz node] (petzf) at (-5.25,0)
    {standard/Petz\\$f$-divergence\\[1mm]
     $D_f^0=D_f$};
  \node[maximal node] (maxf) at (5.25,0)
    {maximal\\$f$-divergence\\[1mm]
     $D_f^1=\widehat D_f$};

  \node[family node] (trenyi) at (0,3.15)
     {\textbf{$(t,\alpha)$-R\'enyi divergences}\\[1mm]
     $\displaystyle D_\alpha^t
       =\frac{1}{\alpha-1}\log D_{x^\alpha}^t$\\[1.2mm]
     {\scriptsize $\alpha>0$, $\alpha\neq1$}};
  \node[petz node] (petzrenyi) at (-5.25,3.15)
    {Petz--R\'enyi\\divergence\\[1mm]
     $D_\alpha^0=\overline D_\alpha$};
  \node[maximal node] (georenyi) at (5.25,3.15)
    {geometric R\'enyi\\divergence\\[1mm]
     $D_\alpha^1=\widehat D_\alpha$};

  \node[family node] (trel) at (0,-3.15)
    {\textbf{$t$-relative entropies}\\[1mm]
     $D^t=D_{x\log x}^t$};
  \node[petz node] (umegaki) at (-5.25,-3.15)
    {Umegaki relative\\entropy\\[1mm]
     $D^0=D$};
  \node[maximal node] (bs) at (5.25,-3.15)
    {Belavkin--Staszewski\\relative entropy\\[1mm]
     $D^1=\widehat D$};

  % Endpoint specializations of the three interpolating families.
  \draw[petz arrow] (tf) -- node[arrow label,above] {$t=0$} (petzf);
  \draw[maximal arrow] (tf) -- node[arrow label,above] {$t=1$} (maxf);
  \draw[petz arrow] (trenyi) -- node[arrow label,above] {$t=0$} (petzrenyi);
  \draw[maximal arrow] (trenyi) -- node[arrow label,above] {$t=1$} (georenyi);
  \draw[petz arrow] (trel) -- node[arrow label,above] {$t=0$} (umegaki);
  \draw[maximal arrow] (trel) -- node[arrow label,above] {$t=1$} (bs);

  % Generator specializations in the interior column.
  \draw[generator arrow] (tf) --
    node[arrow label,right]
      {$f(x)=x^\alpha$\\R\'enyi normalization}
    (trenyi);
  \draw[generator arrow] (tf) --
    node[arrow label,right] {$f(x)=x\log x$}
    (trel);

  % The dotted arrows show that endpoint and generator specialization commute.
  \draw[commuting arrow] (petzf) --
    node[arrow label,right] {$x^\alpha$\\normalization}
    (petzrenyi);
  \draw[commuting arrow] (petzf) --
    node[arrow label,right] {$x\log x$}
    (umegaki);
  \draw[commuting arrow] (maxf) --
    node[arrow label,left] {$x^\alpha$\\normalization}
    (georenyi);
  \draw[commuting arrow] (maxf) --
    node[arrow label,left] {$x\log x$}
    (bs);

  \node[limit note] at (0,-4.75)
    {Order-one limits:
     $\overline D_\alpha\to D$, \quad
     $D_\alpha^t\to D^t$, \quad
     $\widehat D_\alpha\to\widehat D$
     as $\alpha\to1$};
\end{tikzpicture}
\caption{Specialization diagram for the $(t,f)$-divergences.  Teal
arrows select a generator (with the R\'enyi normalization included on
the upper branch), while dark blue and orange arrows select the Petz
endpoint $t=0$ and the maximal endpoint $t=1$, respectively.  The
dotted vertical arrows show that the two operations commute.  Thus
$D^t$ interpolates between the Umegaki and BS relative entropies, and
$D_\alpha^t$ interpolates between the Petz and geometric R\'enyi
divergences.  At $\alpha=2$ the entire upper row coincides even for
noncommuting states; for commuting states every horizontal row
collapses to its common classical quantity.}
\label{rw:fig:divergence-family-roadmap}
\end{figure}

\paragraph{Standing convention.}
Except when behavior at non-invertible states is itself under
discussion, all results below are stated and proved for invertible
states.  Extensions to non-invertible states are result-dependent and
are stated locally, together with the required support convention.
In \Cref{rw:sec:interpolated-divergences,rw:sec:t-monotonicity}, every
statement concerning a pair with common support $P$ remains valid
after compression to $P\mathcal H$, with inverses, logarithms,
traces, and functional calculi taken in the compressed algebra.
Some inequalities and algebraic identities extend further by
lower-semicontinuous regularization, as recorded after the
corresponding results.  No extension is claimed for strict
equality characterizations, recovery and fixed-point results,
analyticity, real-$t$ extrapolations involving negative powers, or
estimates whose constants contain inverse spectral bounds.

\subsection{Main results}
\label{rw:subsec:informal-main-results}

We now state the main results in a form intended to make the new
quantities and the two endpoint phenomena visible at once.  Precise
support conventions and parameter ranges are given with the
corresponding results.  Throughout this overview, $\rho$ and $\sigma$
are invertible states.  We call $0\leq t<1$ the \emph{nonmaximal}
range and $0<t<1$ the \emph{strict interior}.

\medskip
\noindent\textbf{Definitions and accessible reformulations.}
We set
\begin{equation*}
 \Delta_{\rho\mid\sigma}=L_\rho R_{\sigma^{-1}},
 \qquad
 \widehat\Delta_{\rho\mid\sigma}
 =R_{\sigma^{-1/2}\rho\sigma^{-1/2}}.
\end{equation*}
The affine-invariant geodesic between these two positive
Hilbert--Schmidt operators simplifies to
\begin{equation*}
 \Gamma_t
 :=\Delta_{\rho\mid\sigma}\#_t
   \widehat\Delta_{\rho\mid\sigma}
 =L_{\rho^{1-t}}
  R_{\sigma^{-1/2}\rho^t\sigma^{-1/2}},
 \qquad 0\leq t\leq1.
\end{equation*}
Our basic definition for the $(t,f)$-divergences is
\begin{equation}
 \boxed{
 D_f^t(\rho\Vert\sigma)
 :=\big\langle\sigma^{1/2},
 f(\Gamma_t)[\sigma^{1/2}]\big\rangle_{\rm HS}.}
 \label{rw:eq:intro-Dft}
\end{equation}
Thus $D_f^0=D_f$ is the standard/Petz $f$-divergence,
$D_f^1=\widehat D_f$ is the maximal $f$-divergence, and every $t$
gives the classical $f$-divergence when $[\rho,\sigma]=0$, as recorded
in \Cref{rw:prop:gamma-formula,rw:def:interpolated-f-divergence}.

The two principal specializations, $t$-relative entropies and $(t,\alpha)$-Rényi divergences, respectively, have particularly usable formulas:
\begin{align}
 D^t(\rho\Vert\sigma)
 &=(1-t)\operatorname{Tr}(\rho\log\rho)
 +\operatorname{Tr}\!\left[
 \rho\log\!\left(\rho^{t/2}\sigma^{-1}\rho^{t/2}\right)
 \right],
 \label{rw:eq:intro-Dt-explicit}\\
 D_\alpha^t(\rho\Vert\sigma)
 &=\frac{1}{\alpha-1}\log Q_{\alpha,t}(\rho\Vert\sigma),
 \qquad
 Q_{\alpha,t}
 =\operatorname{Tr}\!\left[
  \sigma^{1/2}\rho^{\alpha(1-t)}\sigma^{1/2}
  \left(\sigma^{-1/2}\rho^t\sigma^{-1/2}\right)^\alpha
 \right].
 \label{rw:eq:intro-Renyi-explicit}
\end{align}
The endpoints of \eqref{rw:eq:intro-Dt-explicit} are Umegaki and BS,
whereas those of \eqref{rw:eq:intro-Renyi-explicit} are the Petz and
geometric R\'enyi divergences; moreover,
$D_\alpha^t\to D^t$ as $\alpha\to1$.  These statements are proved in
\Cref{rw:prop:Dt-explicit,rw:subsec:renyi-special-case}.
Equivalently, a probability measure $\nu_t$ associated with the
spectral resolution of $\Gamma_t$ satisfies
\begin{equation*}
 D_f^t=\int f(\lambda)\,\mathrm d\nu_t(\lambda),
 \qquad
 Q_{\alpha,t}=\int\lambda^\alpha\,\mathrm d\nu_t(\lambda),
 \qquad
 \int\lambda\,\mathrm d\nu_t(\lambda)=1.
\end{equation*}
This scalar-moment formulation drives several of the basic results
below; further reformulations, including one through sandwiched
R\'enyi divergences, are given in
\Cref{rw:prop:basic-spectral-properties,rw:prop:Dt-sandwiched-reformulation}.

\medskip
\noindent\textbf{Basic properties and comparison with known families.}
The geodesic quantities have the standard structural properties
available in this setting: classical reduction, unitary invariance,
exact flag rules, and continuity.  For a general generator, the
normalization is
$D_f^t(\rho\Vert\rho)=f(1)$.  If $f(1)=0$ and $f$ is scalar convex
on an interval containing $\{1\}\cup\operatorname{spec}(\Gamma_t)$,
then $D_f^t\geq0$; if, in addition, $f$ is strictly convex on that
interval, then
\begin{equation}
 D_f^t(\rho\Vert\sigma)=0
 \quad\Longleftrightarrow\quad \rho=\sigma.
 \label{rw:eq:intro-Dft-equality-separation}
\end{equation}
Thus the $t$-relative entropies are nonnegative and separate states.
The normalized R\'enyi divergences have the same two properties for
every $\alpha>0$, with $D_1^t=D^t$.  The $t$-relative entropies and
the normalized R\'enyi divergences are tensor additive.  For
operator-convex generators the
unnormalized $(t,f)$-divergences are jointly convex; in the R\'enyi
DPI range, $D_\alpha^t$ is jointly convex below order one and jointly
quasi-convex above order one.  Most importantly for the R\'enyi
parameter, for every fixed $t$,
\begin{equation}
 \alpha\longmapsto D_\alpha^t(\rho\Vert\sigma)
 \quad\text{is non-decreasing,}
 \label{rw:eq:intro-alpha-monotonicity}
\end{equation}
strictly so unless $\rho=\sigma$, as shown in
\Cref{rw:prop:elementary-structural-rules,%
rw:prop:elementary-structural-rules-2,%
rw:prop:basic-spectral-properties,cor:consolidated-joint-convexity}.

The complete Petz--geometric band for a fixed noncommuting pair is
shown in \Cref{rw:fig:renyi-interpolation-band}.  The geodesic
R\'enyi family is distinct from the $\alpha$--$z$ family: their exact
universal intersections, even allowing real $t$, are determined in
\Cref{rw:prop:alpha-z-intersections}, and their value ranges on the
common DPI region are compared in
\Cref{rw:fig:renyi-family-comparison}.  These two figures show,
respectively, how $D_\alpha^t$ fills the interval between the Petz and
geometric endpoints and how this new interpolation sits relative to
the established $\alpha$--$z$ family.

\medskip
\noindent\textbf{Monotonicity along the geodesic.}
The interpolation parameter has a sharp and nontrivial meaning.  For
the relative entropies,
\begin{equation}
 D(\rho\Vert\sigma)=D^0(\rho\Vert\sigma)
 \leq D^t(\rho\Vert\sigma)
 \leq D^1(\rho\Vert\sigma)=\widehat D(\rho\Vert\sigma),
 \label{rw:eq:intro-Dt-monotonicity}
\end{equation}
and every inequality between two distinct interpolation times is
strict exactly when $[\rho,\sigma]\neq0$.  For the R\'enyi family,
$t\mapsto D_\alpha^t$ is non-decreasing for $0<\alpha<2$, constant
at $\alpha=2$, and non-increasing for $2<\alpha\leq3$.  For
$\alpha>3$ there is no universal direction in dimension at least
three; instead we obtain an exact pair-of-times criterion.  These
claims are proved in
\Cref{rw:thm:Dt-strict-monotonicity,%
rw:thm:power-t-monotonicity,rw:fig:power-t-monotonicity}.

Operator convexity of a general generator does not by itself impose
a direction in $t$.  We therefore define the universal cone
$\mathfrak M$ as the set of generators $f$ for which
$t\mapsto D_f^t(\rho\Vert\sigma)$ is non-decreasing on $[0,1]$ for
every finite-dimensional invertible pair $(\rho,\sigma)$.  Thus
``universal'' quantifies simultaneously over all finite dimensions
and all such ordered pairs.
The universal characterization obtained here concerns $C^1$
generators.  More precisely, for $f\in C^1(0,\infty)$, set
$g(x):=f(x)/x$ and let $\mathfrak g(A):=g(A)$.  If
$Pe_i=p_i e_i$, then
\begin{equation*}
 f\in\mathfrak M
 \quad\Longleftrightarrow\quad
 \mathcal L_f(a;P,K)
 :=a\sum_i p_i^{a+1}
 \left\langle e_i,
 \mathrm D\mathfrak g(p_iK)
 \left[\frac12\{\log P-(\log p_i)I,K\}\right]e_i
 \right\rangle\geq0
\end{equation*}
for every dimension, every $a>1$, and all positive definite $P,K$
satisfying
$\operatorname{Tr}P^a=\operatorname{Tr}(P^{a-1}K^{-1})=1$.
On the finite operator-convex slice, this ordinary-matrix test is
equivalently the positive-polar condition
\begin{equation*}
 \int_{[0,1]}\kappa_{\rho,\sigma,t}(u)\,
 \mathrm d\nu_f(u)\geq0
\end{equation*}
for every finite-dimensional invertible pair and every $t\in(0,1)$,
where $\nu_f$ is the canonical representing measure and
$\kappa_{\rho,\sigma,t}$ is the response function defined in
\eqref{rw:eq:universal-cone-response-function}.  These characterizations
are established in
\Cref{rw:thm:universal-cone-matrix-characterization,%
rw:thm:universal-cone-measure-characterization}.

\medskip
\noindent\textbf{Data processing and strengthened data processing.}
The fundamental result of the paper is that for every finite
operator-convex $f$, every channel $\Phi$, and every
$0\leq t\leq1$,
\begin{equation}
 \boxed{
 D_f^t\!\left(\Phi(\rho)\Vert\Phi(\sigma)\right)
 \leq D_f^t(\rho\Vert\sigma).}
 \label{rw:eq:intro-DPI}
\end{equation}
Together with normalization, this is what justifies calling the new
quantities quantum divergences.  In particular, $D^t$ satisfies DPI
along the whole geodesic, while $D_\alpha^t$ satisfies it for
$0<\alpha<1$ and $1<\alpha\leq2$, as stated in
\Cref{thm:consolidated-dpi,cor:consolidated-basic-examples}.

We also strengthen \eqref{rw:eq:intro-DPI}.  The DPI deficit has a
positive resolvent remainder, which can be compressed into a single
intrinsic Hilbert--Schmidt recovery defect.  This yields a quartic
remainder for $D^t$, order-dependent remainders for $D_\alpha^t$,
and explicit tripartite lower bounds.  These results extend the
remainder strategy for the Umegaki and BS endpoints
\cite{carlen2018recovery,bluhm2020strengthened}; the corresponding
statements are
\Cref{thm:consolidated-resolvent-dpi,%
thm:consolidated-Dt-quartic,thm:consolidated-Renyi-dpi,%
cor:consolidated-tripartite-dpi}.

\medskip
\noindent\textbf{Equality, recovery, and the endpoint dichotomy.}
Our main rigidity theorem identifies the equality cases of
\eqref{rw:eq:intro-DPI}.  We let
\begin{equation*}
 \mathcal P_{\sigma,\Phi}(X)
 :=\sigma^{1/2}\Phi^*\!\left(
   \Phi(\sigma)^{-1/2}X\Phi(\sigma)^{-1/2}
  \right)\sigma^{1/2}
\end{equation*}
be the Petz recovery map.  If the input and output states are
invertible, $0\leq t<1$, and $f$ is an equality-determining
operator-convex generator, then
\begin{equation}
 \boxed{
 D_f^t(\rho\Vert\sigma)
 =D_f^t\!\left(\Phi(\rho)\Vert\Phi(\sigma)\right)
 \quad\Longleftrightarrow\quad
 \rho=\mathcal P_{\sigma,\Phi}(\Phi(\rho)).}
 \label{rw:eq:intro-Petz-rigidity}
\end{equation}
Thus the recovery condition does \emph{not} interpolate throughout
the nonmaximal range: every such equality set is the ordinary Petz
sufficiency class.  At $t=1$ it changes to the generally larger BS
equality class.  This applies in particular to the relative entropies
and to R\'enyi orders $\alpha\in(0,1)\cup(1,2)$, with the order-one
limit included.  The quadratic order $\alpha=2$ is the distinguished
exception: its $t$-independent divergence has the BS equality class
for every $t$.  Affine or spectrally sparse, non-equality-determining
generators likewise require separate treatment.

The same theorem gives several equivalent formulations of equality:
in terms of likelihood ratios, half powers, the full modular
functional calculus, Petz recovery, and preservation of Umegaki
relative entropy.  The complete hierarchy and its R\'enyi
fixed-point diagram are given in
\Cref{thm:consolidated-equality-recovery,%
prop:consolidated-general-hierarchy,%
thm:consolidated-interior-rigidity,%
fig:consolidated-Renyi-fixed-point-phase}.  In this precise sense,
the equality-determining nonquadratic families exhibit only two
recovery classes: the Petz class before the maximal endpoint and the
BS class at that endpoint.  The exceptional quadratic family belongs
to the latter class throughout the interpolation.

\medskip
\noindent\textbf{Conditional mutual informations and quantum Markov
chains.}
Specializing DPI to partial traces produces three natural ordered
$t$-conditional mutual informations---one-sided, two-sided, and
reverse---defined in \eqref{eq:consolidated-three-CMIs}.  An invertible
state is called a $t$-quantum Markov chain when the corresponding
deficit vanishes.  The equality theorem gives the second central
dichotomy of the paper:
\begin{equation}
 \boxed{
 \mathfrak Q_t^{\os}=\mathfrak Q_t^{\ts}=\mathfrak Q_t^{\rev}
 =\{\text{invertible quantum Markov chains}\},
 \qquad 0\leq t<1.}
 \label{rw:eq:intro-tQMC-collapse}
\end{equation}
At $t=1$, all three classes are the invertible BS quantum Markov
chains.  Hence every invertible nonmaximal $t$-QMC is a QMC and
conversely, while the BS endpoint can be strictly larger.  The
Werner-state construction of \cite{Bluhm:2025kai} witnesses this
strictness.  The analogous statement holds for equality-determining
$(t,f)$-CMIs and for the R\'enyi orders in the nonquadratic DPI range,
as shown in
\Cref{cor:consolidated-tQMC-hierarchy,%
cor:consolidated-f-Renyi-tQMC-hierarchy}.

The abrupt collapse to the Petz class has a modular explanation.  It
parallels the rigidity of adjointness with respect to the family of
power-weighted inner products: all non-KMS weights impose the same
modular covariance, whereas the KMS weight may have a larger symmetry
class, as explained in
\Cref{subsec:consolidated-weighted-symmetry}.  Numerically, however,
the three $t$-CMIs need not be ordered.  We prove their universal
incomparability, establish an additive projective comparison for the
two forward orientations, obtain a genuine multiplicative comparison
at R\'enyi order two, and show that the analogous BS multiplicative
factor cannot exist in general, as proved in
\Cref{prop:consolidated-CMI-incomparability,%
prop:consolidated-forward-projective-comparison,%
prop:consolidated-quadratic-CMI-comparison,%
prop:consolidated-no-BS-projective-factor}.

\medskip
\noindent\textbf{Further results.}
We prove continuity bounds for conditional $t$-entropies, the three
$t$-CMIs, and the general $(t,f)$-divergences when the relevant
smallest eigenvalues are bounded below by fixed positive constants,
compare the Umegaki and BS endpoint specializations with the
corresponding bounds in the literature,
and show that, for the nonoptimized quantities at any fixed $t>0$,
no dimension-only continuity modulus can hold uniformly as the
smallest eigenvalues approach zero; these results appear in
\Cref{sec:continuity-interpolated-conditionals}.  Finally, the canonical
finite experiment $(p_t,q_t)$ constructed in the companion paper
satisfies
$D^t(\rho\Vert\sigma)=D_{\rm KL}(p_t\Vert q_t)$
\cite[Theorem~3.1 and Proposition~3.3(i)]{capel2026informationgeometry}.
Combining this identity with the classical zero-cost-symbol theorem
gives the capacity-per-unit-cost interpretation proved here for every
$D^t$, including its BS specialization, as stated in \Cref{thm:op-capacity}.
The companion paper separately develops the Hessian-metric,
likelihood-statistical and restricted hypothesis-testing, and
horospherical-geometric consequences of the same construction
\cite[Theorems~2.3 and~2.5--2.6, Propositions~3.2--3.3,
and Section~4]{capel2026informationgeometry}.

\FloatBarrier

\section{Preliminaries}
\label{rw:sec:preliminaries}

\subsection{Basic notions and notations}
\label{rw:subsec:states-channels}

We work with finite-dimensional Hilbert spaces $\cH$ and $\cK$.  We write
$\mathcal B(\mathcal H)$ for the algebra of bounded linear operators on
$\cH$, $\mathcal B(\mathcal H)_+$ for its positive cone, and
\begin{equation}
   \mathcal S(\mathcal H)
 :=\{\rho\in\mathcal B(\mathcal H)_+:
          \operatorname{Tr}\rho=1\}  
\end{equation}
for the set of density matrices, or states, and
$\mathcal S_{+}(\mathcal H)$
for the set of invertible states. The support projection of a
positive operator $X$ is denoted by $\operatorname{supp}(X)$.
Throughout, $\operatorname{Tr}$ denotes the ordinary unnormalized
matrix trace, and $\operatorname{Tr}_A$ denotes the usual partial
trace over $A$; in particular, no factor $d_A^{-1}$ is included in
$\operatorname{Tr}_A$.

A quantum channel
$\Phi:\mathcal B(\mathcal H)\to\mathcal B(\mathcal K)$ is a
completely positive and trace-preserving linear map.  Its Hilbert--Schmidt
adjoint $\Phi^*$ is unital and completely positive.  Equivalently,
there are an ancillary space $\mathcal E$ and an isometry
$W:\mathcal H\to\mathcal K\otimes\mathcal E$ such that
\begin{equation}
 \Phi(X)=\operatorname{Tr}_{\mathcal E}(WXW^*),
 \qquad
 \Phi^*(Y)=W^*(Y\otimes I_{\mathcal E})W ,
 \label{rw:eq:stinespring}
\end{equation}
where $\operatorname{Tr}_{\mathcal E}$ denotes the partial trace over ${\mathcal E}$ and $I_{\mathcal E}$ is the identity matrix in ${\mathcal E}$.
This is the finite-dimensional form of Stinespring's dilation theorem
\cite[Theorem~1]{stinespring1955positive}.

Our basic formulas are first stated for invertible pairs.  If
$\rho,\sigma>0$, then there exist constants $0<c\leq C<\infty$
such that
\begin{equation}
 c\sigma\leq\rho\leq C\sigma.
 \label{rw:eq:invertible-comparison}
\end{equation}
Positivity of $\Phi$ implies the same comparison for
$\Phi(\rho)$ and $\Phi(\sigma)$, and
\begin{equation}
 \operatorname{supp}(\Phi(\rho))=\operatorname{supp}(\Phi(\sigma)).
 \label{rw:eq:common-output-support}
\end{equation}
All output inverses are taken on this support.  Hence,
we replace $\mathcal K$ by
$\operatorname{supp}(\Phi(\sigma))\mathcal K$, where both outputs are invertible.  

The Petz map associated with a reference state $\sigma>0$ and a
channel $\Phi$ \cite{Petz-SufficiencyChannels-1978} is
\begin{equation}
 \mathcal P_{\sigma,\Phi}(X)
 :=\sigma^{1/2}\Phi^*\!\left(
[\Phi(\sigma)]^{-1/2}X[\Phi(\sigma)]^{-1/2}\right)\sigma^{1/2},
 \label{rw:eq:petz-map}
\end{equation}
where the inverse is taken on $\operatorname{supp}(\Phi(\sigma))$.  It is completely
positive and trace-preserving on the supported output algebra, and
\begin{equation}
 \mathcal P_{\sigma,\Phi}(\Phi(\sigma))=\sigma.
 \label{rw:eq:petz-recovers-reference}
\end{equation}
Indeed, invertibility of $\sigma$ and
$\operatorname{Tr}[\Phi(\sigma)(I-\operatorname{supp}(\Phi(\sigma)))]=0$ imply
$\Phi^*(\operatorname{supp}(\Phi(\sigma)))=I$.  We say that $\Phi$ is reversible, or
sufficient, for $\{\rho,\sigma\}$ when
$\mathcal P_{\sigma,\Phi}(\Phi(\rho))=\rho$; in finite dimensions
this is equivalent to the existence of some channel recovering both
states \cite{petz2003monotonicity,jenvcova2006sufficiency}.

\subsection{The Hilbert--Schmidt representation}
\label{rw:subsec:hs-representation}

We regard $\mathcal B(\mathcal H)$ as the Hilbert space $\mathcal{H}_{HS}$ with the Hilbert-Schmidt inner product
\begin{equation}
 \langle X,Y\rangle_{\rm HS}:=\operatorname{Tr}(X^*Y),
 \qquad \|X\|_2:=\langle X,X\rangle_{\rm HS}^{1/2}.
 \label{rw:eq:hs-inner-product}
\end{equation}
For $a,b\in\mathcal B(\mathcal H)$, we define
\begin{equation}
 L_a(X):=aX,
 \qquad
 R_b(X):=Xb.
 \label{rw:eq:left-right}
\end{equation}
Then, $L_aR_b=R_bL_a$.  If $a>0$, both $L_a$ and $R_a$ are
positive definite operators, and
functional calculus gives
\begin{equation}
 h(L_a)=L_{h(a)},\qquad h(R_a)=R_{h(a)}.
 \label{rw:eq:left-right-functional-calculus}
\end{equation}
If $a,b>0$, commutation of left and right multiplication further
implies
\begin{equation}
 \log(L_aR_b)=L_{\log a}+R_{\log b}.
 \label{rw:eq:log-left-right}
\end{equation}
For an invertible pair, the standard relative modular operator is
\begin{equation}
 \Delta_{\rho\mid\sigma}:=L_\rho R_{\sigma^{-1}}.
 \label{rw:eq:relative-modular}
\end{equation}
We shall also use the positive definite commutant relative modular
operator
\begin{equation}
 \widehat\Delta_{\rho\mid\sigma}
 :=R_{\sigma^{-1/2}\rho\sigma^{-1/2}}.
 \label{rw:eq:commutant-derivative}
\end{equation}
Both $\Delta_{\rho\mid\sigma}$ and
$\widehat\Delta_{\rho\mid\sigma}$ are positive definite operators on
$\mathcal H_{\rm HS}$; they act on Hilbert--Schmidt vectors, not on
the state space itself.

\subsection{Standard and maximal quantum \texorpdfstring{$f$}{f}-divergences}
\label{rw:subsec:standard-maximal-f-divergences}

We fix $\rho,\sigma\in\mathcal S_{+}(\mathcal H)$ and a function
$f:(0,\infty)\to\mathbb R$ defined on the spectra relevant below.
The \textit{standard}, or \textit{Petz}, \textit{quantum $f$-divergence} \cite{hiai2017different} is
\begin{equation}
 D_f(\rho\Vert\sigma)
 :=\big\langle\sigma^{1/2},
 f(\Delta_{\rho\mid\sigma})[\sigma^{1/2}]
 \big\rangle_{\rm HS}.
 \label{rw:eq:standard-f-divergence}
\end{equation}
This is Petz's quasi-entropy with auxiliary operator $I$
\cite{petz1986quasi,hiai2012quasi}.  Its maximal counterpart, the
\textit{maximal quantum $f$-divergence}
\cite[Definition~3.21]{hiai2017different}, is
\begin{equation}
 \begin{aligned}
 \widehat D_f(\rho\Vert\sigma)
 :=\big\langle\sigma^{1/2},
 f(\widehat\Delta_{\rho\mid\sigma})[\sigma^{1/2}]
 \big\rangle_{\rm HS}=\operatorname{Tr}\!\left[
 \sigma f(\sigma^{-1/2}\rho\sigma^{-1/2})\right].
 \end{aligned}
 \label{rw:eq:maximal-f-divergence}
\end{equation}
The second equality follows from
$f(R_a)=R_{f(a)}$.  The adjective ``maximal'' refers to its position
among monotone quantum extensions of the corresponding classical
$f$-divergence in the operator-convex regime; this is Matsumoto's
reverse-test maximality theorem \cite[Lemma~3.1]{matsumoto2018new}.

For strictly positive probability distributions $p=(p_i)_i$ and $q=(q_i)_i$,
so that $p_i,q_i>0$, our convention for the \textit{classical $f$-divergence} is
\begin{equation}
 D_f(p\Vert q):=\sum_i q_i f(p_i/q_i).
 \label{rw:eq:classical-f-divergence}
\end{equation}

The logarithmic generator $f(x)=x\log x$ gives the two relative
entropies at the ends of our interpolation: The \textit{Umegaki relative entropy} \cite{Umegaki-RelativeEntropy-1962}
\begin{equation}
 \begin{aligned}
 D(\rho\Vert\sigma)
 :=D_{x\log x}(\rho\Vert\sigma)
 &=\operatorname{Tr}\!\left[\rho(\log\rho-\log\sigma)\right],
 \end{aligned}
 \label{rw:eq:Umegaki-relative-entropies}
\end{equation}
and the \textit{Belavkin-Staszewski relative entropy} \cite{BelavkinStaszewski-BSentropy-1982}
\begin{equation}
 \begin{aligned}
 \widehat D(\rho\Vert\sigma)
 :=\widehat D_{x\log x}(\rho\Vert\sigma)
 &=\operatorname{Tr}\!\left[
 \rho\log(\rho^{1/2}\sigma^{-1}\rho^{1/2})\right].
 \end{aligned}
 \label{rw:eq:Belavkin-Stasszewski-relative-entropies}
\end{equation}
For the first identity, we are using that, by functional calculus,
$\Delta_{\rho\mid\sigma}^{1/2}[\sigma^{1/2}]=\rho^{1/2}$;
for the second identity, putting
$X=\sigma^{-1/2}\rho^{1/2}$, the push-through identity
$(XX^*)\log(XX^*)=X\log(X^*X)X^*$ converts
\eqref{rw:eq:maximal-f-divergence} into the displayed formula.  

For the power generator
$f_\alpha(x)=x^\alpha$, $\alpha>0$, one instead obtains
\begin{equation}
 \begin{aligned}
 D_{f_\alpha}(\rho\Vert\sigma)
 =\operatorname{Tr}(\rho^\alpha\sigma^{1-\alpha}),
 \end{aligned}
 \label{rw:eq:standard-power-functionals}
\end{equation}
and
\begin{equation}
 \begin{aligned}
 \widehat D_{f_\alpha}(\rho\Vert\sigma)
 =\operatorname{Tr}\!\left[
 \sigma(\sigma^{-1/2}\rho\sigma^{-1/2})^\alpha\right].
 \end{aligned}
 \label{rw:eq:maximal-power-functionals}
\end{equation}
respectively. Thus, for $\alpha>0$ with $\alpha\neq1$, the \textit{Petz} and \textit{geometric
R\'enyi divergences} \cite[Definition~2]{fang2021geometric} are obtained as normalized logarithmic transforms
of the standard and maximal power $f$-functionals generated by
$f_\alpha(x)=x^\alpha$:
\begin{equation}
 \begin{aligned}
 \overline{D}_\alpha(\rho\Vert\sigma)
 &:=\frac{1}{\alpha-1}
 \log D_{f_\alpha}(\rho\Vert\sigma)
 =\frac{1}{\alpha-1}
 \log\operatorname{Tr}(\rho^\alpha\sigma^{1-\alpha}),\\
\widehat D_\alpha(\rho\Vert\sigma)
 &:=\frac{1}{\alpha-1}
 \log \widehat D_{f_\alpha}(\rho\Vert\sigma)
 =\frac{1}{\alpha-1}
 \log\operatorname{Tr}\!\left[
 \sigma(\sigma^{-1/2}\rho\sigma^{-1/2})^\alpha\right].
 \end{aligned}
 \label{rw:eq:petz-geometric-renyi-as-f-divergences}
\end{equation}
Equivalently, multiplying by $\alpha-1$ and exponentiating recovers
exactly the standard and maximal power $f$-divergences.  The logarithm and the factor
$(\alpha-1)^{-1}$ are the R\'enyi normalization; hence the normalized
quantities in \eqref{rw:eq:petz-geometric-renyi-as-f-divergences} are
logarithmic transforms of quantum $f$-divergences.  For $\alpha>2$,
$x^\alpha$ lies outside the operator-convex maximal-divergence range,
and the displayed geometric quantity has no universal data-processing
inequality \cite[Remark~3.23 and Corollary~3.31]{hiai2017different}.
The quantities from \eqref{rw:eq:petz-geometric-renyi-as-f-divergences}
will be the endpoints of the geodesic Rényi divergences introduced
in \Cref{rw:subsec:renyi-special-case}.

Their order-one limits recover the two relative entropies above:
\begin{equation}
 \lim_{\alpha\to1} \overline{D}_\alpha(\rho\Vert\sigma)
 =D(\rho\Vert\sigma),
 \qquad
 \lim_{\alpha\to1}\widehat D_\alpha(\rho\Vert\sigma)
 =\widehat D(\rho\Vert\sigma).
 \label{rw:eq:petz-geometric-renyi-order-one-limits}
\end{equation}
The geometric limit is recorded, for example, in
\cite[Remark~1]{fang2021geometric}.
For the raw power functionals, $\alpha=1$ is affine and $\alpha=2$
is quadratic, and the standard and maximal functionals coincide at
both orders for every invertible pair.  At $\alpha=1$ this equality is
only the trivial identity $D_{f_1}=\widehat D_{f_1}=1$; their
first-order logarithmic limits in
\eqref{rw:eq:petz-geometric-renyi-order-one-limits} generally differ.
At $\alpha=2$, both the power functionals and their R\'enyi
normalizations coincide, even for noncommuting pairs:
\begin{equation}
 \overline D_2(\rho\Vert\sigma)
 =\widehat D_2(\rho\Vert\sigma)
 =\log\operatorname{Tr}(\rho^2\sigma^{-1}).
 \label{rw:eq:petz-geometric-renyi-quadratic-coincidence}
\end{equation}
\Cref{rw:subsec:chi-square} strengthens this observation by
showing that the entire interpolating path is constant at order two.
Away from these exceptional orders, the
power functionals generally differ when $\rho$ and $\sigma$ do not
commute, as do the Petz and geometric R\'enyi divergences.
If $[\rho,\sigma]=0$, however, both reduce to the same classical
quantity,
\begin{equation}
 D_f(\rho\Vert\sigma)
 =\widehat D_f(\rho\Vert\sigma)
 =\sum_i\sigma_i f(\rho_i/\sigma_i).
 \label{rw:eq:standard-maximal-classical-collapse}
\end{equation}
Thus the standard and maximal constructions differ only through their
noncommutative operator ordering.  \Cref{rw:sec:interpolated-divergences} will
construct a path joining them, while the operator-convex assumptions
needed for data processing are recorded below.

\subsection{Quantum Markov chains and their
Belavkin--Staszewski extension}
\label{rw:subsec:qmc-bsqmc}

We recall here the two endpoint notions of quantum conditional
independence that motivate the geodesic Markov classes studied
later.  We take $\rho_{ABC}$ to be a tripartite state and write
\begin{equation}
 S(\omega):=-\operatorname{Tr}(\omega\log\omega)
\end{equation}
for the von Neumann entropy, and suppress identity embeddings of
marginal operators.  The \emph{conditional mutual information} (CMI)
of $A$ and $C$ conditioned on $B$ is
\begin{equation}
 I(A;C\mid B)_\rho
 :=S(\rho_{AB})+S(\rho_{BC})-S(\rho_B)-S(\rho_{ABC}).
 \label{rw:eq:ordinary-cmi}
\end{equation}
For any invertible normalized state $\tau_C$, it is equivalently the
relative-entropy DPI deficit
\begin{equation}
 \begin{aligned}
 I(A;C\mid B)_\rho
 &=D(\rho_{ABC}\Vert\rho_{AB}\otimes\tau_C)
   -D(\rho_{BC}\Vert\rho_B\otimes\tau_C)
 \geq0.
 \end{aligned}
 \label{rw:eq:cmi-as-DPI-deficit}
\end{equation}
The inequality is the strong subadditivity of quantum entropy
\cite{lieb1973strong}.  A state for which
$I(A;C\mid B)_\rho=0$ is called a \emph{quantum Markov chain}
(QMC) in the order $A-B-C$.

For an invertible state, Petz's equality theorem and Ruskai's logarithmic
characterization of equality in strong subadditivity give the following
equivalences
\cite{Petz-SufficiencyChannels-1978,petz2003monotonicity}%
\cite[Theorem~1]{ruskai2002inequalities}:
\begin{equation}
 \begin{aligned}
 \rho_{ABC}\text{ is a QMC}
 &\Longleftrightarrow I(A;C\mid B)_\rho=0\\
 &\Longleftrightarrow
 \rho_{ABC}
 =\rho_{AB}^{1/2}\rho_B^{-1/2}\rho_{BC}
   \rho_B^{-1/2}\rho_{AB}^{1/2}\\
 &\Longleftrightarrow
 \log\rho_{ABC}+\log\rho_B
 =\log\rho_{AB}+\log\rho_{BC}.
 \end{aligned}
 \label{rw:eq:qmc-equivalent-conditions}
\end{equation}
The middle line is precisely recovery of $\rho_{BC}$ by the Petz
channel associated with $\operatorname{Tr}_A$ and the reference
$\rho_{AB}\otimes\tau_C$; the factors involving $\tau_C$ cancel.
The structure theorem of Hayden, Jozsa, Petz, and Winter
\cite{hayden-2004} equivalently shows that the middle system has a
decomposition
\begin{equation}
 \mathcal H_B
 =\bigoplus_j\mathcal H_{B_j^L}\otimes\mathcal H_{B_j^R},
 \qquad
 \rho_{ABC}
 =\bigoplus_j p_j\,
   \rho_{AB_j^L}\otimes\rho_{B_j^RC}.
 \label{rw:eq:qmc-block-structure}
\end{equation}
\begin{remark}[Non-invertible QMCs]
For an arbitrary tripartite state, vanishing conditional mutual
information, recovery by the Petz map with support inverses, and the
block decomposition \eqref{rw:eq:qmc-block-structure} remain
equivalent
\cite{Petz-SufficiencyChannels-1978,petz2003monotonicity,hayden-2004}.
The logarithmic identity in
\eqref{rw:eq:qmc-equivalent-conditions} is the convenient invertible
formulation and is not used as a definition for the non-invertible case.
\end{remark}
Thus a QMC is the exact quantum analogue of conditional independence:
all correlations between $A$ and $C$ can be mediated through $B$,
the full state can be reconstructed from $\rho_{BC}$ by a quantum
channel, and the rigid block form in
\eqref{rw:eq:qmc-block-structure} implies that $\rho_{AC}$ is
separable.  Small CMI is correspondingly a quantitative notion of
approximate recoverability \cite{fawzi2015approximate}.

Replacing $D$ in \eqref{rw:eq:cmi-as-DPI-deficit} by the
Belavkin--Staszewski relative entropy produces several inequivalent
numbers because the latter is not symmetric under changes of
operator ordering.  For invertible $\rho_{ABC}$, we define
\begin{align}
 \widehat I^{\os}(A;C\mid B)_\rho
 &:=
 \widehat D(\rho_{ABC}\Vert\rho_{AB}\otimes\tau_C)
 -\widehat D(\rho_{BC}\Vert\rho_B\otimes\tau_C),\notag\\
 \widehat I^{\ts}(A;C\mid B)_\rho
 &:=
 \widehat D(\rho_{ABC}\Vert\rho_{AB}\otimes\rho_C)
 -\widehat D(\rho_{BC}\Vert\rho_B\otimes\rho_C),\notag\\
 \widehat I^{\rev}(A;C\mid B)_\rho
 &:=
 \widehat D(\rho_{AB}\otimes\tau_C\Vert\rho_{ABC})
 -\widehat D(\rho_B\otimes\tau_C\Vert\rho_{BC}).
 \label{rw:eq:three-BS-CMIs}
\end{align}
The superscripts stand for \emph{one-sided}, \emph{two-sided}, and
\emph{reverse}.  Each is a nonnegative DPI deficit, and although
their numerical values generally differ, their zero sets on invertible states
coincide
\cite{bluhm2023continuity,alhambra2026conditional,Bluhm:2025kai}.
An invertible state in this common zero set is called a
\emph{Belavkin--Staszewski quantum Markov chain} (BS-QMC).  The common
equality condition has the intrinsic forms
\begin{equation}
 \begin{aligned}
 \rho_{ABC}\text{ is a BS-QMC}
 &\Longleftrightarrow
 \widehat I^\bullet(A;C\mid B)_\rho=0
 \quad\text{for some, hence every, }
 \bullet\in\{\os,\ts,\rev\}\\
 &\Longleftrightarrow
 \rho_{ABC}=\rho_{AB}\rho_B^{-1}\rho_{BC}\\
 &\Longleftrightarrow
 \rho_{ABC}
 =\left(\rho_{AB}\rho_B^{-1}\rho_{BC}^{\,2}
        \rho_B^{-1}\rho_{AB}\right)^{1/2}.
 \end{aligned}
 \label{rw:eq:bsqmc-equivalent-conditions}
\end{equation}
The product identity is the asymmetric BS recovery condition; the
last line is its positive symmetric form.  The associated linear
operation is not generally a quantum channel, but an equivalent
completely positive recovery condition is available and will be
recalled in \Cref{sec:consolidated-recovery}.

The significance of the BS notion is that it is exactly the
partial-trace equality class naturally selected by the maximal
relative entropy.  It contains the QMC class, but the
inclusion is strict:
\begin{equation}
 \{\text{invertible QMCs}\}
 \subset
 \{\text{invertible BS-QMCs}\},
 \label{rw:eq:qmc-in-bsqmc}
\end{equation}
and the inclusion is strict whenever the systems are sufficiently
large to support the known witnesses.  In particular, a BS-QMC can
have an entangled $AC$ marginal
\cite[Proposition~4.3]{Bluhm:2025kai}, which is impossible for a
QMC.  Nevertheless its structure is controlled by a QMC: if
$d_B:=\dim\mathcal H_B$ and
\begin{equation}
 \eta_{ABC}
 :=\frac1{d_B}\rho_B^{-1/2}\rho_{ABC}\rho_B^{-1/2},
 \qquad
 \eta_B=\frac{I_B}{d_B},
 \label{rw:eq:canonical-BS-QMC-shadow}
\end{equation}
then
\begin{equation}
 \rho_{ABC}\text{ is a BS-QMC}
 \quad\Longleftrightarrow\quad
 \eta_{ABC}\text{ is a QMC}.
 \label{rw:eq:BS-QMC-QMC-correspondence}
\end{equation}
The strict inclusion, the correspondence
\eqref{rw:eq:BS-QMC-QMC-correspondence}, and the recovery and
structural theory of BS-QMCs were established in
\cite{Bluhm:2025kai}; the role of BS conditional quantities in
correlation decay and quantum Gibbs states is also discussed in
\cite{bluhm2023continuity,alhambra2026conditional}.  The three endpoint quantities in
\eqref{rw:eq:three-BS-CMIs} are interpolated in
\Cref{subsec:consolidated-tripartite-tcmi}, where the resulting
$t$-QMC hierarchy is determined.

\subsection{Affine-invariant geometry of positive operators}
\label{rw:subsec:positive-geometry}

Let $\mathcal V$ be a finite-dimensional complex Hilbert space.  We
write $\mathbb H(\mathcal V)$ for the real vector space of
self-adjoint operators on $\mathcal V$ and
\begin{equation*}
 \mathbb P(\mathcal V)
 :=\{A\in\mathbb H(\mathcal V):A>0\}
\end{equation*}
for its open cone of positive-definite operators.  When
$\mathcal V=\mathbb C^n$, we also write $\mathbb H_n$ and
$\mathbb P_n$, respectively.  Since $\mathbb P(\mathcal V)$ is an
open submanifold of $\mathbb H(\mathcal V)$, its tangent space is
canonically identified with
$T_A\mathbb P(\mathcal V)\cong\mathbb H(\mathcal V)$ at every
$A\in\mathbb P(\mathcal V)$ \cite[Chap.~6]{bhatia2009positive}.  We
endow it with the affine-invariant Riemannian metric
$g_A:T_A\mathbb P(\mathcal V)\times T_A\mathbb P(\mathcal V)
\to\mathbb R$ given by
\begin{equation}
 g_A(H,K):=\operatorname{Tr}_{\mathcal V}(A^{-1}HA^{-1}K),
 \qquad A\in\mathbb P(\mathcal V),
 \label{rw:eq:affine-invariant-metric}
\end{equation}
for Hermitian tangent vectors $H,K$.  The resulting distance is
\begin{equation}
 d_2(A,B)
 =\big\|\log(A^{-1/2}BA^{-1/2})\big\|_2.
 \label{rw:eq:affine-invariant-distance}
\end{equation}
In particular, $d_2(A,B)$ is not generally
$\|\log A-\log B\|_2$; that simplification is only valid when $A$ and
$B$ commute.  The unique constant-speed minimizing geodesic from
$A$ to $B$ is the weighted geometric mean
\begin{equation}
 \gamma_{A,B}(t):=A\#_tB
 =A^{1/2}(A^{-1/2}BA^{-1/2})^tA^{1/2},
 \qquad 0\leq t\leq1.
 \label{rw:eq:weighted-geometric-mean}
\end{equation}
Thus $\gamma_{A,B}(0)=A$, $\gamma_{A,B}(1)=B$, and
$\gamma_{A,B}(t)\in\mathbb P(\mathcal V)$ for every $t\in[0,1]$.
In the application below, $\mathcal V$ is the Hilbert--Schmidt space
and $A,B$ are positive-definite superoperators on that space.  Both
the metric and the distance are invariant under invertible
congruences \cite{bhatia2009positive,nielsen2013matrix}.

\subsection{Operator-convex generators}
\label{rw:subsec:operator-convex}

For the data-processing results, the natural class consists of
operator-convex functions.  A finite operator-convex function
$f:[0,\infty)\to\mathbb R$ admits the representation
\cite[Theorem~8.1]{hiai2011quantum}
\begin{equation}
 f(x)=f(0)+ax+bx^2
 +\int_{(0,\infty)}\left(
    \frac{x}{1+s}-\frac{x}{x+s}\right)\,\mathrm d\mu_f(s),
 \qquad b\geq0,
 \label{rw:eq:operator-convex-representation}
\end{equation}
where $a\in\mathbb R$, $\mu_f$ is a positive measure, and
$\int(1+s)^{-2}\,\mathrm d\mu_f(s)<\infty$.  Conversely, such a
representation defines a finite operator-convex function.  Singular
operator-convex functions on $(0,\infty)$ are handled by truncation
or regularization on the relevant support.  For normalized states,
adding $a(x-1)$ does not change the divergence, but it also leaves
$f(1)$ unchanged.  To impose the normalization $f(1)=0$, one instead
replaces $f$ by
\begin{equation*}
 f_0(x):=f(x)-f(1),
 \qquad
 D_{f_0}^t(\rho\Vert\sigma)
 =D_f^t(\rho\Vert\sigma)-f(1).
\end{equation*}
This constant shift leaves every DPI deficit unchanged.  These two
forms of affine freedom are made explicit in
\eqref{rw:eq:affine-generator-freedom}; see also
\cite{hiai2011quantum,hiai2017different}.

\section{Geodesic divergences and their special cases}
\label{rw:sec:interpolated-divergences}

\subsection{Definition and endpoints}
\label{rw:subsec:definition-interpolation}

The Umegaki relative entropy of two states $\rho$ and $\sigma$ can be defined in terms of their relative modular operator as 
\begin{equation}
    D(\rho \| \sigma):= \big\langle\rho^{1/2},
 \log(\Delta_{\rho\mid\sigma})\rho^{1/2}
 \big\rangle_{\rm HS}.
\end{equation}
If instead we use the positive-definite commutant derivative
$\widehat\Delta_{\rho\mid\sigma}$ from
\eqref{rw:eq:commutant-derivative}, then the
Belavkin--Staszewski relative entropy has the modular representation
\begin{equation}
 \begin{aligned}
 \widehat D(\rho\Vert\sigma)
 &=
 \big\langle\sigma^{1/2},
 \bigl(\widehat\Delta_{\rho\mid\sigma}
 \log\bigl(\widehat\Delta_{\rho\mid\sigma}\bigr)\bigr)
 [\sigma^{1/2}]\big\rangle_{\rm HS}\\
 &=\operatorname{Tr}\!\left[
 \sigma K\log K\right],
 \qquad K:=\sigma^{-1/2}\rho\sigma^{-1/2}.
 \end{aligned}
\end{equation}
Therefore, in order to construct our geodesic relative entropies, we first define the following geodesic modular operators, whose basic properties we collect in the following result.

\begin{proposition}[The geodesic modular operator]
\label{rw:prop:gamma-formula}
For invertible states $\rho,\sigma$ and $0\leq t\leq1$, we define
\begin{equation}
 \Gamma_t(\rho,\sigma)
 :=\Delta_{\rho\mid\sigma}\#_t
   \widehat\Delta_{\rho\mid\sigma}.
 \label{rw:eq:gamma-geodesic-definition}
\end{equation}
Then
\begin{equation}
 \Gamma_t(\rho,\sigma)
 =L_{\rho^{1-t}}R_{B_t(\rho,\sigma)},
 \qquad
 B_t(\rho,\sigma):=\sigma^{-1/2}\rho^t\sigma^{-1/2}>0.
 \label{rw:eq:gamma-explicit}
\end{equation}
When the pair $(\rho,\sigma)$ is fixed, we abbreviate
$B_t(\rho,\sigma)$ to $B_t$.  Thus $B_t$ acts on the original Hilbert
space, while
$L_{\rho^{1-t}}R_{B_t}$ acts on the Hilbert--Schmidt space.  More
explicitly, for every $X\in\mathcal B(\mathcal H)$,
\begin{equation}
 \Gamma_t(\rho,\sigma)[X]
 =\rho^{1-t}X\sigma^{-1/2}\rho^t\sigma^{-1/2}.
 \label{rw:eq:gamma-action}
\end{equation}
In particular, the endpoints are
\begin{equation}
 \Gamma_0(\rho,\sigma)=\Delta_{\rho\mid\sigma},
 \qquad
 \Gamma_1(\rho,\sigma)=\widehat\Delta_{\rho\mid\sigma},
 \label{rw:eq:gamma-endpoints}
\end{equation}
and $\sigma^{1/2}$ satisfies
\begin{equation}
 \Gamma_t(\rho,\sigma)[\sigma^{1/2}]=\rho\sigma^{-1/2},
 \qquad
 \big\langle\sigma^{1/2},
 \Gamma_t(\rho,\sigma)[\sigma^{1/2}]\big\rangle_{\rm HS}=1.
 \label{rw:eq:gamma-normalization}
\end{equation}
\end{proposition}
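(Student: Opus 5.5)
The plan is to compute the weighted geometric mean
$\Delta\#_t\widehat\Delta=\Delta^{1/2}(\Delta^{-1/2}\widehat\Delta\Delta^{-1/2})^t\Delta^{1/2}$
directly, using only the facts that left and right multiplications commute and that functional calculus passes through $L$ and $R$ by \eqref{rw:eq:left-right-functional-calculus}.

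\emph{Step 1: the inner operator.} Write $\Delta:=\Delta_{\rho\mid\sigma}=L_\rho R_{\sigma^{-1}}$. Since $L_\rho$ and $R_{\sigma^{-1}}$ are commuting positive definite operators, $\Delta^{\pm1/2}=L_{\rho^{\pm1/2}}R_{\sigma^{\mp1/2}}$. Then
\[
\Delta^{-1/2}\widehat\Delta_{\rho\mid\sigma}\Delta^{-1/2}
=L_{\rho^{-1/2}}R_{\sigma^{1/2}}\,R_{\sigma^{-1/2}\rho\sigma^{-1/2}}\,L_{\rho^{-1/2}}R_{\sigma^{1/2}} .
\]
Collecting the left factors gives $L_{\rho^{-1}}$. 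For the right factors one must respect the reversed order $R_aR_b=R_{ba}$:
\[
R_{\sigma^{1/2}}R_{\sigma^{-1/2}\rho\sigma^{-1/2}}R_{\sigma^{1/2}}=R_{\sigma^{1/2}\sigma^{-1/2}\rho\sigma^{-1/2}\sigma^{1/2}}=R_\rho .
\]
Hence the inner operator is $L_{\rho^{-1}}R_\rho$, a product of commuting positive definite operators.

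\emph{Step 2: the power and the outer conjugation.} Because $L_{\rho^{-1}}$ and $R_\rho$ commute and are positive, functional calculus gives $(L_{\rho^{-1}}R_\rho)^t=L_{\rho^{-t}}R_{\rho^t}$. Conjugating by $\Delta^{1/2}$ yields
\[
L_{\rho^{1/2}}R_{\sigma^{-1/2}}\,L_{\rho^{-t}}R_{\rho^t}\,L_{\rho^{1/2}}R_{\sigma^{-1/2}}
=L_{\rho^{1-t}}R_{\sigma^{-1/2}\rho^t\sigma^{-1/2}},
\]
again using $R_aR_bR_c=R_{cba}$. This is \eqref{rw:eq:gamma-explicit}. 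Positivity of $B_t$ holds because it is a congruence of $\rho^t>0$ by the invertible operator $\sigma^{-1/2}$. The action formula \eqref{rw:eq:gamma-action} is then just the definition of $L$ and $R$.

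\emph{Step 3: endpoints and normalization.} At $t=0$ we have $B_0=\sigma^{-1}$, so $\Gamma_0=L_\rho R_{\sigma^{-1}}=\Delta_{\rho\mid\sigma}$. At $t=1$ we have $\Gamma_1=R_{\sigma^{-1/2}\rho\sigma^{-1/2}}=\widehat\Delta_{\rho\mid\sigma}$. Both are also immediate from $A\#_0B=A$ and $A\#_1B=B$, which serves as a consistency check. For the normalization,
\[
\Gamma_t[\sigma^{1/2}]=\rho^{1-t}\sigma^{1/2}\sigma^{-1/2}\rho^t\sigma^{-1/2}=\rho\sigma^{-1/2},
\]
and therefore $\langle\sigma^{1/2},\rho\sigma^{-1/2}\rangle_{\rm HS}=\operatorname{Tr}(\sigma^{1/2}\rho\sigma^{-1/2})=\operatorname{Tr}\rho=1$.

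\emph{Main obstacle.} The only delicate point is bookkeeping. The order reversal $R_aR_b=R_{ba}$ has to be applied correctly, and we must verify that the powers of the superoperators are taken on the positive cone of the Hilbert--Schmidt space, so that $(L_aR_b)^t=L_{a^t}R_{b^t}$ is legitimate for commuting positive $L_a$ and $R_b$. This follows from joint spectral calculus: $L_a$ and $R_b$ have a common eigenbasis $\{|i\rangle\langle j|\}$ in eigenbases of $a$ and $b$.
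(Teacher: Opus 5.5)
Your proposal is correct and follows the paper's proof essentially step by step. You conjugate $\widehat\Delta_{\rho\mid\sigma}$ by $\Delta_{\rho\mid\sigma}^{-1/2}=L_{\rho^{-1/2}}R_{\sigma^{1/2}}$ to obtain $L_{\rho^{-1}}R_\rho$, take its $t$-th power by joint functional calculus, conjugate back, and then read off the endpoints and the normalization from the explicit action. Your bookkeeping of the order reversal $R_aR_bR_c=R_{cba}$, and your use of the common eigenbasis $\{|i\rangle\langle j|\}$ to justify $(L_aR_b)^t=L_{a^t}R_{b^t}$, just spell out what the paper states in one line.
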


\begin{proof}
We put
\begin{equation}
 K:=\sigma^{-1/2}\rho\sigma^{-1/2},
 \qquad
 \widehat\Delta_{\rho\mid\sigma}=R_K.
\end{equation}
The left and right multiplication operators commute, so their
functional calculi give
\begin{equation}
 \Delta_{\rho\mid\sigma}^{1/2}
 =L_{\rho^{1/2}}R_{\sigma^{-1/2}},
 \qquad
 \Delta_{\rho\mid\sigma}^{-1/2}
 =L_{\rho^{-1/2}}R_{\sigma^{1/2}}.
 \label{rw:eq:delta-half-powers}
\end{equation}
Using $ R_aR_b=R_{ba}$ together with $L_aR_b=R_bL_a$ for any operators $a,b$, we obtain
\begin{equation}
\Delta_{\rho\mid\sigma}^{-1/2}
  \widehat\Delta_{\rho\mid\sigma}
  \Delta_{\rho\mid\sigma}^{-1/2}=
 (L_{\rho^{-1/2}}R_{\sigma^{1/2}})
 R_K
(L_{\rho^{-1/2}}R_{\sigma^{1/2}})=L_{\rho^{-1}}
 R_{\sigma^{1/2}K\sigma^{1/2}}
 =L_{\rho^{-1}}R_\rho, 
\label{rw:eq:gamma-conjugated-endpoint}
\end{equation}
and by functional calculus
\begin{equation}
 (L_{\rho^{-1}}R_\rho)^t
 =L_{\rho^{-t}}R_{\rho^t}.
 \label{rw:eq:gamma-conjugated-power}
\end{equation}
Substituting \eqref{rw:eq:gamma-conjugated-power} and
\eqref{rw:eq:delta-half-powers} into the definition of the weighted
geometric mean proves \eqref{rw:eq:gamma-explicit}, and applying the last
superoperator to an arbitrary $X$ proves \eqref{rw:eq:gamma-action}.
Setting $t=0$ and $t=1$ gives \eqref{rw:eq:gamma-endpoints}. \eqref{rw:eq:gamma-normalization} is a particular case of 
\eqref{rw:eq:gamma-action}.
\end{proof}

The previous proposition motivates the definition of the main objects of this paper. 

\refstepcounter{definition}
\label{rw:def:interpolated-f-divergence}
\begin{mainobject}{Definition~\thedefinition: Geodesic
$f$-divergence}
For invertible states $\rho,\sigma$, $t\in[0,1]$, and a real
function $f$ defined on the spectrum of $\Gamma_t$, we define the \emph{geodesic $f$-divergence} of $\rho$ and $\sigma$ with parameter $t$ by
\begin{equation}
 D_f^t(\rho\Vert\sigma)
 :=\big\langle\sigma^{1/2},
 f(\Gamma_t(\rho,\sigma))[\sigma^{1/2}]
 \big\rangle_{\rm HS}.
 \label{rw:eq:interpolated-f-divergence}
\end{equation}
When $\rho$ and $\sigma$ are fixed, this is called in short their \textit{$(t,f)$-divergence}.
\end{mainobject}

At $t=0$, since $\Gamma_0=\Delta_{\rho\mid\sigma}$ we recover the standard $f$-divergences,
\begin{equation}
 D_f^0(\rho\Vert\sigma)=D_f(\rho\Vert\sigma),
 \label{rw:eq:standard-endpoint}
\end{equation}
whereas at $t=1$, $\Gamma_1=\widehat\Delta_{\rho\mid\sigma}$ gives
\begin{equation}
 D_f^1(\rho\Vert\sigma)
 =\operatorname{Tr}\!\left[
 \sigma f(\sigma^{-1/2}\rho\sigma^{-1/2})\right]
 =:\widehat D_f(\rho\Vert\sigma),
 \label{rw:eq:maximal-endpoint}
\end{equation}
the maximal quantum $f$-divergence.  If $[\rho,\sigma]=0$, then
$\sigma^{1/2}$ is supported on a joint eigenbasis sector on which
$\Gamma_t$ has eigenvalues $\rho_i/\sigma_i$, independently of
$t$.  Consequently,
\begin{equation}
 D_f^t(\rho\Vert\sigma)
 =\sum_i\sigma_i f(\rho_i/\sigma_i),
 \qquad [\rho,\sigma]=0,
 \label{rw:eq:classical-collapse}
\end{equation}
for every interpolation parameter. In order to gain a better understanding of this quantity, in the next subsections we particularize this definition to some cases in which the endpoints are very well-known objects, such as relative entropies or Rényi divergences.

\subsection{Geodesic relative entropies: Umegaki to Belavkin--Staszewski}
\label{rw:subsec:relative-special-case}
 We consider now the particular case of $f(x)=x \log x$, in which the endpoints reduce to the Umegaki, respectively Belavkin-Staszewski, relative entropy, as described in \Cref{rw:subsec:standard-maximal-f-divergences}.

\refstepcounter{definition}
\begin{mainobject}{Definition~\thedefinition: Geodesic
relative entropy}
For invertible states $\rho,\sigma$ and $t\in[0,1]$, we define the \emph{geodesic relative entropy} of $\rho$ and $\sigma$ with parameter $t$ by
\begin{equation}
 D^t(\rho\Vert\sigma):=D_{x\log x}^t(\rho\Vert\sigma).
 \label{rw:eq:Dt-and-Kt}
\end{equation}
When $\rho$ and $\sigma$ are fixed, we call this in short their \textit{$t$-relative entropy}.
\end{mainobject}

The following proposition contains an explicit, simple formula for the t-relative entropies for any $t \in [0,1]$ that allows to easily verify that its endpoints are the Umegaki and Belavkin-Staszewski relative entropy.

\begin{proposition}[Explicit logarithmic interpolation]
\label{rw:prop:Dt-explicit}
For every invertible pair $\rho$ and $\sigma$ and every $t\in[0,1]$,
\begin{equation}
 D^t(\rho\Vert\sigma)
 =(1-t)\operatorname{Tr}(\rho\log\rho)
 +\operatorname{Tr}\!\left[
 \rho\log\!\left(\rho^{t/2}\sigma^{-1}\rho^{t/2}\right)
 \right].
 \label{rw:eq:Dt-explicit}
\end{equation}
In particular,
\begin{align}
 D^0(\rho\Vert\sigma)
 &=\operatorname{Tr}\rho(\log\rho-\log\sigma)
 =:D(\rho\Vert\sigma),
 \label{rw:eq:umegaki-endpoint}\\
 D^1(\rho\Vert\sigma)
 &=\operatorname{Tr}\rho\log(\rho^{1/2}\sigma^{-1}\rho^{1/2})
 =:\widehat D(\rho\Vert\sigma).
 \label{rw:eq:bs-endpoint}
\end{align}
\end{proposition}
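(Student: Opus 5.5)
The plan is to compute $f(\Gamma_t)$ for $f(x)=x\log x$ directly from the factorization in \Cref{rw:prop:gamma-formula}. The key inputs are that $L_{\rho^{1-t}}$ and $R_{B_t}$ commute, and the logarithm splitting \eqref{rw:eq:log-left-right}. Together they give
\begin{equation*}
 \Gamma_t\log\Gamma_t
 =L_{\rho^{1-t}}R_{B_t}\bigl(L_{(1-t)\log\rho}+R_{\log B_t}\bigr).
\end{equation*}
Applying this to $\sigma^{1/2}$ and pairing with $\sigma^{1/2}$ in the Hilbert--Schmidt inner product splits $D^t$ into two trace terms, which I treat separately.

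For the first term, the normalization \eqref{rw:eq:gamma-normalization} gives $\Gamma_t[\sigma^{1/2}]=\rho\sigma^{-1/2}$. The left factor then contributes
\begin{equation*}
 (1-t)\operatorname{Tr}\bigl(\sigma^{1/2}(\log\rho)\rho\sigma^{-1/2}\bigr)=(1-t)\operatorname{Tr}(\rho\log\rho),
\end{equation*}
by cyclicity of the trace.

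For the second term, $R_{B_t\log B_t}$ acting on $\rho^{1-t}\sigma^{1/2}$ yields
\begin{equation*}
 \operatorname{Tr}\bigl(\sigma^{1/2}\rho^{1-t}\sigma^{1/2}\,B_t\log B_t\bigr).
\end{equation*}
Here I use the push-through identity with $X=\rho^{t/2}\sigma^{-1/2}$. Then $X^*X=B_t$ and $XX^*=\rho^{t/2}\sigma^{-1}\rho^{t/2}=:C_t$. The identity $X^*X\log(X^*X)=X^*\log(XX^*)X$ rewrites the second term as
\begin{equation*}
 \operatorname{Tr}\bigl(\sigma^{1/2}\rho^{1-t}\sigma^{1/2}\sigma^{-1/2}\rho^{t/2}\log C_t\,\rho^{t/2}\sigma^{-1/2}\bigr).
\end{equation*}
By cyclicity this collapses to $\operatorname{Tr}(\rho^{t/2}\rho^{1-t}\rho^{t/2}\log C_t)=\operatorname{Tr}(\rho\log C_t)$. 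Adding the two terms gives \eqref{rw:eq:Dt-explicit}.

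For the endpoints, at $t=0$ we have $C_0=\sigma^{-1}$, so the formula reads $\operatorname{Tr}\rho\log\rho-\operatorname{Tr}\rho\log\sigma$, which is \eqref{rw:eq:umegaki-endpoint}. At $t=1$ the first term vanishes and we obtain \eqref{rw:eq:bs-endpoint}.

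The main obstacle, though mild, is the middle step: justifying the push-through identity $X^*g(XX^*)X=g(X^*X)X^*X$ with $g=\log$ applied to $x\log x$. Since $X$ is invertible, this follows from the polynomial case $X^*(XX^*)^n=(X^*X)^nX^*$ by approximating $\log$ uniformly on the joint spectrum. One must also keep careful track of which side $B_t$ multiplies on, since $R_aR_b=R_{ba}$, when converting the Hilbert--Schmidt pairing into a trace.
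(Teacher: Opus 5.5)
Your proof is correct and follows the paper's argument: you split $\log\Gamma_t=L_{(1-t)\log\rho}+R_{\log B_t}$, evaluate the left-multiplication term using $\Gamma_t\sigma^{1/2}=\rho\sigma^{-1/2}$, and convert the right-multiplication term into $\operatorname{Tr}(\rho\log(\rho^{t/2}\sigma^{-1}\rho^{t/2}))$ via the push-through identity and cyclicity. The only difference is cosmetic: your $X=\rho^{t/2}\sigma^{-1/2}$ is the adjoint of the paper's choice, so $B_t=X^*X$ instead of $XX^*$.
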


\begin{proof}
By \eqref{rw:eq:log-left-right} and
\eqref{rw:eq:gamma-explicit},
\begin{equation}
 \log\Gamma_t
 =L_{(1-t)\log\rho}+R_{\log B_t}.
\end{equation}
The left-multiplication term in
$\langle\sigma^{1/2},\Gamma_t\log\Gamma_t
\sigma^{1/2}\rangle_{\rm HS}$ equals
$(1-t)\operatorname{Tr}(\rho\log\rho)$.  For the other term, we put
$X=\sigma^{-1/2}\rho^{t/2}$, so that
$B_t=XX^*$ and $X^*X=\rho^{t/2}\sigma^{-1}\rho^{t/2}$.  The identity
$X^*\log(XX^*)=\log(X^*X)X^*$ gives
\begin{equation}
 \sigma^{1/2}B_t\log(B_t)\sigma^{1/2}
 =\rho^{t/2}
 \log\!\left(\rho^{t/2}\sigma^{-1}\rho^{t/2}\right)
 \rho^{t/2}.
\end{equation}
Cyclicity now yields the second term in
\eqref{rw:eq:Dt-explicit}.  The endpoint identities follow directly;
the last expression in \eqref{rw:eq:bs-endpoint} is equivalent to
$\operatorname{Tr}[\sigma C\log C]$, for
$C=\sigma^{-1/2}\rho\sigma^{-1/2}$, by the same push-through
identity \cite{fujii1989relative}.
\end{proof}

\begin{remark}[Extension to unequal supports]
Suppose that
$\operatorname{supp}\rho\leq\operatorname{supp}\sigma$.
Working on $\operatorname{supp}\sigma$ and applying
\eqref{rw:eq:Dt-explicit} to
$\rho_\varepsilon=(1-\varepsilon)\rho+\varepsilon\sigma$ gives, as
$\varepsilon\downarrow0$, the same formula with $0\log0=0$ and
$\sigma^{-1}$ interpreted as a support inverse.  For $t>0$, the
logarithm in the second term is taken on
$\operatorname{supp}\rho$; at $t=0$, the limit is the usual supported
Umegaki expression.  If
$\operatorname{supp}\rho\nleq\operatorname{supp}\sigma$, the
lower-semicontinuous extension is $D^t(\rho\Vert\sigma)=+\infty$.
\end{remark}

\subsection{Geodesic Rényi divergences: Petz to geometric}
\label{rw:subsec:renyi-special-case}

The next significant particular case to consider is that of $f(x)=x^\alpha$, for $\alpha >0$. 

\refstepcounter{definition}
\begin{mainobject}{Definition~\thedefinition: Geodesic R\'enyi divergence}
For invertible states $\rho,\sigma$, $t\in[0,1]$, and $\alpha>0$,
we define the unnormalized power functional
\begin{equation}
 Q_{\alpha,t}(\rho\Vert\sigma)
 :=D_{x^\alpha}^t(\rho\Vert\sigma)
 =\big\langle\sigma^{1/2},
 \Gamma_t^\alpha\sigma^{1/2}\big\rangle_{\rm HS}=\operatorname{Tr}\!\left[
 \sigma^{1/2}\rho^{\alpha(1-t)}\sigma^{1/2}B_t^\alpha
 \right],
 \label{rw:eq:power-functional}
\end{equation}
since the two factors in $\Gamma_t=L_{\rho^{1-t}}R_{B_t}$
commute. For $\alpha\neq1$, the \textit{normalized geodesic Rényi divergence} of $\rho$ and $\sigma$ with parameter $t$ is
\begin{equation}
 D_\alpha^t(\rho\Vert\sigma)
 :=\frac{1}{\alpha-1}\log Q_{\alpha,t}(\rho\Vert\sigma).
 \label{rw:eq:renyi-interpolation}
\end{equation}
When $\rho$ and $\sigma$ are fixed, this is called in short their \textit{$(t,\alpha)$-Rényi divergence}.
\end{mainobject}

At the endpoints,
\begin{align}
 D_\alpha^0(\rho\Vert\sigma)
 &=\frac{1}{\alpha-1}
   \log\operatorname{Tr}(\rho^\alpha\sigma^{1-\alpha}),
 \label{rw:eq:petz-renyi-endpoint}\\
 D_\alpha^1(\rho\Vert\sigma)
 &=\frac{1}{\alpha-1}
   \log\operatorname{Tr}\!\left[
    \sigma(\sigma^{-1/2}\rho\sigma^{-1/2})^\alpha\right],
 \label{rw:eq:geometric-renyi-endpoint}
\end{align}
namely the Petz and geometric (maximal) Rényi divergences.  Since
$Q_{1,t}=1$, differentiation in $\alpha$ at 1 shows that
\begin{equation}
 \lim_{\alpha\to1}D_\alpha^t(\rho\Vert\sigma)
 =D^t(\rho\Vert\sigma),
 \label{rw:eq:renyi-alpha-one-limit}
\end{equation}
recovering the geodesic relative entropy with parameter $t$. The definition \eqref{rw:eq:renyi-interpolation} makes sense for all
positive $\alpha\neq1$; a data-processing inequality will only be asserted in
the operator-convex range $0<\alpha<1$ and $1<\alpha\leq2$.

\begin{example}[The complete Petz--geometric R\'enyi band]
\label{rw:ex:renyi-interpolation-band}
We consider the invertible, noncommuting qubit states
\begin{equation}
 \rho=
 \begin{pmatrix}
  \frac12&\frac9{20}\\[1mm]
  \frac9{20}&\frac12
 \end{pmatrix},
 \qquad
 \sigma=
 \begin{pmatrix}
  \frac9{10}&0\\[1mm]
  0&\frac1{10}
 \end{pmatrix}.
 \label{rw:eq:renyi-band-states}
\end{equation}
Indeed, the eigenvalues of $\rho$ are $19/20$ and $1/20$, and
\begin{equation}
 [\rho,\sigma]
 =\begin{pmatrix}0&-9/25\\ 9/25&0\end{pmatrix}\neq0.
\end{equation}
For this pair,
\Cref{rw:thm:Dt-strict-monotonicity,rw:thm:power-t-monotonicity}
give
\begin{equation}
 D_\alpha^0(\rho\Vert\sigma)
 \leq D_\alpha^t(\rho\Vert\sigma)
 \leq D_\alpha^1(\rho\Vert\sigma),
 \qquad 0<\alpha<2,\quad 0\leq t\leq1,
 \label{rw:eq:renyi-band-order}
\end{equation}
where $D_1^t:=D^t$.  Since the path is continuous in $t$, its
pointwise range is the entire interval
\begin{equation}
 \bigl\{D_\alpha^t(\rho\Vert\sigma):0\leq t\leq1\bigr\}
 =\bigl[D_\alpha^0(\rho\Vert\sigma),
        D_\alpha^1(\rho\Vert\sigma)\bigr].
 \label{rw:eq:renyi-band-range}
\end{equation}
Thus the shaded region in \Cref{rw:fig:renyi-interpolation-band}
is the exact continuum of geodesic divergences.  At the two ends of the plotted order
interval,
\begin{equation}
 D_0^t(\rho\Vert\sigma):=\lim_{\alpha\downarrow0}D_\alpha^t
 =0,
 \qquad
 D_2^t(\rho\Vert\sigma)
 =\log\operatorname{Tr}(\rho^2\sigma^{-1})
 =\log\frac{181}{36},
 \label{rw:eq:renyi-band-meeting-points}
\end{equation}
so all the curves meet at both $\alpha=0$ and $\alpha=2$.
\end{example}

\begin{figure}[htbp]
 \centering
 \includegraphics[width=0.96\linewidth]{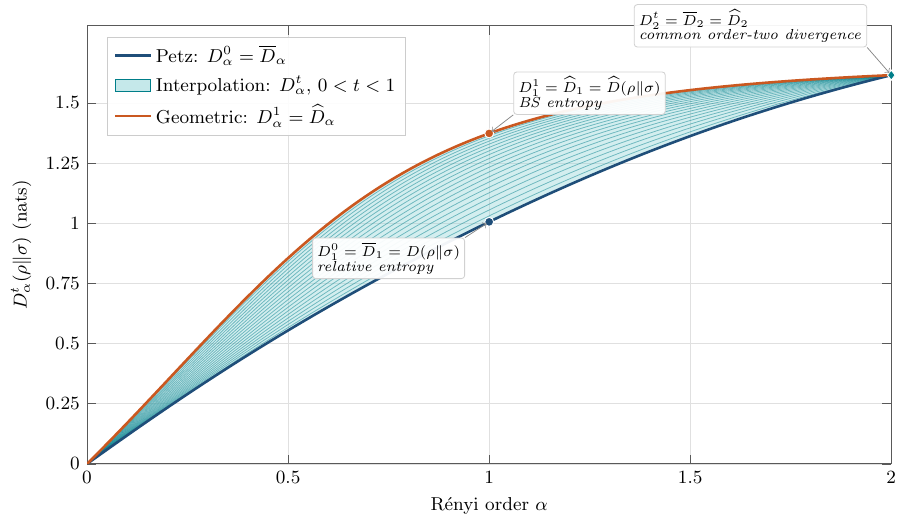}
 \caption{R\'enyi interpolation for the states in
 \eqref{rw:eq:renyi-band-states}.
 The dark blue and orange boundary curves are, respectively, the Petz-Rényi
 divergences $D_\alpha^0=\overline D_\alpha$ and the geometric Rényi
 divergence $D_\alpha^1=\widehat D_\alpha$.
 The contrasting teal curves show $D_\alpha^t$ for
 $t=k/20$, $k=1,\ldots,19$, and the light teal band represents the
 full continuum $0\leq t\leq1$.  The values at $\alpha=0$ and
 $\alpha=1$ are the continuous extensions $D_0^t=0$ and
 $D_1^t=D^t$.  The marked order-one points are the relative and BS
 entropies, and the order-two marker is their common R\'enyi value .}
 \label{rw:fig:renyi-interpolation-band}
\end{figure}

\subsection{The quadratic fixed point of the interpolation}
\label{rw:subsec:chi-square}

As mentioned above, the order-two power functional does not see the geodesic parameter.
Indeed,
\begin{equation}
 Q_{2,t}(\rho\Vert\sigma)
 =\operatorname{Tr}(\rho^2\sigma^{-1}),
 \qquad 0\leq t\leq1.
 \label{rw:eq:Q2-constant}
\end{equation}
Consequently, for $f_\chi(x)=(x-1)^2$,
\begin{align}
 D_{f_\chi}^t(\rho\Vert\sigma)
 &=\operatorname{Tr}(\rho^2\sigma^{-1})-1
 =\operatorname{Tr}\!\left[
 (\rho-\sigma)\sigma^{-1}(\rho-\sigma)\right]
 =:\chi^2(\rho\Vert\sigma),
 \label{rw:eq:chi-square-constant}
\end{align}
independently of $t$, and hence
\begin{equation}
D_2^t(\rho\Vert\sigma)=\log(1+\chi^2(\rho\Vert\sigma)).
 \label{rw:eq:renyi-two-chi-square}
\end{equation}
This exceptional quadratic behavior is important later: a pure
quadratic generator cannot distinguish the nonmaximal Petz equality
class from the maximal BS equality class.

\section{Basic properties and relations}
\label{rw:sec:basic-properties}

Beyond their definitions and distinguished examples, the geodesic
divergences admit several elementary comparisons and structural rules.
We first locate the R\'enyi path relative to the standard
two-parameter family, and then collect the covariance, tensor,
positivity, and order properties used later.

\subsection{Comparison with the
\texorpdfstring{$(\alpha,z)$}{(alpha,z)} family}
\label{rw:subsec:alpha-z-comparison}

To locate the present interpolation among the standard two-parameter
R\'enyi divergences, we write
\begin{equation}
 D_{\alpha,z}(\rho\Vert\sigma)
 :=\frac{1}{\alpha-1}\log Q_{\alpha,z}(\rho\Vert\sigma),
 \qquad z>0,
 \label{rw:eq:alpha-z-divergence}
\end{equation}
where
\begin{equation}
 Q_{\alpha,z}(\rho\Vert\sigma)
 :=\operatorname{Tr}\!\left(
  \rho^{\alpha/(2z)}
  \sigma^{(1-\alpha)/z}
  \rho^{\alpha/(2z)}
 \right)^z .
 \label{rw:eq:alpha-z-functional}
\end{equation}
This is the $\alpha$--$z$ family of
\cite[Eqs.~(1)--(3)]{audenaert2015alphaz}.  Thus $z=1$ is the Petz
divergence and $z=\alpha$ is the sandwiched divergence
\cite[Eqs.~(5)--(6)]{audenaert2015alphaz}.  We distinguish
the two-parameter family $D_{\alpha,z}$ from the geodesic family
$D_\alpha^t$ by the position and number of its parameters.

For invertible inputs, the explicit operator in
\eqref{rw:eq:gamma-explicit} is positive for every $t\in\mathbb R$.
Thus \eqref{rw:eq:power-functional} and \eqref{rw:eq:renyi-interpolation} provide
an algebraic continuation of $Q_{\alpha,t}$ and $D_\alpha^t$ to all
real $t$.  The push-through identity gives the useful formula
\begin{equation}
 Q_{\alpha,t}
 =\operatorname{Tr}\!\left[
  \rho^{\alpha-(\alpha-1)t}
  \bigl(\rho^{t/2}\sigma^{-1}\rho^{t/2}\bigr)^{\alpha-1}
 \right],
 \qquad t\in\mathbb R.
 \label{rw:eq:power-functional-push-through}
\end{equation}
Outside $[0,1]$ this is a geodesic extrapolation only; no
nonnegativity or data-processing assertion for the resulting
divergence is implicit.

\begin{proposition}[Exact universal intersections for real $t$]
\label{rw:prop:alpha-z-intersections}
We assume $\dim\mathcal H\geq2$ and fix $\alpha>0$, $\alpha\neq1$,
$z>0$, and $t\in\mathbb R$.  Then
\begin{equation}
 D_\alpha^t(\rho\Vert\sigma)
 =D_{\alpha,z}(\rho\Vert\sigma)
 \quad\text{for every invertible pair }(\rho,\sigma)
 \label{rw:eq:alpha-z-universal-equality}
\end{equation}
if and only if at least one of the following conditions holds:
\begin{equation}
 \begin{aligned}
 &t=0,\quad z=1;\qquad
 &&\alpha=2,\quad z=1;\\
 &t=\frac{\alpha}{\alpha-1},\quad
   z=\lvert\alpha-1\rvert;\qquad
 &&(\alpha,t,z)=(3,3,1).
 \end{aligned}
 \label{rw:eq:alpha-z-intersection-locus}
\end{equation}
The alternatives form a union and may overlap.  Equivalently, the
universal identities are the Petz line $D_\alpha^0=D_{\alpha,1}$,
the full real order-two fibre
\begin{equation}
 D_2^t=D_{2,1}
 =\log\operatorname{Tr}(\rho^2\sigma^{-1}),
 \qquad t\in\mathbb R,
 \label{rw:eq:alpha-z-order-two-fibre}
\end{equation}
the extrapolated branch
\begin{equation}
 D_\alpha^{\,\alpha/(\alpha-1)}
 =D_{\alpha,\lvert\alpha-1\rvert},
 \label{rw:eq:alpha-z-extrapolated-intersection}
\end{equation}
and the isolated cubic coincidence
\begin{equation}
 D_3^3=D_{3,1}.
 \label{rw:eq:alpha-z-cubic-intersection}
\end{equation}
\end{proposition}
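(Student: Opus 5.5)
The plan is to prove sufficiency by direct algebra using \eqref{rw:eq:power-functional-push-through}, and necessity by a second-order perturbation around commuting pairs.

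\textbf{Sufficiency.} Each of the four alternatives in \eqref{rw:eq:alpha-z-intersection-locus} is checked directly from \eqref{rw:eq:power-functional-push-through}, which is valid for every real $t$.

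\begin{itemize}
\item \emph{Petz line.} $t=0$, $z=1$ is immediate from \eqref{rw:eq:petz-renyi-endpoint}.
\item \emph{Order-two fibre.} For $\alpha=2$, \eqref{rw:eq:power-functional-push-through} gives $Q_{2,t}=\operatorname{Tr}[\rho^{2-t}\rho^{t/2}\sigma^{-1}\rho^{t/2}]=\operatorname{Tr}(\rho^2\sigma^{-1})=Q_{2,1}$ by cyclicity.
\item \emph{Extrapolated branch.} For $t=\alpha/(\alpha-1)$ the exponent $\alpha-(\alpha-1)t$ vanishes, so $Q_{\alpha,t}=\operatorname{Tr}(\rho^{t/2}\sigma^{-1}\rho^{t/2})^{\alpha-1}$.
 \begin{itemize}
 \item If $\alpha>1$, take $z=\alpha-1$. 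Then $\alpha/(2z)=t/2$ and $(1-\alpha)/z=-1$, so this is exactly $Q_{\alpha,z}$.
 \item If $\alpha<1$, take $z=1-\alpha$. Then $\alpha/(2z)=-t/2$ and $(1-\alpha)/z=1$. Since $\rho^{t/2}\sigma^{-1}\rho^{t/2}=(\rho^{-t/2}\sigma\rho^{-t/2})^{-1}$, its $(\alpha-1)$-th power is the $z$-th power of $\rho^{-t/2}\sigma\rho^{-t/2}$, which is again $Q_{\alpha,z}$.
 \end{itemize}
\item \emph{Cubic point.} For $(\alpha,t)=(3,3)$ we get $Q_{3,3}=\operatorname{Tr}[\rho^{-3}\rho^{3/2}\sigma^{-1}\rho^{3}\sigma^{-1}\rho^{3/2}]=\operatorname{Tr}(\rho^3\sigma^{-2})=Q_{3,1}$.
\end{itemize}

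\textbf{Necessity.} Commuting pairs cannot separate the families, since both reduce to the classical quantity. I would therefore work to second order around commuting pairs, using only qubits: by embedding a qubit block with a small weight, the case $\dim\mathcal H\geq2$ reduces to dimension two.

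Fix $\rho=\operatorname{diag}(\lambda_1,\lambda_2)$ and $\sigma_\varepsilon=U_\varepsilon\operatorname{diag}(\mu_1,\mu_2)U_\varepsilon^*$, where $U_\varepsilon=e^{\varepsilon H}$ and $H$ is an off-diagonal anti-Hermitian generator. Both $Q_{\alpha,t}$ and $Q_{\alpha,z}$ are smooth in $\varepsilon$ with the same value at $\varepsilon=0$, and the first-order terms vanish by symmetry. Equality of $D_\alpha^t$ and $D_{\alpha,z}$ for all pairs then forces equality of the $\varepsilon^2$ coefficients. These coefficients are computed with Daleckii--Krein divided differences of $x\mapsto x^{\alpha-1}$, $x\mapsto x^z$, and the inner powers. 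The outcome is an explicit identity
\[
K_{t}(\lambda_1,\lambda_2,\mu_1,\mu_2)=K_{z}(\lambda_1,\lambda_2,\mu_1,\mu_2)
\]
between two finite sums of monomials $\lambda_1^{a}\lambda_2^{b}\mu_1^{c}\mu_2^{d}$ divided by differences.

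By homogeneity I would normalise $\lambda_2=\mu_2=1$. Both sides are then generalized polynomials (finite sums of real powers) in the two variables $x=\lambda_1$ and $y=\mu_1$, after clearing the denominators $(x-1)(y-1)$ and $(x^{r}-1)$. Equality on an open set forces the multisets of exponent pairs, with their coefficients, to coincide. Comparing the leading exponents as $x\to0,\infty$ with $y$ fixed, and then as $y\to0,\infty$, should produce a small polynomial system in $(\alpha,t,z)$. Its solution set should be exactly the four branches, with the isolated point $(3,3,1)$ arising from an accidental cancellation of exponents.

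\textbf{Main obstacle.} I expect the hardest step to be this exponent bookkeeping: the second-order kernels contain several terms whose exponents can coincide on special loci, and the cubic point shows that such coincidences genuinely occur. If the full generalized-polynomial comparison becomes unwieldy, I would first extract the necessary condition $z=1$ or $z=|\alpha-1|$ from the $\mu_1\to0$ asymptotics alone. On each of these two slices the remaining one-variable comparison in $t$ is small enough to solve by hand, and it should leave only the listed values of $t$.
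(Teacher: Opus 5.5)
Your sufficiency half is the paper's argument. All four identities are read off from \eqref{rw:eq:power-functional-push-through}; the paper gets the cubic point from the reflection $Q_{3,t}=Q_{3,3-t}$, which is equivalent to your direct trace computation.

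The necessity half, however, is only announced. Its whole content is an exponent comparison that you predict ``should'' give the four branches, and the one concrete shortcut you commit to is false. Carrying out your own computation with $\lambda_1=x$, $\mu_1=y$, $\lambda_2=\mu_2=1$, $|H_{12}|=1$, $\beta=\alpha-1$, $s=x^t$ and $R=x^\alpha y^{-\beta}$, the $\varepsilon^2$ coefficients are
\[
 K_z=z\,\frac{(y^{\beta/z}-1)(x^{\alpha/z}-1)(R-1)}{x^{\alpha/z}-y^{\beta/z}},
 \qquad
 K_t=\frac{\beta(y-1)(1-s)(R-1)}{y-s}
 -\frac{(y-1)^2s\,(x^{\alpha-\beta t}-1)(x^{\beta t}y^{-\beta}-1)}{(y-s)^2}.
\]
For $\alpha>1$ and $y\to0$, the coefficients of $y^{-\beta}$ are $z(x^{\alpha-\alpha/z}-x^\alpha)$ and $x^{(\beta-1)t}+(\beta-1)x^{\alpha-t}-\beta x^\alpha$. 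Matching them gives $t=0,z=1$; $\alpha=2,z=1$; and $t=\alpha/\beta,z=\beta$. It \emph{also} leaves the whole line $t=\alpha/(\alpha-2)$, $z=\alpha-2$ for every $\alpha>2$, where the exponents $(\beta-1)t$ and $\alpha$ collide. For example, at $(\alpha,t,z)=(4,2,2)$ both leading coefficients equal $2(x^2-x^4)$, although $K_t=135/4\neq119/4=K_z$ at $x=3$, $y=4$. So the $\mu_1\to0$ asymptotics do not yield $z\in\{1,|\alpha-1|\}$. The surviving line is exactly the paper's threshold $\kappa=\alpha+t(2-\alpha)=0$, which the paper has to dispose of separately through the $u^2$ and $u^4$ coefficients of \eqref{rw:eq:real-t-threshold-identity}.

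Within your framework the gap can be closed. On that line $\beta t=\alpha+t$, so both kernels carry the factor $1-s$. Divide it out and let $x\to0$, so that $s,R\to0$. With $\gamma=\alpha-2$ this leaves $(1-\gamma)y^{-1}-y^{-2}=-\gamma y^{-(\gamma+1)/\gamma}$, which forces $\gamma=1$, i.e.\ the cubic point $(3,3,1)$.

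For $0<\alpha<1$, use $y\to\infty$ instead. The coefficients of $y^{-\beta}=y^{1-\alpha}$ are $z(x^{\alpha-\alpha/z}-x^\alpha)$ and $\beta x^\alpha-\alpha x^{\alpha+t}+x^{\alpha t}$. These allow only $t=0,z=1$ and $t=\alpha/\beta$, $z=-\beta$.

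With these steps supplied, your second-order route is a complete necessity proof and genuinely different from the paper's. The paper uses rank-one limits $\rho_x\to\lvert0\rangle\langle0\rvert$ with strict monotonicity of power means to produce candidates, a separate Taylor analysis at the threshold, and Araki--Lieb--Thirring-type witnesses to exclude $z=|\beta|$ with $t\neq\alpha/\beta$. Your route replaces those three devices with one explicit kernel computation. A priori it is not clear that second-order data at commuting pairs separate all parameters, and it is precisely this computation, not just its expected outcome, that the proof needs.
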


\begin{proof}
The Petz identity follows from
\eqref{rw:eq:petz-renyi-endpoint}.  We put $\beta:=\alpha-1$.  From
\eqref{rw:eq:power-functional-push-through}, cyclicity gives
\begin{equation}
 Q_{2,t}=\operatorname{Tr}(\rho^2\sigma^{-1}),
 \qquad t\in\mathbb R,
 \label{rw:eq:real-t-quadratic-identity}
\end{equation}
which proves the quadratic fibre.  At
$t_\star:=\alpha/\beta$, the exterior power of $\rho$ in
\eqref{rw:eq:power-functional-push-through} disappears.  If
$\beta>0$, the resulting trace is the $(\alpha,z)$ functional at
$z=\beta$; if $\beta<0$, we use
\begin{equation}
 \bigl(\rho^{t_\star/2}\sigma^{-1}
       \rho^{t_\star/2}\bigr)^{-1}
 =\rho^{-t_\star/2}\sigma\rho^{-t_\star/2}
 \label{rw:eq:negative-branch-inversion}
\end{equation}
to obtain the $(\alpha,z)$ functional at $z=-\beta$.  This proves
\eqref{rw:eq:alpha-z-extrapolated-intersection}.  Finally, for the
real-$t$ continuation,
\begin{align}
 Q_{3,t}
 &=\operatorname{Tr}\!\left(
   \sigma^{-1}\rho^{3-t}\sigma^{-1}\rho^t\right)\notag\\
 &=\left\|\rho^{(3-t)/2}\sigma^{-1}\rho^{t/2}\right\|_2^2
 =\left\|\rho^{t/2}\sigma^{-1}\rho^{(3-t)/2}\right\|_2^2
 =Q_{3,3-t},
 \label{rw:eq:cubic-reflection}
\end{align}
Thus $Q_{3,3}=Q_{3,0}=\operatorname{Tr}(\rho^3\sigma^{-2})$, which
is precisely the $(\alpha,z)$ functional at $(\alpha,z)=(3,1)$.

We prove exhaustiveness.  A two-dimensional calculation is enough.
Indeed, if an invertible qubit pair $(\rho_2,\sigma_2)$ gives a mismatch,
then for any invertible state $\tau$ on the complementary subspace and
$0<p<1$, we set
\begin{equation}
 \rho=p\rho_2\oplus(1-p)\tau,
 \qquad
 \sigma=p\sigma_2\oplus(1-p)\tau.
 \label{rw:eq:real-t-qubit-embedding}
\end{equation}
Each power functional is $p$ times its qubit value plus $1-p$, so
the mismatch persists in every dimension.  We take
\begin{equation}
 \rho_x:=\frac{1}{1+x}
 \begin{pmatrix}1&0\\0&x\end{pmatrix},
 \qquad
 S:=\sigma^{-1}
 =\begin{pmatrix}a&c\\ \overline c&b\end{pmatrix}>0,
 \label{rw:eq:real-t-rank-one-witness}
\end{equation}
and $P=\lvert0\rangle\!\langle0\rvert$.  We set
$m_r:=\langle0\rvert S^r\lvert0\rangle$.  As $x\downarrow0$,
\begin{equation}
 Q_{\alpha,z}(\rho_x\Vert\sigma)
 \longrightarrow m_{\beta/z}^{\,z}.
 \label{rw:eq:real-t-alpha-z-pure-limit}
\end{equation}
To record the relevant spectral calculus, for $t>0$ and up to a
scalar tending to one,
\begin{equation}
 K_x:=\rho_x^{t/2}S\rho_x^{t/2}
 =\begin{pmatrix}
   a&cx^{t/2}\\ \overline c x^{t/2}&bx^t
  \end{pmatrix}.
 \label{rw:eq:real-t-spectral-block}
\end{equation}
Its eigenvalues and the lower diagonal entry of the spectral
projection $E_+$ of the larger eigenvalue satisfy
\begin{align}
 \lambda_+(K_x)=a+O(x^t), \qquad
 \lambda_-(K_x)=\frac{ab-\lvert c\rvert^2}{a}x^t+O(x^{2t}), \qquad
 \langle1\rvert E_+\lvert1\rangle
 =\frac{\lvert c\rvert^2}{a^2}x^t+O(x^{2t}).
 \label{rw:eq:real-t-spectral-asymptotics}
\end{align}
In \eqref{rw:eq:power-functional-push-through}, the latter term is
weighted by $x^{\alpha-\beta t}$, producing the competing exponent
$\kappa:=\alpha+t(2-\alpha)$.  At $\kappa=0$ its extra contribution
is $\lvert c\rvert^2a^{\beta-2}$, so together with $a^\beta$ it
equals $m_1^{\beta-2}m_2$.  Applying the same calculation to
$K_x^{-1}=\rho_x^{-t/2}\sigma\rho_x^{-t/2}$ yields the negative-$t$
cases.  Consequently, for $\alpha\neq2$,
\begin{equation}
 Q_{\alpha,t}(\rho_x\Vert\sigma)\longrightarrow
 \begin{cases}
  m_\beta,&t=0,\\[1mm]
  m_{-1}^{-\beta},&t<0,\ 0<\alpha<2,\\[1mm]
  m_1^\beta,&t>0,\ \kappa>0,\\[1mm]
  m_1^{\beta-2}m_2,&t>0,\ \kappa=0.
 \end{cases}
 \label{rw:eq:real-t-pure-limits}
\end{equation}
For $c\neq0$ the same quantity diverges when $t<0$, $\alpha>2$,
or when $t>0$, $\kappa<0$.

We let $M_r:=m_r^{1/r}$ be the scalar power mean of the spectral values
of $S$ in the vector state $P$.  Since
$m_{\beta/z}^z=M_{\beta/z}^{\beta}$, strict monotonicity of power
means in \eqref{rw:eq:real-t-alpha-z-pure-limit} and \eqref{rw:eq:real-t-pure-limits}
forces
\begin{equation}
 \begin{array}{c|c}
  t=0&z=1\\
  t<0&\beta<0,\ z=-\beta\\
  t>0,\ \kappa>0&\beta>0,\ z=\beta.
 \end{array}
 \label{rw:eq:real-t-power-mean-candidates}
\end{equation}
At $\alpha=2$, \eqref{rw:eq:real-t-quadratic-identity} and the first
line of the same power-mean argument force $z=1$.

It remains to analyze the positive-$t$ threshold $\kappa=0$, which
can occur only for $\alpha>2$ and
$t=\alpha/(\alpha-2)$.  We put $p:=\beta/z$ and take
$S_u=\left(\begin{smallmatrix}1&u\\u&1\end{smallmatrix}\right)$,
$\lvert u\rvert<1$.  Normalizing $S_u^{-1}$ to a state only
multiplies both sides below by the same positive scalar.  Universal
equality would imply
\begin{equation}
 \left(
  \frac{(1+u)^p+(1-u)^p}{2}
 \right)^{\beta/p}=1+u^2.
 \label{rw:eq:real-t-threshold-identity}
\end{equation}
The coefficient of $u^2$ gives $\beta(p-1)=2$; after substitution,
the coefficient of $u^4$ gives
$(\beta-2)(2\beta+1)=0$.  Since $\beta>1$, this forces
$\beta=p=2$.  Hence the threshold contributes exactly
$(\alpha,t,z)=(3,3,1)$.

We finally test the two regular candidates with
$z=\lvert\beta\rvert$.  We put
\begin{equation}
 \delta:=t_\star-t,\qquad
 A:=\rho^\delta,\qquad
 B:=\rho^{t/2}\sigma^{-1}\rho^{t/2}.
 \label{rw:eq:real-t-ALT-variables}
\end{equation}
Then
\begin{equation}
 Q_{\alpha,t}=\operatorname{Tr}(A^\beta B^\beta),
 \qquad
 \left.Q_{\alpha,z}\right|_{z=\lvert\beta\rvert}
 =\operatorname{Tr}(A^{1/2}BA^{1/2})^\beta.
 \label{rw:eq:alpha-z-ALT-comparison}
\end{equation}
If $\delta=0$, this is precisely the extrapolated branch.  If
$\delta\neq0$, arbitrary positive definite $A$ and $B$ can be
realized up to positive scalar factors by taking
$\rho\propto A^{1/\delta}$ and
$\sigma\propto\rho^{t/2}B^{-1}\rho^{t/2}$.
For $\beta>0$, we choose $A=\operatorname{diag}(a,1)$, $a>0$,
$a\neq1$, and
$B_\varepsilon=(1-\varepsilon)P_++\varepsilon P_-$, where
$P_\pm$ project onto $2^{-1/2}(1,\pm1)^{\mathsf T}$.  The two traces
in \eqref{rw:eq:alpha-z-ALT-comparison} tend to
\begin{equation}
 \frac{a^\beta+1}{2}
 \quad\text{and}\quad
 \left(\frac{a+1}{2}\right)^\beta,
 \label{rw:eq:alpha-z-strict-witness}
\end{equation}
which are unequal unless $\beta=1$.  If $\beta=-q<0$, multiplying
the two traces by $\varepsilon^q$ instead gives the limits
\begin{equation}
 \frac{a^{-q}+1}{2}
 \quad\text{and}\quad
 \left(\frac{a+1}{2a}\right)^q,
 \label{rw:eq:alpha-z-negative-strict-witness}
\end{equation}
which are strictly unequal for $0<q<1$ by concavity of $x^q$.
Thus $\delta=0$, except when $\beta=1$; that exception is exactly the
quadratic fibre already found.  This proves that the list is complete.
\end{proof}

\begin{remark}[Consequences and boundary cases]
\label{rw:rem:alpha-z-boundaries}
The universal classification is unchanged if ``invertible pair'' is
replaced by ``pair with common support'', by compression to that
support.  For unequal supports, the real-$t$ extrapolation can contain
negative powers, so no blanket boundary extension is asserted; the
items below record only the particular limiting statements that are
needed.
The relevant cases are as follows.
\begin{itemize}[leftmargin=1.5em,itemsep=2pt,parsep=0pt,topsep=4pt]
 \item \emph{The range $0<\alpha<1$.}
 Here $t_\star=\alpha/(\alpha-1)<0$ and $z=1-\alpha$.
 Consequently, \eqref{rw:eq:alpha-z-extrapolated-intersection}
 identifies the negative-$t$ continuation with the
 reverse-sandwiched branch.  Equivalently, it can be read using the
 evenness in $z$, which is valid here because the inputs are invertible
 \cite[p.~4, Remark~1]{audenaert2015alphaz}.  In particular,
 $(\alpha,z)=(1/2,1/2)$ corresponds to $t=-1$.

 \item \emph{The range $\alpha>1$.}
 Here $t_\star=\alpha/(\alpha-1)>1$ and $z=\alpha-1$.
 Thus the positive continuation meets a second branch of the
 $(\alpha,z)$ family outside the interpolation segment.

 \item \emph{Quadratic and cubic orders.}
 At $\alpha=2$, \eqref{rw:eq:alpha-z-order-two-fibre} holds for every
 real $t$.  At $\alpha=3$, the reflection in
 \eqref{rw:eq:cubic-reflection} maps the Petz point $t=0$ to the
 additional point $t=3$, yielding
 \eqref{rw:eq:alpha-z-cubic-intersection}.  The latter coincidence is
 isolated.

 \item \emph{Order one.}
 Every fixed $z>0$---more generally, every $C^1$ path extending to
 $z(1)=z_0>0$---has the Umegaki limit
 $D_{\alpha,z(\alpha)}\to D$ \cite[Theorem~4]{lin2015investigating}, whereas
 $D_\alpha^t\to D^t$ by \eqref{rw:eq:renyi-alpha-one-limit}; its
 differentiation argument applies verbatim to every fixed real
 $t$.  For
 $\rho_x\to P_\psi:=\lvert\psi\rangle\!\langle\psi\rvert$, the
 real-$t$ continuation satisfies
 \begin{equation}
  D^t(\rho_x\Vert\sigma)\longrightarrow
  \begin{cases}
   \log\langle\psi\rvert\sigma^{-1}\lvert\psi\rangle,&t>0,\\
   -\langle\psi\rvert(\log\sigma)\lvert\psi\rangle,&t=0,\\
   -\log\langle\psi\rvert\sigma\lvert\psi\rangle,&t<0.
  \end{cases}
  \label{rw:eq:real-t-order-one-pure-limits}
 \end{equation}
 Strict Jensen shows that $D^t=D$ universally only for $t=0$.  The
 $t=0$ member has the same order-one limit for every fixed $z$, but
 coincides as a family only when $z=1$.  Singular paths with
 $z(\alpha)\to0$ can be path dependent
 \cite[Theorems~2--3]{audenaert2015alphaz} and are not additional finite-parameter
 intersections.

 \item \emph{Order zero.}
 If $\alpha=0$ is adjoined by continuous extension from invertible states,
 every real $t$ and every $z>0$ gives zero.

 \item \emph{Commuting and statewise coincidences.}
 If $[\rho,\sigma]=0$, then every real $t$ and every $z$ gives the
 same classical R\'enyi divergence.  Such statewise coincidences, or
 accidental equality for one noncommuting pair, do not enlarge the
 universal locus in \Cref{rw:prop:alpha-z-intersections}.  In
 particular, the geometric endpoint $D_\alpha^1$ is not an
 $(\alpha,z)$ divergence in general, apart from the order-two point.
\end{itemize}
\end{remark}

For a numerical comparison on the common data-processing range, we retain
the qubit states in \eqref{rw:eq:renyi-band-states} and, for
$0<\alpha\leq2$, set
\begin{equation}
 \mathcal Z_{\mathrm{DPI}}(\alpha)
 :=\begin{cases}
 [\max\{\alpha,1-\alpha\},\infty),&0<\alpha<1,\\[1mm]
 (0,\infty),&\alpha=1,\\[1mm]
 [\alpha/2,\alpha],&1<\alpha\leq2.
 \end{cases}
 \label{rw:eq:alpha-z-dpi-region}
\end{equation}
This is the complete parameter region in this order interval for which
$D_{\alpha,z}$ satisfies data processing
\cite[Theorem~1.1]{zhang2020alphazdpi}.  The Araki--Lieb--Thirring
inequality \cite{araki1990lieb} yields that
$z\mapsto Q_{\alpha,z}$ is non-increasing
\cite[Proposition~6]{lin2015investigating}.
Consequently, the closure of the corresponding value range for the
fixed pair is
\begin{equation}
 \overline{\bigl\{D_{\alpha,z}(\rho\Vert\sigma):
       z\in\mathcal Z_{\mathrm{DPI}}(\alpha)\bigr\}}
 =\begin{cases}
 [D_{\alpha,z_{\min}}(\rho\Vert\sigma),
  D_{\alpha,\infty}(\rho\Vert\sigma)],&0<\alpha<1,\\[1mm]
 [D_{\alpha,\alpha}(\rho\Vert\sigma),
  D_{\alpha,\alpha/2}(\rho\Vert\sigma)],&1<\alpha\leq2,
 \end{cases}
 \label{rw:eq:alpha-z-dpi-value-band}
\end{equation}
where $z_{\min}=\max\{\alpha,1-\alpha\}$ and
\begin{equation}
 D_{\alpha,\infty}(\rho\Vert\sigma)
 :=\frac{1}{\alpha-1}\log\operatorname{Tr}
 \exp\!\bigl(\alpha\log\rho+(1-\alpha)\log\sigma\bigr).
 \label{rw:eq:alpha-z-log-euclidean-limit}
\end{equation}
as in \cite[Eq.~(28)]{audenaert2015alphaz}.
For $0<\alpha<1$, the log-Euclidean endpoint is the limit as
$z\to\infty$ and need not be attained at finite $z$; this is why the
closure appears in \eqref{rw:eq:alpha-z-dpi-value-band}.
At $\alpha=1$ all $(\alpha,z)$ quantities in the plot are understood
by continuous extension and equal $D(\rho\Vert\sigma)$.  To display
representative members of every vertical section without truncating
the unbounded branch, the thin violet curves in
\Cref{rw:fig:renyi-family-comparison} use $u\in\{1/4,1/2,3/4\}$ and
\begin{equation}
 z_u(\alpha):=
 \begin{cases}
  z_{\min}/(1-u),&0<\alpha<1,\\[1mm]
  \alpha(1+u)/2,&1<\alpha\leq2.
 \end{cases}
 \label{rw:eq:alpha-z-compactified-slices}
\end{equation}

\begin{figure}[!htbp]
 \centering
 \includegraphics[width=0.90\linewidth]{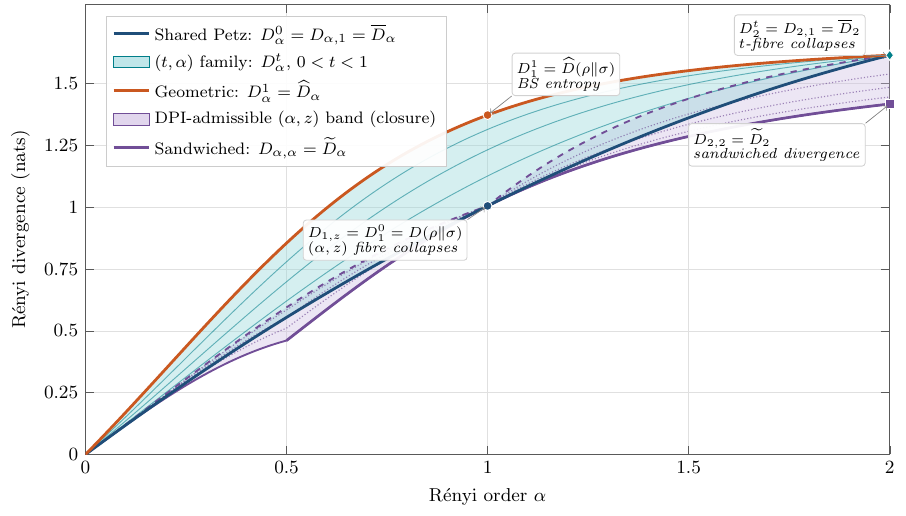}
 \caption{Comparison of the $(t,\alpha)$ and $(\alpha,z)$ R\'enyi
 families for the same noncommuting states as in
 \Cref{rw:fig:renyi-interpolation-band}.  Teal denotes the complete
 $t$-interpolation band and violet the compactified DPI-admissible
 $(\alpha,z)$ value band in
 \eqref{rw:eq:alpha-z-dpi-value-band}; thin curves sample their
 interiors.  The navy, orange, and solid-violet boundaries are,
 respectively, the shared Petz, geometric, and sandwiched
 divergences.  The labels mark the order-one Umegaki and BS limits and
 the order-two Petz and sandwiched points.  Any overlap of the shaded
 bands is specific to this fixed pair.}
 \label{rw:fig:renyi-family-comparison}
\end{figure}

\subsection{Algebraic and structural properties}
\label{rw:subsec:algebraic-properties}

The following elementary rules do not use data processing or
monotonicity in the interpolation parameter.  They also fix two
distinctions that will be useful below: a general $f$-divergence need
not be tensor additive, and a normalized R\'enyi divergence need not
be affine under classical flags.

\begin{proposition}[Covariance and tensor products]
\label{rw:prop:elementary-structural-rules}
We assume that all states below are invertible, fix $t\in[0,1]$, and
assume that every generator is defined on the relevant spectra.
\begin{enumerate}[label=\textnormal{(\roman*)}]
\item The assignment $f\mapsto D_f^t$ is linear and
\begin{equation}
 D_\ell^t(\rho\Vert\sigma)=a+b,
 \qquad \ell(x)=a+bx.
 \label{rw:eq:affine-generator-normalization}
\end{equation}
Consequently, if $g(x)=f(x)+a(x-1)+b$, then
\begin{equation}
 D_g^t(\rho\Vert\sigma)=D_f^t(\rho\Vert\sigma)+b.
 \label{rw:eq:affine-generator-freedom}
\end{equation}

\item For every unitary $U$,
\begin{equation}
 D_f^t(U\rho U^*\Vert U\sigma U^*)
 =D_f^t(\rho\Vert\sigma).
 \label{rw:eq:unitary-invariance-Dft}
\end{equation}
In particular, $D^t$, $Q_{\alpha,t}$, and $D_\alpha^t$ are
simultaneously unitarily invariant.

\item For two invertible pairs, under the canonical identification of
their Hilbert--Schmidt spaces,
\begin{equation}
 \Gamma_t(\rho_1\otimes\rho_2,\sigma_1\otimes\sigma_2)
 =\Gamma_t(\rho_1,\sigma_1)\otimes
  \Gamma_t(\rho_2,\sigma_2).
 \label{rw:eq:Gamma-tensor-product}
\end{equation}
Consequently,
\begin{align}
 Q_{\alpha,t}(\rho_1\otimes\rho_2
       \Vert\sigma_1\otimes\sigma_2)
 &=Q_{\alpha,t}(\rho_1\Vert\sigma_1)
   Q_{\alpha,t}(\rho_2\Vert\sigma_2),
 \label{rw:eq:Qalpha-tensor-product}\\
 D^t(\rho_1\otimes\rho_2
       \Vert\sigma_1\otimes\sigma_2)
 &=D^t(\rho_1\Vert\sigma_1)
   +D^t(\rho_2\Vert\sigma_2),
 \label{rw:eq:Dt-tensor-product}\\
 D_\alpha^t(\rho_1\otimes\rho_2
       \Vert\sigma_1\otimes\sigma_2)
 &=D_\alpha^t(\rho_1\Vert\sigma_1)
   +D_\alpha^t(\rho_2\Vert\sigma_2).
 \label{rw:eq:Renyi-tensor-product}
\end{align}
The last identity uses the continuous value $D_1^t=D^t$ when
$\alpha=1$.
\end{enumerate}
\end{proposition}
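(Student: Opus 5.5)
Part (i) follows from the spectral calculus on the Hilbert--Schmidt space. The map $f\mapsto f(\Gamma_t)$ is linear, and so is the pairing with $\sigma^{1/2}$; hence $f\mapsto D_f^t$ is linear. For the affine generator $\ell(x)=a+bx$ we have $\ell(\Gamma_t)=a\,\id+b\,\Gamma_t$. Using $\langle\sigma^{1/2},\sigma^{1/2}\rangle_{\rm HS}=\operatorname{Tr}\sigma=1$ together with the normalization \eqref{rw:eq:gamma-normalization}, this gives $D_\ell^t=a+b$. Applying the same computation to $g=f+a(x-1)+b$, the affine part contributes $a(1-1)+b=b$, which is \eqref{rw:eq:affine-generator-freedom}.

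For (ii), I will introduce the unitary superoperator $\mathcal U(X)=UXU^*$ on $\mathcal H_{\rm HS}$. From the explicit formula \eqref{rw:eq:gamma-explicit} and the identities $(U\rho U^*)^{s}=U\rho^{s}U^*$ and $B_t(U\rho U^*,U\sigma U^*)=UB_tU^*$, one checks directly on \eqref{rw:eq:gamma-action} that $\Gamma_t(U\rho U^*,U\sigma U^*)=\mathcal U\,\Gamma_t(\rho,\sigma)\,\mathcal U^*$. Functional calculus commutes with unitary conjugation, so $f(\Gamma_t)$ transforms in the same way. Since $(U\sigma U^*)^{1/2}=\mathcal U[\sigma^{1/2}]$ and $\mathcal U$ preserves the inner product, the invariance follows. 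The claims for $D^t$, $Q_{\alpha,t}$ and $D_\alpha^t$ are the special cases $f=x\log x$ and $f=x^\alpha$, combined with the fact that the logarithmic normalization is a scalar function of $Q_{\alpha,t}$.

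For (iii), I will identify $\mathcal B(\mathcal H_1\otimes\mathcal H_2)$ with $\mathcal B(\mathcal H_1)\otimes\mathcal B(\mathcal H_2)$ via $X_1\otimes X_2$. Under this identification $L_{a_1\otimes a_2}=L_{a_1}\otimes L_{a_2}$ and $R_{b_1\otimes b_2}=R_{b_1}\otimes R_{b_2}$. Since $(\rho_1\otimes\rho_2)^{1-t}=\rho_1^{1-t}\otimes\rho_2^{1-t}$ and $B_t$ factorizes as $B_t(\rho_1,\sigma_1)\otimes B_t(\rho_2,\sigma_2)$ (powers and inverse square roots of tensor products factor), \eqref{rw:eq:Gamma-tensor-product} follows from \eqref{rw:eq:gamma-explicit}.

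To deduce the functional identities, I will use that for commuting positive factors $x^\alpha$ is multiplicative, so $(\Gamma^{(1)}\otimes\Gamma^{(2)})^\alpha=(\Gamma^{(1)})^\alpha\otimes(\Gamma^{(2)})^\alpha$. Together with $(\sigma_1\otimes\sigma_2)^{1/2}=\sigma_1^{1/2}\otimes\sigma_2^{1/2}$ and factorization of the HS inner product over tensor products, this gives \eqref{rw:eq:Qalpha-tensor-product}. Taking $\frac1{\alpha-1}\log$ of both sides then yields \eqref{rw:eq:Renyi-tensor-product} for $\alpha\neq1$.

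For $D^t$ I will use $\Gamma\log\Gamma$ with $\log(\Gamma^{(1)}\otimes\Gamma^{(2)})=\log\Gamma^{(1)}\otimes\id+\id\otimes\log\Gamma^{(2)}$. Each cross term splits into a $D^t$ factor times $\langle\sigma_j^{1/2},\Gamma^{(j)}\sigma_j^{1/2}\rangle=1$ from \eqref{rw:eq:gamma-normalization}, which gives additivity. An alternative is to differentiate \eqref{rw:eq:Qalpha-tensor-product} at $\alpha=1$ using \eqref{rw:eq:renyi-alpha-one-limit}; this also covers the $\alpha=1$ case of \eqref{rw:eq:Renyi-tensor-product}.

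There is no real obstacle. The only point that needs care is bookkeeping the canonical HS identification, so that the left and right multiplications and the vector $\sigma^{1/2}$ all factor compatibly.
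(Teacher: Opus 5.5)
Your proposal is correct and follows the paper's proof essentially step for step. For (i) you use linearity of the functional calculus together with $\lVert\sigma^{1/2}\rVert_2^2=1$ and \eqref{rw:eq:gamma-normalization}. For (ii) you conjugate $\Gamma_t$ by the Hilbert--Schmidt unitary $X\mapsto UXU^*$. For (iii) you factorize $B_t$ and $\Gamma_t$, then use multiplicativity of powers and additivity of the logarithm over tensor products, with the normalization absorbing the cross terms. Your alternative of differentiating \eqref{rw:eq:Qalpha-tensor-product} at $\alpha=1$ is also valid, and it is a harmless addition that makes the $\alpha=1$ case of the R\'enyi identity explicit.
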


\begin{proof}
(i) Linearity follows directly from functional calculus.  Moreover,
$\lVert\sigma^{1/2}\rVert_2^2=1$ and
\eqref{rw:eq:gamma-normalization} give
$D_\ell^t=a+b$ for $\ell(x)=a+bx$.

\vspace{0.2cm}

\noindent (ii) We let $\mathcal C_U(X):=UXU^*$.  The explicit formula
\eqref{rw:eq:gamma-explicit} gives
\begin{equation}
 \Gamma_t(U\rho U^*,U\sigma U^*)
 =\mathcal C_U\Gamma_t(\rho,\sigma)\mathcal C_U^*,
\end{equation}
while $(U\sigma U^*)^{1/2}=\mathcal C_U(\sigma^{1/2})$.
Unitary invariance of the Hilbert--Schmidt inner product proves the
claim.

\vspace{0.2cm}

\noindent (iii)  We have
\begin{equation}
 B_t(\rho_1\otimes\rho_2,\sigma_1\otimes\sigma_2)
 =B_t(\rho_1,\sigma_1)\otimes B_t(\rho_2,\sigma_2),
\end{equation}
so \eqref{rw:eq:Gamma-tensor-product} follows from
\eqref{rw:eq:gamma-explicit}.  Powers of a tensor product factor,
which proves \eqref{rw:eq:Qalpha-tensor-product} and hence the
R\'enyi identity.  For the order-one case, we use
\begin{equation}
 \log(A\otimes B)=\log A\otimes I+I\otimes\log B
\end{equation}
together with
$\langle\sigma_k^{1/2},
\Gamma_t(\rho_k,\sigma_k)\sigma_k^{1/2}\rangle_{\rm HS}=1$.
This proves \eqref{rw:eq:Dt-tensor-product}.
\end{proof}

\begin{remark}[Relative entropy identities for non-invertible states]
With the supported extended-value convention above, unitary
invariance and \eqref{rw:eq:Dt-tensor-product} remain valid for
arbitrary states, with $+\infty+c=+\infty$.  This follows by
regularization and factorization of tensor-product supports.  No
corresponding blanket extension of the assertions for $D_f^t$ is made
without specifying the behavior of $f$ at zero and infinity.
\end{remark}

\begin{remark}{Tensor additivity does not extend to universal
superadditivity.}
For a fixed $t$, the corresponding product-reference inequality
would be
\begin{equation}
 D^t(\rho_{AB}\Vert\sigma_A\otimes\sigma_B)
 \geq D^t(\rho_A\Vert\sigma_A)
       +D^t(\rho_B\Vert\sigma_B)
 \label{rw:eq:putative-Dt-superadditivity}
\end{equation}
for every bipartite state $\rho_{AB}$, with marginals $\rho_A$ and
$\rho_B$, and all invertible $\sigma_A,\sigma_B$.  This cannot hold
universally for any
$t\in(0,1]$.  Indeed, with the standard continuous extension in the
first argument, $D^t$ is continuous, satisfies data processing by
\Cref{cor:consolidated-basic-examples}, and is tensor additive by
\eqref{rw:eq:Dt-tensor-product}.  These three properties together
with \eqref{rw:eq:putative-Dt-superadditivity} characterize a positive
multiple of the Umegaki relative entropy
\cite[Theorem~1]{wilming2017axiomatic}.  The classical collapse
\eqref{rw:eq:classical-collapse} fixes that multiple to one.  This
would give $D^t=D$, contradicting the strict $t$-monotonicity (which will be proved in
\Cref{rw:thm:Dt-strict-monotonicity}) on every noncommuting
invertible pair.  Hence, for each $t>0$, some invertible
$\rho_{AB},\sigma_A,\sigma_B$ strictly violate
\eqref{rw:eq:putative-Dt-superadditivity}; this
includes the BS endpoint
\cite[Section~3, Eq.~(13) and Fig.~1]
{BluhmCapelPerezHernandez-WeakQFBSentropy-2021}.  At $t=0$, by contrast, the difference
between the two sides is
$D(\rho_{AB}\Vert\rho_A\otimes\rho_B)=I(A:B)_\rho\geq0$.
\end{remark}

\begin{proposition}[Flags]
\label{rw:prop:elementary-structural-rules-2}

 We take $p=(p_i)_i$ and $q=(q_i)_i$ to be strictly positive probability
vectors and $\rho_i,\sigma_i$ to be invertible states on finite-dimensional
spaces $\mathcal H_i$.  We put
\begin{equation}
 \rho^\oplus:=\bigoplus_i p_i\rho_i,
 \qquad
 \sigma^\oplus:=\bigoplus_i q_i\sigma_i.
\end{equation}
Then the exact flagged-sum rule is
\begin{equation}
 D_f^t(\rho^\oplus\Vert\sigma^\oplus)
 =\sum_iq_i
 D_{f((p_i/q_i)\,\cdot)}^t(\rho_i\Vert\sigma_i).
 \label{rw:eq:general-flag-rule-Dft}
\end{equation}
In particular,
\begin{align}
 D^t(\rho^\oplus\Vert\sigma^\oplus)
 &=D_{\rm KL}(p\Vert q)
   +\sum_i p_iD^t(\rho_i\Vert\sigma_i),
 \label{rw:eq:general-flag-rule-Dt}
\end{align}
and
\begin{align}
 Q_{\alpha,t}(\rho^\oplus\Vert\sigma^\oplus)
 &=\sum_i p_i^\alpha q_i^{1-\alpha}
   Q_{\alpha,t}(\rho_i\Vert\sigma_i).
 \label{rw:eq:general-flag-rule-Qalpha}
\end{align}
Here $D_{\rm KL}(p\Vert q):=\sum_i p_i\log(p_i/q_i)$ is the Kullback-Leibler divergence.
Thus, for $\alpha\ne1$,
\begin{equation}
 D_\alpha^t(\rho^\oplus\Vert\sigma^\oplus)
 =\frac1{\alpha-1}\log\sum_i
 p_i^\alpha q_i^{1-\alpha}
 \exp\!\left((\alpha-1)
 D_\alpha^t(\rho_i\Vert\sigma_i)\right).
 \label{rw:eq:general-flag-rule-Renyi}
\end{equation}
For a common flag, $p=q$,
\eqref{rw:eq:general-flag-rule-Dft} reduces to
$D_f^t(\bigoplus_i p_i\rho_i\Vert\bigoplus_i p_i\sigma_i)
=\sum_i p_iD_f^t(\rho_i\Vert\sigma_i)$.

\end{proposition}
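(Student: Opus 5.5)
The plan is to use the explicit product form $\Gamma_t=L_{\rho^{1-t}}R_{B_t}$ from \Cref{rw:prop:gamma-formula}. The first step is to check that everything is block diagonal. Since $\rho^\oplus$ and $\sigma^\oplus$ are block diagonal with respect to $\mathcal H=\bigoplus_i\mathcal H_i$, so are
\begin{equation*}
 (\rho^\oplus)^{1-t}=\bigoplus_i p_i^{1-t}\rho_i^{1-t}
\end{equation*}
and
\begin{equation*}
 B_t(\rho^\oplus,\sigma^\oplus)=\bigoplus_i q_i^{-1}p_i^{t}\,B_t(\rho_i,\sigma_i).
\end{equation*}
Next, decompose the Hilbert--Schmidt space into the blocks $\mathcal B(\mathcal H_j,\mathcal H_i)$. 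On the diagonal block $\mathcal B(\mathcal H_i)$, the superoperator $\Gamma_t(\rho^\oplus,\sigma^\oplus)$ leaves the block invariant and acts as
\begin{equation*}
 p_i^{1-t}p_i^tq_i^{-1}\,L_{\rho_i^{1-t}}R_{B_t(\rho_i,\sigma_i)}=\frac{p_i}{q_i}\,\Gamma_t(\rho_i,\sigma_i).
\end{equation*}
Every block is invariant, because left and right multiplication by block-diagonal operators preserve the off-diagonal blocks as well. Hence the functional calculus also acts blockwise, and on the $i$-th diagonal block
\begin{equation*}
 f(\Gamma_t)=f\bigl((p_i/q_i)\Gamma_t(\rho_i,\sigma_i)\bigr).
\end{equation*}

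The second step is to note that $(\sigma^\oplus)^{1/2}=\bigoplus_i q_i^{1/2}\sigma_i^{1/2}$ lies entirely in the diagonal blocks. The inner product therefore splits as
\begin{equation*}
 \sum_i q_i\,\big\langle\sigma_i^{1/2},f\bigl((p_i/q_i)\Gamma_t(\rho_i,\sigma_i)\bigr)\sigma_i^{1/2}\big\rangle_{\rm HS},
\end{equation*}
which is \eqref{rw:eq:general-flag-rule-Dft} by \Cref{rw:def:interpolated-f-divergence}, applied to the generator $x\mapsto f((p_i/q_i)x)$. The common-flag case $p=q$ is immediate.

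For the specializations, take $f(x)=x\log x$. Then
\begin{equation*}
 f(cx)=c\,x\log x+(c\log c)\,x,\qquad c=p_i/q_i.
\end{equation*}
Linearity together with \eqref{rw:eq:affine-generator-normalization} gives $D^t_{f(c\cdot)}=cD^t+c\log c$. Multiplying by $q_i$ and summing yields \eqref{rw:eq:general-flag-rule-Dt}. For $f(x)=x^\alpha$ we have $f(cx)=c^\alpha x^\alpha$, so the $i$-th term is $q_i(p_i/q_i)^\alpha Q_{\alpha,t}(\rho_i\Vert\sigma_i)$, which is \eqref{rw:eq:general-flag-rule-Qalpha}. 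Substituting $Q_{\alpha,t}(\rho_i\Vert\sigma_i)=\exp((\alpha-1)D^t_\alpha(\rho_i\Vert\sigma_i))$ and applying $(\alpha-1)^{-1}\log$ gives \eqref{rw:eq:general-flag-rule-Renyi}.

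The only point requiring care is the blockwise reduction of the functional calculus. The fact that $\Gamma_t$ is not merely block-diagonal on the diagonal blocks but leaves every block $\mathcal B(\mathcal H_j,\mathcal H_i)$ invariant is exactly what guarantees that $f(\Gamma_t)$ restricted to the diagonal blocks equals $f$ of the restriction. The rest is bookkeeping of the scalars $p_i^{1-t}$ and $p_i^tq_i^{-1}$, whose product is $p_i/q_i$, independent of $t$.
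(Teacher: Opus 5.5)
Your proposal is correct and follows essentially the same route as the paper. The paper likewise restricts $\Gamma_t(\rho^\oplus,\sigma^\oplus)$ to the invariant diagonal sectors $\mathcal B(\mathcal H_i)$, where it acts as $(p_i/q_i)\Gamma_t(\rho_i,\sigma_i)$, and uses that the distinguished vector has components $q_i^{1/2}\sigma_i^{1/2}$; it then obtains the $x\log x$ and $x^\alpha$ specializations in the same way (self-adjointness of $\Gamma_t$ means invariance of the diagonal sectors alone already suffices, so checking every off-diagonal block is harmless but not needed).
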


\begin{proof}
The diagonal Hilbert--Schmidt sector
$\mathcal B(\mathcal H_i)$ is invariant under the modular operator
of $(\rho^\oplus,\sigma^\oplus)$, and the restriction to that sector
is
\begin{equation}
 \Gamma_t(\rho^\oplus,\sigma^\oplus)
 \big|_{\mathcal B(\mathcal H_i)}
 =\frac{p_i}{q_i}\,\Gamma_t(\rho_i,\sigma_i).
 \label{rw:eq:Gamma-flag-restriction}
\end{equation}
The distinguished vector has $i$-th component
$q_i^{1/2}\sigma_i^{1/2}$.  Functional calculus on the mutually
orthogonal diagonal sectors proves
\eqref{rw:eq:general-flag-rule-Dft}.  Taking
$f(x)=x\log x$ and $f(x)=x^\alpha$ gives
\eqref{rw:eq:general-flag-rule-Dt} and
\eqref{rw:eq:general-flag-rule-Qalpha}; logarithmic normalization
then gives \eqref{rw:eq:general-flag-rule-Renyi}.

\end{proof}

\begin{remark}[Zero weights and support conventions]
The identity \eqref{rw:eq:general-flag-rule-Dt} extends to
nonnegative probability vectors and arbitrary component states by
omitting terms with $p_i=0$ and using
$0\log(0/q)=0$, $p\log(p/0)=+\infty$, and the supported extended value
of $D^t$.  Extending \eqref{rw:eq:general-flag-rule-Dft} itself
requires the boundary values and recession slope of $f$, and is not
asserted here.
\end{remark}

\subsection{Positivity and monotonicity in the R\'enyi order}
\label{rw:subsec:spectral-properties}

The preceding identities are algebraic.  We now use the spectral
measure of the geodesic modular operator to obtain normalization,
positivity, and monotonicity in the R\'enyi parameter. The monotonicity with respect to $t$ will be analyzed in the following section. 

\begin{proposition}[Normalization, positivity, and R\'enyi order]
\label{rw:prop:basic-spectral-properties}
For invertible states $\rho,
\sigma$ and fixed $t\in[0,1]$:
\begin{enumerate}[label=\textnormal{(\roman*)}]
\item One has
\begin{equation}
 D_f^t(\rho\Vert\rho)=f(1).
 \label{rw:eq:Dft-diagonal-normalization}
\end{equation}
If $f$ is scalar convex on an interval containing
$\{1\}\cup\operatorname{spec}(\Gamma_t)$, then
\begin{equation}
 D_f^t(\rho\Vert\sigma)\geq f(1).
 \label{rw:eq:Dft-scalar-Jensen}
\end{equation}
If $f$ is strictly convex, equality holds in
\eqref{rw:eq:Dft-scalar-Jensen} if and only if $\rho=\sigma$.

\item With the continuous definitions
\begin{equation}
 D_0^t(\rho\Vert\sigma):=0,
 \qquad
 D_1^t(\rho\Vert\sigma):=D^t(\rho\Vert\sigma),
 \label{rw:eq:Renyi-zero-one-extensions}
\end{equation}
the map
\begin{equation}
 \alpha\longmapsto D_\alpha^t(\rho\Vert\sigma)
 \label{rw:eq:Renyi-order-map}
\end{equation}
is non-decreasing on $[0,\infty)$.  It is strictly increasing unless
$\rho=\sigma$.  In particular, for every $\alpha>0$,
\begin{equation}
 D^t(\rho\Vert\sigma)\geq0,
 \qquad
 D_\alpha^t(\rho\Vert\sigma)\geq0,
 \label{rw:eq:basic-nonnegativity}
\end{equation}
and equality in either inequality holds exactly when $\rho=\sigma$.

\item For a fixed invertible pair, $t\mapsto D^t$ is continuous on
$[0,1]$, and the extension in
\eqref{rw:eq:Renyi-zero-one-extensions} makes
$(t,\alpha)\mapsto D_\alpha^t$ continuous on
$[0,1]\times[0,\infty)$.  The first map is real analytic in $t$;
the second is real analytic locally at every point with $\alpha>0$.
\end{enumerate}
\end{proposition}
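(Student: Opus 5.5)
The plan is to work entirely with the scalar spectral measure $\nu_t$ of $\Gamma_t=\Gamma_t(\rho,\sigma)$ in the unit vector $\sigma^{1/2}$. Since $\Gamma_t$ is a positive definite operator on $\mathcal H_{\rm HS}$ and $\|\sigma^{1/2}\|_2^2=\operatorname{Tr}\sigma=1$, the functional $h\mapsto\langle\sigma^{1/2},h(\Gamma_t)\sigma^{1/2}\rangle_{\rm HS}$ is integration against a probability measure $\nu_t$ supported on $\operatorname{spec}(\Gamma_t)\subset(0,\infty)$. Then $D_f^t=\int f\,\mathrm d\nu_t$, and \eqref{rw:eq:gamma-normalization} gives $\int\lambda\,\mathrm d\nu_t=1$.

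\textbf{Part (i).} For $\rho=\sigma$ we have $\Gamma_t[\rho^{1/2}]=\rho^{1-t}\rho^{1/2}\rho^{-1/2}\rho^t\rho^{-1/2}=\rho^{1/2}$. Hence $\nu_t=\delta_1$ and $D_f^t(\rho\Vert\rho)=f(1)$. Scalar Jensen with mean $1$ gives $\int f\,\mathrm d\nu_t\geq f(1)$. If $f$ is strictly convex, equality forces $\nu_t=\delta_1$, i.e. $\Gamma_t\sigma^{1/2}=\sigma^{1/2}$. By \eqref{rw:eq:gamma-normalization} this reads $\rho\sigma^{-1/2}=\sigma^{1/2}$, so $\rho=\sigma$. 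The converse was shown above.

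\textbf{Part (ii).} Write $\psi(\alpha):=\log Q_{\alpha,t}=\log\int\lambda^\alpha\,\mathrm d\nu_t$. This is convex in $\alpha$ by H\"older, with $\psi(1)=0$ and $\psi(0)=\log\nu_t((0,\infty))=0$. Then $D_\alpha^t=\psi(\alpha)/(\alpha-1)$ is the slope of the chord of $\psi$ from $1$ to $\alpha$. By convexity this slope is non-decreasing in $\alpha$, and its value at $\alpha=1$ is $\psi'(1)=\int\lambda\log\lambda\,\mathrm d\nu_t=D^t$. At $\alpha=0$ the chord slope is $\psi(0)/(-1)=0$, which matches the convention $D_0^t=0$.

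For strictness, note that $\psi$ is strictly convex unless $\lambda$ is $\nu_t$-a.e.\ constant; with mean $1$, that constant must be $1$. In that case $\nu_t=\delta_1$, which by part (i) means $\rho=\sigma$. If $\nu_t\neq\delta_1$, the chord slopes are strictly increasing, and nonnegativity (with strictness) follows by comparing with $D_0^t=0$. The main care needed here is the bookkeeping at $\alpha=0$ and $\alpha=1$ (the removable singularities), and checking that strict convexity of $\psi$ yields strict chord monotonicity, including through $\alpha=1$. This is the step where I expect to spend effort.

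\textbf{Part (iii).} By \eqref{rw:eq:gamma-explicit}, $\Gamma_t=L_{\rho^{1-t}}R_{B_t}$ with $\rho^{1-t}=e^{(1-t)\log\rho}$ and $B_t=\sigma^{-1/2}e^{t\log\rho}\sigma^{-1/2}$. Both are real-analytic, positive-definite-valued families in $t$, and they commute as superoperators. So $Q_{\alpha,t}=\operatorname{Tr}[\sigma^{1/2}\rho^{\alpha(1-t)}\sigma^{1/2}B_t^\alpha]$ is real analytic in $(t,\alpha)$, because $X\mapsto X^\alpha=e^{\alpha\log X}$ is analytic on positive definite matrices. Likewise $D^t$ in \eqref{rw:eq:Dt-explicit} is analytic in $t$. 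Since $Q_{\alpha,t}>0$, the quotient $\psi/(\alpha-1)$ is analytic for $\alpha>0$, $\alpha\neq1$. Near $\alpha=1$, $\psi(\alpha)=(\alpha-1)\int_0^1\partial_\alpha\psi(1+s(\alpha-1))\,\mathrm ds$ exhibits an analytic quotient equal to $D^t$ at $\alpha=1$.

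Continuity at $\alpha=0$ follows from uniform convergence $\int\lambda^\alpha\,\mathrm d\nu_t\to1$: the spectrum of $\Gamma_t$ lies in a compact subset of $(0,\infty)$ uniformly for $t\in[0,1]$, and $\nu_t$ varies continuously in $t$. Hence $\psi(\alpha)/(\alpha-1)\to0$ jointly as $\alpha\to0$.
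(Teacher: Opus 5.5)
Your proposal is correct and follows essentially the paper's own proof. The paper also uses the scalar spectral measure $\nu_t$ of $\Gamma_t$ in the unit vector $\sigma^{1/2}$ with first moment one, Jensen for (i), H\"older log-convexity of $\alpha\mapsto Q_{\alpha,t}$ with the secant-slope argument and the H\"older equality case for strictness in (ii), and analytic functional calculus with removable division by $\alpha-1$ for (iii). The only cosmetic difference is at $\alpha=0$: your uniform spectral-bound argument is redundant there, because the joint analyticity of $Q_{\alpha,t}$ that you already established, together with $Q_{0,t}=1$ (the paper's argument), gives continuity directly.
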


\begin{proof}
(i) We let $E_t$ be the spectral measure of the positive definite
superoperator $\Gamma_t$ and define the scalar probability measure
\begin{equation}
 \nu_t(S):=
 \big\langle\sigma^{1/2},E_t(S)\sigma^{1/2}
 \big\rangle_{\rm HS}.
 \label{rw:eq:cyclic-spectral-measure}
\end{equation}
The normalization in \eqref{rw:eq:gamma-normalization} gives
\begin{equation}
 \int 1\,\mathrm d\nu_t=1,
 \qquad
 \int\lambda\,\mathrm d\nu_t(\lambda)=1,
 \label{rw:eq:cyclic-spectral-moments}
\end{equation}
and spectral calculus gives
\begin{equation}
 D_f^t(\rho\Vert\sigma)
 =\int f(\lambda)\,\mathrm d\nu_t(\lambda),
 \qquad
 Q_{\alpha,t}(\rho\Vert\sigma)
 =\int\lambda^\alpha\,\mathrm d\nu_t(\lambda).
 \label{rw:eq:cyclic-spectral-representation}
\end{equation}
Jensen's inequality now proves
\eqref{rw:eq:Dft-scalar-Jensen}.  If $f$ is strictly convex, equality
forces $\nu_t$ to be concentrated at its mean $1$.  Equivalently,
\begin{equation}
 (\Gamma_t-I)\sigma^{1/2}=0.
\end{equation}
Using the first identity in \eqref{rw:eq:gamma-normalization}, this is
$\rho\sigma^{-1/2}=\sigma^{1/2}$, hence $\rho=\sigma$.  Conversely,
if $\rho=\sigma$, the distinguished vector is a unit eigenvector of
$\Gamma_t$, which allows us to conclude.

\vspace{0.2cm}

\noindent (ii) We fix $t\in[0,1]$.  The operators
$L_{\rho^{1-t}}$ and $R_{B_t}$ are positive definite operators on
Hilbert--Schmidt space.  They commute, so their product
$\Gamma_t=L_{\rho^{1-t}}R_{B_t}$ is positive definite as well.  By
the spectral theorem there is a projection-valued measure $E_t$ on
$(0,\infty)$ such that
\begin{equation}
 \Gamma_t=\int_{(0,\infty)}\lambda\,\mathrm dE_t(\lambda).
 \label{rw:eq:Gamma-spectral-resolution}
\end{equation}
Functional calculus therefore gives, for every $a\geq0$,
\begin{equation}
 \Gamma_t^a
 =\int_{(0,\infty)}\lambda^a\,\mathrm dE_t(\lambda).
 \label{rw:eq:Gamma-power-spectral-resolution}
\end{equation}
We use the scalar measure $\nu_t$ defined in
\eqref{rw:eq:cyclic-spectral-measure}.  Since
$\lVert\sigma^{1/2}\rVert_2^2=1$, this is a probability measure, and
\eqref{rw:eq:gamma-normalization} gives
\begin{equation}
 \int_{(0,\infty)}\lambda\,\mathrm d\nu_t(\lambda)=1.
 \label{rw:eq:cyclic-spectral-first-moment}
\end{equation}
Writing $Q_t(a):=Q_{a,t}(\rho\Vert\sigma)$,
\eqref{rw:eq:power-functional} and
\eqref{rw:eq:Gamma-power-spectral-resolution} yield
\begin{equation}
Q_t(a)=\int_{(0,\infty)}\lambda^a\,
 \mathrm d\nu_t(\lambda).
 \label{rw:eq:Renyi-spectral-moment}
\end{equation}
The companion paper gives this spectral measure its statistical
meaning: $\nu_t$ is the null law $q_t$ of the canonical finite
experiment, while weighting it by $\lambda$ gives the alternative law
$p_t$, as shown in \cite[Theorem~3.1]{capel2026informationgeometry}.  In the
present section we use only the scalar moment identity
\eqref{rw:eq:Renyi-spectral-moment}.

At $t=0$ and $0\leq a\leq1$, $Q_t(a)$ is the Petz overlap appearing
in the quantum Chernoff bound, as described, for example, in
\cite{audenaert2007discriminating}.  The same log-convexity argument
applies to every $t$.  Indeed, we let $a,b\geq0$ and $0<s<1$.  H\"older's
inequality with conjugate exponents $(1-s)^{-1}$ and $s^{-1}$ gives
\begin{align}
 Q_t((1-s)a+sb)
 &=\int_{(0,\infty)}
   (\lambda^a)^{1-s}(\lambda^b)^s\,
   \mathrm d\nu_t(\lambda)\notag\\
 &\leq
 \left(\int_{(0,\infty)}\lambda^a\,
   \mathrm d\nu_t(\lambda)\right)^{1-s}
 \left(\int_{(0,\infty)}\lambda^b\,
   \mathrm d\nu_t(\lambda)\right)^s\notag\\
 &=Q_t(a)^{1-s}Q_t(b)^s.
 \label{rw:eq:Renyi-Holder-log-convexity}
\end{align}
The endpoint cases $s=0,1$ are identities.  Since $Q_t(a)>0$, taking
logarithms in \eqref{rw:eq:Renyi-Holder-log-convexity} shows that
\begin{equation}
 \psi_t(a):=\log Q_t(a)
 \label{rw:eq:Renyi-log-moment}
\end{equation}
is convex:
\begin{equation}
 \psi_t((1-s)a+sb)
 \leq(1-s)\psi_t(a)+s\psi_t(b).
 \label{rw:eq:Renyi-log-moment-convexity}
\end{equation}
Moreover,
\begin{equation}
 Q_t(0)=1,
 \qquad
 Q_t(1)=
 \big\langle\sigma^{1/2},
 \Gamma_t\sigma^{1/2}\big\rangle_{\rm HS}
 =\operatorname{Tr}\rho=1.
 \label{rw:eq:Renyi-moment-zero-one}
\end{equation}
Thus $\psi_t(0)=\psi_t(1)=0$, and, for $a\ne1$,
\begin{equation}
 D_a^t(\rho\Vert\sigma)
 =\frac{\psi_t(a)-\psi_t(1)}{a-1}.
 \label{rw:eq:Renyi-as-secant-slope}
\end{equation}
This is the secant slope of the convex function $\psi_t$ between
$a$ and $1$.  Such slopes are non-decreasing in the moving endpoint,
which proves monotonicity in the R\'enyi order.  At $a=1$, the
continuous extension is
\begin{equation}
 \begin{aligned}
 D_1^t(\rho\Vert\sigma) =\psi_t'(1) =\frac{Q_t'(1)}{Q_t(1)} =\big\langle\sigma^{1/2},
 \Gamma_t\log(\Gamma_t)\sigma^{1/2}
 \big\rangle_{\rm HS}
 =D^t(\rho\Vert\sigma).
 \end{aligned}
 \label{rw:eq:Renyi-order-one-spectral-limit}
\end{equation}
The equality condition in H\"older's inequality gives the strictness
criterion recorded in \Cref{rw:rem:Renyi-order-sharpness} below.
Finally, monotonicity and $D_0^t=0$ give nonnegativity and the
corresponding zero-equality statements.

\vspace{0.2cm}

\noindent (iii) All operators are positive definite and finite dimensional.
The formulas \eqref{rw:eq:gamma-explicit} and
\eqref{rw:eq:power-functional}, together with analytic
functional calculus, give analytic dependence of
$\psi_t(\alpha)=\log Q_{\alpha,t}$ on $(t,\alpha)$.  Since
$\psi_t(1)=0$ identically in $t$, analytic division by $\alpha-1$
gives the removable analytic extension at $\alpha=1$, whose value is
$D^t$.  The identity $Q_{0,t}=1$ gives continuity at $\alpha=0$.
\end{proof}

\begin{remark}[Non-invertible states]
The proposition holds without change for common-support pairs after
compression, but not verbatim for unequal supports.  For example, if
$\rho=P_\psi$ is pure and $\sigma>0$, then for every $t>0$ the
supported extension satisfies
\begin{equation}
 D_\alpha^t(\rho\Vert\sigma)
 =\log\langle\psi|\sigma^{-1}|\psi\rangle,
 \qquad \alpha>0.
\end{equation}
Thus strict R\'enyi-order monotonicity fails for non-invertible states, and the
limit as $\alpha\downarrow0$ need not agree with the invertible-pair
convention $D_0^t=0$.  We therefore retain invertibility in the
strictness and analyticity clauses.
\end{remark}

\begin{remark}[Sharpness]
\label{rw:rem:Renyi-order-sharpness}
For two distinct orders $a$ and $b$, equality in the nontrivial
H\"older step \eqref{rw:eq:Renyi-Holder-log-convexity} holds if and
only if $\lambda^{a-b}$ is constant $\nu_t$-almost surely.  Thus
$\psi_t$ is strictly convex unless $\nu_t$ is concentrated at one
point.  \eqref{rw:eq:cyclic-spectral-first-moment} forces
that point to be $1$, so
\begin{equation}
 E_t(\{1\})\sigma^{1/2}=\sigma^{1/2},
 \qquad
 \Gamma_t\sigma^{1/2}=\sigma^{1/2}.
\end{equation}
By \eqref{rw:eq:gamma-normalization}, this is equivalent to
$\rho=\sigma$.  Conversely, if $\rho=\sigma$, then
$\nu_t=\delta_1$, $Q_t(a)=1$, and $D_a^t(\rho\Vert\sigma)=0$ for
every $a\geq0$.  Hence the R\'enyi-order monotonicity is strict
exactly when $\rho\ne\sigma$.
\end{remark}

\subsection{Representation through sandwiched R\'enyi divergences}
\label{rw:subsec:sandwiched-representation}

The preceding result varies the R\'enyi order while keeping $t$
fixed.  There is a different, indirect connection with the
sandwiched R\'enyi family obtained by using a R\'enyi parameter to
reparametrize $t$ itself.  The following identity is a decomposition
of the order-one quantity $D^t$; it does not identify the family
$D_\alpha^t$ with an $(\alpha,z)$ branch, and therefore does not
conflict with \Cref{rw:prop:alpha-z-intersections}.

\begin{proposition}[Umegaki--sandwiched representation]
\label{rw:prop:Dt-sandwiched-reformulation}
We take $\rho,\sigma$ to be invertible states and $t\in[0,1)$.  We set
\begin{equation}
 a:=\frac1{1-t},
 \qquad
 A_a:=\rho^{\frac{1-a}{2a}}
       \sigma
       \rho^{\frac{1-a}{2a}},
 \qquad
 Z_a:=\operatorname{Tr}A_a^a,
 \qquad
 \tau_a:=\frac{A_a^a}{Z_a}.
 \label{rw:eq:Dt-sandwiched-escort-state}
\end{equation}
For $a>1$, we write
\begin{equation}
 \widetilde D_a(\sigma\Vert\rho)
 :=\frac1{a-1}\log Z_a
 =D_{a,a}(\sigma\Vert\rho)
 \label{rw:eq:reverse-sandwiched-definition}
\end{equation}
for the sandwiched R\'enyi divergence
\cite[Section~II]{mullerLennert2013quantum,wildeWinterYang2014strong},
with its continuous
order-one value when $a=1$.  The equality with the $(a,a)$
expression follows from the common nonzero spectrum of $XX^*$ and
$X^*X$.  Then
\begin{equation}
 \boxed{
 D^t(\rho\Vert\sigma)
 =\frac1aD(\rho\Vert\tau_a)
  -\frac{a-1}{a}\,
   \widetilde D_a(\sigma\Vert\rho).}
 \label{rw:eq:Dt-sandwiched-reformulation}
\end{equation}
Consequently,
\begin{equation}
 D(\rho\Vert\tau_a)
 \geq(a-1)\widetilde D_a(\sigma\Vert\rho),
 \label{rw:eq:Dt-sandwiched-comparison}
\end{equation}
with equality if and only if $\rho=\sigma$.
\end{proposition}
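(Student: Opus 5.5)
The identity \eqref{rw:eq:Dt-sandwiched-reformulation} should follow directly from the explicit formula \eqref{rw:eq:Dt-explicit} of \Cref{rw:prop:Dt-explicit}, once we notice that the operator inside its logarithm is the inverse of $A_a$. With $a=(1-t)^{-1}$ we have
\begin{equation*}
 1-t=\frac1a,\qquad t=\frac{a-1}{a},\qquad \frac{1-a}{2a}=-\frac t2 .
\end{equation*}
Hence
\begin{equation*}
 A_a=\rho^{-t/2}\sigma\rho^{-t/2},\qquad A_a^{-1}=\rho^{t/2}\sigma^{-1}\rho^{t/2}.
\end{equation*}
Since $A_a>0$, functional calculus gives
\begin{equation*}
 \log\!\left(\rho^{t/2}\sigma^{-1}\rho^{t/2}\right)=-\log A_a=-\frac1a\log A_a^a=-\frac1a\bigl(\log\tau_a+(\log Z_a)I\bigr).
\end{equation*}
Substituting this into \eqref{rw:eq:Dt-explicit} and using $\operatorname{Tr}\rho=1$, I expect
\begin{equation*}
 D^t(\rho\Vert\sigma)=\frac1a\operatorname{Tr}\bigl[\rho(\log\rho-\log\tau_a)\bigr]-\frac1a\log Z_a .
\end{equation*}
The first term is $\frac1aD(\rho\Vert\tau_a)$, because $\tau_a$ is an invertible state. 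The second term is $\frac{a-1}{a}\widetilde D_a(\sigma\Vert\rho)$ by the definition \eqref{rw:eq:reverse-sandwiched-definition}.

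The identification $\frac1{a-1}\log Z_a=D_{a,a}(\sigma\Vert\rho)$ claimed in the statement still needs a one-line check. Put $X:=\sigma^{1/2}\rho^{-t/2}$. Then $X^*X=A_a$, while $XX^*=\sigma^{1/2}\rho^{(1-a)/a}\sigma^{1/2}$ is the operator appearing in $Q_{a,a}(\sigma\Vert\rho)$ from \eqref{rw:eq:alpha-z-functional}. These two operators have the same nonzero spectrum, so $\operatorname{Tr}A_a^a=Q_{a,a}(\sigma\Vert\rho)$. At $t=0$, that is $a=1$, we get $A_1=\sigma$, $\tau_1=\sigma$ and $Z_1=1$. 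The formula then reads $D^0=D(\rho\Vert\sigma)$, consistent with \eqref{rw:eq:umegaki-endpoint} once the order-one term (which appears multiplied by $a-1=0$) is interpreted by its continuous value.

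For the consequence \eqref{rw:eq:Dt-sandwiched-comparison}, multiply \eqref{rw:eq:Dt-sandwiched-reformulation} by $a>0$ to obtain
\begin{equation*}
 D(\rho\Vert\tau_a)-(a-1)\widetilde D_a(\sigma\Vert\rho)=a\,D^t(\rho\Vert\sigma).
\end{equation*}
By \Cref{rw:prop:basic-spectral-properties}(ii), $D^t(\rho\Vert\sigma)\geq0$, with equality exactly when $\rho=\sigma$. This gives the inequality together with its equality case.

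There is no real obstacle here. The only care needed is bookkeeping: matching the exponent $\frac{1-a}{2a}$ with $-t/2$, using $\log A_a^a=a\log A_a$ (valid since $A_a>0$), and invoking the push-through spectral identity to pass between the two sandwiched forms.
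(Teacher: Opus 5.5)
Your proposal is correct and follows essentially the same route as the paper: identify $A_a=\rho^{-t/2}\sigma\rho^{-t/2}$ as the inverse of $\rho^{t/2}\sigma^{-1}\rho^{t/2}$, substitute into the explicit formula of \Cref{rw:prop:Dt-explicit} to get $\frac1a D(\rho\Vert\tau_a)-\frac1a\log Z_a$, and deduce the comparison and its equality case from nonnegativity and separation of $D^t$ in \Cref{rw:prop:basic-spectral-properties}. Your explicit push-through check of $\operatorname{Tr}A_a^a=Q_{a,a}(\sigma\Vert\rho)$ and your remark on the $a=1$ endpoint only spell out details that the paper leaves implicit.
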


\begin{proof}
Since $t=1-a^{-1}$,
\begin{equation}
 K_t^{-1}
 =\rho^{-t/2}\sigma\rho^{-t/2}
 =A_a.
 \label{rw:eq:Kt-inverse-sandwiched}
\end{equation}
The explicit formula \eqref{rw:eq:Dt-explicit} therefore gives
\begin{align}
 D^t(\rho\Vert\sigma)
 &=\frac1a\operatorname{Tr}(\rho\log\rho)
   -\operatorname{Tr}(\rho\log A_a)\notag\\
 &=\frac1a\operatorname{Tr}\!\left[
   \rho\bigl(\log\rho-\log A_a^a\bigr)\right]\notag\\
 &=\frac1aD(\rho\Vert\tau_a)-\frac1a\log Z_a.
 \label{rw:eq:Dt-sandwiched-proof-calculation}
\end{align}
\eqref{rw:eq:reverse-sandwiched-definition} proves
\eqref{rw:eq:Dt-sandwiched-reformulation}.  Finally,
\Cref{rw:prop:basic-spectral-properties} says that
$D^t\geq0$, with equality exactly for $\rho=\sigma$; multiplying
\eqref{rw:eq:Dt-sandwiched-reformulation} by $a$ gives
\eqref{rw:eq:Dt-sandwiched-comparison} and its equality condition.
\end{proof}

\begin{remark}[Support of the sandwiched representation]
The proposition remains valid for states with common support after
compression, with $A_a$, $Z_a$, and $\tau_a$ computed in the
compressed algebra.  For unequal supports,
\eqref{rw:eq:Dt-sandwiched-reformulation} is not asserted as an
extended-real identity, since its two right-hand terms can separately
be infinite.
\end{remark}

\begin{remark}[The BS limit of the representation]
\label{rw:rem:Dt-sandwiched-BS-limit}
The two terms in \eqref{rw:eq:Dt-sandwiched-reformulation} have
finite limits as $a\to\infty$, corresponding to $t\uparrow1$.  We put
\begin{equation}
 A_\infty:=\rho^{-1/2}\sigma\rho^{-1/2},
 \qquad
 r_a:=Z_a^{1/a}.
\end{equation}
Then $A_a\to A_\infty$ in operator norm, and
\begin{equation}
 \lVert A_a\rVert_\infty
 \leq r_a
 \leq(\dim\mathcal H)^{1/a}\lVert A_a\rVert_\infty.
 \label{rw:eq:sandwiched-radius-squeeze}
\end{equation}
Hence $r_a\to\lVert A_\infty\rVert_\infty$.  Moreover,
\begin{align}
 \frac1aD(\rho\Vert\tau_a)
 &=\frac1a\operatorname{Tr}(\rho\log\rho)
   -\operatorname{Tr}(\rho\log A_a)+\log r_a
 \notag\\
 &\longrightarrow
 -\operatorname{Tr}(\rho\log A_\infty)
 +\log\lVert A_\infty\rVert_\infty,
 \label{rw:eq:sandwiched-first-term-limit}\\
 -\frac{a-1}{a}\widetilde D_a(\sigma\Vert\rho)
 &=-\frac1a\log Z_a=-\log r_a
 \longrightarrow-\log\lVert A_\infty\rVert_\infty.
 \label{rw:eq:sandwiched-second-term-limit}
\end{align}
The norm terms cancel.  Since
$A_\infty^{-1}=\rho^{1/2}\sigma^{-1}\rho^{1/2}$, this yields
\begin{equation}
 \lim_{t\uparrow1}D^t(\rho\Vert\sigma)
 =-\operatorname{Tr}(\rho\log A_\infty)
 =\operatorname{Tr}\!\left[
   \rho\log(\rho^{1/2}\sigma^{-1}\rho^{1/2})\right]
 =\widehat D(\rho\Vert\sigma).
 \label{rw:eq:sandwiched-representation-BS-limit}
\end{equation}
Thus the decomposition extends continuously to the BS endpoint even
though its auxiliary order $a=(1-t)^{-1}$ diverges.
\end{remark}

\begin{remark}[What these properties do not imply]
\label{rw:rem:basic-properties-scope}
The positivity and R\'enyi-order monotonicity above hold for every
$\alpha>0$ because they concern a fixed scalar spectral law.  They do
not imply data processing for $\alpha>2$.  Channel monotonicity and
joint convexity are treated in \Cref{sec:consolidated-dpi},
whereas the dependence on $t$ is the subject of
\Cref{rw:sec:t-monotonicity}.
\end{remark}

\section{Dependence on the interpolation parameter}
\label{rw:sec:t-monotonicity}

We now determine how the principal divergences vary with the
interpolation parameter.  The $t$-relative entropies have a strict
direction of monotonicity.  For the $(t,\alpha)$-R\'enyi divergences,
the direction depends on the R\'enyi order and changes at the
quadratic point.  We conclude by showing that operator convexity alone
does not determine the behavior of a general generator.

\subsection{Strict monotonicity of the
\texorpdfstring{$t$}{t}-relative entropies}
\label{rw:subsec:Dt-monotonicity}

\begin{theorem}[Strict $t$-monotonicity]
\label{rw:thm:Dt-strict-monotonicity}
For invertible states $\rho,\sigma$, the function
$t\mapsto D^t(\rho\Vert\sigma)$ is non-decreasing on $[0,1]$.
For any $0\leq t_1<t_2\leq1$,
\begin{equation}
 D^{t_1}(\rho\Vert\sigma)=D^{t_2}(\rho\Vert\sigma)
 \quad\Longleftrightarrow\quad [\rho,\sigma]=0.
 \label{rw:eq:Dt-strict-equality}
\end{equation}
Thus the $t$-relative entropies are independent of $t$ for commuting
pairs and strictly increasing in $t$ for noncommuting pairs.
\end{theorem}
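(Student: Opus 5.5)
The plan is to reduce monotonicity to a sign statement for the $t$-derivative, computed from the explicit formula \eqref{rw:eq:Dt-explicit}. Write $L:=\log\rho$, $S:=\sigma^{-1}$ and $K_t:=\rho^{t/2}S\rho^{t/2}$, so that
\begin{equation*}
 D^t(\rho\Vert\sigma)=\operatorname{Tr}(\rho L)+h(t),
 \qquad
 h(t):=\operatorname{Tr}(\rho\log K_t)-t\operatorname{Tr}(\rho L).
\end{equation*}
By \Cref{rw:prop:basic-spectral-properties}(iii), $t\mapsto D^t$ is real analytic on $[0,1]$. It therefore suffices to show $h'(t)\geq0$ on $[0,1]$, and to determine when $h'$ can vanish.

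Since $\tfrac{d}{dt}K_t=\tfrac12\{L,K_t\}$, the chain rule gives $h'(t)=\operatorname{Tr}\bigl(\rho\,\mathrm D\log(K_t)[\tfrac12\{L,K_t\}]\bigr)-\operatorname{Tr}(\rho L)$. Let $(e_i,k_i)$ be an eigenbasis and the eigenvalues of $K_t$, and use the Daleckii--Krein formula for $\mathrm D\log$. In this basis the derivative is a Schur multiplier applied to $L$, with symbol
\begin{equation*}
 m_{ij}:=\frac{k_i+k_j}{2}\cdot\frac{\log k_i-\log k_j}{k_i-k_j}
 =\frac{\text{arithmetic mean}(k_i,k_j)}{\text{logarithmic mean}(k_i,k_j)}\geq1 .
\end{equation*}
Equality holds exactly when $k_i=k_j$. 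Writing $w_{ij}:=m_{ij}-1\geq0$, with $w_{ii}=0$, this gives
\begin{equation*}
 h'(t)=\sum_{i,j}w_{ij}\,\rho_{ji}L_{ij}.
\end{equation*}
Because $w_{ii}=0$, this expression is unchanged when $L$ is shifted by a scalar. It also only involves the pair $(\rho,L)=(e^L,L)$, two increasing functions of the same operator.

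The next step is to diagonalize $L=\sum_a l_a P_a$ and expand. This gives $h'(t)=\sum_{a,b}e^{l_a}l_b\,c_{ab}$, where $c_{ab}:=\sum_{i,j}w_{ij}\langle e_i,P_ae_j\rangle\langle e_j,P_be_i\rangle$. The coefficients are real and symmetric, and $\sum_b c_{ab}=0$ because $w_{ii}=0$. Symmetrizing then yields
\begin{equation*}
 h'(t)=-\tfrac12\sum_{a,b}(e^{l_a}-e^{l_b})(l_a-l_b)\,c_{ab}.
\end{equation*}
Each product $(e^{l_a}-e^{l_b})(l_a-l_b)$ is nonnegative. Positivity of $h'$ would therefore follow if $c_{ab}\leq0$ for $a\neq b$, i.e.\ if the kernel $w$ acts as a conditionally negative definite Schur multiplier on off-diagonal blocks. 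This step is the main obstacle. As I have set it up, the nonnegativity of $w$ entrywise is not sufficient, and what is needed is a genuine kernel-theoretic statement.

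My first attempt would be to use the integral representation $m_{ij}-1=\int_0^1\bigl(\tfrac{k_i+k_j}{2}\bigr)\bigl(\tfrac{1}{(1-s)k_i+sk_j}\bigr)\,ds\cdot\ldots$, which expresses the ratio of the arithmetic and logarithmic means via resolvents. The aim is to write $h'(t)$ as an integral of quantities of the form $\operatorname{Tr}\bigl([\rho^{1/2},X_s]^*[\rho^{1/2},X_s]\bigr)\geq0$. If that fails, I would fall back on an integrated form. Using $K_{t_2}=\rho^{s/2}K_{t_1}\rho^{s/2}$ with $s=t_2-t_1$, monotonicity is equivalent to
\begin{equation*}
 \operatorname{Tr}\bigl(\rho\log(\rho^{s/2}K\rho^{s/2})\bigr)\geq\operatorname{Tr}\bigl(\rho(s\log\rho+\log K)\bigr)
 \qquad\text{for all }K>0 .
\end{equation*}
This is a generalized ``Umegaki $\leq$ BS'' inequality, and I would try to derive it from the $s=1$ case applied to suitably rescaled pairs, or from Ando--Hiai log-majorization.

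For strictness, note that if $[\rho,\sigma]=0$ then $D^t$ is constant by \eqref{rw:eq:classical-collapse}. Conversely, suppose $D^{t_1}=D^{t_2}$ for some $t_1<t_2$. Monotonicity then forces $h'\equiv0$ on $[t_1,t_2]$, and in the symmetrized expression every term must vanish. This means $\langle e_i,Le_j\rangle=0$ whenever $k_i\neq k_j$, so $[L,K_t]=0$, hence $[\rho,K_t]=0$. From $K_t=\rho^{t/2}\sigma^{-1}\rho^{t/2}$ it follows that $\sigma^{-1}=\rho^{-t/2}K_t\rho^{-t/2}$ commutes with $\rho$, so $[\rho,\sigma]=0$.
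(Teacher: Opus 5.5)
\textbf{Gap: the inequality $c_{ab}\le 0$ is never proved.} Your setup up to the step you flag is correct: the Daleckii--Krein computation, the Schur symbol $m_{ij}$, the relation $\sum_b c_{ab}=0$, and the symmetrized formula for $h'(t)$. Your strictness argument also closes once $c_{ab}\leq0$ is known. Vanishing of $h'(t)$ at one $t$ then forces $c_{ab}=0$ for $a\neq b$, hence $\sum_{i,j}w_{ij}|L_{ij}|^2=-\tfrac12\sum_{a,b}(l_a-l_b)^2c_{ab}=0$, which is the $[L,K_t]=0$ you need. But $c_{ab}\leq0$ for $a\neq b$ is the whole content of the theorem, and you leave it open; entrywise $w\geq0$ only settles dimension two. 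Your fallback inequality, with $s=t_2-t_1$ and $K=K_{t_1}$, is exactly what the paper uses. It is the logarithmic Hiai--Petz inequality \cite[Theorem~2.1]{HIAI1993153}, a consequence of Araki log-majorization:
\[
 \operatorname{Tr}\bigl[X(\log X^p+\log Y^p)\bigr]\leq\operatorname{Tr}\bigl[X\log(X^{p/2}Y^pX^{p/2})\bigr],
 \qquad\text{with equality iff }[X,Y]=0 .
\]
The paper applies it with $X=\rho$, $p=\delta$, $Y=K_{t_1}^{1/\delta}$. Its equality clause then gives strictness via $[\rho,K_{t_1}]=\rho^{t_1/2}[\rho,\sigma^{-1}]\rho^{t_1/2}$. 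You must either cite this result, including the equality case, or prove it. Rescaling the $s=1$ (Umegaki $\leq$ BS) case does not give it directly: replacing $\rho$ by a normalized power changes the trace weight together with the congruence factor.

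\textbf{How to close your own route.} Alternatively, your derivative route can be completed into a self-contained proof that avoids log-majorization. With $k_i=e^{u_i}$ one has $m_{ij}=\phi(u_i-u_j)$, where
\[
 \phi(z)=\frac z2\coth\frac z2=1+2\sum_{n\geq1}\frac{z^2}{z^2+4\pi^2n^2}.
\]
Hence $w_{ij}=2\sum_{n\geq1}\bigl(1-g_n(u_i-u_j)\bigr)$ with $g_n(z)=4\pi^2n^2/(z^2+4\pi^2n^2)$. Each $g_n$ is a positive definite function, being the Fourier transform of a Laplace density. Thus $G_n:=(g_n(u_i-u_j))_{i,j}\geq0$ in the eigenbasis of $K_t$, and the Schur product theorem gives $G_n\circ P_a\geq0$. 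Since $\operatorname{Tr}(P_aP_b)=0$ for $a\neq b$,
\[
 c_{ab}=-2\sum_{n\geq1}\operatorname{Tr}\bigl[(G_n\circ P_a)P_b\bigr]\leq0 .
\]
This is the conditional negative definiteness of $w$ you were missing. With it, your symmetrized formula gives $h'\geq0$, and your strictness argument applies unchanged.
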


\begin{proof}
We recall from \Cref{rw:prop:Dt-explicit} that
\[
 K_t:=\rho^{t/2}\sigma^{-1}\rho^{t/2},
 \qquad
 D^t(\rho\Vert\sigma)
 =(1-t)\operatorname{Tr}(\rho\log\rho)
  +\operatorname{Tr}(\rho\log K_t).
\]
To compare the two interpolation times, we set
$\delta:=t_2-t_1>0$ and write the corresponding increment as
\[
 \begin{split}
 D^{t_2}(\rho\Vert\sigma)-D^{t_1}(\rho\Vert\sigma)
 ={}&-\delta\operatorname{Tr}(\rho\log\rho)+\operatorname{Tr}\!\left[
 \rho(\log K_{t_2}-\log K_{t_1})\right].
 \end{split}
\]
Thus it remains to control the change of the logarithmic term when
the parameter advances from $t_1$ to $t_2$.  The definition of $K_t$
gives
\[
 \begin{split}
 K_{t_2} =\rho^{\delta/2}K_{t_1}\rho^{\delta/2}.
 \end{split}
\]
Hence the time increment is encoded by a positive congruence, which
is precisely the form controlled by the logarithmic Hiai--Petz
inequality, a consequence of Araki log-majorization.  For positive
definite $X,Y$ and every $p>0$, this inequality states that
\[
 \operatorname{Tr}\!\left[X(\log X^p+\log Y^p)\right]
 \leq
 \operatorname{Tr}\!\left[
 X\log\!\left(X^{p/2}Y^pX^{p/2}\right)\right],
\]
with equality if and only if $[X,Y]=0$, as proved in
\cite[Theorem~2.1]{HIAI1993153} and
\cite[Theorem~2.3]{carlen2018some}.  Taking
$X=\rho$, $Y=K_{t_1}^{1/\delta}$, and $p=\delta$ gives
\begin{equation}
 \operatorname{Tr}\rho\log K_{t_2}
 \geq \delta\operatorname{Tr}(\rho\log\rho)
      +\operatorname{Tr}\rho\log K_{t_1}.
 \label{rw:eq:complementary-GT-step}
\end{equation}
Substitution in the preceding increment proves that
$D^{t_2}(\rho\Vert\sigma)-D^{t_1}(\rho\Vert\sigma)\geq0$.
Equality holds precisely when $[\rho,K_{t_1}]=0$.  Finally,
\[
 [\rho,K_{t_1}]
 =\rho^{t_1/2}[\rho,\sigma^{-1}]\rho^{t_1/2},
\]
so this is equivalent to $[\rho,\sigma]=0$.  The commuting case also
follows directly from \eqref{rw:eq:classical-collapse}.
\end{proof}

\begin{remark}[Non-invertible pairs]
With the supported extended-value definition, the non-decreasing
inequality extends to arbitrary states by positive-definite
regularization.  The equality characterization extends only to
common-support pairs.  Indeed, if $\rho=P_\psi$ is pure and
$\sigma>0$, then
\begin{equation}
 D^t(\rho\Vert\sigma)
 =\log\langle\psi|\sigma^{-1}|\psi\rangle,
 \qquad 0<t\leq1,
\end{equation}
although $[\rho,\sigma]$ need not vanish.
\end{remark}

\subsection{Sharp regimes for the
\texorpdfstring{$(t,\alpha)$}{(t,alpha)}-R\'enyi divergences}
\label{rw:subsec:renyi-t-monotonicity}

The monotonicity direction of $D_\alpha^t$ changes at order two.  Its
proof is most transparent at the level of the unnormalized trace
functional $Q_{\alpha,t}$, before applying the logarithmic
normalization.

\begin{theorem}[Monotonicity of the $(t,\alpha)$-R\'enyi divergences]
\label{rw:thm:power-t-monotonicity}
We take $\rho,\sigma>0$.  With the continuous order-one convention
$D_1^t:=D^t$, the map
$t\mapsto D_\alpha^t(\rho\Vert\sigma)$ is
\begin{itemize}
 \item Non-decreasing for $0<\alpha<2$;
 \item Constant for $\alpha=2$;
 \item Non-increasing for $2<\alpha\leq3$.
\end{itemize}
At $\alpha=1$, the increase is strict precisely when
$[\rho,\sigma]\neq0$.
For $\alpha>3$, $0\leq t_1<t_2\leq1$, the comparison
\begin{equation}
 D_\alpha^{t_1}(\rho\Vert\sigma)
 \geq D_\alpha^{t_2}(\rho\Vert\sigma)
 \label{rw:eq:power-comparison-alpha-large}
\end{equation}
holds for every invertible pair in every finite dimension if and only if
\begin{equation}
 \frac{1+(\alpha-1)(1-t_2)}{t_2-t_1}
 \leq\frac{\alpha-1}{\alpha-3}.
 \label{rw:eq:power-sharp-hyperbola}
\end{equation}
If \eqref{rw:eq:power-sharp-hyperbola} fails, an invertible qutrit pair
can violate \eqref{rw:eq:power-comparison-alpha-large}.
Consequently, for every $\alpha>3$ there are invertible qutrit states
for which $t\mapsto D_\alpha^t$ is neither non-increasing nor
non-decreasing.  These regimes are summarized schematically in
\Cref{rw:fig:power-t-monotonicity}.
\end{theorem}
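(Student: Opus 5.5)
The plan is to work throughout with the unnormalized functional in its push-through form \eqref{rw:eq:power-functional-push-through}. Write $\beta:=\alpha-1$ and $K_t:=\rho^{t/2}\sigma^{-1}\rho^{t/2}$, so that $Q_{\alpha,t}=\operatorname{Tr}[\rho^{\alpha-\beta t}K_t^{\beta}]$. Since $D_\alpha^t=\beta^{-1}\log Q_{\alpha,t}$, the three universal regimes become sign statements for $Q$. We need $Q$ non-increasing in $t$ for $0<\alpha<1$ (where $\beta<0$), non-decreasing for $1<\alpha<2$, and non-increasing for $2<\alpha\le3$. The cases $\alpha=2$ and $\alpha=1$ are already settled: at $\alpha=2$ constancy is \eqref{rw:eq:Q2-constant}, and at $\alpha=1$ monotonicity and strictness are exactly \Cref{rw:thm:Dt-strict-monotonicity}. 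For $0\le t_1<t_2\le1$ put $\delta=t_2-t_1$ and $c:=\alpha-\beta t_2$. Exactly as in the proof of \Cref{rw:thm:Dt-strict-monotonicity}, $K_{t_2}=\rho^{\delta/2}K_{t_1}\rho^{\delta/2}$. The comparison therefore reduces to
\[
 \operatorname{Tr}\bigl[\rho^{c}(\rho^{\delta/2}K\rho^{\delta/2})^{\beta}\bigr]
 \quad\text{versus}\quad
 \operatorname{Tr}\bigl[\rho^{c+\beta\delta}K^{\beta}\bigr],\qquad K=K_{t_1}>0 .
\]

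For the universal regimes I would invoke Araki's log-majorization. I would rather use the weighted form of Ando--Hiai/Araki--Lieb--Thirring, $\operatorname{Tr}[X^{c}(X^{s/2}YX^{s/2})^{r}]$ against $\operatorname{Tr}[X^{c+sr}Y^{r}]$, than its bare trace form. The weighted version follows from the bare one because $X^{c}$ commutes with $X^{s}$: one absorbs $X^{c}$ into the congruence by writing $X^{c}=(X^{c/(2r)})^{r}\cdots$ and uses the trace cyclicity trick already employed for the cubic reflection in \eqref{rw:eq:cubic-reflection}. For $|\beta|\le1$ the direction is the reverse-ALT one: the sandwich dominates for $0<\beta<1$, and the inequality reverses for $\beta<0$ via $K\mapsto K^{-1}$ as in \eqref{rw:eq:negative-branch-inversion}. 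This gives the required signs for $0<\alpha<2$. For $1<\beta\le2$ I would write $K_{t_2}^{\beta}=K_{t_2}^{1/2}K_{t_2}^{\beta-1}K_{t_2}^{1/2}$. The quadratic part is then handled by cyclicity, as in the $\alpha=2$ identity, and the factor $x^{\beta-1}$ with $0<\beta-1\le1$ is handled by the already-proved concave case. The threshold $\alpha=3$, i.e.\ $\beta-1=1$, is exactly where this reduction stops working, which explains the bound $2<\alpha\le3$.

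For $\alpha>3$ I would keep this two-step reduction but track the exponents quantitatively. After splitting off the quadratic factor, the leftover comparison is an ALT step in which the effective power of $\rho$ sitting outside, namely $c=1+\beta(1-t_2)$ up to normalization, must be large enough relative to $\delta$ and to the excess $\beta-2$. I expect this bookkeeping to produce exactly the hyperbola \eqref{rw:eq:power-sharp-hyperbola}, namely $(1+\beta(1-t_2))/\delta\le\beta/(\beta-2)$. For necessity I would imitate the degeneration argument in the proof of \Cref{rw:prop:alpha-z-intersections}. The pairs would be $\rho_x=\operatorname{diag}(1,x,x^{\gamma})$ (normalized) with a fixed positive definite $\sigma^{-1}$ having generic off-diagonal entries. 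As $x\downarrow0$, perturbative spectral asymptotics of $K_t$ produce competing powers of $x$. When the hyperbola fails, a correction term carries a smaller exponent at $t_2$ than at $t_1$, so $Q_{\alpha,t_2}>Q_{\alpha,t_1}$, contradicting \eqref{rw:eq:power-comparison-alpha-large}. A third level is needed because a qubit has only one off-diagonal channel, and the two competing exponents of the hyperbola require two. Finally, since the hyperbola fails for some pairs $t_1<t_2$ and holds for others (for instance when $t_2-t_1$ is large with $t_2$ near $1$), the same qutrit family, combined with an instance of \eqref{rw:eq:power-sharp-hyperbola} holding strictly, yields pairs where $t\mapsto D_\alpha^t$ is not monotone.

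I expect the main obstacle to be the $\alpha>3$ part: proving sufficiency with exactly the sharp constant, and choosing the qutrit exponent $\gamma$ and off-diagonal structure so that the leading asymptotics match the hyperbola. I would first determine the sharp condition from the qutrit asymptotics, and only then tune the majorization argument until it reproduces that condition.
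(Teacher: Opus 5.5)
Your reduction matches the paper's. With $A=\rho^{\delta}$, $B=K_{t_1}$ and $r=(1+\beta(1-t_2))/\delta$, one has $Q_{\alpha,t_1}=\operatorname{Tr}(A^{r+\beta}B^{\beta})$ and $Q_{\alpha,t_2}=\operatorname{Tr}[A^{r}(A^{1/2}BA^{1/2})^{\beta}]$. The cases $\alpha=1,2$ are indeed already settled.

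\textbf{The gap: the weighted comparisons.} Everything after the reduction rests on \emph{weighted} Araki--Lieb--Thirring comparisons, and your way of obtaining them from the bare inequality fails.
\begin{itemize}
\item \emph{Absorption is invalid.} Cyclicity only gives $\operatorname{Tr}[X^{c}Z^{r}]=\operatorname{Tr}[X^{c/2}Z^{r}X^{c/2}]$ with $Z=X^{s/2}YX^{s/2}$. But $X^{c/2}Z^{r}X^{c/2}\neq(X^{c/(2r)}ZX^{c/(2r)})^{r}$ unless $X$ and $Z$ commute; that $X^{c}$ commutes with $X^{s}$ does not help.
\item \emph{A sanity check confirms this.} The bare reverse inequality $\operatorname{Tr}(A^{1/2}BA^{1/2})^{\beta}\le\operatorname{Tr}(A^{\beta}B^{\beta})$ holds for every $\beta\ge1$. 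Absorption would then give $Q_{\alpha,t_2}\le Q_{\alpha,t_1}$ for every $\alpha>2$ and every $r\ge0$. That is universal non-increase of $D_\alpha^t$ for all $\alpha>2$, contradicting the qutrit counterexamples the theorem asserts for $\alpha>3$.
\item \emph{The $1<\beta\le2$ split fails.} Writing $K_{t_2}^{\beta}=K_{t_2}^{1/2}K_{t_2}^{\beta-1}K_{t_2}^{1/2}$ and cycling turns the weight into $K_{t_2}^{1/2}\rho^{c}K_{t_2}^{1/2}$. This is no longer a function of $\rho$, so the concave case does not apply. In any case the concave case gives a lower bound on the sandwiched trace, whereas here you need an upper bound.
\item \emph{The negative branch is incomplete.} Inverting $A,B$ turns the weight $A^{r}$ into the \emph{decreasing} weight $g(x)=x^{-r}$. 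The needed inequality then requires $x^{s}g(x)$ to be non-increasing, i.e.\ $r\ge s$ with $s=-\beta$. This holds because $\delta(r-s)=1-s+st_1>0$, but you do not check it.
\end{itemize}

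\textbf{What the paper uses instead.} The paper does not derive these inequalities; it imports them:
\begin{itemize}
\item \cite[Theorem~4 and Proposition~5]{liu2026araki} for $0<|\beta|<1$;
\item \cite[Theorem~2.1]{shi2026liu} for the reverse weighted comparison when $1<\beta\le2$ and $r\ge0$;
\item \cite[Theorem~3.3]{vuong2026hyperbolic} for $\beta>2$, which gives exactly the condition $r\le\beta/(\beta-2)$ in every finite dimension, with qutrit counterexamples beyond it.
\end{itemize}
Your $\alpha>3$ part is therefore not yet a proof. Both the sharp sufficiency and the witnesses are only ``expected'' from asymptotics you have not carried out, and the sufficiency is precisely the hard content of that reference.

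\textbf{A missing step for sharpness.} You also need to convert matrix counterexamples into state counterexamples. Any positive definite $A_0,B_0$ are realized, up to positive scalars, as $\rho^{\delta}$ and $K_{t_1}$ by taking $\rho\propto A_0^{1/\delta}$ and $\sigma\propto\rho^{t_1/2}B_0^{-1}\rho^{t_1/2}$. This makes the trace inequality for arbitrary $(A,B)$ equivalent to the divergence comparison for arbitrary invertible pairs, which is what gives the ``if and only if''.

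\textbf{The non-monotonicity step.} With that in place, your outline works. Choose $t_*\in(2/(\alpha-1),1)$, so that $(t_*,1)$ violates \eqref{rw:eq:power-sharp-hyperbola} while $(0,1)$ always satisfies it. For the same qutrit pair this gives $D_\alpha^0\ge D_\alpha^1>D_\alpha^{t_*}$.
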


\begin{proof}
For $\alpha\neq1$, logarithmic normalization preserves comparisons of
$Q_{\alpha,t}$ when $\alpha>1$ and reverses them when $0<\alpha<1$.
We put $\beta=\alpha-1$.  The push-through identity used in the proof
of \Cref{rw:prop:Dt-explicit} yields
\begin{equation}
 Q_{\alpha,t}
 =\operatorname{Tr}\!\left[
 \rho^{1+\beta(1-t)}K_t^\beta\right].
 \label{rw:eq:power-reduction-one}
\end{equation}
We fix $t_1<t_2$, let $\delta=t_2-t_1$, and set
\begin{equation}
 A:=\rho^\delta,\qquad B:=K_{t_1},\qquad
 r:=\frac{1+\beta(1-t_2)}{\delta}.
 \label{rw:eq:power-reduction-parameters}
\end{equation}
Then $K_{t_2}=A^{1/2}BA^{1/2}$ and
\begin{align}
 Q_{\alpha,t_1}
 &=\operatorname{Tr}(A^{r+\beta}B^\beta),
 \notag\\
 Q_{\alpha,t_2}
 &=\operatorname{Tr}\!\left[
 A^r(A^{1/2}BA^{1/2})^\beta\right].
 \label{rw:eq:power-two-traces}
\end{align}
The needed comparisons are the following weighted extensions of the
Araki--Lieb--Thirring trace inequality.  If $A,B>0$, $0<s<1$, and
$f$ is nonnegative and non-decreasing on the spectrum of $A$, then
\begin{equation*}
 \operatorname{Tr}\!\left[f(A)A^sB^s\right]
 \leq
 \operatorname{Tr}\!\left[
 f(A)(A^{1/2}BA^{1/2})^s\right].
\end{equation*}
If instead $x\mapsto x^s g(x)$ is nonnegative and non-increasing,
then
\begin{equation*}
 \operatorname{Tr}\!\left[
 g(A)(A^{1/2}BA^{1/2})^s\right]
 \leq
 \operatorname{Tr}\!\left[g(A)A^sB^s\right].
\end{equation*}
These are \cite[Theorem~4 and Proposition~5]{liu2026araki},
respectively.

For \underline{$0<\beta<1$}, we take $s=\beta$ and $f(x)=x^r$ in the first
inequality.  Since $r\geq0$, it gives directly from
\eqref{rw:eq:power-two-traces}
\begin{equation*}
 Q_{\alpha,t_1}\leq Q_{\alpha,t_2}.
\end{equation*}
For \underline{$-1<\beta<0$}, we put $s=-\beta$ and
$g(x)=x^{-r}$.  Here
\begin{equation*}
 \delta(r-s)=1-s+st_1>0,
\end{equation*}
so $x^sg(x)=x^{s-r}$ is non-increasing.  Applying the second
inequality to $A^{-1},B^{-1}$ yields
\begin{align*}
 Q_{\alpha,t_2}
 &=\operatorname{Tr}\!\left[
   A^r(A^{1/2}BA^{1/2})^{-s}\right]\leq \operatorname{Tr}(A^{r-s}B^{-s})
 =Q_{\alpha,t_1}.
\end{align*}
Thus $Q_{\alpha,t}$ is non-increasing for $0<\alpha<1$ and
non-decreasing for $1<\alpha<2$.  When $\beta=0$, one has
$Q_{1,t}=1$; when $\beta=1$, the two traces in
\eqref{rw:eq:power-two-traces} agree by cyclicity.

For \underline{$1<\beta\leq2$}, the reverse power-weight comparison
\begin{equation*}
 \operatorname{Tr}(A^{r+\beta}B^\beta)
 \geq
 \operatorname{Tr}\!\left[
 A^r(A^{1/2}BA^{1/2})^\beta\right],
 \qquad r\geq0,
\end{equation*}
holds for all positive definite $A,B$ by
\cite[Theorem~2.1]{shi2026liu}.  Hence $Q_{\alpha,t}$ is
non-increasing for $2<\alpha\leq3$.  After logarithmic normalization,
these are exactly the three stated regimes for $D_\alpha^t$; the
order-one assertion is \Cref{rw:thm:Dt-strict-monotonicity}.

For \underline{$\beta>2$}, the same reverse power-weight comparison holds in every
finite dimension exactly when $r\leq\beta/(\beta-2)$, by
\cite[Theorem~3.3]{vuong2026hyperbolic}.  Since $\alpha>1$, this is
also the condition for $D_\alpha^{t_1}\geq D_\alpha^{t_2}$, and it is
exactly \eqref{rw:eq:power-sharp-hyperbola}; beyond this region a
positive definite qutrit pair gives a strict violation.

The positive-matrix counterexamples can be realized by invertible
states.  Given positive definite qutrit
matrices $A_0,B_0$ violating the trace comparison, we define
\begin{equation}
 \rho:=\frac{A_0^{1/\delta}}{\operatorname{Tr}A_0^{1/\delta}},
 \quad
 \widetilde\sigma:=\rho^{t_1/2}B_0^{-1}\rho^{t_1/2},
 \quad
 \sigma:=\frac{\widetilde\sigma}
 {\operatorname{Tr}\widetilde\sigma}.
 \label{rw:eq:qutrit-state-realization}
\end{equation}
Then $\rho^\delta$ and $K_{t_1}$ are positive scalar multiples
of $A_0$ and $B_0$, and homogeneity preserves the strict
violation.

We finally spell out why no universal direction remains when
$\alpha>3$.  We choose
\begin{equation*}
 \frac{2}{\alpha-1}<t_*<1,
\end{equation*}
which is possible precisely because $\alpha>3$.  For
$(t_1,t_2)=(t_*,1)$, the condition
\eqref{rw:eq:power-sharp-hyperbola} fails, because its left-hand side is
\begin{equation*}
 \frac{1}{1-t_*}>\frac{\alpha-1}{\alpha-3}.
\end{equation*}
Hence the strict qutrit counterexample above satisfies
\begin{equation*}
 D_\alpha^{t_*}(\rho\Vert\sigma)
 <D_\alpha^1(\rho\Vert\sigma),
\end{equation*}
so its $t$-dependence is not non-increasing.  On the other hand, the
endpoint choice $(t_1,t_2)=(0,1)$ always satisfies the cutoff, since
\begin{equation*}
 \frac{1+(\alpha-1)(1-1)}{1-0}=1
 \leq\frac{\alpha-1}{\alpha-3}.
\end{equation*}
The endpoint comparison is universal, so it applies in particular to
the very same counterexample pair constructed for $(t_*,1)$:
$D_\alpha^0\geq D_\alpha^1$.  Combining the two facts gives
\begin{equation*}
 D_\alpha^0(\rho\Vert\sigma)
 \geq D_\alpha^1(\rho\Vert\sigma)
 >D_\alpha^{t_*}(\rho\Vert\sigma).
\end{equation*}
Thus this single function $t\mapsto D_\alpha^t(\rho\Vert\sigma)$
has a strict downward net change from $0$ to $t_*$ but a strict upward
change from $t_*$ to $1$.
It is neither non-decreasing nor non-increasing, which rules out a
universal monotonicity direction.
\end{proof}

\begin{remark}[Boundary scope]
All comparisons in the theorem remain valid for common-support pairs
by compression.  For unequal supports, corresponding weak
comparisons require a separately specified regularized definition of
$D_\alpha^t$, while the strict order-one assertion does not survive;
no such broader statement is made here.
\end{remark}

\begin{figure}[!ht]
\centering
\definecolor{phaseDecrease}{HTML}{D55E00}
\definecolor{phaseConstant}{HTML}{666666}
\definecolor{phaseIncrease}{HTML}{0072B2}
\definecolor{phaseMixed}{HTML}{CC79A7}
\begin{tikzpicture}[
  x=7.0cm,
  y=0.78cm,
  >=stealth,
  font=\small,
  phase curve/.style={very thick,->,line cap=round,line join=round},
  phase range/.style={anchor=east,font=\small\bfseries},
  phase result/.style={anchor=west,font=\small\bfseries}
]
  % Keep one common padded span for every tint.  The left edge lies beyond
  % the widest two-line alpha constraint, and the right edge lies beyond
  % the longest qualitative label.
  \def\phaseShadeLeft{-0.38}
  \def\phaseShadeRight{1.65}
  \fill[phaseIncrease!6,rounded corners=2pt]
    (\phaseShadeLeft,2.58) rectangle (\phaseShadeRight,3.42);
  \fill[phaseConstant!7,rounded corners=2pt]
    (\phaseShadeLeft,1.58) rectangle (\phaseShadeRight,2.42);
  \fill[phaseDecrease!6,rounded corners=2pt]
    (\phaseShadeLeft,0.58) rectangle (\phaseShadeRight,1.42);
  \fill[phaseMixed!6,rounded corners=2pt]
    (\phaseShadeLeft,-0.50) rectangle (\phaseShadeRight,0.48);

  \draw[black!18,densely dashed] (0,-0.35)--(0,3.45);
  \draw[black!18,densely dashed] (1,-0.35)--(1,3.45);
  \node[above,font=\small\bfseries] at (0.5,3.48)
    {schematic behavior of $D_\alpha^t$ as $t$ increases};

  \node[phase range]
    at (-0.035,3) {$0<\alpha<2$};
  \draw[phase curve,draw=phaseIncrease]
    (0.04,2.78)..controls (0.35,2.82) and (0.65,3.18)..(0.96,3.22);
  \node[phase result,text=phaseIncrease!75!black]
    at (1.025,3) {non-decreasing};

  \node[phase range]
    at (-0.035,2) {$\alpha=2$};
  \draw[phase curve,draw=phaseConstant] (0.04,2)--(0.96,2);
  \node[phase result,text=phaseConstant!75!black]
    at (1.025,2) {constant};

  \node[phase range]
    at (-0.035,1) {$2<\alpha\leq3$};
  \draw[phase curve,draw=phaseDecrease]
    (0.04,1.22)..controls (0.35,1.18) and (0.65,0.82)..(0.96,0.78);
  \node[phase result,text=phaseDecrease!80!black]
    at (1.025,1) {non-increasing};

  \node[phase range,align=right] at (-0.035,0)
    {$\alpha>3$\\[-2pt]\scriptsize($\dim\mathcal H\geq3$)};
  \draw[phase curve,draw=phaseMixed]
    (0.04,0.18)..controls (0.22,0.38) and (0.34,-0.22)..
    (0.51,0.06)..controls (0.67,0.32) and (0.78,-0.30)..(0.96,-0.05);
  \node[phase result,align=left,text=phaseMixed!75!black] at (1.025,0)
    {no universal direction};

  \draw[->] (0,-0.38)--(1.08,-0.38) node[right] {$t$};
  \node[below,align=center] at (0,-0.42)
    {$0$\\[-1pt]\scriptsize Petz endpoint};
  \node[below,align=center] at (1,-0.42)
    {$1$\\[-1pt]\scriptsize geometric/maximal endpoint};
\end{tikzpicture}
\caption{Monotonicity phase diagram for the $(t,\alpha)$-R\'enyi
divergences in \Cref{rw:thm:power-t-monotonicity}.  At order one we
use the continuous extension $D_1^t=D^t$.  Colors encode the
qualitative class: vermillion is non-increasing, gray is constant,
blue is non-decreasing, and purple indicates no universal direction.  Only the weak direction
of variation is represented; the curves are not numerical plots and
may be flat, for example for commuting pairs.  When $\alpha>3$ in
dimension at least three, the map $t\mapsto D_\alpha^t$ can change
direction, although the endpoint inequality
$D_\alpha^0\geq D_\alpha^1$ remains universal.  The decreasing
comparison between two prescribed parameters is universal exactly under
\eqref{rw:eq:power-sharp-hyperbola}; in dimension two it remains valid
throughout the range $\alpha>2$.}
\label{rw:fig:power-t-monotonicity}
\end{figure}

\begin{example}[An explicit nonmonotone $(t,4)$-R\'enyi divergence]
\label{rw:ex:explicit-alpha-four}
We take $t_1=9/10$, $t_2=1$, and
\begin{equation}
 A=\operatorname{diag}(1,1/40,1/80),\qquad
 B=\begin{pmatrix}
 1&1/\sqrt2&1/\sqrt2\\
 1/\sqrt2&80001&-79999\\
 1/\sqrt2&-79999&80001
 \end{pmatrix}.
\end{equation}
Both matrices are positive definite: in the basis
$(e_1,(e_2+e_3)/\sqrt2,(e_2-e_3)/\sqrt2)$, the second one is
$\left(\begin{smallmatrix}1&1\\1&2\end{smallmatrix}\right)
 \oplus(160000)$.  We define
\begin{equation}
 \rho:=\frac{A^{10}}{\operatorname{Tr} A^{10}},\qquad
 R:=\rho^{9/20},\qquad
 \sigma:=\frac{RB^{-1}R}{\operatorname{Tr}(RB^{-1}R)}.
\end{equation}
Direct substitution in \eqref{rw:eq:power-reduction-one} gives the
following exact identity; no numerical approximation is used:
\begin{equation}
 \frac{Q_{4,9/10}(\rho\Vert\sigma)}
      {Q_{4,1}(\rho\Vert\sigma)}
 =\frac{54975614947328000000106509}
 {80134659699076265308694002}<1.
 \label{rw:eq:explicit-alpha-four-ratio}
\end{equation}
Since the R\'enyi normalization preserves inequalities at order four,
this says that $D_4^{9/10}<D_4^1$.  On the other hand, the universally
valid endpoint instance of \eqref{rw:eq:power-sharp-hyperbola} gives
$D_4^0\geq D_4^1$.  For this single invertible qutrit pair, the map
$t\mapsto D_4^t$ is therefore neither non-increasing nor
non-decreasing.
\end{example}

\subsection{General operator-convex generators}
\label{rw:subsec:general-f-t-dependence}

Differently from the particular cases of the $t$-relative entropies
and the $(t,\alpha)$-R\'enyi divergences, a general $(t,f)$-divergence
has no fixed direction of monotonicity.  We first give an exact
criterion for one ordered pair.  We then derive a complete
ordinary-matrix characterization of universal monotonicity for
$C^1$ generators, identify
the operator-convex slice as an explicit polar cone of canonical
measures, and extract both scalar obstructions and concrete
subcones.

\subsubsection{A pairwise derivative criterion}

We first recall the finite-dimensional Daleckii--Krein formula.  If
$A=A^*$ has spectral decomposition $A=\sum_k\lambda_kP_k$ into
distinct eigenvalues and $f$ is $C^1$ on an open interval containing
$\operatorname{spec}(A)$, we let $\mathfrak f:A\mapsto f(A)$ denote the
functional-calculus map on the real vector space of self-adjoint
operators.  This map is Fr\'echet differentiable and, for every
self-adjoint perturbation $H$,
\begin{align}
 f(A+H)
 &=f(A)+\mathrm D\mathfrak f(A)[H]+o(\lVert H\rVert),\nonumber\\
 \mathrm D\mathfrak f(A)[H]
 &=\sum_{k,l}f^{[1]}(\lambda_k,\lambda_l)P_kHP_l,
 \label{rw:eq:daleckii-krein}
\end{align}
as $\lVert H\rVert\to0$, where
$f^{[1]}(x,y)=(f(x)-f(y))/(x-y)$ for $x\neq y$ and
$f^{[1]}(x,x)=f'(x)$.  This formula is given in
\cite[(2.38), p.~60]{bhatia2009positive} and originates in
\cite{daletskii1965integration}.

\begin{proposition}[Pairwise derivative criterion]
\label{rw:prop:general-f-pairwise-criterion}
We take $\rho,\sigma$ to be invertible states and
$f:(0,\infty)\to\mathbb R$ to be $C^1$, and we put
$F_f(t):=D_f^t(\rho\Vert\sigma)$.  The curve
$t\mapsto\Gamma_t$ is differentiable, with
\begin{equation}
 \dot\Gamma_t
 =-L_{\rho^{1-t}\log\rho}R_{B_t}
  +L_{\rho^{1-t}}
   R_{\sigma^{-1/2}\rho^t(\log\rho)\sigma^{-1/2}}.
 \label{rw:eq:gamma-derivative}
\end{equation}
Applying \eqref{rw:eq:daleckii-krein} on the Hilbert--Schmidt space
with the positive-definite self-adjoint operator $A=\Gamma_t$ and
direction $H=\dot\Gamma_t$, and using the chain rule, gives
\begin{equation}
 F_f'(t)
 =\big\langle\sigma^{1/2},
 \mathrm D\mathfrak f(\Gamma_t)[\dot\Gamma_t]
 \sigma^{1/2}\big\rangle_{\rm HS}.
 \label{rw:eq:general-f-derivative}
\end{equation}
More explicitly, if
$\Gamma_t=\sum_k\lambda_kP_k$ is its decomposition into distinct
eigenvalues, then
\begin{equation}
 F_f'(t)
 =\sum_{k,l}f^{[1]}(\lambda_k,\lambda_l)
 \big\langle P_k\sigma^{1/2},
 \dot\Gamma_t(P_l\sigma^{1/2})\big\rangle_{\rm HS},
 \label{rw:eq:general-f-divided-difference}
\end{equation}
Consequently, $F_f$ is non-decreasing for this
fixed pair if and only if the right-hand side of
\eqref{rw:eq:general-f-divided-difference} is nonnegative for every
$t\in(0,1)$.
\end{proposition}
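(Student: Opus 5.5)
The proof is a direct differentiation in three steps: differentiate $\Gamma_t$, then $f(\Gamma_t)$, then take the expectation against $\sigma^{1/2}$.

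\emph{Derivative of $\Gamma_t$.} We start from the explicit form $\Gamma_t=L_{\rho^{1-t}}R_{B_t}$ of \Cref{rw:prop:gamma-formula}, with $B_t=\sigma^{-1/2}\rho^t\sigma^{-1/2}$. Because $\rho>0$, the maps $t\mapsto\rho^{1-t}=e^{(1-t)\log\rho}$ and $t\mapsto\rho^t$ are real analytic, with derivatives $-\rho^{1-t}\log\rho$ and $\rho^t\log\rho$; the latter factors commute since both are functions of $\rho$. The assignments $a\mapsto L_a$ and $b\mapsto R_b$ are linear, so they commute with differentiation. The Leibniz rule for the product of the commuting superoperators $L_{\rho^{1-t}}$ and $R_{B_t}$ then gives \eqref{rw:eq:gamma-derivative} at once. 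In particular $t\mapsto\Gamma_t$ is a $C^1$ (indeed analytic) curve of positive definite self-adjoint operators on $\mathcal H_{\rm HS}$. They are self-adjoint because $L_a$ and $R_b$ are HS-self-adjoint for self-adjoint $a,b$.

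\emph{Chain rule and the divided-difference formula.} The spectrum of $\Gamma_t$ lies in $(0,\infty)$, and $f$ is $C^1$ there. By \eqref{rw:eq:daleckii-krein}, applied on the finite-dimensional space $\mathcal H_{\rm HS}$, the map $\mathfrak f$ is Fr\'echet differentiable at $\Gamma_t$. Composing with the differentiable curve gives
\[
\frac{\mathrm d}{\mathrm dt}f(\Gamma_t)=\mathrm D\mathfrak f(\Gamma_t)[\dot\Gamma_t].
\]
The map $X\mapsto\langle\sigma^{1/2},X\sigma^{1/2}\rangle_{\rm HS}$ is linear and continuous, so differentiating the definition \eqref{rw:eq:interpolated-f-divergence} yields \eqref{rw:eq:general-f-derivative}. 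Inserting the Daleckii--Krein expression with $A=\Gamma_t=\sum_k\lambda_kP_k$ and $H=\dot\Gamma_t$, and moving $P_k$ to the left slot by self-adjointness, gives \eqref{rw:eq:general-f-divided-difference}.

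\emph{Monotonicity criterion.} The main point needing care is the endpoints, where Fr\'echet differentiability is applied on $[0,1]$. The formula for $\Gamma_t$ in fact extends to all real $t$ with $\Gamma_t>0$ (see \Cref{rw:subsec:alpha-z-comparison}), so $F_f$ is $C^1$ on a neighbourhood of $[0,1]$. This is slightly delicate: $C^1$-ness of $F_f$ needs continuity of $t\mapsto\mathrm D\mathfrak f(\Gamma_t)[\dot\Gamma_t]$. That continuity follows from the Daleckii--Krein formula, because the divided difference $f^{[1]}$ is continuous when $f\in C^1$; equivalently, one can approximate $f$ uniformly in $C^1$ on a compact interval by polynomials. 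With $F_f$ continuous on $[0,1]$ and differentiable on $(0,1)$, the mean value theorem shows that $F_f$ is non-decreasing if $F_f'\geq0$ on $(0,1)$. Conversely, if $F_f$ is non-decreasing then every difference quotient is nonnegative, so $F_f'(t)\geq0$ for all $t\in(0,1)$.
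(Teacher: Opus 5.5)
Your proof is correct and follows the paper's argument: differentiate the commuting product $L_{\rho^{1-t}}R_{B_t}$, combine the chain rule with the Daleckii--Krein formula, and finish with the elementary derivative criterion. Your detour through $C^1$-regularity of $F_f$ is harmless but unnecessary, since the mean value theorem only needs $F_f$ continuous on $[0,1]$ and differentiable on $(0,1)$.
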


\begin{proof}
Differentiating the two commuting multiplication factors in
$\Gamma_t=L_{\rho^{1-t}}R_{B_t}$ gives
\eqref{rw:eq:gamma-derivative}.  The finite-dimensional
Daleckii--Krein formula \eqref{rw:eq:daleckii-krein} then yields
\eqref{rw:eq:general-f-derivative} and its spectral form
\eqref{rw:eq:general-f-divided-difference}.  The final equivalence is
the elementary derivative criterion for a differentiable function on
an interval.
\end{proof}

\begin{remark}[Boundary scope]
The criterion remains valid for common-support pairs after
compression.  It is not extended to unequal supports, because
$\Gamma_t$ can acquire zero eigenvalues and the displayed
Fr\'echet derivative requires boundary regularity of $f$ that is not
assumed.
\end{remark}

In general, the Fréchet derivative in
\eqref{rw:eq:general-f-derivative} cannot be replaced by
$f'(\Gamma_t)\dot\Gamma_t$, since $\Gamma_t$ and $\dot\Gamma_t$
need not commute.

\subsubsection{The universal cone: characterization and obstructions}

A natural candidate for an example in which there is operator monotonicity is when the function $f$ defining the $(t,f)$-divergence is operator convex. However, the following example shows that operator convexity does not fix the
sign of the derivative in
\eqref{rw:eq:general-f-divided-difference}.

\begin{example}[Operator convexity does not imply $t$-monotonicity]
\label{rw:ex:weighted-chi-counterexample}
The function
\begin{equation}
 h(x):=x-2+x^{-1}=\frac{(x-1)^2}{x}
 \label{rw:eq:weighted-chi-generator}
\end{equation}
is operator convex on $(0,\infty)$.  Direct functional calculus
gives
\begin{equation}
 D_h^t(\rho\Vert\sigma)
 =\operatorname{Tr}(\sigma\rho^{t-1}\sigma\rho^{-t})-1.
 \label{rw:eq:weighted-chi-path}
\end{equation}
In an eigenbasis $\rho=\sum_i\lambda_i|i\rangle\langle i|$, the
term before $-1$ is
\begin{equation}
 \sum_{i,j}|\sigma_{ij}|^2
 \lambda_i^{t-1}\lambda_j^{-t}.
 \label{rw:eq:weighted-chi-spectral}
\end{equation}
Pairing the $(i,j)$ and $(j,i)$ terms shows that this expression
is symmetric under $t\leftrightarrow1-t$, non-increasing on
$[0,1/2]$, and non-decreasing on $[1/2,1]$.  It is nonconstant
whenever $[\rho,\sigma]\neq0$ \cite[Proposition~8]{temme2010}.  Hence even for an
operator-convex generator, $t\mapsto D_h^t(\rho\Vert\sigma)$ need not
be globally monotone.
\end{example}

\begin{definition}[Universal $t$-monotonicity cone]
\label{rw:def:universal-t-monotonicity-cone}
We define
\begin{equation}
 \boxed{\;
 \mathfrak M
 :=\left\{f:(0,\infty)\to\mathbb R\ \middle|\
 \begin{array}{l}
 t\longmapsto D_f^t(\rho\Vert\sigma)\text{ is non-decreasing on }[0,1]\\
 \text{for every finite-dimensional invertible pair }(\rho,\sigma)
 \end{array}
 \right\}
 \;}.
 \label{rw:eq:universal-t-monotonicity-cone}
\end{equation}
Thus ``universal'' refers both to all finite dimensions and to all
ordered pairs of invertible states.
\end{definition}

We next give a characterization of \(\mathfrak M\) which no longer
quantifies over a path of Hilbert--Schmidt superoperators.  It involves
only the divided differences of one scalar function and ordinary
positive matrices.  For a \(C^1\) function \(f\), we put
\begin{equation}
 g(x):=\frac{f(x)}{x},\qquad x>0,
 \label{rw:eq:universal-cone-g-transform}
\end{equation}
and we let \(\mathfrak g:A\mapsto g(A)\) denote its functional-calculus
map.

\begin{theorem}[Complete matrix characterization of the universal cone
for $C^1$ generators]
\label{rw:thm:universal-cone-matrix-characterization}
We take \(f\in C^1(0,\infty)\) and define \(g\) by
\eqref{rw:eq:universal-cone-g-transform}.  For \(a>1\), we take
\(P,K\in\mathcal B(\mathbb C^n)\) to be positive definite and to satisfy
\begin{equation}
 \operatorname{Tr}P^a=1,
 \qquad
 \operatorname{Tr}(P^{a-1}K^{-1})=1.
 \label{rw:eq:universal-cone-matrix-normalization}
\end{equation}
We choose an orthonormal eigenbasis \(Pe_i=p_i e_i\) and set
\begin{equation}
 C_i(P,K)
 :=\frac12\Big\{\log P-(\log p_i)I,K\Big\}.
 \label{rw:eq:universal-cone-Ci}
\end{equation}
Then \(f\in\mathfrak M\) if and only if
\begin{equation}
 \boxed{
 \mathcal L_f(a;P,K)
 :=a\sum_{i=1}^n p_i^{a+1}
 \big\langle e_i,
 \mathrm D\mathfrak g(p_iK)[C_i(P,K)]e_i
 \big\rangle
 \geq0
 }
 \label{rw:eq:universal-cone-matrix-characterization}
\end{equation}
for every \(n\), every \(a>1\), and every normalized pair
\((P,K)\) as in
\eqref{rw:eq:universal-cone-matrix-normalization}.

Equivalently, if
\(K=\sum_j\kappa_j|u_j\rangle\langle u_j|\), the matrix element in
\eqref{rw:eq:universal-cone-matrix-characterization} is the explicit
L\"owner expression
\begin{equation}
 \begin{split}
 &\big\langle e_i,
 \mathrm D\mathfrak g(p_iK)[C_i(P,K)]e_i\big\rangle
=\sum_{j,k}
 g^{[1]}(p_i\kappa_j,p_i\kappa_k)
 \langle e_i,u_j\rangle
 \langle u_j,C_i(P,K)u_k\rangle
 \langle u_k,e_i\rangle .
 \end{split}
 \label{rw:eq:universal-cone-Loewner-form}
\end{equation}
Thus \eqref{rw:eq:universal-cone-matrix-characterization} is a
condition on the generator alone, modulo tests by finite positive
matrices; it contains neither \(\sigma\) nor the geodesic modular
operator \(\Gamma_t\).
\end{theorem}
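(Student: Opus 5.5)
The plan is to rewrite $F_f(t):=D_f^t(\rho\Vert\sigma)$ as an ordinary-matrix functional in the eigenbasis of $\rho$. I would then differentiate in $t$ and observe that the derivative at time $t$ is exactly $\mathcal L_f(a;P,K)$ with $a=(1-t)^{-1}$. The last step is to show that the substitution $(\rho,\sigma,t)\leftrightarrow(a,P,K)$ is onto the normalized test set. Monotonicity on $[0,1]$ is then equivalent to nonnegativity of the derivative on $(0,1)$, by continuity of $F_f$ (\Cref{rw:prop:basic-spectral-properties}(iii), or directly by finite-dimensional smoothness) and \Cref{rw:prop:general-f-pairwise-criterion}.

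\emph{Step 1 (reduction).} Write $\rho=\sum_i\lambda_i|e_i\rangle\langle e_i|$. Since $\Gamma_t=L_{\rho^{1-t}}R_{B_t}$ and the left factor is diagonal in this basis, the row sectors $|e_i\rangle\langle e_i|\mathcal B(\mathcal H)$ are invariant. On the $i$-th sector $\Gamma_t$ acts as $R_{\lambda_i^{1-t}B_t}$, so
\[
 F_f(t)=\sum_i\langle e_i,\sigma^{1/2}f(\lambda_i^{1-t}B_t)\sigma^{1/2}e_i\rangle .
\]
Put $f(x)=xg(x)$ and use $\sigma^{1/2}B_t=\rho^t\sigma^{-1/2}$. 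The similarity $\sigma^{-1/2}B_t\sigma^{1/2}=\sigma^{-1}\rho^t=\rho^{-t/2}K_t\rho^{t/2}$, with $K_t=\rho^{t/2}\sigma^{-1}\rho^{t/2}$, then gives
\[
 F_f(t)=\sum_i\lambda_i\,\langle e_i,g(\lambda_i^{1-t}K_t)e_i\rangle .
\]
The diagonal factors $\rho^{\pm t/2}$ contribute $\lambda_i^{\pm t/2}$, and these cancel.

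\emph{Step 2 (derivative).} Fix $t\in(0,1)$ and set $a=(1-t)^{-1}>1$, $P=\rho^{1-t}$ (so $p_i=\lambda_i^{1/a}$ and $\lambda_i=p_i^a$), and $K=K_t$. Then
\[
 \operatorname{Tr}P^a=\operatorname{Tr}\rho=1,\qquad \operatorname{Tr}(P^{a-1}K^{-1})=\operatorname{Tr}(\rho^t\rho^{-t/2}\sigma\rho^{-t/2})=1 .
\]
For small $s$, $\lambda_i^{1-t-s}K_{t+s}=p_i\,\lambda_i^{-s}\rho^{s/2}K\rho^{s/2}$. Its $s$-derivative at $0$ is $\tfrac{p_i}{2}\{\log\rho-\log\lambda_i,K\}=a\,p_i\,C_i(P,K)$, because $\log\rho=a\log P$. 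The chain rule with the Daleckii--Krein formula \eqref{rw:eq:daleckii-krein} (applied to $g$, which is $C^1$) gives $F_f'(t)=\mathcal L_f(a;P,K)$. Expanding $\mathrm D\mathfrak g(p_iK)$ in the eigenbasis of $K$ yields the L\"owner form \eqref{rw:eq:universal-cone-Loewner-form}.

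\emph{Step 3 (surjectivity and conclusion).} Conversely, given $n$, $a>1$ and a normalized pair $(P,K)$, set $t=1-1/a$, $\rho=P^a$ (a state), and $\sigma=\rho^{t/2}K^{-1}\rho^{t/2}$. Then $\sigma>0$ and $\operatorname{Tr}\sigma=\operatorname{Tr}(P^{a-1}K^{-1})=1$, and $K_t=K$ recovers the data. Hence the set of all derivative values $F_f'(t)$ over all invertible pairs, all dimensions and all $t\in(0,1)$ coincides exactly with the set of values $\mathcal L_f(a;P,K)$. This gives both directions: $f\in\mathfrak M$ if and only if $F_f'\ge0$ on $(0,1)$ for every pair, if and only if $\mathcal L_f\ge0$ on all tests.

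I expect the main obstacle to be Step 1, namely finding the correct similarity or push-through that removes $\sigma$ and leaves $g$ evaluated at $\lambda_i^{1-t}K_t$ with diagonal sandwiching. Care is needed because $g(cB_t)$ is not directly pushed through by $\sigma^{1/2}$. The similarity $\sigma^{-1/2}(\cdot)\sigma^{1/2}$ is the right tool, since functional calculus commutes with similarity. The remaining risk is bookkeeping of the exponents in the identification $\lambda_i=p_i^a$ and of the factor $a$ in $C_i$.
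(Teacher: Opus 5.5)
Your proposal is correct and is essentially the paper's proof. You reach the same reduced formula $D_f^t=\sum_i\lambda_i\langle e_i,g(\lambda_i^{1-t}K_t)e_i\rangle$ by the similarity $\sigma^{-1/2}(\cdot)\sigma^{1/2}$, where the paper instead conjugates $\Gamma_t$ by the Hilbert--Schmidt polar unitary $Z\mapsto ZU_t$ to write $D_f^t=\langle\rho^{1/2},g(\Xi_t)\rho^{1/2}\rangle_{\rm HS}$ with $\Xi_t=L_{\rho^{1-t}}R_{K_t}$; from there you use the same reparametrization $a=(1-t)^{-1}$, $P=\rho^{1-t}$, $K=K_t$, the same sectorwise Daleckii--Krein differentiation, and the same surjectivity check. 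One small point: continuity of $t\mapsto D_f^t$ on $[0,1]$ for a general $C^1$ generator should rest on your finite-dimensional smoothness remark (continuity of $\Gamma_t$, whose spectrum stays in a fixed compact subset of $(0,\infty)$), not on \Cref{rw:prop:basic-spectral-properties}(iii), which covers only $D^t$ and $D_\alpha^t$.
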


\begin{proof}
We fix an invertible pair and put
\begin{equation}
 K_t:=\rho^{t/2}\sigma^{-1}\rho^{t/2},
 \qquad
 \Xi_t:=L_{\rho^{1-t}}R_{K_t}.
 \label{rw:eq:universal-cone-Xi}
\end{equation}
The reduction to \(\Xi_t\) is the key point.  Indeed, we let
\(X_t:=\sigma^{-1/2}\rho^{t/2}=U_tK_t^{1/2}\) be the polar
decomposition.  Since
\(B_t=X_tX_t^*=U_tK_tU_t^*\), the Hilbert--Schmidt unitary
\(\mathcal U_t(Z):=ZU_t\) satisfies
\begin{equation}
 \mathcal U_t\Gamma_t\mathcal U_t^*=\Xi_t,
 \qquad
 \mathcal U_t(\sigma^{1/2})
 =\rho^{t/2}K_t^{-1/2}.
 \label{rw:eq:universal-cone-polar-conjugation}
\end{equation}
Moreover,
\begin{equation}
 \Xi_t^{1/2}[\rho^{t/2}K_t^{-1/2}]=\rho^{1/2}.
 \label{rw:eq:universal-cone-cyclic-vector}
\end{equation}
Since \(f(x)=xg(x)\), functional calculus and
\eqref{rw:eq:universal-cone-polar-conjugation}--
\eqref{rw:eq:universal-cone-cyclic-vector} give the useful identity
\begin{equation}
 D_f^t(\rho\Vert\sigma)
 =\big\langle\rho^{1/2},
 g(\Xi_t)[\rho^{1/2}]\big\rangle_{\rm HS}.
 \label{rw:eq:universal-cone-reduced-divergence}
\end{equation}

We now encode an arbitrary interior interpolation time by
\begin{equation}
 a:=\frac1{1-t}>1,
 \qquad
 P:=\rho^{1-t},
 \qquad
 K:=K_t.
 \label{rw:eq:universal-cone-time-reparametrization}
\end{equation}
Then \(\rho=P^a\), and differentiation of
\(K_t=\rho^{t/2}\sigma^{-1}\rho^{t/2}\) yields
\begin{equation}
 \dot K_t=\frac12\{\log\rho,K_t\}.
 \label{rw:eq:universal-cone-K-derivative}
\end{equation}
Consequently,
\begin{equation}
 \dot\Xi_t
 =a\left(-L_{P\log P}R_K
 +\frac12L_PR_{\{\log P,K\}}\right).
 \label{rw:eq:universal-cone-Xi-derivative}
\end{equation}
Both \(\Xi_t=L_PR_K\) and \(\dot\Xi_t\) leave invariant every
left spectral sector \(E_i\mathcal B(\mathbb C^n)\), where
\(E_i=|e_i\rangle\langle e_i|\).  On this sector they act,
respectively, as
\begin{equation}
 p_iR_K,
 \qquad
 ap_iR_{C_i(P,K)}.
 \label{rw:eq:universal-cone-sector-actions}
\end{equation}
The vector \(\rho^{1/2}=P^{a/2}\) has component
\(p_i^{a/2}E_i\) in the same sector.  Applying the
Daleckii--Krein formula in \eqref{rw:eq:daleckii-krein} to
\eqref{rw:eq:universal-cone-reduced-divergence}, and identifying a
right-multiplication sector with \(\mathbb C^n\), therefore gives
\begin{equation}
 \frac{\mathrm d}{\mathrm dt}D_f^t(\rho\Vert\sigma)
 =\mathcal L_f(a;P,K).
 \label{rw:eq:universal-cone-derivative-reduction}
\end{equation}
The spectral form of the same derivative is exactly
\eqref{rw:eq:universal-cone-Loewner-form}.

It remains to check that the tests in the statement are neither more
nor less general than state pairs.  The trace-one conditions for
\(\rho\) and \(\sigma\) become
\eqref{rw:eq:universal-cone-matrix-normalization}, because
\begin{equation}
 \sigma=P^{(a-1)/2}K^{-1}P^{(a-1)/2}.
 \label{rw:eq:universal-cone-state-reconstruction}
\end{equation}
Conversely, every \(a,P,K\) obeying
\eqref{rw:eq:universal-cone-matrix-normalization} defines invertible
states by \(\rho=P^a\) and
\eqref{rw:eq:universal-cone-state-reconstruction}; at
\(t=1-a^{-1}\), their matrix \(K_t\) is precisely \(K\).
Thus \eqref{rw:eq:universal-cone-matrix-characterization} is
equivalent to nonnegativity of the derivative for every pair and every
\(t\in(0,1)\).  Continuity at the endpoints completes the
equivalence with membership in \(\mathfrak M\).
\end{proof}

For operator-convex generators, the matrix criterion has an equivalent
dual form in the unique canonical representing measure.  This form
makes cancellations between different resolvent kernels explicit.
Every finite operator-convex function on \((0,\infty)\) can be written
uniquely as
\begin{equation}
 f(x)=f(1)+f'(1)(x-1)
 +\int_{[0,1]}q_u(x)\,\mathrm d\nu_f(u),
 \qquad
 q_u(x):=\frac{(x-1)^2}{1+u(x-1)},
 \label{rw:eq:universal-cone-compact-representation}
\end{equation}
where \(\nu_f\) is a finite positive measure, as shown in
\cite[Theorem~8.1]{hiai2011quantum}.

\begin{theorem}[Canonical-measure characterization of the
operator-convex slice]
\label{rw:thm:universal-cone-measure-characterization}
We take \(f\) to be a finite operator-convex function with canonical measure
\(\nu_f\) in
\eqref{rw:eq:universal-cone-compact-representation}.  For an invertible
pair, we put
\begin{equation}
 v:=\sigma^{1/2},
 \qquad
 \delta_{\rho,\sigma}:=(\Gamma_t-I)v
 =\rho\sigma^{-1/2}-\sigma^{1/2},
 \qquad
 C_{u,t}:=(1-u)I+u\Gamma_t.
 \label{rw:eq:universal-cone-response-data}
\end{equation}
The vector \(\delta_{\rho,\sigma}\) is independent of \(t\).  We define
the scalar response function
\begin{equation}
 \kappa_{\rho,\sigma,t}(u)
 :=-u\big\langle C_{u,t}^{-1}\delta_{\rho,\sigma},
 \dot\Gamma_t C_{u,t}^{-1}\delta_{\rho,\sigma}
 \big\rangle_{\rm HS},
 \qquad 0\leq u\leq1.
 \label{rw:eq:universal-cone-response-function}
\end{equation}
Then the operator-convex part of the universal cone is exactly
\begin{equation}
 \boxed{
 f\in\mathfrak M
 \quad\Longleftrightarrow\quad
 \int_{[0,1]}\kappa_{\rho,\sigma,t}(u)\,
 \mathrm d\nu_f(u)\geq0
 }
 \label{rw:eq:universal-cone-measure-polar}
\end{equation}
for every finite-dimensional invertible pair \((\rho,\sigma)\) and
every \(t\in(0,1)\).  Equivalently, without differentiation, for
every \(0\leq t_1<t_2\leq1\),
\begin{equation}
 \int_{[0,1]}
 \big\langle\delta_{\rho,\sigma},
 (C_{u,t_2}^{-1}-C_{u,t_1}^{-1})
 \delta_{\rho,\sigma}\big\rangle_{\rm HS}
 \,\mathrm d\nu_f(u)\geq0.
 \label{rw:eq:universal-cone-measure-finite-difference}
\end{equation}
In other words, modulo affine functions, the operator-convex slice of
\(\mathfrak M\) is the positive polar of the explicit family of
response functions in
\eqref{rw:eq:universal-cone-response-function}.
\end{theorem}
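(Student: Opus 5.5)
Substituting the canonical representation \eqref{rw:eq:universal-cone-compact-representation} into the definition of $D_f^t$ reduces everything to the single kernels $q_u$. By linearity (\Cref{rw:prop:elementary-structural-rules}(i)), the affine part contributes $f(1)+f'(1)\cdot 0$, which is independent of $t$. Hence
\[
 D_f^t(\rho\Vert\sigma)=f(1)+\int_{[0,1]}D_{q_u}^t(\rho\Vert\sigma)\,\mathrm d\nu_f(u).
\]
Interchanging the integral with the functional calculus is justified because $\Gamma_t$ has finite spectrum in a compact subset of $(0,\infty)$. On that set, $q_u(\lambda)$ is jointly continuous and bounded in $(u,\lambda)\in[0,1]\times\operatorname{spec}$, since $1+u(\lambda-1)\geq\min(1,\lambda)>0$.

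Next I compute each kernel divergence. Writing $q_u(\Gamma_t)=(\Gamma_t-I)C_{u,t}^{-1}(\Gamma_t-I)$, where all three factors commute, gives
\[
 D_{q_u}^t=\big\langle\delta,C_{u,t}^{-1}\delta\big\rangle_{\rm HS},\qquad \delta=(\Gamma_t-I)v.
\]
By \eqref{rw:eq:gamma-normalization}, $\delta=\rho\sigma^{-1/2}-\sigma^{1/2}$ does not depend on $t$. This immediately yields the finite-difference form \eqref{rw:eq:universal-cone-measure-finite-difference} as a restatement of $D_f^{t_2}\geq D_f^{t_1}$. Therefore $f\in\mathfrak M$ if and only if \eqref{rw:eq:universal-cone-measure-finite-difference} holds for all pairs and all $t_1<t_2$.

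For the differential form, I will differentiate in $t$ with $\delta$ fixed. Since $\dot C_{u,t}=u\dot\Gamma_t$, with $\dot\Gamma_t$ given by \eqref{rw:eq:gamma-derivative}, the resolvent identity gives
\[
 \tfrac{\mathrm d}{\mathrm dt}\langle\delta,C_{u,t}^{-1}\delta\rangle=-u\langle C_{u,t}^{-1}\delta,\dot\Gamma_tC_{u,t}^{-1}\delta\rangle=\kappa_{\rho,\sigma,t}(u),
\]
where I use self-adjointness of $C_{u,t}^{-1}$ on Hilbert--Schmidt space. The function $\kappa$ is continuous and bounded in $(u,t)$ on $[0,1]\times[0,1]$, because $C_{u,t}\geq\min(1,\lambda_{\min}(\Gamma_t))I$ uniformly. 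This lets me differentiate under the integral, so that
\[
 F_f'(t)=\int\kappa_{\rho,\sigma,t}\,\mathrm d\nu_f.
\]
Nonnegativity of this derivative on $(0,1)$ is then equivalent to monotonicity on $[0,1]$, by continuity at the endpoints (\Cref{rw:prop:basic-spectral-properties}(iii)), exactly as in \Cref{rw:prop:general-f-pairwise-criterion}. Taking all pairs gives \eqref{rw:eq:universal-cone-measure-polar}.

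The main obstacle is bookkeeping rather than depth. I need the uniform bounds in $u$ that justify the interchanges, particularly near $u=1$ with small eigenvalues of $\Gamma_t$ and near $u=0$. I also need to confirm that the canonical measure is unique, so that the polar description is genuinely about $f$ modulo affine functions; this uniqueness is taken from \cite[Theorem~8.1]{hiai2011quantum}.
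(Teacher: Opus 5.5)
Your proposal is correct and follows essentially the same route as the paper. You reduce to the kernels $q_u$ and write $D_{q_u}^t=\langle\delta_{\rho,\sigma},C_{u,t}^{-1}\delta_{\rho,\sigma}\rangle_{\rm HS}$ with $\delta_{\rho,\sigma}$ independent of $t$; you differentiate the resolvent to obtain $\kappa_{\rho,\sigma,t}(u)$ and then integrate against the finite measure $\nu_f$, using uniform bounds on $C_{u,t}^{-1}$. Your added justifications, for exchanging the $\nu_f$-integral with the finite spectral sum and for continuity at the endpoints, only make explicit what the paper states briefly.
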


\begin{proof}
Since \(C_{u,t}\) is a function of \(\Gamma_t\), it commutes with
\(\Gamma_t-I\), and functional calculus gives
\begin{equation}
 \begin{split}
 D_{q_u}^t(\rho\Vert\sigma)
 &=\big\langle v,
 (\Gamma_t-I)^2C_{u,t}^{-1}v\big\rangle_{\rm HS}\\
 &=\big\langle\delta_{\rho,\sigma},
 C_{u,t}^{-1}\delta_{\rho,\sigma}\big\rangle_{\rm HS}.
 \end{split}
 \label{rw:eq:universal-cone-kernel-resolvent}
\end{equation}
The identity for \(\delta_{\rho,\sigma}\) follows from
\eqref{rw:eq:gamma-normalization}.  In particular, its derivative
vanishes.  Differentiating the inverse in
\eqref{rw:eq:universal-cone-kernel-resolvent} gives
\begin{equation}
 \frac{\mathrm d}{\mathrm dt}D_{q_u}^t(\rho\Vert\sigma)
 =\kappa_{\rho,\sigma,t}(u).
 \label{rw:eq:universal-cone-kernel-derivative}
\end{equation}
The affine part of
\eqref{rw:eq:universal-cone-compact-representation} is independent of
\(t\).  Integration of
\eqref{rw:eq:universal-cone-kernel-resolvent} and
\eqref{rw:eq:universal-cone-kernel-derivative} against the finite
measure \(\nu_f\) proves both
\eqref{rw:eq:universal-cone-measure-polar} and
\eqref{rw:eq:universal-cone-measure-finite-difference}.  The exchange
of differentiation and integration is harmless: for a fixed invertible
pair, \(C_{u,t}\) and its inverse are uniformly bounded on
\([0,1]\times[\varepsilon,1-\varepsilon]\), and \(\nu_f\) is finite.
\end{proof}

The characterization in
\Cref{rw:thm:universal-cone-matrix-characterization} applies to all
\(C^1\) generators, including the nonconvex powers occurring beyond
the quadratic order.  Its specialization in
\Cref{rw:thm:universal-cone-measure-characterization} is stronger than
a support condition on \(\nu_f\): the response functions can change
sign, and universal monotonicity can result from cancellations across
the whole representing measure.  The atom at \(u=0\) is invisible,
since \(q_0(x)=(x-1)^2\) and
\(\kappa_{\rho,\sigma,t}(0)=0\).

Linearity in $f$ makes $\mathfrak M$ a convex cone.  There is a useful
elementary subcone.  For
\begin{equation}
 f(x)=a+bx+cx^2+d\,x\log x,
 \label{rw:eq:elementary-monotone-cone}
\end{equation}
linearity of the construction and \eqref{rw:eq:Q2-constant} give
\begin{equation}
 D_f^t(\rho\Vert\sigma)
 =a+b+c\operatorname{Tr}(\rho^2\sigma^{-1})
  +dD^t(\rho\Vert\sigma).
 \label{rw:eq:elementary-monotone-cone-value}
\end{equation}
Therefore, it follows that $t\mapsto D_f^t(\rho\Vert\sigma)$ is universally
non-decreasing when $d\geq0$, and it is strict on noncommuting pairs
when $d>0$.
Within the operator-convex subclass one takes $c,d\geq0$.  This is an
elementary sufficient cone inside the exact cones characterized by
\eqref{rw:eq:universal-cone-matrix-characterization} and
\eqref{rw:eq:universal-cone-measure-polar}; the weighted-$\chi^2$
example shows why the answer cannot be reduced to operator convexity
alone.

The non-polynomial extreme kernels in
\eqref{rw:eq:operator-convex-representation} show why the preceding
problem is genuinely global in the representing measure.

\begin{proposition}[No individual resolvent kernel is universally
monotone]
\label{rw:prop:kernel-not-monotone}
For $s>0$, we set
\begin{equation}
 k_s(x):=\frac{x}{1+s}-\frac{x}{x+s}.
 \label{rw:eq:resolvent-kernel}
\end{equation}
Every $k_s$ is operator convex, but $k_s\notin\mathfrak M$ for
every $s>0$.
\end{proposition}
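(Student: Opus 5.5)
The plan is to show two things: first, that membership in $\mathfrak M$ fails for one single kernel $k_{s_0}$; second, that all kernels $k_s$ are equivalent modulo rescaling, and rescaling can be implemented by classical flags. Operator convexity of $k_s$ is immediate, since $k_s$ is precisely the integrand of \eqref{rw:eq:operator-convex-representation} for $\mu_f=\delta_s$.

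\emph{Step 1 (reduction to a resolvent).} I would use the linearity of \Cref{rw:prop:elementary-structural-rules}(i) and $D^t_x=1$. With $v=\sigma^{1/2}$, these give
\begin{equation*}
 D_{k_s}^t(\rho\Vert\sigma)=\frac1{1+s}-1+s\,\big\langle v,(\Gamma_t+s)^{-1}v\big\rangle_{\rm HS}.
\end{equation*}
So $k_s\in\mathfrak M$ if and only if $R_s(t):=s\langle v,(\Gamma_t+s)^{-1}v\rangle_{\rm HS}$ is non-decreasing for every pair.

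\emph{Step 2 (small-$s$ asymptotics).} Fix a noncommuting invertible pair. Since $\Gamma_t\geq c>0$ uniformly on $[0,1]$, I would expand
\begin{equation*}
 (\Gamma_t+s)^{-1}=\Gamma_t^{-1}-s\Gamma_t^{-2}+O(s^2),
\end{equation*}
analytically in $t$ with $t$-derivatives bounded uniformly. This gives $R_s'(t)=s\,\frac{\mathrm d}{\mathrm dt}Q_{-1,t}+O(s^2)$. Here $Q_{-1,t}=\langle v,\Gamma_t^{-1}v\rangle=\operatorname{Tr}(\sigma\rho^{t-1}\sigma\rho^{-t})$ is exactly the quantity in \Cref{rw:ex:weighted-chi-counterexample}. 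In the eigenbasis of $\rho$, each pair of terms $(i,j),(j,i)$ equals
\begin{equation*}
 2|\sigma_{ij}|^2(\lambda_i\lambda_j)^{-1/2}\cosh\!\big((t-\tfrac12)\log(\lambda_i/\lambda_j)\big).
\end{equation*}
Noncommutativity forces some $\sigma_{ij}\neq0$ with $\lambda_i\neq\lambda_j$, so $Q_{-1,t}$ is strictly convex with minimum at $t=1/2$. Hence $\frac{\mathrm d}{\mathrm dt}Q_{-1,t}|_{t=1/4}<0$, and for some small $s_0>0$ we get $R_{s_0}'(1/4)<0$. Thus $k_{s_0}\notin\mathfrak M$.

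\emph{Step 3 (transport to every $s$ by flags).} For $c>0$ one has
\begin{equation*}
 k_s(cx)=k_{s/c}(x)+\Big(\frac{c}{1+s}-\frac{1}{1+s/c}\Big)x,
\end{equation*}
so $k_s(c\,\cdot)$ and $k_{s/c}$ differ by a linear term, whose $(t,f)$-divergence is $t$-independent. Given $s>0$, I would choose $c=s/s_0$ and $p,q\in(0,1)$ with $p/q=c$. Take
\begin{equation*}
 \rho^\oplus=p\rho\oplus(1-p)\tau,\qquad \sigma^\oplus=q\sigma\oplus(1-q)\tau.
\end{equation*}
By \eqref{rw:eq:general-flag-rule-Dft}, the $\tau$-block is a commuting pair and contributes a $t$-independent term by \eqref{rw:eq:classical-collapse}. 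The first block contributes $q\,D^t_{k_{s_0}}(\rho\Vert\sigma)$ plus a constant. Hence $t\mapsto D^t_{k_s}(\rho^\oplus\Vert\sigma^\oplus)$ strictly decreases near $t=1/4$, and $k_s\notin\mathfrak M$.

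\emph{Main obstacle.} The delicate step is Step 2: justifying that the $O(s^2)$ remainder is uniform in the $t$-derivative near $t=1/4$. I would handle it by writing $(\Gamma_t+s)^{-1}-\Gamma_t^{-1}=-s\,\Gamma_t^{-1}(\Gamma_t+s)^{-1}$, then differentiating in $t$ using the explicit smooth formula \eqref{rw:eq:gamma-derivative} and the uniform lower bound on $\operatorname{spec}\Gamma_t$. This yields a bound of the form $C s^2$ with $C$ depending only on the fixed pair.
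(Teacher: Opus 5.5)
Your proof is correct. Its overall architecture matches the paper's: a single base point $s_0$, then transport to every $s>0$ via the dilation identity $k_s(\lambda x)=k_{s/\lambda}(x)+c_{s,\lambda}x$, the flagged-sum rule, and the $t$-independence of affine generators. Your Step~3 is essentially the paper's argument around \eqref{rw:eq:universal-cone-dilation}, with a general block $\tau$ in place of the one-dimensional complement.

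The genuine difference is how you get the base case. The paper fixes $s_0=1/10$ and exhibits an explicit qubit pair with exact rational values showing $D^{1/2}_{k_{s_0}}-D^0_{k_{s_0}}<0$, as in \eqref{rw:eq:kernel-exact-counterexample}. You instead write $D^t_{k_s}=\frac1{1+s}-1+s\langle v,(\Gamma_t+s)^{-1}v\rangle_{\rm HS}$ and expand to first order in $s$. The $t$-derivative is then governed by $\operatorname{Tr}(\sigma\rho^{t-1}\sigma\rho^{-t})$, which is the weighted-$\chi^2$ path of \Cref{rw:ex:weighted-chi-counterexample}. Your $\cosh$ pairing then gives a strictly negative derivative at $t=1/4$ for every noncommuting pair.

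The uniformity you flag as the main obstacle is routine. One has $R_s'(t)=-s\langle v,(\Gamma_t+s)^{-1}\dot\Gamma_t(\Gamma_t+s)^{-1}v\rangle_{\rm HS}$. The identity $(\Gamma_t+s)^{-1}-\Gamma_t^{-1}=-s\Gamma_t^{-1}(\Gamma_t+s)^{-1}$, together with a uniform lower spectral bound on $\Gamma_t$ and boundedness of $\dot\Gamma_t$ on $[0,1]$, gives an $O(s^2)$ error uniformly in $t$.

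Your route is computation-free and explains the failure conceptually. Modulo affine functions, $s^{-1}k_s\equiv(x+s)^{-1}\to x^{-1}$, which is the weighted-$\chi^2$ generator up to affine terms. This is the same mechanism as the paper's later remark excluding an atom of $\nu_f$ at $u=1$. It also shows that every noncommuting pair, after flagging, witnesses the failure for every $s$. The paper's route instead gives a fully explicit witness with a prescribed $s_0$ and prescribed times $0$ and $1/2$, independent of any asymptotics. In your argument $s_0$ is pair-dependent and never quantified.
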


\begin{proof}
It is enough to disprove membership in $\mathfrak M$ at one convenient
base point $s_0>0$.  Indeed, we shall show below that all the kernels
$k_s$ are obtained from any one of them by dilating the argument and
adding a linear function, two operations that do not affect membership
in $\mathfrak M$.

We choose $s_0=1/10$, for which an exact rational counterexample is
particularly simple, and take
\begin{equation}
 \rho=\begin{pmatrix}9/10&0\\0&1/10\end{pmatrix},\qquad
 \sigma=\begin{pmatrix}1/2&2/5\\2/5&1/2\end{pmatrix}.
\end{equation}
Direct exact calculation on Hilbert--Schmidt space gives
\begin{align}
 D_{k_{s_0}}^0(\rho\Vert\sigma)=\frac{3200}{19019},\qquad
 D_{k_{s_0}}^{1/2}(\rho\Vert\sigma)
 =\frac{1014400}{6175697},\qquad
 D_{k_{s_0}}^{1/2}-D_{k_{s_0}}^0
 =-\frac{42668800}{10677780113}<0.
 \label{rw:eq:kernel-exact-counterexample}
\end{align}
Hence $k_{s_0}\notin\mathfrak M$.

We now justify the transport from this single base point.  First, for
every $\lambda>0$,
\begin{equation*}
 f\in\mathfrak M
 \quad\Longleftrightarrow\quad
 f(\lambda\,\cdot)\in\mathfrak M.
\end{equation*}
To prove the forward implication, we choose $p,q\in(0,1)$ with
$p/q=\lambda$ and let $\tau,\omega$ be an arbitrary invertible pair.
The flagged-sum rule in
\eqref{rw:eq:general-flag-rule-Dft} gives
\begin{align}
 &D_f^t\bigl(p\tau\oplus(1-p)\,\big\Vert\,
                 q\omega\oplus(1-q)\bigr)
 =qD_{f(\lambda\,\cdot)}^t(\tau\Vert\omega)
 +(1-q)f\left(\frac{1-p}{1-q}\right).
 \label{rw:eq:universal-cone-dilation}
\end{align}
The last term is independent of $t$ and $q>0$.  Thus universal
non-decrease of the left-hand side implies universal non-decrease of
$D_{f(\lambda\,\cdot)}^t$.  Applying this implication to
$f(\lambda\,\cdot)$ with dilation factor $1/\lambda$ proves the
converse.  Moreover, adding an affine function does not affect
membership in $\mathfrak M$, because
\eqref{rw:eq:affine-generator-normalization} shows that its divergence
is independent of $t$.

We now fix an arbitrary target value $s>0$ and set
$\lambda=s/s_0$.  The identity
\begin{equation}
 k_s(\lambda x)=k_{s/\lambda}(x)
 +\left(\frac{\lambda}{1+s}
       -\frac{1}{1+s/\lambda}\right)x.
\end{equation}
then becomes $k_s(\lambda x)=k_{s_0}(x)+c_{s,\lambda}x$, where
$c_{s,\lambda}:=\lambda/(1+s)-1/(1+s/\lambda)$.  Consequently,
\begin{equation*}
 k_s\in\mathfrak M
 \quad\Longleftrightarrow\quad
 k_s(\lambda\,\cdot)\in\mathfrak M
 \quad\Longleftrightarrow\quad
 k_{s_0}+c_{s,\lambda}\,\mathrm{id}\in\mathfrak M
 \quad\Longleftrightarrow\quad
 k_{s_0}\in\mathfrak M.
\end{equation*}
The last statement is false by
\eqref{rw:eq:kernel-exact-counterexample}.  Since $s>0$ was arbitrary,
$k_s\notin\mathfrak M$ for every $s>0$.
\end{proof}

The ordinary-matrix characterization also yields a useful scalar test
which is invisible in the first-order formula
\eqref{rw:eq:general-f-divided-difference}.

\begin{proposition}[A necessary differential test for the universal cone]
\label{rw:prop:universal-cone-differential-test}
If \(f\in\mathfrak M\cap C^4(0,\infty)\), then, for every \(a>1\),
\begin{equation}
 x\longmapsto x^{a+2}f'''(x)
 \quad\text{is non-increasing on }(0,\infty).
 \label{rw:eq:universal-cone-all-a-test}
\end{equation}
Equivalently,
\begin{equation}
 f'''(x)\leq0,
 \qquad
 \frac{\mathrm d}{\mathrm dx}\bigl[x^3f'''(x)\bigr]\leq0,
 \qquad x>0.
 \label{rw:eq:universal-cone-third-derivative-test}
\end{equation}
\end{proposition}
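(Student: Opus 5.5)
The plan is to extract a scalar condition from the ordinary-matrix criterion of \Cref{rw:thm:universal-cone-matrix-characterization} by testing it on qubit pairs $(P,K)$ that are small perturbations of commuting ones. The first task is the purely elementary reformulation. Set $h_a(x):=x^{a+2}f'''(x)$, which is $C^1$ because $f\in C^4$. Then $h_a$ is non-increasing if and only if $(a+2)f'''(x)+xf''''(x)\leq0$ for every $x>0$. This expression is affine in $c:=a+2$, and $c$ ranges over $(3,\infty)$. Hence the condition holds for every $a>1$ if and only if two endpoint conditions hold. Dividing by $c$ and letting $c\to\infty$ gives $f'''\leq0$. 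Letting $c\downarrow3$ gives $3f'''+xf''''=x^{-2}(x^3f''')'\leq0$. Conversely, every intermediate value of $c$ is a convex combination of these two constraints. This proves the equivalence of \eqref{rw:eq:universal-cone-all-a-test} with \eqref{rw:eq:universal-cone-third-derivative-test}, so it remains to derive \eqref{rw:eq:universal-cone-all-a-test} from $f\in\mathfrak M$.

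\textbf{Perturbative setup.} I fix $a>1$ and $n=2$, with $P=\operatorname{diag}(p_1,p_2)$. I take $K_\varepsilon=K_0+\varepsilon X$, where $K_0=\operatorname{diag}(\kappa_1,\kappa_2)$ and $X=\left(\begin{smallmatrix}0&1\\1&0\end{smallmatrix}\right)$. Small diagonal corrections are used to enforce the normalization \eqref{rw:eq:universal-cone-matrix-normalization}; since that constraint is homogeneous in a simple way, one can rescale $P$ and $K$ appropriately.

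At $\varepsilon=0$ the pair commutes, so the underlying states commute. By the classical collapse \eqref{rw:eq:classical-collapse}, $D_f^t$ is then constant in $t$, and hence $\mathcal L_f(a;P,K_0)=0$. Conjugating by $\operatorname{diag}(1,-1)$ fixes $P$ and sends $\varepsilon\mapsto-\varepsilon$. Therefore $\mathcal L_f$ is even in $\varepsilon$, and its first-order term vanishes. Consequently $f\in\mathfrak M$ forces the $\varepsilon^2$-coefficient $\Lambda(p_1,p_2,\kappa_1,\kappa_2)$ to be nonnegative.

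\textbf{Computing the $\varepsilon^2$-coefficient.} I would compute $\Lambda$ from the L\"owner form \eqref{rw:eq:universal-cone-Loewner-form}. Two ingredients enter. First, $C_i$ has off-diagonal entries of order $\varepsilon$, with coefficient $\tfrac12(\log p_j-\log p_i)$ multiplying the anticommutator weights. Second, the eigenvectors of $K_\varepsilon$ rotate at order $\varepsilon$. The result should be a combination of first and second divided differences of $g$ at the four points $p_i\kappa_j$, weighted by powers $p_i^{a+1}$ and by $\log(p_2/p_1)$.

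\textbf{Localizing to a derivative condition.} I would then pass to a coalescing limit $\kappa_2\to\kappa_1=\kappa$, keeping $p_1\neq p_2$, and set $x_i=p_i\kappa$. Divided differences then become derivatives of $g$, and hence, via $f=xg$, derivatives of $f$ up to order three. I expect (and this must be checked) that $\Lambda$ reduces to a positive multiple of $-\bigl(h_a(x_2)-h_a(x_1)\bigr)(p_2-p_1)$, possibly after a further expansion as $p_2\to p_1$. In that second case the condition would come out as $-(h_a)'(x)\geq0$ directly.

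Since $\kappa$ and the ratio $p_2/p_1$ are free, with the normalization absorbed by scaling, $x_1<x_2$ can be made arbitrary. This yields monotonicity of $h_a$.

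\textbf{Main obstacle.} The hardest step is the bookkeeping of the second-order Daleckii--Krein expansion and showing that the coefficient really collapses to the $h_a$-difference, with the specific weight $x^{a+2}$. The factor $p_i^{a+1}$, together with the $\kappa$-dependence produced by the normalization, is what should produce this weight. If the two-point identification proves messy, my fallback is a symmetric choice $p_{1,2}=p e^{\pm\eta}$ with a double expansion in $(\varepsilon,\eta)$. In that version I would read off the leading $\varepsilon^2\eta^k$ coefficient and check it equals $-c\,x\,(h_a)'(x)$ with $c>0$.
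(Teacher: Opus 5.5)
Your plan is the paper's own. You test the criterion of \Cref{rw:thm:universal-cone-matrix-characterization} on an off-diagonal perturbation of a commuting pair, kill the zeroth and first orders, and read off the $\varepsilon^2$-coefficient. As written, however, the proposal stops at the one step that carries the proposition: the form of that coefficient is only conjectured, with a hedge about a further expansion in $p_2\to p_1$.

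The computation closes cleanly if you start at the coalesced point, as the paper does, instead of reaching it as a limit from $\kappa_1\neq\kappa_2$. That limit only adds rotating eigenvectors and a limit interchange you would have to justify. Take $K_\varepsilon=\kappa(I+\varepsilon X)$ with $X=\left(\begin{smallmatrix}0&1\\1&0\end{smallmatrix}\right)$. Its eigenvectors $u_\pm=(e_1\pm e_2)/\sqrt2$ do not depend on $\varepsilon$, and $C_1=\kappa\bigl[\operatorname{diag}(0,d)+\tfrac{\varepsilon d}{2}X\bigr]$ with $d=\log(p_2/p_1)$. The L\"owner form \eqref{rw:eq:universal-cone-Loewner-form} then gives, with $y=\kappa p_1$ and $\lambda_\pm=y(1\pm\varepsilon)$,
\[
 \big\langle e_1,\mathrm D\mathfrak g(p_1K_\varepsilon)[C_1]e_1\big\rangle
 =\frac{\kappa d}{4}\Bigl[\frac{\lambda_+g'(\lambda_+)+\lambda_-g'(\lambda_-)}{y}
 -2g^{[1]}(\lambda_+,\lambda_-)\Bigr]
 =\frac{\kappa d\,\varepsilon^2}{6}\,yf'''(y)+o(\varepsilon^2).
\]
The last step uses $f'''=3g''+yg'''$, which follows from $f=yg$. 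The $e_2$-term is the same expression at $y=\kappa p_2$ with the opposite sign, so
\[
 \mathcal L_f(a;P,K_\varepsilon)
 =\frac{a\,d\,\varepsilon^2}{6\kappa^{a}}
 \bigl[h_a(\kappa p_1)-h_a(\kappa p_2)\bigr]+o(\varepsilon^2).
\]
This is exactly the positive multiple of $-\bigl(h_a(x_2)-h_a(x_1)\bigr)(p_2-p_1)$ you predicted, with $x_i=\kappa p_i$, and no expansion in $p_2\to p_1$ is needed. The normalizing scalar $c_\varepsilon=1+O(\varepsilon^2)$ does not change the coefficient, because $\mathcal L_f(a;P,\lambda I)=0$ for every $\lambda>0$.

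The second point to repair is the claim that $\kappa$ is free. In dimension two, \eqref{rw:eq:universal-cone-matrix-normalization} forces $p_1^a+p_2^a=1$ and $\kappa=p_1^{a-1}+p_2^{a-1}$, so $(x_1,x_2)$ sweeps only a one-parameter curve. Any rescaling of $(P,K)$ that leaves this curve breaks the normalization. Such a rescaling becomes admissible only after you trade it for a dilation of the generator, which is legitimate because $f\in\mathfrak M\Leftrightarrow f(\lambda\,\cdot)\in\mathfrak M$ by \eqref{rw:eq:universal-cone-dilation}; this is what the paper invokes. Alternatively, pad $P$ with further eigenvalues to tune $\kappa=\operatorname{Tr}P^{a-1}$. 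The corresponding $e_i$ are eigenvectors of $K_\varepsilon$ with $\langle e_i,C_ie_i\rangle=0$, so they contribute nothing.

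Your elementary equivalence with \eqref{rw:eq:universal-cone-third-derivative-test}, via affinity of $(a+2)f'''+xf''''$ in $a$, is correct. It is the same computation as the paper's with $x^3f'''$.
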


\begin{proof}
We fix \(a>1\) and take \(P\) diagonal in the basis
\((e_l)_l\), with two distinct eigenvalues \(p_i<p_j\).  We perturb a
scalar \(K=\kappa I\) in a single off-diagonal direction:
\begin{equation*}
 K_\varepsilon
 =c_\varepsilon\kappa(I+\varepsilon X),
 \qquad
 X=X^*,\qquad X_{kl}=0
 \quad\text{unless }\{k,l\}=\{i,j\}.
\end{equation*}
Here \(\kappa=\operatorname{Tr}P^{a-1}\), and
\(c_\varepsilon=1+O(\varepsilon^2)\) is chosen so that the second
condition in \eqref{rw:eq:universal-cone-matrix-normalization} is
preserved.  Expanding the L\"owner expression in
\eqref{rw:eq:universal-cone-matrix-characterization} at
\(\varepsilon=0\) gives
\begin{equation}
 \begin{split}
 \mathcal L_f(a;P,K_\varepsilon)
 ={}&\frac{a\kappa^2\varepsilon^2}{6}|X_{ij}|^2
 (\log p_j-\log p_i)\\
 &\quad\times\left[
 p_i^{a+2}f'''(\kappa p_i)
 -p_j^{a+2}f'''(\kappa p_j)
 \right]+O(\varepsilon^3).
 \end{split}
 \label{rw:eq:universal-cone-local-expansion}
\end{equation}
The normalization factor does not change the displayed coefficient:
\(\mathcal L_f(a;P,\lambda I)=0\) for every \(\lambda>0\).
Nonnegativity for both signs of sufficiently small \(\varepsilon\)
therefore orders the two bracketed values.  Arbitrary eigenvalue
ratios are obtained by varying \(P\), and arbitrary common scales by
the dilation invariance in
\eqref{rw:eq:universal-cone-dilation}.  This proves
\eqref{rw:eq:universal-cone-all-a-test}.

We put \(h(x):=x^3f'''(x)\).  Since
\begin{equation*}
 \frac{\mathrm d}{\mathrm dx}
 \bigl[x^{a+2}f'''(x)\bigr]
 =x^{a-2}\bigl[(a-1)h(x)+xh'(x)\bigr],
\end{equation*}
letting \(a\downarrow1\) gives \(h'\leq0\), while letting \(a\) grow
forces \(h\leq0\).  The converse implication is immediate, proving
the equivalence with
\eqref{rw:eq:universal-cone-third-derivative-test}.
\end{proof}

\begin{remark}[Why scalar and higher-tone tests are not sufficient]
\label{rw:rem:universal-cone-false-simplifications}
The test in
\eqref{rw:eq:universal-cone-third-derivative-test} already detects
both elementary obstructions above.  For the resolvent kernel,
\begin{equation*}
 k_s'''(x)=-\frac{6s}{(x+s)^4},
 \qquad
 \frac{\mathrm d}{\mathrm dx}[x^3k_s'''(x)]
 =\frac{6sx^2(x-3s)}{(x+s)^5},
\end{equation*}
which is positive for \(x>3s\).  For
\(h(x)=x-2+x^{-1}\), one has
\((x^3h'''(x))'=6x^{-2}>0\).  The condition is not sufficient:
\(-x^\alpha\) satisfies
\eqref{rw:eq:universal-cone-third-derivative-test} for every
\(\alpha>2\), whereas
\Cref{rw:thm:power-t-monotonicity} rules out universal monotonicity
when \(\alpha>3\).

Nor is \(\mathfrak M\) the negative operator-
\(3\)-tone cone suggested by the alternating power signs.  The
canonical representation of operator-\(3\)-tone functions
\cite[Theorem~5.1]{franz2014higher} contains
\(x^3/(x+s)\) as a positive kernel and arbitrary quadratic
polynomials as null directions.  But
\begin{equation}
 -k_s(x)=\frac{x^3}{s^2(x+s)}-\frac{x^2}{s^2}
 +\frac{x}{s(1+s)},
 \label{rw:eq:resolvent-negative-three-tone}
\end{equation}
so \(-k_s\) is operator \(3\)-tone while
\(k_s\notin\mathfrak M\) by
\Cref{rw:prop:kernel-not-monotone}.  Hence the full matrix condition
\eqref{rw:eq:universal-cone-matrix-characterization}, or its exact
measure-polar specialization, is genuinely stronger than these
scalar function classes.
\end{remark}

\begin{remark}[A necessary condition on the representing measure]
The preceding proposition does not prevent cancellations between
different kernels: for instance,
$x\log x=\int_0^\infty k_s(x)\,\mathrm ds$ belongs to
$\mathfrak M$.  There is nevertheless a simple necessary condition
on the measure \(\nu_f\) in
\eqref{rw:eq:universal-cone-compact-representation}.
If $f\in\mathfrak M$, then
\begin{equation}
 \nu_f(\{1\})=0.
 \label{rw:eq:no-weighted-chi-atom}
\end{equation}
Indeed, for $r>0$ we set $f_r(x):=r f(rx)$.  Dilation invariance
\eqref{rw:eq:universal-cone-dilation} and closure under positive
scaling give $f_r\in\mathfrak M$.  If $\nu_f(\{1\})>0$, the compact
representation above shows, as $r\downarrow0$, that the contribution
of the atom at $u=1$ converges to $\nu_f(\{1\})x^{-1}$, whereas the
affine terms and the part supported on $[0,1)$ vanish.  This
convergence is locally uniform on $(0,\infty)$.  Indeed, on
$K=[a,b]\Subset(0,\infty)$ and for $rb\leq1$, the rescaled kernel is
bounded by $2r$ on $u\leq1/2$ and by $2/a$ on $u>1/2$; for each
$u<1$ it converges uniformly on $K$ to zero, while at $u=1$ it
converges uniformly to $x^{-1}$.  Dominated convergence therefore
gives the claimed local-uniform limit.
Consequently the defining inequalities for $\mathfrak M$ pass to the
limit on every finite-dimensional spectrum, so that a positive
multiple of $x\mapsto x^{-1}$ would belong to $\mathfrak M$.  This is
impossible because it differs by an affine function from the
weighted-$\chi^2$ generator
\eqref{rw:eq:weighted-chi-generator}.  Condition
\eqref{rw:eq:no-weighted-chi-atom} is necessary but not sufficient.
Indeed, a point mass at any \(u\in(0,1)\) satisfies it, while its
kernel \(q_u\) differs by a positive factor and an affine function
from \(k_{(1-u)/u}\), which is excluded by
\Cref{rw:prop:kernel-not-monotone}.  The exact replacement is the
polar condition
\eqref{rw:eq:universal-cone-measure-polar}.
\end{remark}

\begin{corollary}[A larger operator-convex universal cone]
\label{rw:cor:larger-monotone-cone}
For $\alpha\in(0,1)\cup(1,2)$, we define
$g_\alpha(x):=\operatorname{sgn}(\alpha-1)x^\alpha$.  We let $\mu$
be a finite positive Borel measure on
$(0,1)\cup(1,2)$.  Every function
\begin{equation}
 \begin{split}
 f(x)={}&a+bx+cx^2+d\,x\log x+e(-\log x)+\int_{(0,1)\cup(1,2)}g_\alpha(x)\,\mathrm d\mu(\alpha),
 \qquad c,d,e\geq0,
 \end{split}
 \label{rw:eq:larger-monotone-cone}
\end{equation}
is operator convex on $(0,\infty)$ and belongs to $\mathfrak M$.
\end{corollary}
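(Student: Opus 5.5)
Since $\mathfrak M$ is a convex cone closed under the linear operations in $f$, we check membership term by term. We then pass to the integral over $\mu$ by linearity and a limiting argument.

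\emph{Operator convexity.} The terms $a+bx$, $cx^2$ with $c\geq0$, $d\,x\log x$ with $d\geq0$ and $-e\log x$ with $e\geq0$ are standard operator-convex functions on $(0,\infty)$. For $\alpha\in(0,1)$, $-x^\alpha$ is operator convex because $x^\alpha$ is operator concave. For $\alpha\in(1,2)$, $x^\alpha$ is operator convex. Integrating against the positive measure $\mu$ preserves operator convexity: the defining inequality is linear in $f$ and holds pointwise in $\alpha$. Finiteness of $\mu$ and continuity of $\alpha\mapsto x^\alpha$ on compact spectral sets make the integral well defined and allow us to exchange it with functional calculus.

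\emph{Membership in $\mathfrak M$, termwise.} The affine and quadratic terms are independent of $t$ by \eqref{rw:eq:elementary-monotone-cone-value}, and the $x\log x$ term is non-decreasing by \Cref{rw:thm:Dt-strict-monotonicity}. For $g_\alpha$ we use $D_{g_\alpha}^t=\operatorname{sgn}(\alpha-1)Q_{\alpha,t}$. The proof of \Cref{rw:thm:power-t-monotonicity} shows that $Q_{\alpha,t}$ is non-increasing in $t$ for $0<\alpha<1$ and non-decreasing for $1<\alpha<2$, so $\operatorname{sgn}(\alpha-1)Q_{\alpha,t}$ is non-decreasing in both ranges.

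The new term is $-\log x$. We compute $D_{-\log}^t=-\langle\sigma^{1/2},\log\Gamma_t\,\sigma^{1/2}\rangle$ using $\log\Gamma_t=L_{(1-t)\log\rho}+R_{\log B_t}$. This gives
\[
-(1-t)\operatorname{Tr}(\sigma\log\rho)-\operatorname{Tr}(\sigma\log B_t).
\]
The natural route is the limit $-\log x=\lim_{\alpha\downarrow0}(1-x^\alpha)/\alpha$. Since each $-x^\alpha/\alpha$ with $\alpha\in(0,1)$ lies in $\mathfrak M$, and the constant $1/\alpha$ is harmless, pointwise convergence of $D^t_{(1-x^\alpha)/\alpha}$ to $D^t_{-\log}$ for each fixed pair and each $t$ preserves the inequality $D^{t_1}\leq D^{t_2}$. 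That convergence holds by functional calculus on the finite spectrum of $\Gamma_t$. Hence $-\log\in\mathfrak M$.

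\emph{The integral term.} For a fixed pair and fixed $t_1<t_2$ we have
\[
D_{\int g_\alpha\,d\mu}^{t}=\int\operatorname{sgn}(\alpha-1)Q_{\alpha,t}\,d\mu(\alpha)
\]
by Fubini. The integrand is bounded on the support of $\mu$ because the spectrum of $\Gamma_t$ is compact in $(0,\infty)$ and $\alpha\in(0,2)$. The pointwise inequalities for each $\alpha$ then integrate to $D^{t_1}\leq D^{t_2}$.

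\emph{Main obstacle.} The main obstacle is justifying the $\alpha$-monotonicity of $Q_{\alpha,t}$ uniformly, that is, checking that the Araki--Lieb--Thirring-type inequalities quoted in \Cref{rw:thm:power-t-monotonicity} genuinely cover every $\alpha$ in both ranges with the signs used. The $-\log$ limit is routine once that is in place.
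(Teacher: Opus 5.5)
Your proposal is correct and follows the paper's proof. You check membership term by term: the affine and quadratic terms are $t$-independent, $x\log x$ is handled by \Cref{rw:thm:Dt-strict-monotonicity}, $\operatorname{sgn}(\alpha-1)Q_{\alpha,t}$ is non-decreasing by \Cref{rw:thm:power-t-monotonicity}, $-\log x$ is obtained as the limit of $(1-x^\alpha)/\alpha$, and the $\mu$-integral is handled by positive integration under finite-dimensional functional calculus. The ``main obstacle'' you flag is not a real one: that theorem already gives the $t$-monotonicity (not $\alpha$-monotonicity, as you wrote) of $Q_{\alpha,t}$ for every $\alpha\in(0,1)\cup(1,2)$, with exactly the signs you use.
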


\begin{proof}
For $0<\alpha<1$, $-x^\alpha$ is operator convex and
$t\mapsto D_{-x^\alpha}^t=-Q_{\alpha,t}$ is non-decreasing; for
$1<\alpha<2$, the same holds for
$t\mapsto D_{x^\alpha}^t=Q_{\alpha,t}$, by
\Cref{rw:thm:power-t-monotonicity}.
The polynomial and logarithmic terms were treated above.  Moreover,
$(1-x^\alpha)/\alpha\in\mathfrak M$ for $0<\alpha<1$ and
converges locally uniformly to $-\log x$ as $\alpha\downarrow0$.
Finite-dimensional functional calculus, positive integration, and
positive linear combinations preserve the conclusion.
\end{proof}

\begin{remark}[The power rays in canonical coordinates]
The explicit subcone in
\Cref{rw:cor:larger-monotone-cone} has a simple description inside
the polar cone
\eqref{rw:eq:universal-cone-measure-polar}.  We normalize the power
generators by
\begin{equation}
 \ell_\alpha(x)
 :=\frac{x^\alpha-\alpha x+\alpha-1}{\alpha(\alpha-1)},
 \qquad 0<\alpha<2,
 \label{rw:eq:normalized-power-generators}
\end{equation}
with continuous extensions
\begin{equation*}
 \ell_0(x)=x-1-\log x,
 \qquad
 \ell_1(x)=x\log x-x+1,
 \qquad
 \ell_2(x)=\frac12(x-1)^2.
\end{equation*}
For \(0\leq\alpha<2\), the canonical measure of
\(\ell_\alpha\) is the beta measure
\begin{equation}
 \mathrm d\nu_\alpha(u)
 =\frac{u^{1-\alpha}(1-u)^\alpha}
 {\Gamma(\alpha+1)\Gamma(2-\alpha)}\,\mathrm du,
 \qquad 0<u<1,
 \label{rw:eq:power-beta-representing-measure}
\end{equation}
while \(\nu_2=\frac12\delta_0\).  Indeed,
\(\ell_\alpha''(x)=x^{\alpha-2}\),
\(q_u''(x)=2[1+u(x-1)]^{-3}\), and Euler's beta integral gives
\eqref{rw:eq:power-beta-representing-measure}.  Thus every positive
mixture of these beta measures satisfies the polar inequalities.  In
the noncompact representation
\eqref{rw:eq:operator-convex-representation}, the same rays are the
Mellin densities proportional to \(s^{\alpha-1}\,\mathrm ds\).
\end{remark}

\begin{remark}[Dimension and the equality hierarchy]
Algebraically, $-x^\alpha\in\mathfrak M$ also holds for
$2<\alpha\leq3$, although these generators are not convex.  For
$\alpha>3$, neither sign of $x^\alpha$ belongs to $\mathfrak M$
in dimensions at least three.  In dimension two, however,
$D_\alpha^t$ remains non-increasing in $t$ for every $\alpha>2$
\cite[Theorem~4.4]{vuong2026hyperbolic}; dimension three is the first
dimension admitting the failure.

Finally, monotonicity of a divergence on a fixed pair does not order
the corresponding DPI deficits, which are differences between two
evaluations of the same $t$-dependent divergence.  The equality
hierarchy in
\Cref{sec:consolidated-recovery,sec:consolidated-tqmc} comes instead
from Stinespring
rigidity and Petz sufficiency.
\end{remark}

In summary, monotonicity for a fixed pair is governed by the divided
difference criterion in
\Cref{rw:prop:general-f-pairwise-criterion}, whereas universal
monotonicity is a property of the generator.  The $t$-relative
entropies $D^t$ and the $(t,\alpha)$-R\'enyi divergences
$D_\alpha^t$ provide the principal universal examples, but the
resolvent and weighted-$\chi^2$ counterexamples show that operator
convexity is not sufficient.  The full \(C^1\) cone is characterized
by the ordinary-matrix L\"owner test in
\Cref{rw:thm:universal-cone-matrix-characterization}, while its
operator-convex slice is equivalently the resolvent polar cone in
\Cref{rw:thm:universal-cone-measure-characterization}.

%% file: dpi_and_recovery_sections.tex
% Paper-ready replacement for the DPI, recovery, Markov, and
% quantitative-reconstruction material.

\section{Data processing and quantitative remainders}
\label{sec:consolidated-dpi}

A quantum distinguishability functional is required to satisfy the
data-processing inequality under quantum channels in order to qualify
as a quantum divergence; this requirement is discussed, for example, in
\cite[Section~1]{hiai2017different}.  Thus, to justify the terminology
introduced in \Cref{rw:def:interpolated-f-divergence}, we must first
prove data processing for \(D_f^t\).  The procedure followed in this
section is inspired by Petz's proof of the monotonicity of quantum
relative entropy \cite{petz2003monotonicity} and by the quantitative
refinement of that method due to Carlen and Vershynina
\cite{carlen2018recovery}.  We do so from a single transformer
contraction and then record the principal special cases.
Next, we retain the positive resolvent terms discarded by that proof,
compress them into one intrinsic defect, and derive the relative entropies,
R\'enyi divergences, and tripartite quantitative remainders used later.  Exact
equality and recovery are separated from these estimates and treated
in \Cref{sec:consolidated-recovery}.

Throughout, all Hilbert spaces are finite dimensional.  We take
$\rho,\sigma>0$ to be states and
$\Phi:\mathcal B(\mathcal H)\to\mathcal B(\mathcal K)$ to be a
channel, and we assume
\begin{equation}
 \Phi(\rho)>0,\qquad \Phi(\sigma)>0.
\end{equation}
The output assumption entails no loss for an invertible input pair.
Indeed, $c\sigma\leq\rho\leq C\sigma$ for some $c,C>0$, so
$c\Phi(\sigma)\leq\Phi(\rho)\leq C\Phi(\sigma)$ and the two outputs
have the same support.  Compressing the output algebra to that support
makes both output states invertible; we keep the displayed assumption
only to avoid repeating this compression in every formula.
For $t\in[0,1]$, we define the geodesic modular operators
\begin{align}
 \Gamma_t
 &:=L_{\rho^{1-t}}
 R_{\sigma^{-1/2}\rho^t\sigma^{-1/2}},&
 \Gamma_t^\Phi
 &:=L_{[\Phi(\rho)]^{1-t}}
 R_{[\Phi(\sigma)]^{-1/2}[\Phi(\rho)]^t[\Phi(\sigma)]^{-1/2}}.
 \label{eq:consolidated-modular-operators}
\end{align}
Thus
\begin{equation}
 D_f^t(\rho\Vert\sigma)
 =\left\langle\sigma^{1/2},
 f(\Gamma_t)\sigma^{1/2}\right\rangle_{\rm HS}.
 \label{eq:consolidated-Dft}
\end{equation}

\subsection{Data processing in the general case}

The main result of this section is that every $(t,f)$-divergence
generated by a finite operator-convex $f$ satisfies a data-processing
inequality. To prove that, the input and output modular operators are compared by
\begin{equation}
 \mathsf V_{\sigma,\Phi}(X)
 :=\Phi^*(X[\Phi(\sigma)]^{-1/2})\sigma^{1/2}.
 \label{eq:consolidated-V}
\end{equation}
This map is a Hilbert--Schmidt contraction, not an isometry in
general; it is the standard comparison map for quasi-entropies
\cite{petz1986quasi,hiai2011quantum}.

\begin{lemma}[Transformer contraction]
\label{lem:consolidated-transformer}
For every $t\in[0,1]$,
\begin{equation}
 \mathsf V_{\sigma,\Phi}([\Phi(\sigma)]^{1/2})=\sigma^{1/2},
 \qquad
 \mathsf V_{\sigma,\Phi}^{*}\Gamma_t
 \mathsf V_{\sigma,\Phi}\leq\Gamma_t^\Phi.
 \label{eq:consolidated-transformer}
\end{equation}
\end{lemma}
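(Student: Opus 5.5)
The plan is to reduce the interior inequality to the two endpoints $t=0$ and $t=1$. Two structural facts about the weighted geometric mean $\#_t$ on the positive cone of $\mathcal B(\mathcal H_{\rm HS})$ make this possible, and we use them on the Hilbert--Schmidt spaces. By \Cref{rw:prop:gamma-formula}, $\Gamma_t=\Delta_{\rho\mid\sigma}\#_t\widehat\Delta_{\rho\mid\sigma}$, and likewise $\Gamma_t^\Phi=\Delta_{\Phi(\rho)\mid\Phi(\sigma)}\#_t\widehat\Delta_{\Phi(\rho)\mid\Phi(\sigma)}$. Kubo--Ando means satisfy the transformer inequality $V^*(A\#_tB)V\leq(V^*AV)\#_t(V^*BV)$ for an arbitrary linear map $V$ between Hilbert spaces. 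For invertible $V$ this holds with equality, and the general case follows by the usual $\varepsilon$-regularization or via the variational (block-matrix) characterization of $\#_t$. Kubo--Ando means are also monotone in each argument. Consequently it suffices to prove
\begin{equation*}
 \mathsf V^*\Delta_{\rho\mid\sigma}\mathsf V\leq\Delta_{\Phi(\rho)\mid\Phi(\sigma)},
 \qquad
 \mathsf V^*\widehat\Delta_{\rho\mid\sigma}\mathsf V\leq\widehat\Delta_{\Phi(\rho)\mid\Phi(\sigma)},
\end{equation*}
where $\mathsf V=\mathsf V_{\sigma,\Phi}$. If one of the compressed operators $\mathsf V^*A\mathsf V$ fails to be invertible, we add $\varepsilon I$ and let $\varepsilon\downarrow0$.

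The first identity is immediate. We have $\mathsf V([\Phi(\sigma)]^{1/2})=\Phi^*(I)\sigma^{1/2}=\sigma^{1/2}$, because $\Phi^*$ is unital on the supported output algebra, as noted before \eqref{rw:eq:petz-recovers-reference}.

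For the endpoint inequalities, fix $X$ and put $Y:=X[\Phi(\sigma)]^{-1/2}$ and $Z:=\Phi^*(Y)$, so that $\mathsf VX=Z\sigma^{1/2}$.

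\emph{Endpoint $t=1$.} Cyclicity and $\sigma^{1/2}K\sigma^{1/2}=\rho$, with $K=\sigma^{-1/2}\rho\sigma^{-1/2}$, give
\[
\langle \mathsf VX,\widehat\Delta_{\rho\mid\sigma}\mathsf VX\rangle_{\rm HS}=\operatorname{Tr}[Z^*Z\rho].
\]
The Kadison--Schwarz inequality $\Phi^*(Y)^*\Phi^*(Y)\leq\Phi^*(Y^*Y)$, valid because $\Phi^*$ is unital and completely positive, then bounds this by $\operatorname{Tr}[Y^*Y\Phi(\rho)]$. This equals $\operatorname{Tr}[X^*X\,[\Phi(\sigma)]^{-1/2}\Phi(\rho)[\Phi(\sigma)]^{-1/2}]=\langle X,\widehat\Delta_{\Phi(\rho)\mid\Phi(\sigma)}X\rangle_{\rm HS}$.

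\emph{Endpoint $t=0$.} Similarly,
\[
\langle\mathsf VX,\Delta_{\rho\mid\sigma}\mathsf VX\rangle_{\rm HS}=\operatorname{Tr}[\sigma^{1/2}Z^*\rho Z\sigma^{-1/2}]=\operatorname{Tr}[\rho ZZ^*].
\]
Kadison--Schwarz applied to $Y^*$ gives $ZZ^*\leq\Phi^*(YY^*)$. Hence this is at most $\operatorname{Tr}[\Phi(\rho)YY^*]=\langle X,\Delta_{\Phi(\rho)\mid\Phi(\sigma)}X\rangle_{\rm HS}$.

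The same computation with $\rho$ replaced by $\sigma$ shows $\|\mathsf VX\|_2^2\leq\|X\|_2^2$, which confirms that $\mathsf V$ is a contraction.

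The main obstacle is the transformer inequality for the noninvertible, nonisometric $\mathsf V$, since it is needed at the superoperator level. We would establish it via the characterization $A\#_tB=A^{1/2}(A^{-1/2}BA^{-1/2})^tA^{1/2}$ together with Hansen's inequality: for operator monotone $h(x)=x^t$ and any contraction $C$, $C^*h(T)C\leq h(C^*TC)$. We apply this with $C=A^{1/2}\mathsf V(\mathsf V^*A\mathsf V)^{-1/2}$, which is a partial isometry after perturbing $\mathsf V^*A\mathsf V$ to be invertible. Alternatively, we would quote the Kubo--Ando transformer inequality directly from \cite{hiai2011quantum}.
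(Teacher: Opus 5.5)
Your proof is correct, but it takes a different route from the paper's. The paper does not split $\Gamma_t$ into its endpoints. It uses the factorization $\Gamma_t=L_{\rho^{1-t}}R_{B_t}$ to see that $\sigma$ cancels from the quadratic form: $\langle X,\mathsf V_{\sigma,\Phi}^*\Gamma_t\mathsf V_{\sigma,\Phi}X\rangle_{\rm HS}=\operatorname{Tr}[\Phi^*(K)^*\rho^{1-t}\Phi^*(K)\rho^t]$ with $K=X[\Phi(\sigma)]^{-1/2}$. It then applies Petz's transformer inequality for the power quasi-entropy $\operatorname{Tr}[K^*A^{1-t}KB^t]$ at the diagonal pair $A=B=\rho$, and the output side matches by the same cyclicity.

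You instead prove the two endpoint comparisons $\mathsf V^*\Delta_{\rho\mid\sigma}\mathsf V\leq\Delta_{\Phi(\rho)\mid\Phi(\sigma)}$ and $\mathsf V^*\widehat\Delta_{\rho\mid\sigma}\mathsf V\leq\widehat\Delta_{\Phi(\rho)\mid\Phi(\sigma)}$ directly from Kadison--Schwarz. Your computations are correct, including the use of $\Phi^*(Y^*)=\Phi^*(Y)^*$ at $t=0$. You then transport these comparisons along the geodesic with the Kubo--Ando transformer inequality and monotonicity of $\#_t$. The paper's route is a one-line reduction to a known theorem and needs no regularization. Yours uses only Kadison--Schwarz, Hansen's inequality and L\"owner--Heinz, and it shows why the lemma holds at every $t$: the same argument works verbatim for any Kubo--Ando mean path between $\Delta_{\rho\mid\sigma}$ and $\widehat\Delta_{\rho\mid\sigma}$.

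Two side remarks in your sketch should not be relied on.
\begin{itemize}
\item The standard block-matrix characterization of the mean is for $t=1/2$ only.
\item The equality case for invertible $V$ does not pass to general $V$ by a naive limit, because $\#_t$ is not continuous at singular arguments.
\end{itemize}

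The Hansen argument in your last paragraph is the one that works, with one small correction. For $\varepsilon>0$, the operator $C_\varepsilon:=A^{1/2}\mathsf V(\mathsf V^*A\mathsf V+\varepsilon)^{-1/2}$ is a contraction rather than a partial isometry, and a contraction is all Hansen's inequality needs. It gives $\mathsf V^*(A\#_tB)\mathsf V\leq(\mathsf V^*A\mathsf V+\varepsilon)\#_t(\mathsf V^*B\mathsf V)$. Monotonicity in each argument, followed by $\varepsilon\downarrow0$, then yields $\Gamma_t^\Phi$; the limit is legitimate because $\Delta_{\Phi(\rho)\mid\Phi(\sigma)}$ and $\widehat\Delta_{\Phi(\rho)\mid\Phi(\sigma)}$ are positive definite. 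This regularization is genuinely needed, since $\mathsf V$ is injective exactly when $\Phi^*$ is.
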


\begin{proof}
The first identity follows from $\Phi^*(I)=I$, and the
Kadison--Schwarz inequality for $\Phi^*$
\cite{choi1974schwarz} gives
\begin{equation}
 \lVert\mathsf V_{\sigma,\Phi}\rVert_{2\to2}\leq1.
\end{equation}
For the second relation, we recall that the power quasi-entropy
\begin{equation*}
 \mathcal S_t^K(A\Vert B)
 :=\operatorname{Tr}\!\left[K^*A^{1-t}KB^t\right],
 \qquad 0\leq t\leq1,
\end{equation*}
satisfies Petz's transformer inequality
\begin{equation*}
 \mathcal S_t^{\Phi^*(K)}(A\Vert B)
 \leq \mathcal S_t^K\bigl(\Phi(A)\Vert\Phi(B)\bigr).
\end{equation*}
Proofs of this inequality appear in \cite[Theorem~4]{petz1986quasi} and
\cite[Section~3.6]{petz2007quantum}.
We fix $X\in\mathcal B(\mathcal K)$ and set
$K:=X[\Phi(\sigma)]^{-1/2}$.  Then
\begin{align*}
 \left\langle X,\mathsf V_{\sigma,\Phi}^*\Gamma_t
          \mathsf V_{\sigma,\Phi}X\right\rangle_{\rm HS}
 &=\operatorname{Tr}\!\left[
   \bigl(\Phi^*(K)\sigma^{1/2}\bigr)^*
   \rho^{1-t}\Phi^*(K)\sigma^{1/2}
   \sigma^{-1/2}\rho^t\sigma^{-1/2}\right]\\
 &=\operatorname{Tr}\!\left[\Phi^*(K)^*\rho^{1-t}
                    \Phi^*(K)\rho^t\right]\\
 &=\mathcal S_t^{\Phi^*(K)}(\rho\Vert\rho)\\
 &\leq\mathcal S_t^K\bigl(\Phi(\rho)\Vert\Phi(\rho)\bigr)\\
 &=\operatorname{Tr}\!\left[[\Phi(\sigma)]^{-1/2}X^*
      [\Phi(\rho)]^{1-t}X[\Phi(\sigma)]^{-1/2}
      [\Phi(\rho)]^t\right]\\
 &=\left\langle X,\Gamma_t^\Phi X\right\rangle_{\rm HS},
\end{align*}
where the second and last equalities use cyclicity of the trace.
Since $X$ is arbitrary, the second relation in
\eqref{eq:consolidated-transformer} follows.
\end{proof}

We recall from \eqref{rw:eq:operator-convex-representation} that every
finite operator-convex function $f$ on $[0,\infty)$ admits the
representation
\begin{equation}
 f(x)=f(0)+ax+bx^2+
 \int_0^\infty\left(
 \frac{x}{1+s}-\frac{x}{x+s}\right)\,\mathrm d\mu_f(s),
 \qquad b\geq0.
 \label{eq:consolidated-f-representation}
\end{equation}
Singular endpoints can instead be handled on the appropriate support
by regularization.

\begin{theorem}[Data processing for the geodesic divergences]
\label{thm:consolidated-dpi}
For every $t\in[0,1]$,
\begin{equation}
 D_f^t(\Phi(\rho)\Vert\Phi(\sigma))
 \leq D_f^t(\rho\Vert\sigma).
 \label{eq:consolidated-dpi}
\end{equation}
\end{theorem}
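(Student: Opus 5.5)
The proof will be the standard Petz/Hiai--Mosonyi--Petz--B\'eny argument, driven entirely by \Cref{lem:consolidated-transformer}. Set $V:=\mathsf V_{\sigma,\Phi}$, $v:=\sigma^{1/2}$ and $w:=[\Phi(\sigma)]^{1/2}$, so that $Vw=v$.

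By \eqref{eq:consolidated-f-representation} and linearity of $f\mapsto D_f^t$ (\Cref{rw:prop:elementary-structural-rules}(i)), it suffices to prove the inequality for each building block separately and then integrate against $\mu_f$.

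\emph{Affine part.} The constant and linear terms contribute $f(0)+a$ on both sides, since $D_\ell^t=a+b$ for $\ell(x)=a+bx$.

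\emph{Quadratic term.} Its contribution is $b\,\mathrm{Tr}(\rho^2\sigma^{-1})$ by \eqref{rw:eq:Q2-constant}. Monotonicity of $\mathrm{Tr}(\rho^2\sigma^{-1})$ under channels follows from the Lieb--Ruskai (Kadison--Schwarz-type) inequality $\Phi(\rho)\Phi(\sigma)^{-1}\Phi(\rho)\le\Phi(\rho\sigma^{-1}\rho)$. Alternatively, use the operator convexity of $x^2$ together with $\|V\|\le1$ in the form $V^*\Gamma_t^2V\ge (V^*\Gamma_tV)^2$; I will use the Lieb--Ruskai route, since it is cleaner.

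\emph{Resolvent kernels.} For each $s>0$ the kernel is
\[
k_s(x)=\frac{x}{1+s}-\frac{x}{x+s}=\frac{x}{1+s}-1+s(x+s)^{-1}.
\]
The linear and constant parts again give $t$-independent constants equal on both sides. It therefore remains to show
\[
\langle w,(\Gamma_t^\Phi+s)^{-1}w\rangle\le\langle v,(\Gamma_t+s)^{-1}v\rangle .
\]
\begin{itemize}
\item First, from \Cref{lem:consolidated-transformer} and $V^*V\le I$ (a consequence of $\|V\|\le1$) we obtain
\[
V^*(\Gamma_t+s)V\le\Gamma_t^\Phi+sV^*V\le\Gamma_t^\Phi+s .
\]
\item Second, operator monotonicity of $x\mapsto -x^{-1}$ gives $(\Gamma_t^\Phi+s)^{-1}\le\bigl(V^*(\Gamma_t+s)V\bigr)^{-1}$. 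Here $V^*(\Gamma_t+s)V$ may fail to be invertible, so I will either pass to $V^*(\Gamma_t+s)V+\varepsilon$ and let $\varepsilon\to0$, or use the variational formula directly.
\item Third, with $A:=\Gamma_t+s>0$, the Schur-complement/variational inequality reads
\[
\langle w,(V^*AV)^{-1}w\rangle=\sup_y\bigl(2\mathrm{Re}\langle y,w\rangle-\langle Vy,AVy\rangle\bigr)\le\sup_z\bigl(2\mathrm{Re}\langle z,Vw\rangle-\langle z,Az\rangle\bigr)=\langle Vw,A^{-1}Vw\rangle .
\]
\end{itemize}
This is the only step needing care. I will in fact use the variational formula throughout: for each $y$,
\[
2\mathrm{Re}\langle y,w\rangle-\langle y,(\Gamma_t^\Phi+s)y\rangle\le 2\mathrm{Re}\langle Vy,v\rangle-\langle Vy,(\Gamma_t+s)Vy\rangle\le\langle v,(\Gamma_t+s)^{-1}v\rangle .
\]
Taking the supremum over $y$ yields the claim with no invertibility issues. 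Since $Vw=v$, this gives exactly the required inequality, and multiplying by $s\ge0$ and integrating against $\mu_f$ finishes the proof.

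The main obstacle is bookkeeping in the integral representation. The terms $x/(1+s)$ and $-1$ must be grouped with the resolvent term so that each piece is finite, because $\mu_f$ is only assumed to satisfy $\int(1+s)^{-2}\,d\mu_f<\infty$. Grouping the full kernel $k_s$, whose divergence equals $\frac1{1+s}-1+s\langle v,(\Gamma_t+s)^{-1}v\rangle$ and whose DPI deficit is $s$ times the resolvent deficit, keeps everything finite. Nonnegativity of each deficit then lets me exchange integral and inequality by monotone convergence. Finite dimensionality makes the spectra bounded away from $0$ and $\infty$, which justifies the remaining exchanges.
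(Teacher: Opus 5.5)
Your strategy works, but the resolvent step contains an unjustified substitution, and it is not merely cosmetic. In the per-$y$ inequality
\[
2\mathrm{Re}\langle y,w\rangle-\langle y,(\Gamma_t^\Phi+s)y\rangle\le 2\mathrm{Re}\langle Vy,v\rangle-\langle Vy,(\Gamma_t+s)Vy\rangle ,
\]
you replace $\langle y,w\rangle$ by $\langle Vy,v\rangle=\langle y,V^*v\rangle$. The same replacement is hidden in the ``$\le$'' of your third bullet, where you set $z=Vy$. This requires $V^*v=w$, and that does \emph{not} follow from $Vw=v$, contrary to your closing sentence.

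For a general contraction the comparison you wrote is false. With scalar $V=c\in(0,1)$ and $A=a>0$ one gets $\langle w,(V^*AV)^{-1}w\rangle=|w|^2/(c^2a)>c^2|w|^2/a=\langle Vw,A^{-1}Vw\rangle$. The reason is that once you split off the constant $-1$ of $k_s$ as ``equal on both sides'', the bare resolvent comparison needs $V$ to be isometric on the cyclic vector.

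Here this holds and is easy to show. Since $\lVert Vw\rVert_2=\lVert\sigma^{1/2}\rVert_2=1=\lVert w\rVert_2$ and $V^*V\le I$, we get $\langle w,(I-V^*V)w\rangle=0$, hence $(I-V^*V)w=0$ and $V^*v=w$. Alternatively, compute directly $\langle X,V^*\sigma^{1/2}\rangle_{\rm HS}=\operatorname{Tr}[\Phi^*([\Phi(\sigma)]^{-1/2}X^*)\sigma]=\langle X,[\Phi(\sigma)]^{1/2}\rangle_{\rm HS}$. With this line inserted, your argument is complete.

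Once repaired, your route genuinely differs from the paper's.
\begin{itemize}
\item \emph{Resolvent kernels.} The paper never separates the constant. It keeps $\varphi_s(x)=-x/(x+s)$ whole and writes $\varphi_s(\Gamma_t^\Phi)\le\varphi_s(V^*\Gamma_tV)\le V^*\varphi_s(\Gamma_t)V$, using operator monotone decrease and the Hansen--Pedersen contractive Jensen inequality. That inequality applies to the non-isometric $V$ precisely because $\varphi_s(0)=0$, so the final expectation uses only $Vw=v$. Your variational (Schur-complement) argument avoids the operator-Jensen black box and all invertibility issues. It is essentially the content of the paper's \Cref{lem:consolidated-resolvent} with the square-completion remainder discarded. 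That lemma carries exactly the hypothesis $\lVert x\rVert=\lVert y\rVert$ you omitted, so your version leads directly to the strengthened DPI of \Cref{thm:consolidated-resolvent-dpi}.
\item \emph{Quadratic term.} The paper proves $V^*(\rho\sigma^{-1/2})=\Phi(\rho)[\Phi(\sigma)]^{-1/2}$ and uses $\lVert V^*\rVert\le1$. Your Lieb--Ruskai route is an equally valid alternative: positivity of $\bigl(\begin{smallmatrix}\sigma&\rho\\ \rho&\rho\sigma^{-1}\rho\end{smallmatrix}\bigr)$, $2$-positivity of $\Phi$, a Schur complement, and trace preservation.
\item \emph{Your aside.} The alternative $V^*\Gamma_t^2V\ge(V^*\Gamma_tV)^2$ would not close the argument from $V^*\Gamma_tV\le\Gamma_t^\Phi$ alone, since that inequality points the wrong way for squares. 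It too would need the adjoint identity $V^*\Gamma_tv=\Gamma_t^\Phi w$.
\end{itemize}
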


\begin{proof}
The constant contribution is unchanged because both $\sigma$ and
$\Phi(\sigma)$ have trace one.  The linear contribution is also
unchanged, since
\begin{equation}
 \left\langle\sigma^{1/2},\Gamma_t\sigma^{1/2}\right\rangle
 =\left\langle[\Phi(\sigma)]^{1/2},
 \Gamma_t^\Phi[\Phi(\sigma)]^{1/2}\right\rangle=1.
\end{equation}

We next treat the integral kernels in
\eqref{eq:consolidated-f-representation}.  For $s>0$, we put
$\varphi_s(x):=-x/(x+s)$ and
$k_s(x):=x/(1+s)+\varphi_s(x)$.  The function $\varphi_s$ is operator convex,
operator monotone decreasing, and satisfies $\varphi_s(0)=0$.
The contractive Jensen inequality
\cite[Corollary~2.3]{hansen2003jensen}, applied to
$\mathsf V_{\sigma,\Phi}$, and the transformer inequality in
\eqref{eq:consolidated-transformer} therefore give, in two separate
steps,
\begin{equation}
 \varphi_s(\Gamma_t^\Phi)
 \leq\varphi_s\!\left(
       \mathsf V_{\sigma,\Phi}^*\Gamma_t
       \mathsf V_{\sigma,\Phi}\right)
 \leq\mathsf V_{\sigma,\Phi}^*\varphi_s(\Gamma_t)
       \mathsf V_{\sigma,\Phi}.
\end{equation}
Here the first inequality uses that $\varphi_s$ reverses the operator
order, whereas the second one is operator Jensen.  Using the first
identity in \eqref{eq:consolidated-transformer}, we consequently obtain
\begin{align*}
 \left\langle[\Phi(\sigma)]^{1/2},
   \varphi_s(\Gamma_t^\Phi)[\Phi(\sigma)]^{1/2}\right\rangle_{\rm HS}
&\leq
 \left\langle[\Phi(\sigma)]^{1/2},
   \mathsf V_{\sigma,\Phi}^*\varphi_s(\Gamma_t)
   \mathsf V_{\sigma,\Phi}[\Phi(\sigma)]^{1/2}
 \right\rangle_{\rm HS}\\
 &=
 \left\langle\sigma^{1/2},
   \varphi_s(\Gamma_t)\sigma^{1/2}\right\rangle_{\rm HS}.
\end{align*}
The remaining part $x/(1+s)$ of $k_s$ is linear and hence has
the same expectation at the input and output.  Thus every integral
kernel in \eqref{eq:consolidated-f-representation} satisfies the
desired DPI.

It remains to justify the quadratic term.  Since $\Gamma_t$ is
positive and self-adjoint on Hilbert--Schmidt space, its apparent
$t$-dependence cancels explicitly:
\begin{align}
 D_{x^2}^t(\rho\Vert\sigma)
 &=\left\lVert\Gamma_t\sigma^{1/2}\right\rVert_2^2\notag\\
 &=\left\lVert
   \rho^{1-t}\sigma^{1/2}\sigma^{-1/2}
   \rho^t\sigma^{-1/2}\right\rVert_2^2\notag\\
 &=\left\lVert\rho\sigma^{-1/2}\right\rVert_2^2
 =\operatorname{Tr}(\rho^2\sigma^{-1}).
 \label{eq:consolidated-quadratic-independence}
\end{align}
To prove data processing directly, we abbreviate
$\rho_\Phi:=\Phi(\rho)$ and $\sigma_\Phi:=\Phi(\sigma)$.  For every
$X\in\mathcal B(\mathcal K)$, cyclicity of the trace and the
definition of the Hilbert--Schmidt adjoint give
\begin{equation*}
\begin{split}
 \left\langle X,
 \mathsf V_{\sigma,\Phi}^*(\rho\sigma^{-1/2})\right\rangle_{\rm HS}
 &=\operatorname{Tr}\!\left[
   \sigma^{1/2}\Phi^*(\sigma_\Phi^{-1/2}X^*)
   \rho\sigma^{-1/2}\right]\\
 &=\operatorname{Tr}\!\left[
   \Phi^*(\sigma_\Phi^{-1/2}X^*)\rho\right]\\
 &=\operatorname{Tr}\!\left[\sigma_\Phi^{-1/2}X^*\rho_\Phi\right]
 =\langle X,\rho_\Phi\sigma_\Phi^{-1/2}\rangle_{\rm HS}.
\end{split}
\end{equation*}
Hence
$\mathsf V_{\sigma,\Phi}^*(\rho\sigma^{-1/2})
=\rho_\Phi\sigma_\Phi^{-1/2}$.
Since $\mathsf V_{\sigma,\Phi}$ is a contraction, its adjoint satisfies
$\lVert\mathsf V_{\sigma,\Phi}^*\rVert_{2\to2}\leq1$, and therefore
\begin{align*}
 D_{x^2}^t(\rho_\Phi\Vert\sigma_\Phi)
 =\left\lVert\rho_\Phi\sigma_\Phi^{-1/2}\right\rVert_2^2
 =\left\lVert
   \mathsf V_{\sigma,\Phi}^*(\rho\sigma^{-1/2})
  \right\rVert_2^2
 \leq\left\lVert\rho\sigma^{-1/2}\right\rVert_2^2
 =D_{x^2}^t(\rho\Vert\sigma).
\end{align*}
Finally, the constant and linear terms cancel in the DPI deficit, so
the representation \eqref{eq:consolidated-f-representation} yields
\begin{align*}
 D_f^t(\rho\Vert\sigma)
   -D_f^t(\rho_\Phi\Vert\sigma_\Phi)
 &=b\!\left[
   D_{x^2}^t(\rho\Vert\sigma)
   -D_{x^2}^t(\rho_\Phi\Vert\sigma_\Phi)\right]
+\int_0^\infty\!\left[
   D_{k_s}^t(\rho\Vert\sigma)
   -D_{k_s}^t(\rho_\Phi\Vert\sigma_\Phi)
   \right]\,\mathrm d\mu_f(s)\geq0,
\end{align*}
because $b\geq0$ and $\mu_f$ is a positive measure.  This is
\eqref{eq:consolidated-dpi}.
\end{proof}

This yields a data-processing inequality in the following particular cases.

\begin{corollary}[Relative, R\'enyi, and quadratic cases]
\label{cor:consolidated-basic-examples}
We put
\begin{equation}
 D^t:=D_{x\log x}^t,\qquad
 Q_{\alpha,t}:=
 \left\langle\sigma^{1/2},\Gamma_t^\alpha
 \sigma^{1/2}\right\rangle,\qquad
 D_\alpha^t:=\frac{\log Q_{\alpha,t}}{\alpha-1}.
\end{equation}
Then $D^t$ satisfies the DPI for every $t$.  The R\'enyi
interpolation satisfies the DPI for
\begin{equation}
 0<\alpha<1,\qquad 1<\alpha\leq2;
\end{equation}
Below one, we use $-x^\alpha$, and above one, we use $x^\alpha$.
At $\alpha=2$ the divergence is independent of $t$.  The
geodesic $\chi^2$-divergence generated by $(x-1)^2$ is also
independent of $t$.  In the displayed DPI ranges, nonnegativity also
follows by applying the DPI to the trace channel.  The stronger fact
that $D_\alpha^t\geq0$ for every $\alpha>0$, independently of whether
DPI holds, is \Cref{rw:prop:basic-spectral-properties}.
No general DPI is asserted for $\alpha>2$, as happens for geometric Rényi divergences.
\end{corollary}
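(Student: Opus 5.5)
The corollary follows from \Cref{thm:consolidated-dpi}, applied to suitable finite operator-convex generators, followed by a monotone transformation where a normalization is involved.

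\textbf{Relative entropy.} The generator $x\log x$ is operator convex on $(0,\infty)$ and extends continuously with value $0$ at $x=0$, so it is a finite operator-convex function on $[0,\infty)$. It therefore has a representation of the form \eqref{eq:consolidated-f-representation}; concretely, $x\log x$ differs from $\int_0^\infty k_s(x)\,\mathrm ds$ only by an affine term. \Cref{thm:consolidated-dpi} then gives the DPI for $D^t$ for every $t\in[0,1]$. If one prefers not to check the integrability condition on $\mu_f$ directly, one can instead use that $(x^\alpha-x)/(\alpha-1)$ converges to $x\log x$ as $\alpha\to1$, with the convergence uniform on the finite spectra involved. The DPI then passes to the limit from the power cases treated next, since the spectra of $\Gamma_t$ and $\Gamma_t^\Phi$ are fixed finite sets.

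\textbf{R\'enyi range.} For $0<\alpha<1$ the function $-x^\alpha$ is finite and operator convex on $[0,\infty)$, so the theorem gives $-Q_{\alpha,t}(\Phi(\rho)\Vert\Phi(\sigma))\leq-Q_{\alpha,t}(\rho\Vert\sigma)$. The map $y\mapsto\frac{1}{\alpha-1}\log y$ is decreasing when $\alpha<1$, so applying it turns this into the DPI for $D_\alpha^t$. For $1<\alpha\leq2$ the function $x^\alpha$ is finite and operator convex, so $Q_{\alpha,t}$ contracts under $\Phi$. Here $\frac{1}{\alpha-1}\log$ is increasing, which again gives the DPI for $D_\alpha^t$. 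Positivity of $Q_{\alpha,t}$ follows from the spectral representation \eqref{rw:eq:Renyi-spectral-moment}, so the logarithms are defined.

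\textbf{Quadratic statements.} By \eqref{eq:consolidated-quadratic-independence}, $Q_{2,t}=\operatorname{Tr}(\rho^2\sigma^{-1})$ for every $t$. By linearity (\Cref{rw:prop:elementary-structural-rules}(i)), $D^t_{(x-1)^2}=Q_{2,t}-1$, which is therefore also independent of $t$.

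\textbf{Nonnegativity.} Apply the DPI to the trace channel $\operatorname{Tr}:\mathcal B(\mathcal H)\to\mathbb C$. Both outputs equal $1$, so $\Gamma_t^\Phi=1$, and the output divergence is $f(1)$. In the relevant normalizations this value is $0$: $D^t=0$, and $Q_{\alpha,t}=1$, so $D_\alpha^t=0$. Nonnegativity then follows in the displayed DPI ranges.

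\textbf{Main obstacle.} The only nonroutine step is confirming that each generator fits the finite operator-convex class of \eqref{eq:consolidated-f-representation}. This needs a Löwner-type integral representation of $x^\alpha$, $-x^\alpha$ and $x\log x$, which I take from \cite{hiai2011quantum}. The remaining claims, including that no DPI is asserted for $\alpha>2$, are remarks that need no proof.
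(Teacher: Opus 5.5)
Your proposal is correct and matches the paper, which reads this corollary off directly from \Cref{thm:consolidated-dpi}: apply it to $x\log x$, $-x^\alpha$ and $x^\alpha$, pass through the monotone map $y\mapsto\frac{1}{\alpha-1}\log y$, obtain $t$-independence from \eqref{eq:consolidated-quadratic-independence}, and use the trace channel for nonnegativity. One small point: $x\log x=\int_0^\infty k_s(x)\,\mathrm ds$ holds exactly, so the affine correction you allow for is zero and the limiting argument from the power cases is not needed.
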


\begin{corollary}[Joint convexity and the R\'enyi distinction]
\label{cor:consolidated-joint-convexity}
For every operator-convex $f$ in the range of
\Cref{thm:consolidated-dpi}, the map
$(\rho,\sigma)\mapsto D_f^t(\rho\Vert\sigma)$ is jointly convex.
Consequently, $Q_{\alpha,t}$ is jointly concave for
$0<\alpha<1$ and jointly convex for $1<\alpha\leq2$.  The
normalized R\'enyi divergence $D_\alpha^t$ is jointly convex for
$0<\alpha<1$, but for $1<\alpha\leq2$ the general consequence is
only joint quasi-convexity:
\begin{equation}
 D_\alpha^t\!\left(\sum_i p_i\rho_i\middle\Vert
                         \sum_i p_i\sigma_i\right)
 \leq\max_i D_\alpha^t(\rho_i\Vert\sigma_i).
 \label{eq:consolidated-Renyi-quasiconvexity}
\end{equation}
\end{corollary}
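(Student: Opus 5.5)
The plan is to reduce joint convexity to the data-processing inequality of \Cref{thm:consolidated-dpi} through the standard flag construction. Given invertible pairs $(\rho_i,\sigma_i)$, $i=1,\dots,m$, and a probability vector $p$ with $p_i>0$, I would form the block-diagonal states
\[
 \rho^\oplus:=\bigoplus_i p_i\rho_i,\qquad \sigma^\oplus:=\bigoplus_i p_i\sigma_i
\]
on $\mathcal H\otimes\mathbb C^m$. Both are invertible. The common-flag case of \Cref{rw:prop:elementary-structural-rules-2} then gives
\[
 D_f^t(\rho^\oplus\Vert\sigma^\oplus)=\sum_i p_iD_f^t(\rho_i\Vert\sigma_i).
\]
The partial trace over the flag register is a channel, and it maps $\rho^\oplus\mapsto\sum_ip_i\rho_i$ and $\sigma^\oplus\mapsto\sum_ip_i\sigma_i$. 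Applying \Cref{thm:consolidated-dpi} to it yields joint convexity of $D_f^t$. Zero weights are simply dropped.

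For the power functionals, I would apply this to $f=-x^\alpha$ when $0<\alpha<1$ and to $f=x^\alpha$ when $1<\alpha\le2$. These are finite operator-convex generators, as recorded in \Cref{cor:consolidated-basic-examples}. This gives joint concavity, respectively joint convexity, of $Q_{\alpha,t}$.

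For the normalized quantities, the case $0<\alpha<1$ uses that $-\log$ is convex and nonincreasing and that $Q_{\alpha,t}$ is jointly concave. Then
\[
 \frac{1}{\alpha-1}\log Q_{\alpha,t}=\frac{1}{1-\alpha}\bigl(-\log Q_{\alpha,t}\bigr)
\]
is a convex nonincreasing function composed with a concave function, so it is convex.

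For $1<\alpha\le2$, joint convexity of $Q_{\alpha,t}$ gives
\[
 Q_{\alpha,t}\Bigl(\sum_ip_i\rho_i\Big\Vert\sum_ip_i\sigma_i\Bigr)\le\sum_ip_iQ_{\alpha,t}(\rho_i\Vert\sigma_i)\le\max_iQ_{\alpha,t}(\rho_i\Vert\sigma_i).
\]
Because $x\mapsto(\alpha-1)^{-1}\log x$ is increasing, this is exactly \eqref{eq:consolidated-Renyi-quasiconvexity}.

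No step is a real obstacle. The only points needing care are these:
\begin{itemize}
\item The flag identity is available in exactly the common-flag form with invertible blocks.
\item The output of the partial trace is invertible, so no support compression issue arises.
\end{itemize}

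To justify the phrase ``only quasi-convexity'', one could add a remark. The logarithm of a convex function need not be convex. Exhibiting an actual failure of convexity for some commuting (classical) pairs at $1<\alpha\le2$ would settle it, since classical R\'enyi divergences of order $\alpha>1$ are known not to be jointly convex in general. That witness would be optional, not part of the proof.
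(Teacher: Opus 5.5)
Your proposal is correct and follows essentially the same route as the paper. The paper also forms the block-diagonal states with a classical register (your common-flag identity), applies the DPI of \Cref{thm:consolidated-dpi} to $\operatorname{Tr}_F$, and then uses that $q\mapsto(\alpha-1)^{-1}\log q$ is convex and decreasing for $0<\alpha<1$ and increasing for $1<\alpha\leq2$. Your optional classical witness against joint convexity goes slightly beyond the paper, which only remarks that concavity of the logarithm blocks the implication.
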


\begin{proof}
We use the standard block-diagonal argument from Petz's proof of
joint convexity of quasi-entropies
\cite[Theorem~6]{petz1986quasi}.
We take $p_i>0$ with $\sum_i p_i=1$; zero weights may simply be omitted.
We introduce an auxiliary classical register $F$ with orthonormal basis
$\{\lvert i\rangle\}_i$ and define the block-diagonal states
\begin{align*}
 \widehat\rho_{FA}
 &:=\sum_i p_i\lvert i\rangle\!\langle i\rvert_F\otimes\rho_i,
 &
 \widehat\sigma_{FA}
 &:=\sum_i p_i\lvert i\rangle\!\langle i\rvert_F\otimes\sigma_i.
\end{align*}
Their partial traces over $F$ are precisely the mixtures
\begin{equation*}
 \operatorname{Tr}_F\widehat\rho_{FA}=\sum_i p_i\rho_i,
 \qquad
 \operatorname{Tr}_F\widehat\sigma_{FA}=\sum_i p_i\sigma_i,
\end{equation*}
and orthogonality of the blocks 
gives
\begin{equation}
 D_f^t(\widehat\rho_{FA}\Vert\widehat\sigma_{FA})
 =\sum_i p_iD_f^t(\rho_i\Vert\sigma_i).
\end{equation}
The map $\operatorname{Tr}_F$ is a quantum channel.  Applying
\Cref{thm:consolidated-dpi} to this particular channel yields
\begin{align*}
 D_f^t\!\left(\sum_i p_i\rho_i\middle\Vert
                    \sum_i p_i\sigma_i\right)
 =D_f^t\!\left(
     \operatorname{Tr}_F\widehat\rho_{FA}\middle\Vert
     \operatorname{Tr}_F\widehat\sigma_{FA}\right)
 \leq D_f^t(\widehat\rho_{FA}\Vert\widehat\sigma_{FA})
 =\sum_i p_iD_f^t(\rho_i\Vert\sigma_i),
\end{align*}
which is exactly joint convexity.

It remains to spell out the R\'enyi consequences.  We write
$\bar\rho:=\sum_i p_i\rho_i$ and
$\bar\sigma:=\sum_i p_i\sigma_i$.  For $0<\alpha<1$, the
operator-convex generator is $-x^\alpha$, and the preceding joint
convexity inequality becomes
\begin{equation*}
 Q_{\alpha,t}(\bar\rho\Vert\bar\sigma)
 \geq\sum_i p_iQ_{\alpha,t}(\rho_i\Vert\sigma_i).
\end{equation*}
Thus $Q_{\alpha,t}$ is jointly concave.  The scalar function
\begin{equation*}
 h_\alpha(q):=\frac{\log q}{\alpha-1}
\end{equation*}
is convex and decreasing in this range.  Consequently,
\begin{align*}
 D_\alpha^t(\bar\rho\Vert\bar\sigma)
 =h_\alpha\!\left(Q_{\alpha,t}(\bar\rho\Vert\bar\sigma)\right)
 \leq h_\alpha\!\left(
       \sum_i p_iQ_{\alpha,t}(\rho_i\Vert\sigma_i)\right)
 \leq\sum_i p_i
       h_\alpha\!\left(Q_{\alpha,t}(\rho_i\Vert\sigma_i)\right)
 =\sum_i p_iD_\alpha^t(\rho_i\Vert\sigma_i),
\end{align*}
where the last inequality is the scalar Jensen inequality.

For $1<\alpha\leq2$, applying joint convexity to $x^\alpha$ instead
gives
\begin{equation*}
 Q_{\alpha,t}(\bar\rho\Vert\bar\sigma)
 \leq\sum_i p_iQ_{\alpha,t}(\rho_i\Vert\sigma_i).
\end{equation*}
Since $q\mapsto(\alpha-1)^{-1}\log q$ is increasing,
\begin{align*}
 D_\alpha^t(\bar\rho\Vert\bar\sigma)
 \leq\frac{1}{\alpha-1}
       \log\!\left(\sum_i p_i
       Q_{\alpha,t}(\rho_i\Vert\sigma_i)\right)
 \leq\frac{1}{\alpha-1}
       \log\!\left(\max_i
       Q_{\alpha,t}(\rho_i\Vert\sigma_i)\right)
 =\max_iD_\alpha^t(\rho_i\Vert\sigma_i),
\end{align*}
which is \eqref{eq:consolidated-Renyi-quasiconvexity}.  The logarithm
is concave in this range, so joint convexity of $Q_{\alpha,t}$ alone
does not imply joint convexity of $D_\alpha^t$.
\end{proof}

\begin{remark}[Non-invertible inputs]
The qualitative DPI and its convexity consequences extend to the
support-regularized divergences.  If
$\operatorname{supp}\rho\leq\operatorname{supp}\sigma=:P$, we work
on $P\mathcal H$, restrict the output to
$Q:=\operatorname{supp}\Phi(\sigma)$, and interpret inverses on these
supports.  Applying the invertible result to
$\rho_\varepsilon=(1-\varepsilon)\rho+\varepsilon\sigma$ and letting
$\varepsilon\downarrow0$ proves the claim.  The usual simultaneous
full-rank regularization gives the corresponding extended-valued
statement for arbitrary pairs whenever that boundary extension is
defined.  This limiting argument proves inequalities only and is not
used below to infer equality conditions.
\end{remark}

\subsection{Strengthened data-processing inequality}

As we saw above, the qualitative DPI is obtained by discarding positive terms.  However,
to quantify the deficit in the DPI, we retain those terms and follow the resolvent
strategy of
\cite{carlen2018recovery,bluhm2020strengthened}.

We write
\begin{equation}
 \delta_{f,t}^{\Phi}(\rho,\sigma)
 :=D_f^t(\rho\Vert\sigma)
 -D_f^t(\Phi(\rho)\Vert\Phi(\sigma)).
 \label{eq:consolidated-dpi-defect}
\end{equation}
For the logarithmic generator associated to the $t$-relative entropies, we abbreviate
$\delta_t^\Phi:=\delta_{x\log x,t}^\Phi(\rho,\sigma)$.

The next lemma is an abstract square-completion form of
\cite[Lemma~2.1]{carlen2020recovery-map} and is closely related to
\cite[Lemma~5.1]{bluhm2020strengthened}.

\begin{lemma}[Resolvent square completion]
\label{lem:consolidated-resolvent}
We take $A,B>0$ and a contraction $W$ with $W^*AW\leq B$,
and we suppose that $x=Wy$ and
$\lVert x\rVert=\lVert y\rVert$.  For $s>0$, we set
\begin{equation}
 u_s:=(A+s)^{-1}x-W(B+s)^{-1}y.
\end{equation}
Then
\begin{equation}
 \langle x,(A+s)^{-1}x\rangle
 -\langle y,(B+s)^{-1}y\rangle
 \geq s\lVert u_s\rVert^2.
 \label{eq:consolidated-resolvent-square}
\end{equation}
\end{lemma}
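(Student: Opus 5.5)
The plan is to complete a square directly in the Hilbert space. First, define the "optimal" resolvent vectors $a:=(A+s)^{-1}x$ and $b:=(B+s)^{-1}y$, so that $u_s=a-Wb$. Expanding the squared norm gives
\begin{equation*}
 \lVert u_s\rVert^2=\langle a,a\rangle-2\,\mathrm{Re}\langle a,Wb\rangle+\langle Wb,Wb\rangle .
\end{equation*}
The aim is to express the left-hand side of \eqref{eq:consolidated-resolvent-square} through quadratic forms of $A+s$ and $B+s$. We use $\langle x,(A+s)^{-1}x\rangle=\langle a,(A+s)a\rangle$ and $x=(A+s)a$. Likewise $\langle y,(B+s)^{-1}y\rangle=\langle b,(B+s)b\rangle$ with $y=(B+s)b$.

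The key identity comes from the cross term. Since $x=Wy$,
\begin{equation*}
 \langle a,x\rangle=\langle a,W(B+s)b\rangle .
\end{equation*}
Also $\langle x,(A+s)^{-1}x\rangle=\langle a,x\rangle$ is real, and it equals $\langle(A+s)a,a\rangle$. Then compute
\begin{equation*}
 \langle u_s,(A+s)u_s\rangle
 =\langle a,(A+s)a\rangle-2\,\mathrm{Re}\langle Wb,(A+s)a\rangle+\langle Wb,(A+s)Wb\rangle .
\end{equation*}
The middle term is $\langle Wb,x\rangle=\langle Wb,Wy\rangle$. Using the hypothesis $W^*AW\le B$ in the last term, we get $\langle Wb,AWb\rangle\le\langle b,Bb\rangle$, and $s\lVert Wb\rVert^2\le s\lVert b\rVert^2$ because $W$ is a contraction. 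Hence
\begin{equation*}
 \langle Wb,(A+s)Wb\rangle\le\langle b,(B+s)b\rangle=\langle b,y\rangle .
\end{equation*}
It then remains to handle $\mathrm{Re}\langle Wb,Wy\rangle$ versus $\langle b,y\rangle$. This is where $\lVert x\rVert=\lVert y\rVert$ enters: it says $W^*W y$ agrees with $y$. Indeed, $\lVert Wy\rVert=\lVert y\rVert$ together with $\lVert W\rVert\le1$ gives $\lVert(I-W^*W)^{1/2}y\rVert^2=0$, so $W^*Wy=y$. Therefore $\langle Wb,Wy\rangle=\langle b,y\rangle$, which is real and positive.

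Combining these facts gives
\begin{equation*}
 \langle u_s,(A+s)u_s\rangle\le\langle a,x\rangle-2\langle b,y\rangle+\langle b,y\rangle=\langle x,(A+s)^{-1}x\rangle-\langle y,(B+s)^{-1}y\rangle .
\end{equation*}
Finally, $A>0$ gives $\langle u_s,(A+s)u_s\rangle\ge s\lVert u_s\rVert^2$, which is \eqref{eq:consolidated-resolvent-square}.

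The only delicate step is the identity $W^*Wy=y$ extracted from the norm equality with a contraction. This identity is what makes the two cross terms collapse to $\langle b,y\rangle$. I expect no other obstacle; the rest is bookkeeping of real parts.
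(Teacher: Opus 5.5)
Your proof is correct and takes essentially the same route as the paper. With $P=I-W^*W$ and $Q=B-W^*AW$, the paper uses $Py=0$ (your identity $W^*Wy=y$) to write the deficit exactly as $\langle u_s,(A+s)u_s\rangle+\langle b,Qb\rangle+s\langle b,Pb\rangle$, whereas you drop the two nonnegative remainders at once through $\langle Wb,(A+s)Wb\rangle\le\langle b,(B+s)b\rangle$, and both arguments then conclude from $A+s\geq s$.
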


\begin{proof}
We put $P=I-W^*W\geq0$, $Q=B-W^*AW\geq0$,
$a=(A+s)^{-1}x$, and $b=(B+s)^{-1}y$.  Since
$Py=0$, completing the square gives
\begin{align*}
 \langle x,(A+s)^{-1}x\rangle
 -\langle y,(B+s)^{-1}y\rangle=
 \langle a-Wb,(A+s)(a-Wb)\rangle
 +\langle b,Qb\rangle+s\langle b,Pb\rangle.
\end{align*}
Since $A+sI\geq sI$ and $P,Q\geq0$, the right-hand side is at least
$s\lVert a-Wb\rVert^2=s\lVert u_s\rVert^2$, which is precisely
\eqref{eq:consolidated-resolvent-square}.
\end{proof}

\begin{remark}[Semidefinite version]
\Cref{lem:consolidated-resolvent} remains valid for $A,B\geq0$:
the proof uses only the shifted inverses $(A+sI)^{-1}$ and
$(B+sI)^{-1}$ with $s>0$.
\end{remark}

This lemma is used to prove the following general lower bound for the deficit. 

\begin{theorem}[General resolvent strengthening]
\label{thm:consolidated-resolvent-dpi}
For
\begin{equation}
 w_{s,t}:=(\Gamma_t+s)^{-1}\sigma^{1/2}
 -\mathsf V_{\sigma,\Phi}
  (\Gamma_t^\Phi+s)^{-1}[\Phi(\sigma)]^{1/2},
 \label{eq:consolidated-wst}
\end{equation}
one has
\begin{align}
 \delta_{f,t}^{\Phi}(\rho,\sigma)
 \geq{}&b\left[
 D_{x^2}^t(\rho\Vert\sigma)
 -D_{x^2}^t(\Phi(\rho)\Vert\Phi(\sigma))\right]
 +\int_0^\infty s^2\lVert w_{s,t}\rVert_2^2
 \,\mathrm d\mu_f(s).
 \label{eq:consolidated-general-remainder}
\end{align}
\end{theorem}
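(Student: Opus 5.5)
We follow the decomposition from the proof of \Cref{thm:consolidated-dpi}, keeping the positive terms that were discarded there. Write $f$ via \eqref{eq:consolidated-f-representation}. The constant and linear parts cancel in the deficit, as already observed, and the quadratic part contributes exactly the bracket multiplied by $b$. By linearity of $f\mapsto D_f^t$ and Fubini (harmless in finite dimensions for a fixed invertible pair, since the spectra of $\Gamma_t,\Gamma_t^\Phi$ lie in a compact subset of $(0,\infty)$), it therefore suffices to prove, for each $s>0$,
\begin{equation*}
 D_{k_s}^t(\rho\Vert\sigma)-D_{k_s}^t(\Phi(\rho)\Vert\Phi(\sigma))
 \geq s^2\lVert w_{s,t}\rVert_2^2 .
\end{equation*}
The linear part $x/(1+s)$ of $k_s$ drops out, so only $\varphi_s(x)=-x/(x+s)$ matters.

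Rewrite the kernel as $-x/(x+s)=-1+s(x+s)^{-1}$. The constant $-1$ again cancels because both distinguished vectors have unit norm. The $\varphi_s$-deficit is then
\begin{equation*}
 s\Big(\big\langle\sigma^{1/2},(\Gamma_t+s)^{-1}\sigma^{1/2}\big\rangle
 -\big\langle[\Phi(\sigma)]^{1/2},(\Gamma_t^\Phi+s)^{-1}[\Phi(\sigma)]^{1/2}\big\rangle\Big).
\end{equation*}
Now apply \Cref{lem:consolidated-resolvent} with the data
\begin{equation*}
 A=\Gamma_t,\quad B=\Gamma_t^\Phi,\quad W=\mathsf V_{\sigma,\Phi},\quad x=\sigma^{1/2},\quad y=[\Phi(\sigma)]^{1/2}.
\end{equation*}
Its hypotheses are supplied by \Cref{lem:consolidated-transformer}:
\begin{itemize}
 \item $W$ is a contraction;
 \item $W^*AW\leq B$;
 \item $Wy=x$;
 \item $\lVert x\rVert_2=\lVert y\rVert_2=1$, since both states have trace one.
\end{itemize}
The vector $u_s$ of the lemma is exactly $w_{s,t}$ from \eqref{eq:consolidated-wst}. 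The lemma gives a bracket at least $s\lVert w_{s,t}\rVert_2^2$, and multiplying by the prefactor $s$ yields the pointwise bound $s^2\lVert w_{s,t}\rVert_2^2$.

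Integrating this bound against $\mathrm d\mu_f(s)\geq0$ and adding the $b$-term gives \eqref{eq:consolidated-general-remainder}. The only point needing care is the sign and normalization bookkeeping when passing from $\varphi_s$ to the resolvent. Note that $\varphi_s(x)=-1+s/(x+s)$, so the resolvent appears with the positive factor $s$, and the larger input-side resolvent expectation corresponds to a larger input divergence. Everything else is direct substitution, so I expect no real obstacle beyond checking that $Py=0$ holds in the lemma, which follows from $\lVert Wy\rVert=\lVert y\rVert$ together with $\lVert W\rVert\leq1$.
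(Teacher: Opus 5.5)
Your proposal is correct and is essentially the paper's own proof: it rewrites the kernel as $-x/(x+s)=-1+s(x+s)^{-1}$, applies \Cref{lem:consolidated-resolvent} with $(A,B,W,x,y)=(\Gamma_t,\Gamma_t^\Phi,\mathsf V_{\sigma,\Phi},\sigma^{1/2},[\Phi(\sigma)]^{1/2})$ to obtain the second factor of $s$, and integrates against $\mu_f$. Your extra bookkeeping is accurate and is left implicit in the paper: the cancellation of the constant and linear terms, and the exchange of the $\mu_f$-integral with the finite spectral sums, which is justified by $\int(1+s)^{-2}\,\mathrm d\mu_f(s)<\infty$.
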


\begin{proof}
For the integral kernel, we use
$-x/(x+s)=-1+s(x+s)^{-1}$.  Its DPI defect is $s$ times the
left-hand side of \eqref{eq:consolidated-resolvent-square} with
\begin{equation}
 (A,B,W,x,y)
 =(\Gamma_t,\Gamma_t^\Phi,\mathsf V_{\sigma,\Phi},
   \sigma^{1/2},[\Phi(\sigma)]^{1/2}).
\end{equation}
\Cref{lem:consolidated-resolvent} supplies the second factor of
$s$, and integration proves the result.
\end{proof}

\begin{remark}[Support extension of the resolvent remainder]
The preceding remainder also holds when
$\operatorname{supp}\rho\leq\operatorname{supp}\sigma$, after
compression to the input and output reference supports and with
support inverses.  The resulting modular operators may be
semidefinite, but every resolvent in the proof is shifted by $s>0$.
The endpoint formulas are understood through the same invertible
regularization as in the qualitative DPI.
\end{remark}

\subsection{From resolvent errors to a single intrinsic defect}

The preceding theorem controls a continuum of resolvent errors.  For
the explicit estimates below, we compress that information into one
Hilbert--Schmidt quantity.  We first pass to the \emph{transpose
function} of $f$, defined by
$\widetilde f(x):=xf(x^{-1})$, and introduce the comparison
contraction
\begin{equation}
 \mathsf U_t:=\Gamma_t^{1/2}
 \mathsf V_{\sigma,\Phi}(\Gamma_t^\Phi)^{-1/2}.
 \label{eq:consolidated-transpose-objects}
\end{equation}
This is the standard transpose construction for quasi-entropies
\cite[Sections~2 and~7]{carlen2018recovery}; a related treatment appears in
\cite[Theorem~6.1]{bluhm2020strengthened}.  More precisely,
substituting \eqref{eq:consolidated-transpose-objects} and using
\Cref{lem:consolidated-transformer}, together with
$\mathsf V_{\sigma,\Phi}^*\mathsf V_{\sigma,\Phi}\leq I$, gives
\begin{align}
 \mathsf U_t^*\mathsf U_t\leq I,\quad \qquad
 \mathsf U_t(\Gamma_t^\Phi)^{1/2}[\Phi(\sigma)]^{1/2}
 =\Gamma_t^{1/2}\sigma^{1/2}, \quad \qquad
 \mathsf U_t^*\Gamma_t^{-1}\mathsf U_t
 \leq(\Gamma_t^\Phi)^{-1},
 \label{eq:consolidated-transpose-relations}
\end{align}
and
\begin{equation}
 D_f^t(\rho\Vert\sigma)
 =\left\langle\Gamma_t^{1/2}\sigma^{1/2},
 \widetilde f(\Gamma_t^{-1})
 \Gamma_t^{1/2}\sigma^{1/2}\right\rangle.
\end{equation}
We set
\begin{equation}
 q_t^\Phi(\rho,\sigma)
 :=\left\lVert\sigma^{1/2}
 -\mathsf U_t[\Phi(\sigma)]^{1/2}\right\rVert_2.
 \label{eq:consolidated-qt}
\end{equation}
We define
\begin{equation}
 \mathcal E_t^\Phi(\rho,\sigma)
 :=\mathcal L_t(\rho\mid\sigma)
 -\Phi^*\!\left(\mathcal L_t(\Phi(\rho)\mid\Phi(\sigma))\right).
 \label{eq:consolidated-recovery-residual}
\end{equation}
Here, anticipating the recovery section, this last term is defined in terms of the geodesic likelihood
ratio, which is given by
\begin{equation}
 \mathcal L_t(R\mid S)
 :=R^{(t-1)/2}S^{1/2}
 (S^{-1/2}R^tS^{-1/2})^{-1/2}S^{-1/2}.
\end{equation}
The definitions give the exact identity
\begin{equation}
 q_t^\Phi(\rho,\sigma)
 =\left\lVert\Gamma_t^{1/2}
 \mathcal E_t^\Phi(\rho,\sigma)\sigma^{1/2}\right\rVert_2.
 \label{eq:consolidated-qt-likelihood}
\end{equation}
Consequently, denoting by $\lambda_{\min}(\Gamma_t)$ the smallest
eigenvalue of $\Gamma_t$,
\begin{equation}
 \lambda_{\min}(\Gamma_t)^{1/2}\lVert
 \mathcal E_t^\Phi(\rho,\sigma)\sigma^{1/2}\rVert_2
 \leq q_t^\Phi
 \leq\lVert\Gamma_t\rVert_\infty^{1/2}
       \lVert\mathcal E_t^\Phi(\rho,\sigma)
       \sigma^{1/2}\rVert_2.
 \label{eq:consolidated-qt-weighted-residual}
\end{equation}
Thus $q_t^\Phi$ is exactly a weighted likelihood-ratio residual,
not a trace distance from a recovered state.
Its exact zero set and recovery interpretation are deliberately
deferred to \Cref{sec:consolidated-recovery}; the quantitative
arguments below use only the intrinsic residual itself.

We assume that $\widetilde f$ is operator monotone decreasing and,
modulo affine terms, has the Stieltjes representation
\begin{equation}
 \widetilde f(x)=c_0+c_1x+
 \int_0^\infty\left(\frac1{x+s}-\frac1{1+s}\right)
 \,\mathrm d\nu_{\widetilde f}(s).
 \label{eq:consolidated-transpose-measure}
\end{equation}
This is the representation in
\cite[Section~2, Eq.~(2.1)]{carlen2018recovery}.
We suppose that there are $a\geq0$ and $C>0$ such that, for every
$R\geq1$,
\begin{equation}
 \mathrm ds\leq C R^{2a}\,\mathrm d\nu_{\widetilde f}(s)
 \quad\text{on }[R^{-1},R].
 \label{eq:consolidated-measure-regularity}
\end{equation}

\begin{theorem}[Regular-transpose strengthening]
\label{thm:consolidated-regular-transpose}
We put
\begin{equation}
 M_t:=\max\{\lVert\Gamma_t^{-1}\rVert_\infty,
                  \lVert(\Gamma_t^\Phi)^{-1}\rVert_\infty\}.
\end{equation}
For every $R\geq1$,
\begin{equation}
 \delta_{f,t}^{\Phi}(\rho,\sigma)
 \geq\frac{\pi^2}{C R^{2a+1}}
 \left(q_t^\Phi(\rho,\sigma)
 -\frac{4(1+M_t)}{\pi\sqrt R}\right)_+^2.
 \label{eq:consolidated-scale-bound}
\end{equation}
In particular,
\begin{equation}
 \delta_{f,t}^{\Phi}(\rho,\sigma)
 \geq
 \frac{\pi^{4a+4}}
 {4C\,8^{4a+2}(1+M_t)^{4a+2}}
 \bigl(q_t^\Phi(\rho,\sigma)\bigr)^{4(a+1)}.
 \label{eq:consolidated-power-bound}
\end{equation}
\end{theorem}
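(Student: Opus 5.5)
We follow the transpose strategy of Carlen--Vershynina. Using the transpose representation, we write $D_f^t(\rho\Vert\sigma)=\langle x,\widetilde f(\Gamma_t^{-1})x\rangle$ with $x:=\Gamma_t^{1/2}\sigma^{1/2}$. At the output we write $D_f^t(\Phi(\rho)\Vert\Phi(\sigma))=\langle y,\widetilde f((\Gamma_t^\Phi)^{-1})y\rangle$ with $y:=(\Gamma_t^\Phi)^{1/2}[\Phi(\sigma)]^{1/2}$. By \eqref{eq:consolidated-transpose-relations}, we have $x=\mathsf U_t y$ and $\lVert x\rVert_2^2=\lVert y\rVert_2^2=1$, by \eqref{rw:eq:gamma-normalization} at input and output. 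In addition, $\mathsf U_t$ is a contraction with $\mathsf U_t^*\Gamma_t^{-1}\mathsf U_t\leq(\Gamma_t^\Phi)^{-1}$.

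Insert the Stieltjes representation \eqref{eq:consolidated-transpose-measure}. The affine terms contribute equally at input and output: the constant term because the norms agree, and the linear term because $\langle x,\Gamma_t^{-1}x\rangle=\lVert\sigma^{1/2}\rVert^2=1=\langle y,(\Gamma_t^\Phi)^{-1}y\rangle$. Hence
\[
\delta_{f,t}^\Phi=\int_0^\infty\Big(\langle x,(\Gamma_t^{-1}+s)^{-1}x\rangle-\langle y,((\Gamma_t^\Phi)^{-1}+s)^{-1}y\rangle\Big)\,\mathrm d\nu_{\widetilde f}(s).
\]
Apply \Cref{lem:consolidated-resolvent} with $(A,B,W)=(\Gamma_t^{-1},(\Gamma_t^\Phi)^{-1},\mathsf U_t)$. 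Each integrand is then at least $s\lVert v_s\rVert^2$, where $v_s:=(\Gamma_t^{-1}+s)^{-1}x-\mathsf U_t((\Gamma_t^\Phi)^{-1}+s)^{-1}y$. Restricting to $[R^{-1},R]$ and using \eqref{eq:consolidated-measure-regularity}, we obtain
\[
\delta_{f,t}^\Phi\geq (CR^{2a})^{-1}\int_{1/R}^R s\lVert v_s\rVert^2\,\mathrm ds .
\]

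Next we reconstruct $q_t^\Phi$ from the $v_s$ through an integral identity. For a positive operator $T$,
\[
\frac1\pi\int_0^\infty s^{-1/2}(T+s)^{-1}\,\mathrm ds=T^{-1/2}.
\]
With $T=\Gamma_t^{-1}$ this gives $T^{-1/2}x=\Gamma_t^{1/2}\Gamma_t^{1/2}\sigma^{1/2}$. This is not quite the desired vector, so instead I would use $\frac1\pi\int_0^\infty s^{-1/2}\,T(T+s)^{-1}\,\mathrm ds=T^{1/2}$, or equivalently insert a factor $T$. With $T=\Gamma_t^{-1}$, this yields $\Gamma_t^{-1/2}x=\sigma^{1/2}$, and analogously $(\Gamma_t^\Phi)^{-1/2}y=[\Phi(\sigma)]^{1/2}$. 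The extra $T$ is harmless because $T(T+s)^{-1}=I-s(T+s)^{-1}$. Thus
\[
\sigma^{1/2}-\mathsf U_t[\Phi(\sigma)]^{1/2}=\frac1\pi\int_0^\infty s^{-1/2}\big(x-\mathsf U_t y-s v_s\big)\,\mathrm ds=-\frac1\pi\int_0^\infty s^{1/2}v_s\,\mathrm ds,
\]
since $x=\mathsf U_t y$. Splitting the integral at $R^{-1}$ and $R$ gives $q_t^\Phi\leq\frac1\pi\int_{1/R}^R s^{1/2}\lVert v_s\rVert\,\mathrm ds+\text{tails}$.

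For the tails we use crude bounds. Since $\lVert\Gamma_t^{-1}\rVert\leq M_t$, $\lVert(\Gamma_t^\Phi)^{-1}\rVert\leq M_t$, $\lVert\mathsf U_t\rVert\leq1$ and $\lVert x\rVert=\lVert y\rVert=1$, we have $\lVert v_s\rVert\leq 2/s$ for large $s$. For small $s$ we use the form $s v_s=\mathsf U_t T_\Phi(T_\Phi+s)^{-1}y-T(T+s)^{-1}x$ together with $\lVert T(T+s)^{-1}-I\rVert\leq s\lVert T^{-1}\rVert$, noting that $T^{-1}=\Gamma_t$ may be large. A cleaner choice is to bound $\lVert s^{1/2}v_s\rVert\leq 2M_t s^{1/2}$ for small $s$, using $\lVert(T+s)^{-1}\rVert\leq\lVert T^{-1}\rVert$ with $T^{-1}=\Gamma_t$. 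I will aim to bound the tails by $\frac{4(1+M_t)}{\pi\sqrt R}$. This means arranging the resolvent bounds so that the $s\in(0,R^{-1})$ tail is $\leq\frac{4M_t}{3\pi}R^{-3/2}\le\frac{2M_t}{\pi\sqrt R}$ and the $s>R$ tail is $\leq\frac{4}{\pi\sqrt R}$, possibly after swapping which operator carries $M_t$. Getting exactly these constants, and choosing the right small-$s$ bound, is the fiddly step I expect to be the main obstacle.

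On the middle range, Cauchy--Schwarz gives
\[
\int_{1/R}^R s^{1/2}\lVert v_s\rVert\,\mathrm ds\leq\Big(\int_{1/R}^R \mathrm ds\Big)^{1/2}\Big(\int_{1/R}^R s\lVert v_s\rVert^2\,\mathrm ds\Big)^{1/2}\leq R^{1/2}\big(CR^{2a}\delta_{f,t}^\Phi\big)^{1/2}.
\]
Rearranging yields \eqref{eq:consolidated-scale-bound}. Finally, choose $R$ so that $\frac{4(1+M_t)}{\pi\sqrt R}=q_t^\Phi/2$, namely $R=\big(8(1+M_t)/(\pi q_t^\Phi)\big)^2$. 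This choice satisfies $R\geq1$ because $q_t^\Phi\leq2$ and $M_t\geq1$. Substituting it gives \eqref{eq:consolidated-power-bound}, since $R^{2a+1}$ contributes the factor $(8(1+M_t))^{4a+2}\pi^{-(4a+2)}(q_t^\Phi)^{-(4a+2)}$ and the square contributes $(q_t^\Phi)^2/4$.
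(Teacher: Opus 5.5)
Your overall route is the paper's own, and most of it is correct: the transpose representation with $x=\Gamma_t^{1/2}\sigma^{1/2}$ and $y=(\Gamma_t^\Phi)^{1/2}[\Phi(\sigma)]^{1/2}$; the resolvent square completion with $(A,B,W)=(\Gamma_t^{-1},(\Gamma_t^\Phi)^{-1},\mathsf U_t)$; the square-root identity $\sigma^{1/2}-\mathsf U_t[\Phi(\sigma)]^{1/2}=-\frac1\pi\int_0^\infty s^{1/2}v_s\,\mathrm ds$; Cauchy--Schwarz on $[R^{-1},R]$; and the final choice of $R$. The gap is in the step you flagged yourself. The tail estimates you propose are assigned to the wrong tails, and neither works as stated.

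The crude bound $\lVert v_s\rVert_2\le 2/s$ cannot control the upper tail, because
\[
\frac1\pi\int_R^\infty s^{1/2}\cdot 2s^{-1}\,\mathrm ds=\infty .
\]
Your small-$s$ bound also fails. The estimate $\lVert(\Gamma_t^{-1}+s)^{-1}\rVert_\infty\le\lVert\Gamma_t\rVert_\infty$ involves $\lVert\Gamma_t\rVert_\infty$, which $M_t$ does not control: $M_t$ bounds $\lVert\Gamma_t^{-1}\rVert_\infty$, not $\lVert\Gamma_t\rVert_\infty$. Indeed, $\lVert(A+s)^{-1}x\rVert_2^2\to\int\lambda^3\,\mathrm d\nu_t(\lambda)$ as $s\downarrow0$, which is a third moment of $\Gamma_t$. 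So the claimed inequality $\lVert s^{1/2}v_s\rVert_2\le 2M_ts^{1/2}$ is false in general.

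The fix is to swap the two estimates and to use the cancellation $x=\mathsf U_t y$ at infinity. Write $(A+s)^{-1}=s^{-1}-s^{-1}(A+s)^{-1}A$, and likewise for $B$. The identity $x=\mathsf U_ty$ then removes the $s^{-1}$ terms:
\[
v_s=s^{-1}\bigl[\mathsf U_t(B+s)^{-1}By-(A+s)^{-1}Ax\bigr],\qquad \lVert v_s\rVert_2\le s^{-2}\bigl(\lVert A\rVert_\infty+\lVert B\rVert_\infty\bigr)\le 2M_ts^{-2}.
\]
Hence the upper tail is at most $\frac1\pi\int_R^\infty 2M_ts^{-3/2}\,\mathrm ds=\frac{4M_t}{\pi\sqrt R}$.

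On $(0,R^{-1})$, the crude bound $\lVert v_s\rVert_2\le 2/s$ is now integrable against $s^{1/2}$. It follows from $\lVert(A+s)^{-1}\rVert_\infty\le s^{-1}$, $\lVert\mathsf U_t\rVert\le1$, and $\lVert x\rVert_2=\lVert y\rVert_2=1$, and it gives $\frac1\pi\int_0^{1/R}2s^{-1/2}\,\mathrm ds=\frac{4}{\pi\sqrt R}$.

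The two tails sum to exactly $\frac{4(1+M_t)}{\pi\sqrt R}$. With this correction your argument yields \eqref{eq:consolidated-scale-bound}, and then \eqref{eq:consolidated-power-bound}, as in the paper.
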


\begin{proof}
We apply \Cref{lem:consolidated-resolvent} with
\begin{equation}
 (A,B,W,x,y)=\left(
 \Gamma_t^{-1},(\Gamma_t^\Phi)^{-1},\mathsf U_t,
 \Gamma_t^{1/2}\sigma^{1/2},
 (\Gamma_t^\Phi)^{1/2}[\Phi(\sigma)]^{1/2}\right).
\end{equation}
This gives
\begin{equation}
 \delta_{f,t}^{\Phi}\geq
 \int_0^\infty s\lVert e_s\rVert_2^2
 \,\mathrm d\nu_{\widetilde f}(s),
\quad
 e_s=(\Gamma_t^{-1}+s)^{-1}\Gamma_t^{1/2}\sigma^{1/2}
 -\mathsf U_t\bigl((\Gamma_t^\Phi)^{-1}+s\bigr)^{-1}
 (\Gamma_t^\Phi)^{1/2}[\Phi(\sigma)]^{1/2}.
\end{equation}
The required normalization and first-moment identities read
\begin{equation}
 \left\lVert\Gamma_t^{1/2}\sigma^{1/2}\right\rVert_2
 =\left\lVert(\Gamma_t^\Phi)^{1/2}
 [\Phi(\sigma)]^{1/2}\right\rVert_2=1
\end{equation}
and
\begin{equation}
 \left\langle\Gamma_t^{1/2}\sigma^{1/2},
 \Gamma_t^{-1}\Gamma_t^{1/2}\sigma^{1/2}\right\rangle
 =\left\langle(\Gamma_t^\Phi)^{1/2}[\Phi(\sigma)]^{1/2},
 (\Gamma_t^\Phi)^{-1}(\Gamma_t^\Phi)^{1/2}
 [\Phi(\sigma)]^{1/2}\right\rangle=1.
\end{equation}
The square-root resolvent representation
\cite[Example~2.3 and Eq.~(7.3)]{carlen2018recovery} gives, with the
sign immaterial after taking norms,
\begin{equation}
 \sigma^{1/2}-\mathsf U_t[\Phi(\sigma)]^{1/2}
 =-\frac1\pi\int_0^\infty s^{1/2}e_s\,\mathrm ds.
\end{equation}
The square-root integral yields
\begin{equation}
 q_t^\Phi\leq\frac1\pi
 \int_0^\infty s^{1/2}\lVert e_s\rVert_2\,\mathrm ds.
\end{equation}
Following the interval splitting and tail estimates in
\cite[proof of Theorem~4.2 and Section~7]{carlen2018recovery} and
\cite[proof of Theorem~6.1]{bluhm2020strengthened}, Cauchy--Schwarz and
\eqref{eq:consolidated-measure-regularity} control the middle
interval, while elementary resolvent estimates control the tails by
$4(1+M_t)/(\pi\sqrt R)$.  This proves
\eqref{eq:consolidated-scale-bound}.  To obtain a bound depending
only on $q_t^\Phi$, we choose the scale so that the tail contribution
uses exactly one half of this defect.  If $q_t^\Phi=0$, the desired
estimate is immediate; otherwise, we set
\begin{equation}
 R=\left(\frac{8(1+M_t)}{\pi q_t^\Phi}\right)^2
\end{equation}
so that
\begin{equation}
 \frac{4(1+M_t)}{\pi\sqrt R}=\frac{q_t^\Phi}{2}.
\end{equation}
Moreover, $q_t^\Phi\leq2$ by the normalization above and the
contractivity of $\mathsf U_t$, whereas $M_t\geq1$ by the first-moment
identities; hence this choice satisfies $R\geq1$.  Substitution into
\eqref{eq:consolidated-scale-bound} leaves
$(q_t^\Phi/2)^2$ in the positive part and yields
\eqref{eq:consolidated-power-bound}.
\end{proof}

\begin{theorem}[Quartic strengthening for $D^t$]
\label{thm:consolidated-Dt-quartic}
For $D^t=D_{x\log x}^t$,
\begin{equation}
 \boxed{
 D^t(\rho\Vert\sigma)-D^t(\Phi(\rho)\Vert\Phi(\sigma))
 \geq\left(\frac\pi4\right)^4M_t^{-2}
 \bigl(q_t^\Phi(\rho,\sigma)\bigr)^4.}
 \label{eq:consolidated-Dt-quartic}
\end{equation}
where
\begin{align*}
 q_t^\Phi(\rho,\sigma)
 &:=\left\lVert\Gamma_t^{1/2}\left[
 \mathcal L_t(\rho\mid\sigma)
 -\Phi^*\!\left(\mathcal L_t(\Phi(\rho)\mid\Phi(\sigma))\right)
 \right]\sigma^{1/2}\right\rVert_2,\\
 M_t
 &:=\max\left\{\lVert\Gamma_t^{-1}\rVert_\infty,
 \lVert(\Gamma_t^\Phi)^{-1}\rVert_\infty\right\}.
\end{align*}
\end{theorem}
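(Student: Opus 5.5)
The plan is to specialize the transpose machinery of \Cref{thm:consolidated-regular-transpose} to $f(x)=x\log x$, but with a sharper handling of the square-root integral than the generic $(C,a)$ bound provides. For $f(x)=x\log x$ the transpose is $\widetilde f(x)=xf(x^{-1})=-\log x$, which is operator monotone decreasing with Stieltjes representation
\[
 -\log x=\int_0^\infty\left(\frac1{x+s}-\frac1{1+s}\right)\mathrm ds ,
\]
so $\mathrm d\nu_{\widetilde f}(s)=\mathrm ds$. This fits \eqref{eq:consolidated-measure-regularity} with $a=0$, $C=1$. Plugging these constants into \eqref{eq:consolidated-power-bound} would already give a quartic bound, but with the constant $(1+M_t)^{-2}$ and a poor numerical prefactor. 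I will therefore redo the final optimization directly.

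First, I apply \Cref{lem:consolidated-resolvent} exactly as in the proof of \Cref{thm:consolidated-regular-transpose}, using the data $(\Gamma_t^{-1},(\Gamma_t^\Phi)^{-1},\mathsf U_t,\Gamma_t^{1/2}\sigma^{1/2},(\Gamma_t^\Phi)^{1/2}[\Phi(\sigma)]^{1/2})$. The needed hypotheses are supplied by \eqref{eq:consolidated-transpose-relations} and the normalization identities. This yields
\[
 \delta_t^\Phi\geq\int_0^\infty s\lVert e_s\rVert_2^2\,\mathrm ds .
\]
The affine terms $c_0+c_1x$ drop out because both the norm and the first moment are preserved.

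Second, I use the square-root representation to obtain $\pi q_t^\Phi\leq\int_0^\infty s^{1/2}\lVert e_s\rVert_2\,\mathrm ds$. I split this integral at $s\in[0,T]$ and $s>T$. On $[0,T]$, Cauchy--Schwarz gives
\[
 \int_0^T s^{1/2}\lVert e_s\rVert\,\mathrm ds\leq T^{1/2}\Big(\int s\lVert e_s\rVert^2\Big)^{1/2}\leq T^{1/2}(\delta_t^\Phi)^{1/2}.
\]
For the tail, I rewrite $((\Gamma^{-1}+s)^{-1}\Gamma^{1/2}v$ as $s^{-1}\Gamma^{1/2}v-s^{-1}(\Gamma^{-1}+s)^{-1}\Gamma^{-1}\Gamma^{1/2}v$, and do the same on the output side. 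The leading $s^{-1}$ terms cancel in $e_s$ by the intertwining identity $\mathsf U_t(\Gamma_t^\Phi)^{1/2}[\Phi(\sigma)]^{1/2}=\Gamma_t^{1/2}\sigma^{1/2}$. What remains is bounded by $s^{-2}\lVert\Gamma_t^{-1}\Gamma_t^{1/2}\sigma^{1/2}\rVert+s^{-2}\lVert(\Gamma_t^\Phi)^{-1}\cdots\rVert$.

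Using the first-moment identity together with $\lVert\Gamma^{-1/2}w\rVert^2=1$, I get $\lVert\Gamma^{-1}w\rVert\leq\lVert\Gamma^{-1}\rVert^{1/2}\leq M_t^{1/2}$, where $w$ is the unit vector $\Gamma_t^{1/2}\sigma^{1/2}$. So the tail is at most
\[
 \int_T^\infty s^{-3/2}\,2M_t^{1/2}\,\mathrm ds=4M_t^{1/2}T^{-1/2}.
\]
This gives
\[
 \pi q\leq T^{1/2}\delta^{1/2}+4M_t^{1/2}T^{-1/2}
\]
for every $T>0$.

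Third, I optimize in $T$. Choosing $T$ so that the tail equals $\pi q/2$, namely $T=(8M_t^{1/2}/(\pi q))^2$, gives $\delta\geq(\pi q/2)^2/T=\pi^4q^4/(256M_t)$. That has $M_t^{-1}$ rather than the stated $M_t^{-2}$, and $(\pi/4)^4=\pi^4/256$ matches the constant. So either my tail estimate is slightly optimistic or the stated bound is simply weaker; since $M_t\geq1$ by the first-moment identities, $M_t^{-1}\geq M_t^{-2}$, and the claimed inequality follows in either case.

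The main obstacle is the tail estimate, in particular justifying the cancellation of the $s^{-1}$ terms and bounding the remainder norms by $M_t^{1/2}$. If that step only yields $M_t$ instead of $M_t^{1/2}$, the same optimization gives exactly $(\pi/4)^4M_t^{-2}q^4$, which is the stated inequality. I would therefore write the proof with the crude bound $\lVert\Gamma^{-1}w\rVert\leq M_t$ to land precisely on \eqref{eq:consolidated-Dt-quartic}.
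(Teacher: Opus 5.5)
Your proposal is correct and follows the paper's proof: resolvent square completion for the transposed data, using Stieltjes measure $\mathrm ds$ for $-\log x$; the square-root bound $\pi q_t^\Phi\leq\int_0^\infty s^{1/2}\lVert e_s\rVert_2\,\mathrm ds$; Cauchy--Schwarz on $[0,T]$; a resolvent tail estimate; and optimization in $T$. You need not fall back on the crude estimate: since $\lVert\Gamma_t^{-1/2}\sigma^{1/2}\rVert_2^2=\langle\sigma^{1/2},\Gamma_t^{-1}\sigma^{1/2}\rangle_{\rm HS}\leq\lVert\Gamma_t^{-1}\rVert_\infty$ (and likewise at the output), your $M_t^{1/2}$ tail bound is valid and gives the sharper bound $(\pi/4)^4M_t^{-1}(q_t^\Phi)^4$, which implies the statement because $M_t\geq1$; the crude bound instead reproduces exactly the paper's tail term $4M_t/\sqrt T$.
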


\begin{proof}
Here $\widetilde f(x)=-\log x$ and its Stieltjes measure is
Lebesgue measure \cite[Example~2.2]{carlen2018recovery}.  Splitting the square-root integral at $T>0$
gives
\begin{equation}
 \pi q_t^\Phi\leq
 \sqrt{T\delta_t^\Phi}+\frac{4M_t}{\sqrt T}.
\end{equation}
Optimization at $T=4M_t/\sqrt{\delta_t^\Phi}$ proves the claim.
\end{proof}

\paragraph{Consistency with the endpoint remainders.}
For comparison with the earlier results, we take $\Phi=\mathcal E$ to be a
trace-preserving conditional expectation onto $\mathcal N$, and we write
$\rho_{\mathcal N}=\mathcal E(\rho)$ and
$\sigma_{\mathcal N}=\mathcal E(\sigma)$.  At $t=0$, taking adjoints
in the definition of the defect gives
\begin{equation*}
 q_0^{\mathcal E}
 =\left\lVert\sigma_{\mathcal N}^{1/2}
 \rho_{\mathcal N}^{-1/2}\rho^{1/2}-\sigma^{1/2}\right\rVert_2,
 \qquad
 M_0=\lVert\Delta_{\sigma\mid\rho}\rVert_\infty,
 \quad
 \Delta_{\sigma\mid\rho}:=L_\sigma R_{\rho^{-1}}.
\end{equation*}
Thus $q_0^{\mathcal E}$ is the relative-entropy defect: it is
precisely the half-power remainder of
\cite[Corollary~5.1, in particular (5.3)]{carlen2018recovery}; the
present logarithmic optimization gives $M_0^{-2}$ in place of their
$(1+M_0)^{-2}$.  At $t=1$, we put
$C=\rho^{-1/2}\sigma\rho^{-1/2}$ and
$C_{\mathcal N}=\rho_{\mathcal N}^{-1/2}
\sigma_{\mathcal N}\rho_{\mathcal N}^{-1/2}$.  The polar
decompositions of $\rho^{-1/2}\sigma^{1/2}$ and its output analogue
give
\begin{equation*}
 q_1^{\mathcal E}
 =\left\lVert
 \rho^{1/2}\rho_{\mathcal N}^{-1/2}C_{\mathcal N}^{1/2}
 \rho_{\mathcal N}^{1/2}-C^{1/2}\rho^{1/2}
 \right\rVert_2,
 \qquad M_1=\lVert C\rVert_\infty.
\end{equation*}
Thus $q_1^{\mathcal E}$ is the BS defect, and
\eqref{eq:consolidated-Dt-quartic} becomes exactly the BS remainder of
\cite[Theorem~5.3]{bluhm2020strengthened}, after matching their first
state with our $\rho$.  For a general channel, the same identification
after a Stinespring dilation gives the form in
\cite[Theorem~7.1]{bluhm2020strengthened}.

\subsubsection{A Petz-distance remainder away from the BS endpoint.}
 In \Cref{sec:consolidated-recovery}, and specifically in 
\Cref{cor:consolidated-interior-recovery-dictionary}, it will be proven that,
 for every $t<1$, the vanishing of $q_t^\Phi$ is equivalent to recovery
by the ordinary Petz map.  We record its
quantitative consequence here because it is obtained by combining
that equality characterization directly with
\Cref{thm:consolidated-Dt-quartic}. With that characterization in hand, the coincidence of the
nonmaximal equality classes gives room for the possibility of lower bounding the defect in \eqref{eq:consolidated-Dt-quartic}
by a distance from the state to its Petz recovery. That is the content of this section.

We fix a channel
$\Phi:\mathcal B(\mathcal H)\to\mathcal B(\mathcal K)$ and, for
$\varepsilon,\eta>0$, we set
\begin{equation}
 \mathfrak K_{\varepsilon,\eta}^{\Phi}
 :=\left\{(\rho,\sigma)\in\mathcal S(\mathcal H)^2:
 \begin{array}{c}
  \rho,\sigma\geq\varepsilon I_{\mathcal H},\\
  \Phi(\rho),\Phi(\sigma)\geq\eta I_{\mathcal K}
 \end{array}\right\}.
 \label{eq:consolidated-invertible-Petz-slab}
\end{equation}

\begin{corollary}[H\"older Petz-recovery remainder]
\label{cor:consolidated-nonmaximal-Petz-holder}
For $0<\tau<1$, if
$\mathfrak K_{\varepsilon,\eta}^{\Phi}$ is nonempty, then there are a
constant $c_{\tau,\varepsilon,\eta}^{\Phi}>0$ and an integer
$N_{\tau,\varepsilon,\eta}^{\Phi}\geq1$ such that, simultaneously
for every $0\leq t\leq1-\tau$ and every
$(\rho,\sigma)\in\mathfrak K_{\varepsilon,\eta}^{\Phi}$,
\begin{equation}
 q_t^\Phi(\rho,\sigma)
 \geq c_{\tau,\varepsilon,\eta}^{\Phi}
 \left\lVert\rho-
 \mathcal P_{\sigma,\Phi}(\Phi(\rho))\right\rVert_1^{
 N_{\tau,\varepsilon,\eta}^{\Phi}}.
 \label{eq:consolidated-qt-Petz-holder}
\end{equation}
Consequently,
\begin{equation}
 \boxed{
 \begin{aligned}
 &D^t(\rho\Vert\sigma)
 -D^t(\Phi(\rho)\Vert\Phi(\sigma))\geq
 \left(\frac\pi4\right)^4
 \min\{\varepsilon,\eta\}^{2}
 \bigl(c_{\tau,\varepsilon,\eta}^{\Phi}\bigr)^4
 \left\lVert\rho-
 \mathcal P_{\sigma,\Phi}(\Phi(\rho))\right\rVert_1^{
 4N_{\tau,\varepsilon,\eta}^{\Phi}} .
 \end{aligned}}
 \label{eq:consolidated-Dt-Petz-holder}
\end{equation}
In particular, every fixed $t<1$ admits such a bound by choosing
$\tau<1-t$.
\end{corollary}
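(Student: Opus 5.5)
The plan is a compactness-plus-Łojasiewicz argument. The first inequality \eqref{eq:consolidated-qt-Petz-holder} follows from a real-algebraic comparison between two nonnegative functions on a compact semialgebraic set. The boxed bound \eqref{eq:consolidated-Dt-Petz-holder} then comes from substituting that inequality into \Cref{thm:consolidated-Dt-quartic}.

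\textbf{Step 1: the compact parameter set.} Set $\mathcal X:=[0,1-\tau]\times\mathfrak K_{\varepsilon,\eta}^{\Phi}$. This set is compact. It is also semialgebraic, because the positivity constraints $\rho\geq\varepsilon I$ and $\Phi(\rho)\geq\eta I$ are polynomial inequalities on minors.

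\textbf{Step 2: the two functions.} On $\mathcal X$ consider $F(t,\rho,\sigma):=q_t^\Phi(\rho,\sigma)^2$ and $G(\rho,\sigma):=\lVert\rho-\mathcal P_{\sigma,\Phi}(\Phi(\rho))\rVert_2^2$.
\begin{itemize}
\item $G$ is a polynomial in the entries of $\rho$, $\sigma^{\pm1/2}$ and $\Phi(\sigma)^{-1/2}$, hence semialgebraic and continuous on $\mathcal X$.
\item $F$ involves the real powers $\rho^{t}$, $\rho^{1-t}$ and $\Gamma_t^{1/2}$. It is therefore not semialgebraic in $t$, but it is real analytic on a neighborhood of $\mathcal X$. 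All spectra are bounded away from $0$ there: $\lambda_{\min}(\Gamma_t)\geq\varepsilon^{2}$-type bounds follow from $\rho,\sigma\geq\varepsilon$, and similarly on the output side using $\eta$.
\end{itemize}
I would therefore use the subanalytic form of the Łojasiewicz inequality. Both $F$ and $G$ are continuous and globally subanalytic on the compact set $\mathcal X$, since analytic maps restricted to compact semialgebraic sets are globally subanalytic.

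\textbf{Step 3: equal zero sets.} The key input is that $F^{-1}(0)\subseteq G^{-1}(0)$. This is exactly the equality characterization announced in the preceding paragraph, \Cref{cor:consolidated-interior-recovery-dictionary}: for $t<1$ one has $q_t^\Phi=0$ iff $\mathcal P_{\sigma,\Phi}(\Phi(\rho))=\rho$. Since $t\leq1-\tau<1$ on $\mathcal X$, this applies at every point.

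\textbf{Step 4: Łojasiewicz and conversion to $\lVert\cdot\rVert_1$.} The Łojasiewicz inequality for subanalytic functions with $F^{-1}(0)\subseteq G^{-1}(0)$ on a compact set gives $c>0$ and $N$ with $F\geq c\,G^{N}$ on $\mathcal X$. Taking square roots gives $q_t^\Phi\geq c^{1/2}G^{N/2}$. To pass to the trace norm, use $\lVert X\rVert_1\leq\sqrt{d}\,\lVert X\rVert_2$ and $\lVert X\rVert_2\leq 2$. Enlarging $N$ to an integer and shrinking $c$ then yields \eqref{eq:consolidated-qt-Petz-holder}, with constants depending only on $\tau,\varepsilon,\eta,\Phi$.

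If $F^{-1}(0)$ is empty, the argument is simpler: $F$ attains a positive minimum on $\mathcal X$, while $G$ is bounded, so the inequality holds trivially.

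\textbf{Step 5: the boxed bound.} Plug \eqref{eq:consolidated-qt-Petz-holder} into \eqref{eq:consolidated-Dt-quartic}. It remains to bound $M_t^{-2}$ from below. Since $\Gamma_t^{-1}=L_{\rho^{t-1}}R_{\sigma^{1/2}\rho^{-t}\sigma^{1/2}}$, its norm is at most $\lVert\rho^{-1}\rVert^{1-t}\,\lVert\rho^{-t}\rVert\leq\varepsilon^{-1}$ (using $\sigma\leq I$). The output analogue gives $\lVert(\Gamma_t^\Phi)^{-1}\rVert_\infty\leq\eta^{-1}$. Hence $M_t\leq\min\{\varepsilon,\eta\}^{-1}$, which produces the factor $\min\{\varepsilon,\eta\}^{2}$.

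\textbf{Main obstacle.} The delicate step is making Łojasiewicz uniform in $t$ on the whole interval $[0,1-\tau]$. If global subanalyticity of $(t,\rho,\sigma)\mapsto q_t^\Phi$ is questionable, I would instead argue by contradiction with sequences. Suppose $F/G^{N}\to0$ along some sequence. By compactness, pass to a limit point; that limit lies in the common zero set. Then apply the local analytic Łojasiewicz inequality in a neighborhood of that point and cover the zero set by finitely many such neighborhoods, which is possible because the zero set is compact. This requires only local real-analyticity, which holds because all spectra are bounded away from zero, uniformly for $t\leq1-\tau$.
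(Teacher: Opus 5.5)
Your proposal is correct and is essentially the paper's own proof: the compact two-function \L{}ojasiewicz inequality for continuous subanalytic functions on $[0,1-\tau]\times\mathfrak K_{\varepsilon,\eta}^{\Phi}$ is applied to $(q_t^\Phi)^2$ and the squared Hilbert--Schmidt Petz error, with the zero sets matched via \eqref{eq:consolidated-qt-likelihood} and \Cref{cor:consolidated-interior-recovery-dictionary}; this is followed by the $\lVert\cdot\rVert_2\to\lVert\cdot\rVert_1$ conversion with an integer exponent, and finally $M_t\leq\min\{\varepsilon,\eta\}^{-1}$ is inserted into \Cref{thm:consolidated-Dt-quartic}. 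Your sequential fallback is unnecessary, and if run with the open-set local \L{}ojasiewicz inequality it would need the zero-set inclusion for slightly negative $t$ near $t=0$, which the paper does not establish, so use the version relative to the semianalytic set (or substitute $t=s^2$) instead.
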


\begin{proof}
We identify the Hermitian trace-one affine spaces with finite-dimensional
real affine spaces.  On a neighborhood of
$[0,1-\tau]\times\mathfrak K_{\varepsilon,\eta}^{\Phi}$, the two
nonnegative functions
\begin{equation*}
 (t,\rho,\sigma)\longmapsto
 \bigl(q_t^\Phi(\rho,\sigma)\bigr)^2
 \quad\text{and}\quad
 (t,\rho,\sigma)\longmapsto
 \left\lVert\rho-
 \mathcal P_{\sigma,\Phi}(\Phi(\rho))\right\rVert_2^2
\end{equation*}
are real analytic: inverses, real powers, and square roots depend real
analytically on a positive-definite matrix.  We recall the compact
two-function form of the \L{}ojasiewicz inequality
\cite[Theorem~6.4]{bierstone1988semianalytic}: if $\mathcal X$ is a
compact semianalytic set and $u,v:\mathcal X\to\mathbb R$ are
continuous subanalytic functions, then
\begin{equation}
 u^{-1}(0)\subseteq v^{-1}(0)
 \quad\Longrightarrow\quad
 |u(x)|\geq c\,|v(x)|^r
 \quad\text{for every }x\in\mathcal X
 \label{eq:consolidated-two-function-Lojasiewicz}
\end{equation}
for some $c>0$ and $r>0$.  By
\eqref{eq:consolidated-qt-likelihood} and
\Cref{cor:consolidated-interior-recovery-dictionary}, the zero sets of
the two functions displayed above coincide for every $t<1$.
Applying \eqref{eq:consolidated-two-function-Lojasiewicz} on
$[0,1-\tau]\times\mathfrak K_{\varepsilon,\eta}^{\Phi}$ therefore
gives $a>0$ and $r>0$ such that
\begin{equation}
 \bigl(q_t^\Phi(\rho,\sigma)\bigr)^2
 \geq a
 \left\lVert\rho-
 \mathcal P_{\sigma,\Phi}(\Phi(\rho))\right\rVert_2^{2r}.
\end{equation}
Since the Petz recovery error is bounded on the compact set, we may
enlarge $r$ to an integer $N\geq1$, decreasing $a$ if necessary.  If
$d_{\mathcal H}:=\dim\mathcal H$, then
\begin{align}
 q_t^\Phi(\rho,\sigma)
 &\geq \sqrt a\left\lVert\rho-
 \mathcal P_{\sigma,\Phi}(\Phi(\rho))\right\rVert_2^N
 \geq \sqrt a\,d_{\mathcal H}^{-N/2}
 \left\lVert\rho-
 \mathcal P_{\sigma,\Phi}(\Phi(\rho))\right\rVert_1^N.
 \label{eq:consolidated-Lojasiewicz-trace-comparison}
\end{align}
This proves \eqref{eq:consolidated-qt-Petz-holder} after absorbing the
dimensional factor into $c_{\tau,\varepsilon,\eta}^{\Phi}$.

It remains to make the prefactor in \Cref{thm:consolidated-Dt-quartic}
uniform.  Since all four operators in
\eqref{eq:consolidated-invertible-Petz-slab} are states,
\begin{align*}
 \lVert\Gamma_t^{-1}\rVert_\infty
 &\leq
 \lVert\rho^{t-1}\rVert_\infty
 \lVert\sigma^{1/2}\rho^{-t}\sigma^{1/2}\rVert_\infty
 \leq\varepsilon^{-(1-t)}\varepsilon^{-t}
 =\varepsilon^{-1},\\
 \lVert(\Gamma_t^\Phi)^{-1}\rVert_\infty
 &\leq\eta^{-1}.
\end{align*}
Hence $M_t^{-2}\geq\min\{\varepsilon,\eta\}^2$.
Substituting \eqref{eq:consolidated-qt-Petz-holder} into
\eqref{eq:consolidated-Dt-quartic} proves
\eqref{eq:consolidated-Dt-Petz-holder}.
\end{proof}

The restriction away from $t=1$ is essential.  The constants above
may degenerate as $\tau\downarrow0$, and at $t=1$ no such estimate can
hold on a set containing an invertible BS-equality triple which is not
Petz sufficient: its intrinsic defect vanishes while its Petz
recovery error is strictly positive.  Thus
\Cref{cor:consolidated-nonmaximal-Petz-holder} is a nonconstructive
H\"older stability result under uniform positive lower bounds on the
relevant smallest eigenvalues, rather than a
constant-free recovery inequality.

\subsection{R\'enyi remainders and order-dependent powers}

The logarithmic generator gives the quartic estimate in
\Cref{thm:consolidated-Dt-quartic} for the $t$-relative entropies case.  The same intrinsic defect also
controls the open R\'enyi DPI ranges, with a power determined by the
order.

We take $0<\alpha<1$ or $1<\alpha<2$, and put
\begin{equation}
 r_\alpha:=|\alpha-1|,
 \qquad C_\alpha:=\frac{\pi}{\sin(\pi r_\alpha)}.
\end{equation}
The corresponding power-function Stieltjes densities are given in
\cite[Example~2.3]{carlen2018recovery}; the resulting power
quasi-entropy remainder appears in
\cite[Corollary~5.2]{carlen2018recovery}.
For $0<\alpha<1$, we apply
\Cref{thm:consolidated-regular-transpose} to
$-x^\alpha$; and for $1<\alpha<2$, we apply it to $x^\alpha$.
We define
\begin{align}
 \mathcal B_{\alpha,t}^{\Phi}
 &:=\underset{R\geq1}{\sup}\frac{\pi^2}
 {C_\alpha R^{r_\alpha+1}}
 \left(q_t^\Phi-
 \frac{4(1+M_t)}{\pi\sqrt R}\right)_+^2\geq \kappa_\alpha(M_t)
 \bigl(q_t^\Phi\bigr)^{4+2r_\alpha},
 \label{eq:consolidated-Renyi-B}\\
 \kappa_\alpha(M)
 &:=\frac{\pi^{2r_\alpha+4}}
 {4C_\alpha\,8^{2r_\alpha+2}(1+M)^{2r_\alpha+2}}.
\end{align}
Then, we can prove the following strengthened DPI for the $(t,\alpha)$-Rényi divergences.

\begin{theorem}[Strengthened R\'enyi DPI]
\label{thm:consolidated-Renyi-dpi}
In the conditions above, we have
\begin{equation}
 \boxed{
 \begin{aligned}
 D_\alpha^t(\rho\Vert\sigma)
 -D_\alpha^t(\Phi(\rho)\Vert\Phi(\sigma))
 &\geq\frac1{1-\alpha}
 \log\left(1+
 \frac{\mathcal B_{\alpha,t}^{\Phi}}
 {Q_{\alpha,t}(\rho\Vert\sigma)}\right),
 &&0<\alpha<1,\\
 D_\alpha^t(\rho\Vert\sigma)
 -D_\alpha^t(\Phi(\rho)\Vert\Phi(\sigma))
 &\geq\frac1{\alpha-1}
 \log\left(1+
 \frac{\mathcal B_{\alpha,t}^{\Phi}}
 {Q_{\alpha,t}(\Phi(\rho)\Vert\Phi(\sigma))}\right),
 &&1<\alpha<2.
 \end{aligned}}
 \label{eq:consolidated-Renyi-log-bounds}
\end{equation}
The corresponding powers of $q_t^\Phi$ are $6-2\alpha$ below
one and $2\alpha+2$ above one.
\end{theorem}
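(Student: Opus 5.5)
The plan is to convert the additive lower bound from \Cref{thm:consolidated-regular-transpose} on the unnormalized power functionals into a bound on the normalized divergences. Below order one we use the generator $f=-x^\alpha$; above order one we use $f=x^\alpha$. In both cases $D_f^t=\mp Q_{\alpha,t}$.

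\textbf{Step 1: checking the hypotheses.} The transpose functions are $\widetilde f(x)=-x^{1-\alpha}$ for $0<\alpha<1$ and $\widetilde f(x)=x^{1-\alpha}$ for $1<\alpha<2$. Both are of the form $\pm x^{\mp r_\alpha}$ with $r_\alpha\in(0,1)$. Each is operator monotone decreasing and, modulo affine terms, has the Stieltjes representation \eqref{eq:consolidated-transpose-measure} with density $\frac{\sin(\pi r_\alpha)}{\pi}s^{\pm r_\alpha}\,\mathrm ds$ (sign conventions as in \cite[Example~2.3]{carlen2018recovery}).

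On $[R^{-1},R]$ this density is bounded below by $C_\alpha^{-1}R^{-r_\alpha}$. Hence \eqref{eq:consolidated-measure-regularity} holds with $C=C_\alpha$ and $a=r_\alpha/2$, so $R^{2a+1}=R^{r_\alpha+1}$. \Cref{thm:consolidated-regular-transpose} then gives, for every $R\geq1$, a lower bound on $\delta_{f,t}^\Phi$. Taking the supremum over $R$ yields $\delta_{f,t}^\Phi\geq\mathcal B_{\alpha,t}^\Phi$. The power bound \eqref{eq:consolidated-power-bound}, with $4(a+1)=4+2r_\alpha$, gives the inequality $\mathcal B_{\alpha,t}^\Phi\geq\kappa_\alpha(M_t)(q_t^\Phi)^{4+2r_\alpha}$ recorded in \eqref{eq:consolidated-Renyi-B}.

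\textbf{Step 2: translating into $Q$.} For $0<\alpha<1$, the bound reads
\[
Q_{\alpha,t}(\Phi(\rho)\Vert\Phi(\sigma))\geq Q_{\alpha,t}(\rho\Vert\sigma)+\mathcal B.
\]
For $1<\alpha<2$, it reads
\[
Q_{\alpha,t}(\rho\Vert\sigma)\geq Q_{\alpha,t}(\Phi(\rho)\Vert\Phi(\sigma))+\mathcal B.
\]

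\textbf{Step 3: logarithmic normalization.} For $0<\alpha<1$,
\[
D_\alpha^t(\rho\Vert\sigma)-D_\alpha^t(\Phi(\rho)\Vert\Phi(\sigma))=\frac1{1-\alpha}\log\frac{Q_{\alpha,t}(\Phi\rho\Vert\Phi\sigma)}{Q_{\alpha,t}(\rho\Vert\sigma)}\geq\frac1{1-\alpha}\log\Bigl(1+\frac{\mathcal B}{Q_{\alpha,t}(\rho\Vert\sigma)}\Bigr).
\]
For $1<\alpha<2$, dividing by the output functional gives
\[
\frac1{\alpha-1}\log\frac{Q_{\alpha,t}(\rho\Vert\sigma)}{Q_{\alpha,t}(\Phi\rho\Vert\Phi\sigma)}\geq\frac1{\alpha-1}\log\Bigl(1+\frac{\mathcal B}{Q_{\alpha,t}(\Phi\rho\Vert\Phi\sigma)}\Bigr).
\]
All the $Q$'s are positive by the spectral representation in \Cref{rw:prop:basic-spectral-properties}. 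The stated powers follow from $4+2r_\alpha$, which equals $6-2\alpha$ below one and $2\alpha+2$ above one.

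\textbf{Main obstacle.} The delicate step is the first one: the sign and normalization bookkeeping of the Stieltjes measure for $\pm x^{1-\alpha}$. One must confirm that the transpose of $-x^\alpha$ (respectively $x^\alpha$) is genuinely operator monotone decreasing with a positive measure, and that the constant $C_\alpha=\pi/\sin(\pi r_\alpha)$ and the exponent $a=r_\alpha/2$ come out exactly as stated. If the density turns out to be $s^{-r_\alpha}$ rather than $s^{r_\alpha}$ in one range, the lower bound on $[R^{-1},R]$ is still of order $R^{-r_\alpha}$. In that case the same $a$ works, so the conclusion is robust.
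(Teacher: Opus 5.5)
Your proposal is correct and follows the paper's proof: apply \Cref{thm:consolidated-regular-transpose} to $-x^\alpha$ (resp.\ $x^\alpha$) to obtain the additive bound on the power functionals, then divide by the input (resp.\ output) functional and take logarithms. Your Step~1 spells out the hypothesis check that the paper leaves to \cite[Example~2.3]{carlen2018recovery} and to the definitions preceding the theorem, namely $a=r_\alpha/2$ and $C=C_\alpha$; this correctly reproduces $\mathcal B_{\alpha,t}^\Phi$ and the exponent $4+2r_\alpha$.
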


\begin{proof}
For $0<\alpha<1$, applying
\Cref{thm:consolidated-regular-transpose} to $-x^\alpha$ gives
$Q_{\alpha,t}(\Phi(\rho)\Vert\Phi(\sigma))
-Q_{\alpha,t}(\rho\Vert\sigma)
\geq\mathcal B_{\alpha,t}^{\Phi}$.  We divide by the input power
functional, take the logarithm, and use $\alpha-1<0$ to obtain the
first line of \eqref{eq:consolidated-Renyi-log-bounds}.  For
$1<\alpha<2$, the same theorem applied to $x^\alpha$ gives
$Q_{\alpha,t}(\rho\Vert\sigma)
-Q_{\alpha,t}(\Phi(\rho)\Vert\Phi(\sigma))
\geq\mathcal B_{\alpha,t}^{\Phi}$.  Division by the output power
functional gives the second line.  The two exponents follow from
the power of $q_t^\Phi$ in \eqref{eq:consolidated-Renyi-B}.
\end{proof}

\begin{remark}[Fixed common supports]
The regular-transpose, quartic, and R\'enyi remainders above remain
valid when the two input states have a common proper support and the
two output states have a common proper support, by applying the
results in the corresponding matrix corners.  No extension to
unequal supports is asserted: the constants contain inverse modular
norms and can diverge when a positive eigenvalue tends to zero.
\end{remark}

The quadratic endpoint $\alpha=2$ is excluded from this regular
remainder, and the $\alpha=1$ result is
\Cref{thm:consolidated-Dt-quartic}.

\section{Equality and recovery conditions}
\label{sec:consolidated-recovery}

The preceding section establishes inequalities and quantitative lower
bounds for their deficits.  We now identify exactly when those deficits
vanish.  The argument first promotes one half-power equality to the
full modular functional calculus, then identifies the resulting
likelihood-ratio condition, and finally proves that every nonmaximal
equality-determining member has the ordinary Petz equality class.
The fixed-point structure is therefore the standard Petz structure;
the genuinely $t$-dependent quantitative reconstruction
consequences are collected in
\Cref{sec:consolidated-quantitative-reconstruction}.

\Needspace{12\baselineskip}
\subsection{Likelihood-ratio equality and functional propagation}

\begin{mainobject}{Main definition: The geodesic likelihood ratio}
For $R,S>0$ acting on the same finite-dimensional Hilbert space and
$t\in[0,1]$, the \emph{$t$-geodesic likelihood ratio} of the
ordered pair $(R,S)$ is the operator
\begin{equation}
 \mathcal L_t(R\mid S)
 :=R^{(t-1)/2}S^{1/2}
 \left(S^{-1/2}R^tS^{-1/2}\right)^{-1/2}S^{-1/2}.
 \label{eq:consolidated-likelihood}
\end{equation}
\end{mainobject}
The terminology is tied to the canonical experiment of the companion
paper: the spectral measurement of $\Gamma_t(R,S)$ has likelihood
ratio equal to the spectrum of $\Gamma_t(R,S)$
\cite[Theorem~3.1]{capel2026informationgeometry}.  The operator
$\mathcal L_t(R\mid S)$ is not that likelihood observable itself; it
is its inverse-square-root representative on the cyclic vector:
\begin{equation}
 \Gamma_t(R,S)^{-1/2}S^{1/2}
 =\mathcal L_t(R\mid S)S^{1/2}.
 \label{eq:consolidated-likelihood-vector}
\end{equation}
If $\Phi$ is a channel and $\Phi(R),\Phi(S)>0$, we call the identity
\begin{equation}
 \boxed{
 \mathcal L_t(R\mid S)
 =\Phi^*\!\left(
 \mathcal L_t(\Phi(R)\mid\Phi(S))\right)}
 \label{eq:consolidated-channel-likelihood}
\end{equation}
the \emph{$t$-likelihood-ratio recovery condition}.  It is the
interpolation-adapted analogue of a recovery condition: rather than
reconstructing the state directly, it reconstructs its likelihood
ratio by pulling the output operator back through $\Phi^*$.
\Cref{thm:consolidated-equality-recovery} shows that this is exactly
the condition selected by equality in data processing, while
\Cref{cor:consolidated-interior-recovery-dictionary} relates it to
ordinary Petz recovery for $t<1$ and to the BS condition at $t=1$.

\paragraph{Equality-determining generators.}
For a fixed finite-dimensional DPI problem, we put
\begin{equation}
 N_t(\rho,\sigma;\Phi)
 :=\left|\operatorname{spec}(\Gamma_t)
 \cup\operatorname{spec}(\Gamma_t^\Phi)\right|.
 \label{eq:consolidated-equality-determining-count}
\end{equation}
We call the representing measure in
\eqref{eq:consolidated-f-representation} \emph{equality
determining} for this problem if
\begin{equation}
 |\supp\mu_f|\geq N_t(\rho,\sigma;\Phi).
 \label{eq:consolidated-equality-determining-support}
\end{equation}
This is the finite-dimensional Cauchy-interpolation condition used in
the reversibility theory of quantum $f$-divergences
\cite[Theorem~3.18(iv), especially Eq.~(3.21)]{hiai2017different}.
It is enough to recover the complete
functional calculus from the resolvent equalities on
$\supp\mu_f$.  The simpler, state-independent assumption that
$\supp\mu_f$ has an accumulation point in $(0,\infty)$ is more
than sufficient.  In particular, it holds for $x\log x$,
$-x^\alpha$, $0<\alpha<1$, and $x^\alpha$,
$1<\alpha<2$.  A nonempty support alone is not sufficient, while
affine and purely quadratic generators have no such measure at all.

The next lemma is the technical step that turns the half-power
identity selected by the resolvent defect into equality for the whole
functional calculus.

\begin{lemma}[Equality propagation]
\label{lem:consolidated-equality-propagation}
We take $A,B>0$ and a contraction $W$ with
\begin{equation}
 W^*AW\leq B,\qquad W\eta=\xi,\qquad
 \lVert\eta\rVert=\lVert\xi\rVert,
\end{equation}
and assume
\begin{equation}
 A^{-1/2}\xi=WB^{-1/2}\eta.
 \label{eq:consolidated-half-propagation}
\end{equation}
Then, for every continuous $g$ on the relevant spectra,
\begin{equation}
 g(A)\xi=Wg(B)\eta.
 \label{eq:consolidated-functional-propagation}
\end{equation}
Moreover, $W$ is an isometry intertwining $B$ and $A$ on the
cyclic subspaces generated by $\eta$ and $\xi$.
\end{lemma}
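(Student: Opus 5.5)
The plan is to upgrade the single half-power identity \eqref{eq:consolidated-half-propagation} to a family of resolvent identities, using the square-completion bound of \Cref{lem:consolidated-resolvent}. Finite-dimensional functional calculus then does the rest.

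\emph{Step 1: two elementary consequences of the norm condition.} Since $W$ is a contraction, $I-W^*W\geq0$. The assumptions $W\eta=\xi$ and $\lVert\xi\rVert=\lVert\eta\rVert$ give $\langle\eta,(I-W^*W)\eta\rangle=0$, hence $W^*W\eta=\eta$, that is,
\begin{equation*}
 W^*\xi=\eta .
\end{equation*}
This is the only place the norm equality enters directly. It lets us move $W$ across inner products. In particular, with $\zeta:=B^{-1/2}\eta$, hypothesis \eqref{eq:consolidated-half-propagation} gives
\begin{equation*}
 \langle\xi,A^{-1/2}\xi\rangle=\langle\xi,W\zeta\rangle=\langle W^*\xi,\zeta\rangle=\langle\eta,B^{-1/2}\eta\rangle .
\end{equation*}

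\emph{Step 2: vanishing of all resolvent deficits.} For $s>0$, set
\begin{equation*}
 d(s):=\langle\xi,(A+s)^{-1}\xi\rangle-\langle\eta,(B+s)^{-1}\eta\rangle .
\end{equation*}
\Cref{lem:consolidated-resolvent}, applied with $(A,B,W,x,y)=(A,B,W,\xi,\eta)$, gives $d(s)\geq s\lVert u_s\rVert^2\geq0$. Here $u_s=(A+s)^{-1}\xi-W(B+s)^{-1}\eta$.

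Now use the Stieltjes integral $x^{-1/2}=\pi^{-1}\int_0^\infty s^{-1/2}(x+s)^{-1}\,\mathrm ds$. Integrating $d$ against $s^{-1/2}\,\mathrm ds$ yields $\pi$ times $\langle\xi,A^{-1/2}\xi\rangle-\langle\eta,B^{-1/2}\eta\rangle$, which is $0$ by Step 1. The integrability is harmless because $A,B>0$ are finite-dimensional. Since $d$ is continuous and nonnegative on $(0,\infty)$, we get $d\equiv0$. The square-completion bound then forces $u_s=0$, i.e.
\begin{equation*}
 (A+s)^{-1}\xi=W(B+s)^{-1}\eta\qquad\text{for every } s>0 .
\end{equation*}

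\emph{Step 3: passing to arbitrary $g$.} Let $\Lambda:=\operatorname{spec}(A)\cup\operatorname{spec}(B)$, a finite subset of $(0,\infty)$. The identity $g(A)\xi=Wg(B)\eta$ depends only on $g|_\Lambda$, and it is linear in $g$. It holds for constants, since $W\eta=\xi$, and for every function $(x+s)^{-1}$ with $s>0$ by Step 2. Finitely many distinct resolvent functions, together with constants, span all functions on the finite set $\Lambda$, by Cauchy interpolation: the Cauchy matrix $\bigl((\lambda_i+s_j)^{-1}\bigr)$ is invertible for distinct nodes. This proves \eqref{eq:consolidated-functional-propagation}.

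For the isometry claim, apply \eqref{eq:consolidated-functional-propagation} to $\bar h g$ and use $W^*\xi=\eta$:
\begin{equation*}
 \langle h(A)\xi,g(A)\xi\rangle=\langle\xi,W(\bar hg)(B)\eta\rangle=\langle\eta,(\bar hg)(B)\eta\rangle=\langle h(B)\eta,g(B)\eta\rangle .
\end{equation*}
So $W$ is isometric from the cyclic subspace of $(B,\eta)$ onto that of $(A,\xi)$. For the intertwining, note that $g'(A)Wg(B)\eta=g'(A)g(A)\xi=W(g'g)(B)\eta=Wg'(B)g(B)\eta$.

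The main obstacle is Step 2: a priori only the single weighted average of $d$ vanishes, not $d$ itself. Nonnegativity of $d$ pointwise, which comes from \Cref{lem:consolidated-resolvent}, is exactly what converts that one scalar identity into the whole family of resolvent identities.
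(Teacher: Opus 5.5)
Your proof is correct, but it takes a different route from the paper.

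\textbf{The paper's route.} The paper does not invoke \Cref{lem:consolidated-resolvent} here. It forms $C:=A^{1/2}WB^{-1/2}$, which is a contraction because $W^*AW\leq B$, and notes that \eqref{eq:consolidated-half-propagation} says exactly $C\eta=\xi=W\eta$. It then combines the equality case of $\lVert Tv\rVert\leq\lVert v\rVert$ for the contractions $C,W,C^*,W^*$ with the shift relations $CB^{1/2}=A^{1/2}W$ and $B^{1/2}C^*=W^*A^{1/2}$. By induction this gives $A^{k/2}\xi=WB^{k/2}\eta$ and $W^*A^{k/2}\xi=B^{k/2}\eta$ for every $k\geq0$. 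Polynomial approximation in $A^{1/2}$ and $B^{1/2}$ then yields arbitrary $g$, and the isometry claim follows. That argument is purely algebraic and needs neither an integral representation nor the earlier lemma. It also adapts directly to bounded operators with bounded inverses via Weierstrass approximation.

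\textbf{Your route.} You reduce the statement to the resolvent square completion through the Stieltjes formula $x^{-1/2}=\pi^{-1}\int_0^\infty s^{-1/2}(x+s)^{-1}\,\mathrm ds$, and then apply Cauchy interpolation. This is the same mechanism the paper uses for the step from saturation to the full functional calculus in \Cref{thm:consolidated-equality-recovery}. In effect, $x^{-1/2}$ acts as a generator whose representing measure $s^{-1/2}\,\mathrm ds$ has full support.

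\textbf{What your version gains.} Your argument uses only the scalar consequence $\langle\xi,A^{-1/2}\xi\rangle=\langle\eta,B^{-1/2}\eta\rangle$ of \eqref{eq:consolidated-half-propagation}. Hence, given $W^*AW\leq B$, $W\eta=\xi$ and $\lVert\eta\rVert=\lVert\xi\rVert$, the vector identity is actually implied by the scalar one.

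\textbf{What it costs.} Your Cauchy step uses finiteness of the spectra. An infinite-dimensional version would need a separate density argument for the span of the resolvent functions.
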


\begin{proof}
We set
\begin{equation}
 C:=A^{1/2}WB^{-1/2}.
\end{equation}
The inequality $W^*AW\leq B$ gives
\begin{equation}
 C^*C
 =B^{-1/2}W^*AWB^{-1/2}\leq I,
\end{equation}
so $C$ is a contraction.  Multiplying
\eqref{eq:consolidated-half-propagation} by $A^{1/2}$ gives
$C\eta=\xi$, while $W\eta=\xi$ holds by assumption.  Since
$\|\eta\|=\|\xi\|$, equality holds in the contraction estimate for
both $C$ and $W$.  For a contraction $T$, the equality
$\|Tv\|=\|v\|$ implies
$(I-T^*T)v=0$, and hence $T^*Tv=v$.  We therefore obtain
\begin{equation}
 C\eta=W\eta=\xi,
 \qquad
 C^*\xi=W^*\xi=\eta.
 \label{eq:consolidated-propagation-base}
\end{equation}

We now spell out the propagation.  Put
\begin{equation}
 \eta_k:=B^{k/2}\eta,
 \qquad
 \xi_k:=A^{k/2}\xi,
 \qquad k\geq0.
\end{equation}
We prove inductively that
\begin{equation}
 C\eta_k=W\eta_k=\xi_k,
 \qquad
 C^*\xi_k=W^*\xi_k=\eta_k
 \label{eq:consolidated-propagation-induction}
\end{equation}
for every $k\geq0$.  The case $k=0$ is
\eqref{eq:consolidated-propagation-base}.  Suppose that
\eqref{eq:consolidated-propagation-induction} holds for some $k$.
The definition of $C$ and its adjoint give the two identities
\begin{equation}
 CB^{1/2}=A^{1/2}W,
 \qquad
 B^{1/2}C^*=W^*A^{1/2}.
 \label{eq:consolidated-C-shift-identities}
\end{equation}
Consequently,
\begin{align}
 C\eta_{k+1}
 &=CB^{1/2}\eta_k
   =A^{1/2}W\eta_k
   =A^{1/2}\xi_k
   =\xi_{k+1},
 \label{eq:consolidated-C-forward-step}\\
 W^*\xi_{k+1}
 &=W^*A^{1/2}\xi_k
   =B^{1/2}C^*\xi_k
   =B^{1/2}\eta_k
   =\eta_{k+1}.
 \label{eq:consolidated-W-adjoint-step}
\end{align}
Since $C$ and $W^*$ are contractions, these identities imply
\begin{equation}
 \|\xi_{k+1}\|
 =\|C\eta_{k+1}\|
 \leq\|\eta_{k+1}\|
 =\|W^*\xi_{k+1}\|
 \leq\|\xi_{k+1}\|.
\end{equation}
Thus all inequalities are equalities.  Applying the equality case of
the contraction estimate first to $C$ at $\eta_{k+1}$ and then to
$W^*$ at $\xi_{k+1}$ yields
\begin{equation}
 C^*\xi_{k+1}=\eta_{k+1},
 \qquad
 W\eta_{k+1}=\xi_{k+1}.
\end{equation}
This closes the induction.  In particular,
\begin{equation}
 A^{k/2}\xi=WB^{k/2}\eta,
 \qquad
 W^*A^{k/2}\xi=B^{k/2}\eta,
 \qquad k=0,1,\ldots .
 \label{eq:consolidated-half-power-propagation}
\end{equation}

It follows first for every polynomial $p$ that
\begin{equation}
 p(A^{1/2})\xi=Wp(B^{1/2})\eta,
 \qquad
 W^*p(A^{1/2})\xi=p(B^{1/2})\eta.
\end{equation}
Because $x\mapsto\sqrt{x}$ is a homeomorphism on the positive
spectral interval, polynomial approximation of
$y\mapsto g(y^2)$ and continuity of the functional calculus extend
these identities to every continuous $g$ on the relevant spectra.
The first identity is
\eqref{eq:consolidated-functional-propagation}; together with the
second, it shows that $W$ is an isometry from the cyclic subspace
generated by $\eta$ onto the one generated by $\xi$.  Finally, for
every continuous $h$,
\begin{equation}
 AW h(B)\eta
 =A h(A)\xi
 =W B h(B)\eta,
\end{equation}
so $W$ intertwines $B$ and $A$ on these cyclic subspaces.
\end{proof}

We can now state the exact equality criterion in the coordinate system
adapted to the interpolation.

\begin{theorem}[Equality and likelihood-ratio recovery]
\label{thm:consolidated-equality-recovery}
We take $\rho,\sigma>0$ and assume $\Phi(\rho)>0$ and
$\Phi(\sigma)>0$.
We fix $t\in[0,1]$ and an operator-convex generator $f$ whose
representing measure $\mu_f$ in
\eqref{eq:consolidated-f-representation} satisfies the
equality-determining condition
\eqref{eq:consolidated-equality-determining-support}.  The
Hilbert--Schmidt comparison contraction
$\mathsf V_{\sigma,\Phi}:\mathcal B(\mathcal K)\to
\mathcal B(\mathcal H)$ is given by
\begin{equation}
 \mathsf V_{\sigma,\Phi}(X)
 :=\Phi^*\!\left(X[\Phi(\sigma)]^{-1/2}\right)\sigma^{1/2}.
 \label{eq:equality-theorem-comparison-map}
\end{equation}
Then the following conditions are equivalent:
\begin{enumerate}
\item The following expression holds:
\begin{equation}
 \Gamma_t^{-1/2}\sigma^{1/2}
 =\mathsf V_{\sigma,\Phi}
   (\Gamma_t^\Phi)^{-1/2}[\Phi(\sigma)]^{1/2}.
 \label{eq:consolidated-half-equality}
\end{equation}
\item The geodesic likelihood ratio is recovered, namely
\begin{equation*}
 \mathcal L_t(\rho\mid\sigma)
 =\Phi^*\!\left(
 \mathcal L_t(\Phi(\rho)\mid\Phi(\sigma))\right).
\end{equation*}
\item For every continuous $g$ on the two spectra,
\begin{equation}
 g(\Gamma_t)\sigma^{1/2}
 =\mathsf V_{\sigma,\Phi}
   g(\Gamma_t^\Phi)[\Phi(\sigma)]^{1/2}.
 \label{eq:consolidated-all-g}
\end{equation}
\item $\mathsf V_{\sigma,\Phi}$ is an isometric intertwiner on the
cyclic subspaces generated by $[\Phi(\sigma)]^{1/2}$ and
$\sigma^{1/2}$.
\item The DPI for the $D_f^t$ is saturated, that is,
\begin{equation}
 D_f^t(\rho\Vert\sigma)
 =D_f^t(\Phi(\rho)\Vert\Phi(\sigma)).
 \label{eq:consolidated-Dft-equality}
\end{equation}
\item For every operator-convex generator $h$ for which the two
divergences are defined,
\begin{equation}
 D_h^t(\rho\Vert\sigma)
 =D_h^t(\Phi(\rho)\Vert\Phi(\sigma)).
 \label{eq:consolidated-all-Dht-equality}
\end{equation}
\end{enumerate}
In particular, choosing $f(x)=x\log x$ makes the fifth condition
precisely saturation for $D^t$.  For any fixed
\begin{equation}
 0<\alpha<1,\qquad 1<\alpha<2,
\end{equation}
one may instead choose $f(x)=-x^\alpha$ in the first range and
$f(x)=x^\alpha$ in the second.  These generators are equality
determining, and injectivity of the logarithm shows that their
equality condition is equivalent to saturation for $D_\alpha^t$.
Thus saturation for any one such $D_\alpha^t$ may replace the fifth
condition.  The purely quadratic case $\alpha=2$ cannot replace it in
general.
\end{theorem}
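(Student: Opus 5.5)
I will prove the equivalences in the cycle (1)$\Leftrightarrow$(2), (1)$\Rightarrow$(3)$\Rightarrow$(4), (3)$\Rightarrow$(6)$\Rightarrow$(5)$\Rightarrow$(1). The cycle is closed by the resolvent remainder of \Cref{thm:consolidated-resolvent-dpi} together with the equality-determining hypothesis. Throughout, the data are those of \Cref{lem:consolidated-transformer}:
\[
(A,B,W,\xi,\eta)=(\Gamma_t,\Gamma_t^\Phi,\mathsf V_{\sigma,\Phi},\sigma^{1/2},[\Phi(\sigma)]^{1/2}).
\]
This gives $W^*AW\leq B$ and $W\eta=\xi$, and both vectors are unit vectors.

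\textbf{(1)$\Leftrightarrow$(2).} This is a direct computation. By \eqref{eq:consolidated-likelihood-vector}, the left side of \eqref{eq:consolidated-half-equality} is $\mathcal L_t(\rho\mid\sigma)\sigma^{1/2}$. The right side is
\[
\mathsf V_{\sigma,\Phi}\bigl(\mathcal L_t(\Phi(\rho)\mid\Phi(\sigma))[\Phi(\sigma)]^{1/2}\bigr)
=\Phi^*\bigl(\mathcal L_t(\Phi(\rho)\mid\Phi(\sigma))\bigr)\sigma^{1/2}.
\]
Since $\sigma^{1/2}$ is invertible, the two are equal exactly when (2) holds.

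\textbf{(1)$\Rightarrow$(3)$\Rightarrow$(4).} I apply \Cref{lem:consolidated-equality-propagation}. Its hypothesis \eqref{eq:consolidated-half-propagation} is exactly (1), and its conclusions are (3) and (4).

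\textbf{(3)$\Rightarrow$(6)$\Rightarrow$(5).} Given (3), take $g=h$ for any operator-convex $h$. Pairing with $\xi=W\eta$ and using $W^*\xi=\eta$ (the isometry on the cyclic subspace) gives
\[
\langle\xi,h(A)\xi\rangle=\langle\xi,Wh(B)\eta\rangle=\langle\eta,h(B)\eta\rangle,
\]
which is (6). Specializing to $h=f$ gives (5).

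\textbf{(5)$\Rightarrow$(1).} This is the main step. Assume the deficit is zero. \Cref{thm:consolidated-resolvent-dpi} bounds the deficit below by $b$ times a nonnegative quadratic deficit plus $\int s^2\lVert w_{s,t}\rVert_2^2\,\mathrm d\mu_f(s)$. Each $s\mapsto\lVert w_{s,t}\rVert$ is continuous, so $w_{s,t}=0$ for every $s\in\supp\mu_f$, that is,
\[
(A+s)^{-1}\xi=W(B+s)^{-1}\eta,\qquad s\in\supp\mu_f.
\]
I now use the spectral decompositions $A=\sum_k\lambda_kE_k$ and $B=\sum_l\beta_lF_l$. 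Write both sides as $\sum_\gamma(\gamma+s)^{-1}v_\gamma$, where $\gamma$ runs over the set $\operatorname{spec}(A)\cup\operatorname{spec}(B)$ of cardinality $N_t$, and where $v_\gamma=E_\gamma\xi-WF_\gamma\eta$, with $E_\gamma=0$ or $F_\gamma=0$ when $\gamma$ is not an eigenvalue of $A$ or of $B$ respectively. The vanishing of $\sum_\gamma(\gamma+s)^{-1}v_\gamma$ at $|\supp\mu_f|\geq N_t$ distinct points $s$ forces every $v_\gamma=0$, by invertibility of the Cauchy matrix $[(\gamma+s_j)^{-1}]$. Hence $E_\gamma\xi=WF_\gamma\eta$ for every $\gamma$. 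Summing against $\gamma^{-1/2}$ yields $A^{-1/2}\xi=WB^{-1/2}\eta$, which is (1); summing against any continuous $g(\gamma)$ yields (3) directly.

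\textbf{Specializations.} The closing remarks follow from the equality-determining discussion preceding the theorem. The generators $x\log x$, $-x^\alpha$ for $0<\alpha<1$, and $x^\alpha$ for $1<\alpha<2$ have $\supp\mu_f$ with an accumulation point in $(0,\infty)$, so the condition \eqref{eq:consolidated-equality-determining-support} holds. For the R\'enyi orders, $q\mapsto(\alpha-1)^{-1}\log q$ is injective, so saturation for $D_\alpha^t$ is equivalent to saturation for $Q_{\alpha,t}$. At $\alpha=2$ the representing measure vanishes and only $b>0$ survives, so the Cauchy argument is unavailable; this is why the quadratic case cannot replace (5).

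The main obstacle I expect is the Cauchy-interpolation step: the bookkeeping of eigenvalues shared between the two spectra and the handling of points of $\supp\mu_f$ at which $\mu_f$ has no atom. Continuity of $w_{s,t}$ in $s$ settles the latter.
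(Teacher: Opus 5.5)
Your overall route is the paper's: (1)$\Leftrightarrow$(2) through \eqref{eq:consolidated-likelihood-vector}; \Cref{lem:consolidated-equality-propagation} for (1)$\Rightarrow$(3),(4); the quadratic-form computation for (6)$\Rightarrow$(5); and the resolvent remainder of \Cref{thm:consolidated-resolvent-dpi} followed by Cauchy interpolation for (5)$\Rightarrow$(1),(3). That last step is correct, including your indexing over $\operatorname{spec}(A)\cup\operatorname{spec}(B)$ and the continuity argument on $\supp\mu_f$.

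The gap is that condition (4) never appears as a hypothesis in your chain. You derive it from (1) but never show that it implies anything, so its equivalence with the other five conditions is not proved. The missing arrow is short, and the paper supplies it. Suppose $W=\mathsf V_{\sigma,\Phi}$ maps the $B$-cyclic subspace of $\eta$ isometrically onto the $A$-cyclic subspace of $\xi$ and satisfies $AWv=WBv$ there. Then $W\eta=\xi$ and induction give $A^k\xi=WB^k\eta$ for all $k\geq0$, hence $p(A)\xi=Wp(B)\eta$ for every polynomial $p$. Condition (3) follows by uniform approximation on the two finite spectra.

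Separately, in your step (3)$\Rightarrow$(6) you attribute $W^*\xi=\eta$ to the isometry on cyclic subspaces. It actually needs nothing from (4): it follows from $W\eta=\xi$, $\lVert\xi\rVert=\lVert\eta\rVert$ and $W^*W\leq I$, so that step is self-contained.

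Your treatment of the quadratic sentence only explains why your argument breaks down at $\alpha=2$, namely that there is no representing measure. It does not show that $D_2^t$-saturation actually fails to imply (1). For that you need an instance, for some $t<1$, where $\operatorname{Tr}(\rho^2\sigma^{-1})$ is preserved while (1) fails. For example, take $t=0$. There condition (1) is equivalent, via (5) with $x\log x$, to Umegaki saturation and hence to Petz sufficiency. On the other hand, quadratic saturation is exactly the asymmetric identity $\rho=\mathcal R^{\rm asym}_{\sigma,\Phi}(\Phi(\rho))$. So an invertible BS-QMC that is not a QMC saturates the quadratic DPI under $\operatorname{Tr}_A$ with reference $\rho_{AB}\otimes\tau_C$, but violates (1). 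The paper also relies on the literature for this point (Hiai--Mosonyi and the witnesses of Bluhm et al.), but some such example must be invoked to justify that sentence of the statement.
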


\begin{proof}
Equivalence of the first two statements follows from
\eqref{eq:consolidated-likelihood-vector} and the definition of
$\mathsf V_{\sigma,\Phi}$.  The first statement implies the third
and fourth by
\Cref{lem:consolidated-equality-propagation}; the third implies
the first by choosing $g(x)=x^{-1/2}$.  Conversely, the fourth
implies the third because $\mathsf V_{\sigma,\Phi}$ maps
$[\Phi(\sigma)]^{1/2}$ to $\sigma^{1/2}$ and an isometric
intertwiner intertwines the continuous functional calculi on the
corresponding cyclic subspaces.  Thus the first four conditions are
equivalent.

They imply the sixth condition.  Indeed, for
$\xi=\sigma^{1/2}$ and $\eta=[\Phi(\sigma)]^{1/2}$, the third and
fourth conditions give, for every admissible $h$,
\begin{align}
 D_h^t(\rho\Vert\sigma)
 &=\langle \xi,h(\Gamma_t)\xi\rangle_{\rm HS}\notag\\
 &=\left\langle \mathsf V_{\sigma,\Phi}\eta,
 \mathsf V_{\sigma,\Phi}h(\Gamma_t^\Phi)\eta
 \right\rangle_{\rm HS}\notag\\
 &=\langle \eta,h(\Gamma_t^\Phi)\eta\rangle_{\rm HS}
 =D_h^t(\Phi(\rho)\Vert\Phi(\sigma)).
 \label{eq:consolidated-functional-calculus-DPI-equality}
\end{align}
The sixth condition plainly implies the fifth.

Finally, we suppose that the fifth condition holds.
\Cref{thm:consolidated-resolvent-dpi} shows that saturation forces
$w_{s,t}=0$ for
$\mu_f$-almost every $s$, hence at every point of
$\supp\mu_f$ by continuity.  Expanding the two resolvents in the
spectral projections of $\Gamma_t$ and $\Gamma_t^\Phi$ produces
a Cauchy system with at most
$N_t(\rho,\sigma;\Phi)$ distinct poles.  The
equality-determining support condition makes this system
invertible, so the spectral components agree separately.  Therefore
\eqref{eq:consolidated-all-g} holds for every $g$, and in
particular for $g(x)=x^{-1/2}$.
\end{proof}

The previous result can be simplified in the case that the channel is a partial trace.

\begin{corollary}[Partial traces and Stinespring form]
\label{cor:consolidated-partial-trace}
For invertible states $R_{AB},S_{AB}$, equality in the DPI under
$\operatorname{Tr}_A$ is equivalent to
\begin{equation}
 \mathcal L_t(R_{AB}\mid S_{AB})
 =I_A\otimes\mathcal L_t(R_B\mid S_B).
 \label{eq:consolidated-partial-trace-likelihood}
\end{equation}
For a Stinespring isometry
$U:\mathcal H\to\mathcal K\otimes\mathcal E$ of $\Phi$, with
$P=UU^*$, the general-channel condition is equivalently
\begin{equation}
 U\mathcal L_t(\rho\mid\sigma)U^*
 =P\left[
 \mathcal L_t(\Phi(\rho)\mid\Phi(\sigma))\otimes I_{\mathcal E}
 \right]P.
 \label{eq:consolidated-Stinespring-likelihood}
\end{equation}
The latter identity is an equality in the compressed algebra
$P\mathcal B(\mathcal K\otimes\mathcal E)P$; no ambient
invertibility of $U\rho U^*$ is required.
\end{corollary}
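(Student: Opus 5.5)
The plan is to deduce both statements from the likelihood-ratio recovery condition (item 2) of \Cref{thm:consolidated-equality-recovery}. That condition is equivalent to DPI equality for any equality-determining generator, in particular $x\log x$. The work then reduces to computing $\Phi^*$ explicitly in the two cases.

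\textbf{Partial trace.} For $\Phi=\operatorname{Tr}_A$ the adjoint is $\Phi^*(Y)=I_A\otimes Y$. The output states $R_B,S_B$ are invertible because $R_{AB},S_{AB}>0$ and the partial trace preserves strict positivity. So the condition $\mathcal L_t(R_{AB}\mid S_{AB})=\Phi^*(\mathcal L_t(R_B\mid S_B))$ reads exactly as \eqref{eq:consolidated-partial-trace-likelihood}. No further work is needed: the equivalence with DPI equality is the fifth-versus-second equivalence of the theorem.

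\textbf{General channel.} Write $\Phi(X)=\operatorname{Tr}_{\mathcal E}(UXU^*)$, so that $\Phi^*(Y)=U^*(Y\otimes I_{\mathcal E})U$ by \eqref{rw:eq:stinespring}. Set $L:=\mathcal L_t(\rho\mid\sigma)$ and $L_\Phi:=\mathcal L_t(\Phi(\rho)\mid\Phi(\sigma))$. The recovery condition is then $L=U^*(L_\Phi\otimes I)U$.
\begin{itemize}
\item Conjugating by $U$ and using $UU^*=P$ gives $ULU^*=P(L_\Phi\otimes I)P$.
\item Conversely, conjugating by $U^*$ and using $U^*U=I_{\mathcal H}$ and $U^*P=U^*$ recovers the original identity.
\end{itemize}
So the two identities are equivalent as operators on the range of $P$. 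Both sides of \eqref{eq:consolidated-Stinespring-likelihood} lie in $P\mathcal B(\mathcal K\otimes\mathcal E)P$. Only $\rho,\sigma$ and the outputs $\Phi(\rho),\Phi(\sigma)$ need to be invertible, as in the theorem. The dilated state $U\rho U^*$ never needs to be inverted in the ambient space, since $L$ is computed on $\mathcal H$ and then transported by the isometry.

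\textbf{Main obstacle.} The delicate point is the interpretive claim about the compressed algebra. One must check that the theorem's standing assumptions (invertible input and output) are all that is used, and that the Stinespring choice does not matter. Independence of the dilation is automatic, since any two Stinespring isometries yield the same $\Phi^*$. I would state this explicitly and then note that the fifth condition of the theorem, with $f=x\log x$, gives the asserted equivalence with DPI equality.
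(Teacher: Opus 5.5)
Your proposal is correct and follows the same route the paper intends (it gives no separate proof): the corollary is read off from the equivalence between the likelihood-ratio condition and DPI saturation in \Cref{thm:consolidated-equality-recovery}, using $\Phi^*(Y)=I_A\otimes Y$ for the partial trace and conjugation by the Stinespring isometry (via $U^*U=I$, $UU^*=P$, $U^*P=U^*$) for a general channel. You are also right that ``equality in the DPI'' must be read for an equality-determining generator such as $x\log x$, since purely quadratic or sparse generators need not force the likelihood identity.
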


\begin{remark}[Endpoints and the quadratic exception]
At $t=0$, \eqref{eq:consolidated-channel-likelihood} is equivalent
to the Petz sufficiency condition
\cite[Theorem~3.18]{hiai2017different}.  At $t=1$, it is an equivalent
square-root form of the asymmetric Belavkin--Staszewski condition
\cite[Theorem~4.2 and Eq.~(14)]{bluhm2020strengthened}.
At $\alpha=2$, by contrast, $D_2^t$ is quadratic and independent
of $t$; its equality set is the BS/$\chi^2$ equality set and is
generally larger than the likelihood-recovery set for $t<1$
\cite[Example~4.2 and Theorem~3.34]{hiai2017different}.
\end{remark}

The full fixed-point hierarchy in the $(t,\alpha)$-plane is
summarized in \Cref{fig:consolidated-Renyi-fixed-point-phase}.  The
identification of the nonmaximal region with the Petz class is proved
in the following subsection, specifically in
\Cref{prop:consolidated-general-hierarchy}.

\begin{figure}[!ht]
\centering
\definecolor{fixedPetz}{HTML}{0072B2}
\definecolor{fixedBS}{HTML}{D55E00}
\begin{tikzpicture}[>=stealth,font=\small]
  % Parameter plane.
  \begin{scope}[x=6.6cm,y=2cm]
    \fill[fixedPetz!10] (0,0.015) rectangle (0.99,1.985);
    \draw[black!32,densely dashed] (0,1)--(1,1);
    \draw[fixedBS,line width=3.2pt,line cap=butt]
      (0,2)--(1,2)--(1,0);

    \draw[->] (0,0)--(1.14,0) node[right] {$t$};
    \draw[->] (0,0)--(0,2.22) node[above] {$\alpha$};
    \draw (1,-0.035)--(1,0.035);
    \draw (-0.012,1)--(0.012,1);
    \draw (-0.012,2)--(0.012,2);
    \node[below] at (0,-0.04) {$0$};
    \node[below] at (1,-0.04) {$1$};
    \node[left] at (-0.012,1) {$1$};
    \node[left] at (-0.012,2) {$2$};

    \node[align=center,text=fixedPetz!75!black,
      fill=white,fill opacity=0.82,text opacity=1,
      rounded corners=2pt,inner sep=4pt] at (0.46,0.72)
      {\bfseries Petz fixed points\\[-1pt]
       \scriptsize $0\leq t<1$, $0<\alpha<2$};
    \node[font=\scriptsize,text=black!58,anchor=south east]
      at (0.96,1.02) {$D_1^t=D^t$};
    \node[font=\scriptsize\bfseries,text=fixedBS!82!black,
      anchor=north] at (0.46,1.96)
      {$\alpha=2$: BS/$\chi^2$ fixed points};
    \node[font=\scriptsize\bfseries,text=fixedBS!82!black,
      rotate=90,anchor=south] at (1.015,0.92)
      {$t=1$: BS fixed points};
  \end{scope}

  % Inclusion of the two equality classes.
  \begin{scope}[shift={(10.25cm,2.15cm)}]
    \filldraw[fill=fixedBS!9,draw=fixedBS!85!black,line width=1.1pt]
      (0,0) ellipse (2.15cm and 1.42cm);
    \node[text=fixedBS!82!black,font=\small\bfseries]
      at (0,0.91) {BS fixed-point class};
    \node[text=fixedBS!72!black,font=\scriptsize]
      at (0,0.60) {$t=1$ or $\alpha=2$};
    \filldraw[fill=fixedPetz!13,draw=fixedPetz!82!black,line width=1pt]
      (0,-0.23) ellipse (1.23cm and 0.64cm);
    \node[text=fixedPetz!78!black,font=\scriptsize\bfseries,
      align=center] at (0,-0.23) {Petz\\fixed points};
    \node[font=\scriptsize,align=center] at (0,-1.78)
      {Petz $\subseteq$ BS\\[-1pt]
       \textcolor{black!58}{the inclusion can be strict}};
  \end{scope}
\end{tikzpicture}
\caption{Fixed-point phase diagram for the $(t,\alpha)$-R\'enyi
divergences in their DPI range.  For fixed $\sigma$ and $\Phi$, a
fixed point here means an invertible state $\rho$ saturating the
$D_\alpha^t$-DPI.  All points in the blue region have the same Petz
fixed-point class, including the continuous order $\alpha=1$.  The
orange boundary, consisting of $t=1$ or $\alpha=2$, has the common
BS/$\chi^2$ class.  The latter contains the Petz class and can be
strictly larger, although the two may coincide for particular
channels and reference states.}
\label{fig:consolidated-Renyi-fixed-point-phase}
\end{figure}
\FloatBarrier

\subsection{State recovery and the equality hierarchy}

The likelihood-ratio condition, rather than a state identity alone, is
the complete invariant for nonmaximal DPI equality.  Nevertheless,
choosing $g(x)=x$ in \eqref{eq:consolidated-all-g} yields a useful
state identity.

\subsubsection{The asymmetric state identity}

\begin{proposition}[The asymmetric state-recovery consequence]
\label{prop:consolidated-asymmetric-recovery}
If the conditions of
\Cref{thm:consolidated-equality-recovery} hold, then
\begin{equation}
 \rho=\mathcal R_{\sigma,\Phi}^{\rm asym}(\Phi(\rho)),
 \qquad
 \mathcal R_{\sigma,\Phi}^{\rm asym}(X)
 :=\Phi^*(X[\Phi(\sigma)]^{-1})\sigma.
 \label{eq:consolidated-asymmetric-map}
\end{equation}
The map is trace preserving and recovers $\Phi(\sigma)\mapsto\sigma$,
but is generally neither positive nor Hermiticity preserving.  For
$0<t<1$, \eqref{eq:consolidated-asymmetric-map} is necessary but
not sufficient for $D^t$-saturation; it characterizes the larger
BS equality class.  The full criterion is
\eqref{eq:consolidated-channel-likelihood}.
\end{proposition}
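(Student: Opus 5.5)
The plan is to read the state identity off the functional-calculus form of equality, condition (3) of \Cref{thm:consolidated-equality-recovery}, with the linear test function $g(x)=x$. Put $\xi=\sigma^{1/2}$ and $\eta=[\Phi(\sigma)]^{1/2}$. By \eqref{rw:eq:gamma-normalization}, applied at the input and at the output, we have $\Gamma_t\xi=\rho\sigma^{-1/2}$ and $\Gamma_t^\Phi\eta=\Phi(\rho)[\Phi(\sigma)]^{-1/2}$. Then \eqref{eq:consolidated-all-g} reads
\begin{equation*}
 \rho\sigma^{-1/2}
 =\mathsf V_{\sigma,\Phi}\bigl(\Phi(\rho)[\Phi(\sigma)]^{-1/2}\bigr)
 =\Phi^*\bigl(\Phi(\rho)[\Phi(\sigma)]^{-1}\bigr)\sigma^{1/2}.
\end{equation*}
Right multiplication by $\sigma^{1/2}$ gives $\rho=\mathcal R^{\rm asym}_{\sigma,\Phi}(\Phi(\rho))$.

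The elementary properties of the map follow directly. Trace preservation comes from duality: $\operatorname{Tr}[\Phi^*(X\Phi(\sigma)^{-1})\sigma]=\operatorname{Tr}[X\Phi(\sigma)^{-1}\Phi(\sigma)]=\operatorname{Tr}X$. Recovery of the reference state comes from $\Phi^*(I)=I$. For the failure of positivity and of Hermiticity preservation, I would give a qubit example: take $\Phi$ to be the completely dephasing channel and $\sigma$ non-diagonal. Then $\Phi^*(X\Phi(\sigma)^{-1})$ is diagonal but does not commute with $\sigma$, so the image of a rank-one diagonal $X$ is a non-Hermitian product.

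The substantive claim is that the asymmetric identity describes exactly the BS class. My route is to show that it is equivalent to saturation of the $t$-independent $\chi^2$ divergence, i.e.\ of $D_{x^2}^t$. Set $x:=\rho\sigma^{-1/2}$ and $y:=\Phi(\rho)\Phi(\sigma)^{-1/2}$. The DPI proof for the quadratic term already established $\mathsf V_{\sigma,\Phi}^*x=y$, and it gave the identities $\|x\|_2^2=\operatorname{Tr}(\rho^2\sigma^{-1})$ and $\|y\|_2^2=\operatorname{Tr}(\Phi(\rho)^2\Phi(\sigma)^{-1})$.

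If the asymmetric identity holds, then $x=\mathsf V_{\sigma,\Phi}y$, and hence
\begin{equation*}
 \|x\|_2^2=\langle\mathsf V_{\sigma,\Phi}y,x\rangle=\langle y,\mathsf V_{\sigma,\Phi}^*x\rangle=\|y\|_2^2,
\end{equation*}
so the $\chi^2$ DPI is saturated. Conversely, suppose $\|\mathsf V_{\sigma,\Phi}^*x\|_2=\|x\|_2$. Since $\mathsf V_{\sigma,\Phi}^*$ is a contraction, this equality forces $\mathsf V_{\sigma,\Phi}\mathsf V_{\sigma,\Phi}^*x=x$, that is, $x=\mathsf V_{\sigma,\Phi}y$, which is the asymmetric identity again. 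Hence the asymmetric identity is equivalent to equality in the quadratic DPI. That quadratic equality set is the maximal/BS equality class, by the $\chi^2$ reversibility result \cite[Theorem~3.34]{hiai2017different} and the endpoint remark following \Cref{cor:consolidated-partial-trace}.

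It remains to show that the asymmetric identity is not sufficient for $D^t$-saturation when $0<t<1$. This is the step I expect to be the main obstacle, because it needs the strictness of the inclusion. I would combine two facts. First, the identification of the nonmaximal equality class with the Petz class, \Cref{prop:consolidated-general-hierarchy}; this is a forward reference, and it does not depend on the present proposition. Second, an invertible triple that is BS-sufficient but not Petz-sufficient, such as the Werner-state witness of \cite{Bluhm:2025kai}. For that triple the asymmetric identity holds while $D^t$-saturation fails, which proves that the condition is necessary but not sufficient. The complete criterion therefore remains the likelihood-ratio identity \eqref{eq:consolidated-channel-likelihood}.
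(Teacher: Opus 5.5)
Your proof is correct. Its core is exactly the paper's proof: take $g(x)=x$ in condition~(3) of \Cref{thm:consolidated-equality-recovery}, use $\Gamma_t\sigma^{1/2}=\rho\sigma^{-1/2}$ and its output analogue, and right-multiply by $\sigma^{1/2}$. Trace preservation then follows by duality, and $\mathcal R_{\sigma,\Phi}^{\rm asym}(\Phi(\sigma))=\sigma$ follows from $\Phi^*(I)=I$.

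The paper's proof stops there. It leaves the remaining qualitative assertions to the literature (\cite[Theorem~3.34]{hiai2017different}, \cite[Theorem~5.1]{Bluhm:2025kai}) and to the later hierarchy and strictness-witness results. Those assertions are the failure of positivity and Hermiticity preservation, the identification with the BS class, and non-sufficiency for $0<t<1$. Your additional arguments for them are sound:
\begin{itemize}
\item The dephasing example does send a positive rank-one input to a non-Hermitian output.
\item Your contraction argument uses $\mathsf V_{\sigma,\Phi}^*(\rho\sigma^{-1/2})=\Phi(\rho)[\Phi(\sigma)]^{-1/2}$ from the quadratic DPI step, together with the equality case $\mathsf V_{\sigma,\Phi}\mathsf V_{\sigma,\Phi}^*x=x$. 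This gives a self-contained proof that the asymmetric identity is equivalent to $\chi^2$-saturation. The paper does not spell this out, and it means only the step from $\chi^2$-saturation to the BS class needs the Hiai--Mosonyi citation.
\item The non-sufficiency argument via \Cref{prop:consolidated-general-hierarchy} and an invertible BS-QMC that is not a QMC is not circular. \Cref{thm:consolidated-interior-rigidity} and the hierarchy never use this proposition.
\end{itemize}
For the non-sufficiency step you need only the easy direction, namely that BS saturation implies the asymmetric identity. That is your own first paragraph applied at $t=1$ with $f=x\log x$, so the witness argument does not depend on the $\chi^2$ equivalence at all.
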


\begin{proof}
The identities
\begin{equation}
 \Gamma_t\sigma^{1/2}=\rho\sigma^{-1/2},\qquad
 \Gamma_t^\Phi[\Phi(\sigma)]^{1/2}=\Phi(\rho)[\Phi(\sigma)]^{-1/2}
\end{equation}
turn \eqref{eq:consolidated-all-g} with $g(x)=x$ into
\begin{equation}
 \rho\sigma^{-1/2}
 =\Phi^*(\Phi(\rho)[\Phi(\sigma)]^{-1})\sigma^{1/2}.
\end{equation}
Right multiplication by $\sigma^{1/2}$ proves the state identity.
Moreover,
\begin{equation}
 \operatorname{Tr}\mathcal R_{\sigma,\Phi}^{\rm asym}(X)
 =\operatorname{Tr}[X[\Phi(\sigma)]^{-1}\Phi(\sigma)]=\operatorname{Tr} X,
\end{equation}
and $\mathcal R_{\sigma,\Phi}^{\rm asym}(\Phi(\sigma))=\sigma$.
\end{proof}

\begin{remark}[Asymmetric and symmetric BS recovery]
The map in \eqref{eq:consolidated-asymmetric-map} is the
adjoint-order version of the asymmetric BS recovery map introduced
in 
\cite[Theorem~4.2 and (14)]{bluhm2020strengthened}.
Indeed, their convention is
\begin{equation}
 \mathcal B_{\sigma,\Phi}(X)
 :=\sigma\Phi^*\!\left([\Phi(\sigma)]^{-1}X\right),
 \qquad
 \mathcal R_{\sigma,\Phi}^{\rm asym}(X)
 =\mathcal B_{\sigma,\Phi}(X)^*,\quad X=X^*.
 \label{eq:consolidated-two-asymmetric-orders}
\end{equation}
Consequently, their fixed-point condition and
\eqref{eq:consolidated-asymmetric-map} are equivalent.

A positive, but nonlinear, symmetric version is obtained by setting
for $X\geq0$
\begin{equation}
 \mathcal R_{\sigma,\Phi}^{\rm sym}(X)
 :=\left[\sigma\Phi^*\!\left(
 [\Phi(\sigma)]^{-1}X^2[\Phi(\sigma)]^{-1}
 \right)\sigma\right]^{1/2}.
 \label{eq:consolidated-nonlinear-symmetric-BS-map}
\end{equation}
The symmetric BS recovery theorem
\cite[Theorem~5.1]{Bluhm:2025kai} gives the equivalences
\begin{multline}
 \widehat D(\rho\Vert\sigma)
 =\widehat D(\Phi(\rho)\Vert\Phi(\sigma))
 \quad\Longleftrightarrow\quad
 \rho=\mathcal R_{\sigma,\Phi}^{\rm asym}(\Phi(\rho))
 \Longleftrightarrow\quad
 \rho=\mathcal R_{\sigma,\Phi}^{\rm sym}(\Phi(\rho)).
 \label{eq:consolidated-symmetric-BS-equivalence}
\end{multline}
For the tripartite specialization
$\Phi=\operatorname{Tr}_A$, $\sigma=\rho_{AB}\otimes\tau_C$, this last identity
becomes, with identity embeddings suppressed,
\begin{equation}
 \rho_{ABC}
 =\left(
 \rho_{AB}\rho_B^{-1}\rho_{BC}^{2}
 \rho_B^{-1}\rho_{AB}
 \right)^{1/2}.
 \label{eq:consolidated-tripartite-nonlinear-symmetric-BS}
\end{equation}

There is also a different symmetric representative which is linear
and completely positive, although generally not trace preserving.
For an invertible tripartite state $\varrho_{ABC}$, we define
\begin{align}
 \Phi_{B\to AB}^{\rm BS}(Z)
 :={}&\varrho_B^{1/2}
 (\varrho_B^{-1/2}\varrho_{AB}\varrho_B^{-1/2})^{1/2}
 \varrho_B^{-1/2}Z\varrho_B^{-1/2}
 (\varrho_B^{-1/2}\varrho_{AB}\varrho_B^{-1/2})^{1/2}
 \varrho_B^{1/2}.
 \label{eq:consolidated-BS-CP-map}
\end{align}
This map was introduced in
\cite[(54)]{alhambra2026conditional}.  Its exact common-fixed-point
property with the asymmetric BS condition was subsequently proved in
\cite[(13) and Corollary~3.9]{Bluhm:2025kai}:
\begin{equation}
 \begin{aligned}
  \varrho_{ABC}
  =(\mathcal R_{\varrho_{AB},\operatorname{Tr}_A}^{\rm asym}
    \otimes\id_C)(\varrho_{BC})
  &\Longleftrightarrow
  \varrho_{ABC}=\varrho_{AB}\varrho_B^{-1}\varrho_{BC}\\
  &\Longleftrightarrow
  \varrho_{ABC}
  =(\Phi_{B\to AB}^{\rm BS}\otimes\id_C)(\varrho_{BC}).
 \end{aligned}
 \label{eq:consolidated-tripartite-symmetric-BS-equivalence}
\end{equation}
In the first term, the opposite product order produced by
\eqref{eq:consolidated-asymmetric-map} is equivalent after taking
adjoints, since $\varrho_{ABC}=\varrho_{ABC}^*$.  Thus the nonlinear
symmetric identity is necessary whenever the hypotheses of
\Cref{prop:consolidated-asymmetric-recovery} hold, and in the
tripartite partial-trace setting so is the completely positive
identity.  For $0\leq t<1$ these remain conditions for the larger BS
equality class and are not, in general, sufficient for
$D^t$-saturation; the full criterion remains
\eqref{eq:consolidated-channel-likelihood}.
\end{remark}

\subsubsection{Equality classes and the endpoint hierarchy}

For an operator-convex generator $f$, we define the invertible equality
class
\begin{equation}
 \mathfrak F_{f,t}(\sigma,\Phi)
 :=\{\rho>0:
 D_f^t(\rho\Vert\sigma)
 =D_f^t(\Phi(\rho)\Vert\Phi(\sigma))\}.
 \label{eq:consolidated-Fft}
\end{equation}
We abbreviate
$\mathfrak F_t:=\mathfrak F_{x\log x,t}$, and write
\begin{equation}
 \mathfrak F_{\rm P}(\sigma,\Phi)
 :=\{\rho>0:\rho=\mathcal P_{\sigma,\Phi}(\Phi(\rho))\}
\end{equation}
for the Petz class, where the Petz map is defined in
\eqref{rw:eq:petz-map}.  The BS class $\mathfrak F_{\rm BS}$ is the
common equality class at the maximal endpoint $t=1$.  We write
$\mathfrak F_{\alpha,t}$ for the equality class of
$D_\alpha^t$.

\begin{proposition}[Complete equality-set hierarchy]
\label{prop:consolidated-general-hierarchy}
We take $\Phi$ to be a channel and all states to be invertible.
For every $0\leq t<1$ and every equality-determining
operator-convex $f$,
\begin{equation}
 \mathfrak F_{f,t}(\sigma,\Phi)
 =\mathfrak F_{\rm P}(\sigma,\Phi).
\end{equation}
In particular,
\begin{equation}
 \boxed{
 \mathfrak F_0(\sigma,\Phi)
 =\mathfrak F_t(\sigma,\Phi)
 =\mathfrak F_{\rm P}(\sigma,\Phi)
 \subseteq\mathfrak F_1(\sigma,\Phi)
 =\mathfrak F_{\rm BS}(\sigma,\Phi),
 \qquad 0\leq t<1.
 }
 \label{eq:consolidated-general-hierarchy}
\end{equation}
For the R\'enyi interpolation this becomes
\begin{equation}
 \boxed{\begin{gathered}
  \mathfrak F_{\alpha,t}=\mathfrak F_{\rm P},\\[-2pt]
  0\leq t<1,\qquad \alpha\in(0,1)\cup(1,2)
 \end{gathered}}
\end{equation}
and the same formula holds at $\alpha=1$ after the continuous
definition $D_1^t:=D^t$, whereas
\begin{equation}
 \boxed{
  \mathfrak F_{2,t}=\mathfrak F_{\rm BS},
  \qquad 0\leq t\leq1
 }
\end{equation}
At $t=1$, equality for any non-affine operator-convex generator
has the common maximal/BS equality class.  Thus all measure-rich
nonmaximal classes coincide with Petz sufficiency.  Among the Umegaki
and power/R\'enyi examples, only the endpoint and the purely
quadratic exception can be larger; sparse non-equality-determining
generators may also have larger equality sets, as explained below.
The inclusion in
\eqref{eq:consolidated-general-hierarchy} can be strict, but need not
be strict for each fixed channel and reference state.
\end{proposition}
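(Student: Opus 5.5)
By \Cref{thm:consolidated-equality-recovery}, for every equality-determining operator-convex $f$ and every $t$, membership $\rho\in\mathfrak F_{f,t}(\sigma,\Phi)$ is equivalent to the likelihood-ratio recovery condition \eqref{eq:consolidated-channel-likelihood}, and equivalently to the full functional-calculus identity \eqref{eq:consolidated-all-g}. In particular $\mathfrak F_{f,t}$ does not depend on the choice of equality-determining $f$. The whole proposition therefore reduces to one claim: for $0\le t<1$ the condition \eqref{eq:consolidated-all-g} at parameter $t$ is equivalent to Petz sufficiency, $\rho=\mathcal P_{\sigma,\Phi}(\Phi(\rho))$. Once this is shown, the R\'enyi statements follow from the remark in that theorem that $\mp x^\alpha$ are equality determining for $\alpha\in(0,1)\cup(1,2)$ and that saturation of $D_\alpha^t$ is saturation of $Q_{\alpha,t}$. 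The $\alpha=1$ case is $f=x\log x$. The case $t=1$ gives $\mathfrak F_1=\mathfrak F_{\rm BS}$ by the endpoint remark, since there the condition is the square-root BS condition of \cite{bluhm2020strengthened}.

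For $\mathfrak F_{\rm P}\subseteq\mathfrak F_{f,t}$ I use the structure of sufficient channels. If $\Phi$ is Petz-sufficient for $\{\rho,\sigma\}$, then by \cite{petz2003monotonicity,jenvcova2006sufficiency} it is sufficient for the whole algebra generated by $\rho$ and $\sigma$: it preserves all $D_h$ and intertwines the functional calculi $\Phi^*(\Phi(\rho)^{it}\Phi(\sigma)^{-it})=\rho^{it}\sigma^{-it}$. Equivalently, there is a decomposition $\rho=\bigoplus_j p_j\rho_j\otimes\omega_j$, $\sigma=\bigoplus_j q_j\sigma_j\otimes\omega_j$ on which $\Phi$ acts isometrically on the first factors. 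Since $\Gamma_t$ is built only from powers and products of $\rho,\sigma$, the flag rule of \Cref{rw:prop:elementary-structural-rules-2} and the tensor rule (iii) of \Cref{rw:prop:elementary-structural-rules} give $D_f^t(\rho\Vert\sigma)=D_f^t(\Phi(\rho)\Vert\Phi(\sigma))$ directly. Here $\omega_j$ cancels because $D^t_f(\omega\Vert\omega)$ is trivial in the tensor factorization.

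The converse, $\mathfrak F_{f,t}\subseteq\mathfrak F_{\rm P}$ for $t<1$, is the main obstacle. My plan is to extract from \eqref{eq:consolidated-all-g} an identity that only involves the left factor $L_{\rho^{1-t}}$, which is nondegenerate exactly when $t<1$. The isometric intertwining of \Cref{lem:consolidated-equality-propagation} holds for the one-parameter group $\Gamma_t^{is}=L_{\rho^{(1-t)is}}R_{B_t^{is}}$. Applying it to $\sigma^{1/2}$, and using the adjoint relation $\mathsf V^*\Gamma_t^{is}\sigma^{1/2}=(\Gamma_t^\Phi)^{is}\Phi(\sigma)^{1/2}$, I would read off
\[
\Phi^*\bigl(\Phi(\rho)^{(1-t)is}\,Y_s\bigr)=\rho^{(1-t)is}\,Z_s ,
\]
where $Y_s,Z_s$ are right factors. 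The aim is to separate the left and right factors, using that $\mathsf V$ is an isometry on the cyclic subspace, which forces the Kadison--Schwarz equality and hence multiplicativity of $\Phi^*$ on the generated elements. This should yield $\Phi^*(\Phi(\rho)^{iu})$ in the multiplicative domain for $u=(1-t)s$, and with it the Connes-cocycle criterion $\Phi^*(\Phi(\rho)^{iu}\Phi(\sigma)^{-iu})=\rho^{iu}\sigma^{-iu}$, which is equivalent to Petz recovery.

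If this separation is hard to make rigorous, the alternative is to use the $t=1$ BS consequence \eqref{eq:consolidated-asymmetric-map}, valid for every $t$ by \Cref{prop:consolidated-asymmetric-recovery}, together with the extra positive factor $\rho^{1-t}$. The idea is to show that \eqref{eq:consolidated-channel-likelihood} at two distinct values of $1-t>0$ forces Petz sufficiency, via analyticity in $t$. Strictness of the inclusion follows from the Werner-type BS-QMC witness of \cite{Bluhm:2025kai} with $\Phi=\operatorname{Tr}_A$. Non-strictness for particular $(\sigma,\Phi)$ follows, for instance, from commuting or unitary examples.
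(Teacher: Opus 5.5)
\textbf{The gap is in the converse inclusion $\mathfrak F_{f,t}\subseteq\mathfrak F_{\rm P}$ for $t<1$, which carries essentially all the content.} Your reduction to the likelihood/functional-calculus condition is fine, and so is $\mathfrak F_{\rm P}\subseteq\mathfrak F_{f,t}$. The paper gets the latter more directly: it applies the $D_f^t$-DPI to $\Phi$ and then to the CPTP Petz recovery channel, with no structure theorem needed.

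In the converse, your ``separation'' step is never specified. The intermediate target you name, that $\Phi(\rho)^{iu}$ lies in the multiplicative domain $\mathfrak R_{\Phi^*}$, is false in general, even for sufficient channels. Take $\Phi(X)=X\otimes\omega$ with $\omega$ an invertible qubit state that is not maximally mixed. Discarding the ancilla reverses $\Phi$, so every $\rho$ lies in $\mathfrak F_{\rm P}$ and in $\mathfrak F_{f,t}$. Yet $\Phi^*(Y)=\operatorname{Tr}_2[(I\otimes\omega)Y]$, and for $y=\Phi(\rho)^{iu}=\rho^{iu}\otimes\omega^{iu}$ one has $\Phi^*(y^*y)=I$ while $\Phi^*(y)^*\Phi^*(y)=|\operatorname{Tr}\omega^{1+iu}|^2I\neq I$ for generic $u$. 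In this example, the elements of $\mathfrak R_{\Phi^*}$ are combinations in which the ancilla weights cancel, such as $\Phi(\rho)^{iu}\Phi(\sigma)^{-iu}=\rho^{iu}\sigma^{-iu}\otimes I$ or the conjugated orbit points below. So no argument that isolates the left factor $L_{\Phi(\rho)^{1-t}}$ can work.

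What your unitary group actually gives is the following (Kadison--Schwarz equality is automatic here, since both vectors have norm one):
\[
 x_s:=\Phi(\rho)^{(1-t)is}\bigl(\Phi(\rho)^t\Phi(\sigma)^{-1}\bigr)^{is}\in\mathfrak R_{\Phi^*},
 \qquad
 \Phi^*(x_s)=\rho^{(1-t)is}\bigl(\rho^t\sigma^{-1}\bigr)^{is},
\]
with $(\rho^t\sigma^{-1})^{is}:=\sigma^{1/2}B_t^{is}\sigma^{-1/2}$ and analogously on the output. This twisted cocycle is Petz's only at $t=0$, and passing from it to the Petz cocycle is exactly what must be proved. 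Your fallback is circular. Equality is assumed at one fixed $t$, so there is no second value of $1-t$ at which \eqref{eq:consolidated-channel-likelihood} is known; $t$-independence of the equality class is the conclusion, not an input.

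\textbf{The missing idea is to use positive integer powers, for which the right factor telescopes into ordinary products.} One has $\Gamma_t^n\sigma^{1/2}=M_n\sigma^{1/2}$, where $M_n=\rho^{n(1-t)}(\rho^t\sigma^{-1})^n=X_{n-1}\cdots X_0$ and $X_j=\rho^{j(1-t)}\rho\sigma^{-1}\rho^{-j(1-t)}$; define $m_n$ and $x_j$ likewise on the output. Taking $g(x)=x^n$ in \eqref{eq:consolidated-all-g} gives $M_n=\Phi^*(m_n)$, and the isometric intertwining gives the matching norm equality. This is Kadison--Schwarz equality, so $m_n\in\mathfrak R_{\Phi^*}$. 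Since $\mathfrak R_{\Phi^*}$ is an inverse-closed unital algebra, the quotients $x_n=m_{n+1}m_n^{-1}$ lie in it, with $X_n=\Phi^*(x_n)$.

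Because $1-t>0$, the samples at $z=n(1-t)$ determine the exponential sums $X(z)=\rho^zX_0\rho^{-z}$ and $x(z)=\Phi(\rho)^zx_0\Phi(\rho)^{-z}$ by a Vandermonde argument. Hence $X(z)=\Phi^*(x(z))$ and $x(z)\in\mathfrak R_{\Phi^*}$ for all $z$. The telescoping products $X(m-1)\cdots X(0)=\rho^m\sigma^{-m}$ and right multiplicativity then give $\rho^m\sigma^{-m}=\Phi^*(\Phi(\rho)^m\Phi(\sigma)^{-m})$. A second Vandermonde step yields the full Petz cocycle. Your $x_s$ connects to this route only after continuing the exponential sum $s\mapsto x_s$ to $s=-in$, where it equals $m_n$.

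\textbf{Two claims about non-equality-determining generators are also not addressed by your reduction.} The identity $\mathfrak F_{2,t}=\mathfrak F_{\rm BS}$ requires the $t$-independence of $Q_{2,t}$ together with the Hiai--Mosonyi maximal-divergence preservation theorem. That same theorem is what gives the BS class at $t=1$ for every non-affine operator-convex generator.
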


\begin{proof}
If $\rho\in\mathfrak F_{\rm P}$, Petz sufficiency supplies a channel
$\mathcal P$ recovering both $\rho$ and $\sigma$
\cite[Theorem~3.18(i),(iii),(viii)]{hiai2017different}.  Applying
the $D_f^t$-DPI first to $\Phi$ and then to $\mathcal P$ gives
\begin{equation}
 \begin{aligned}
  D_f^t(\rho\Vert\sigma)
  \geq D_f^t(\Phi(\rho)\Vert\Phi(\sigma))\geq D_f^t(\mathcal P(\Phi(\rho))\Vert
               \mathcal P(\Phi(\sigma)))
   =D_f^t(\rho\Vert\sigma)
 \end{aligned}
\end{equation}
The reverse inclusion under the equality-determining support
condition is
\Cref{thm:consolidated-interior-rigidity} below.  At $t=1$,
the maximal-divergence preservation theorem
\cite[Theorem~3.34]{hiai2017different} identifies the common
non-affine equality class with $\mathfrak F_{\rm BS}$.  The
power generators have full representing support for
$\alpha\in(0,1)\cup(1,2)$.  At $\alpha=2$, the generator is
quadratic and $D_2^t$ is independent of $t$, with precisely the
BS/$\chi^2$ equality condition
\cite[Example~4.2 and Theorem~3.34]{hiai2017different}.
\end{proof}

\begin{remark}[Why ``any $f$'' needs a hypothesis]
The equality-determining assumption is substantive and essentially
sharp for this argument.  An affine generator carries no information,
and a purely quadratic generator gives the $t$-independent
$\chi^2$ equality class.  More generally, a representing measure
with too few support points need not determine the full functional
calculus of $\Gamma_t$ and $\Gamma_t^\Phi$.  Thus the universal
nonmaximal Petz statement applies to every measure-rich
operator-convex generator---in particular to Umegaki and to all
Petz--R\'enyi orders in the DPI range except the quadratic endpoint---
but not literally to every non-affine formula without qualification.
\end{remark}

\subsubsection{Nonmaximal rigidity}

The proof of the reverse inclusion is one of the main results of the paper, and it is a Stinespring refinement of the
partial-trace argument.  A direct application of partial-trace
rigidity to $J\rho J^*$ and $J\sigma J^*$ would be incomplete:
these states are invertible only on their common support $P=JJ^*$, and the compressed
inclusion $x\mapsto P(x\otimes I_{\mathcal E})P$ is not generally
multiplicative.  Equality supplies exactly the missing one-sided
multiplicativity.

\begin{theorem}[Universal nonmaximal Petz rigidity]
\label{thm:consolidated-interior-rigidity}
For a channel
$\Phi:\mathcal B(\mathcal H)\to\mathcal B(\mathcal K)$,
states $\rho,\sigma>0$ satisfying
$\Phi(\rho)>0$ and $\Phi(\sigma)>0$, a parameter $0\leq t<1$,
and an equality-determining operator-convex function $f$ in the sense of
\eqref{eq:consolidated-equality-determining-support}, one has
\begin{equation}
 D_f^t(\rho\Vert\sigma)=D_f^t(\Phi(\rho)\Vert\Phi(\sigma))
 \quad\Longleftrightarrow\quad
 \rho=\mathcal P_{\sigma,\Phi}(\Phi(\rho))
 \quad\Longleftrightarrow\quad
 D(\rho\Vert\sigma)=D(\Phi(\rho)\Vert\Phi(\sigma)).
 \label{eq:consolidated-interior-rigidity}
\end{equation}
In the equality case one has the full Petz cocycle identity
\begin{equation}
 \rho^z\sigma^{-z}
 =\Phi^*\!\left([\Phi(\rho)]^{\,z}[\Phi(\sigma)]^{-z}\right),
 \qquad z\in\mathbb C.
\end{equation}
\end{theorem}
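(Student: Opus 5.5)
The implication ``$\Leftarrow$'' from Petz recovery is already contained in \Cref{prop:consolidated-general-hierarchy}: Petz sufficiency gives a channel recovering both $\rho$ and $\sigma$, and the $D_f^t$-DPI applied twice closes the chain. The equivalence of Petz recovery with Umegaki saturation is Petz's classical theorem, i.e.\ the case $t=0$ of \Cref{thm:consolidated-equality-recovery} together with \cite[Theorem~3.18]{hiai2017different}. So the whole content is the forward implication: for $0\leq t<1$, saturation of $D_f^t$ forces $\rho=\mathcal P_{\sigma,\Phi}(\Phi(\rho))$ and the cocycle identity. I will start from \Cref{thm:consolidated-equality-recovery}, since $f$ is equality determining. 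It gives \eqref{eq:consolidated-all-g} for every continuous $g$. It also shows that $\mathsf V_{\sigma,\Phi}$ is an isometric intertwiner from the cyclic subspace $\mathcal C^\Phi$ of $\Gamma_t^\Phi$ generated by $[\Phi(\sigma)]^{1/2}$ onto the cyclic subspace $\mathcal C$ of $\Gamma_t$ generated by $\sigma^{1/2}$.

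Next I pass to a Stinespring isometry $U:\mathcal H\to\mathcal K\otimes\mathcal E$ with $P=UU^*$, so that $\mathsf V_{\sigma,\Phi}(X)=U^*\bigl(X[\Phi(\sigma)]^{-1/2}\otimes I\bigr)U\sigma^{1/2}$. The isometry property on $\mathcal C^\Phi$ is equality in the Kadison--Schwarz step $\Phi^*(K)^*\Phi^*(K)\leq\Phi^*(K^*K)$, tested against the weight coming from $\rho$. Because $\rho>0$, this should force $P(K\otimes I)U=(K\otimes I)U$ for all $K=X[\Phi(\sigma)]^{-1/2}$ with $X\in\mathcal C^\Phi$. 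This is the ``one-sided multiplicativity'' announced before the theorem. Equality in the transformer inequality of \Cref{lem:consolidated-transformer} on these vectors should similarly give equality in Petz's power quasi-entropy monotonicity with left exponent $1-t$. I intend to rewrite the intertwining relation $\Gamma_t\mathsf V_{\sigma,\Phi}=\mathsf V_{\sigma,\Phi}\Gamma_t^\Phi$ on $\mathcal C^\Phi$, using $\Gamma_t=L_{\rho^{1-t}}R_{B_t}$, as an identity of the form
\[
\rho^{1-t}\,U^*(Y\otimes I)U=U^*\bigl(\Phi(\rho)^{1-t}Y\otimes I\bigr)U
\]
valid on the relevant operators $Y$.

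Here $t<1$ is used essentially. The left multiplication factor has the nonzero exponent $1-t$, so the intertwining carries genuine information on the left. Taking positive powers and then analytically continuing in the exponent, I would obtain $U\rho^{(1-t)z}=(\Phi(\rho)^{(1-t)z}\otimes I)U$ on the appropriate range. By contrast, at $t=1$ the left factor disappears and only the BS condition survives.

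The main obstacle is upgrading this relation from the cyclic subspace $\mathcal C^\Phi$ to an operator identity strong enough to yield the Petz cocycle. My first attempt is to feed in $g(x)=x^{z}$ for imaginary $z$ in \eqref{eq:consolidated-all-g}. This gives
\[
\rho^{(1-t)z}\sigma^{1/2}B_t^{z}=\mathsf V_{\sigma,\Phi}\bigl(\Phi(\rho)^{(1-t)z}[\Phi(\sigma)]^{1/2}(B_t^\Phi)^{z}\bigr),
\]
which I will combine with the asymmetric identity of \Cref{prop:consolidated-asymmetric-recovery} and the left-multiplicativity above. The aim is to eliminate the right factors $B_t^z$ and $(B_t^\Phi)^z$ and isolate $\rho^{z}\sigma^{-z}=\Phi^*\bigl(\Phi(\rho)^z\Phi(\sigma)^{-z}\bigr)$, first for $z\in i\mathbb R$ and then for all $z\in\mathbb C$ by analytic continuation. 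Setting $z=1/2$ and multiplying on the right by $\sigma^{1/2}$ gives $\rho^{1/2}=\Phi^*\bigl(\Phi(\rho)^{1/2}\Phi(\sigma)^{-1/2}\bigr)\sigma^{1/2}$. Taking adjoints and combining this with the symmetric Petz form, as in Petz's original argument \cite{petz2003monotonicity}, then yields $\rho=\mathcal P_{\sigma,\Phi}(\Phi(\rho))$.
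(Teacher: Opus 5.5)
Your outer steps are fine and match the paper. The reverse implication via the Petz channel and two applications of the $D_f^t$-DPI, the Umegaki equivalence via Petz's theorem, and starting from \Cref{thm:consolidated-equality-recovery} are all as in the paper. Your Kadison--Schwarz observation is essentially the paper's Steps~1--2: the norm equality forces $K=X[\Phi(\sigma)]^{-1/2}$ into the right multiplicative domain $\mathfrak R_{\Phi^*}$. Note, though, that the test weight is $\sigma$, not $\rho$, since $\lVert X\rVert_2^2=\operatorname{Tr}[\sigma\Phi^*(K^*K)]$ and $\lVert\mathsf V_{\sigma,\Phi}X\rVert_2^2=\operatorname{Tr}[\sigma\Phi^*(K)^*\Phi^*(K)]$.

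The gap is the central step. The left-only identity $\rho^{1-t}\Phi^*(Y)=\Phi^*(\Phi(\rho)^{1-t}Y)$ and its continuation $U\rho^{(1-t)z}=(\Phi(\rho)^{(1-t)z}\otimes I)U$ are false, even in equality cases. Take $\Phi=\operatorname{Tr}_A$, $\rho=\omega_A\otimes\rho_B$, $\sigma=\omega_A\otimes\sigma_B$ with $d_A\geq2$ and all factors invertible. This pair is Petz sufficient and $\Phi^*(Y)=I_A\otimes Y$. Your identity then reads $\omega_A^{1-t}\otimes\rho_B^{1-t}Y=I_A\otimes\rho_B^{1-t}Y$, which fails for every nonzero $Y$. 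In particular it fails for $Y=I$, which comes from $X=[\Phi(\sigma)]^{1/2}\in\mathcal C^\Phi$.

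The intertwining $\Gamma_t\mathsf V_{\sigma,\Phi}=\mathsf V_{\sigma,\Phi}\Gamma_t^\Phi$ holds only for the joint operator $L_{\rho^{1-t}}R_{B_t}$ on the cyclic vectors. In this example the left factor $\omega_A^{1-t}$ is cancelled on $\sigma^{1/2}$ by the factor $\omega_A^{t-1}$ inside $B_t$. So there is no left relation with which to eliminate $B_t^z$ and $(B_t^\Phi)^z$ from your imaginary-power identity. Moreover, for non-integer $z$ the vector $\sigma^{1/2}B_t^z$ has no usable closed form, so the cocycle is never reached.

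The missing idea is to work with \emph{integer} powers. There the right factor unfolds, and the two-sided action on the cyclic vector becomes left multiplication by an explicit operator: $\Gamma_t^n\sigma^{1/2}=M_n\sigma^{1/2}$ with $M_n=X_{n-1}\cdots X_0$ and $X_j=\rho^{j(1-t)}\rho\sigma^{-1}\rho^{-j(1-t)}$, and similarly $m_n$ at the output.

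\begin{enumerate}
\item Equality then gives $M_n=\Phi^*(m_n)$ and, via your Kadison--Schwarz argument, $m_n\in\mathfrak R_{\Phi^*}$.
\item Inverse closure of $\mathfrak R_{\Phi^*}$ lets you take quotients: $x_n=m_{n+1}m_n^{-1}\in\mathfrak R_{\Phi^*}$ and $X_n=\Phi^*(x_n)$. This isolates the single orbit points $\rho^{z}\rho\sigma^{-1}\rho^{-z}$ at $z=n(1-t)$.
\item A Vandermonde argument on these samples extends $\rho^z\rho\sigma^{-1}\rho^{-z}=\Phi^*\bigl([\Phi(\rho)]^z\Phi(\rho)[\Phi(\sigma)]^{-1}[\Phi(\rho)]^{-z}\bigr)$, together with membership of the argument in $\mathfrak R_{\Phi^*}$, to all $z\in\mathbb C$. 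The nodes are $\lambda^{1-t}$, with $\lambda$ ranging over the distinct eigenvalue ratios; they are distinct precisely because $t<1$, and this is where nonmaximality genuinely enters.
\item At integer $z$ these factors telescope. Right-multiplicativity of $\Phi^*$ on $\mathfrak R_{\Phi^*}$ then gives $\rho^m\sigma^{-m}=\Phi^*([\Phi(\rho)]^m[\Phi(\sigma)]^{-m})$ for $m\geq1$.
\item A second Vandermonde argument yields the full cocycle, from which Petz recovery follows as you indicate.
\end{enumerate}
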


\begin{proof}
Only the implication from $D_f^t$-equality to Petz sufficiency is
nontrivial.  More precisely, we shall prove the Petz cocycle identity
\begin{equation}
 \rho^z\sigma^{-z}
 =\Phi^*\!\left(
 [\Phi(\rho)]^z[\Phi(\sigma)]^{-z}\right),
 \qquad z\in\mathbb C,
 \label{eq:consolidated-proof-target-Petz-cocycle}
\end{equation}
whose restriction to purely imaginary $z$ characterizes sufficiency.
We divide the argument into four steps.

\smallskip
\noindent\underline{\textbf{Step 1.}}\enspace
As stated above, the target is Petz's modular-cocycle criterion for sufficiency
\cite[Theorem~2]{petz2003monotonicity}; the equivalent
reversibility conditions appear in
\cite[Theorem~3.18(i),(vi)]{hiai2017different}.  Equality propagation,
however, initially gives identities for the powers of the geodesic
operators $\Gamma_t$ and $\Gamma_t^\Phi$, rather than for the Petz
cocycle itself.  The discrete modular orbits
introduced next provide the bridge: their ordered products represent
these powers on the corresponding cyclic vectors, consecutive
quotients isolate individual orbit points, and Step~3 promotes the
resulting lattice identities to the full complex orbit.  Thus the
proof combines the standard multiplicative-domain route to Petz
sufficiency with a finite-dimensional discrete spectral continuation.
The latter is analogous in spirit to the Cauchy-matrix interpolation
used to recover the full functional calculus in
\cite[Lemma~5.2 and Theorem~5.1]{hiai2011quantum}, but the sampled
orbit, quotient extraction, and Vandermonde implementation below are
adapted to the present geodesic operator.

Since $t<1$, one has $1-t>0$.  We define the input and output modular orbits
\begin{align*}
 X_0&:=\rho\sigma^{-1},&
 x_0&:=\Phi(\rho)[\Phi(\sigma)]^{-1},\\
 X_j&:=\rho^{j(1-t)}X_0\rho^{-j(1-t)},&
 x_j&:=[\Phi(\rho)]^{j(1-t)}x_0
       [\Phi(\rho)]^{-j(1-t)}\qquad(j\geq0).
\end{align*}
We set $M_0=m_0=I$ and, for $n\geq1$, define
\begin{equation}
 M_n:=X_{n-1}\cdots X_0,\qquad
 m_n:=x_{n-1}\cdots x_0.
\end{equation}
Indeed, direct induction from the explicit action of the geodesic
modular operator gives
\begin{equation}
 \Gamma_t^n\sigma^{1/2}
 =\rho^{n(1-t)}(\rho^t\sigma^{-1})^n\sigma^{1/2}
 =M_n\sigma^{1/2},\qquad
 (\Gamma_t^\Phi)^n[\Phi(\sigma)]^{1/2}
 =m_n[\Phi(\sigma)]^{1/2},
 \qquad n\geq0.
 \label{eq:consolidated-orbit-moment-representation}
\end{equation}
Thus $M_n$ and $m_n$ are the ordinary-operator representatives of the
geodesic modular moments.  Equality of those moments will now
propagate along the discrete orbits.  Starting from condition~\textup{(5)}
of \Cref{thm:consolidated-equality-recovery}, namely the assumed
$D_f^t$-DPI equality, the implication
\textup{(5)}$\Rightarrow$\textup{(3)}, applied to $g(x)=x^n$ in
\eqref{eq:consolidated-all-g}, gives the vector identity
\begin{equation}
 M_n\sigma^{1/2}
 =\mathsf V_{\sigma,\Phi}
   \bigl(m_n[\Phi(\sigma)]^{1/2}\bigr)
 =\Phi^*(m_n)\sigma^{1/2}.
 \label{eq:consolidated-orbit-propagated-vector}
\end{equation}
Since $\sigma^{1/2}$ is invertible, the vector identity gives the
operator equality below.  The implication
\textup{(5)}$\Rightarrow$\textup{(4)} in
\Cref{thm:consolidated-equality-recovery} says that the cyclic
intertwiner is isometric on these vectors and gives the accompanying
norm equality:
\begin{equation}
 M_n=\Phi^*(m_n),\qquad
 \lVert M_n\sigma^{1/2}\rVert_2
 =\lVert m_n[\Phi(\sigma)]^{1/2}\rVert_2.
 \label{eq:norm_identity_Petz}
\end{equation}
Together, \eqref{eq:consolidated-orbit-moment-representation} and
\eqref{eq:consolidated-orbit-propagated-vector} turn the propagated
Hilbert--Schmidt moments into ordinary operator identities and norm
equalities.  The latter expose the equality case of the
Kadison--Schwarz inequality used next.

Because $\Phi$ is trace preserving, $\Phi^*$ is unital; since it is
also completely positive, the Kadison--Schwarz inequality applies
\cite{choi1974schwarz}:
for every $y\in\mathcal B(\mathcal K)$,
\[
 \Phi^*(y^*y)\geq \Phi^*(y)^*\Phi^*(y).
\]
Applying this inequality to $y=m_n$, and using
$M_n=\Phi^*(m_n)$ together with \eqref{eq:norm_identity_Petz}, yields
\begin{align*}
 0
 &=\lVert m_n[\Phi(\sigma)]^{1/2}\rVert_2^2
   -\lVert M_n\sigma^{1/2}\rVert_2^2=\operatorname{Tr}\sigma\left[
 \Phi^*(m_n^*m_n)-\Phi^*(m_n)^*\Phi^*(m_n)\right].
\end{align*}
The operator in brackets is positive, and $\sigma>0$; hence it
vanishes.  We have therefore proved
\begin{equation}
 \Phi^*(m_n^*m_n)
 =\Phi^*(m_n)^*\Phi^*(m_n),
 \qquad n\geq0.
 \label{eq:consolidated-Kadison-equality-mn}
\end{equation}
This is the \emph{Kadison equality} for $m_n$: it is precisely the
equality case of the preceding Kadison--Schwarz inequality when
$y=m_n$.

\smallskip
\noindent\underline{\textbf{Step 2.}}\enspace
Now we interpret this equality through a one-sided multiplicative
domain of the adjoint channel $\Phi^*$, using a Stinespring
representation.  The two-sided multiplicative-domain formulation appears in
\cite[Section~2.5, especially Eqs.~(2.12)--(2.13)]{hiai2017different}.
As recalled in \Cref{rw:subsec:states-channels}, a Stinespring
isometry for $\Phi$ consists of a finite-dimensional ancillary Hilbert
space $\mathcal E$ and an isometry
$J:\mathcal H\to\mathcal K\otimes\mathcal E$, so that
$J^*J=I_{\mathcal H}$ and
\begin{equation}
 \Phi(A)=\operatorname{Tr}_{\mathcal E}(JAJ^*),\qquad
 \Phi^*(y)=J^*(y\otimes I_{\mathcal E})J
 \label{eq:consolidated-proof-Stinespring-representation}
\end{equation}
for every $A\in\mathcal B(\mathcal H)$ and
$y\in\mathcal B(\mathcal K)$; the same representation is recorded in
\eqref{rw:eq:stinespring}.  Hence $P:=JJ^*$ is the orthogonal
projection of $\mathcal K\otimes\mathcal E$ onto the embedded copy
$J\mathcal H$ of the input space.

For the right
multiplicative domain
\begin{equation}
\mathfrak R_{\Phi^*}
:=\{x:\Phi^*(x^*x)=\Phi^*(x)^*\Phi^*(x)\},
\end{equation}
the Stinespring representation gives the more explicit identity
\begin{align}
 &\Phi^*(x^*x)-\Phi^*(x)^*\Phi^*(x)
 =J^*(x^*\otimes I_{\mathcal E})(I-P)
       (x\otimes I_{\mathcal E})J\geq0.
 \label{eq:consolidated-right-domain-Stinespring-defect}
\end{align}
The right-hand side is the square of
$(I-P)(x\otimes I_{\mathcal E})J$.  Consequently,
\begin{equation}
 x\in\mathfrak R_{\Phi^*}
 \quad\Longleftrightarrow\quad
 (I-P)(x\otimes I_{\mathcal E})J=0
 \quad\Longleftrightarrow\quad
 (x\otimes I_{\mathcal E})J=J\Phi^*(x).
\end{equation}
Whenever $x\in\mathfrak R_{\Phi^*}$, it follows more explicitly that
\begin{align}
 \Phi^*(yx)
 =J^*(y\otimes I_{\mathcal E})(x\otimes I_{\mathcal E})J
 =J^*(y\otimes I_{\mathcal E})J\Phi^*(x)
 =\Phi^*(y)\Phi^*(x)
 \label{eq:consolidated-right-domain-multiplication}
\end{align}
for every $y\in\mathcal B(\mathcal K)$.
We record explicitly the algebraic consequences of the Stinespring
criterion.  The identity belongs to $\mathfrak R_{\Phi^*}$, and
linearity of
$(x\otimes I_{\mathcal E})J=J\Phi^*(x)$ shows that
$\mathfrak R_{\Phi^*}$ is a linear space.  If
$x,z\in\mathfrak R_{\Phi^*}$, then
\begin{align}
 (xz\otimes I_{\mathcal E})J
 &=(x\otimes I_{\mathcal E})(z\otimes I_{\mathcal E})J\notag\\
 &=(x\otimes I_{\mathcal E})J\Phi^*(z)
  =J\Phi^*(x)\Phi^*(z)
  =J\Phi^*(xz),
 \label{eq:consolidated-right-domain-product}
\end{align}
where the last equality is
\eqref{eq:consolidated-right-domain-multiplication} with $y=x$ and
right factor $z$.  Hence $xz\in\mathfrak R_{\Phi^*}$, so
$\mathfrak R_{\Phi^*}$ is a unital linear algebra.

It is also inverse closed under the following natural hypothesis.  If
$x\in\mathfrak R_{\Phi^*}$ and both $x$ and $\Phi^*(x)$ are
invertible, then the Stinespring relation implies
\begin{equation}
 (x^{-1}\otimes I_{\mathcal E})J
 =J\Phi^*(x)^{-1}.
 \label{eq:consolidated-right-domain-inverse-relation}
\end{equation}
Indeed, this follows by multiplying
$(x\otimes I_{\mathcal E})J=J\Phi^*(x)$ on the left by
$x^{-1}\otimes I_{\mathcal E}$ and on the right by
$\Phi^*(x)^{-1}$.  Applying $J^*$ to
\eqref{eq:consolidated-right-domain-inverse-relation} gives
\begin{equation}
 \Phi^*(x^{-1})=\Phi^*(x)^{-1},
\end{equation}
and the Stinespring criterion then shows that
$x^{-1}\in\mathfrak R_{\Phi^*}$.

Next, \eqref{eq:consolidated-Kadison-equality-mn} and the definition of
$\mathfrak R_{\Phi^*}$ therefore show that every $m_n$ lies in this
right multiplicative domain.  Since $m_n$ and
$M_n=\Phi^*(m_n)$ are invertible, $m_n^{-1}$ lies there as well.
Thus
\begin{equation}
 x_n=m_{n+1}m_n^{-1}\in\mathfrak R_{\Phi^*},\qquad
 X_n=M_{n+1}M_n^{-1}=\Phi^*(x_n).
\end{equation}
This quotient step is the reason for introducing the ordered products
$M_n$ and $m_n$: it separates each individual modular-orbit sample
from the moment identities.

\smallskip
\noindent\underline{\textbf{Step 3.}}\enspace
In this step, we extend the discrete identities to arbitrary complex
parameters.  We write
\begin{equation}
 X(z):=\rho^zX_0\rho^{-z},\qquad
 x(z):=[\Phi(\rho)]^zx_0[\Phi(\rho)]^{-z}.
\end{equation}
We now make explicit what is used from the preceding steps.
By definition,
\[
 X\bigl(n(1-t)\bigr)=X_n,\qquad
 x\bigl(n(1-t)\bigr)=x_n,\qquad n\geq0.
\]
Step~1 produced the moment identities
$M_n=\Phi^*(m_n)$, while Step~2 took consecutive quotients of those
identities and used the right multiplicative domain to isolate the
individual orbit points.  Consequently, the conclusion of Step~2 can
be rewritten as
\begin{equation}
 X\bigl(n(1-t)\bigr)
 =\Phi^*\!\left(x\bigl(n(1-t)\bigr)\right),
 \qquad
 x\bigl(n(1-t)\bigr)\in\mathfrak R_{\Phi^*},
 \qquad n\geq0.
 \label{eq:consolidated-discrete-orbit-input}
\end{equation}
These are the two discrete inputs used below.  The first identity
interpolates to the operator equality for arbitrary $z$.  The second,
together with the linearity of $\mathfrak R_{\Phi^*}$, yields
multiplicative-domain membership for arbitrary $z$.

If $\rho=\sum_a r_aP_a$, then, for example,
\begin{equation}
 X(z)=\sum_{a,b}\left(\frac{r_a}{r_b}\right)^zP_aX_0P_b,
 \label{eq:consolidated-orbit-spectral-expansion}
\end{equation}
and $x(z)$ has the analogous expansion in the spectral projections of
$\Phi(\rho)$.  We spell out the resulting finite-dimensional
interpolation.  Put $h:=1-t>0$ and
$F(z):=X(z)-\Phi^*(x(z))$.  After grouping equal spectral ratios from
the two expansions, there are distinct numbers
$\lambda_1,\ldots,\lambda_N>0$ and operator coefficients
$C_1,\ldots,C_N$ such that
\begin{equation}
 F(z)=\sum_{j=1}^N\lambda_j^z C_j.
 \label{eq:consolidated-Step3-exponential-sum}
\end{equation}
The first identity in
\eqref{eq:consolidated-discrete-orbit-input} gives, for
$n=0,\ldots,N-1$,
\begin{equation}
 0=F(nh)=\sum_{j=1}^N(\lambda_j^h)^n C_j.
 \label{eq:consolidated-Step3-Vandermonde-system}
\end{equation}
The scalar matrix
$V=((\lambda_j^h)^n)_{0\leq n\leq N-1,\,1\leq j\leq N}$ is a
Vandermonde matrix.  Its nodes $\lambda_j^h$ are distinct because
$h>0$, so $V$ is invertible.  Multiplying
\eqref{eq:consolidated-Step3-Vandermonde-system} by $V^{-1}$ recovers
each operator coefficient and gives $C_j=0$ for every $j$.  Thus
$F(z)=0$ for every $z\in\mathbb C$.

The multiplicative-domain conclusion follows from the same
coefficient recovery.  Namely, for the distinct spectral ratios
$\mu_1,\ldots,\mu_M>0$ of $\Phi(\rho)$, the corresponding grouped
expansion has the form
\[
 x(z)=\sum_{j=1}^M\mu_j^z B_j.
\]
The inverse of the Vandermonde matrix
$W=((\mu_j^h)^n)_{0\leq n\leq M-1,\,1\leq j\leq M}$ expresses every
coefficient as
\[
 B_j=\sum_{n=0}^{M-1}(W^{-1})_{jn}x(nh).
\]
Every sample $x(nh)$ belongs to $\mathfrak R_{\Phi^*}$ by
\eqref{eq:consolidated-discrete-orbit-input}; hence the linearity of
$\mathfrak R_{\Phi^*}$ gives $B_j\in\mathfrak R_{\Phi^*}$ for every
$j$.  The expansion of $x(z)$ then gives
$x(z)\in\mathfrak R_{\Phi^*}$ for arbitrary $z$.  We have proved
\begin{equation}
 X(z)=\Phi^*(x(z)),\qquad
 x(z)\in\mathfrak R_{\Phi^*},\qquad z\in\mathbb C.
 \label{eq:consolidated-full-orbit-domain}
\end{equation}
\smallskip
\noindent\underline{\textbf{Step 4.}}\enspace
We use the multiplicative-domain property to recover the Petz cocycle.
More precisely, Step~3 gives, for every integer $k\geq0$,
$X(k)=\Phi^*(x(k))$ and
$x(k)\in\mathfrak R_{\Phi^*}$, as recorded in
\eqref{eq:consolidated-full-orbit-domain}.  Successive use of the
right-multiplication rule
\eqref{eq:consolidated-right-domain-multiplication}, starting with
the rightmost factor, therefore gives
\begin{equation}
 \Phi^*\!\left(x(m-1)\cdots x(0)\right)
 =\Phi^*(x(m-1))\cdots\Phi^*(x(0))
 =X(m-1)\cdots X(0).
 \label{eq:consolidated-Step4-product-pullback}
\end{equation}
The definitions of the two orbits also give
\[
 X(k)=\rho^{k+1}\sigma^{-1}\rho^{-k},\qquad
 x(k)=[\Phi(\rho)]^{k+1}[\Phi(\sigma)]^{-1}
       [\Phi(\rho)]^{-k}.
\]
Thus the adjacent powers of $\rho$, respectively of $\Phi(\rho)$,
cancel in each ordered product.  Combining these two telescoping
identities with \eqref{eq:consolidated-Step4-product-pullback}, for
every integer $m\geq1$, yields
\begin{align*}
 \rho^m\sigma^{-m}
 &=X(m-1)\cdots X(0)\\
 &=\Phi^*\!\left(x(m-1)\cdots x(0)\right)\\
 &=\Phi^*\!\left(
 [\Phi(\rho)]^m[\Phi(\sigma)]^{-m}\right).
\end{align*}
We spell out the second finite spectral/Vandermonde continuation.  Let
\begin{equation*}
 \rho=\sum_a r_aP_a,qquad
 \sigma=\sum_b s_bQ_b,qquad
 \Phi(\rho)=\sum_c\widetilde r_c\widetilde P_c,qquad
 \Phi(\sigma)=\sum_d\widetilde s_d\widetilde Q_d
\end{equation*}
be their spectral decompositions.  Invertibility makes all the
eigenvalues displayed here strictly positive, and functional calculus
gives
\begin{align}
 \rho^z\sigma^{-z}
 &=\sum_{a,b}\left(\frac{r_a}{s_b}\right)^zP_aQ_b,
 \label{eq:consolidated-Petz-cocycle-spectral-expansion}\\
 [\Phi(\rho)]^z[\Phi(\sigma)]^{-z}
 &=\sum_{c,d}
 \left(\frac{\widetilde r_c}{\widetilde s_d}\right)^z
 \widetilde P_c\widetilde Q_d.
 \label{eq:consolidated-output-Petz-cocycle-spectral-expansion}
\end{align}
Define
\begin{equation*}
 G(z):=\rho^z\sigma^{-z}
 -\Phi^*\!\left(
 [\Phi(\rho)]^z[\Phi(\sigma)]^{-z}\right).
\end{equation*}
After grouping equal ratios in
\eqref{eq:consolidated-Petz-cocycle-spectral-expansion} and
\eqref{eq:consolidated-output-Petz-cocycle-spectral-expansion}, there
are distinct numbers $\theta_1,\ldots,\theta_N>0$ and operator
coefficients $D_1,\ldots,D_N$ such that
\begin{equation}
 G(z)=\sum_{j=1}^N\theta_j^zD_j,
 \qquad \theta_j^z:=e^{z\log\theta_j},
 \label{eq:consolidated-Petz-cocycle-exponential-sum}
\end{equation}
The identities just proved at the positive integers say that, for
$m=1,\ldots,N$,
\begin{equation}
 0=G(m)=\sum_{j=1}^N\theta_j^mD_j.
 \label{eq:consolidated-Petz-cocycle-Vandermonde-system}
\end{equation}
The coefficient matrix $(\theta_j^m)_{1\leq m,j\leq N}$ is the
product of the ordinary Vandermonde matrix
$((\theta_j)^{m-1})_{m,j}$ and the invertible diagonal matrix
$\operatorname{diag}(\theta_1,\ldots,\theta_N)$.  It is therefore
invertible, so
\eqref{eq:consolidated-Petz-cocycle-Vandermonde-system} forces
$D_j=0$ for every $j$.  Hence $G(z)=0$ for all $z\in\mathbb C$, that
is,
\begin{equation}
 \rho^z\sigma^{-z}
 =\Phi^*\!\left(
 [\Phi(\rho)]^z[\Phi(\sigma)]^{-z}\right),
 \qquad z\in\mathbb C.
\end{equation}
This is the cocycle identity displayed in the theorem.  At purely
imaginary $z$, this is Petz's modular-cocycle criterion
\cite[Theorem~3.18(i),(iii),(vi),(viii)]{hiai2017different} and hence
gives Umegaki equality.
Conversely, Umegaki equality supplies a CPTP Petz map recovering both
states; applying the $D_f^t$-DPI to $\Phi$ and then to that
recovery map gives $D_f^t$-equality for every admissible generator,
whether or not its measure is equality determining.

The restriction $t<1$ is essential: at $t=1$, one has $1-t=0$, so the
discrete orbit no longer separates spectral ratios, and the equality
class can enlarge to the BS class.
\end{proof}

\begin{remark}[Common-support extension]
The theorem, the preceding equality-set hierarchy, and the abstract
likelihood-ratio equivalences remain valid when $\rho$ and $\sigma$
have the same possibly proper support $P$.  One applies the invertible
result to the compressed channel
\begin{equation}
 \Phi_{P,Q}:P\mathcal B(\mathcal H)P
 \longrightarrow Q\mathcal B(\mathcal K)Q,
 \qquad Q:=\operatorname{supp}\Phi(\rho)
          =\operatorname{supp}\Phi(\sigma),
\end{equation}
whose adjoint is
$\Phi_{P,Q}^*(Y)=P\Phi^*(Y)P$.  All powers and inverses are then
computed in the supported corners.  This does not cover unequal
supports, for which the negative powers and full complex cocycle
require a separate formulation.
For a partial trace, the supported adjoint is
$Y\mapsto P(I_A\otimes Y)P$; hence the uncompressed tensor identity
in \eqref{eq:consolidated-partial-trace-likelihood} must be replaced
by its corresponding compressed form.
\end{remark}

\begin{remark}[The implicit half-power identity is Petz recovery]
Combining
\Cref{thm:consolidated-equality-recovery,thm:consolidated-interior-rigidity}
shows that, for every $0\leq t<1$, the apparently implicit condition
\eqref{eq:consolidated-half-equality} is equivalent to ordinary Petz
recovery.  Written entirely in terms of $\rho$, $\sigma$, $\Phi$, and
$t$, the former condition is
\begin{equation}
\begin{aligned}
 &\rho^{(t-1)/2}\sigma^{1/2}
 \left(\sigma^{-1/2}\rho^t\sigma^{-1/2}\right)^{-1/2}
 \\
 &\quad=
 \Phi^*\!\left(
 [\Phi(\rho)]^{(t-1)/2}[\Phi(\sigma)]^{1/2}
 \left(
 [\Phi(\sigma)]^{-1/2}[\Phi(\rho)]^t
 [\Phi(\sigma)]^{-1/2}
 \right)^{-1/2}
 [\Phi(\sigma)]^{-1/2}
 \right)\sigma^{1/2}.
\end{aligned}
 \label{eq:expanded-half-power-Petz-equivalence}
\end{equation}
It holds if and only if
\begin{equation}
{
 \rho
 =\sigma^{1/2}\Phi^*\!\left(
 [\Phi(\sigma)]^{-1/2}\Phi(\rho)
 [\Phi(\sigma)]^{-1/2}
 \right)\sigma^{1/2}.}
 \label{eq:expanded-Petz-recovery-after-rigidity}
\end{equation}
  This equivalence is striking from
\eqref{eq:expanded-half-power-Petz-equivalence} alone: $\rho$ occurs
there through several noncommuting powers on both sides, and there is
no evident algebraic manipulation that isolates $\rho$.  Nevertheless,
\Cref{thm:consolidated-equality-recovery} identifies that identity with
DPI saturation, and
\Cref{thm:consolidated-interior-rigidity} turns saturation into the
explicit recovered-state formula
\eqref{eq:expanded-Petz-recovery-after-rigidity}.  The restriction
$t<1$ is indispensable; at $t=1$ the corresponding equality class may
be strictly larger than the Petz class.

Another algebraic fixed-point condition was obtained in
\cite[Theorem~1 and Eq.~(30)]{leditzky2017data}.  To state it, we take
\begin{equation}
 \alpha\in(1/2,1)\cup(1,\infty),
 \qquad
 \gamma_\alpha:=\frac{1-\alpha}{2\alpha}.
\end{equation}
We recall from \Cref{rw:subsec:alpha-z-comparison} that the sandwiched
R\'enyi divergence is the branch $z=\alpha$ of the
$(\alpha,z)$-family.  In the present notation,
\begin{align}
 \widetilde Q_\alpha(\rho\Vert\sigma)
 &:=\operatorname{Tr}\!\left(
   \sigma^{\gamma_\alpha}\rho\sigma^{\gamma_\alpha}
   \right)^\alpha,
 \label{eq:sandwiched-Q-recall}\\
 \widetilde D_\alpha(\rho\Vert\sigma)
 &:=D_{\alpha,\alpha}(\rho\Vert\sigma)
 =\frac{1}{\alpha-1}
   \log\widetilde Q_\alpha(\rho\Vert\sigma).
 \label{eq:sandwiched-divergence-recall}
\end{align}
The invertible fixed points of the DPI for the sandwiched R\'enyi
divergence---that is, the pairs satisfying
$\widetilde D_\alpha(\rho\Vert\sigma)
=\widetilde D_\alpha(\Phi(\rho)\Vert\Phi(\sigma))$---are precisely
those obeying the nonlinear identity
\begin{equation}
\begin{aligned}
 &\sigma^{\gamma_\alpha}
  \left(\sigma^{\gamma_\alpha}\rho
              \sigma^{\gamma_\alpha}\right)^{\alpha-1}
  \sigma^{\gamma_\alpha}=\Phi^*\!\left(
 [\Phi(\sigma)]^{\gamma_\alpha}
 \left(
 [\Phi(\sigma)]^{\gamma_\alpha}\Phi(\rho)
 [\Phi(\sigma)]^{\gamma_\alpha}
 \right)^{\alpha-1}
 [\Phi(\sigma)]^{\gamma_\alpha}
 \right).
 \label{eq:LRD-sandwiched-equality}
\end{aligned}
\end{equation}
The results in
\cite{jencova2017preservation,jencova2021renyi} show that, throughout
the displayed range of $\alpha$, these sandwiched-R\'enyi fixed
points coincide with the Petz fixed points, equivalently with
\eqref{eq:expanded-Petz-recovery-after-rigidity}.  Thus
\eqref{eq:LRD-sandwiched-equality} is another algebraic expression for
the same recovery class as the half-power condition above.  The
endpoint $\alpha=1/2$ is excluded because preservation of fidelity
need not imply Petz sufficiency.
\end{remark}

\subsubsection{The nonmaximal recovery dictionary}

The following corollary contrasts with
\Cref{thm:consolidated-equality-recovery}.  That theorem gives a
$t$-dependent criterion for preservation at a fixed interpolation
parameter, whereas the result below shows that, for every $t<1$, the
resulting equality class collapses to the $t$-independent
Petz-reversible class.  This contrast parallels the endpoint results
of Hiai and Mosonyi: the fixed-parameter functional-calculus statement
in \Cref{thm:consolidated-equality-recovery}, particularly its $t=1$
specialization, is analogous to their maximal-$f$-divergence
preservation theorem, while the corollary below extends their
standard-$f$-divergence/Petz-reversibility dictionary from $t=0$ to
every $t<1$ \cite[Theorems~3.34 and~3.18]{hiai2017different}.

\begin{corollary}[The nonmaximal recovery dictionary]
\label{cor:consolidated-interior-recovery-dictionary}
Under the hypotheses of
\Cref{thm:consolidated-interior-rigidity}, the following are
equivalent for every $0\leq t<1$:
\begin{enumerate}
\item One equality-determining $D_f^t$ saturates its DPI;
\item Every geodesic $D_g^t$ for which the DPI is defined
saturates, including $D^t$ and every
$D_\alpha^t$, $\alpha\in(0,1)\cup(1,2)$;
\item The geodesic likelihood identity
\eqref{eq:consolidated-channel-likelihood} holds;
\item The Petz map recovers the state,
\begin{equation}
 \rho=\mathcal P_{\sigma,\Phi}(\Phi(\rho)),
 \qquad
 \sigma=\mathcal P_{\sigma,\Phi}(\Phi(\sigma));
\end{equation}
\item The cocycle identity
\begin{equation}
 \rho^z\sigma^{-z}
 =\Phi^*([\Phi(\rho)]^{\,z}[\Phi(\sigma)]^{-z}),
 \qquad z\in\mathbb C,
\end{equation}
holds;
\item With the reference-conditioned channel
\begin{equation}
 \Phi_\sigma(X):=[\Phi(\sigma)]^{-1/2}
 \Phi(\sigma^{1/2}X\sigma^{1/2})[\Phi(\sigma)]^{-1/2},
 \label{eq:consolidated-conditioned-channel}
\end{equation}
the whitened likelihood belongs to the Petz fixed-point algebra,
\begin{equation}
 \sigma^{-1/2}\rho\sigma^{-1/2}
 \in\operatorname{Fix}(\Phi^*\circ\Phi_\sigma).
\end{equation}
\end{enumerate}
Thus the recovery channel is the same Petz map for every nonmaximal
parameter; the $t$-dependent likelihood equation is an equivalent
coordinate description, not a different recovery mechanism.  At
$t=1$, the likelihood equation instead characterizes BS/maximal
equality and need not imply any positive recovery of $\rho$.
\end{corollary}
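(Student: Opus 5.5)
The plan is to arrange the six conditions as a cycle of implications. Almost every link is supplied by \Cref{thm:consolidated-equality-recovery}, \Cref{thm:consolidated-interior-rigidity}, and classical Petz sufficiency theory; only the last condition needs a short new computation.

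\textbf{The loop through (1)--(5).}

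(1)$\Rightarrow$(5). Given DPI equality for one equality-determining $D_f^t$ with $t<1$, Steps~1--4 in the proof of \Cref{thm:consolidated-interior-rigidity} produce the full cocycle identity.

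(5)$\Rightarrow$(4). Restricting the cocycle identity to purely imaginary $z$ gives Petz's modular-cocycle criterion, as used at the end of that proof. Petz's theorem \cite{petz2003monotonicity} and \cite[Theorem~3.18]{hiai2017different} then give $\rho=\mathcal P_{\sigma,\Phi}(\Phi(\rho))$. The companion identity $\sigma=\mathcal P_{\sigma,\Phi}(\Phi(\sigma))$ is automatic by \eqref{rw:eq:petz-recovers-reference}.

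(4)$\Rightarrow$(2). Petz recovery supplies a CPTP map $\mathcal P$ that recovers both states. For every generator $g$ for which the DPI holds, we sandwich
\[
D_g^t(\rho\Vert\sigma)\ \geq\ D_g^t(\Phi(\rho)\Vert\Phi(\sigma))\ \geq\ D_g^t(\mathcal P\Phi(\rho)\Vert\mathcal P\Phi(\sigma))\ =\ D_g^t(\rho\Vert\sigma),
\]
exactly as in the first inclusion of \Cref{prop:consolidated-general-hierarchy}. This covers $D^t$ and each $D_\alpha^t$ in the stated ranges, since their generators are operator convex and the logarithmic normalization is injective.

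(2)$\Rightarrow$(1). This is trivial once we take $f=x\log x$, which is equality determining.

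(1)$\Leftrightarrow$(3). This is (5)$\Leftrightarrow$(2) of \Cref{thm:consolidated-equality-recovery}.

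\textbf{Condition (6).} Write $K:=\sigma^{-1/2}\rho\sigma^{-1/2}$. Then
\[
\Phi_\sigma(K)=[\Phi(\sigma)]^{-1/2}\Phi(\rho)[\Phi(\sigma)]^{-1/2},
\]
and therefore
\[
\sigma^{1/2}\,\Phi^*\Phi_\sigma(K)\,\sigma^{1/2}=\mathcal P_{\sigma,\Phi}(\Phi(\rho)).
\]
Conjugating by the invertible $\sigma^{\pm1/2}$ shows that $\Phi^*\Phi_\sigma(K)=K$ holds if and only if $\rho=\mathcal P_{\sigma,\Phi}(\Phi(\rho))$. Hence (4)$\Leftrightarrow$(6) is a direct rewriting. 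The terminology ``fixed-point algebra'' is justified because $\Phi^*\circ\Phi_\sigma$ is unital CP and, being faithful with respect to $\sigma$, has a fixed-point set that is a $*$-algebra. Only the membership statement is actually needed here.

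\textbf{Closing remarks and the main obstacle.} The claim that the Petz map is independent of $t$ follows because condition (4) does not involve $t$. The $t=1$ remark is quoted from \Cref{prop:consolidated-general-hierarchy} and \Cref{prop:consolidated-asymmetric-recovery}. I expect no genuine obstacle, since the hard rigidity work is already done in \Cref{thm:consolidated-interior-rigidity}. The one delicate point is making sure condition (2) is stated only for generators for which the DPI holds, so that the sandwich argument applies, and invoking injectivity of the Rényi normalization for the $D_\alpha^t$.
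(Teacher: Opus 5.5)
Your proposal is correct and follows essentially the same route as the paper. The paper likewise takes (1)$\Leftrightarrow$(3) from \Cref{thm:consolidated-equality-recovery}, ties saturation to the cocycle and to Umegaki equality via \Cref{thm:consolidated-interior-rigidity}, and then invokes Petz's reversibility theorem \cite[Theorem~3.18]{hiai2017different} for (4)--(6).

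The differences are minor. You verify (4)$\Leftrightarrow$(6) by the direct conjugation $\sigma^{1/2}\Phi^*\Phi_\sigma(\sigma^{-1/2}\rho\sigma^{-1/2})\sigma^{1/2}=\mathcal P_{\sigma,\Phi}(\Phi(\rho))$ instead of citing item (ix). You also obtain (2) from Petz recovery by the DPI sandwich, whereas the paper reads it off the functional-calculus condition of \Cref{thm:consolidated-equality-recovery}; that condition gives saturation for every continuous generator, not only those satisfying DPI.
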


\begin{proof}
The equivalence of the first three conditions is
\Cref{thm:consolidated-equality-recovery};
\Cref{thm:consolidated-interior-rigidity} identifies them with
Umegaki equality.  Petz's reversibility theorem gives the equivalence
with the state-recovery, cocycle, and fixed-point conditions
\cite[Theorem~3.18(i),(iii),(vi),(viii),(ix)]{hiai2017different}.
The endpoint statement is
the maximal-divergence preservation theorem
\cite[Theorem~3.34]{hiai2017different}.
\end{proof}

\begin{remark}[Common-support version]
By the common-support extension above, the recovery dictionary also
holds in a fixed supported corner, with the compressed channel,
support inverses, and the Petz map of that corner.  No unequal-support
version is inferred from this observation.
\end{remark}

\begin{remark}[What the strengthened inequalities recover]
\label{rem:consolidated-strengthened-Petz-scope}
Combining the recovery dictionary in
\Cref{cor:consolidated-interior-recovery-dictionary} with
\eqref{eq:consolidated-power-bound},
\eqref{eq:consolidated-Dt-quartic}, and
\eqref{eq:consolidated-Renyi-log-bounds} yields strengthened DPIs with
the correct equality set.  These estimates do not, by themselves,
give a closed-form bound in
$\lVert\rho-\mathcal P_{\sigma,\Phi}(\Phi(\rho))\rVert_1$: passing
from the cyclic Hilbert--Schmidt defect $q_t^\Phi$ to that state
error requires a quantitative comparison between the two residuals.
At $t=0$ the standard Petz recovery estimates may be used directly.
For $0<t<1$, \Cref{cor:consolidated-nonmaximal-Petz-holder} supplies
such a comparison whenever the relevant smallest eigenvalues have
fixed positive lower bounds, uniformly
when $t$ stays a positive distance from $1$.  Its H\"older exponent and
constant are nonconstructive and depend on those lower bounds and the channel;
thus it is not a universal constant-free substitution.  Finding a
computable dimension- and spectrum-explicit comparison remains open.
At $t=1$, any Petz-state substitution is false in general because the
zero set has enlarged to the BS class.  Nor can the H\"older constants
be uniform as $t\uparrow1$: for a fixed invertible BS-equality triple
which is not Petz sufficient, continuity gives $q_t^\Phi\to0$, whereas
$\lVert\rho-\mathcal P_{\sigma,\Phi}(\Phi(\rho))\rVert_1$ stays
strictly positive.
\end{remark}

The channel equality problem is therefore complete.  We next
specialize the same recovery dictionary to partial traces, where its
zero sets become geodesic quantum Markov classes.

\section{Geodesic quantum Markov chains}
\label{sec:consolidated-tqmc}

The QMC and BS-QMC endpoint notions, together with their
significance, are recalled in \Cref{rw:subsec:qmc-bsqmc}.
This section has five aims.  We first introduce the three
$t$-conditional mutual informations and specialize the quantitative
DPI to the related tripartite pairs built from a state and its
marginals.  We then identify the invertible $t$-quantum Markov classes
and locate their unique endpoint jump.  Next, we recall from Bluhm
et al. examples witnessing strictness, including their Werner-state
construction.  We then explain the modular origin of the endpoint
exception through power-weighted symmetry.  Finally, we determine
which canonical, inverse, and fractional shadow correspondences
survive away from the BS endpoint.

\subsection{Tripartite specialization and the
\texorpdfstring{$t$-conditional mutual informations}{t-conditional mutual informations}}
\label{subsec:consolidated-tripartite-tcmi}

The definitions below are inspired by the one-sided and two-sided
BS-conditional mutual informations introduced in
\cite[Section~3.2, especially Eq.~(4)]{bluhm2023continuity}.  Here the BS-entropy is
replaced by the geodesic $t$-entropy, and the ordered-DPI
viewpoint additionally leads us to retain the reverse orientation.

Specializing the channel in \Cref{sec:consolidated-dpi} to a partial
trace, and to the case in which both states are in a tripartite space and the second state is defined from the first one as a (product of) marginal(s), produces three ordered DPI deficits.  Their distinction matters
away from the Umegaki endpoint, so we first fix the three conventions
and then record the quantitative consequence.

The superscripts $\os$, $\ts$, and $\rev$ stand for
\emph{one-sided}, \emph{two-sided}, and \emph{reverse}, respectively.

\begin{mainobject}{Main definition: The three $t$-conditional mutual
informations}
For an invertible state $\rho_{ABC}>0$, a fixed normalized state
$\tau_C>0$, and $t\in[0,1]$, we define
\begin{align}
 I_t^{\os}(A;C\mid B)
 &:=
 D^t(\rho_{ABC}\Vert\rho_{AB}\otimes\tau_C)
 -D^t(\rho_{BC}\Vert\rho_B\otimes\tau_C),\notag\\
 I_t^{\ts}(A;C\mid B)
 &:=
 D^t(\rho_{ABC}\Vert\rho_{AB}\otimes\rho_C)
 -D^t(\rho_{BC}\Vert\rho_B\otimes\rho_C),\notag\\
 I_t^{\rev}(A;C\mid B)
 &:=
 D^t(\rho_{AB}\otimes\tau_C\Vert\rho_{ABC})
 -D^t(\rho_B\otimes\tau_C\Vert\rho_{BC}).
 \label{eq:consolidated-three-CMIs}
\end{align}
\end{mainobject}
All three are nonnegative DPI deficits.  At $t=0$, the first two
equal the ordinary conditional mutual information, whereas the
reverse deficit is generally not numerically equal to it.

\begin{remark}[Non-invertible tripartite states]
For an arbitrary $\rho_{ABC}$, the one-sided and two-sided quantities
have the usual support extension and remain nonnegative.  Indeed, the
support of a joint state is contained in the tensor product of the
supports of the relevant marginals, so both ordered pairs satisfy the
support condition required by the extended DPI.  The reverse
expression is more delicate: the support of
$\rho_{AB}\otimes\tau_C$ need not be contained in that of
$\rho_{ABC}$, and its two terms can form the indeterminate difference
$+\infty-+\infty$.  We therefore retain invertibility whenever all
three orientations are considered simultaneously.
\end{remark}

\subsubsection{Comparison of the three orientations}

The common sign and, as shown later in
\Cref{cor:consolidated-tQMC-hierarchy}, the common zero sets do not
extend to a numerical ordering.

\begin{proposition}[No universal numerical hierarchy]
\label{prop:consolidated-CMI-incomparability}
For every invertible $\rho_{ABC}$ and every invertible normalized
$\tau_C$,
\begin{equation}
 I_0^{\os}(A;C\mid B)=I_0^{\ts}(A;C\mid B)
 =I(A;C\mid B)_\rho.
 \label{eq:consolidated-forward-CMI-tzero}
\end{equation}
If $\tau_C=\rho_C$, the first equality continues to hold for every
$t$ simply by definition.  These are the only universal comparisons
without state-dependent constants:
\begin{enumerate}[label=\textnormal{(\roman*)}]
 \item for every fixed $t\in(0,1]$, there are invertible examples with
 $I_t^{\os}<I_t^{\ts}$ and invertible examples with
 $I_t^{\os}>I_t^{\ts}$;
 \item for every fixed $t\in[0,1]$, the reverse quantity can lie
 strictly below or strictly above both forward quantities.
\end{enumerate}
In particular, no two of the three quantities admit a universal
ordering at the BS endpoint.
\end{proposition}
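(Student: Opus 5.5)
The plan has three parts: the universal identity at $t=0$, the one-sided versus two-sided incomparability, and the reverse-orientation incomparability. The identity \eqref{eq:consolidated-forward-CMI-tzero} follows directly from the Umegaki endpoint \eqref{rw:eq:umegaki-endpoint}. Expanding $D(\rho_{ABC}\Vert\rho_{AB}\otimes\tau_C)-D(\rho_{BC}\Vert\rho_B\otimes\tau_C)$ with $\log(X\otimes Y)=\log X\otimes I+I\otimes\log Y$, the $\log\tau_C$ terms both become $-\operatorname{Tr}\rho_C\log\tau_C$ and cancel. What remains is $-S(\rho_{ABC})+S(\rho_{AB})+S(\rho_{BC})-S(\rho_B)$, i.e.\ \eqref{rw:eq:ordinary-cmi}, independently of $\tau_C$; taking $\tau_C=\rho_C$ gives the two-sided case. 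The statement for $\tau_C=\rho_C$ at arbitrary $t$ is definitional.

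For (i) I would fix $t\in(0,1]$ and produce two explicit invertible examples, working perturbatively around a product (hence QMC) state. First I would write $\tau_C$-dependence as a function $F_t(\tau_C):=I_t^{\os}$ computed with reference $\tau_C$, so that $I_t^{\ts}=F_t(\rho_C)$. It then suffices to show that $F_t$ is nonconstant in $\tau_C$ for a single state $\rho_{ABC}$, with values on both sides of $F_t(\rho_C)$. At $t=0$ it is constant. For $t>0$, the reference $\tau_C$ enters through $\log(\rho^{t/2}(\rho_{AB}^{-1}\otimes\tau_C^{-1})\rho^{t/2})$, where it does not factor out. Differentiating at $\tau_C=\rho_C$ in a traceless direction $H_C$ gives a linear functional $\ell_t(H_C)$. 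If this functional is nonzero for some invertible $\rho_{ABC}$, then choosing $\tau_C=\rho_C\pm\epsilon H_C$ yields strict inequalities of both signs. I expect to find such a state with $A,B,C$ qubits, for instance a small-entanglement perturbation of a state whose $\rho_{ABC}$ does not commute with $\rho_{AB}\otimes\rho_C$, and to certify $\ell_t\neq0$ by exact computation at $t=1$ using BS formulas. Extending to all $t\in(0,1]$ is the delicate part. Analyticity in $t$ (\Cref{rw:prop:basic-spectral-properties}(iii)) gives $\ell_t\neq0$ for all but finitely many $t$. To cover every $t$, I would instead use a family of examples, or a second-order expansion in a small noncommutativity parameter where the leading coefficient is an explicit nonvanishing function of $t$ on $(0,1]$.

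For (ii) I would use scaling and flag tricks. The reverse quantity is a DPI deficit of $D^t(\rho_{AB}\otimes\tau_C\Vert\rho_{ABC})$, which at $t=0$ is a different functional than the CMI. First, for a state close to a QMC with perturbation parameter $\epsilon$, all three quantities are $O(\epsilon^2)$ with quadratic forms (Hessians at the QMC) that differ. Exhibiting perturbation directions where the reverse Hessian is smaller, respectively larger, than both forward Hessians gives both strict orderings. Alternatively, for "below" I would take a state whose reverse deficit vanishes (a QMC, for which all three vanish, then perturbed). For "above" I would use states with a near-singular $\rho_{ABC}$, where $D^t(\cdot\Vert\rho_{ABC})$ blows up like $\log\lambda_{\min}^{-1}$ while the forward quantities stay bounded by $2\log d_A$-type bounds. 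Concretely, I would take $\rho_{ABC}=(1-\epsilon)\psi_{AC}\otimes\omega_B+\epsilon I/d$ with $\psi$ maximally entangled. The reverse deficit then grows unboundedly as $\epsilon\to0$ for every $t$, whereas the forward quantities stay bounded, since they are controlled by the first argument being the joint state.

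The final sentence then follows from (i) at $t=1$ together with (ii) at $t=1$. The main obstacle is the uniformity in $t$ for (i) and the "below" half of (ii). Nonvanishing of the relevant first- or second-order coefficient must be shown for every $t\in(0,1]$, and there I would try to obtain a closed-form $t$-dependent coefficient from the Daleckii--Krein formula \eqref{rw:eq:daleckii-krein} applied to $\log K_t$ on a qubit example.
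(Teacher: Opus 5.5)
Your $t=0$ computation is the same as the paper's. The genuine gap is in (i). The statement requires strict inequalities of both signs at \emph{every} fixed $t\in(0,1]$, and your first-variation plan does not deliver this. You have not exhibited any state with $\ell_t\neq0$. Even granting one at some $t$, analyticity only rules out finitely many exceptional $t$. Passing to a parametrized family does not remove them, because at an exceptional $t$ you would still have to show $\ell_t\not\equiv0$ along the family, which is the original problem. The first variation can also vanish for structural reasons: if a local unitary on $C$ fixes $\rho_{ABC}$ and $\rho_C$ but maps $H_C$ to $-H_C$, then $F_t$ is even in that direction, and you are forced to second order.

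The missing idea is to arrange that one forward quantity involves only commuting pairs, so it collapses to an Umegaki expression, and then to compare it with the other through strict $t$-monotonicity. Take $B$ trivial, $\rho_{AC}=\frac12\sum_i\ketbra{i}{i}_A\otimes\eta_i$ and $\overline\eta=\frac12(\eta_0+\eta_1)$. The common-flag rule \eqref{rw:eq:general-flag-rule-Dt} gives $I_t^{\os}=\frac12\sum_iD^t(\eta_i\Vert\tau_C)-D^t(\overline\eta\Vert\tau_C)$ and $I_t^{\ts}=\frac12\sum_iD^t(\eta_i\Vert\overline\eta)$. First choose $\tau_C=\pi_C$, $\eta_0=(\1+aZ)/2$ and $\eta_1=(\1+aX)/2$. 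Classical collapse \eqref{rw:eq:classical-collapse} and the Umegaki identity $\frac12\sum_iD(\eta_i\Vert\xi)-D(\overline\eta\Vert\xi)=\frac12\sum_iD(\eta_i\Vert\overline\eta)$ give $I_t^{\os}=\frac12\sum_iD(\eta_i\Vert\overline\eta)$. This is strictly less than $I_t^{\ts}$ by \Cref{rw:thm:Dt-strict-monotonicity}, since $[\eta_i,\overline\eta]\neq0$. Next choose $\eta_\pm=(\1\pm aZ)/2$, so $\overline\eta=\pi_C$, and $\tau_C=(\1+bX)/2$. The same two tools in the opposite order give $I_t^{\os}>\frac12\sum_\pm D(\eta_\pm\Vert\tau_C)-D(\pi_C\Vert\tau_C)=I_t^{\ts}$. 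Both inequalities hold for all $t\in(0,1]$ at once, with no computation.

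In (ii), your ``above'' construction does work for every $t\in[0,1]$, but not for the reason you give. Take $\rho_{ABC}=(1-\epsilon)\psi_{AC}\otimes\omega_B+\epsilon\1/d$ with $\omega_B>0$. Then $\rho_{AB}=\pi_A\otimes\omega_B'$ and $\rho_{BC}=\omega_B'\otimes\pi_C$, where $\omega_B'=(1-\epsilon)\omega_B+\epsilon\pi_B$. Both forward references therefore have smallest eigenvalue bounded below uniformly in $\epsilon$. Hence $0\leq I_t^{\os},I_t^{\ts}\leq\widehat D(\rho_{ABC}\Vert\text{reference})\leq\log\lambda_{\min}(\text{reference})^{-1}$ stays bounded. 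Meanwhile $I_t^{\rev}\geq D(\pi_A\otimes\omega_B'\otimes\tau_C\Vert\rho_{ABC})-D(\tau_C\Vert\pi_C)\to\infty$, by $D^t\geq D$ and tensor additivity.

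The ``below'' half, however, is not proved. Perturbing a QMC makes all three quantities $O(\epsilon^2)$ and says nothing about their order, and the Hessian comparison is neither carried out nor controlled uniformly in $t$. The missing observation is this: for fully commuting data, the forward quantities equal the ordinary CMI for every $t$ and every diagonal $\tau_C$. The reverse quantity, by contrast, is affine in the weights of $\tau_C$. With $B$ trivial as above, $I_t^{\rev}=\sum_k(\tau_C)_k\,c_k$, where $c_k:=\log\overline\eta_k-\frac12\sum_i\log(\eta_i)_k$. If the common forward value $\chi$ lies strictly between $\min_kc_k$ and $\max_kc_k$, shifting the weight of $\tau_C$ between coordinates places $I_t^{\rev}$ strictly below or strictly above $\chi$, for all $t\in[0,1]$ simultaneously. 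The paper takes $\eta_0=\mathrm{diag}(9/10,1/10)$, $\eta_1=\mathrm{diag}(1/2,1/2)$ and $\tau_C=\mathrm{diag}(u,1-u)$, with $u=9/10$ giving ``below'' and $u=1/10$ giving ``above''.
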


\begin{proof}
At $t=0$, we expand the logarithms of the product references in
\eqref{eq:consolidated-three-CMIs}.  The terms containing
$\log\tau_C$ cancel and give
\begin{equation}
 \begin{aligned}
 I_0^{\os}(A;C\mid B)
 &=\operatorname{Tr}(\rho_{ABC}\log\rho_{ABC})
   -\operatorname{Tr}(\rho_{AB}\log\rho_{AB})\\
 &\quad-\operatorname{Tr}(\rho_{BC}\log\rho_{BC})
   +\operatorname{Tr}(\rho_B\log\rho_B)
 =I(A;C\mid B)_\rho.
 \end{aligned}
\end{equation}
The same expansion with $\tau_C=\rho_C$ proves
\eqref{eq:consolidated-forward-CMI-tzero} for the two-sided
orientation.

We next give invertible examples which work for every fixed $t>0$.
It is enough to take $B$ one-dimensional and a cq state
\begin{equation}
 \rho_{AC}=\frac12\ketbra{0}{0}_A\otimes\eta_0
             +\frac12\ketbra{1}{1}_A\otimes\eta_1,
 \qquad
 \overline\eta:=\frac{\eta_0+\eta_1}{2}.
 \label{eq:consolidated-cq-ordering-state}
\end{equation}
The common-flag rule
\eqref{rw:eq:general-flag-rule-Dt} gives the exact identities
\begin{equation}
 \begin{aligned}
 I_t^{\os}
 &=\frac12\sum_{i=0}^1D^t(\eta_i\Vert\tau_C)
   -D^t(\overline\eta\Vert\tau_C),\\
 I_t^{\ts}
 &=\frac12\sum_{i=0}^1D^t(\eta_i\Vert\overline\eta),\\
 I_t^{\rev}
 &=\frac12\sum_{i=0}^1D^t(\tau_C\Vert\eta_i)
   -D^t(\tau_C\Vert\overline\eta).
 \end{aligned}
 \label{eq:consolidated-cq-three-CMIs}
\end{equation}
We denote the Pauli matrices by $X$ and $Z$ and fix $0<a<1$.
We first choose
\begin{equation}
 \eta_0=\frac{\1+aZ}{2},
 \qquad
 \eta_1=\frac{\1+aX}{2},
 \qquad
 \tau_C=\pi_C.
\end{equation}
Since $\pi_C$ is a scalar multiple of the identity, every state on
$C$ commutes with it.  We shall use this to replace
$D^t(\,\cdot\Vert\pi_C)$ by $D(\,\cdot\Vert\pi_C)$ through the
classical-collapse identity \eqref{rw:eq:classical-collapse}.  Only
after this replacement do we invoke the elementary Umegaki
relative-entropy identity
\cite{donald1986relative}
\begin{equation}
 \frac12\sum_{i=0}^1D(\eta_i\Vert\xi)
 -D(\overline\eta\Vert\xi)
 =\frac12\sum_{i=0}^1D(\eta_i\Vert\overline\eta)
 \label{eq:consolidated-Donald-identity}
\end{equation}
for an arbitrary invertible $\xi$.  Thus \eqref{eq:consolidated-Donald-identity}
is deliberately $t$-independent: it is an identity for the Umegaki
relative entropy, not an asserted analogue for $D^t$.

For the first choice above, \eqref{eq:consolidated-cq-three-CMIs},
the commuting-pair collapse, and then
\eqref{eq:consolidated-Donald-identity} with $\xi=\pi_C$ give
\begin{equation}
 \begin{aligned}
 I_t^{\os}
 &=\frac12\sum_{i=0}^1D^t(\eta_i\Vert\pi_C)
   -D^t(\overline\eta\Vert\pi_C)=\frac12\sum_{i=0}^1D(\eta_i\Vert\overline\eta).
 \end{aligned}
\end{equation}
On the other hand,
$[\eta_i,\overline\eta]\neq0$ for both $i$.  Strict
$t$-monotonicity from \Cref{rw:thm:Dt-strict-monotonicity} therefore
gives
\begin{equation}
 I_t^{\os}
 <\frac12\sum_{i=0}^1D^t(\eta_i\Vert\overline\eta)
 =I_t^{\ts},
 \qquad t>0.
 \label{eq:consolidated-os-less-ts}
\end{equation}

For the opposite direction, we also fix $0<b<1$ and take
\begin{equation}
 \eta_+=\frac{\1+aZ}{2},
 \qquad
 \eta_-=\frac{\1-aZ}{2},
 \qquad
 \tau_C=\frac{\1+bX}{2}.
\end{equation}
Now $\overline\eta=\pi_C$,
$[\eta_\pm,\tau_C]\neq0$, and
$[\pi_C,\tau_C]=0$.  Applying
strict $t$-monotonicity to each positive summand in
\eqref{eq:consolidated-cq-three-CMIs}, the commuting-pair collapse
to the subtracted term, and
\eqref{eq:consolidated-Donald-identity} with $\xi=\tau_C$ yields
\begin{equation}
 \begin{aligned}
 I_t^{\os}
 &=\frac12\sum_{\epsilon\in\{+,-\}}
       D^t(\eta_\epsilon\Vert\tau_C)
      -D^t(\pi_C\Vert\tau_C)\\
 &>\frac12\sum_{\epsilon\in\{+,-\}}
        D(\eta_\epsilon\Vert\tau_C)
       -D(\pi_C\Vert\tau_C)\\
 &=\frac12\sum_{\epsilon\in\{+,-\}}
        D(\eta_\epsilon\Vert\pi_C)
 =I_t^{\ts},
 \qquad t>0.
 \end{aligned}
 \label{eq:consolidated-os-greater-ts}
\end{equation}
This proves the two directions in (i) analytically, including at
$t=1$.

The reverse orientation is already incomparable in the classical
case.  In \eqref{eq:consolidated-cq-ordering-state}, we take
\begin{equation}
 \eta_0=\operatorname{diag}\!\left(\frac9{10},\frac1{10}\right),
 \qquad
 \eta_1=\operatorname{diag}\!\left(\frac12,\frac12\right),
 \qquad
 \overline\eta=\operatorname{diag}\!\left(\frac7{10},\frac3{10}\right).
\end{equation}
All pairs commute, so the values are independent of $t$.  The two
forward quantities equal
\begin{equation}
 \chi:=\frac12D(\eta_0\Vert\overline\eta)
       +\frac12D(\eta_1\Vert\overline\eta)
 =0.101749\ldots,
\end{equation}
whereas, for $\tau_C=\operatorname{diag}(u,1-u)$,
\begin{equation}
 I_t^{\rev}
 =u\log\frac7{3\sqrt5}+(1-u)\log\frac3{\sqrt5}.
 \label{eq:consolidated-classical-reverse-values}
\end{equation}
The full-support choices $u=9/10$ and $u=1/10$ give, respectively,
\begin{equation}
 I_t^{\rev}=0.067710\ldots<\chi,
 \qquad
 I_t^{\rev}=0.268762\ldots>\chi,
 \qquad 0\leq t\leq1.
\end{equation}
This proves (ii).
\end{proof}

\begin{remark}[Boundary scope at the Umegaki endpoint]
The identity \eqref{eq:consolidated-forward-CMI-tzero} holds for
arbitrary tripartite states under the support extension, since both
forward quantities reduce to the entropy formula for the ordinary
conditional mutual information.  The negative conclusions of the
proposition already follow from invertible examples and therefore
also rule out universal comparisons on any larger state class.
\end{remark}

\paragraph{Numerical visualization at the BS endpoint.}
The analytic incomparability at $t=1$ established in
\Cref{prop:consolidated-CMI-incomparability} can also be seen directly
in a random-state experiment.  We take $A$, $B$, and $C$ to be qubits
and generate $1200$ independent pairs $(\rho_{ABC},\tau_C)$ as
follows.  For normalized independent complex Gaussian vectors
$\ket{\psi}\in\mathbb C^8$ and $\ket{\phi}\in\mathbb C^2$, we set
\begin{equation}
 \rho_{ABC}=\frac9{20}\ketbra{\psi}{\psi}
             +\frac{11}{20}\frac{\1_{ABC}}8,
 \qquad
 \tau_C=\frac12\ketbra{\phi}{\phi}
             +\frac12\frac{\1_C}2.
 \label{eq:consolidated-BS-scatter-ensemble}
\end{equation}
Every sampled pair is invertible.  Using the fixed seed $20260811$ and
evaluating the three quantities at $t=1$ gives the pairwise scatter
plots in \Cref{fig:consolidated-BS-CMI-scatter}.  From left to right,
the numbers of points above and below the diagonal are
$(539,661)$, $(591,609)$, and $(708,492)$, respectively.

\FloatBarrier
\begin{figure}[!htbp]
 \centering
 \includegraphics[width=0.99\linewidth]{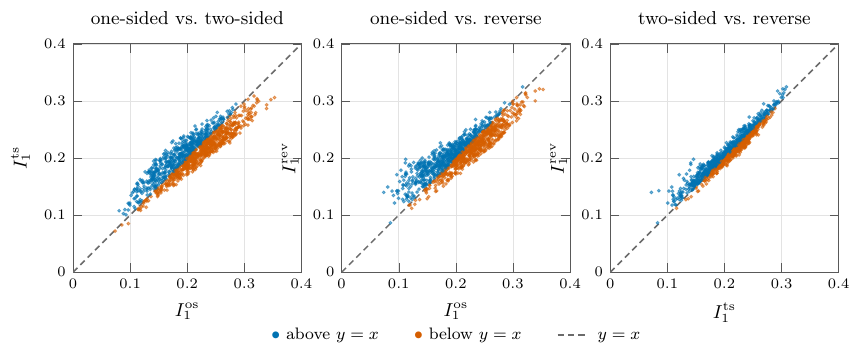}
 \caption{Pairwise comparison of the three BS conditional mutual
 informations for the random invertible pairs in
 \eqref{eq:consolidated-BS-scatter-ensemble}; the same sample is used
 in all three panels, and logarithms are natural.  The dashed line is
 $y=x$.  Blue points lie above the diagonal and orange points below
 it.  The occurrence of both colors in every panel visualizes the
 absence of a universal pairwise order at the BS endpoint.}
 \label{fig:consolidated-BS-CMI-scatter}
\end{figure}
\FloatBarrier

\paragraph{Comparisons up to reference-dependent constants.}
Despite this lack of a universal ordering, the two
forward quantities differ only through the replacement of the fixed
reference $\tau_C$ by the actual marginal $\rho_C$.  This observation
gives an elementary comparison controlled by the relative condition
number of these two states.

\begin{proposition}[Projective stability of the two forward
$t$-CMIs]
\label{prop:consolidated-forward-projective-comparison}
We define
\begin{equation}
 \kappa_C
 :=\frac{\lambda_{\max}
 \!\left(\rho_C^{-1/2}\tau_C\rho_C^{-1/2}\right)}
 {\lambda_{\min}
 \!\left(\rho_C^{-1/2}\tau_C\rho_C^{-1/2}\right)}.
 \label{eq:consolidated-reference-condition-number}
\end{equation}
For every $t\in[0,1]$,
\begin{equation}
 \boxed{
 \left|I_t^{\os}(A;C\mid B)-I_t^{\ts}(A;C\mid B)\right|
 \leq\log\kappa_C.}
 \label{eq:consolidated-forward-additive-comparison}
\end{equation}
The constant $\log\kappa_C$ is the Hilbert projective distance
between $\tau_C$ and $\rho_C$.  Equivalently, the exponentiated
deficits satisfy the genuine multiplicative comparison
\begin{equation}
 \kappa_C^{-1}e^{I_t^{\ts}(A;C\mid B)}
 \leq e^{I_t^{\os}(A;C\mid B)}
 \leq\kappa_Ce^{I_t^{\ts}(A;C\mid B)}.
 \label{eq:consolidated-forward-exponential-comparison}
\end{equation}
Consequently, on the region where
$\min\{I_t^{\os},I_t^{\ts}\}\geq\delta>0$,
\begin{equation}
 \frac{1}{1+\delta^{-1}\log\kappa_C}\,I_t^{\ts}
 \leq I_t^{\os}
 \leq
 \left(1+\frac{\log\kappa_C}{\delta}\right)I_t^{\ts}.
 \label{eq:consolidated-forward-away-zero-factor}
\end{equation}
In particular, $\tau_C=\rho_C$ gives equality, as it must.
\end{proposition}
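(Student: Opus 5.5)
The approach is to write the difference $I_t^{\os}-I_t^{\ts}$ as a change of reference in the $C$ factor only, and to use monotonicity of $D^t$ in the second argument under operator inequalities. From the definitions,
\[
 I_t^{\os}-I_t^{\ts}
 =\bigl[D^t(\rho_{ABC}\Vert\rho_{AB}\otimes\tau_C)-D^t(\rho_{ABC}\Vert\rho_{AB}\otimes\rho_C)\bigr]
 -\bigl[D^t(\rho_{BC}\Vert\rho_B\otimes\tau_C)-D^t(\rho_{BC}\Vert\rho_B\otimes\rho_C)\bigr].
\]
The goal is to trap each bracket in a common interval of length $\log\kappa_C$. Write $\lambda_{\min},\lambda_{\max}$ for the extreme eigenvalues of $\rho_C^{-1/2}\tau_C\rho_C^{-1/2}$. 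Then
\[
 \lambda_{\min}\rho_C\le\tau_C\le\lambda_{\max}\rho_C,
\]
and tensoring gives $\lambda_{\min}\,\omega\otimes\rho_C\le\omega\otimes\tau_C\le\lambda_{\max}\,\omega\otimes\rho_C$ for $\omega=\rho_{AB}$ or $\rho_B$.

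The key lemma is that, for fixed $\rho>0$ and positive definite $\sigma_1\le\sigma_2$ (not necessarily normalized), one has $D^t(\rho\Vert\sigma_1)\ge D^t(\rho\Vert\sigma_2)$. We also need the scaling rule $D^t(\rho\Vert c\sigma)=D^t(\rho\Vert\sigma)-\log c$. Both follow from the explicit formula \eqref{rw:eq:Dt-explicit}, which depends on $\sigma$ only through $\operatorname{Tr}[\rho\log(\rho^{t/2}\sigma^{-1}\rho^{t/2})]$, extended to unnormalized $\sigma$.

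\begin{itemize}
\item Scaling: replacing $\sigma$ by $c\sigma$ multiplies $\sigma^{-1}$ by $c^{-1}$, which subtracts $\log c$ since $\operatorname{Tr}\rho=1$.
\item Monotonicity: $\sigma_1\le\sigma_2$ gives $\sigma_1^{-1}\ge\sigma_2^{-1}$. Congruence by $\rho^{t/2}$ preserves the order, and operator monotonicity of $\log$ then gives the inequality.
\end{itemize}

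Applying the lemma with the sandwich above gives, for each bracket,
\[
 -\log\lambda_{\max}\le D^t(\rho_X\Vert\omega\otimes\tau_C)-D^t(\rho_X\Vert\omega\otimes\rho_C)\le-\log\lambda_{\min}.
\]
The difference of two numbers in this interval has absolute value at most $\log(\lambda_{\max}/\lambda_{\min})=\log\kappa_C$, which is \eqref{eq:consolidated-forward-additive-comparison}.

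The identification of $\log\kappa_C$ with the Hilbert projective metric is standard for $\inf\{\log(M/m):m\rho_C\le\tau_C\le M\rho_C\}$, and the optimal constants are exactly $\lambda_{\min},\lambda_{\max}$. Exponentiating the additive bound gives \eqref{eq:consolidated-forward-exponential-comparison}.

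For \eqref{eq:consolidated-forward-away-zero-factor}, use $I_t^{\os}\le I_t^{\ts}+\log\kappa_C$ together with $\log\kappa_C\le(\log\kappa_C/\delta)I_t^{\ts}$, which holds because $I_t^{\ts}\ge\delta$. The symmetric argument with the roles of the two quantities swapped gives the lower factor. When $\tau_C=\rho_C$ we have $\kappa_C=1$, so equality holds.

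The main point needing care is extending \eqref{rw:eq:Dt-explicit} to unnormalized second arguments. Rather than invoke the $f$-divergence definition, I would take the explicit formula itself as the definition on positive definite $\sigma$. Since the brackets compare normalized states through intermediate unnormalized multiples, only the scaling identity and operator monotonicity of $\log$ are used, so no DPI is needed.
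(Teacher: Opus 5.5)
Your proposal is correct and follows the paper's proof essentially step for step. The paper also extends the explicit formula \eqref{rw:eq:Dt-explicit} to positive definite second arguments and derives from it order reversal and the homogeneity $D^t(R\Vert cS)=D^t(R\Vert S)-\log c$, exactly as you do. It then traps each reference-change bracket in $[-\log\lambda_{\max},-\log\lambda_{\min}]$ and obtains the away-from-zero factors by the same one-line estimates $I_t^{\os}\leq I_t^{\ts}+\log\kappa_C\leq(1+\delta^{-1}\log\kappa_C)I_t^{\ts}$ and their symmetric counterparts.
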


\begin{proof}
The explicit expression in \eqref{rw:eq:Dt-explicit}, extended in
the second argument to arbitrary positive definite operators, has
the following two elementary properties when $R$ is a normalized
state:
\begin{align}
 S_1\leq S_2
 &\quad\Longrightarrow\quad
 D^t(R\Vert S_1)\geq D^t(R\Vert S_2),
 \label{eq:consolidated-Dt-second-order-reversal}\\
 D^t(R\Vert cS)&=D^t(R\Vert S)-\log c,
 \qquad c>0.
 \label{eq:consolidated-Dt-second-homogeneity}
\end{align}
Indeed, inversion reverses the L\"owner order, congruence by
$R^{t/2}$ preserves it, and the logarithm is operator monotone; the
homogeneity follows by pulling $c^{-1}$ out of the logarithm.

For the rest of this proof, we abbreviate the minimum and maximum
eigenvalues appearing in
\eqref{eq:consolidated-reference-condition-number} by
\begin{equation*}
 a_C:=\lambda_{\min}
 \!\left(\rho_C^{-1/2}\tau_C\rho_C^{-1/2}\right),
 \qquad
 b_C:=\lambda_{\max}
 \!\left(\rho_C^{-1/2}\tau_C\rho_C^{-1/2}\right).
\end{equation*}
Then $\kappa_C=b_C/a_C$, and conjugating the spectral bounds by
$\rho_C^{1/2}$ gives
$a_C\rho_C\leq\tau_C\leq b_C\rho_C$.  Since both states are
normalized, $a_C\leq1\leq b_C$.

For $(R,X)=(\rho_{ABC},\rho_{AB})$ and
$(R,X)=(\rho_{BC},\rho_B)$, we put
\begin{equation}
 \Delta_{R,X}:=
 D^t(R\Vert X\otimes\tau_C)
 -D^t(R\Vert X\otimes\rho_C).
\end{equation}
The preceding order bounds, together with
\eqref{eq:consolidated-Dt-second-order-reversal} and
\eqref{eq:consolidated-Dt-second-homogeneity}, give
\begin{equation}
 -\log b_C\leq\Delta_{R,X}\leq-\log a_C.
\end{equation}
By \eqref{eq:consolidated-three-CMIs}, the difference
$I_t^{\os}-I_t^{\ts}$ is the difference of the two corresponding
values of $\Delta_{R,X}$.  The width of the displayed interval is
$\log(b_C/a_C)$, proving
\eqref{eq:consolidated-forward-additive-comparison}.
If both CMIs are at least $\delta$, we write $h:=\log\kappa_C$.  Applying
the estimate in both directions gives
$I_t^{\os}\leq I_t^{\ts}+h\leq(1+h/\delta)I_t^{\ts}$ and
$I_t^{\ts}\leq I_t^{\os}+h\leq(1+h/\delta)I_t^{\os}$, which is
\eqref{eq:consolidated-forward-away-zero-factor}.
\end{proof}

\begin{remark}[Common marginal support]
The same estimate holds for the support-extended forward quantities
when
$\operatorname{supp}\rho_C=\operatorname{supp}\tau_C=:P_C$,
with the inverse and the extreme eigenvalues in
\eqref{eq:consolidated-reference-condition-number} taken on $P_C$.
If the two supports differ, the Hilbert projective distance is
infinite and the estimate has no finite content.
\end{remark}

The dependence on spectral conditioning cannot be removed if one
wants to compare all three orientations.  This already follows from
a full-support classical binary family, and hence the obstruction is
independent of $t$.  We fix $e\in(0,1)$, take $q\in(0,1/2)$, put
$d_q:=qe/(1-q)$, and consider the joint distribution, with rows
indexed by $A$ and columns by $C$,
\begin{equation}
 p_{AC}^{(q)}=\frac12
 \begin{pmatrix}
  q(1+e)&(1-q)(1-d_q)\\
  q(1-e)&(1-q)(1+d_q)
 \end{pmatrix}.
 \label{eq:consolidated-factor-counterexample}
\end{equation}
Its marginals are $p_A=(1/2,1/2)$ and $p_C=(q,1-q)$.  With
\begin{equation}
 g(x):=\frac{(1+x)\log(1+x)+(1-x)\log(1-x)}2,
\end{equation}
the two forward quantities coincide and obey
\begin{equation}
 I_t^{\os}=I_t^{\ts}
 =qg(e)+(1-q)g(d_q)=qg(e)+O(q^2).
 \label{eq:consolidated-factor-counterexample-forward}
\end{equation}
For $\tau_C=(u,1-u)$, direct cancellation of the classical
likelihoods gives
\begin{equation}
 I_t^{\rev}
 =-\frac u2\log(1-e^2)
  -\frac{1-u}{2}\log(1-d_q^2).
 \label{eq:consolidated-factor-counterexample-reverse}
\end{equation}
Taking $u=1/2$ makes
$I_t^{\rev}/I_t^{\os}\to+\infty$ as $q\downarrow0$, whereas
$u=q^2$ makes $I_t^{\os}/I_t^{\rev}\to+\infty$.  Thus no factor
depending only on $t$ and the dimensions can compare the reverse
quantity with either forward quantity.  Any prospective reverse
factor must therefore retain spectral-conditioning information.  The
projective condition number in
\eqref{eq:consolidated-reference-condition-number} provides precisely
such information for the two forward quantities.

For $\alpha>0$, $\alpha\neq1$, we define
$I_{\alpha,t}^{\bullet}$, with
$\bullet\in\{\os,\ts,\rev\}$, by replacing every occurrence of
$D^t$ in \eqref{eq:consolidated-three-CMIs} by $D_\alpha^t$.
For $0<\alpha<1$ and $1<\alpha\leq2$, these are nonnegative DPI
deficits and will be called the R\'enyi $t$-conditional mutual
informations.  For $\alpha>2$, the same notation denotes only the
corresponding algebraic differences: no general nonnegativity or CMI
interpretation is asserted.

At the exceptional quadratic order, the same comparison of reference
states instead yields a multiplicative estimate, with no restriction
away from the common zero set.

\begin{proposition}[Multiplicative comparison at R\'enyi order two]
\label{prop:consolidated-quadratic-CMI-comparison}
With $\kappa_C$ defined in
\eqref{eq:consolidated-reference-condition-number}, for every
$t\in[0,1]$,
\begin{equation}
 \boxed{
 \kappa_C^{-1}I_{2,t}^{\ts}(A;C\mid B)
 \leq I_{2,t}^{\os}(A;C\mid B)
 \leq\kappa_C I_{2,t}^{\ts}(A;C\mid B).}
 \label{eq:consolidated-quadratic-CMI-comparison}
\end{equation}
Both sides are in fact independent of $t$.
\end{proposition}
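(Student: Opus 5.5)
The plan is to reduce everything to the explicit order-two formula and exploit the fact that, at $\alpha=2$, the reference on $C$ enters linearly through its inverse.

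\emph{Step 1 ($t$-independence).} By \eqref{rw:eq:Q2-constant}, $D_2^t(R\Vert S)=\log\operatorname{Tr}(R^2S^{-1})$ for every $t\in[0,1]$. Hence each $I_{2,t}^{\bullet}$ is independent of $t$, and we may drop the subscript.

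\emph{Step 2 (linear structure).} Write $T:=\tau_C^{-1}$ for a general invertible reference $\tau_C$. Since $(X\otimes\tau_C)^{-1}=X^{-1}\otimes T$, define
\[
 N(T):=\operatorname{Tr}\bigl[\rho_{ABC}^2(\rho_{AB}^{-1}\otimes T)\bigr],
 \qquad
 M(T):=\operatorname{Tr}\bigl[\rho_{BC}^2(\rho_B^{-1}\otimes T)\bigr].
\]
Both are linear and positive in $T$. Put
\[
 r(T):=\frac{N(T)-M(T)}{M(T)}.
\]
Then $I_2^{\bullet}=\log\bigl(1+r(T)\bigr)$, with $T=\tau_C^{-1}$ for $\bullet=\os$ and $T=\rho_C^{-1}$ for $\bullet=\ts$.

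The data-processing inequality for $D_2^t$ (\Cref{cor:consolidated-basic-examples}, via $\operatorname{Tr}_A$) gives $r(T)\geq0$ for every inverse of a normalized invertible state. By homogeneity of $N$ and $M$, this extends to every positive multiple of such an inverse, hence to every $T>0$.

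The step I expect to need the most care is the following: showing that $N-M$ is a positive linear functional on the positive cone, so that Löwner bounds on $T$ transfer to bounds on $N(T)-M(T)$. Writing $N(T)-M(T)=\operatorname{Tr}(WT)$ with $W$ Hermitian on $C$, positivity of $\operatorname{Tr}(WT)$ for all $T>0$ forces $W\geq0$. The same holds for $M(T)=\operatorname{Tr}(ZT)$ with $Z\geq0$.

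\emph{Step 3 (comparison of $T$'s).} As in the proof of \Cref{prop:consolidated-forward-projective-comparison}, we have $a_C\rho_C\leq\tau_C\leq b_C\rho_C$. Inverting gives
\[
 b_C^{-1}\rho_C^{-1}\leq\tau_C^{-1}\leq a_C^{-1}\rho_C^{-1}.
\]
Since $W,Z\geq0$, the numerator $\operatorname{Tr}(W\tau_C^{-1})$ lies between $b_C^{-1}$ and $a_C^{-1}$ times $\operatorname{Tr}(W\rho_C^{-1})$. The denominator $\operatorname{Tr}(Z\tau_C^{-1})$ obeys the same two-sided bounds. Therefore
\[
 \kappa_C^{-1}r(\rho_C^{-1})\leq r(\tau_C^{-1})\leq\kappa_C\,r(\rho_C^{-1}).
\]

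\emph{Step 4 (transfer through the logarithm).} The function $\varphi(x)=\log(1+x)$ is concave on $[0,\infty)$ with $\varphi(0)=0$. Hence $\varphi(\kappa x)\leq\kappa\varphi(x)$ and $\varphi(x/\kappa)\geq\kappa^{-1}\varphi(x)$ for $\kappa\geq1$ and $x\geq0$. Using that $\varphi$ is monotone and applying these two inequalities with $x=r(\rho_C^{-1})$ yields \eqref{eq:consolidated-quadratic-CMI-comparison}. This bound needs no separation from the zero set, in contrast with \eqref{eq:consolidated-forward-away-zero-factor}.
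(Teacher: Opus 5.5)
Your proposal is correct and follows essentially the same route as the paper. Both arguments write the quadratic DPI defect and its denominator as traces against the inverse reference, and both use the DPI together with self-duality of the positive cone to show the defect kernel is positive. Both then transfer the L\"owner bounds $b_C^{-1}\rho_C^{-1}\leq\tau_C^{-1}\leq a_C^{-1}\rho_C^{-1}$ to the ratio, and finish with the concavity inequalities for $\log(1+x)$; the only difference is that you derive these last inequalities from monotonicity of $\varphi(x)/x$ instead of quoting them directly.
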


\begin{proof}
For an invertible state $\omega_C$, we define
\begin{align}
 Q_{ABC}(\omega_C)
 &:=\operatorname{Tr}\!\left[
  \rho_{ABC}^2(\rho_{AB}^{-1}\otimes\omega_C^{-1})
 \right],\notag\\
 Q_{BC}(\omega_C)
 &:=\operatorname{Tr}\!\left[
  \rho_{BC}^2(\rho_B^{-1}\otimes\omega_C^{-1})
 \right],\qquad
 J(\omega_C):=Q_{ABC}(\omega_C)-Q_{BC}(\omega_C).
 \label{eq:consolidated-quadratic-reference-functionals}
\end{align}
The quadratic identity
\eqref{eq:consolidated-quadratic-independence} and its DPI imply
$J(\omega_C)\geq0$ for every invertible $\omega_C$.  Both
$Q_{BC}(\omega_C)$ and $J(\omega_C)$ are linear functionals of
$\omega_C^{-1}$.  More explicitly, we define
\begin{align}
 M_C&:=\operatorname{Tr}_{AB}\!\left[
  (\rho_{AB}^{-1/2}\otimes I_C)\rho_{ABC}^2
  (\rho_{AB}^{-1/2}\otimes I_C)\right],\notag\\
 L_C&:=\operatorname{Tr}_B\!\left[
  (\rho_B^{-1/2}\otimes I_C)\rho_{BC}^2
  (\rho_B^{-1/2}\otimes I_C)\right],
 \qquad K_C:=M_C-L_C.
\end{align}
Then $M_C,L_C\geq0$, and cyclicity of the trace gives
\begin{equation}
 J(\omega_C)=\operatorname{Tr}(K_C\omega_C^{-1}),\qquad
 Q_{BC}(\omega_C)=\operatorname{Tr}(L_C\omega_C^{-1}).
\end{equation}
The operator $K_C$ is Hermitian and is positive as well.  Indeed,
the quadratic DPI gives $\operatorname{Tr}(K_C\omega_C^{-1})\geq0$ for every
invertible state $\omega_C$.  Every $Y_C>0$ is, up to a positive
scalar, the inverse of such a state, so
$\operatorname{Tr}(K_CY_C)\geq0$ for every $Y_C>0$.  Self-duality of the
positive-semidefinite cone therefore implies $K_C\geq0$.

For this proof, we let $a_C$ and $b_C$ again denote the minimum and
maximum eigenvalues, respectively, of
$\rho_C^{-1/2}\tau_C\rho_C^{-1/2}$.  Thus
$\kappa_C=b_C/a_C$, and inverting the corresponding order bounds gives
\begin{equation}
 b_C^{-1}\rho_C^{-1}\leq\tau_C^{-1}
 \leq a_C^{-1}\rho_C^{-1}.
\end{equation}
Testing this inequality against $K_C$ and $L_C$ yields
\begin{align}
 b_C^{-1}J(\rho_C)&\leq J(\tau_C)
                       \leq a_C^{-1}J(\rho_C),\notag\\
 b_C^{-1}Q_{BC}(\rho_C)&\leq Q_{BC}(\tau_C)
                       \leq a_C^{-1}Q_{BC}(\rho_C).
 \label{eq:consolidated-quadratic-functional-comparison}
\end{align}
Consequently, for
$x_\omega:=J(\omega_C)/Q_{BC}(\omega_C)\geq0$,
\begin{equation}
 \kappa_C^{-1}x_{\rho_C}
 \leq x_{\tau_C}\leq\kappa_Cx_{\rho_C}.
\end{equation}
Since
\begin{equation}
 I_{2,t}^{\os}=\log(1+x_{\tau_C}),
 \qquad
 I_{2,t}^{\ts}=\log(1+x_{\rho_C}),
\end{equation}
the result follows from
\begin{equation}
 c\log(1+x)\leq\log(1+cx)\qquad(0<c\leq1),
 \qquad
 \log(1+Cx)\leq C\log(1+x)\qquad(C\geq1).
\end{equation}
The $t$-independence is again
\eqref{eq:consolidated-quadratic-independence}.
\end{proof}

\begin{remark}[Common marginal support at order two]
The proposition also holds when $\rho_C$ and $\tau_C$ have a common
proper support.  One uses the support-extended forward quantities,
takes every inverse on the relevant marginal support, and defines
$\kappa_C$ from the nonzero spectrum in that corner.
\end{remark}

The corresponding linear statement fails at R\'enyi order one, even
at fixed projective distance between the two reference states.

\begin{proposition}[No projective linear comparison at the BS endpoint]
\label{prop:consolidated-no-BS-projective-factor}
For every $\kappa>1$ there is no finite constant $C_\kappa$ such that
\begin{equation}
 \widehat I^{\os}(A;C\mid B)
 \leq C_\kappa\widehat I^{\ts}(A;C\mid B)
 \label{eq:consolidated-false-BS-factor}
\end{equation}
for all invertible states and invertible references satisfying
$\kappa_C=\kappa$.  In fact, there is a full-support cq family with
$\kappa_C=\kappa$ for which
\begin{equation}
 \frac{\widehat I^{\os}(A;C\mid B)}
      {\widehat I^{\ts}(A;C\mid B)}\longrightarrow+\infty.
 \label{eq:consolidated-BS-factor-divergence}
\end{equation}
\end{proposition}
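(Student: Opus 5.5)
The plan is to work in the same cq setting as in the proof of \Cref{prop:consolidated-CMI-incomparability}: take $B$ one-dimensional, $A$ a classical bit, and $\rho_{AC}=\frac12\ketbra{0}{0}_A\otimes\eta_0+\frac12\ketbra{1}{1}_A\otimes\eta_1$ with $\overline\eta=\rho_C$. By the common-flag rule \eqref{rw:eq:general-flag-rule-Dt} at $t=1$,
\begin{equation*}
 \widehat I^{\ts}=\tfrac12\textstyle\sum_i\widehat D(\eta_i\Vert\overline\eta),
 \qquad
 \widehat I^{\os}=\tfrac12\textstyle\sum_i\widehat D(\eta_i\Vert\tau_C)-\widehat D(\overline\eta\Vert\tau_C).
\end{equation*}
I will pin the projective distance by writing $\tau_C=\overline\eta^{1/2}M\overline\eta^{1/2}/\operatorname{Tr}(\cdot)$ with a fixed positive matrix $M$ of condition number exactly $\kappa$. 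Then $\kappa_C=\kappa$ holds automatically along the family, and everything reduces to choosing $\overline\eta$, $M$ and the splitting $\eta_{0,1}=\overline\eta\pm\varepsilon H$.

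In the first step I expand both quantities to second order in $\varepsilon$.
\begin{itemize}
\item $\widehat I^{\ts}=\frac{\varepsilon^2}{2}\operatorname{Tr}(H\overline\eta^{-1}H)+O(\varepsilon^3)$. This is the RLD form, since $\widehat D(\cdot\Vert\overline\eta)$ has that Hessian at its minimizer.
\item $\widehat I^{\os}=\frac{\varepsilon^2}{2}\,\mathrm{Hess}_{\overline\eta}\bigl[\widehat D(\cdot\Vert\tau_C)\bigr](H,H)+O(\varepsilon^3)$. This is a Jensen-gap expansion of the convex map $\rho\mapsto\widehat D(\rho\Vert\tau_C)$.
\end{itemize}
The ratio in \eqref{eq:consolidated-BS-factor-divergence} therefore reduces to a ratio of two quadratic forms. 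The task becomes showing that this ratio is unbounded as $\overline\eta$ varies, with the projective data $M$ held fixed.

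I will compute the BS Hessian from the explicit form $\operatorname{Tr}[\rho\log(\rho^{1/2}\tau^{-1}\rho^{1/2})]$. The Daleckii--Krein formula \eqref{rw:eq:daleckii-krein} applies, with the perturbation of $\rho^{1/2}$ obtained by solving a Sylvester equation. The proposed degeneration is a qubit family with $\overline\eta=\operatorname{diag}(1-q,q)$ and $q\downarrow0$, keeping full support. I take $M$ noncommuting with $\overline\eta$, for instance $M=\frac{1+\kappa}{2}I+\frac{\kappa-1}{2}X$, and $H=Z$, the diagonal direction.

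\begin{itemize}
\item \textbf{RLD side.} $\operatorname{Tr}(Z\overline\eta^{-1}Z)\sim q^{-1}$.
\item \textbf{BS side.} The divided differences of $\log$ and of $\sqrt{\cdot}$ between the eigenvalues of $\overline\eta^{1/2}\tau^{-1}\overline\eta^{1/2}$ pick up the off-diagonal entries of $M^{-1}$. I expect an extra logarithmic factor, roughly $q^{-1}\log(1/q)$.
\end{itemize}
Establishing this blow-up is the main obstacle, and I will first verify it numerically on the qubit family. If the diagonal direction gives only a bounded ratio, I will instead choose $H$ off-diagonal, or let $\varepsilon$ be comparable to $q$ so that $\eta_0$ approaches a pure state. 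In that regime the pure-state limit $\widehat D(P_\psi\Vert\tau)=\log\langle\psi|\tau^{-1}|\psi\rangle$, recorded after \Cref{rw:thm:Dt-strict-monotonicity}, produces a $\log$-versus-linear mismatch between the two sides.

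Finally, I will control the third-order remainders uniformly in the chosen regime, for example by taking $\varepsilon=q^2$, so that the leading ratio survives. The full-support cq states constructed this way, all with $\kappa_C=\kappa$, then give \eqref{eq:consolidated-BS-factor-divergence}. This divergence rules out any finite constant $C_\kappa$ in \eqref{eq:consolidated-false-BS-factor}.
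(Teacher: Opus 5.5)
Your construction is in fact the paper's. With $r=(\kappa-1)/(\kappa+1)$, your normalized reference is $\tau_C=\overline\eta^{1/2}(I+rX)\overline\eta^{1/2}$. The paper's proof uses exactly this $\overline\eta=\operatorname{diag}(1-q,q)$ and $\tau_C$ (its $\varepsilon$ is your $q$), together with the diagonal splitting $\overline\eta\pm sqZ$. Your reduction of the ratio to a BS Jensen-gap Hessian over the RLD form $\operatorname{Tr}(Z\overline\eta^{-1}Z)=\frac1q+\frac1{1-q}$ is also sound.

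The genuine gap is that the one step carrying the proposition is never established, namely that the BS curvature in the direction $Z$ outgrows $q^{-1}$. You defer it to a numerical check, and the mechanism you predict (a factor $q^{-1}\log(1/q)$ from divided differences) is not what happens. You also give no argument for uniformly controlling the higher-order remainders of a BS functional whose reference degenerates as $q\downarrow0$. Nothing in the proposal rules out that the leading coefficient vanishes, in which case the ratio would stay bounded. Its strict positivity is precisely where $\kappa>1$ has to be used.

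The missing idea is to rescale the perturbation with $q$. Put $\rho_a:=\operatorname{diag}(1-aq,aq)$. Since $\tau_C^{-1}=\overline\eta^{-1/2}(I+rX)^{-1}\overline\eta^{-1/2}$, the matrix $\rho_a^{1/2}\tau_C^{-1}\rho_a^{1/2}$ depends analytically on $(a,q)$ near $(1,0)$ and tends to
\[
 Y_0(a)=\frac{1}{1-r^2}\begin{pmatrix}1&-r\sqrt a\\-r\sqrt a&a\end{pmatrix}>0 .
\]
Hence $\widehat D(\rho_a\Vert\tau_C)\to F_r(a)=\langle0|\log Y_0(a)|0\rangle$. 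A direct computation gives $F_r''(1)=(\log\kappa-2r)/(4r)$, which is positive because $\log\frac{1+r}{1-r}>2r$. So the BS Hessian in the direction $Z$ equals $q^{-2}\bigl(F_r''(1)+O(q)\bigr)$. This is a full power of $q$ above the RLD form, not a logarithm, and your ratio behaves like $F_r''(1)/q$.

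The same rescaling removes the remainder problem altogether. Take the perturbation $sq$ with $s>0$ fixed and small, as the paper does. Then $\widehat I^{\os}\to\frac12[F_r(1+s)+F_r(1-s)]-F_r(1)>0$, while $\widehat I^{\ts}$ is classical and equals $q\,h(s)+O(q^2)$ with $h(s)=\frac12[(1+s)\log(1+s)+(1-s)\log(1-s)]$.

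Your fallback via $\widehat D(P_\psi\Vert\tau)=\log\langle\psi|\tau^{-1}|\psi\rangle$ would mislead you here, because that formula presupposes a fixed invertible $\tau$. In this family $\overline\eta^{1/2}\tau_C^{-1}\overline\eta^{1/2}=(I+rX)^{-1}$ exactly. Therefore $\widehat D(\overline\eta\Vert\tau_C)=-\frac12\log(1-r^2)$ for every $q$, whereas the naive pure-state limit predicts $-\log(1-r^2)$.
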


\begin{proof}
It is enough to take $B$ one-dimensional, $A$ a classical bit, and
$C=\mathbb C^2$.  We fix $\kappa>1$ and put
\begin{equation}
 r:=\frac{\kappa-1}{\kappa+1}\in(0,1).
\end{equation}
For a parameter $s\in(0,1)$ to be chosen below and
$0<\varepsilon<(1+s)^{-1}$, we define
\begin{align}
 m_\varepsilon
 &:=\operatorname{diag}(1-\varepsilon,\varepsilon),\notag\\
 \eta_{\varepsilon,\pm}
 &:=\operatorname{diag}
 \bigl(1-(1\pm s)\varepsilon,(1\pm s)\varepsilon\bigr),\notag\\
 \rho_{AC}^{(\varepsilon)}
 &:=\frac12\sum_{u\in\{+,-\}}
 \ketbra{u}{u}_A\otimes\eta_{\varepsilon,u},
 \label{eq:consolidated-BS-factor-cq-family}
\end{align}
and we choose the reference
\begin{equation}
 \tau_C^{(\varepsilon)}
 :=\begin{pmatrix}
  1-\varepsilon&r\sqrt{\varepsilon(1-\varepsilon)}\\
  r\sqrt{\varepsilon(1-\varepsilon)}&\varepsilon
 \end{pmatrix}.
 \label{eq:consolidated-BS-factor-reference}
\end{equation}
All these states are invertible and $\rho_C^{(\varepsilon)}=m_\varepsilon$.
Moreover,
\begin{equation}
 m_\varepsilon^{-1/2}\tau_C^{(\varepsilon)}m_\varepsilon^{-1/2}
 =\begin{pmatrix}1&r\\r&1\end{pmatrix},
\end{equation}
so its extreme eigenvalues are $1-r$ and $1+r$.  Consequently,
$\kappa_C=(1+r)/(1-r)=\kappa$ for every $\varepsilon$.

The common-flag rule gives
\begin{align}
 \widehat I^{\os}_\varepsilon
 &=\frac12\sum_{u\in\{+,-\}}
   \widehat D(\eta_{\varepsilon,u}\Vert
                    \tau_C^{(\varepsilon)})
   -\widehat D(m_\varepsilon\Vert
                    \tau_C^{(\varepsilon)}),\notag\\
 \widehat I^{\ts}_\varepsilon
 &=\frac12\sum_{u\in\{+,-\}}
   \widehat D(\eta_{\varepsilon,u}\Vert m_\varepsilon).
 \label{eq:consolidated-BS-factor-flag-identities}
\end{align}
The pairs in the second line commute.  Expanding their classical
relative entropies therefore yields
\begin{equation}
 \widehat I^{\ts}_\varepsilon
 =\varepsilon h(s)+O(\varepsilon^2),
 \qquad
 h(s):=\frac{(1+s)\log(1+s)+(1-s)\log(1-s)}2>0.
 \label{eq:consolidated-BS-ts-asymptotic}
\end{equation}

We next calculate the limit of the first line of
\eqref{eq:consolidated-BS-factor-flag-identities}.  For $a>0$, we set
\begin{equation}
 d_a:=\sqrt{(1-a)^2+4ar^2}
\end{equation}
and
\begin{equation}
 F_r(a):=\frac12\log\frac{a}{1-r^2}
 +\frac{1-a}{2d_a}
  \log\frac{1+a+d_a}{1+a-d_a}.
 \label{eq:consolidated-BS-limit-function}
\end{equation}
Then
\begin{equation}
 \lim_{\varepsilon\downarrow0}
 \widehat D\!\left(
  \operatorname{diag}(1-a\varepsilon,a\varepsilon)
  \middle\Vert\tau_C^{(\varepsilon)}\right)=F_r(a).
 \label{eq:consolidated-BS-divergence-limit}
\end{equation}
Indeed, the sum and product of the two eigenvalues of
$(\tau_C^{(\varepsilon)})^{-1/2}
 \operatorname{diag}(1-a\varepsilon,a\varepsilon)
 (\tau_C^{(\varepsilon)})^{-1/2}$ converge, respectively, to
\begin{equation}
 \frac{1+a}{1-r^2}
 \quad\text{and}\quad
 \frac{a}{1-r^2}.
\end{equation}
Thus the eigenvalues converge to
$(1+a\pm d_a)/(2(1-r^2))$.  For a two-dimensional positive
operator $Y$ with eigenvalues $u\neq v$, the functional calculus
writes $Y\log Y=\alpha Y+\beta I$.  Using
$\operatorname{Tr}[\tau_C^{(\varepsilon)}Y]=1$ and
$\operatorname{Tr}\tau_C^{(\varepsilon)}=1$, followed by direct simplification,
gives \eqref{eq:consolidated-BS-divergence-limit}.

It follows that
\begin{equation}
 \lim_{\varepsilon\downarrow0}\widehat I^{\os}_\varepsilon
 =J_r(s):=\frac{F_r(1+s)+F_r(1-s)}2-F_r(1).
 \label{eq:consolidated-BS-os-asymptotic}
\end{equation}
Direct differentiation of \eqref{eq:consolidated-BS-limit-function}
at $a=1$ gives
\begin{equation}
 F_r''(1)
 =\frac{\log\!\left(\frac{1+r}{1-r}\right)-2r}{4r}>0.
 \label{eq:consolidated-BS-limit-curvature}
\end{equation}
The strict inequality follows because
$\log\kappa>2(\kappa-1)/(\kappa+1)=2r$ for $\kappa>1$.
Hence $J_r(s)=\frac12F_r''(1)s^2+O(s^4)>0$ after fixing $s>0$
sufficiently small.  With this choice of $s$,
\eqref{eq:consolidated-BS-ts-asymptotic} tends to zero whereas
\eqref{eq:consolidated-BS-os-asymptotic} has a strictly positive
limit.  This proves \eqref{eq:consolidated-BS-factor-divergence}.
\end{proof}

Thus \Cref{prop:consolidated-quadratic-CMI-comparison} is genuinely
quadratic: its proof relies on the fact that both the quadratic DPI
defect and its denominator are positive linear functionals of the
inverse reference state.  At the BS endpoint, the logarithm destroys
this linearity, and
\Cref{prop:consolidated-no-BS-projective-factor} shows that projective
control of the references does not control the ratio of the two
deficits.  The uniform replacement based only on $\kappa_C$ is the
additive comparison
\eqref{eq:consolidated-forward-additive-comparison}, or equivalently
the multiplicative comparison of exponentiated deficits in
\eqref{eq:consolidated-forward-exponential-comparison}.  A linear
comparison is recovered away from their common zero set through
\eqref{eq:consolidated-forward-away-zero-factor}.

For general order-one $t$-CMIs, including the BS endpoint, the
elementary argument only gives
\eqref{eq:consolidated-forward-additive-comparison}; a linear
comparison through the common zero set would require a uniform
second-order nondegeneracy estimate.  One nevertheless obtains a
pairwise power comparison when the relevant smallest eigenvalues are
bounded below by fixed positive constants.  More
precisely, anticipating the common-zero-set theorem in
\Cref{cor:consolidated-tQMC-hierarchy}, we fix the dimensions and
$\varepsilon>0$ and restrict to
\begin{equation}
 \rho_{ABC}\geq\varepsilon I_{ABC},\qquad
 \tau_C\geq\varepsilon I_C.
 \label{eq:consolidated-CMI-invertible-slab}
\end{equation}
For every pair $x,y\in\{\os,\ts,\rev\}$ there are constants
$C_{x,y}>0$ and $\theta_{x,y}>0$, depending only on $\varepsilon$,
the dimensions, and the ordered pair $(x,y)$, such that, uniformly
for $t\in[0,1]$,
\begin{equation}
 I_t^x(A;C\mid B)
 \leq C_{x,y}\bigl(I_t^y(A;C\mid B)\bigr)^{\theta_{x,y}}.
 \label{eq:consolidated-CMI-Lojasiewicz-comparison}
\end{equation}
Indeed, the three quantities are nonnegative real-analytic functions
of $(t,\rho_{ABC},\tau_C)$ on a neighborhood of $[0,1]$ times the
compact set in \eqref{eq:consolidated-CMI-invertible-slab}, and
\Cref{cor:consolidated-tQMC-hierarchy} identifies their zero sets.
We apply the compact two-function \L{}ojasiewicz inequality recalled in
\eqref{eq:consolidated-two-function-Lojasiewicz} with
$u=I_t^y$ and $v=I_t^x$.  It gives $I_t^y\geq c(I_t^x)^r$;
hence \eqref{eq:consolidated-CMI-Lojasiewicz-comparison} holds with
$\theta_{x,y}=1/r$ and $C_{x,y}=c^{-1/r}$.  This last bound
is qualitative and its exponent need not be one; the explicit
linear-factor statement is the quadratic result
\eqref{eq:consolidated-quadratic-CMI-comparison}.

\begin{corollary}[Tripartite quantitative DPI]
\label{cor:consolidated-tripartite-dpi}
For a state $\rho_{ABC}>0$, a normalized state $\tau_C>0$, and
$t\in[0,1]$, we set, for the reverse orientation,
\begin{equation}
 P:=\rho_{AB}\otimes\tau_C,\qquad S:=\rho_{ABC},\qquad
 P_0:=\rho_B\otimes\tau_C,\qquad S_0:=\rho_{BC},
\end{equation}
and we define
\begin{align}
 G_t&:=L_{P^{1-t}}R_{S^{-1/2}P^tS^{-1/2}},&
 G_{0,t}&:=L_{P_0^{1-t}}R_{S_0^{-1/2}P_0^tS_0^{-1/2}},\notag\\
 \mathsf V_{\rev}(Z)
 &:=\left[I_A\otimes(ZS_0^{-1/2})\right]S^{1/2},&
 \mathsf U_{\rev,t}
 &:=G_t^{1/2}\mathsf V_{\rev}G_{0,t}^{-1/2},\notag\\
 q_{\rev,t}
 &:=\lVert S^{1/2}-\mathsf U_{\rev,t}S_0^{1/2}\rVert_2,&
 M_{\rev,t}
 &:=\max\{\lVert G_t^{-1}\rVert_\infty,
                 \lVert G_{0,t}^{-1}\rVert_\infty\},
 \label{eq:consolidated-reverse-data}
\end{align}
where $L_X$ and $R_X$ denote left and right multiplication by $X$,
respectively.  Then
\begin{equation}
 I_t^{\rev}(A;C\mid B)
 \geq\left(\frac\pi4\right)^4
 M_{\rev,t}^{-2}q_{\rev,t}^4.
 \label{eq:consolidated-reverse-q-bound}
\end{equation}
For $0<\alpha<1$ and $1<\alpha<2$, the reverse R\'enyi deficit
satisfies \eqref{eq:consolidated-Renyi-log-bounds} with
$(\Gamma_t,\Gamma_t^\Phi,q_t^\Phi,M_t)$ replaced by
$(G_t,G_{0,t},q_{\rev,t},M_{\rev,t})$.
\end{corollary}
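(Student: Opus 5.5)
The plan is to recognize the reverse deficit as a single channel DPI deficit, so that \Cref{thm:consolidated-Dt-quartic} and \Cref{thm:consolidated-Renyi-dpi} apply verbatim. Take $\Phi=\operatorname{Tr}_A$ and the input pair $(\rho,\sigma)=(P,S)=(\rho_{AB}\otimes\tau_C,\rho_{ABC})$, both invertible. Then $\Phi(P)=\rho_B\otimes\tau_C=P_0$ and $\Phi(S)=\rho_{BC}=S_0$, both invertible, and by \eqref{eq:consolidated-three-CMIs}
\[
 I_t^{\rev}(A;C\mid B)=D^t(P\Vert S)-D^t(\Phi(P)\Vert\Phi(S))=\delta_t^{\operatorname{Tr}_A}(P,S).
\]

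Next I would match the objects of \Cref{sec:consolidated-dpi}. Reading \eqref{eq:consolidated-modular-operators} with $(\rho,\sigma)$ replaced by $(P,S)$ gives $\Gamma_t=G_t$ and $\Gamma_t^\Phi=G_{0,t}$. Since $\Phi^*(Y)=I_A\otimes Y$, the transformer \eqref{eq:consolidated-V} becomes $\mathsf V_{S,\operatorname{Tr}_A}(Z)=[I_A\otimes(ZS_0^{-1/2})]S^{1/2}=\mathsf V_{\rev}(Z)$. Consequently $\mathsf U_t=\mathsf U_{\rev,t}$, the intrinsic defect \eqref{eq:consolidated-qt} is $q_t^\Phi=q_{\rev,t}$, and $M_t=M_{\rev,t}$.

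With this identification, \eqref{eq:consolidated-reverse-q-bound} is exactly \eqref{eq:consolidated-Dt-quartic}. The R\'enyi claim is likewise \eqref{eq:consolidated-Renyi-log-bounds}, with $Q_{\alpha,t}(P\Vert S)$ and $Q_{\alpha,t}(P_0\Vert S_0)$ as the input and output power functionals. No new analysis should be needed, because those theorems were proved for arbitrary invertible pairs and channels with invertible outputs.

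The only point requiring care is bookkeeping: the roles of $\rho$ and $\sigma$ are swapped relative to the naming of the tripartite state. I need to check that $\sigma^{1/2}$ becomes $S^{1/2}$ and $[\Phi(\sigma)]^{1/2}$ becomes $S_0^{1/2}$, and that the output pair stays invertible. The latter holds since $\rho_B,\tau_C,\rho_{BC}>0$, so no support compression is needed.
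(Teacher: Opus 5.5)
Your proposal is correct and follows the same route the paper indicates: the corollary is \Cref{thm:consolidated-Dt-quartic} and \Cref{thm:consolidated-Renyi-dpi} applied to $\Phi=\operatorname{Tr}_A$ with the ordered input pair $(P,S)$. Your identifications $\Gamma_t\mapsto G_t$, $\Gamma_t^\Phi\mapsto G_{0,t}$, $\mathsf V_{S,\operatorname{Tr}_A}=\mathsf V_{\rev}$, $q_t^\Phi\mapsto q_{\rev,t}$ and $M_t\mapsto M_{\rev,t}$, together with the invertibility of $P_0$ and $S_0$, are exactly what is needed.
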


Exact equality and recovery follow from
\Cref{sec:consolidated-recovery}.  The strict-interior quantitative
reconstruction consequences, including the asymmetric cubic estimate,
are collected in
\Cref{sec:consolidated-quantitative-reconstruction}.  No two-sided
reconstruction bound is claimed.

\subsection{Tripartite recovery and the
\texorpdfstring{$t$-QMC}{t-QMC} classes}

\begin{mainobject}{Main definition: Invertible $t$-quantum Markov chains}
For $t\in[0,1]$, an invertible normalized state $\tau_C$, and an
orientation $\bullet\in\{\os,\ts,\rev\}$, an invertible tripartite
state $\rho_{ABC}$ is a $\bullet$-oriented $t$-quantum Markov chain
($t$-QMC) when its corresponding $t$-conditional mutual information
vanishes:
\begin{equation}
 I_t^\bullet(A;C\mid B)_\rho=0.
 \label{eq:consolidated-tqmc-definition}
\end{equation}
Its invertible zero-set class is denoted by
\begin{equation}
 \mathfrak Q_t^\bullet
 :=\left\{\rho_{ABC}\in
 \mathcal S_{+}(\mathcal H_{ABC}):
 I_t^\bullet(A;C\mid B)_\rho=0\right\}.
 \label{eq:consolidated-tqmc-zero-set-definition}
\end{equation}
The dependence on the fixed ancillary state $\tau_C$ is suppressed
from the notation.  \Cref{cor:consolidated-tQMC-hierarchy}
below shows that the three orientation classes coincide; thereafter
``$t$-QMC'' may be used without an orientation qualifier.
\end{mainobject}

For the three ordered pairs in
\eqref{eq:consolidated-three-CMIs}, vanishing is equivalent,
respectively, to
\begin{align}
 \mathcal L_t(\rho_{ABC}\mid\rho_{AB}\otimes\tau_C)
 &=I_A\otimes
 \mathcal L_t(\rho_{BC}\mid\rho_B\otimes\tau_C),
 \label{eq:consolidated-os-recovery}\\
 \mathcal L_t(\rho_{ABC}\mid\rho_{AB}\otimes\rho_C)
 &=I_A\otimes
 \mathcal L_t(\rho_{BC}\mid\rho_B\otimes\rho_C),
 \label{eq:consolidated-ts-recovery}\\
 \mathcal L_t(\rho_{AB}\otimes\tau_C\mid\rho_{ABC})
 &=I_A\otimes
 \mathcal L_t(\rho_B\otimes\tau_C\mid\rho_{BC}).
 \label{eq:consolidated-rev-recovery}
\end{align}
The three equations are not algebraically identical for
$0<t<1$: the arbitrary ancillary state cannot be cancelled, and
the reverse equation exchanges the ordered arguments.  Their zero
sets nevertheless coincide.

\begin{corollary}[Complete invertible $t$-QMC hierarchy]
\label{cor:consolidated-tQMC-hierarchy}
For the classes just defined,
\begin{equation}
 \mathfrak Q_t^{\os}
 =\mathfrak Q_t^{\ts}
 =\mathfrak Q_t^{\rev}
 =\{\text{invertible QMCs}\},
 \qquad 0\leq t<1.
 \label{eq:consolidated-three-zero-sets}
\end{equation}
At $t=1$, all three zero sets are the invertible BS-QMCs.  Hence the
only strict step in the invertible hierarchy is the jump from the
common nonmaximal QMC class to the BS endpoint.
\end{corollary}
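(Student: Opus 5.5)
The plan is to apply the general channel results to the partial trace $\Phi=\operatorname{Tr}_A$ for each of the three ordered pairs. For $0\le t<1$ we invoke \Cref{prop:consolidated-general-hierarchy} (equivalently \Cref{thm:consolidated-interior-rigidity}) with $f(x)=x\log x$, which is equality determining since its representing measure is Lebesgue measure on $(0,\infty)$. For $t=1$ we invoke the maximal endpoint identification $\mathfrak F_1=\mathfrak F_{\rm BS}$. All inputs and outputs are invertible because $\rho_{ABC}>0$ and $\tau_C>0$, so the hypotheses of those results hold. For each orientation we therefore get: $I_t^\bullet=0$ if and only if Umegaki DPI equality holds for the same ordered pair under $\operatorname{Tr}_A$ (when $t<1$), respectively BS DPI equality (when $t=1$).

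\emph{One-sided and two-sided orientations.} Umegaki equality for $(\rho_{ABC},\rho_{AB}\otimes\omega_C)$ under $\operatorname{Tr}_A$, with $\omega_C\in\{\tau_C,\rho_C\}$, means that the Umegaki deficit vanishes. By \eqref{rw:eq:cmi-as-DPI-deficit}, that deficit is exactly $I(A;C\mid B)_\rho$ for any invertible $\omega_C$; this is also \eqref{eq:consolidated-forward-CMI-tzero}. Hence $\mathfrak Q_t^{\os}=\mathfrak Q_t^{\ts}$ is the set of invertible QMCs for $t<1$. At $t=1$ the zero sets are by definition those of $\widehat I^{\os}$ and $\widehat I^{\ts}$, which \eqref{rw:eq:bsqmc-equivalent-conditions} identifies with the BS-QMCs.

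\emph{Reverse orientation.} This is the main obstacle, since the Umegaki deficit for the swapped pair $(\rho_{AB}\otimes\tau_C,\rho_{ABC})$ is not the CMI. For $t<1$ the rigidity theorem reduces the question to Petz sufficiency of $\operatorname{Tr}_A$ for the pair $\{\rho_{AB}\otimes\tau_C,\rho_{ABC}\}$. Petz sufficiency is symmetric in the pair: by \cite[Theorem~3.18]{hiai2017different} it is equivalent to the existence of one channel recovering both states. Choosing the reference $\rho_{AB}\otimes\tau_C$ instead of $\rho_{ABC}$, the Petz condition becomes $\rho_{ABC}=\mathcal P_{\rho_{AB}\otimes\tau_C,\operatorname{Tr}_A}(\rho_{BC})$. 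By the middle line of \eqref{rw:eq:qmc-equivalent-conditions}, in which the $\tau_C$ factors cancel, this is exactly the QMC condition. The delicate point is to check that the sufficiency notion used in \Cref{thm:consolidated-interior-rigidity} is reference-independent. It is: the cocycle criterion $\rho^{is}\sigma^{-is}=\Phi^*(\cdots)$ is equivalent to its version with $\rho$ and $\sigma$ exchanged, obtained by taking adjoints and inverting at $z=is$ (unitaries).

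For $t=1$ in the reverse orientation, we use the statement recalled before \eqref{rw:eq:bsqmc-equivalent-conditions}: the zero sets of $\widehat I^{\os}$, $\widehat I^{\ts}$ and $\widehat I^{\rev}$ coincide \cite{Bluhm:2025kai}. Since $D^1=\widehat D$, all three $1$-classes equal the BS-QMCs. Finally, the claim that the only strict step is the jump at $t=1$ follows from \eqref{rw:eq:qmc-in-bsqmc}: the inclusion of QMCs in BS-QMCs can be strict, as witnessed by \cite[Proposition~4.3]{Bluhm:2025kai}.
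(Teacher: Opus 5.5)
Your proposal is correct and follows essentially the paper's proof. Both apply \Cref{thm:consolidated-interior-rigidity} to each ordered pair, identify the two forward Umegaki deficits with the ordinary CMI, and treat the reverse pair by taking adjoints of the unitary Petz cocycle identity to exchange the roles of the two states, which is equivalent to your symmetry-of-sufficiency argument. The $t=1$ case is likewise read off from $D^1=\widehat D$ and the known coincidence of the BS zero sets.
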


\begin{proof}
For $t<1$, we apply
\Cref{thm:consolidated-interior-rigidity} to the three ordered
pairs.  Thus \eqref{eq:consolidated-three-zero-sets} is precisely the
tripartite specialization of
\eqref{eq:consolidated-general-hierarchy}.  The two forward Umegaki
deficits are the ordinary CMI.  For the reverse orientation,
\Cref{thm:consolidated-interior-rigidity} gives on the imaginary axis
\begin{equation*}
 (\rho_{AB}\otimes\tau_C)^{iu}\rho_{ABC}^{-iu}
 =I_A\otimes
 \bigl[(\rho_B\otimes\tau_C)^{iu}\rho_{BC}^{-iu}\bigr],
 \qquad u\in\mathbb R.
\end{equation*}
Taking adjoints, rather than passing an inverse through the adjoint
channel, yields
\begin{equation*}
 \rho_{ABC}^{iu}(\rho_{AB}\otimes\tau_C)^{-iu}
 =I_A\otimes
 \bigl[\rho_{BC}^{iu}(\rho_B\otimes\tau_C)^{-iu}\bigr],
 \qquad u\in\mathbb R.
\end{equation*}
This is the forward Petz cocycle criterion for the pair
$(\rho_{ABC},\rho_{AB}\otimes\tau_C)$ and hence gives the QMC
condition recalled in \eqref{rw:eq:qmc-equivalent-conditions}.  At
$t=1$, all three
likelihood conditions reduce to the BS equality condition.
\end{proof}

The coincidence of the three zero sets is therefore a qualitative
recovery statement; it does not imply a pointwise comparison of the
three conditional mutual informations, as
\Cref{prop:consolidated-CMI-incomparability,fig:consolidated-BS-CMI-scatter}
shows.

\begin{corollary}[The $f$- and R\'enyi Markov hierarchy]
\label{cor:consolidated-f-Renyi-tQMC-hierarchy}
For any of the three orientations
$\bullet\in\{\os,\ts,\rev\}$, we define
$I_{f,t}^\bullet$ as the DPI deficit obtained from
\eqref{eq:consolidated-three-CMIs} by replacing $D^t$ with
$D_f^t$, i.e. the $(t,f)$-CMI with orientation $\bullet$.  If the
representing measure of $f$ is equality determining in the sense of
\eqref{eq:consolidated-equality-determining-support} for each of the two
finite-dimensional modular spectral problems involved, then
\begin{equation}
 I_{f,t}^\bullet=0
 \quad\Longleftrightarrow\quad
 \rho_{ABC}\text{ is a quantum Markov chain},
 \qquad 0\leq t<1.
\end{equation}
The recovery channel is in every case the ordinary Petz channel for
the corresponding partial trace and reference state.  At $t=1$,
every non-affine operator-convex generator instead has the BS-QMC
zero set.

In particular, for the R\'enyi quantities introduced in
\Cref{subsec:consolidated-tripartite-tcmi},
\begin{equation}
 \{I_{\alpha,t}^\bullet=0\}
 =\{\text{QMCs}\},
 \qquad
 0\leq t<1,\quad
 \alpha\in(0,1)\cup(1,2),
\end{equation}
with the same conclusion at $\alpha=1$ by continuity.  The
quadratic order is exceptional:
\begin{equation}
 \{I_{2,t}^\bullet=0\}
 =\{\text{BS-QMCs}\},
 \qquad 0\leq t\leq1.
\end{equation}
\end{corollary}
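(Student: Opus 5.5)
The plan is to reduce every claim to the channel-level results already proved, applied to the three ordered pairs defining the $(t,f)$-CMIs. For each orientation $\bullet$, the deficit $I_{f,t}^\bullet$ is exactly the DPI deficit of $D_f^t$ under the partial trace $\Phi=\operatorname{Tr}_A$. The ordered input pairs are $(\rho_{ABC},\rho_{AB}\otimes\tau_C)$, $(\rho_{ABC},\rho_{AB}\otimes\rho_C)$ and $(\rho_{AB}\otimes\tau_C,\rho_{ABC})$. All inputs are invertible, and so are their images $\rho_{BC}$, $\rho_B\otimes\tau_C$, $\rho_B\otimes\rho_C$. The hypotheses of \Cref{thm:consolidated-interior-rigidity} therefore hold. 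The equality-determining assumption is imposed for the two modular spectral problems involved, i.e.\ for $\Gamma_t$ and $\Gamma_t^\Phi$ of the relevant pair. Hence for $0\leq t<1$, $I_{f,t}^\bullet=0$ is equivalent to Umegaki DPI equality for the same ordered pair, i.e.\ to $I_0^\bullet=0$. Moreover, the recovery channel is the Petz map $\mathcal P_{\sigma,\operatorname{Tr}_A}$ for that pair's reference.

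It remains to identify the Umegaki zero sets with QMCs, as in the proof of \Cref{cor:consolidated-tQMC-hierarchy}. For $\os$ and $\ts$, the Umegaki deficit is the ordinary CMI by \eqref{eq:consolidated-forward-CMI-tzero}, so its zero set is the QMC class by \eqref{rw:eq:qmc-equivalent-conditions}. For $\rev$, I will take the imaginary-axis cocycle identity from \Cref{thm:consolidated-interior-rigidity} and take adjoints. This converts it into the forward Petz cocycle for $(\rho_{ABC},\rho_{AB}\otimes\tau_C)$, which again gives the QMC condition.

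For $t=1$, I will invoke the endpoint statement of \Cref{prop:consolidated-general-hierarchy}. There, every non-affine operator-convex generator has the maximal/BS equality class, via \cite[Theorem~3.34]{hiai2017different}. This makes each oriented zero set equal to the zero set of the corresponding BS deficit in \eqref{rw:eq:three-BS-CMIs}. Those zero sets coincide and define the BS-QMCs by \eqref{rw:eq:bsqmc-equivalent-conditions}.

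For the Rényi statements, I will use that $D_\alpha^t$ is a strictly monotone transform of $Q_{\alpha,t}=\pm D_{\pm x^\alpha}^t$. Equality of the normalized divergences at input and output is therefore equality for the generator $-x^\alpha$ ($0<\alpha<1$) or $x^\alpha$ ($1<\alpha<2$). Both generators have representing measures with support accumulating in $(0,\infty)$ (densities $\propto s^{\alpha-1}$), hence they are equality determining for every problem. At $\alpha=1$ the statement is the $D^t$ case just proved, $x\log x$ being equality determining. At $\alpha=2$, \eqref{rw:eq:Q2-constant} makes the quantity $t$-independent and equal to its $t=1$ value. The $t=1$ case then gives the BS-QMC zero set for all $t$, consistent with $\mathfrak F_{2,t}=\mathfrak F_{\rm BS}$ in \Cref{prop:consolidated-general-hierarchy}.

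The main obstacle is the reverse orientation at $t=1$. There the BS-QMC identification rests on the cited coincidence of the three BS zero sets \cite{bluhm2023continuity,Bluhm:2025kai} rather than on an argument internal to the paper. I will state this dependence explicitly. A secondary point is to check carefully that the adjoint step for $\rev$ produces the forward Petz cocycle rather than its inverse.
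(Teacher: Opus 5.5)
Your proposal is correct and follows the paper's own proof. The paper applies \Cref{thm:consolidated-interior-rigidity} to each ordered partial-trace pair, with the reverse orientation converted to the forward Petz cocycle by taking adjoints exactly as in \Cref{cor:consolidated-tQMC-hierarchy}; it uses the infinite-support power generators for the open R\'enyi ranges and combines $t$-independence of $Q_{2,t}$ with the maximal/BS endpoint class at $\alpha=2$. Your treatment of $\alpha=1$ as literally the $f=x\log x$ case (since $D_1^t:=D^t$) is a cleaner justification than the paper's appeal to continuity, because zero sets need not pass to limits.
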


\begin{proof}
We apply \Cref{thm:consolidated-interior-rigidity} to the ordered
pair defining the chosen partial-trace deficit.  The full-support
power generators cover the two open R\'enyi ranges.  At order two the
power functional is $t$-independent and has the maximal/BS equality
class.
\end{proof}

\begin{remark}[Strictness witnesses]
By \Cref{cor:consolidated-tQMC-hierarchy}, the invertible $t$-QMCs
coincide with the invertible QMCs for every $0\leq t<1$.
Consequently, every invertible BS-QMC which is not a QMC is
automatically a witness to the strict inclusion of each nonmaximal
$t$-QMC class in the BS endpoint class.  In particular, the explicit
finite-dimensional example of \cite[Example~4.1]{Bluhm:2025kai} and
the Werner-state construction of
\cite[Proposition~4.3]{Bluhm:2025kai} apply directly: both are
BS-QMCs but fail to be $t$-QMCs for every $t<1$.
\end{remark}

On each fixed-dimensional compact set with a uniform positive lower
eigenvalue bound and for
fixed $t<1$, the same zero-set theorem gives qualitative stability:
if a $t$-CMI tends to zero, the distance to the QMC set tends to
zero.  Compactness alone supplies no computable rate.  For the Petz
state-recovery error, \Cref{cor:consolidated-nonmaximal-Petz-holder}
gives a nonconstructive H\"older rate, while the explicit estimates
available from strengthened DPI are collected in
\Cref{sec:consolidated-quantitative-reconstruction}.  None of these
comparisons is uniform as $t\uparrow1$.

\subsection{The common modular origin of the endpoint exception}
\label{subsec:consolidated-weighted-symmetry}

There is a useful operator-theoretic counterpart to the equality-set
collapse in \Cref{prop:consolidated-general-hierarchy}.  For a state
$\omega>0$ and $s\in[0,1]$, we equip
$\mathcal B(\mathcal H)$ with the power-weighted inner product
\begin{equation}
 \langle X,Y\rangle_{\omega,s}
 :=\operatorname{Tr}\!\left(\omega^sX^*\omega^{1-s}Y\right)
 =\langle X,J_{\omega,s}Y\rangle_{\rm HS},
 \qquad
 J_{\omega,s}:=L_{\omega^{1-s}}R_{\omega^s}.
 \label{eq:consolidated-power-weighted-product}
\end{equation}
The cases $s=1$ and $s=\frac12$ are, respectively, the GNS and KMS
inner products.  A linear map $\mathcal K$ is called
$s$-\emph{symmetric} if it is self-adjoint with respect to
\eqref{eq:consolidated-power-weighted-product}.  This family of weighted
inner products and the corresponding adjoints are discussed in
\cite[Section~2.2]{carlen2017gradient}.

The rigidity statement below is known in the detailed-balance
literature.  The GNS case goes back to Alicki
\cite{alicki1976detailed}; its extension to every
$s\neq\frac12$ in the quantum-Markov-semigroup setting was proved by
Fagnola and Umanit\`a
\cite[Proposition~8.1]{fagnola2007generators}.  The
formulation for arbitrary $*$-preserving linear maps, together with
the propagation from one non-KMS power weight to every power weight,
is given in
\cite[Lemmas~2.5 and~2.8 and Theorem~2.9]{carlen2017gradient}.  We
include the short finite-dimensional argument in order to fix the
conventions and make its connection with the $t$-divergences
explicit.

\begin{proposition}[Rigidity of power-weighted symmetry]
\label{prop:consolidated-power-weighted-rigidity}
For a complex-linear, $*$-preserving map
$\mathcal K:\mathcal B(\mathcal H)\to\mathcal B(\mathcal H)$, we set
\begin{equation}
 \Delta_\omega:=L_\omega R_{\omega^{-1}}.
\end{equation}
For every $s\neq\frac12$, the following conditions are equivalent:
\begin{enumerate}
 \item $\mathcal K$ is $s$-symmetric;
 \item $\mathcal K$ is KMS-symmetric and commutes with
       $\Delta_\omega$;
 \item $\mathcal K$ is $r$-symmetric for every $r\in[0,1]$.
\end{enumerate}
Consequently, if $\mathfrak S_{\omega,s}$ denotes the class of
$*$-preserving $s$-symmetric maps, then
\begin{equation}
 \mathfrak S_{\omega,s}
 =\bigcap_{0\leq r\leq1}\mathfrak S_{\omega,r}
 \subseteq \mathfrak S_{\omega,1/2},
 \qquad s\neq\frac12,
 \label{eq:consolidated-weighted-symmetry-hierarchy}
\end{equation}
and the inclusion can be strict.  Thus KMS symmetry is the unique
potentially larger symmetry class within the power-weighted family.
\end{proposition}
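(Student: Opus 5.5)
The whole argument runs in the commuting eigenbasis of $\Delta_\omega$. Let $\omega=\sum_a w_aP_a$ be the spectral decomposition. The superoperators $L_{P_a}R_{P_b}$ are mutually orthogonal Hilbert--Schmidt projections, and every $J_{\omega,r}$ and $\Delta_\omega$ is diagonal in this decomposition. Explicitly,
\[
J_{\omega,r}=\sum_{a,b}w_a^{1-r}w_b^{\,r}\,L_{P_a}R_{P_b},
\qquad
\Delta_\omega=\sum_{a,b}\frac{w_a}{w_b}\,L_{P_a}R_{P_b}.
\]
The condition that $\mathcal K$ is $r$-symmetric is $J_{\omega,r}\mathcal K=\mathcal K^\dagger J_{\omega,r}$, where $\dagger$ is the Hilbert--Schmidt adjoint. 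We write $\mathcal K$ in block form $\mathcal K_{(ab),(cd)}$ with respect to these sectors. Then $r$-symmetry reads
\[
w_a^{1-r}w_b^{\,r}\,\mathcal K_{(ab),(cd)}
=w_c^{1-r}w_d^{\,r}\,(\mathcal K^\dagger)_{(ab),(cd)}.
\]
The next step is to use the $*$-preserving hypothesis. It says $\mathcal K(X^*)=\mathcal K(X)^*$, and so the adjoint block $(\mathcal K^\dagger)_{(ab),(cd)}$ is related to the block $\mathcal K_{(ba),(dc)}$ through the antiunitary involution $X\mapsto X^*$, which swaps sector $(ab)$ with $(ba)$.

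Write the $r$-symmetry identity for the sector pair $((ab),(cd))$, and also the conjugated version for $((ba),(dc))$ obtained by applying the involution. Eliminating the adjoint block between the two gives
\[
\bigl[(w_aw_d)^{1-r}(w_bw_c)^{r}-(w_bw_c)^{1-r}(w_aw_d)^{r}\bigr]\,\mathcal K_{(ab),(cd)}
\;\propto\; \text{(a combination that vanishes iff the }\Delta_\omega\text{-eigenvalues }w_a/w_b,\ w_c/w_d\text{ agree)}.
\]
Put $\lambda:=w_aw_d/(w_bw_c)$, which is exactly the ratio of the two $\Delta_\omega$-eigenvalues $w_a/w_b$ and $w_c/w_d$. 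The key scalar fact is then the following: for $s\neq\tfrac12$, the pair of equations coming from $r=s$ (and its conjugate, which is effectively $r=1-s$) forces $\mathcal K_{(ab),(cd)}=0$ whenever $\lambda\neq1$. This holds because $\lambda^{1-s}=\lambda^{s}$ with $\lambda\neq1$ forces $s=\tfrac12$. Vanishing of every block with $\lambda\ne1$ is exactly the statement that $\mathcal K$ commutes with $\Delta_\omega$. On the blocks with $\lambda=1$, the weight $w_a^{1-r}w_b^r/(w_c^{1-r}w_d^r)=(w_a/w_c)^{1-r}(w_b/w_d)^r$ is independent of $r$, since $w_a/w_c=w_b/w_d$ there. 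Hence on those blocks the $r$-symmetry condition is the same for every $r$, and in particular equals KMS symmetry.

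The implications then follow in order. (1)$\Rightarrow$(2): the block argument gives commutation with $\Delta_\omega$, and on the surviving blocks the $r$-independence gives $\tfrac12$-symmetry. (2)$\Rightarrow$(3): commutation with $\Delta_\omega$ kills the $\lambda\neq1$ blocks, and on the remaining blocks the KMS condition coincides with the $r$-condition for every $r$. (3)$\Rightarrow$(1) is trivial. The class identity \eqref{eq:consolidated-weighted-symmetry-hierarchy} is a restatement of (1)$\Leftrightarrow$(3).

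For strictness, it suffices to exhibit a KMS-symmetric, $*$-preserving map that does not commute with $\Delta_\omega$. Take a qubit $\omega=\operatorname{diag}(w_1,w_2)$ with $w_1\ne w_2$, and define
\[
\mathcal K=J_{\omega,1/2}^{-1}\,\mathcal S,
\]
where $\mathcal S$ is Hilbert--Schmidt self-adjoint, $*$-preserving, and mixes the sector $E_{11}$ with $E_{12}+E_{21}$. Such a $\mathcal K$ is KMS-symmetric but not $\Delta_\omega$-covariant.

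The main obstacle will be keeping track of the antilinear conjugation correctly: one has to make sure that the $*$-preserving condition really converts the $r$-equation into a $(1-r)$-type equation on the same block, and does not merely relate different blocks. I would verify this first on $2\times2$ matrix units before writing the general index bookkeeping.
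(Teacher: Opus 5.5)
Your proposal is correct and is the paper's argument written in the sectors $L_{P_a}R_{P_b}$. The involution $X\mapsto X^*$ maps sector $(ab)$ to $(ba)$ and $J_{\omega,r}$ to $J_{\omega,1-r}$, so $s$-symmetry yields $(1-s)$-symmetry on the same block, and your blockwise elimination is the paper's identity $[\mathcal K,\Delta_\omega^{2s-1}]=0$. Two small corrections: the elimination gives $\bigl[(w_aw_d)^{1-s}(w_bw_c)^{s}-(w_bw_c)^{1-s}(w_aw_d)^{s}\bigr]\mathcal K_{(ab),(cd)}=0$ outright, not a proportionality, and the involution relates $(\mathcal K^\dagger)_{(ab),(cd)}$ to $(\mathcal K^\dagger)_{(ba),(dc)}$, not to $\mathcal K_{(ba),(dc)}$. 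Your $r$-independence on the $\lambda=1$ blocks is the paper's propagation step $J_{\omega,r}=J_{\omega,s}\Delta_\omega^{s-r}$.

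For strictness the paper cites completely positive KMS-symmetric maps that are not modular covariant \cite[Appendix~B]{carlen2017gradient}. Your $J_{\omega,1/2}^{-1}\mathcal S$ is a valid self-contained witness instead, because $J_{\omega,1/2}$ commutes with both $X\mapsto X^*$ and $\Delta_\omega$, and the statement only requires $*$-preservation.
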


\begin{proof}
We denote the Hilbert--Schmidt adjoint of $\mathcal K$ by
$\mathcal K^\dagger$.  Symmetry at $s$ is equivalent to
\begin{equation}
 J_{\omega,s}\mathcal K
 =\mathcal K^\dagger J_{\omega,s}.
 \label{eq:consolidated-weighted-adjoint-relation}
\end{equation}
Since $\mathcal K$ is $*$-preserving, applying this identity to
$Y^*$ and $X^*$, taking complex conjugates, and using cyclicity of
the trace shows that $s$-symmetry also implies $(1-s)$-symmetry.
Hence \eqref{eq:consolidated-weighted-adjoint-relation} holds both
with $s$ and with $1-s$.  Eliminating
$\mathcal K^\dagger$ from the two identities gives
\begin{equation}
 \left[\mathcal K,
 J_{\omega,s}^{-1}J_{\omega,1-s}\right]=0,
 \qquad
 J_{\omega,s}^{-1}J_{\omega,1-s}
 =\Delta_\omega^{\,2s-1}.
 \label{eq:consolidated-weighted-modular-power}
\end{equation}
If $s\neq\frac12$, the function
$\lambda\mapsto\lambda^{2s-1}$ is injective on $(0,\infty)$.
Therefore $\Delta_\omega^{2s-1}$ and $\Delta_\omega$ have the same
spectral projections, and the first identity in
\eqref{eq:consolidated-weighted-modular-power} implies
$[\mathcal K,\Delta_\omega]=0$.  Finally,
\begin{equation}
 J_{\omega,r}=J_{\omega,s}\Delta_\omega^{\,s-r},
 \qquad 0\leq r\leq1,
 \label{eq:consolidated-weighted-propagation}
\end{equation}
so modular covariance propagates
\eqref{eq:consolidated-weighted-adjoint-relation} from $s$ to every
$r$.  This proves the equivalences.  When $s=\frac12$, the modular
power in \eqref{eq:consolidated-weighted-modular-power} is merely the
identity and yields no modular covariance.  Completely positive
KMS-symmetric maps which are not modular covariant show that the
inclusion in \eqref{eq:consolidated-weighted-symmetry-hierarchy} can
indeed be strict
\cite[Appendix~B]{carlen2017gradient}.
\end{proof}

\begin{remark}[Singular weights]
If $\omega\geq0$ is singular, the proposition remains valid on the
corner $P\mathcal B(\mathcal H)P$, where
$P:=\operatorname{supp}\omega$ and $\omega|_{P\mathcal H}$ is
invertible.  On the full algebra the displayed sesquilinear form is
only semidefinite and therefore is not an inner product.
\end{remark}

The fixed-point formulation in
\Cref{cor:consolidated-interior-recovery-dictionary} makes the
connection concrete.  We put $\tau:=\Phi(\sigma)$; the conditioned
channel $\Phi_\sigma$ is defined in
\eqref{eq:consolidated-conditioned-channel}.  A direct calculation
gives
\begin{equation}
 \begin{aligned}
 \langle X,\Phi^*(Y)\rangle_{\sigma,1/2}
 &=\operatorname{Tr}\!\left[
   \Phi(\sigma^{1/2}X^*\sigma^{1/2})Y\right]\\
 &=\langle\Phi_\sigma(X),Y\rangle_{\tau,1/2}.
 \end{aligned}
 \label{eq:consolidated-conditioned-KMS-adjoint}
\end{equation}
Thus $\Phi_\sigma$ is the KMS adjoint of $\Phi^*$, and
$\Phi^*\circ\Phi_\sigma$ is KMS-symmetric.  The last condition in
\Cref{cor:consolidated-interior-recovery-dictionary} says precisely
that the whitened likelihood
$\sigma^{-1/2}\rho\sigma^{-1/2}$ is fixed by this KMS
reversibilization.  This is the fixed-point form of Petz sufficiency
\cite{Petz-SufficiencyChannels-1978} and
\cite[Theorem~3.18(ix)]{hiai2017different}.

The preceding observation can be tied directly to BS-QMCs, at two
equivalent levels.  The canonical QMC shadow is controlled by the
usual KMS reversibilization.  After transporting its predual through
the congruence which undoes the shadow, one obtains a
$\frac12$-symmetric map which fixes the original BS-QMC itself.
The two ingredients were established separately in previous work:
the canonical shadow and the exact BS recovery fixed points in
\cite[Theorem~3.3 and Corollary~3.9]{Bluhm:2025kai}, and the
KMS-adjoint formulation of Petz reversibility in
\cite{Petz-SufficiencyChannels-1978,petz1986quasi} and
\cite[Section~2.2]{carlen2017gradient}.  The explicit congruence and
transported KMS weight below combine these results.

\begin{proposition}[BS-QMCs as transported KMS fixed points]
\label{prop:consolidated-BS-KMS-fixed-points}
For $\rho_{ABC}>0$, we set $d_B:=\dim\mathcal H_B$ and define
\begin{equation}
 \eta_{ABC}:=\frac1{d_B}\rho_B^{-1/2}\rho_{ABC}\rho_B^{-1/2},
 \qquad
 \Sigma_\eta:=\eta_{AB}\otimes\pi_C,
 \qquad
 \mathcal E:=\operatorname{Tr}_A.
 \label{eq:consolidated-BS-KMS-shadow}
\end{equation}
We define
\begin{equation}
 \mathcal K_\eta:=\mathcal E^*\circ\mathcal E_{\Sigma_\eta},
 \qquad
 \mathcal E_{\Sigma_\eta}(X)
 =d_B\operatorname{Tr}_A\!\left[
  (\eta_{AB}^{1/2}\otimes I_C)X
  (\eta_{AB}^{1/2}\otimes I_C)
 \right].
 \label{eq:consolidated-BS-KMS-reversibilization}
\end{equation}
Then $\mathcal K_\eta$ is unital, completely positive, and
$\frac12$-symmetric with respect to $\Sigma_\eta$.  Its predual
$\mathcal K_\eta^\dagger$ is $\frac12$-symmetric with respect to the
normalized inverse weight
\begin{equation}
 \widehat\Sigma_\eta
 :=\frac{\Sigma_\eta^{-1}}{\operatorname{Tr}(\Sigma_\eta^{-1})}.
 \label{eq:consolidated-inverse-KMS-weight}
\end{equation}
Moreover,
\begin{equation}
 \boxed{
 \begin{aligned}
 \rho_{ABC}\text{ is a BS-QMC}
 &\Longleftrightarrow \eta_{ABC}\text{ is a QMC}\\
 &\Longleftrightarrow
 \mathcal K_\eta^\dagger(\eta_{ABC})=\eta_{ABC}\\
 &\Longleftrightarrow
 \Sigma_\eta^{-1/2}\eta_{ABC}\Sigma_\eta^{-1/2}
 \in\operatorname{Fix}(\mathcal K_\eta).
 \end{aligned}}
 \label{eq:consolidated-BS-KMS-equivalence}
\end{equation}
In fact, we define the invertible congruence, with identity embeddings
on the complementary systems understood, by
\begin{equation}
 \mathcal C_{\rho_B}(X):=d_B\rho_B^{1/2}X\rho_B^{1/2}.
\end{equation}
For the linear BS recovery map in
\eqref{eq:consolidated-BS-CP-map}, we set
\begin{equation}
 \mathcal M_\rho^{\rm BS}
 :=\Phi_{B\to AB}^{\rm BS}\circ\operatorname{Tr}_A.
 \label{eq:consolidated-BS-fixed-point-map}
\end{equation}
We also put
\begin{equation}
 \mathcal M_\eta(X)
 :=d_B\eta_{AB}^{1/2}(I_A\otimes\operatorname{Tr}_A[X])\eta_{AB}^{1/2},
 \qquad X\in\mathcal B(\mathcal H_{AB}).
 \label{eq:consolidated-shadow-fixed-point-map}
\end{equation}
One has the similarity relation
\begin{equation}
 \mathcal M_\rho^{\rm BS}
 =\mathcal C_{\rho_B}\circ\mathcal M_\eta
  \circ\mathcal C_{\rho_B}^{-1},
 \qquad
 \mathcal K_\eta^\dagger=\mathcal M_\eta\otimes\id_C.
 \label{eq:consolidated-BS-KMS-similarity}
\end{equation}
We define the invertible state
\begin{equation}
 A_{AB}:=(I_A\otimes\rho_B^{-1/2})\eta_{AB}^{-1/2}
          (I_A\otimes\rho_B^{-1/2}),
 \qquad
 \omega_{AB}^{\rm BS}:=\frac{A_{AB}^2}{\operatorname{Tr}(A_{AB}^2)}.
 \label{eq:consolidated-BS-KMS-weight}
\end{equation}
Then $\mathcal M_\rho^{\rm BS}$ is completely positive and
$\frac12$-symmetric with respect to $\omega_{AB}^{\rm BS}$, and
\begin{equation}
 \boxed{
 \rho_{ABC}\text{ is a BS-QMC}
 \quad\Longleftrightarrow\quad
 (\mathcal M_\rho^{\rm BS}\otimes\id_C)(\rho_{ABC})
 =\rho_{ABC}.}
 \label{eq:consolidated-direct-BS-KMS-fixed-point}
\end{equation}
Thus the BS recovery fixed-point map is itself a KMS-symmetric map
for the explicit, state-dependent weight
\eqref{eq:consolidated-BS-KMS-weight}.
\end{proposition}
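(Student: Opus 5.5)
The proof reduces every claim to the KMS-adjoint formulation of Petz reversibility recalled in \eqref{eq:consolidated-conditioned-KMS-adjoint}, applied to the shadow pair $(\eta_{ABC},\Sigma_\eta)$ and the partial trace $\mathcal E=\operatorname{Tr}_A$. We then transport everything back to $\rho$ through the congruence $\mathcal C_{\rho_B}$.

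The first step is to identify $\mathcal E_{\Sigma_\eta}$ with the conditioned channel $\Phi_\sigma$ of \eqref{eq:consolidated-conditioned-channel}, taking $\Phi=\operatorname{Tr}_A$ and $\sigma=\Sigma_\eta$. Since $\eta_B=I_B/d_B$, we have $\mathcal E(\Sigma_\eta)=\eta_B\otimes\pi_C$, so the whitening factors $[\Phi(\sigma)]^{-1/2}$ collapse to the scalar $d_B^{1/2}$ on $B$ and $\pi_C^{-1/2}$ on $C$. The $\pi_C$ factors cancel against $\pi_C^{1/2}$ inside $\Sigma_\eta^{1/2}$, which yields exactly the formula \eqref{eq:consolidated-BS-KMS-reversibilization}. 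Unitality and complete positivity of $\mathcal K_\eta=\mathcal E^*\circ\mathcal E_{\Sigma_\eta}$ follow because both factors are CP and $\mathcal E_{\Sigma_\eta}(I)=d_B\eta_B=I$. KMS symmetry with respect to $\Sigma_\eta$ is \eqref{eq:consolidated-conditioned-KMS-adjoint}: since $\tau=\mathcal E(\Sigma_\eta)$, the map $\mathcal E_{\Sigma_\eta}$ is the $\frac12$-adjoint of $\mathcal E^*$, so the composite is $\frac12$-symmetric. For the predual, note that $\langle X,\mathcal K Y\rangle_{\Sigma,1/2}=\langle X,J\mathcal K Y\rangle_{\rm HS}$ with $J=L_{\Sigma^{1/2}}R_{\Sigma^{1/2}}$. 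Symmetry means $J\mathcal K=\mathcal K^\dagger J$, which rearranges to $J^{-1}\mathcal K^\dagger=\mathcal K J^{-1}$. This is symmetry of $\mathcal K^\dagger$ for the weight $J^{-1}=J_{\Sigma^{-1},1/2}$, and $J^{-1}$ is unchanged up to a positive scalar by the normalization in \eqref{eq:consolidated-inverse-KMS-weight}.

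For the chain \eqref{eq:consolidated-BS-KMS-equivalence}, the first equivalence is \eqref{rw:eq:BS-QMC-QMC-correspondence} from \cite{Bluhm:2025kai}. The QMC property of $\eta$ is Petz sufficiency of $\operatorname{Tr}_A$ for $(\eta_{ABC},\Sigma_\eta)$, because the $\pi_C$ factors cancel as in \eqref{rw:eq:qmc-equivalent-conditions}. By condition (vi) of \Cref{cor:consolidated-interior-recovery-dictionary} this is the last fixed-point statement. Writing out $\mathcal K_\eta^\dagger=\mathcal E_{\Sigma_\eta}^\dagger\circ\mathcal E$ shows that it is the Petz recovery map, so $\mathcal K^\dagger_\eta(\eta)=\eta$ is Petz state recovery. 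A direct computation of $\mathcal E_{\Sigma_\eta}^\dagger(Z)=d_B\eta_{AB}^{1/2}(I_A\otimes Z)\eta_{AB}^{1/2}$ also gives $\mathcal K_\eta^\dagger=\mathcal M_\eta\otimes\id_C$.

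For the similarity relation \eqref{eq:consolidated-BS-KMS-similarity}, I would compute $\mathcal C_{\rho_B}\mathcal M_\eta\mathcal C_{\rho_B}^{-1}(X)$ directly. The partial trace $\operatorname{Tr}_A$ commutes with the $B$-congruence, so the product becomes
\[
d_B\,\rho_B^{1/2}\eta_{AB}^{1/2}\rho_B^{-1/2}\operatorname{Tr}_A[X]\rho_B^{-1/2}\eta_{AB}^{1/2}\rho_B^{1/2}.
\]
Since $d_B\eta_{AB}=\rho_B^{-1/2}\rho_{AB}\rho_B^{-1/2}$, this matches \eqref{eq:consolidated-BS-CP-map}. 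Combining the similarity with the fixed-point equivalence, and using $\mathcal C_{\rho_B}^{-1}(\rho_{ABC})=\eta_{ABC}$, gives \eqref{eq:consolidated-direct-BS-KMS-fixed-point}. This can also be cross-checked against \eqref{eq:consolidated-tripartite-symmetric-BS-equivalence}.

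The main obstacle is the last claim, namely KMS symmetry of $\mathcal M^{\rm BS}_\rho$ with respect to $\omega^{\rm BS}_{AB}$. A congruence does not transport a KMS weight to another KMS weight in general, so here I would verify the identity directly. Set $J_\omega=L_{\omega^{1/2}}R_{\omega^{1/2}}$ with $\omega^{1/2}\propto A_{AB}$. We need $J_\omega\mathcal M^{\rm BS}=(\mathcal M^{\rm BS})^\dagger J_\omega$. Using the similarity, $(\mathcal M^{\rm BS})^\dagger=\mathcal C^{\dagger-1}\mathcal M_\eta^\dagger\mathcal C^\dagger$, where $\mathcal C^\dagger=\mathcal C$ because $\mathcal C$ is a self-adjoint congruence. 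We also know that $\mathcal M_\eta$ is symmetric for the weight $J'=L_{\eta_{AB}^{-1/2}}R_{\eta_{AB}^{-1/2}}$, up to scalars, coming from the inverse-weight statement. The required identity then reduces to $J_\omega=\mathcal C^{-1}J'\mathcal C^{-1}$. Both sides are sandwich superoperators, and $\mathcal C^{-1}J'\mathcal C^{-1}(X)\propto\rho_B^{-1/2}\eta_{AB}^{-1/2}\rho_B^{-1/2}\,X\,\rho_B^{-1/2}\eta_{AB}^{-1/2}\rho_B^{-1/2}=A\,X\,A$. This is exactly the definition \eqref{eq:consolidated-BS-KMS-weight}, so checking it carefully, including scalar factors, is the crux.
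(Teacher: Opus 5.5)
Your proposal is correct and follows essentially the paper's route: the conditioned-channel identification plus \eqref{eq:consolidated-conditioned-KMS-adjoint}, the canonical BS-QMC/QMC correspondence together with Petz's fixed-point criterion, a direct computation of the similarity, and transport of the $\eta_{AB}^{-1}$-KMS symmetry of $\mathcal M_\eta$ through $\mathcal C_{\rho_B}$. Your identity $\mathcal C_{\rho_B}^{-1}L_{\eta_{AB}^{-1/2}}R_{\eta_{AB}^{-1/2}}\mathcal C_{\rho_B}^{-1}=d_B^{-2}L_{A_{AB}}R_{A_{AB}}$ is exactly the paper's observation that $\mathcal C_{\rho_B}$ is, up to a scalar, an isometry between the two KMS geometries.

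The remaining differences are cosmetic:
\begin{itemize}
\item Your abstract inversion $J\mathcal K=\mathcal K^\dagger J\Rightarrow J^{-1}\mathcal K^\dagger=\mathcal K J^{-1}$ replaces the paper's explicit trace computation for the predual.
\item The scalar factors you single out as the crux are harmless, because symmetry is invariant under positive rescaling of the weight.
\item Your caveat that a congruence need not carry a KMS weight to a KMS weight is unnecessary: any invertible congruence does.
\item Complete positivity of $\mathcal M_\rho^{\rm BS}$, which you do not mention, is immediate from its sandwich form.
\end{itemize}
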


\begin{proof}
The definition in \eqref{eq:consolidated-BS-KMS-shadow} gives
$\eta_B=I_B/d_B$.  Consequently
$\mathcal E(\Sigma_\eta)=\pi_B\otimes\pi_C$, and substituting this
into \eqref{eq:consolidated-conditioned-channel} gives the second
identity in \eqref{eq:consolidated-BS-KMS-reversibilization}.
Equation \eqref{eq:consolidated-conditioned-KMS-adjoint} then shows
that $\mathcal K_\eta$ is KMS-symmetric; complete positivity and
unitality follow directly from its definition.
The inverse-weight symmetry of its predual may also be seen directly:
up to the irrelevant normalization of the weight,
\begin{equation}
 \begin{aligned}
 \langle X,\mathcal K_\eta^\dagger(Y)
 \rangle_{\Sigma_\eta^{-1},1/2}
 &=d_Bd_C\operatorname{Tr}\!\left[
       (\operatorname{Tr}_A X)^*\operatorname{Tr}_A Y\right]\\
 &=\langle\mathcal K_\eta^\dagger(X),Y
 \rangle_{\Sigma_\eta^{-1},1/2}.
 \end{aligned}
 \label{eq:consolidated-predual-inverse-KMS-symmetry}
\end{equation}
Equivalently, taking $C$ one-dimensional shows that
$\mathcal M_\eta$ is $\frac12$-symmetric with respect to the
normalized inverse weight
$\eta_{AB}^{-1}/\operatorname{Tr}(\eta_{AB}^{-1})$.

The first equivalence in
\eqref{eq:consolidated-BS-KMS-equivalence} is the canonical BS-QMC/QMC
correspondence
\cite[Theorem~3.3]{Bluhm:2025kai}.  The Petz recovery map for the
pair $(\eta_{ABC},\Sigma_\eta)$ under $\mathcal E$ is
\begin{equation}
 \mathcal E_{\Sigma_\eta}^\dagger(Y)
 =d_B(\eta_{AB}^{1/2}\otimes I_C)Y
       (\eta_{AB}^{1/2}\otimes I_C).
 \label{eq:consolidated-shadow-Petz-map}
\end{equation}
In particular,
$\mathcal K_\eta^\dagger=\mathcal M_\eta\otimes\id_C$.
Hence the QMC condition is equivalent to
$\mathcal K_\eta^\dagger(\eta_{ABC})=\eta_{ABC}$.
The last equivalence is the Petz fixed-point condition in
\Cref{cor:consolidated-interior-recovery-dictionary}.

It remains to verify the direct relation between the two fixed-point
maps.  On $AB$ one has
\begin{equation}
 \mathcal C_{\rho_B}^{-1}(X)
 =\frac1{d_B}\rho_B^{-1/2}X\rho_B^{-1/2},
\end{equation}
so tensoring with $\id_C$ and using
\eqref{eq:consolidated-shadow-Petz-map} yields, for $X$ on $ABC$,
\begin{align}
 (\mathcal C_{\rho_B}\otimes\id_C)\mathcal K_\eta^\dagger
 (\mathcal C_{\rho_B}^{-1}\otimes\id_C)(X)
 &={d_B}\rho_B^{1/2}\eta_{AB}^{1/2}\rho_B^{-1/2}
       \operatorname{Tr}_A[X]\rho_B^{-1/2}\eta_{AB}^{1/2}\rho_B^{1/2}\notag\\
 &= (\Phi_{B\to AB}^{\rm BS}\otimes\id_C)(\operatorname{Tr}_A[X]),
 \end{align}
because
$d_B\eta_{AB}=\rho_B^{-1/2}\rho_{AB}\rho_B^{-1/2}$.
This proves \eqref{eq:consolidated-BS-KMS-similarity}; its
BS-QMC fixed-point interpretation is precisely
\cite[Corollary~3.9]{Bluhm:2025kai}, and gives
\eqref{eq:consolidated-direct-BS-KMS-fixed-point}.

It remains only to identify the transported KMS weight.  For
$S:=I_A\otimes\rho_B^{1/2}$ and operators $X,Y$ on $AB$, cyclicity
of the trace gives
\begin{equation}
 \begin{aligned}
 &\left\langle
 \mathcal C_{\rho_B}^{-1}(X),
 \mathcal C_{\rho_B}^{-1}(Y)
 \right\rangle_{\eta_{AB}^{-1},1/2}\\
 &\qquad
 =\frac1{d_B^2}\operatorname{Tr}\!\left[
  S^{-1}\eta_{AB}^{-1/2}S^{-1}X^*
  S^{-1}\eta_{AB}^{-1/2}S^{-1}Y
 \right]\\
 &\qquad
 =\frac{\operatorname{Tr}(A_{AB}^2)}{d_B^2}
   \langle X,Y\rangle_{\omega_{AB}^{\rm BS},1/2}.
 \end{aligned}
 \label{eq:consolidated-congruence-KMS-isometry}
\end{equation}
Thus $\mathcal C_{\rho_B}$ is, up to a common scalar, an isometry
between the two KMS geometries.  Conjugating the
$\eta_{AB}^{-1}$-symmetric map $\mathcal M_\eta$ by this congruence
and using
\eqref{eq:consolidated-BS-KMS-similarity} proves that
$\mathcal M_\rho^{\rm BS}$ is $\frac12$-symmetric with respect to
$\omega_{AB}^{\rm BS}$.
\end{proof}

Two qualifications are essential.  First, both
$\mathcal M_\rho^{\rm BS}$ and its KMS weight
$\omega_{AB}^{\rm BS}$ depend on the marginal $\rho_{AB}$; arbitrary
KMS-symmetric maps do not characterize BS-QMCs.  Second,
$\mathcal M_\rho^{\rm BS}$ is generally neither unital nor trace
preserving.  Thus this is a fixed-point characterization by a
completely positive KMS-symmetric endomorphism, not by a quantum
Markov channel.  Finally, the word ``symmetric'' in the sandwich-form
BS recovery maps of \Cref{sec:consolidated-recovery} originally
refers to the operator ordering that guarantees positivity, not to
KMS self-adjointness.  The latter is the additional consequence of
the canonical-shadow transport established above.

The analogy with the $t$-divergences becomes exact at the level of
the modular exponent.  We set
\begin{equation}
 s_t:=1-\frac t2,
 \qquad 2s_t-1=1-t.
 \label{eq:consolidated-t-weighted-parameter}
\end{equation}
For $t<1$, Steps~1 and~3 in the proof of
\Cref{thm:consolidated-interior-rigidity} sample the modular orbit at
the nonzero increments $n(1-t)$.  Just as in
\eqref{eq:consolidated-weighted-modular-power}, this nonzero power
has the full spectral resolution of the modular operator; it
therefore determines the complete cocycle and forces the common Petz
equality class.  At $t=1$, equivalently $s_t=\frac12$, the sampled
power is $\Delta_\rho^0=I$: the modular orbit collapses, KMS symmetry alone
does not recover the missing modular information, and the equality
class can enlarge to the BS class.  Hence the coincidence
$\mathfrak F_{f,t}=\mathfrak F_{\rm P}$ for fixed
equality-determining $f$ and every $t<1$, and the equivalence of all
non-KMS power-weighted symmetry notions, are not literally the same
statement---the former concerns state pairs saturating a channel
DPI, while the latter concerns adjoints of linear maps---but they are
two manifestations of the same modular-spectral rigidity principle:
a nonzero power of an invertible modular operator retains all of its
spectral projections, whereas its zeroth power cannot distinguish
between them.

\begin{remark}[Why the qualifier ``power-weighted'' is essential]
The more general modularly weighted inner products are discussed in
\cite[Section~2.2]{carlen2017gradient}.  For a positive modular weight
$g:(0,\infty)\to(0,\infty)$, one may instead set
$J_{\omega,g}:=R_\omega g(\Delta_\omega)$.  The preceding adjoint
argument then produces commutation only with
\begin{equation}
 h_g(\Delta_\omega),
 \qquad
 h_g(x):=\frac{xg(x^{-1})}{g(x)}.
\end{equation}
It forces full modular covariance when $h_g$ separates the actual
modular spectrum, but not in general.  The KMS product is therefore
the unique exception among the power weights, not among all possible
modularly weighted inner products.  Likewise, the divergence-side
statement requires the equality-determining hypothesis on $f$; the
quadratic and sparse exceptions described after
\Cref{prop:consolidated-general-hierarchy} remain.
\end{remark}

\subsection{Canonical BS shadows and nonmaximal compatibility}

The equality hierarchy explains exactly which parts of the BS-QMC
correspondence of \cite{Bluhm:2025kai} survive along the
interpolation.  We write
$d_B=\dim\mathcal H_B$ and define the canonical BS shadow
\begin{equation}
 \eta^{\rm BS}_{ABC}(\rho)
 :=\frac1{d_B}\rho_B^{-1/2}\rho_{ABC}\rho_B^{-1/2}.
 \label{eq:consolidated-BS-shadow}
\end{equation}

\begin{proposition}[Canonical QMC shadow]
\label{prop:consolidated-canonical-shadow}
If $\rho_{ABC}$ is an invertible $t$-QMC for any
$t\in[0,1]$, then:
\begin{enumerate}
\item $\eta_B^{\rm BS}=I_B/d_B$ and
$\eta_{ABC}^{\rm BS}$ is a QMC;
\item
\begin{equation}
 [\eta_{AB}^{\rm BS},\eta_{BC}^{\rm BS}]=0,\qquad
 \rho_{ABC}
 =d_B^2\rho_B^{1/2}
  \eta_{AB}^{\rm BS}\eta_{BC}^{\rm BS}\rho_B^{1/2};
 \label{eq:consolidated-shadow-product}
\end{equation}
\item there are a unitary
$U_B:\mathcal H_B\to\bigoplus_n
 (\mathcal H_{B_L^n}\otimes\mathcal H_{B_R^n})$, a probability
distribution $(p_n)_n$, and states
$\widetilde\eta_{AB_L^n}$ and
$\widetilde\eta_{B_R^nC}$ such that
\begin{equation}
 \rho_{ABC}
 =\rho_B^{1/2}U_B^*\left[
 \bigoplus_n d_Bp_n\,
 \widetilde\eta_{AB_L^n}\otimes
 \widetilde\eta_{B_R^nC}\right]U_B\rho_B^{1/2};
 \label{eq:consolidated-shadow-block}
\end{equation}
\item equivalently,
\begin{equation}
 \rho_{ABC}=Z_\rho^{-1}\rho_B^{1/2}
 e^{-H_{AB}-H_{BC}}\rho_B^{1/2},
 \qquad [H_{AB},H_{BC}]=0;
 \label{eq:consolidated-shadow-Gibbs}
\end{equation}
\item $\rho\mapsto\eta^{\rm BS}(\rho)$ is an idempotent nonlinear
retraction from the invertible BS-QMC class onto the invertible QMCs with
maximally mixed $B$-marginal.  Its restriction to every
$t$-QMC class is surjective onto that same normalized QMC set.
\end{enumerate}
\end{proposition}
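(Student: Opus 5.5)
The first move is to observe that every invertible $t$-QMC is a BS-QMC. For $t<1$, \Cref{cor:consolidated-tQMC-hierarchy} identifies the $t$-QMCs with the invertible QMCs. These lie inside the BS-QMCs by \eqref{rw:eq:qmc-in-bsqmc}, since a Petz-sufficient pair is BS-equality by \Cref{prop:consolidated-general-hierarchy}. At $t=1$ the class is the BS-QMCs by definition. All five items are therefore statements about an invertible BS-QMC $\rho_{ABC}$, and the parameter $t$ only enters through the surjectivity clause in (5).

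\textbf{Items (1) and (2).} For (1), the identity $\eta_B^{\rm BS}=I_B/d_B$ is a one-line partial-trace computation, since $\rho_B^{-1/2}$ acts only on $B$. The QMC property is then the correspondence \eqref{rw:eq:BS-QMC-QMC-correspondence} from \cite[Theorem~3.3]{Bluhm:2025kai}.

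For (2), I would apply the QMC characterization \eqref{rw:eq:qmc-equivalent-conditions} to $\eta:=\eta^{\rm BS}$. Because $\eta_B$ is a scalar, the Petz recovery identity becomes
\[
\eta_{ABC}=d_B\,\eta_{AB}^{1/2}\eta_{BC}\eta_{AB}^{1/2},
\]
and the logarithmic identity becomes $\log\eta_{ABC}=\log\eta_{AB}+\log\eta_{BC}+\log d_B$. The block structure \eqref{rw:eq:qmc-block-structure} for $\eta$ shows that $\eta_{AB}$ acts on the $A B_L^n$ factors and $\eta_{BC}$ on the $B_R^n C$ factors. Hence the two commute, and $\eta_{ABC}=d_B\,\eta_{AB}\eta_{BC}$. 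Undoing the congruence via $\rho_{ABC}=d_B\rho_B^{1/2}\eta_{ABC}\rho_B^{1/2}$ gives \eqref{eq:consolidated-shadow-product}.

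\textbf{Items (3) and (4).} These follow by conjugating the Hayden--Jozsa--Petz--Winter decomposition of $\eta_{ABC}$ by $\sqrt{d_B}\,\rho_B^{1/2}$. The factor $d_B$ must be bookkept carefully, which is why it appears in front of $p_n$ in \eqref{eq:consolidated-shadow-block}. The Gibbs form in (4) is obtained by setting $e^{-H_{AB}}\propto\eta_{AB}$ and $e^{-H_{BC}}\propto\eta_{BC}$, which commute by (2). For the converse direction of (4), starting from the Gibbs form I would show that $\eta$ is proportional to $e^{-H_{AB}-H_{BC}}$ with commuting $H$'s. The subtle point is that the normalization forces the $B$-marginal of that shadow to be maximally mixed. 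Such commuting-Hamiltonian states are QMCs by the standard commuting-Hamiltonian result that also underlies the HJPW structure, so $\rho$ is a BS-QMC.

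\textbf{Item (5).} Here $\eta^{\rm BS}$ maps BS-QMCs to QMCs with $\eta_B=I_B/d_B$, by (1). Idempotence holds because applying the shadow to a state with $\omega_B=I_B/d_B$ returns
\[
d_B^{-1}\,(I_B/d_B)^{-1/2}\,\omega\,(I_B/d_B)^{-1/2}=\omega .
\]
The same computation shows that every normalized QMC is fixed by the shadow. Such a state is therefore in the image, and it lies in every $t$-QMC class for $t<1$ because it is a QMC, and in the $t=1$ class because QMCs are BS-QMCs. Restricting to any $t$-QMC class is thus already surjective, using the fixed points as preimages.

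The main obstacle I expect is the commutation $[\eta_{AB},\eta_{BC}]=0$ in (2), which must be extracted cleanly from the HJPW structure with maximally mixed $\eta_B$. The point to verify is that the marginals computed from the block form really factor as operators on the $AB_L^n$ and $B_R^nC$ tensor factors. This holds because
\[
\eta_{AB}=\bigoplus_n p_n\,\widetilde\eta_{AB_L^n}\otimes\widetilde\eta_{B_R^n},\qquad
\eta_{BC}=\bigoplus_n p_n\,\widetilde\eta_{B_L^n}\otimes\widetilde\eta_{B_R^nC},
\]
and $\eta_B=I_B/d_B$ forces $\widetilde\eta_{B_L^n}$ and $\widetilde\eta_{B_R^n}$ to be maximally mixed. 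Those maximally mixed factors are scalars on their tensor factors, and with them the two marginals commute.
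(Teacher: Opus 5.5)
Your proposal is correct and follows the paper's route. You reduce every invertible $t$-QMC to an invertible BS-QMC via \Cref{cor:consolidated-tQMC-hierarchy}, and then invoke the canonical BS-QMC/QMC correspondence of \cite[Theorem~3.3]{Bluhm:2025kai}. The only difference is that you derive items (2)--(5) by hand from the Hayden--Jozsa--Petz--Winter decomposition of the shadow, correctly using that $\eta_B=I_B/d_B$ forces the $B_L^n$ and $B_R^n$ marginal factors to be maximally mixed, whereas the paper simply cites \cite[Remarks~3.6--3.7]{Bluhm:2025kai} for these consequences.
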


This is an immediate consequence of
\Cref{cor:consolidated-tQMC-hierarchy} and the canonical
BS-QMC/QMC correspondence
\cite[Theorem~3.3 and Remarks~3.6--3.7]{Bluhm:2025kai}.  Indeed, for
$t<1$ an invertible $t$-QMC is a QMC, while at $t=1$ it is a
BS-QMC; the cited correspondence gives all the displayed shadow
properties, including the retraction and its fibres.

Inside the BS class, the additional nonmaximal condition admits several
equivalent forms.

\begin{corollary}[Nonmaximal compatibility inside the BS class]
\label{cor:consolidated-BS-interior-tests}
For an invertible BS-QMC $\rho_{ABC}$, we denote its canonical shadow
by $\eta^{\rm BS}$ and define $W_{AB}$ through the polar decomposition
\begin{equation}
 \rho_{AB}^{1/2}\rho_B^{-1/2}
 =d_B^{1/2}W_{AB}(\eta_{AB}^{\rm BS})^{1/2}.
 \label{eq:consolidated-shadow-polar-unitary}
\end{equation}
For every fixed $0\leq t<1$, the
following are equivalent:
\begin{enumerate}
\item $\rho_{ABC}$ is a $t$-QMC;
\item $\rho_{ABC}$ is a QMC;
\item for every $r\in\mathbb R$,
\begin{equation}
 [\rho_B^{ir}\eta_{AB}^{\rm BS}\rho_B^{-ir},
   \eta_{BC}^{\rm BS}]=0;
 \label{eq:consolidated-BS-interior-tests}
\end{equation}
\item $W_{AB}\eta_{BC}^{\rm BS}W_{AB}^*
       =\eta_{BC}^{\rm BS}$;
\item
\begin{equation}
 d_B^{-1}\rho_{AB}^{-1/2}\rho_{ABC}\rho_{AB}^{-1/2}
 =\eta_{BC}^{\rm BS}.
 \label{eq:consolidated-polar-test}
\end{equation}
\end{enumerate}
The corresponding structural and recovery descriptions appear in
\eqref{rw:eq:qmc-block-structure} and
\eqref{rw:eq:qmc-equivalent-conditions}.
\end{corollary}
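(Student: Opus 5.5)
The plan is to show that, since $\rho_{ABC}$ is assumed to be a BS-QMC, each of the five conditions is equivalent to (2). The links (1)$\Leftrightarrow$(2) and (2)$\Leftrightarrow$(5) come straight from earlier results. The links with (3) and (4) need a short computation on the canonical shadow.

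\textbf{(1)$\Leftrightarrow$(2).} For $0\le t<1$ this is exactly \Cref{cor:consolidated-tQMC-hierarchy}.

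\textbf{Shadow facts.} Because $\rho_{ABC}$ is a BS-QMC, \Cref{prop:consolidated-canonical-shadow} applies. We then have:
\begin{itemize}
\item $\eta:=\eta^{\rm BS}$ is a QMC with $\eta_B=I_B/d_B$;
\item $[\eta_{AB},\eta_{BC}]=0$;
\item $\rho_{ABC}=d_B^2\rho_B^{1/2}\eta_{AB}\eta_{BC}\rho_B^{1/2}$.
\end{itemize}
The marginals satisfy $\rho_{AB}=d_B\rho_B^{1/2}\eta_{AB}\rho_B^{1/2}$ and $\rho_{BC}=d_B\rho_B^{1/2}\eta_{BC}\rho_B^{1/2}$. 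The first of these is consistent with \eqref{eq:consolidated-shadow-polar-unitary}: define the unitary $W_{AB}$ by $\rho_{AB}^{1/2}\rho_B^{-1/2}=d_B^{1/2}W_{AB}\eta_{AB}^{1/2}$. This choice gives $\rho_{AB}^{1/2}=d_B^{1/2}W_{AB}\eta_{AB}^{1/2}\rho_B^{1/2}$.

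\textbf{(2)$\Leftrightarrow$(5).} The QMC condition \eqref{rw:eq:qmc-equivalent-conditions} reads $\rho_{ABC}=\rho_{AB}^{1/2}\rho_B^{-1/2}\rho_{BC}\rho_B^{-1/2}\rho_{AB}^{1/2}$. Substitute $\rho_{BC}=d_B\rho_B^{1/2}\eta_{BC}\rho_B^{1/2}$; the factors $\rho_B^{\mp1/2}$ cancel and the condition becomes
\[
d_B^{-1}\rho_{AB}^{-1/2}\rho_{ABC}\rho_{AB}^{-1/2}=\eta_{BC}.
\]
This is exactly \eqref{eq:consolidated-polar-test}.

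\textbf{(4)$\Leftrightarrow$(5).} Insert the polar formula for $\rho_{AB}^{1/2}$ and the product formula for $\rho_{ABC}$ into the left side of (5). Using the commutation $[\eta_{AB},\eta_{BC}]=0$, the left side becomes $W_{AB}^{*}\eta_{BC}W_{AB}$, after the half-powers of $\eta_{AB}$ cancel. Hence (5) is equivalent to (4). The step to check carefully is whether one ends up with $W^*\eta W$ or $W\eta W^*$. Since $W$ is unitary, $W^{*}\eta_{BC}W=\eta_{BC}$ and $W\eta_{BC}W^{*}=\eta_{BC}$ are the same statement, so either outcome gives (4).

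\textbf{(3)$\Leftrightarrow$(2).} This is the step I expect to be the main obstacle. I will pass through the cocycle criterion of \Cref{cor:consolidated-interior-recovery-dictionary} in the form of Ruskai's log identity
\[
\log\rho_{ABC}+\log\rho_B=\log\rho_{AB}+\log\rho_{BC}.
\]
My first attempt will be to work in the block decomposition \eqref{eq:consolidated-shadow-block} of $\eta$, because there both $\eta_{AB}$ and $\eta_{BC}$ are explicit. In that picture, $\rho$ is a QMC exactly when conjugation by $\rho_B^{ir}$ respects the factorization: the modular flow of $\rho_B$ must map $\eta_{AB}$ into the commutant of $\eta_{BC}$ for every real $r$.

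For the forward direction (2)$\Rightarrow$(3), I will use that a QMC has the structure \eqref{rw:eq:qmc-block-structure} with $\rho_B=\bigoplus_j p_j\rho_{B_j^L}\otimes\rho_{B_j^R}$. The modular flow $\rho_B^{ir}$ then factorizes across the blocks, so it preserves the left/right split. Conjugating $\eta_{AB}$ by it keeps it supported on $A B^L$, and it therefore still commutes with $\eta_{BC}$, which is supported on $B^R C$.

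For the converse (3)$\Rightarrow$(2), I will differentiate the commutators in (3) in $r$ and use an analytic continuation to $r=-i/2$ of the type used in Step 3 of \Cref{thm:consolidated-interior-rigidity}. The aim is to obtain $[\rho_B^{1/2}\eta_{AB}\rho_B^{-1/2},\eta_{BC}]=0$ together with the companion identities needed to recover (5). This analytic-continuation step is the part I expect to need the most care.
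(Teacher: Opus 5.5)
Your arguments for (1)$\Leftrightarrow$(2), (2)$\Leftrightarrow$(5), (4)$\Leftrightarrow$(5) and (2)$\Rightarrow$(3) are correct. They are also more self-contained than the paper, which takes (1)$\Leftrightarrow$(2) from the $t$-QMC hierarchy and imports the whole equivalence of (2)--(5) inside the BS class from \cite[Proposition~4.2]{Bluhm:2025kai}. For (4)$\Leftrightarrow$(5), abbreviate $\eta=\eta^{\rm BS}$. Write the Hermitian operator $\rho_{AB}^{-1/2}$ as $d_B^{-1/2}W_{AB}\eta_{AB}^{-1/2}\rho_B^{-1/2}$ on the left and as $d_B^{-1/2}\rho_B^{-1/2}\eta_{AB}^{-1/2}W_{AB}^*$ on the right. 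This gives exactly $W_{AB}\eta_{BC}W_{AB}^*$, so your hedge between $W^*\eta W$ and $W\eta W^*$ is unnecessary.

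The gap is (3)$\Rightarrow$(2), which you leave as a plan. The continuation step itself is fine: independence of the exponentials $r\mapsto(\lambda_a/\lambda_b)^{ir}$ upgrades (3) to $[\rho_B^{z}\eta_{AB}\rho_B^{-z},\eta_{BC}]=0$ for all $z\in\mathbb C$. However, the plan stops where the work begins. You do not say what the ``companion identities'' are, nor how any commutator relation with orbit elements would reach (4) or (5).

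The difficulty is that (4) and (5) are governed by $W_{AB}=(\rho_B^{1/2}\eta_{AB}\rho_B^{1/2})^{-1/2}\rho_B^{1/2}\eta_{AB}^{1/2}$. This is not an algebraic expression in the orbit elements, and its factors need not commute with $\eta_{BC}$ even for a QMC. For instance, take $\rho_B=\rho_{B^L}\otimes\rho_{B^R}$ and $\eta_{BC}=\pi_{B^L}\otimes\eta_{B^RC}$. The factor $\rho_{B^R}$ in $\rho_B^{1/2}\eta_{AB}\rho_B^{1/2}$ does not commute with $\eta_{BC}$ in general, and it cancels only inside $W_{AB}$.

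What is missing is a structural step. Let $\mathcal N$ be the unital $*$-algebra on $\mathcal H_{AB}$ generated by $\{\rho_B^{ir}\eta_{AB}\rho_B^{-ir}\}_{r\in\mathbb R}$. By (3), $\mathcal N\otimes I_C$ commutes with $\eta_{BC}$, and $\mathcal N$ is invariant under $\operatorname{Ad}(I_A\otimes\rho_B)^{ir}$. The finite-dimensional Takesaki argument then applies: a continuous one-parameter automorphism group fixes the minimal central projections, and automorphisms of full matrix blocks are inner. It yields $\mathcal H_{AB}=\bigoplus_k\mathcal K_k\otimes\mathcal K_k'$ with $\mathcal N=\bigoplus_k\mathcal B(\mathcal K_k)\otimes I$ and $I_A\otimes\rho_B=\bigoplus_k D_k\otimes D_k'$.

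Now write $\eta_{AB}=\bigoplus_k N_k\otimes I$. The $D_k'$ factors cancel, so $W_{AB}=\bigoplus_k (D_k^{1/2}N_kD_k^{1/2})^{-1/2}D_k^{1/2}N_k^{1/2}\otimes I$ lies in $\mathcal N$. Hence $[W_{AB},\eta_{BC}]=0$, which is (4). Your (4)$\Rightarrow$(5)$\Rightarrow$(2) then closes the cycle, and no analytic continuation is needed. Alternatively, cite \cite[Proposition~4.2]{Bluhm:2025kai} for this implication, as the paper does.
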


This is a direct consequence of
\Cref{cor:consolidated-tQMC-hierarchy}: for $t<1$, being a $t$-QMC
is the same as being a QMC.  Within the BS class, the
remaining equivalences are exactly
\cite[Proposition~4.2]{Bluhm:2025kai}.

\subsection{The restricted inverse BS correspondence}

The inverse correspondence requires precisely the compatibility just
identified.
For the explicit formulas below we choose $\tau_C=\pi_C$.  The
invertible zero class is independent of this choice by
\Cref{cor:consolidated-tQMC-hierarchy}; moreover, replacing
$\pi_C$ by the unnormalized $I_C$ multiplies both sides of the
likelihood equation by the same scalar and therefore does not change
it.
We take a QMC $\eta_{ABC}>0$ with $\eta_B=I_B/d_B$ and a state
$X_B>0$, and we define
\begin{equation}
 \omega_{ABC}(\eta,X)
 :=d_BX_B^{1/2}\eta_{ABC}X_B^{1/2}
 =d_B^2X_B^{1/2}\eta_{AB}\eta_{BC}X_B^{1/2}.
 \label{eq:consolidated-inverse-BS}
\end{equation}
For positive $R,S$, we put, using the principal square root,
\begin{equation}
 F_t(R,S):=(R^tS^{-1})^{1/2}R^{(1-t)/2}.
\end{equation}

\begin{proposition}[Restricted inverse correspondence]
\label{prop:consolidated-restricted-inverse}
For $t\in[0,1]$, the assignments
\begin{equation}
 \rho\longmapsto(\eta^{\rm BS}(\rho),\rho_B),
 \qquad
 (\eta,X)\longmapsto\omega(\eta,X)
\end{equation}
are inverse bijections between invertible $t$-QMCs and the pairs
$(\eta,X)$ satisfying
\begin{multline}
 F_t\!\left(X_B^{1/2}\eta_{ABC}X_B^{1/2},
            X_B^{1/2}\eta_{AB}X_B^{1/2}\otimes I_C\right)
 =I_A\otimes
 F_t\!\left(X_B^{1/2}\eta_{BC}X_B^{1/2},
            X_B^{1/2}\eta_BX_B^{1/2}\otimes I_C\right).
 \label{eq:consolidated-restricted-constraint}
\end{multline}
Equivalently, $\omega(\eta,X)$ obeys the one-sided likelihood
condition \eqref{eq:consolidated-os-recovery} with
$\tau_C=\pi_C$.  At $t=1$, every
pair is allowed.  For $t<1$, the allowed pairs are exactly those
for which $\omega(\eta,X)$ is a QMC; hence the inverse
fibres are constant on $[0,1)$ and jump at $t=1$.
\end{proposition}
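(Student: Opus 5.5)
The plan has three parts: show the two assignments are mutually inverse bijections between invertible BS-QMCs and normalized-shadow pairs; identify which pairs give $t$-QMCs via the one-sided likelihood condition; then read off the dichotomy in $t$ from \Cref{cor:consolidated-tQMC-hierarchy}.

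\textbf{Step 1: inverse bijection at the BS level.} Take an invertible $t$-QMC $\rho$. By \Cref{prop:consolidated-canonical-shadow}, the shadow $\eta^{\rm BS}(\rho)$ is a QMC with $\eta_B=I_B/d_B$. Set $X=\rho_B$. Then
\[
\omega(\eta^{\rm BS}(\rho),\rho_B)=d_B\rho_B^{1/2}\,\tfrac1{d_B}\rho_B^{-1/2}\rho_{ABC}\rho_B^{-1/2}\,\rho_B^{1/2}=\rho_{ABC}.
\]
Conversely, take a QMC $\eta>0$ with $\eta_B=I_B/d_B$ and a state $X_B>0$, and let $\omega=\omega(\eta,X)$. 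Taking partial traces over $AC$ gives
\[
\omega_B=d_BX_B^{1/2}\eta_BX_B^{1/2}=X_B,
\]
and hence $\eta^{\rm BS}(\omega)=\eta$. The product form in \eqref{eq:consolidated-inverse-BS} follows from the QMC block structure \eqref{rw:eq:qmc-block-structure} with $\eta_B$ maximally mixed, i.e.\ from item~2 of \Cref{prop:consolidated-canonical-shadow}. By \eqref{rw:eq:BS-QMC-QMC-correspondence}, $\omega$ is a BS-QMC. So at the BS level the two maps are inverse bijections between invertible BS-QMCs and all pairs $(\eta,X)$.

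\textbf{Step 2: the $t$-QMC condition as the constraint.} By \Cref{cor:consolidated-tQMC-hierarchy} together with \eqref{eq:consolidated-os-recovery} for $\tau_C=\pi_C$, the state $\omega$ is a $t$-QMC if and only if it satisfies the one-sided likelihood equation. The scalar from $\pi_C$ versus $I_C$ is harmless, as noted before the statement. The marginals of $\omega$ are
\[
\omega_{AB}=X_B^{1/2}\eta_{AB}X_B^{1/2}\cdot d_B,\qquad \omega_{BC}=d_BX_B^{1/2}\eta_{BC}X_B^{1/2},\qquad \omega_B=d_BX_B^{1/2}\eta_BX_B^{1/2}.
\]
The constant $d_B$ appears homogeneously in both arguments, with weight $t-1$ on $R$ and $-t$ relative to $S$, so I expect it to cancel in the likelihood ratio.

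The remaining task is to rewrite $\mathcal L_t(R\mid S)$ through $F_t$. I would use the polar-type identity
\[
S^{1/2}(S^{-1/2}R^tS^{-1/2})^{-1/2}S^{-1/2}=(S\,R^{-t})^{1/2},
\]
where the principal square root exists because $SR^{-t}$ is similar to a positive operator. This gives
\[
\mathcal L_t(R\mid S)=R^{(t-1)/2}(SR^{-t})^{1/2}=\bigl[(R^tS^{-1})^{1/2}R^{(1-t)/2}\bigr]^{-1}=F_t(R,S)^{-1}.
\]
The second equality uses $(R^tS^{-1})^{-1/2}=(SR^{-t})^{1/2}$. Inverting both sides of the likelihood equation then turns it into \eqref{eq:consolidated-restricted-constraint}, with the factor $I_A\otimes(\cdot)$ commuting with inversion.

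I expect this algebraic identity, including the care needed with principal roots of non-Hermitian but positively diagonalizable operators, to be the main technical point. I would verify it by writing $SR^{-t}=S^{1/2}(S^{1/2}R^{-t}S^{1/2})S^{-1/2}$.

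\textbf{Step 3: the dichotomy in $t$.} At $t=1$, the $t$-QMCs are exactly the BS-QMCs by \Cref{cor:consolidated-tQMC-hierarchy}, so Step 1 shows every pair is allowed. For $t<1$, the same corollary identifies the $t$-QMCs with the QMCs. Hence the allowed set is $\{(\eta,X):\omega(\eta,X)\text{ is a QMC}\}$, which does not depend on $t\in[0,1)$. It jumps at $t=1$ whenever some BS-QMC fails to be a QMC; the Werner example cited in the strictness remark supplies one.
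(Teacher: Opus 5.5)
Your proposal is correct and follows essentially the same route as the paper: the unrestricted BS correspondence (which you verify by direct computation and \eqref{rw:eq:BS-QMC-QMC-correspondence} rather than citing \cite{Bluhm:2025kai}), degree-zero homogeneity to remove the factor $d_B$ and the $\pi_C$ normalization, and \Cref{cor:consolidated-tQMC-hierarchy} for the dichotomy between $t<1$ and $t=1$. Your explicit identity $F_t(R,S)=\mathcal L_t(R\mid S)^{-1}$, obtained from $S^{1/2}(S^{-1/2}R^tS^{-1/2})^{-1/2}S^{-1/2}=(SR^{-t})^{1/2}$, usefully spells out a step the paper leaves implicit, namely the passage between the likelihood equation \eqref{eq:consolidated-os-recovery} and the constraint \eqref{eq:consolidated-restricted-constraint}.
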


\begin{proof}
The unrestricted assignments are the mutually inverse BS
correspondence of
\cite[Theorem~3.3 and Remark~3.6]{Bluhm:2025kai}; in particular,
$\omega_B=X_B$ and the canonical shadow of $\omega$ is $\eta$.
If $\mathcal C_X(Z):=X_B^{1/2}ZX_B^{1/2}$, then
\begin{equation}
 \begin{aligned}
 \omega_{ABC}&=d_B\mathcal C_X(\eta_{ABC}),&
 \omega_{AB}\otimes I_C
   &=d_B\bigl(\mathcal C_X(\eta_{AB})\otimes I_C\bigr),\\
 \omega_{BC}&=d_B\mathcal C_X(\eta_{BC}),&
 \omega_B\otimes I_C
   &=d_B\bigl(\mathcal C_X(\eta_B)\otimes I_C\bigr),
 \end{aligned}
\end{equation}
where $d_B\mathcal C_X(\eta_B)=X_B$.
The homogeneity $F_t(cR,cS)=F_t(R,S)$ therefore turns the
one-sided $t$-QMC equation exactly into
\eqref{eq:consolidated-restricted-constraint}.  The likelihood form
is equivalent by \eqref{eq:consolidated-os-recovery}.  At $t=1$,
the unrestricted BS correspondence applies.  For every $t<1$,
\Cref{cor:consolidated-tQMC-hierarchy} identifies the
additional equation with QMC compatibility, independently
of $t$.
\end{proof}

Every invertible $t$-QMC also obeys the BS completely positive
reconstruction defined in \eqref{eq:consolidated-BS-CP-map}:
\begin{equation}
 (\Phi_{B\to AB}^{\rm BS}\otimes\id_C)(\rho_{BC})
 =\rho_{ABC}.
 \label{eq:consolidated-BS-CP-recovery}
\end{equation}
This is \cite[Corollary~3.9]{Bluhm:2025kai}.
This map need not be trace preserving, although
\begin{equation}
 \operatorname{Tr}\Phi_{B\to AB}^{\rm BS}(Z)
 \leq d_A\lVert\rho_B^{-1}\rVert_\infty\operatorname{Tr} Z,\qquad Z\geq0.
\end{equation}
The estimate appears in the discussion immediately preceding
\cite[Corollary~3.9]{Bluhm:2025kai}.
For $t<1$, the CP identity is necessary but not sufficient: it
characterizes the larger BS-QMC class, and
\eqref{eq:consolidated-restricted-constraint} is the missing
condition.

\subsection{Fractional shadows and explicit reconstruction}

The transformation below conjugates the state by an inverse
fractional power of its $B$-marginal and then renormalizes it.  In
this sense it progressively \emph{whitens}, or flattens, the
$B$-marginal: at $r=0$ it leaves the state unchanged, while at $r=1$
the transformed marginal is $I_B/d_B$.  For every $r<1$, however,
the transformed marginal is proportional to $\rho_B^{1-r}$, so
$\rho_B$ and hence the original state can still be recovered.  This
invertibility is lost only at the fully whitened endpoint $r=1$.

\begin{proposition}[Fractional shadows preserve and parametrize QMCs]
\label{prop:consolidated-fractional-shadow}
For $0\leq r<1$, we define
\begin{equation}
 \eta_{ABC}(r)
 :=\lambda(r)^{-1}
 \rho_B^{-r/2}\rho_{ABC}\rho_B^{-r/2},
 \qquad
 \lambda(r):=\operatorname{Tr}\rho_B^{1-r}.
 \label{eq:consolidated-fractional-shadow}
\end{equation}
Then
\begin{equation}
 \eta(r)\text{ is a QMC}
 \quad\Longleftrightarrow\quad
 \rho\text{ is a QMC}.
 \label{eq:consolidated-fractional-shadow-equivalence}
\end{equation}
Moreover, $\rho\mapsto\eta(r)$ is a bijection of the invertible QMC
set onto itself.  Given $\eta$, its inverse is
\begin{equation}
 q=\frac1{1-r},\qquad
 X_B=\frac{\eta_B^q}{\operatorname{Tr}\eta_B^q},\qquad
 \lambda=(\operatorname{Tr}\eta_B^q)^{-1/q},\qquad
 \rho_{ABC}=\lambda X_B^{r/2}\eta_{ABC}X_B^{r/2}.
 \label{eq:consolidated-fractional-shadow-inverse}
\end{equation}
At $r=1$, in contrast, the shadow forgets $\rho_B$, fixes its
middle marginal to $I_B/d_B$, and parametrizes the nontrivial BS
fibre.
\end{proposition}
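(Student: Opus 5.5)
The plan is to reduce everything to the Hayden--Jozsa--Petz--Winter block form \eqref{rw:eq:qmc-block-structure}, because conjugation by a function of $\rho_B$ acts block-diagonally there. First, the basic identities. Since $\rho_B^{-r/2}$ acts only on $B$, partial traces over $A$ and $C$ commute with the congruence. Hence
\[
\eta_B(r)=\lambda(r)^{-1}\rho_B^{1-r},\qquad
\eta_{AB}(r)=\lambda(r)^{-1}\rho_B^{-r/2}\rho_{AB}\rho_B^{-r/2},
\]
and similarly for $BC$. In particular $\operatorname{Tr}\eta(r)=1$, and $\eta(r)>0$ whenever $\rho>0$.

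For the implication ``$\rho$ QMC $\Rightarrow$ $\eta(r)$ QMC'', write $\rho_{ABC}=\bigoplus_j p_j\rho_{AB_j^L}\otimes\rho_{B_j^RC}$ after the unitary $U_B$. Then $\rho_B=\bigoplus_j p_j\rho_{B_j^L}\otimes\rho_{B_j^R}$, and any real power factorizes blockwise:
\[
\rho_B^{-r/2}=\bigoplus_j p_j^{-r/2}\rho_{B_j^L}^{-r/2}\otimes\rho_{B_j^R}^{-r/2}.
\]
Conjugating gives
\[
\eta(r)\propto\bigoplus_j p_j^{1-r}\bigl(\rho_{B_j^L}^{-r/2}\rho_{AB_j^L}\rho_{B_j^L}^{-r/2}\bigr)\otimes\bigl(\rho_{B_j^R}^{-r/2}\rho_{B_j^RC}\rho_{B_j^R}^{-r/2}\bigr),
\]
which is again of block form after normalizing each factor. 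By \eqref{rw:eq:qmc-block-structure} it is therefore a QMC. The same computation with exponent $+r/2$ applied to $X_B=\eta_B^q/\operatorname{Tr}\eta_B^q$ shows that the map in \eqref{eq:consolidated-fractional-shadow-inverse} also sends QMCs to QMCs.

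Next I verify the inverse formula algebraically, for arbitrary invertible states, without assuming the Markov property. Given $\eta=\eta(r)$, we have $\eta_B=\lambda(r)^{-1}\rho_B^{1-r}$, so $\eta_B^q=\lambda(r)^{-q}\rho_B$. Hence $\operatorname{Tr}\eta_B^q=\lambda(r)^{-q}$, which gives $X_B=\rho_B$ and $\lambda=(\operatorname{Tr}\eta_B^q)^{-1/q}=\lambda(r)$. Then $\lambda X_B^{r/2}\eta X_B^{r/2}=\rho$. Conversely, starting from any invertible $\eta$ and defining $\rho$ by \eqref{eq:consolidated-fractional-shadow-inverse}, the marginal is
\[
\rho_B=\lambda X_B^{r/2}\eta_BX_B^{r/2}=\lambda\,\eta_B^{1+rq}/\operatorname{Tr}(\eta_B^q)^{\dots}.
\]
Using $1+rq=q$ and the choice of $\lambda$, this yields $\rho_B=X_B$ with trace one, so $\rho$ is a state and its shadow $\eta(r)$ equals $\eta$. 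Thus $\rho\mapsto\eta(r)$ is a bijection of the invertible states onto themselves, with the stated inverse.

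The equivalence \eqref{eq:consolidated-fractional-shadow-equivalence} now follows. One direction is the block computation above. For the other, if $\eta(r)$ is a QMC, then $\rho$ is the image of $\eta(r)$ under the inverse map, which was also shown to preserve QMCs. Consequently the bijection restricts to a bijection of the invertible QMC set onto itself.

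For the statement at $r=1$, note that $\eta_B=I_B/d_B$ regardless of $\rho$, so $\rho_B$ cannot be recovered, and $q=1/(1-r)$ blows up. The fibre description and the BS-QMC correspondence are then exactly \eqref{rw:eq:BS-QMC-QMC-correspondence}, i.e.\ \cite[Theorem~3.3]{Bluhm:2025kai}, which I would cite rather than reprove.

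The main obstacle is justifying that fractional powers of $\rho_B$ respect the decomposition $B=\bigoplus_j B_j^L\otimes B_j^R$. This requires that the marginal $\rho_B$ itself be block diagonal and factorized in that decomposition, which follows from taking the partial trace of the block form. The bookkeeping in the inverse-direction marginal computation also needs care, but it is routine.
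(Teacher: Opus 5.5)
Your proposal is correct and takes essentially the paper's route: the Hayden--Jozsa--Petz--Winter block form, the fact that real powers of $\rho_B$ (equivalently of $\eta_B$) factorize block by block so that the congruence preserves the Markov decomposition in both directions, and direct substitution to check the inverse formula. The only loose end is cosmetic: the elided exponent in your marginal computation is $r$, so $\rho_B=\lambda\,\eta_B^{q}/(\operatorname{Tr}\eta_B^q)^{r}=\eta_B^q/\operatorname{Tr}\eta_B^q=X_B$, because $\lambda=(\operatorname{Tr}\eta_B^q)^{-(1-r)}$.
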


\begin{proof}
We first suppose that $\eta(r)$ is a QMC and write it in the block form
recalled in \eqref{rw:eq:qmc-block-structure}.  Its middle marginal
obeys
\begin{equation}
 \eta_B(r)=\lambda(r)^{-1}\rho_B^{1-r},\qquad
 \rho_B=(\lambda(r)\eta_B(r))^{1/(1-r)}.
\end{equation}
Functional calculus preserves every direct-sum tensor block, so
$\rho_B$ has the same left--right block factorization.  Congruencing
the Markov blocks of $\eta(r)$ by $\rho_B^{r/2}$, and absorbing
scalar normalizations into the block weights, gives a Markov
decomposition of $\rho$.  The converse follows by the same argument
with the congruence by $\rho_B^{-r/2}$.

For bijectivity, solving the marginal relation gives exactly
\eqref{eq:consolidated-fractional-shadow-inverse}; substitution shows
that the two assignments are inverse.  The same block argument proves
that both directions preserve the invertible QMC set.  At $r=1$,
$\eta_B(1)=I_B/d_B$, so the marginal can no longer be recovered.
\end{proof}

\begin{corollary}[Explicit reconstruction from a fractional shadow]
\label{cor:consolidated-fractional-shadow-reconstruction}
For $0\leq r<1$, we put
\begin{equation}
 X_{AB}^{(r)}:=
 \rho_B^{-r/2}\rho_{AB}\rho_B^{-r/2}.
\end{equation}
Then $\eta_{ABC}(r)$ is a QMC if and only if
\begin{equation}
 \rho_{ABC}
 =\bigl(\Psi_{B\to AB}^{(r)}\otimes\id_C\bigr)(\rho_{BC}),
 \label{eq:consolidated-fractional-shadow-reconstruction}
\end{equation}
where, with identities on $A$ suppressed,
\begin{align}
 \Psi_{B\to AB}^{(r)}(Z)
 :={}&\rho_B^{r/2}\bigl(X_{AB}^{(r)}\bigr)^{1/2}
 \rho_B^{-1/2}Z\rho_B^{-1/2}
 \bigl(X_{AB}^{(r)}\bigr)^{1/2}\rho_B^{r/2}.
 \label{eq:consolidated-fractional-shadow-map}
\end{align}
In particular, $\Psi_{B\to AB}^{(r)}$ is a completely positive
reconstruction map determined by $\rho_{AB}$ and $\rho_B$.
\end{corollary}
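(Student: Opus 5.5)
The plan is to reduce everything to Proposition~\ref{prop:consolidated-fractional-shadow} and the Petz recovery formula for QMCs in \eqref{rw:eq:qmc-equivalent-conditions}. By \eqref{eq:consolidated-fractional-shadow-equivalence}, $\eta(r)$ is a QMC exactly when $\rho$ is a QMC. By \eqref{rw:eq:qmc-equivalent-conditions}, the latter is equivalent to
\[
\rho_{ABC}=\rho_{AB}^{1/2}\rho_B^{-1/2}\rho_{BC}\rho_B^{-1/2}\rho_{AB}^{1/2}.
\]
It therefore suffices to show that $\Psi^{(r)}_{B\to AB}\otimes\id_C$, applied to $\rho_{BC}$, coincides with this Petz reconstruction. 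I will aim to prove the stronger operator identity $\Psi^{(r)}=\mathcal P^{(r)}\circ\mathcal U$ with a partial-isometry twist, or, more concretely, check the fixed-point equivalence directly.

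\textbf{Step 1 (polar factor).} Write
\[
\rho_B^{r/2}\bigl(X^{(r)}_{AB}\bigr)^{1/2}=\rho_{AB}^{1/2}\,W_r ,
\]
where $W_r$ is a unitary on $AB$. This holds because
\[
\bigl(\rho_B^{r/2}(X^{(r)})^{1/2}\bigr)\bigl(\rho_B^{r/2}(X^{(r)})^{1/2}\bigr)^*=\rho_B^{r/2}X^{(r)}\rho_B^{r/2}=\rho_{AB},
\]
so the left side is a square root factor of $\rho_{AB}$. This mirrors the polar decomposition \eqref{eq:consolidated-shadow-polar-unitary}. Consequently
\[
\Psi^{(r)}(Z)=\rho_{AB}^{1/2}W_r\rho_B^{-1/2}Z\rho_B^{-1/2}W_r^*\rho_{AB}^{1/2},
\]
which is manifestly completely positive, as a congruence. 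It also differs from the Petz map only by the inner unitary $W_r$.

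\textbf{Step 2 ($\Leftarrow$).} Suppose \eqref{eq:consolidated-fractional-shadow-reconstruction} holds. Then
\[
\eta_{ABC}(r)=\lambda^{-1}\,(X^{(r)}_{AB})^{1/2}\rho_B^{-1/2}\rho_{BC}\rho_B^{-1/2}(X^{(r)}_{AB})^{1/2}.
\]
Since $\eta_{AB}(r)=\lambda^{-1}X^{(r)}$ and $\eta_B(r)=\lambda^{-1}\rho_B^{1-r}$, this becomes
\[
\eta_{ABC}(r)=\eta_{AB}(r)^{1/2}\,\eta_B(r)^{-1/2}\,\eta_{BC}(r)\,\eta_B(r)^{-1/2}\,\eta_{AB}(r)^{1/2}.
\]
To see this, note that $\rho_B^{-r/2}\rho_{BC}\rho_B^{-r/2}=\lambda\,\eta_{BC}(r)$ and that $\rho_B^{-1/2}=\rho_B^{-(1-r)/2}\rho_B^{-r/2}$, so all powers of $\rho_B$ commute and combine correctly. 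The displayed identity is the Petz fixed point for $\eta(r)$, so $\eta(r)$ is a QMC by \eqref{rw:eq:qmc-equivalent-conditions}.

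\textbf{Step 3 ($\Rightarrow$).} Run the same computation backwards. If $\eta(r)$ is a QMC, the Petz formula for $\eta(r)$ holds. Congruencing it by $\rho_B^{r/2}$ and undoing the scalar $\lambda$ gives exactly $\rho_{ABC}=\Psi^{(r)}(\rho_{BC})$. Every operation in Step 2 is an invertible congruence, so the equivalence is literally reversible. In this direction Proposition~\ref{prop:consolidated-fractional-shadow} is not even needed; it serves only to interpret the condition as ``$\rho$ is a QMC''.

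\textbf{Main obstacle.} The step needing care is the bookkeeping in Step 2, namely verifying that the congruence factors rearrange exactly. The identity
\[
\eta_B(r)^{-1/2}\,\eta_{BC}(r)\,\eta_B(r)^{-1/2}=\rho_B^{-1/2}\rho_{BC}\rho_B^{-1/2}
\]
must hold, and the normalizations $\lambda^{\pm1/2}$ must cancel, with $\lambda$ appearing once from $\eta_{AB}$ and once from $\eta_B^{-1}$. One must also check that identity embeddings on $A$ and $C$ are placed consistently, so that $\rho_B$ commutes with nothing it shouldn't. All remaining claims, such as complete positivity and dependence only on $\rho_{AB}$ and $\rho_B$, are read off from the explicit formula.
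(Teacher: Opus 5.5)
Your Steps 2--3 are exactly the paper's proof: you substitute $\eta_{AB}(r)=\lambda^{-1}X_{AB}^{(r)}$, $\eta_{BC}(r)=\lambda^{-1}\rho_B^{-r/2}\rho_{BC}\rho_B^{-r/2}$ and $\eta_B(r)=\lambda^{-1}\rho_B^{1-r}$ into the Petz fixed-point form of the QMC condition for $\eta(r)$, then undo the invertible congruence by $\lambda^{1/2}\rho_B^{r/2}$, which gives both directions; as you note, Proposition~\ref{prop:consolidated-fractional-shadow} is not actually needed. Your opening reduction should be dropped, because for $r>0$ the map $\Psi^{(r)}_{B\to AB}$ differs from the Petz map by the unitary twist $W_r$ you found in Step~1, so $(\Psi^{(r)}_{B\to AB}\otimes\id_C)(\rho_{BC})$ generally does not equal the Petz reconstruction off the QMC set; the direct fixed-point verification you actually carry out is the correct argument.
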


\begin{proof}
The Petz characterization of a QMC
\cite[Eqs.~(10)--(11)]{hayden-2004} gives
\begin{equation}
 \eta_{ABC}(r)=\eta_{AB}(r)^{1/2}\eta_B(r)^{-1/2}
 \eta_{BC}(r)\eta_B(r)^{-1/2}\eta_{AB}(r)^{1/2}.
\end{equation}
We substitute
\begin{equation}
 \eta_{AB}=\lambda^{-1}X_{AB}^{(r)},\quad
 \eta_{BC}=\lambda^{-1}\rho_B^{-r/2}\rho_{BC}\rho_B^{-r/2},
 \quad
 \eta_B=\lambda^{-1}\rho_B^{1-r},
\end{equation}
and conjugate by $\lambda\rho_B^{r/2}$.  The adjacent powers of
$\rho_B$ reduce to $\rho_B^{-1/2}$, yielding
\eqref{eq:consolidated-fractional-shadow-reconstruction}.  Reversing
the calculation proves the converse.
\end{proof}

\begin{remark}[Non-invertible QMCs]
For $0\leq r<1$, the proposition and corollary above extend to
arbitrary QMCs after restricting $B$ to
$P_B:=\operatorname{supp}\rho_B$ and interpreting negative powers as
support inverses.  Since
$\operatorname{supp}\eta_B(r)=P_B$, the inverse formula remains in
the same support stratum, and the block-decomposition argument is
unchanged.  This observation does not extend the endpoint BS-fibre
statement at $r=1$.
\end{remark}

The exact tripartite structure is now complete.  We finish the
recovery analysis by isolating the quantitative statements that are
specific to the strict interior of the interpolation.

\section{Strict-interior defect and reconstruction estimates}
\label{sec:consolidated-quantitative-reconstruction}

By \Cref{cor:consolidated-interior-recovery-dictionary}, for an
invertible pair and every $0\leq t<1$, equality is exactly Petz
sufficiency.  In
the partial-trace specialization, \Cref{cor:consolidated-tQMC-hierarchy}
therefore says that every invertible $t$-QMC is a QMC.  Consequently,
below the BS endpoint the interpolation produces no new fixed-point
algebra, state-space block decomposition, complement, or Markov
structure.  These are the standard Petz and QMC structures recalled
in \Cref{rw:subsec:qmc-bsqmc} and described for general channels in
\cite[Theorems~3.18--3.19]{hiai2017different}.  Applying the same
result to a complementary channel likewise gives its Petz equality
class, so there is no additional nonmaximal complementary-channel
structure.  The genuinely larger BS endpoint is treated in
\Cref{sec:consolidated-tqmc}, and its shadow descriptions come from
\cite{Bluhm:2025kai}.

We therefore do not repeat the standard fixed-point decomposition,
the QMC block form, their set-theoretic complements, or the known
$t=0$ Petz-recovery remainders.  Instead, throughout this section we
take $0<t<1$ and retain only estimates whose nonzero values depend on
$t$.  Although the exact zero set of each strict-interior DPI defect
is the Petz/QMC set, that fact alone does not determine the
quantitative size of the defect.

\subsection{General-channel comparison estimates}

We take $\rho,\sigma>0$ and a channel
$\Phi:\mathcal B(\mathcal H)\to\mathcal B(\mathcal K)$
with invertible output states, as in \Cref{sec:consolidated-dpi}.  The
first estimate controls the asymmetric BS residual from
\eqref{eq:consolidated-asymmetric-map}.  Its right-hand side is useful
away from the BS equality class, but it is not a distance to the
smaller Petz equality class.

\begin{proposition}[Cubic estimate for the asymmetric BS residual]
\label{prop:consolidated-asymmetric-cubic}
For $0<t<1$, we put
$H_t:=\lVert\Gamma_t\rVert_\infty^2+
\lVert\Gamma_t^\Phi\rVert_\infty^2$.
Then the $D^t$ data-processing defect satisfies
\begin{equation}
 \begin{aligned}
 \delta_t^\Phi
 &:=D^t(\rho\Vert\sigma)
   -D^t\!\left(\Phi(\rho)\Vert\Phi(\sigma)\right)
\geq
 \frac{\left\lVert
 [\rho-\mathcal R_{\sigma,\Phi}^{\rm asym}(\Phi(\rho))]
 \sigma^{-1/2}\right\rVert_2^3}
 {27H_t},
 \end{aligned}
 \label{eq:consolidated-asymmetric-cubic}
\end{equation}
where $\mathcal R_{\sigma,\Phi}^{\rm asym}$ is the asymmetric recovery
map introduced in \Cref{prop:consolidated-asymmetric-recovery} and
defined explicitly in \eqref{eq:consolidated-asymmetric-map}.
\end{proposition}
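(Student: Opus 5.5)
The plan is to reduce the asymmetric BS residual to a large-$s$ asymptotic of the resolvent errors $w_{s,t}$ in \Cref{thm:consolidated-resolvent-dpi}, and then integrate the lower bound over a tail $s\geq S$ with a well-chosen cutoff.

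\emph{Step 1: the residual as a first-moment defect.} Abbreviate $A:=\Gamma_t$, $B:=\Gamma_t^\Phi$, $W:=\mathsf V_{\sigma,\Phi}$, $\xi:=\sigma^{1/2}$ and $\eta:=[\Phi(\sigma)]^{1/2}$. Then $\xi=W\eta$ and $\lVert\xi\rVert_2=\lVert\eta\rVert_2=1$ by \Cref{lem:consolidated-transformer}, and $W$ is a contraction. By \eqref{rw:eq:gamma-normalization} we have $A\xi=\rho\sigma^{-1/2}$. The same computation as in the proof of \Cref{prop:consolidated-asymmetric-recovery} gives $WB\eta=\Phi^*(\Phi(\rho)[\Phi(\sigma)]^{-1})\sigma^{1/2}$. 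Hence
\[
 r:=A\xi-WB\eta=\bigl[\rho-\mathcal R_{\sigma,\Phi}^{\rm asym}(\Phi(\rho))\bigr]\sigma^{-1/2},
\]
which is exactly the vector whose norm appears on the right-hand side of \eqref{eq:consolidated-asymmetric-cubic}.

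\emph{Step 2: expansion of the resolvent error.} For the logarithmic generator $f(x)=x\log x$ we have $b=0$. The representing measure in \eqref{eq:consolidated-f-representation} is Lebesgue measure, since $\int_0^\infty\bigl(\tfrac{x}{1+s}-\tfrac{x}{x+s}\bigr)\,\mathrm ds=x\log x$. \Cref{thm:consolidated-resolvent-dpi} therefore gives
\[
 \delta_t^\Phi\geq\int_0^\infty s^2\lVert w_{s,t}\rVert_2^2\,\mathrm ds .
\]
Next expand each resolvent to second order using
\[
 s(A+s)^{-1}=I-\frac{A}{s}+\frac{A^2(A+s)^{-1}}{s}.
\]
Applying this to both resolvents in $w_{s,t}$ and using $\xi=W\eta$, the zeroth-order terms cancel. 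This leaves
\[
 s^2w_{s,t}=-r+e_s,\qquad e_s:=A^2(A+s)^{-1}\xi-WB^2(B+s)^{-1}\eta .
\]
Since $\lVert A^2(A+s)^{-1}\rVert_\infty\leq\lVert A\rVert_\infty^2/s$, the analogous bound holds for $B$, $W$ is a contraction and $\xi,\eta$ are unit vectors, we get $\lVert e_s\rVert_2\leq H_t/s$. Consequently
\[
 s^2\lVert w_{s,t}\rVert_2\geq\bigl(\lVert r\rVert_2-H_t/s\bigr)_+ .
\]

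\emph{Step 3: optimization.} Restricting the integral to $s\geq S$ gives
\[
 \delta_t^\Phi\geq\int_S^\infty s^{-2}\bigl(\lVert r\rVert_2-H_t/s\bigr)_+^2\,\mathrm ds .
\]
Substitute $u=H_t/s$, so that $s^{-2}\,\mathrm ds=-H_t^{-1}\,\mathrm du$, and choose $S=H_t/\lVert r\rVert_2$ (the case $r=0$ is trivial). The bound becomes
\[
 \delta_t^\Phi\geq H_t^{-1}\int_0^{\lVert r\rVert_2}\bigl(\lVert r\rVert_2-u\bigr)^2\,\mathrm du=\frac{\lVert r\rVert_2^3}{3H_t}\geq\frac{\lVert r\rVert_2^3}{27H_t}.
\]
Any cruder choice of cutoff, such as $S=3H_t/(2\lVert r\rVert_2)$, still yields the stated constant $1/27$.

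\emph{Main obstacle.} The only delicate point is Step 2: verifying the exact cancellation $s(\xi-W\eta)=0$ and that the first-order term is precisely $r$ with the correct sign. This relies on the identification of $WB\eta$ via the adjoint-channel calculation, together with the uniform remainder bound, which uses only $\lVert W\rVert_{2\to2}\leq1$. The restriction $0<t<1$ plays no role in the estimate itself; it only guarantees that $\Gamma_t$ and $\Gamma_t^\Phi$ are the interior objects under discussion.
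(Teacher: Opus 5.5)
Your proof is correct and follows essentially the paper's route: identify $[\rho-\mathcal R_{\sigma,\Phi}^{\rm asym}(\Phi(\rho))]\sigma^{-1/2}$ with the first-order term $r_t$, bound $\lVert r_t+s^2w_{s,t}\rVert_2\leq H_t/s$ via the resolvent expansion, and integrate the Lebesgue-measure remainder from \Cref{thm:consolidated-resolvent-dpi}. The only difference is the integration range: the paper integrates over a window $s\in[S,2S]$, which at $S=3H_t/\lVert r_t\rVert_2$ gives exactly the constant $1/27$. Your integral over the whole tail $s\geq S$ gives the sharper constant $1/3$.
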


\begin{proof}
For the resolvent error in \eqref{eq:consolidated-wst}, we set
\begin{equation}
 r_t:=\Gamma_t\sigma^{1/2}
 -\mathsf V_{\sigma,\Phi}\Gamma_t^\Phi[\Phi(\sigma)]^{1/2}
 =\bigl[\rho-
 \mathcal R_{\sigma,\Phi}^{\rm asym}(\Phi(\rho))\bigr]
 \sigma^{-1/2}.
\end{equation}
The resolvent identity gives
$\lVert r_t+s^2w_{s,t}\rVert_2\leq H_t/s$.
Moreover, \Cref{thm:consolidated-resolvent-dpi} and
$\mathrm d\mu_{x\log x}(s)=\mathrm ds$ give
\begin{equation}
 \delta_t^\Phi\geq
 \int_0^\infty s^2\lVert w_{s,t}\rVert_2^2\,\mathrm ds.
\end{equation}
Integrating over $s\in[S,2S]$, choosing $S$ appropriately, and
weakening the optimized constant yields
$\lVert r_t\rVert_2^3\leq27H_t\delta_t^\Phi$.
\end{proof}

\begin{remark}[Non-invertible first state]
The cubic estimate remains valid when
$\operatorname{supp}\rho\leq\operatorname{supp}\sigma$.  We
compress to the reference supports, apply the proof to
$\rho_\varepsilon=(1-\varepsilon)\rho+\varepsilon\sigma$, and let
$\varepsilon\downarrow0$.  Because only shifted resolvents occur,
the limit gives \eqref{eq:consolidated-asymmetric-cubic} with support
inverses.
\end{remark}

The next construction gives a positive comparison operator.  It is
important that the construction depends on the ordered input pair
$(\rho,\sigma)$; hence it is not a recovery channel determined by
$\Phi$ and the reference state.  For $X,S>0$, we define the geodesic
amplitude
\begin{equation}
 \mathfrak a_t(X\mid S)
 :=X^{(1-t)/2}S^{1/2}
 \left(S^{-1/2}X^tS^{-1/2}\right)^{1/2}.
 \label{eq:consolidated-amplitude}
\end{equation}
Functional calculus gives
\begin{equation}
 \mathfrak a_t(X\mid S)=\Gamma_t(X,S)^{1/2}S^{1/2},
\end{equation}
where $\Gamma_t(X,S)$ denotes \eqref{eq:consolidated-modular-operators}
for the ordered pair $(X,S)$.  In particular, it satisfies
$\mathfrak a_t(X\mid S)\mathfrak a_t(X\mid S)^*=X$.
The pair-dependent positive comparison map is
\begin{equation}
 \mathcal R_{\rho,\sigma,\Phi}^{(t)}(X)
 :=[\mathsf U_t\mathfrak a_t(X\mid\Phi(\sigma))]
   [\mathsf U_t\mathfrak a_t(X\mid\Phi(\sigma))]^*.
 \label{eq:consolidated-amplitude-map}
\end{equation}
It is positively homogeneous, trace non-increasing, and generally
nonlinear on the positive-definite cone; it extends continuously to
the positive cone.  The amplitude identity and
\eqref{eq:consolidated-transpose-relations} give the first relation
below, while \eqref{eq:consolidated-qt} and the Schatten estimate
$\lVert AA^*-BB^*\rVert_1\leq
(\lVert A\rVert_2+\lVert B\rVert_2)\lVert A-B\rVert_2$ give the
second:
\begin{equation}
 \mathcal R_{\rho,\sigma,\Phi}^{(t)}(\Phi(\rho))=\rho,
 \qquad
 \left\lVert\sigma-
 \mathcal R_{\rho,\sigma,\Phi}^{(t)}(\Phi(\sigma))\right\rVert_1
 \leq2q_t^\Phi.
 \label{eq:consolidated-amplitude-properties}
\end{equation}
If
$F_{\rm r}(X,Y):=\lVert X^{1/2}Y^{1/2}\rVert_1$ denotes root
fidelity, the same comparison gives
\begin{equation}
 F_{\rm r}\!\left(\sigma,
 \mathcal R_{\rho,\sigma,\Phi}^{(t)}(\Phi(\sigma))\right)
 \geq(1-q_t^\Phi)_+.
 \label{eq:consolidated-amplitude-fidelity}
\end{equation}
Combining \eqref{eq:consolidated-amplitude-properties} with
\Cref{thm:consolidated-Dt-quartic} yields
\begin{equation}
 \delta_t^\Phi
 \geq\left(\frac\pi8\right)^4M_t^{-2}
 \left\lVert\sigma-
 \mathcal R_{\rho,\sigma,\Phi}^{(t)}(\Phi(\sigma))
 \right\rVert_1^4.
 \label{eq:consolidated-amplitude-trace-bound}
\end{equation}
Thus \eqref{eq:consolidated-amplitude-map} is only a device for
turning the intrinsic defect into a trace-norm estimate.  The exact
recovery channel throughout the strict interior remains the Petz map
$\mathcal P_{\sigma,\Phi}$, independently of $t$, by
\Cref{cor:consolidated-interior-recovery-dictionary}.

\subsection{Tripartite strict-interior estimates}

The estimates in this subsection are particular cases of the
strengthened data-processing inequalities proved in
\Cref{thm:consolidated-resolvent-dpi,thm:consolidated-Dt-quartic,thm:consolidated-Renyi-dpi},
obtained by specializing the channel to a partial trace and choosing
the reference operators that define the corresponding tripartite
quantity.  Their interpretation is inspired by the recoverability
programme initiated by Fawzi and Renner: a small conditional mutual
information controls the distance to a reconstructed state
\cite{fawzi2015approximate}.  Subsequent strengthened-DPI results
developed this principle through rotated, pinched, and universal
recovery maps
\cite{berta2015monotonicity,sutter2016strengthened,junge2018universal},
with quantitative Petz and BS developments in
\cite{carlen2018recovery,carlen2020recovery-map,bluhm2020strengthened}.
Here we apply the same guiding principle to the strict-interior
$t$-conditional mutual informations.  The bounds below are thus
tripartite corollaries of the general strengthened DPIs and do not
constitute an independent recovery theory.

We now specialize only the preceding non-endpoint estimates.  We fix
$0<t<1$ and retain from \eqref{eq:consolidated-reverse-data} the
states $P,S,P_0,S_0$, the modular operators $G_t,G_{0,t}$, and the
quantities $\mathsf U_{\rev,t}$, $q_{\rev,t}$, and $M_{\rev,t}$.

\paragraph{Reverse orientation.}
The adjective ``reverse'' refers to the order of the two arguments of
the divergence.  Indeed, the
one-sided deficit uses the ordered pairs $(S,P)$ and $(S_0,P_0)$,
whereas $I_t^{\rev}$ uses the reversed pairs $(P,S)$ and $(P_0,S_0)$.
Since $\operatorname{Tr}_A P=P_0$ and
$\operatorname{Tr}_A S=S_0$, the general positive comparison
construction applied in this reverse order maps $P_0$ exactly to $P$
and compares $S=\rho_{ABC}$ with the image of
 $S_0=\rho_{BC}$.  To display every ingredient explicitly, for
 $X>0$ on $BC$, we define
\begin{equation}
 \begin{aligned}
 \mathcal C_t^{\rev}(X)
 &:=(\rho_B\otimes\tau_C)^{-\frac{1-t}{2}}X^{\frac{1-t}{2}}
   \rho_{BC}^{\frac12}
   \left(\rho_{BC}^{-\frac12}X^t\rho_{BC}^{-\frac12}\right)^{\frac12}
   \left(
    \rho_{BC}^{-\frac12}(\rho_B\otimes\tau_C)^t
    \rho_{BC}^{-\frac12}
   \right)^{-\frac12}\rho_{BC}^{-\frac12},\\
 \mathcal R_t^{\rev}(X)
 &:=(\rho_{AB}\otimes\tau_C)^{\frac{1-t}{2}}
   \left[I_A\otimes\mathcal C_t^{\rev}(X)\right]
   (\rho_{AB}\otimes\tau_C)^t
   \left[I_A\otimes\mathcal C_t^{\rev}(X)\right]^*
   (\rho_{AB}\otimes\tau_C)^{\frac{1-t}{2}}.
 \end{aligned}
 \label{eq:consolidated-reverse-map}
\end{equation}
All inverse powers in \eqref{eq:consolidated-reverse-map} exist by
invertibility, and the formula extends continuously to $X\geq0$.
It is a marginal-dependent, positively homogeneous reconstruction on
the positive cone; it is trace non-increasing and generally
nonlinear, so it is not asserted to be a quantum channel.  The term
``reconstruction'' is justified by
$\mathcal R_t^{\rev}(\rho_B\otimes\tau_C)
=\rho_{AB}\otimes\tau_C$, while
$\mathcal R_t^{\rev}(\rho_{BC})$ is the positive candidate for
$\rho_{ABC}$ controlled below.  At $t=0$,
\eqref{eq:consolidated-reverse-map} reduces to the usual Petz map for
the reference $\rho_{AB}\otimes\tau_C$.  Moreover,
\begin{equation}
 \boxed{
 I_t^{\rev}(A;C\mid B)
 \geq\left(\frac\pi8\right)^4M_{\rev,t}^{-2}
 \left\lVert\rho_{ABC}
 -\mathcal R_t^{\rev}(\rho_{BC})\right\rVert_1^4.}
 \label{eq:consolidated-reverse-trace-bound}
\end{equation}
Moreover,
\begin{equation}
 q_{\rev,t}\geq\frac12
 \lVert\rho_{ABC}-\mathcal R_t^{\rev}(\rho_{BC})\rVert_1.
\end{equation}
Together with
\Cref{cor:consolidated-tripartite-dpi} and
\eqref{eq:consolidated-Renyi-log-bounds}, this places the
reconstruction error inside the logarithmic lower bounds to the power
$6-2\alpha$ for $0<\alpha<1$ and to the power $2\alpha+2$ for
$1<\alpha<2$.  These are the strict-interior $(t,\alpha)$-R\'enyi
consequences; the quadratic order $\alpha=2$ is $t$-independent and is
therefore excluded here.

\paragraph{One-sided orientation.}
The one-sided orientation has the trace-preserving linear map
\begin{equation}
 \mathcal R_{A\mid BC}^{\rm asym}(X)
 :=[I_A\otimes(XP_0^{-1})]P.
 \label{eq:consolidated-tripartite-asymmetric-map}
\end{equation}
It is generally neither positive nor Hermiticity preserving.  With
\begin{equation}
 \Gamma_{ABC,t}^{\os}:=\Gamma_t(S,P),\qquad
 \Gamma_{BC,t}^{\os}:=\Gamma_t(S_0,P_0),
\end{equation}
\Cref{prop:consolidated-asymmetric-cubic} gives
\begin{equation}
 \boxed{
 I_t^{\os}(A;C\mid B)
 \geq
 \frac{\left\lVert
 [\rho_{ABC}
 -\mathcal R_{A\mid BC}^{\rm asym}(\rho_{BC})]
 P^{-1/2}\right\rVert_2^3}
 {27\left(
 \lVert\Gamma_{ABC,t}^{\os}\rVert_\infty^2+
 \lVert\Gamma_{BC,t}^{\os}\rVert_\infty^2\right)}.}
 \label{eq:consolidated-one-sided-cubic}
\end{equation}
With the inverses taken on the supports of $P$ and $P_0$, the same
one-sided estimate holds for a non-invertible tripartite state by the
support extension in
\Cref{prop:consolidated-asymmetric-cubic}.
The reverse bound supplies a positive pair-dependent comparison,
whereas the one-sided bound supplies a simple linear but nonpositive
one.  Its asymmetric residual already vanishes on the larger BS
class, so \eqref{eq:consolidated-one-sided-cubic} gives no numerical
estimate of distance to a Petz-recovered state.  Neither construction
is an alternative recovery theory.  For every $0<t<1$, the exact
conditions \eqref{eq:consolidated-os-recovery}--%
\eqref{eq:consolidated-rev-recovery} are equivalent to Petz recovery
and hence, by \Cref{cor:consolidated-tQMC-hierarchy}, to the QMC
condition.

The only estimate in this paper whose right-hand side is directly a
distance to the Petz output throughout a nonmaximal interval
is the H\"older bound under positive lower eigenvalue assumptions in
\Cref{cor:consolidated-nonmaximal-Petz-holder}.  We do not repeat it
here: unlike the two explicit estimates above, it has the correct
Petz/QMC zero set, but its exponent and constant are nonconstructive.
At $t=0$ and $t=1$, the intrinsic quartic bound underlying these
estimates reduces to the known relative-entropy and BS remainders,
respectively, as shown in \cite[Corollary~5.1]{carlen2018recovery} and
\cite[Theorems~5.3 and~7.1]{bluhm2020strengthened}.

%% file: continuity_conditional_sections.tex
\section{Continuity of conditional quantities and geodesic divergences}
\label{sec:continuity-interpolated-conditionals}

Continuity behaves differently at the two ends of the geodesic.
The Umegaki conditional entropy
\cite[Theorem~1.1 and Proposition~2.1]{berta2026sharpconditional},
and the optimized Petz--R\'enyi
conditional entropies of orders $\alpha\in[1/2,1)$, admit sharp
continuity moduli without assuming a positive lower bound on the
smallest eigenvalues of the relevant states
\cite[Theorem~2.1]{cheng2026sharprenyi}.  By contrast, for every fixed
$t>0$, the logarithmic and R\'enyi down-arrow conditional quantities
studied below are discontinuous at some rank-deficient state, where the
smallest eigenvalue is zero.  The natural positive result in the latter
regime is therefore continuity
under fixed positive lower bounds on the smallest eigenvalues of the
relevant states and marginals.  We organize the
section in the order in which the results are used: we first prove
joint two-input estimates for general $(t,f)$-divergences and their
logarithmic and power specializations, then derive bounds for
conditional quantities, and finally treat the three $t$-conditional
mutual informations.

The organization and comparison of the estimates are inspired by the
almost locally affine (ALAFF) method of
\cite[Section~4, especially Theorem~4.6]{bluhm2023continuity}, which
develops the Alicki--Fannes--Winter continuity strategy, and by the
three complementary approaches to sandwiched R\'enyi quantities in
\cite[Sections~4--5]{bluhm2026unified}.  The functional-calculus
estimates for the geodesic families are proved directly below; whenever
an endpoint theorem is imported from the literature, we identify the
precise result at the point where it is used.

Throughout, logarithms are natural and
\begin{equation}
 T(\rho,\sigma):=\frac12\lVert\rho-\sigma\rVert_1.
 \label{eq:continuity-trace-distance}
\end{equation}
For $K>1$, we use the modulus
\begin{equation}
 g_K(u):=
 \begin{cases}
  h_2(u)+u\log(K-1),&0\leq u\leq1-K^{-1},\\
  \log K,&1-K^{-1}\leq u\leq1,
 \end{cases}
 \qquad g_1\equiv0,
 \label{eq:def-gK-continuity}
\end{equation}
where $h_2(u):=-u\log u-(1-u)\log(1-u)$ is the binary entropy.
For integral $K$, the function $g_K$ is the plateau extension of the
sharp Fannes--Audenaert entropy modulus
\cite[Theorem~1]{audenaert2007sharp}; the same real-parameter
parametrization is used in
\cite[Eq.~(4)]{berta2026sharpconditional}.

The definitions in \Cref{rw:sec:interpolated-divergences} extend to
positive, possibly nonnormalized arguments by lower-semicontinuous
regularization.  Inverses and negative powers are generalized
inverses on the relevant support.  For a state $\rho$ and $p,c>0$
this convention gives
\begin{equation}
 D^t(p\rho\Vert\sigma)
 =pD^t(\rho\Vert\sigma)+p\log p,
 \qquad
 Q_{\alpha,t}(p\rho\Vert\sigma)
 =p^\alpha Q_{\alpha,t}(\rho\Vert\sigma),
 \label{eq:conditional-first-input-scaling}
\end{equation}
and
\begin{equation}
 D^t(\rho\Vert c\sigma)=D^t(\rho\Vert\sigma)-\log c,
 \qquad
 D_\alpha^t(\rho\Vert c\sigma)
 =D_\alpha^t(\rho\Vert\sigma)-\log c.
 \label{eq:conditional-scaling-convention}
\end{equation}

\input{continuity_joint_divergences}

\input{continuity_conditional_quantities}
\input{continuity_cmi_bounds}

%% file: continuity_joint_divergences.tex
\subsection{Joint continuity of the
\texorpdfstring{$(t,f)$}{(t,f)}-divergences}
\label{subsec:joint-continuity-tf}

We begin by allowing both inputs of the divergence to vary.  Relative
entropies need not be continuous when the inputs are allowed to approach
singular positive operators, whose smallest eigenvalue is zero, as
illustrated by the Umegaki and BS analyses in
\cite[Sections~5.3 and~6.2]{bluhm2023continuity}.  The
natural general statement therefore concerns sets on which the
smallest eigenvalues of both inputs are bounded below by fixed
positive constants.  We first introduce the constants shared by the
joint and conditional estimates.  For $0<m,n\leq1$, we set
\begin{align}
 \mathfrak K_t(m,n)
 &:=n^{-1}\bigl((1-t)m^{-t}+tm^{t-1}\bigr),
 \label{eq:continuity-frak-Kt}\\
 C_t(m,n)
 &:=(1-t)\log m^{-1}
 +\log(n^{-1}m^{-t})+tn^{-1}m^{-1-t/2}.
 \label{eq:continuity-Ct}
\end{align}
For $\alpha>0$, $\alpha\neq1$, we define
\begin{align}
 q_\alpha(m,n)
 &:={}
 \begin{cases}
  \max\{m^\alpha,n^{1-\alpha}\},&0<\alpha<1,\\
  1,&\alpha>1,
 \end{cases}
 \label{eq:continuity-qalpha}\\
 \Lambda_\alpha(m,n)
 &:={}
 \begin{cases}
  \alpha m^{\alpha-1},&0<\alpha<1,\\
  \alpha n^{-\alpha}m^{-1},&\alpha>1,
 \end{cases}
 \label{eq:continuity-Lambda-alpha}\\
 A_{\alpha,t}(m,n)
 &:=\frac{\Lambda_\alpha(m,n)\mathfrak K_t(m,n)}
          {|\alpha-1|q_\alpha(m,n)}.
 \label{eq:continuity-A-alpha-t}
\end{align}

We now prove the joint result for a general generator.  We then
exploit the additional structure of the logarithmic and power
generators to obtain the sharper estimates used for the conditional
quantities later in this section.

The separation of the two input perturbations and the use of uniform
spectral control follow the viewpoint of
\cite[Sections~4 and~5.3]{bluhm2023continuity}.  For the R\'enyi case,
we also follow the organizational idea of controlling the unnormalized
power functional before applying its logarithmic normalization, in
parallel with \cite[Example~4.3 and the proof of Theorem~4.4]{bluhm2026unified}.
The arguments below are direct because the almost-additive and
operator-space methods in those works do not apply verbatim to the
present Petz--geometric functional.

\input{continuity_general_tf}
\input{continuity_log_power}

%% file: continuity_general_tf.tex
\subsubsection{A general operator-Lipschitz bound}
\label{subsec:joint-continuity-general-tf}

We fix $0<m,n\leq1$ and write
\begin{equation}
 J_{m,n}:=[m,n^{-1}].
 \label{eq:general-tf-spectral-interval}
\end{equation}
For a real function $f$ on $J_{m,n}$, we define on the
Hilbert--Schmidt space at hand
\begin{align}
 \mathsf L_f(m,n)
 &:={}\sup_{\substack{m\1_{\rm HS}\leq G,H\leq
                                n^{-1}\1_{\rm HS}\\
             G\neq H}}
 \frac{\lVert f(G)-f(H)\rVert_\infty}
      {\lVert G-H\rVert_\infty},
 \label{eq:operator-Lipschitz-constant-tf}\\
 \Omega_f(m,n)
 &:=\max_{x\in J_{m,n}}f(x)-\min_{x\in J_{m,n}}f(x).
 \label{eq:oscillation-constant-tf}
\end{align}
Thus $\mathsf L_f(m,n)$ is the operator-Lipschitz constant of $f$
on the relevant interval.  We assume it is finite.  This holds, in
particular, for the operator-convex generators used in this paper
when they are finite on a neighbourhood of $J_{m,n}$, as follows
from their integral representation
\cite[Theorem~V.4.6 and Eq.~(V.31), pp.~133--134]
{bhatia1997matrix}.

We consider invertible states $\rho_0,\rho_1,\sigma_0,\sigma_1$ satisfying
\begin{equation}
 \rho_0,\rho_1\geq m\1,
 \qquad
 \sigma_0,\sigma_1\geq n\1,
 \label{eq:general-tf-slab}
\end{equation}
and introduce
\begin{equation}
 \varepsilon:=T(\rho_0,\rho_1),
 \qquad
 \delta:=T(\sigma_0,\sigma_1).
 \label{eq:general-tf-distances}
\end{equation}

\begin{lemma}[Perturbing the cyclic vector and modular operator]
\label{lemma:general-tf-perturbations}
Under \eqref{eq:general-tf-slab}, the following estimates hold:
\begin{align}
 \lVert\sigma_0^{1/2}-\sigma_1^{1/2}\rVert_2
 &\leq\frac{\delta}{\sqrt{2n}},
 \label{eq:general-tf-vector-bound}\\
 m\1_{\rm HS}
 &\leq\Gamma_t(\rho_i,\sigma_i)
 \leq n^{-1}\1_{\rm HS},
 \qquad i\in\{0,1\},
 \label{eq:general-tf-Gamma-spectrum}\\
 \lVert\Gamma_t(\rho_0,\sigma_0)
       -\Gamma_t(\rho_1,\sigma_1)\rVert_{2\to2}
 &\leq
 \mathfrak K_t(m,n)\varepsilon+n^{-2}\delta.
 \label{eq:general-tf-Gamma-bound}
\end{align}
\end{lemma}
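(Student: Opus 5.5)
The plan is to prove the three estimates of \Cref{lemma:general-tf-perturbations} separately, each by an elementary functional-calculus argument. Throughout I use two trace-norm facts. First, for states the difference $D:=\sigma_0-\sigma_1$ is traceless Hermitian, so its positive and negative parts both have trace $\delta$. Second, this gives $\lVert D\rVert_\infty\leq\delta$ and hence
$\lVert D\rVert_2^2\leq\lVert D\rVert_1\lVert D\rVert_\infty\leq2\delta^2$. The same facts give $\lVert\rho_0-\rho_1\rVert_\infty\leq\varepsilon$.

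For \eqref{eq:general-tf-vector-bound}, set $X:=\sigma_0^{1/2}-\sigma_1^{1/2}$. It solves the Sylvester equation $\sigma_0^{1/2}X+X\sigma_1^{1/2}=\sigma_0-\sigma_1$. In the HS space, the operator $L_{\sigma_0^{1/2}}+R_{\sigma_1^{1/2}}$ is positive and bounded below by $2\sqrt n$. Hence $\lVert X\rVert_2\leq\lVert D\rVert_2/(2\sqrt n)\leq\sqrt2\,\delta/(2\sqrt n)=\delta/\sqrt{2n}$.

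For \eqref{eq:general-tf-Gamma-spectrum}, I use $\Gamma_t=L_{\rho^{1-t}}R_{B_t}$ from \Cref{rw:prop:gamma-formula}. Since left and right multiplications commute, the spectrum of $\Gamma_t$ consists of products of an eigenvalue of $\rho^{1-t}$ and one of $B_t$. The eigenvalues of $\rho^{1-t}$ lie in $[m^{1-t},1]$. For $B_t=\sigma^{-1/2}\rho^t\sigma^{-1/2}$, the bounds $m^t\leq\rho^t\leq1$ and $1\leq\sigma^{-1}\leq n^{-1}$ (the latter since $n\1\leq\sigma\leq\1$) give $m^t\1\leq B_t\leq n^{-1}\1$. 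Taking products yields the interval $[m,n^{-1}]$.

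For \eqref{eq:general-tf-Gamma-bound}, I telescope, writing $B(\rho,\sigma)$ for $B_t(\rho,\sigma)$:
\begin{equation*}
\begin{aligned}
\Gamma_t(\rho_0,\sigma_0)-\Gamma_t(\rho_1,\sigma_1)
={}&L_{\rho_0^{1-t}-\rho_1^{1-t}}R_{B(\rho_0,\sigma_0)}
+L_{\rho_1^{1-t}}R_{B(\rho_0,\sigma_0)-B(\rho_1,\sigma_0)}\\
&+L_{\rho_1^{1-t}}R_{B(\rho_1,\sigma_0)-B(\rho_1,\sigma_1)}.
\end{aligned}
\end{equation*}
I then use $\lVert L_aR_b\rVert_{2\to2}=\lVert a\rVert_\infty\lVert b\rVert_\infty$ on each term.

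The key inputs are operator-Lipschitz bounds equal to the scalar derivative at the lower spectral edge:
\begin{equation*}
\lVert\rho_0^s-\rho_1^s\rVert_\infty\leq s\,m^{s-1}\varepsilon\quad(0<s<1),
\qquad
\lVert\sigma_0^{-1/2}-\sigma_1^{-1/2}\rVert_\infty\leq\tfrac12n^{-3/2}\delta.
\end{equation*}
I expect this to be the main obstacle. To prove it, I write $x^s=\frac{\sin\pi s}{\pi}\int_0^\infty\lambda^{s-1}\frac{x}{x+\lambda}\,\mathrm d\lambda$ and $x^{-1/2}=\frac1\pi\int_0^\infty\lambda^{-1/2}(x+\lambda)^{-1}\,\mathrm d\lambda$. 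Each resolvent difference satisfies
$(A+\lambda)^{-1}-(B+\lambda)^{-1}=(A+\lambda)^{-1}(B-A)(B+\lambda)^{-1}$,
with norm at most $\lVert A-B\rVert_\infty/(c+\lambda)^2$ when $A,B\geq c$. Integrating reproduces exactly $|f'(c)|$.

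With these bounds, the first term contributes $(1-t)m^{-t}\varepsilon\cdot n^{-1}$. The second contributes $n^{-1}\cdot t\,m^{t-1}\varepsilon$, using $\lVert\sigma_0^{-1/2}\rVert_\infty^2\leq n^{-1}$. Together these give $\mathfrak K_t(m,n)\varepsilon$.

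For the third term I split
$\sigma_0^{-1/2}\rho_1^t\sigma_0^{-1/2}-\sigma_1^{-1/2}\rho_1^t\sigma_1^{-1/2}
=(\sigma_0^{-1/2}-\sigma_1^{-1/2})\rho_1^t\sigma_0^{-1/2}+\sigma_1^{-1/2}\rho_1^t(\sigma_0^{-1/2}-\sigma_1^{-1/2})$.
Using $\lVert\rho_1^t\rVert_\infty\leq1$ and $\lVert\rho_1^{1-t}\rVert_\infty\leq1$, its norm is at most $2\cdot\frac12n^{-3/2}\delta\cdot n^{-1/2}=n^{-2}\delta$. This gives the stated constant.
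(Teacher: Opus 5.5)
Your proposal is correct and follows the paper's proof essentially step for step: the same Sylvester-equation bound for the cyclic vector, the same product-of-spectra argument for $\Gamma_t$, and the same three-term telescoping with the resolvent-integral Lipschitz constants $s\,m^{s-1}$ and $\tfrac12 n^{-3/2}$. The only cosmetic differences are that you bound $\lVert\sigma_0-\sigma_1\rVert_2^2$ via $\lVert D\rVert_1\lVert D\rVert_\infty$ rather than by summing the squared traces of the Jordan parts, and that you derive the integral representations instead of citing them.
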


\begin{proof}
We set $X=\sigma_0^{1/2}-\sigma_1^{1/2}$.  The Sylvester equation
\begin{equation}
 \bigl(L_{\sigma_0^{1/2}}+R_{\sigma_1^{1/2}}\bigr)X
 =\sigma_0-\sigma_1
 \label{eq:general-tf-Sylvester-equation}
\end{equation}
and the bound
$L_{\sigma_0^{1/2}}+R_{\sigma_1^{1/2}}
 \geq2\sqrt n\,\1_{\rm HS}$ give
\begin{equation}
 \lVert X\rVert_2
 \leq\frac{1}{2\sqrt n}
       \lVert\sigma_0-\sigma_1\rVert_2.
 \label{eq:general-tf-Sylvester-bound}
\end{equation}
If $\sigma_0-\sigma_1=\Delta_+-\Delta_-$ is its Jordan
decomposition, then
$\operatorname{Tr}\Delta_+=\operatorname{Tr}\Delta_-=\delta$, so
\begin{equation}
 \lVert\sigma_0-\sigma_1\rVert_2^2
 \leq(\operatorname{Tr}\Delta_+)^2
     +(\operatorname{Tr}\Delta_-)^2
 =2\delta^2.
\end{equation}
This proves \eqref{eq:general-tf-vector-bound}.

For $P,Q\geq a\1$ and $0<u<1$, the standard fractional-power
representation \cite[Eq.~(1.39), pp.~21--22]{bhatia2009positive}
and the resolvent identity yield
\begin{equation}
 P^u-Q^u
 =\frac{\sin(\pi u)}{\pi}\int_0^\infty
 \lambda^u(P+\lambda)^{-1}(P-Q)(Q+\lambda)^{-1}
 \,\mathrm d\lambda
 \label{eq:general-tf-power-resolvent}
\end{equation}
and hence
\begin{equation}
 \lVert P^u-Q^u\rVert_\infty
 \leq ua^{u-1}\lVert P-Q\rVert_\infty.
 \label{eq:general-tf-power-Lipschitz}
\end{equation}
Likewise, the inverse-power representation
\cite[Exercise~1.5.10, p.~23]{bhatia2009positive} reads
\begin{equation}
 P^{-1/2}=\frac1\pi\int_0^\infty
 \lambda^{-1/2}(P+\lambda)^{-1}\,\mathrm d\lambda
\end{equation}
and gives, for $P,Q\geq n\1$,
\begin{equation}
 \lVert P^{-1/2}-Q^{-1/2}\rVert_\infty
 \leq\frac{1}{2n^{3/2}}\lVert P-Q\rVert_\infty.
 \label{eq:general-tf-inverse-square-root}
\end{equation}
Thus both Lipschitz estimates follow directly from the cited integral
representations.

We recall from \eqref{rw:eq:gamma-explicit} that
\begin{equation}
 \Gamma_t(\rho,\sigma)
 =L_{\rho^{1-t}}R_{B_t(\rho,\sigma)},
 \qquad
 B_t(\rho,\sigma):=\sigma^{-1/2}\rho^t\sigma^{-1/2}.
 \label{eq:general-tf-Gamma-explicit}
\end{equation}
Since every state is bounded above by the identity,
\begin{equation}
 m^t\1\leq B_t(\rho_i,\sigma_i)\leq n^{-1}\1,
 \qquad
 m^{1-t}\1\leq\rho_i^{1-t}\leq\1.
\end{equation}
The commuting left and right multiplication operators have spectral
products in $J_{m,n}$, proving
\eqref{eq:general-tf-Gamma-spectrum}.

We insert a mixed product in \eqref{eq:general-tf-Gamma-explicit}.
Using $\lVert L_AR_B\rVert_{2\to2}
\leq\lVert A\rVert_\infty\lVert B\rVert_\infty$ and
\eqref{eq:general-tf-power-Lipschitz} gives
\begin{equation}
 \begin{aligned}
 &\lVert\Gamma_t(\rho_0,\sigma_0)
       -\Gamma_t(\rho_1,\sigma_0)\rVert_{2\to2}\leq n^{-1}\bigl((1-t)m^{-t}+tm^{t-1}\bigr)
       \lVert\rho_0-\rho_1\rVert_\infty.
 \end{aligned}
 \label{eq:general-tf-first-input-Gamma}
\end{equation}
For the second input, we write $S_i=\sigma_i^{-1/2}$ and expand
\begin{equation}
 S_0\rho_1^tS_0-S_1\rho_1^tS_1
 =(S_0-S_1)\rho_1^tS_0
 +S_1\rho_1^t(S_0-S_1).
\end{equation}
Together with \eqref{eq:general-tf-inverse-square-root}, this yields
\begin{equation}
 \lVert\Gamma_t(\rho_1,\sigma_0)
       -\Gamma_t(\rho_1,\sigma_1)\rVert_{2\to2}
 \leq n^{-2}\lVert\sigma_0-\sigma_1\rVert_\infty.
 \label{eq:general-tf-second-input-Gamma}
\end{equation}
The Jordan decomposition of the difference of two states gives
$\lVert\rho_0-\rho_1\rVert_\infty\leq\varepsilon$ and
$\lVert\sigma_0-\sigma_1\rVert_\infty\leq\delta$.
Combining the last two estimates proves
\eqref{eq:general-tf-Gamma-bound}.
\end{proof}

This technical result immediately allows us to prove the following continuity bound.

\begin{theorem}[Joint continuity of $D_f^t$ under positive lower
eigenvalue bounds]
\label{theo:joint-slab-general-tf}
We take $t\in[0,1]$ and assume that $f$ is operator Lipschitz on
$J_{m,n}$.  Under \eqref{eq:general-tf-slab},
\begin{equation}
 \boxed{\begin{aligned}
 \left|D_f^t(\rho_0\Vert\sigma_0)
       -D_f^t(\rho_1\Vert\sigma_1)\right|\leq
 \mathsf L_f(m,n)\mathfrak K_t(m,n)\varepsilon
 +\left(\frac{\mathsf L_f(m,n)}{n^2}
       +\frac{\Omega_f(m,n)}{\sqrt{2n}}\right)\delta.
 \end{aligned}}
 \label{eq:joint-slab-general-tf-bound}
\end{equation}
\end{theorem}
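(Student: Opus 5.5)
The plan is to write $\Gamma_i:=\Gamma_t(\rho_i,\sigma_i)$ and $v_i:=\sigma_i^{1/2}$, so that $D_f^t(\rho_i\Vert\sigma_i)=\langle v_i,f(\Gamma_i)v_i\rangle_{\rm HS}$ by \eqref{rw:eq:interpolated-f-divergence}. We then split the difference by inserting the mixed term $\langle v_1,f(\Gamma_0)v_1\rangle_{\rm HS}$:
\begin{equation*}
 D_f^t(\rho_0\Vert\sigma_0)-D_f^t(\rho_1\Vert\sigma_1)
 =\bigl[\langle v_0,f(\Gamma_0)v_0\rangle-\langle v_1,f(\Gamma_0)v_1\rangle\bigr]
 +\langle v_1,(f(\Gamma_0)-f(\Gamma_1))v_1\rangle .
\end{equation*}

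For the second bracket we use $\lVert v_1\rVert_2=1$ and the operator-Lipschitz constant. The spectral inclusion \eqref{eq:general-tf-Gamma-spectrum} puts both $\Gamma_0,\Gamma_1$ in the admissible class of \eqref{eq:operator-Lipschitz-constant-tf}, so this term is at most $\mathsf L_f(m,n)\lVert\Gamma_0-\Gamma_1\rVert_{2\to2}$. By \eqref{eq:general-tf-Gamma-bound} this is at most $\mathsf L_f\mathfrak K_t\varepsilon+\mathsf L_f n^{-2}\delta$, which gives the first two contributions to \eqref{eq:joint-slab-general-tf-bound}.

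For the first bracket we must extract the oscillation $\Omega_f$ rather than $\max|f|$. Since $\lVert v_0\rVert_2=\lVert v_1\rVert_2=1$, the bracket is unchanged if $f$ is replaced by $f-c$ for any constant $c$. We choose $c$ as the midpoint of the range of $f$ on $J_{m,n}$, so that $\lVert f(\Gamma_0)-c\rVert_\infty\leq\Omega_f/2$. Writing $A:=f(\Gamma_0)-c$, which is self-adjoint, we have
\[
\langle v_0,Av_0\rangle-\langle v_1,Av_1\rangle=\langle v_0-v_1,A v_0\rangle+\langle v_1,A(v_0-v_1)\rangle,
\]
whose modulus is at most $2\cdot\tfrac{\Omega_f}{2}\lVert v_0-v_1\rVert_2$. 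Applying \eqref{eq:general-tf-vector-bound} bounds this by $\Omega_f\,\delta/\sqrt{2n}$, which is the remaining term.

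The only delicate step is this oscillation reduction. It relies on the two cyclic vectors being unit vectors and on a single shift constant serving both quadratic forms. Everything else is already supplied by \Cref{lemma:general-tf-perturbations}.
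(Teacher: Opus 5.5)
Your proposal is correct and is essentially the paper's proof. The paper also subtracts the midpoint constant of $f$ on $J_{m,n}$, applying the shift globally rather than only in the vector-perturbation bracket; this changes nothing, since $f(\Gamma_0)-f(\Gamma_1)$ is shift invariant. It then uses exactly your three-term expansion and bounds it by Cauchy--Schwarz together with \Cref{lemma:general-tf-perturbations}.
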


\begin{proof}
We set
\begin{equation}
 c_f:=\frac12\left(
 \max_{x\in J_{m,n}}f(x)+\min_{x\in J_{m,n}}f(x)\right),
 \qquad g:=f-c_f.
\end{equation}
Since $\lVert\sigma_i^{1/2}\rVert_2=1$, subtracting $c_f$ does not
change the difference of the two divergences.  Moreover,
\begin{align}
 \lVert g(\Gamma_t(\rho_i,\sigma_i))\rVert_{2\to2}
 &\leq\frac12\Omega_f(m,n),\qquad i=0,1,\notag\\
 \lVert g(\Gamma_t(\rho_0,\sigma_0))
          -g(\Gamma_t(\rho_1,\sigma_1))\rVert_{2\to2}
 &\leq\mathsf L_f(m,n)
 \lVert\Gamma_t(\rho_0,\sigma_0)
       -\Gamma_t(\rho_1,\sigma_1)\rVert_{2\to2}.
\end{align}
Expanding the quadratic forms gives
\begin{align}
 &\left\langle\sigma_0^{1/2},
   g(\Gamma_t(\rho_0,\sigma_0))\sigma_0^{1/2}
  \right\rangle_{\rm HS}
 -\left\langle\sigma_1^{1/2},
   g(\Gamma_t(\rho_1,\sigma_1))\sigma_1^{1/2}
  \right\rangle_{\rm HS}\notag\\
 &\quad=
 \left\langle\sigma_0^{1/2}-\sigma_1^{1/2},
   g(\Gamma_t(\rho_0,\sigma_0))\sigma_0^{1/2}
 \right\rangle_{\rm HS}\notag\\
 &\qquad
 +\left\langle\sigma_1^{1/2},
   g(\Gamma_t(\rho_0,\sigma_0))
   (\sigma_0^{1/2}-\sigma_1^{1/2})
 \right\rangle_{\rm HS}\notag\\
 &\qquad
 +\left\langle\sigma_1^{1/2},
   \bigl(g(\Gamma_t(\rho_0,\sigma_0))
        -g(\Gamma_t(\rho_1,\sigma_1))\bigr)\sigma_1^{1/2}
 \right\rangle_{\rm HS}.
 \label{eq:general-tf-difference-expansion}
\end{align}
Cauchy--Schwarz and
\Cref{lemma:general-tf-perturbations} give
\eqref{eq:joint-slab-general-tf-bound}.
\end{proof}

\begin{remark}[Fixed-support compression]
The theorem also applies when the four inputs are non-invertible only
because of unused ambient directions: one compresses to a fixed
common support on which the four operators satisfy
\eqref{eq:general-tf-slab}, and computes $m,n$ and the functional
calculus in those corners.  It gives no uniform boundary estimate
when the supports vary or a positive eigenvalue tends to zero.
\end{remark}

\begin{remark}[Affine normalization]
\label{rem:general-tf-affine-normalization}
For every affine function $\ell(x)=ax+b$,
\begin{equation}
 D_{f-\ell}^t(\rho\Vert\sigma)
 =D_f^t(\rho\Vert\sigma)-(a+b),
\end{equation}
because \eqref{rw:eq:gamma-normalization} gives
$\langle\sigma^{1/2},\Gamma_t\sigma^{1/2}\rangle_{\rm HS}=1$.
Thus the difference on the left of
\eqref{eq:joint-slab-general-tf-bound} is unchanged under affine
normalization, and the right-hand side may be minimized over affine
representatives of $f$.
\end{remark}

%% file: continuity_log_power.tex
\subsubsection{Sharper logarithmic and power bounds}
\label{subsec:joint-continuity-log-power}

The general theorem does not use the special form of $x\log x$ or
$x^\alpha$.  We now exploit that structure.  Besides improving the
constants, this gives the R\'enyi estimates used in the conditional
and CMI results below.

We take invertible states $\rho,\rho'$ and invertible positive
contractions $\sigma,\sigma'$ of equal trace on the same space.  We assume
\begin{equation}
 \rho,\rho'\geq m\1,
 \qquad
 \sigma,\sigma'\geq n\1,
 \label{eq:joint-slab-assumptions}
\end{equation}
and set
\begin{equation}
 \varepsilon:=T(\rho,\rho'),
 \qquad
 \delta:=T(\sigma,\sigma'),
 \qquad
 \delta_\infty:=\lVert\sigma-\sigma'\rVert_\infty.
 \label{eq:joint-slab-perturbations}
\end{equation}
Since $\operatorname{Tr}\sigma=\operatorname{Tr}\sigma'$, the
Jordan decomposition of $\sigma-\sigma'$ gives
$\delta_\infty\leq\delta$.  We retain both parameters because the
second-input arguments below use $\delta_\infty$, whereas
$\delta$ gives a convenient trace-distance-only relaxation.  We set
\begin{equation}
 X_t(\rho,\sigma):=\rho^{t/2}\sigma^{-1}\rho^{t/2}.
 \label{eq:joint-slab-X}
\end{equation}

\begin{lemma}[Functional calculus under common lower eigenvalue bounds]
\label{lemma:joint-slab-functional-calculus}
Under \eqref{eq:joint-slab-assumptions}, for every $t\in[0,1]$,
\begin{align}
 m^t\1&\leq X_t(\rho,\sigma)\leq n^{-1}\1,
 \label{eq:joint-slab-X-spectrum}\\
 m\1_{\rm HS}&\leq\Gamma_t(\rho,\sigma)
                    \leq n^{-1}\1_{\rm HS},
 \label{eq:joint-slab-Gamma-spectrum}\\
 \lVert X_t(\rho,\sigma)-X_t(\rho',\sigma)\rVert_\infty
 &\leq tn^{-1}m^{t/2-1}
          \lVert\rho-\rho'\rVert_\infty,
 \label{eq:joint-slab-X-first-input}\\
 \lVert\Gamma_t(\rho,\sigma)
       -\Gamma_t(\rho',\sigma)\rVert_{2\to2}
 &\leq\mathfrak K_t(m,n)
          \lVert\rho-\rho'\rVert_\infty.
 \label{eq:joint-slab-Gamma-first-input}
\end{align}
Moreover,
\begin{equation}
 \left\langle\sigma^{1/2},
 \Gamma_t(\rho,\sigma)\sigma^{1/2}\right\rangle_{\rm HS}=1.
 \label{eq:joint-slab-first-moment}
\end{equation}
If $c_\infty:=1+\delta_\infty/n$, then
\begin{equation}
 c_\infty^{-1}\sigma\leq\sigma'\leq c_\infty\sigma,
 \qquad
 c_\infty^{-1}X_t(\rho,\sigma)
 \leq X_t(\rho,\sigma')
 \leq c_\infty X_t(\rho,\sigma).
 \label{eq:joint-slab-order-comparison}
\end{equation}
Finally, if $a\1\leq G,H\leq b\1$, then
\begin{align}
 \lVert\log G-\log H\rVert_\infty
 &\leq a^{-1}\lVert G-H\rVert_\infty,
 \label{eq:joint-slab-log-Lipschitz}\\
 \lVert G^\alpha-H^\alpha\rVert_\infty
 &\leq
 \begin{cases}
  \alpha a^{\alpha-1}\lVert G-H\rVert_\infty,
      &0<\alpha<1,\\
  \alpha b^\alpha a^{-1}\lVert G-H\rVert_\infty,
      &\alpha>1.
 \end{cases}
 \label{eq:joint-slab-power-Lipschitz}
\end{align}
\end{lemma}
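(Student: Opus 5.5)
Most items in \Cref{lemma:joint-slab-functional-calculus} are either already proved in \Cref{lemma:general-tf-perturbations} or follow from elementary L\"owner-order facts. I will go through them in the stated order and reuse the integral representations cited there.

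\textbf{Spectral bounds and normalization.} For \eqref{eq:joint-slab-X-spectrum}, write $X_t=\rho^{t/2}\sigma^{-1}\rho^{t/2}$. Since $\sigma\leq\1$ we have $\sigma^{-1}\geq\1$, so $X_t\geq\rho^t\geq m^t\1$. Since $\sigma\geq n\1$ we have $\sigma^{-1}\leq n^{-1}\1$, so $X_t\leq n^{-1}\rho^t\leq n^{-1}\1$. For \eqref{eq:joint-slab-Gamma-spectrum}, the spectral estimate $m^t\1\leq B_t\leq n^{-1}\1$ of \Cref{lemma:general-tf-perturbations} uses only $\rho\leq\1$ and $\sigma\leq\1$, so it holds verbatim for positive contractions. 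Together with $m^{1-t}\1\leq\rho^{1-t}\leq\1$ and commutation of $L$ and $R$, this gives the claim. The normalization \eqref{eq:joint-slab-first-moment} is $\operatorname{Tr}\rho=1$, by \eqref{rw:eq:gamma-normalization}; the trace of $\sigma$ plays no role there.

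\textbf{First-input Lipschitz bounds.} For \eqref{eq:joint-slab-X-first-input}, split
\[
X_t(\rho,\sigma)-X_t(\rho',\sigma)=(\rho^{t/2}-\rho'^{t/2})\sigma^{-1}\rho^{t/2}+\rho'^{t/2}\sigma^{-1}(\rho^{t/2}-\rho'^{t/2}).
\]
Apply \eqref{eq:general-tf-power-Lipschitz} with $u=t/2$ and $a=m$, together with $\lVert\sigma^{-1}\rVert\leq n^{-1}$ and $\lVert\rho^{t/2}\rVert\leq1$. This gives $2\cdot\frac t2 m^{t/2-1}n^{-1}\lVert\rho-\rho'\rVert_\infty$, which is the claim. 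The bound \eqref{eq:joint-slab-Gamma-first-input} is exactly \eqref{eq:general-tf-first-input-Gamma}; its proof uses only the bounds on $\rho$ and on $\sigma^{-1}$, so it carries over.

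\textbf{Order comparison.} From $\sigma'\leq\sigma+\delta_\infty\1$ and $\1\leq n^{-1}\sigma$ we get $\sigma'\leq(1+\delta_\infty/n)\sigma$. Symmetrically, $\sigma\leq(1+\delta_\infty/n)\sigma'$, because $\sigma'\geq n\1$ as well. Operator inversion reverses the order, so $c_\infty^{-1}\sigma^{-1}\leq\sigma'^{-1}\leq c_\infty\sigma^{-1}$. Conjugation by $\rho^{t/2}$ preserves the order, which yields the $X_t$ comparison.

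\textbf{Function Lipschitz bounds.} For the logarithm, use
\[
\log G-\log H=\int_0^\infty\bigl[(H+\lambda)^{-1}-(G+\lambda)^{-1}\bigr]\,\mathrm d\lambda
=\int_0^\infty (G+\lambda)^{-1}(G-H)(H+\lambda)^{-1}\,\mathrm d\lambda .
\]
Its norm is at most $\int_0^\infty(a+\lambda)^{-2}\,\mathrm d\lambda\,\lVert G-H\rVert_\infty=a^{-1}\lVert G-H\rVert_\infty$. For $0<\alpha<1$ the power bound is \eqref{eq:general-tf-power-Lipschitz}.

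The only step needing care is $\alpha>1$. Write $\alpha=k+u$ with $k\in\mathbb N$ and $u\in[0,1)$, and telescope
\[
G^\alpha-H^\alpha=\sum_{j=0}^{k-1}G^{j}(G-H)H^{k-1-j}\,H^u+G^k(G^u-H^u).
\]
The sum contributes at most $k b^{k-1}b^u\lVert G-H\rVert_\infty$. The last term contributes at most $b^k u a^{u-1}\lVert G-H\rVert_\infty$. Since $b\geq a$, we have $b^{\alpha-1}\leq b^\alpha a^{-1}$ and $b^k a^{u-1}\leq b^\alpha a^{-1}$. The total is therefore at most $(k+u)b^\alpha a^{-1}=\alpha b^\alpha a^{-1}$ times $\lVert G-H\rVert_\infty$, which is the stated bound.
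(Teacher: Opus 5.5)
Your proof is correct. For most of the lemma it follows the paper's argument: the L\"owner-order spectral bounds, inserting a mixed product for the two first-input estimates, the comparison $\sigma'\leq\sigma+\delta_\infty\1\leq c_\infty\sigma$ followed by inversion and congruence, and the resolvent integral for the logarithm. You also rightly point out that \eqref{eq:joint-slab-first-moment} uses only $\operatorname{Tr}\rho=1$. The paper leaves this implicit when it cites \eqref{rw:eq:gamma-normalization} for positive contractions $\sigma$.

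The one genuinely different step is the $\alpha>1$ branch of \eqref{eq:joint-slab-power-Lipschitz}.
\begin{itemize}
\item The paper writes $G^\alpha=\exp(\alpha\log G)$ and uses the Duhamel identity $e^{A}-e^{B}=\int_0^1e^{sA}(A-B)e^{(1-s)B}\,\mathrm ds$. The bound $\lVert G^{s\alpha}\rVert_\infty\lVert H^{(1-s)\alpha}\rVert_\infty\leq b^\alpha$, combined with the logarithmic estimate, gives $\alpha b^\alpha a^{-1}$ in one line.
\item You split $\alpha=k+u$, telescope the integer part, and apply the fractional-power estimate to $G^u-H^u$.
\end{itemize}
The Duhamel route is shorter and uniform in $\alpha$, because it reduces everything to the log bound. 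Your route is more elementary and gives the sharper intermediate constant $kb^{\alpha-1}+ub^{k}a^{u-1}$, which is $\alpha b^{\alpha-1}$ for integer $\alpha$. You then only use $a\leq b$ to relax it to the stated constant.
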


\begin{proof}
The cited fractional-power resolvent estimates
\eqref{eq:general-tf-power-resolvent}--%
\eqref{eq:general-tf-power-Lipschitz} give the fractional-power
bounds.  The resolvent formula for the logarithm
\cite[Example~2.11, Eq.~(2.6)]{vershynina2019upper} gives
\eqref{eq:joint-slab-log-Lipschitz}; the Duhamel identity for the
matrix exponential
\cite[the identity preceding Eq.~(6.42), p.~225]
{bhatia2009positive}, applied to
$G^\alpha=\exp(\alpha\log G)$, gives the second branch of
\eqref{eq:joint-slab-power-Lipschitz}.

Since $\rho,\sigma\leq\1$,
\begin{equation}
 \rho^{t/2}\sigma^{-1}\rho^{t/2}
 \geq\rho^t\geq m^t\1,
 \qquad
 \rho^{t/2}\sigma^{-1}\rho^{t/2}\leq n^{-1}\1.
\end{equation}
The analogous bounds for
$B_t=\sigma^{-1/2}\rho^t\sigma^{-1/2}$, together with the commuting
left and right multiplication operators in
\eqref{rw:eq:gamma-explicit}, prove
\eqref{eq:joint-slab-Gamma-spectrum}.  Inserting a mixed product
proves \eqref{eq:joint-slab-X-first-input} and
\eqref{eq:joint-slab-Gamma-first-input}.
\eqref{eq:joint-slab-first-moment} is
\eqref{rw:eq:gamma-normalization}.

Finally,
$\sigma'\leq\sigma+\delta_\infty\1
\leq(1+\delta_\infty/n)\sigma$, and the same
argument with $\sigma,\sigma'$ interchanged gives the other
comparison.  Inversion and congruence prove
\eqref{eq:joint-slab-order-comparison}.
\end{proof}

For the $t$-relative entropies, we can prove the following result.

\begin{theorem}[Joint continuity of $D^t$ under positive lower
eigenvalue bounds]
\label{theo:joint-slab-Dt-continuity}
Under \eqref{eq:joint-slab-assumptions}, with
$\varepsilon,\delta,\delta_\infty$ defined in
\eqref{eq:joint-slab-perturbations}, for every $t\in[0,1]$,
\begin{equation}
 \left|D^t(\rho\Vert\sigma)-D^t(\rho'\Vert\sigma')\right|
 \leq C_t(m,n)\varepsilon
 +\log\left(1+\frac{\delta_\infty}{n}\right).
 \label{eq:joint-slab-Dt-bound}
\end{equation}
Since $\delta_\infty\leq\delta$, the right-hand side remains valid
with $\delta_\infty$ replaced by $\delta$.
\end{theorem}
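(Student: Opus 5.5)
We split the perturbation through the intermediate pair $(\rho',\sigma)$ and use the triangle inequality:
\[
 |D^t(\rho\Vert\sigma)-D^t(\rho'\Vert\sigma')|
 \leq|D^t(\rho\Vert\sigma)-D^t(\rho'\Vert\sigma)|
 +|D^t(\rho'\Vert\sigma)-D^t(\rho'\Vert\sigma')|.
\]
Throughout we work with the explicit formula of \Cref{rw:prop:Dt-explicit},
\[
 D^t(\rho\Vert\sigma)=(1-t)\operatorname{Tr}(\rho\log\rho)+\operatorname{Tr}[\rho\log X_t(\rho,\sigma)],
\]
rather than with the Hilbert--Schmidt functional calculus. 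This formula involves only ordinary matrices, and it remains valid for a positive contraction $\sigma$ of arbitrary trace, since the derivation was purely algebraic.

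\textbf{Second-input term.} This term is the easy one. By \eqref{eq:joint-slab-order-comparison}, $c_\infty^{-1}X_t(\rho',\sigma)\leq X_t(\rho',\sigma')\leq c_\infty X_t(\rho',\sigma)$. Operator monotonicity of the logarithm then gives
\[
 |\operatorname{Tr}\rho'(\log X_t(\rho',\sigma')-\log X_t(\rho',\sigma))|\leq\log c_\infty ,
\]
because $\rho'$ is a state. The entropy term does not involve $\sigma$. This produces exactly $\log(1+\delta_\infty/n)$, which is the same order-reversal and homogeneity mechanism used in the proof of \Cref{prop:consolidated-forward-projective-comparison}.

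\textbf{First-input term.} We write $\Delta:=\rho-\rho'$, so that $\lVert\Delta\rVert_1=2\varepsilon$ and $\lVert\Delta\rVert_\infty\leq\varepsilon$. We then split the difference into three pieces:
\[
 (1-t)\bigl[\operatorname{Tr}\rho\log\rho-\operatorname{Tr}\rho'\log\rho'\bigr]
 +\operatorname{Tr}[\Delta\log X_t(\rho,\sigma)]
 +\operatorname{Tr}\rho'\bigl[\log X_t(\rho,\sigma)-\log X_t(\rho',\sigma)\bigr].
\]

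For the first piece, we use the mean value theorem along $\rho_s=\rho'+s\Delta$. This gives $\operatorname{Tr}[\Delta\log\rho_s]$, because the term $\operatorname{Tr}\Delta=0$ drops out. Since $m\1\leq\rho_s\leq\1$, we have $\lVert\log\rho_s\rVert_\infty\leq\log m^{-1}$. Using Hölder together with traceless centering of $\log\rho_s$, whose spectrum lies in $[\log m,0]$, the piece is bounded by $\varepsilon\log m^{-1}$. This matches the term $(1-t)\log m^{-1}$ in $C_t$.

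For the second piece, \eqref{eq:joint-slab-X-spectrum} shows that the spectrum of $\log X_t$ lies in $[t\log m,\log n^{-1}]$. Because $\Delta$ is traceless, $|\operatorname{Tr}[\Delta\log X_t]|$ is at most $\varepsilon$ times the spectral width, which is $\log(n^{-1}m^{-t})$. This is the middle term of $C_t$.

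For the third piece, we apply \eqref{eq:joint-slab-log-Lipschitz} with $a=m^t$ and then \eqref{eq:joint-slab-X-first-input}. Since $\lVert\rho'\rVert_1=1$, this gives the bound
\[
 m^{-t}\,tn^{-1}m^{t/2-1}\varepsilon=tn^{-1}m^{-1-t/2}\varepsilon,
\]
which is the last term of $C_t$.

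\textbf{Main obstacle.} We expect the delicate step to be bookkeeping in the first piece. The point is to obtain $\log m^{-1}$ times $\varepsilon$, rather than twice that, and this requires using the trace-zero centering of $\Delta$ carefully. If this fails, we would instead accept a constant of $\lVert\Delta\rVert_1/2$ times the spectral width of $\log\rho_s$, which is the same bound because that width is $\log m^{-1}$. Finally, the relaxation from $\delta_\infty$ to $\delta$ is immediate from $\delta_\infty\leq\delta$ and monotonicity of the logarithm.
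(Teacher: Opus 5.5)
Your proposal is correct and follows the paper's proof essentially step for step. It uses the same intermediate pair $(\rho',\sigma)$, the same three-piece split of the first-input variation, and the same order-comparison plus operator-monotonicity argument for the second input; the three pieces are the entropy term via the derivative along the segment with traceless centering, the expectation change bounded by the spectral width of $\log X_t$, and the logarithm change via the log-Lipschitz bound composed with the first-input estimate for $X_t$. The hesitation in your ``main obstacle'' paragraph is unnecessary: the centering bound, namely $\tfrac12\lVert\Delta\rVert_1$ times the spectral width $\log m^{-1}$, is exactly what yields $\varepsilon\log m^{-1}$.
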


\begin{proof}
We first fix $\sigma$ and vary the state.  If $\Delta$ is traceless
Hermitian and $K$ is Hermitian, then
\begin{equation}
 |\operatorname{Tr}(\Delta K)|
 \leq\frac12\lVert\Delta\rVert_1
 \bigl(\lambda_{\max}(K)-\lambda_{\min}(K)\bigr).
 \label{eq:joint-slab-traceless-expectation}
\end{equation}
Indeed, we subtract from $K$ the midpoint of its extreme eigenvalues and
apply trace-norm/operator-norm duality
\cite[Eqs.~(1.173)--(1.174)]{watrous2018theory}.  Using
\eqref{rw:eq:Dt-explicit}, we integrate along the segment $A_s$ from
$\rho$ to $\rho'$.  The Daleckii--Krein formula
\eqref{rw:eq:daleckii-krein} gives
\begin{equation}
 \frac{\mathrm d}{\mathrm ds}\operatorname{Tr}(A_s\log A_s)
 =\operatorname{Tr}\!\left[\dot A_s(\log A_s+\1)\right].
\end{equation}
The identity term drops against the traceless increment,
and the spectrum in $[m,1]$ gives
$(1-t)\varepsilon\log m^{-1}$.

We next split the variation of
$\operatorname{Tr}[\rho\log X_t(\rho,\sigma)]$ into the change of
the expectation and the change of the logarithm.  By
\eqref{eq:joint-slab-X-spectrum} and
\eqref{eq:joint-slab-traceless-expectation}, the first is at most
$\varepsilon\log(n^{-1}m^{-t})$.  By
\eqref{eq:joint-slab-X-first-input} and
\eqref{eq:joint-slab-log-Lipschitz}, the second is at most
$tn^{-1}m^{-1-t/2}\varepsilon$.  We used
$\lVert\rho-\rho'\rVert_\infty\leq\varepsilon$, which follows from
the Jordan decomposition of the difference of two states.  These
three contributions are $C_t(m,n)\varepsilon$.

To vary the second input, we use
\eqref{eq:joint-slab-order-comparison} and operator monotonicity of
the logarithm \cite[Section~5.3.7, pp.~158--159]
{bhatia2009positive}:
\begin{equation}
 -\log c_\infty\,\1
 \leq\log X_t(\rho',\sigma')
       -\log X_t(\rho',\sigma)
 \leq\log c_\infty\,\1.
\end{equation}
Taking the expectation in $\rho'$ gives the remaining term.
\end{proof}

\begin{theorem}[Joint continuity of $D_\alpha^t$ under positive lower
eigenvalue bounds]
\label{theo:joint-slab-Dalpha-continuity}
Under \eqref{eq:joint-slab-assumptions}, with
$\varepsilon,\delta,\delta_\infty$ defined in
\eqref{eq:joint-slab-perturbations}, assume in addition that $\sigma$
and $\sigma'$ are states.  For every $t\in[0,1]$ and $\alpha>0$,
$\alpha\neq1$,
\begin{equation}
 \begin{aligned}
 &\left|D_\alpha^t(\rho\Vert\sigma)
       -D_\alpha^t(\rho'\Vert\sigma')\right|\leq A_{\alpha,t}(m,n)\varepsilon
 +\begin{cases}
 \displaystyle\log\left(1+\frac{\delta_\infty}{n}\right),
 &0<\alpha<1\ \text{or}\ 1<\alpha\leq2,\\[2mm]
 \displaystyle n^{-\alpha-1}m^{-t}\delta_\infty,
 &\alpha>2.
 \end{cases}
 \end{aligned}
 \label{eq:joint-slab-Dalpha-bound}
\end{equation}
Since $\delta_\infty\leq\delta$, the same estimate holds with
$\delta_\infty$ replaced throughout by $\delta$.
\end{theorem}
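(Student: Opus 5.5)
Following the proof of \Cref{theo:joint-slab-Dt-continuity}, we split the estimate into a first-input step $(\rho,\sigma)\to(\rho',\sigma)$ and a second-input step $(\rho',\sigma)\to(\rho',\sigma')$. In both steps we work with the unnormalized functional $Q_{\alpha,t}$ before applying $(\alpha-1)^{-1}\log$.

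\textbf{First input.} By \eqref{rw:eq:Renyi-spectral-moment} and \eqref{eq:joint-slab-Gamma-spectrum}, the measure $\nu_t$ is a probability measure supported in $[m,n^{-1}]$. Hence $Q_{\alpha,t}\geq q_\alpha(m,n)$:
\begin{itemize}
\item for $\alpha>1$, Jensen with the unit first moment \eqref{eq:joint-slab-first-moment} gives $Q\geq1$;
\item for $0<\alpha<1$, we use the support bounds on $\nu_t$ to get a lower bound by a power of $m$ or $n$ matching \eqref{eq:continuity-qalpha}.
\end{itemize}
Since $\sigma$ is fixed, the cyclic vector $\sigma^{1/2}$ is the same for both states, so
\[
|Q_{\alpha,t}(\rho\Vert\sigma)-Q_{\alpha,t}(\rho'\Vert\sigma)|\leq\lVert\Gamma_t(\rho,\sigma)^\alpha-\Gamma_t(\rho',\sigma)^\alpha\rVert_{2\to2}.
\]
By \eqref{eq:joint-slab-power-Lipschitz} with $a=m$ and $b=n^{-1}$, this norm is at most $\Lambda_\alpha(m,n)$ times the $\Gamma$-difference. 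By \eqref{eq:joint-slab-Gamma-first-input} that difference is at most $\mathfrak K_t(m,n)\varepsilon$. Finally, $|\log x-\log y|\leq|x-y|/\min\{x,y\}\leq|x-y|/q_\alpha$, and dividing by $|\alpha-1|$ gives $A_{\alpha,t}(m,n)\varepsilon$.

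\textbf{Second input, $0<\alpha<1$ or $1<\alpha\leq2$.} Here the DPI range lets us use order monotonicity in the second argument. We write $Q_{\alpha,t}$ through the push-through form \eqref{rw:eq:power-reduction-one}, namely $\operatorname{Tr}[\rho^{1+\beta(1-t)}X_t^\beta]$ with $\beta=\alpha-1\in(-1,0)\cup(0,1]$. The map $X\mapsto X^\beta$ is operator monotone for $0<\beta\leq1$ and operator monotone decreasing for $-1<\beta<0$. Combining this with \eqref{eq:joint-slab-order-comparison} and homogeneity gives
\[
c_\infty^{-|\beta|}Q_{\alpha,t}(\rho'\Vert\sigma)\leq Q_{\alpha,t}(\rho'\Vert\sigma')\leq c_\infty^{|\beta|}Q_{\alpha,t}(\rho'\Vert\sigma).
\]
The trace of $\rho^{1+\beta(1-t)}$ against the sandwiched power is preserved in order because the weight is positive: $\operatorname{Tr}[W Y]\leq\operatorname{Tr}[W Y']$ for $W\geq0$ and $Y\leq Y'$. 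Taking $(\alpha-1)^{-1}\log$ yields exactly $\log c_\infty=\log(1+\delta_\infty/n)$. The point to check is that cyclicity lets the positive weight sit symmetrically as $W^{1/2}X^\beta W^{1/2}$.

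\textbf{Second input, $\alpha>2$.} Now $X\mapsto X^\beta$ with $\beta>1$ is not operator monotone, so we argue by Lipschitz bounds. Write $X=X_t(\rho',\sigma)$ and $X'=X_t(\rho',\sigma')$. Then:
\begin{itemize}
\item $\lVert X-X'\rVert_\infty\leq\lVert\sigma^{-1}-\sigma'^{-1}\rVert_\infty\leq n^{-2}\delta_\infty$;
\item the power Lipschitz bound \eqref{eq:joint-slab-power-Lipschitz} on the interval $[m^t,n^{-1}]$ from \eqref{eq:joint-slab-X-spectrum} gives $\lVert X^\beta-X'^\beta\rVert_\infty\leq\beta n^{-\beta}m^{-t}\lVert X-X'\rVert_\infty$;
\item the weight $\rho'^{1+\beta(1-t)}$ has trace at most $1$.
\end{itemize}
Together with $Q\geq1$, this gives
\[
|\Delta\log Q|\leq\beta n^{-\beta-2}m^{-t}\delta_\infty .
\]
Dividing by $\beta$ gives $n^{-\alpha-1}m^{-t}\delta_\infty$, which is the constant in \eqref{eq:joint-slab-Dalpha-bound}.

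The triangle inequality combines the two steps, and $\delta_\infty\leq\delta$ gives the relaxed version. The main obstacle is the exact form of $q_\alpha$ for $0<\alpha<1$. I would try H\"older/Jensen on the support bounds of $\nu_t$ first, and if that fails, bound the trace formula \eqref{rw:eq:power-functional} directly using $\rho\geq m$ and $\sigma\geq n$.
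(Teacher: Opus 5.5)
Your proposal is correct and follows the paper's proof essentially step for step. The paper uses the same split into a first-input step at fixed $\sigma$: the lower bound $Q_{\alpha,t}\geq q_\alpha(m,n)$ from the spectral law $\nu_t$, the power-Lipschitz bound composed with the $\Gamma$-perturbation bound, and then $|\log x-\log y|\leq|x-y|/\min\{x,y\}$. For the second input it likewise uses the L\"owner--Heinz order comparison via the push-through form when $0<\alpha<1$ or $1<\alpha\leq2$, and for $\alpha>2$ the Lipschitz estimate for $x^{\alpha-1}$ on $[m^t,n^{-1}]$ against the weight $\rho^{t+\alpha(1-t)}$ of trace at most one.

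The step you left open closes as in the paper, but not by Jensen, which for concave $x^\alpha$ gives only $Q_{\alpha,t}\leq1$. On the support $[m,n^{-1}]$ one has pointwise $\lambda^\alpha\geq m^\alpha$ and $\lambda^\alpha=\lambda\cdot\lambda^{\alpha-1}\geq n^{1-\alpha}\lambda$. Integrating against $\nu_t$ with $\int\lambda\,\mathrm d\nu_t=1$ then gives $Q_{\alpha,t}\geq\max\{m^\alpha,n^{1-\alpha}\}=q_\alpha(m,n)$.
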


\begin{proof}
We first keep $\sigma$ fixed.  By
\eqref{eq:joint-slab-Gamma-spectrum} and
\eqref{eq:joint-slab-first-moment}, the spectral measure of
$\Gamma_t(\rho,\sigma)$ in the unit vector $\sigma^{1/2}$ has support
in $[m,n^{-1}]$ and first moment one.  For $0<\alpha<1$,
\begin{equation}
 \lambda^\alpha\geq m^\alpha,
 \qquad
 \lambda^\alpha
 =\lambda\lambda^{\alpha-1}\geq n^{1-\alpha}\lambda,
\end{equation}
whereas for $\alpha>1$ Jensen's inequality gives a moment at least
one.  Thus
\begin{equation}
 Q_{\alpha,t}(\rho\Vert\sigma)\geq q_\alpha(m,n).
 \label{eq:joint-slab-Q-lower-bound}
\end{equation}
The power estimate \eqref{eq:joint-slab-power-Lipschitz} and
\eqref{eq:joint-slab-Gamma-first-input} imply
\begin{equation}
 \left|Q_{\alpha,t}(\rho\Vert\sigma)
       -Q_{\alpha,t}(\rho'\Vert\sigma)\right|
 \leq\Lambda_\alpha(m,n)\mathfrak K_t(m,n)\varepsilon.
 \label{eq:joint-slab-Q-first-input}
\end{equation}
We now use
$|\log x-\log y|\leq|x-y|/\min\{x,y\}$.  Separating the power
functional from its logarithmic normalization parallels the
sandwiched R\'enyi strategy of
\cite[Example~4.3 and the proof of Theorem~4.4]
{bluhm2026unified}, although the functional here is different.

For the second input, \eqref{rw:eq:power-functional-push-through}
gives
\begin{equation}
 Q_{\alpha,t}(\rho\Vert\sigma)
 =\operatorname{Tr}\!\left[
 \rho^{t+\alpha(1-t)}
 X_t(\rho,\sigma)^{\alpha-1}\right].
 \label{eq:joint-slab-Q-X-representation}
\end{equation}
By the L\"owner--Heinz inequality and order reversal under inversion
\cite[Theorem~1.5.9, p.~22]{bhatia2009positive},
$x^{\alpha-1}$ is operator monotone decreasing for $0<\alpha<1$
and operator monotone increasing for $1<\alpha\leq2$.  Therefore
\eqref{eq:joint-slab-order-comparison} gives
\begin{equation}
 c_\infty^{-|\alpha-1|}Q_{\alpha,t}(\rho\Vert\sigma)
 \leq Q_{\alpha,t}(\rho\Vert\sigma')
 \leq c_\infty^{|\alpha-1|}Q_{\alpha,t}(\rho\Vert\sigma),
 \label{eq:joint-slab-Q-order-comparison}
\end{equation}
which proves the logarithmic second-input term.

If $\alpha>2$, then $t+\alpha(1-t)\geq1$, so
$\lVert\rho^{t+\alpha(1-t)}\rVert_1\leq1$.  Applying
\eqref{eq:joint-slab-power-Lipschitz} to $x^{\alpha-1}$ on
$[m^t,n^{-1}]$, and using
\begin{equation}
 \lVert X_t(\rho,\sigma)-X_t(\rho,\sigma')\rVert_\infty
 \leq n^{-2}\delta_\infty,
\end{equation}
gives
\begin{equation}
 \left|Q_{\alpha,t}(\rho\Vert\sigma)
       -Q_{\alpha,t}(\rho\Vert\sigma')\right|
 \leq(\alpha-1)n^{-\alpha-1}m^{-t}\delta_\infty.
\end{equation}
We use \eqref{eq:joint-slab-Q-lower-bound}, which equals one in this
range, and divide by $\alpha-1$.
\end{proof}

\begin{remark}[Fixed-support compression]
The two preceding continuity theorems remain valid for states that
are singular only through a common unused ambient subspace.  One
compresses all four inputs to their fixed common active support and
uses the positive lower eigenvalues there.  This is not a uniform
extension to changing supports or to eigenvalues approaching zero.
\end{remark}

\begin{remark}[The order-one limit]
The coefficient $A_{\alpha,t}$ diverges as $\alpha\to1$ because an
absolute error in $Q_{\alpha,t}$ is divided by $|\alpha-1|$.
\Cref{theo:joint-slab-Dt-continuity} is therefore the genuine
order-one result, not the limit of
\Cref{theo:joint-slab-Dalpha-continuity}.  The latter remains a
continuity theorem for the algebraic R\'enyi expression when
$\alpha>2$, but no DPI claim is made in that range.
\end{remark}

\subsubsection{Umegaki and BS specializations}
\label{subsec:joint-continuity-endpoints}

For $f(x)=x\log x$, an explicit operator-Lipschitz constant in
\Cref{theo:joint-slab-general-tf} is
\begin{equation}
 \mathsf L_{x\log x}(m,n)
 \leq\max\{\log m^{-1},\log n^{-1}\}+\frac1{mn}.
 \label{eq:xlogx-operator-Lipschitz}
\end{equation}
Indeed,
\begin{equation}
 A\log A-B\log B=(A-B)\log A+B(\log A-\log B),
\end{equation}
and \eqref{eq:joint-slab-log-Lipschitz} applies on $J_{m,n}$.  The
other constant in the general theorem is
\begin{equation}
 \Omega_{x\log x}(m,n)
 =n^{-1}\log n^{-1}
 -\begin{cases}
  -e^{-1},&m\leq e^{-1},\\
  m\log m,&m>e^{-1}.
 \end{cases}
 \label{eq:xlogx-oscillation}
\end{equation}
The direct logarithmic theorem is sharper.  To compare below with
trace-distance bounds from the literature, we use
$\delta_\infty\leq\delta$.  The resulting endpoint estimates read
\begin{gather}
 \boxed{\begin{aligned}
 |D(\rho\Vert\sigma)-D(\rho'\Vert\sigma')|
 &\leq\log((mn)^{-1})\varepsilon
 +\log\left(1+\frac{\delta}{n}\right)
 \end{aligned}}
 \label{eq:our-joint-Umegaki-slab}\\
 \boxed{\begin{aligned}
 |\widehat D(\rho\Vert\sigma)
       -\widehat D(\rho'\Vert\sigma')|
 &\leq
 \left[\log((mn)^{-1})+n^{-1}m^{-3/2}\right]\varepsilon
 +\log\left(1+\frac{\delta}{n}\right)
 \end{aligned}}
 \label{eq:our-joint-BS-slab}
\end{gather}
Thus the same proof supplies two-input estimates for the Umegaki and
BS entropies and every $t$-relative entropy.

\paragraph{Comparison at the Umegaki endpoint.}
We assume that $\sigma,\sigma'$ are states,
$\delta=T(\sigma,\sigma')$, and
\begin{equation}
 \rho-\rho'=\varepsilon(\rho_+-\rho_-)
 \label{eq:Umegaki-normalized-Jordan}
\end{equation}
is the normalized Jordan decomposition.  We set
\begin{equation}
 M:=\exp\!\left[
 \max\{D_{\max}(\rho_+\Vert\sigma),
       D_{\max}(\rho_-\Vert\sigma')\}\right].
 \label{eq:Umegaki-Jordan-M}
\end{equation}
For the exact value $\varepsilon=T(\rho,\rho')$,
\cite[Corollary~1, Eq.~(69)]{audenaert2025fundamental} proves
\begin{equation}
 \begin{aligned}
 |D(\rho\Vert\sigma)-D(\rho'\Vert\sigma')|
 &\leq\varepsilon\log(M-1)+h_2(\varepsilon)+\log\left(1+\frac{\delta}{n}\right).
 \end{aligned}
 \label{eq:recent-two-input-Umegaki}
\end{equation}
If only $T(\rho,\rho')\leq\varepsilon_0$ is known, we take the monotone
envelope.  The state-dependent value of $M$ in
\eqref{eq:Umegaki-Jordan-M} depends on the Jordan states and is
therefore retained in that version of the result:
\begin{equation}
 \begin{aligned}
 |D(\rho\Vert\sigma)-D(\rho'\Vert\sigma')|
 &\leq
 \begin{cases}
  \varepsilon_0\log(M-1)+h_2(\varepsilon_0),
  &0\leq\varepsilon_0<1-M^{-1},\\
  \log M,&1-M^{-1}\leq\varepsilon_0\leq1
 \end{cases}\\
 &\quad+\log\left(1+\frac{\delta}{n}\right).
 \end{aligned}
 \label{eq:recent-two-input-Umegaki-envelope}
\end{equation}
Since any state $\omega$ satisfies
$D_{\max}(\omega\Vert\sigma_i)\leq\log n^{-1}$ when
$\sigma_i\geq n\1$, we can further replace $M$ by $n^{-1}$.
With $g_K$ from \eqref{eq:def-gK-continuity}, this gives a form whose
constants depend only on the prescribed positive lower eigenvalue
bound:
\begin{equation}
 |D(\rho\Vert\sigma)-D(\rho'\Vert\sigma')|
 \leq g_{n^{-1}}(\varepsilon_0)
 +\log\left(1+\frac{\delta}{n}\right).
 \label{eq:recent-two-input-Umegaki-slab}
\end{equation}
Unlike \eqref{eq:our-joint-Umegaki-slab}, this requires no positive
lower eigenvalue bound for $\rho$ or $\rho'$.  It improves the earlier
joint ALAFF estimate \cite[Theorem~5.13, Eq.~(22)]
{bluhm2023continuity}.  With the correction in
\cite{bluhm2024corrections}, that estimate reads
\begin{equation}
 \begin{aligned}
 |D(\rho\Vert\sigma)-D(\rho'\Vert\sigma')|
 &\leq
 \left(\varepsilon+\frac{3\delta}{1-n/2}\right)\log(2n^{-1})
 +(1+\varepsilon)h_2\!\left(\frac{\varepsilon}{1+\varepsilon}\right)\\
 &\quad
 +2\log\left(1+\frac{2n^{-1}\delta}{1-n/2+\delta}\right).
 \end{aligned}
 \label{eq:ALAFF-two-input-Umegaki-comparison}
\end{equation}
Among the Umegaki estimates derived in this paper, the direct
logarithmic estimate \eqref{eq:our-joint-Umegaki-slab} is the
strongest: it improves the specialization of
\Cref{theo:joint-slab-general-tf}, which does not exploit the
logarithmic structure.  Once the estimates from the literature are
included, however, there is no single strongest formula.  We denote the
right-hand sides of \eqref{eq:our-joint-Umegaki-slab},
\eqref{eq:recent-two-input-Umegaki-slab}, and
\eqref{eq:ALAFF-two-input-Umegaki-comparison} by
$\mathcal B_{\rm dir}$, $\mathcal B_{\rm Aud}$, and
$\mathcal B_{\rm Blu}$, respectively.  In the nonconstant branch of
$g_{n^{-1}}$, the first two satisfy the exact comparison
\begin{equation}
 \mathcal B_{\rm dir}(\varepsilon,\delta)
 -\mathcal B_{\rm Aud}(\varepsilon,\delta)
 =\varepsilon\log\frac{1}{m(1-n)}-h_2(\varepsilon).
 \label{eq:Umegaki-direct-Audenaert-difference}
\end{equation}
Thus the direct estimate is stronger sufficiently close to
$\varepsilon=0$, whereas the Audenaert estimate can be stronger for
larger perturbations.  Under the common lower bounds used here, the
strongest uniform consequence depending only on the prescribed lower
eigenvalue bounds that is recorded in this subsection is
therefore
\begin{equation}
 \min\left\{
   \mathcal B_{\rm dir}(\varepsilon,\delta),
   \mathcal B_{\rm Aud}(\varepsilon,\delta),
   \log n^{-1}
 \right\}.
 \label{eq:best-Umegaki-floor-bound}
\end{equation}
The last term is the common diameter bound: from
$\sigma,\sigma'\geq n\1$ one has
$0\leq D(\rho\Vert\sigma),D(\rho'\Vert\sigma')\leq\log n^{-1}$.
The state-dependent estimate
\eqref{eq:recent-two-input-Umegaki-envelope} may be sharper still;
it is not a uniform function of $m,n,\varepsilon,$ and $\delta$ and
is consequently not part of the comparison below.

\Cref{fig:umegaki-continuity-bounds-comparison} displays the three
uniform formulas on the admissible qubit slice
$m=n=0.1$ and $\varepsilon=\delta=u\in[0,0.8]$.  The endpoint $0.8$
is the largest possible trace distance between qubit states bounded
below by $0.1\1$.  The nonzero crossing $u_\star$ of the direct and
Audenaert curves is characterized by
\begin{equation}
 h_2(u_\star)=u_\star\log\frac{100}{9}.
 \label{eq:Umegaki-CB-crossover}
\end{equation}

\begin{figure}[t]
 \centering
 \includegraphics[width=0.99\linewidth]
 {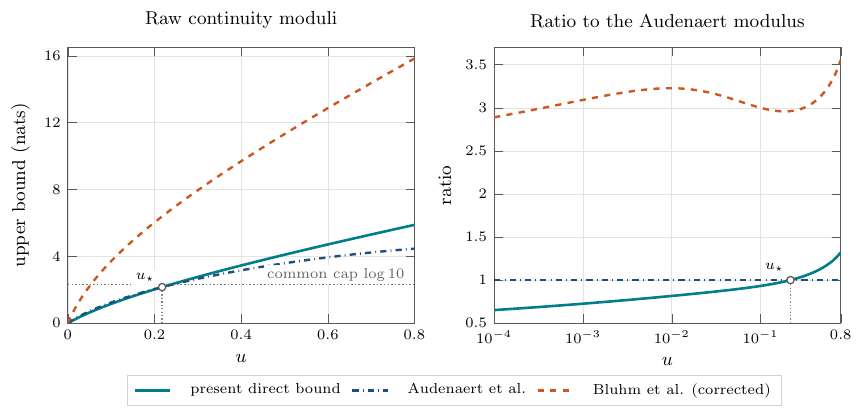}
 \caption{Comparison of joint Umegaki continuity bounds for invertible
 qubit states with
 $\rho,\rho',\sigma,\sigma'\geq0.1\1$ and
 $T(\rho,\rho')=T(\sigma,\sigma')=u$.
 The teal curve is the right-hand side of
 \eqref{eq:our-joint-Umegaki-slab}, the direct logarithmic estimate
 obtained here.  It is locally strongest;
 the estimate depending only on the lower eigenvalue bound from
 \cite[Corollary~1]{audenaert2025fundamental} becomes stronger after
 $u_\star$, while the corrected ALAFF estimate of
 \cite[Theorem~5.13]{bluhm2023continuity} and
 \cite{bluhm2024corrections} is larger throughout this slice.  The
 left panel shows the raw analytic moduli before applying the common
 cap $\log 10$, and the right panel shows ratios relative to the
 Audenaert modulus; a smaller ratio means a stronger bound.}
 \label{fig:umegaki-continuity-bounds-comparison}
\end{figure}

Related integral-representation bounds for Umegaki and Petz
quasi-relative entropies appear in
\cite[Theorems~3.1, 3.3, and~4.1]{vershynina2019upper}.

\paragraph{Comparison at the BS endpoint.}
We take states $\sigma,\sigma'$ with
$\sigma,\sigma'\geq n\1$.  Combining the fixed-reference ALAFF
bound of \cite[Corollary~6.13]{bluhm2023continuity} with the
second-input order comparison
\eqref{eq:joint-slab-order-comparison} gives the genuine two-input
estimate
\begin{equation}
 \begin{aligned}
 |\widehat D(\rho\Vert\sigma)
       -\widehat D(\rho'\Vert\sigma')|
 &\leq\varepsilon\log n^{-1}
 +(1+\varepsilon)n^{-1}
 h_2\!\left(\frac{\varepsilon}{1+\varepsilon}\right)
 +\log\left(1+\frac{\delta}{n}\right).
 \end{aligned}
 \label{eq:ALAFF-BS-two-input}
\end{equation}
Indeed, we insert $\widehat D(\rho'\Vert\sigma)$ between the two terms,
apply the cited result to the change $\rho\mapsto\rho'$, and use
operator monotonicity of the logarithm exactly as in the last part of
the proof of \Cref{theo:joint-slab-Dt-continuity} for the change
$\sigma\mapsto\sigma'$, followed by
$\delta_\infty\leq\delta$.  The cited paper itself proves only the
fixed-reference BS bound and explicitly omits the technically more
involved two-input analogue
\cite[Section~6.2.4]{bluhm2023continuity}; hence
\eqref{eq:ALAFF-BS-two-input} is the corresponding two-input extension
obtained here by this decomposition.  It requires no lower bound on
$\rho$ or $\rho'$, whereas \eqref{eq:our-joint-BS-slab} is Lipschitz
in $\varepsilon$ under the common lower bound $m\1$.

On this common domain, the direct and ALAFF-based estimates admit an
exact comparison.  We denote the right-hand sides of
\eqref{eq:our-joint-BS-slab} and \eqref{eq:ALAFF-BS-two-input} by
$\mathcal B_{\rm dir}^{\rm BS}$ and
$\mathcal B_{\rm ALAFF}^{\rm BS}$, respectively, and we set
\begin{equation}
 \begin{aligned}
 \Psi(\varepsilon)
 &:= (1+\varepsilon)
 h_2\!\left(\frac{\varepsilon}{1+\varepsilon}\right)
  =(1+\varepsilon)\log(1+\varepsilon)
    -\varepsilon\log\varepsilon,\\
 \kappa_{m,n}&:=m^{-3/2}+n\log m^{-1},
 \qquad
 r(\varepsilon):=\frac{\Psi(\varepsilon)}{\varepsilon}.
 \end{aligned}
 \label{eq:BS-comparison-functions}
\end{equation}
Here and below $\Psi(0):=0$.  The common second-input term cancels,
and direct calculation gives
\begin{equation}
 \mathcal B_{\rm dir}^{\rm BS}(\varepsilon,\delta)
 -\mathcal B_{\rm ALAFF}^{\rm BS}(\varepsilon,\delta)
 =\frac{\varepsilon}{n}\bigl[\kappa_{m,n}-r(\varepsilon)\bigr].
 \label{eq:BS-direct-ALAFF-difference}
\end{equation}
In particular, the ordering is independent of $\delta$.  Moreover,
\begin{equation}
 r'(\varepsilon)
 =-\frac{\log(1+\varepsilon)}{\varepsilon^2}<0,
 \qquad
 \lim_{\varepsilon\downarrow0}r(\varepsilon)=+\infty,
 \qquad
 r(1)=2\log2.
 \label{eq:BS-comparison-r-monotonicity}
\end{equation}
In every nontrivial $d$-dimensional state space, $m\leq d^{-1}$ and
hence $\kappa_{m,n}>2\log2$.  There is therefore a unique positive
value $\varepsilon_\star\in(0,1)$ at which the two analytic bounds
coincide, characterized by
\begin{equation}
 (1+\varepsilon_\star)\log(1+\varepsilon_\star)
 -\varepsilon_\star\log\varepsilon_\star
 =\kappa_{m,n}\varepsilon_\star.
 \label{eq:BS-CB-crossover}
\end{equation}
The direct estimate is smaller for
$0<\varepsilon<\varepsilon_\star$, whereas the ALAFF-based estimate
is smaller for $\varepsilon>\varepsilon_\star$.  This local advantage
of the direct bound reflects
$\Psi(\varepsilon)=\varepsilon(1-\log\varepsilon)+O(\varepsilon^2)$:
the direct modulus is linear, whereas the ALAFF-based first-input term
has order $\varepsilon\log(1/\varepsilon)$.
The positive lower
eigenvalue bound also restricts the possible range of $\varepsilon$:
writing
$\rho=m\1+(1-dm)\tau$ and
$\rho'=m\1+(1-dm)\tau'$ gives
$\varepsilon\leq1-dm$.  Thus $\varepsilon_\star<1-dm$ precisely when
$\kappa_{m,n}>r(1-dm)$.  Equality places
the crossover at the endpoint, whereas
$\kappa_{m,n}<r(1-dm)$ makes the direct estimate smaller throughout
the possible range.

Both estimates have the common diameter cap $\log n^{-1}$.  Indeed,
$\sigma\geq n\1$ implies
\begin{equation}
 \rho^{1/2}\sigma^{-1}\rho^{1/2}
 \leq n^{-1}\rho\leq n^{-1}\1,
\end{equation}
and consequently
$0\leq\widehat D(\rho\Vert\sigma)\leq\log n^{-1}$, with the same
statement for the primed pair.  The strongest uniform consequence of
the two formulas on their common domain is therefore
\begin{equation}
 \min\left\{
  \mathcal B_{\rm dir}^{\rm BS}(\varepsilon,\delta),
  \mathcal B_{\rm ALAFF}^{\rm BS}(\varepsilon,\delta),
  \log n^{-1}
 \right\}.
 \label{eq:best-BS-floor-bound}
\end{equation}

For the qubit choice $m=n=0.1$ used in
\Cref{fig:umegaki-continuity-bounds-comparison}, the crossover occurs
at approximately $4\times10^{-14}$.  Thus a linear-scale plot on that
slice would hide the interval where the direct estimate is stronger.
To display both regimes, \Cref{fig:BS-continuity-bounds-comparison}
uses the admissible qubit slice
$m=n=0.4$ and $\varepsilon=\delta=u\in[0,0.2]$.

\begin{figure}[t]
 \centering
 \includegraphics[width=0.99\linewidth]
 {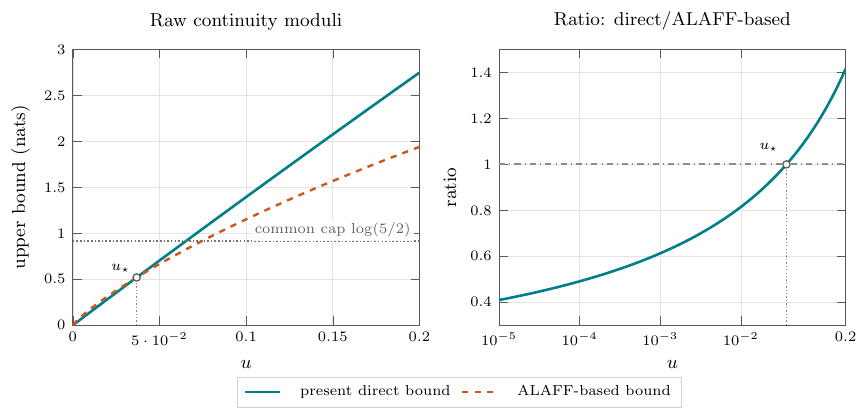}
 \caption{Comparison of joint BS continuity bounds for invertible
 qubit states with
 $\rho,\rho',\sigma,\sigma'\geq0.4\1$ and
 $T(\rho,\rho')=T(\sigma,\sigma')=u$.
 The teal curve is the right-hand side of
 \eqref{eq:our-joint-BS-slab}, the direct logarithmic estimate proved
 here.  The orange curve is the right-hand side of
 \eqref{eq:ALAFF-BS-two-input}, obtained here by combining the
 fixed-reference estimate of
 \cite[Corollary~6.13]{bluhm2023continuity} with the second-input
 order comparison.  The direct analytic modulus is smaller below
 $u_\star$, and the ALAFF-based analytic modulus is smaller above it.
 The left panel shows
 the uncapped analytic moduli together with their common diameter cap
 $\log(5/2)$; the right panel shows their ratio, with values below one
 favouring the direct bound.}
 \label{fig:BS-continuity-bounds-comparison}
\end{figure}

Finally, \Cref{theo:joint-slab-general-tf} covers arbitrary $t$ and
every admissible operator-Lipschitz generator.  The sandwiched R\'enyi
bounds of \cite[Sections~4--5]{bluhm2026unified} concern a different
family.

\FloatBarrier

%% file: continuity_conditional_quantities.tex
\subsection{Conditional geodesic entropies}
\label{subsec:conditional-geodesic-entropies}

\subsubsection{Definitions}

For a bipartite state $\omega_{AB}$, we define the own-marginal, or
down-arrow, conditional quantities
\begin{align}
 H_t^\downarrow(A|B)_\omega
 &:=-D^t\!\left(\omega_{AB}\middle\Vert
                 \1_A\otimes\omega_B\right),
 \label{eq:def-Ht-down-continuity}\\
 H_{\alpha,t}^\downarrow(A|B)_\omega
 &:=-D_\alpha^t\!\left(\omega_{AB}\middle\Vert
                 \1_A\otimes\omega_B\right),
 \qquad \alpha>0,\quad\alpha\neq1.
 \label{eq:def-Halpha-t-down-continuity}
\end{align}
Equivalently, with $\pi_A=\1_A/d_A$,
\begin{align}
 H_t^\downarrow(A|B)_\omega
 &=\log d_A-D^t(\omega_{AB}\Vert\pi_A\otimes\omega_B),
 \label{eq:Ht-down-normalized-reference}\\
 H_{\alpha,t}^\downarrow(A|B)_\omega
 &=\log d_A-D_\alpha^t(\omega_{AB}\Vert
                               \pi_A\otimes\omega_B).
 \label{eq:Halpha-t-down-normalized-reference}
\end{align}
At the Petz endpoint,
\begin{equation}
 H_0^\downarrow(A|B)=H(A|B),
 \qquad
 H_{\alpha,0}^\downarrow(A|B)_\omega
 =-\overline D_\alpha
   (\omega_{AB}\Vert\1_A\otimes\omega_B).
 \label{eq:conditional-Petz-endpoints}
\end{equation}
The second expression is the \emph{nonoptimized}, or down-arrow,
Petz--R\'enyi conditional entropy.  This down-arrow/up-arrow
terminology and the optimized construction follow the standard
conditional R\'enyi definitions
\cite[Section~II, Eqs.~(13)--(16)]{tomamichel2014relating}.
For comparison, we define the
optimized Petz quantity
\begin{equation}
 H_{\alpha,0}^\uparrow(A|B)_\omega
 :=-\inf_{\tau_B\in\mathcal S(B)}
 \overline D_\alpha
     (\omega_{AB}\Vert\1_A\otimes\tau_B).
 \label{eq:def-Halpha-zero-up}
\end{equation}
For $0<\alpha<1$, the quantum Sibson identity
\cite[Lemma~3 in the Supplemental Material]{sharma2013fundamental},
also stated in \cite[Lemma~1]{tomamichel2014relating}, gives
\begin{equation}
 H_{\alpha,0}^\uparrow(A|B)_\omega
 =\frac{\alpha}{1-\alpha}
 \log\operatorname{Tr}\!\left[
 \bigl(\operatorname{Tr}_A\omega_{AB}^{\alpha}\bigr)^{1/\alpha}
 \right].
 \label{eq:Sibson-optimized-Petz}
\end{equation}

\subsubsection{Continuity under positive lower eigenvalue bounds}
\label{subsec:conditional-spectral-slab-bounds}

We now apply the joint divergence bounds from
\Cref{subsec:joint-continuity-log-power}, retaining the separate
operator-norm parameter $\delta_\infty$ for the moving reference.
This distinction is what yields the sharper conditional constants
below.  The constants
$\mathfrak K_t$, $C_t$, $q_\alpha$, $\Lambda_\alpha$, and
$A_{\alpha,t}$ are defined in
\eqref{eq:continuity-frak-Kt}--\eqref{eq:continuity-A-alpha-t}.

\begin{theorem}[Conditional continuity under positive lower
eigenvalue bounds]
\label{theo:conditional-geodesic-slab-continuity}
We consider two invertible bipartite states, denoted by
$\omega_{AB}$ and $\zeta_{AB}$, satisfying
\begin{equation}
 \omega_{AB},\zeta_{AB}\geq m\1_{AB},
 \qquad
 \omega_B,\zeta_B\geq n\1_B,
 \label{eq:conditional-common-spectral-slab}
\end{equation}
and set
\begin{equation}
 \varepsilon:=T(\omega_{AB},\zeta_{AB}),
 \qquad
 \delta:=T(\omega_B,\zeta_B).
 \label{eq:conditional-two-distances}
\end{equation}
Then, for every $t\in[0,1]$,
\begin{equation}
 \left|H_t^\downarrow(A|B)_\omega
       -H_t^\downarrow(A|B)_\zeta\right|
 \leq C_t(m,n)\varepsilon
       +\log\left(1+\frac{\delta}{n}\right).
 \label{eq:conditional-Dt-slab-bound}
\end{equation}
For every $t\in[0,1]$, $\alpha>0$, $\alpha\neq1$, we set
$\widetilde n_A:=n/d_A$.  Then
\begin{equation}
 \begin{aligned}
 &\left|H_{\alpha,t}^\downarrow(A|B)_\omega
       -H_{\alpha,t}^\downarrow(A|B)_\zeta\right|\leq A_{\alpha,t}(m,\widetilde n_A)\varepsilon
 +\begin{cases}
 \displaystyle\log\left(1+\frac{\delta}{n}\right),
 &0<\alpha<1\ \text{or}\ 1<\alpha\leq2,\\[2mm]
 \displaystyle d_A^\alpha n^{-\alpha-1}m^{-t}\delta,
 &\alpha>2.
 \end{cases}
 \end{aligned}
 \label{eq:conditional-Dalpha-t-slab-bound}
\end{equation}
The estimates are explicit, but no sharpness is asserted.
\end{theorem}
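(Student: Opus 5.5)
The plan is to reduce both estimates to the joint bounds of \Cref{theo:joint-slab-Dt-continuity} and \Cref{theo:joint-slab-Dalpha-continuity}. The choice of reference in each case is dictated by the hypotheses of those theorems.

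For the logarithmic estimate \eqref{eq:conditional-Dt-slab-bound}, I use the unnormalized form \eqref{eq:def-Ht-down-continuity}. I take
\[
 \rho=\omega_{AB},\quad \rho'=\zeta_{AB},\quad \sigma=\1_A\otimes\omega_B,\quad \sigma'=\1_A\otimes\zeta_B .
\]
\Cref{theo:joint-slab-Dt-continuity} asks for invertible positive contractions of equal trace, and both $\sigma$ and $\sigma'$ qualify: they are contractions because $\omega_B,\zeta_B\leq\1$, and their common trace is $d_A$. The lower bounds also hold, since $\sigma,\sigma'\geq n\1$ by \eqref{eq:conditional-common-spectral-slab}. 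The key observation is that the relevant perturbation size is the operator norm, namely
\[
 \delta_\infty=\lVert\1_A\otimes(\omega_B-\zeta_B)\rVert_\infty=\lVert\omega_B-\zeta_B\rVert_\infty\leq T(\omega_B,\zeta_B)=\delta,
\]
where the last inequality is the Jordan decomposition of a traceless difference of states. The trace distance of $\sigma,\sigma'$ instead carries a factor $d_A$, and it never enters. The bound is therefore $C_t(m,n)\varepsilon+\log(1+\delta/n)$, and the sign change from the minus in the definition of $H_t^\downarrow$ is harmless under absolute values. One point needs care: the proof of that theorem only uses \eqref{eq:joint-slab-X-spectrum}--\eqref{eq:joint-slab-order-comparison}, and these require $\sigma\leq\1$ and $\rho\leq\1$, both of which hold here.

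For the R\'enyi estimate \eqref{eq:conditional-Dalpha-t-slab-bound}, \Cref{theo:joint-slab-Dalpha-continuity} requires the references to be states. I therefore use \eqref{eq:Halpha-t-down-normalized-reference} with
\[
 \sigma=\pi_A\otimes\omega_B,\qquad \sigma'=\pi_A\otimes\zeta_B .
\]
The constant $\log d_A$ cancels in the difference. These references satisfy $\sigma,\sigma'\geq (n/d_A)\1=\widetilde n_A\1$, which is why $A_{\alpha,t}(m,\widetilde n_A)$ appears in the first-input term. For the second input, $\delta_\infty=d_A^{-1}\lVert\omega_B-\zeta_B\rVert_\infty\leq\delta/d_A$. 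Feeding this into the theorem's second-input term gives two cases:
\[
 \log\Bigl(1+\frac{\delta_\infty}{\widetilde n_A}\Bigr)\leq\log\Bigl(1+\frac{\delta}{n}\Bigr)\quad(\alpha\leq 2),
\]
and
\[
 \widetilde n_A^{-\alpha-1}m^{-t}\delta_\infty\leq d_A^{\alpha+1}n^{-\alpha-1}m^{-t}\,\frac{\delta}{d_A}=d_A^{\alpha}n^{-\alpha-1}m^{-t}\delta\quad(\alpha>2).
\]
These are exactly the stated terms.

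The only real obstacle is bookkeeping. First, the logarithmic theorem must be applied to nonnormalized references, which its hypotheses explicitly allow. Second, one must consistently track the operator-norm parameter $\delta_\infty$ rather than the trace distance of the tensored references; that choice is what removes the $d_A$ factors in the first case and reduces them to $d_A^\alpha$ in the second. I would also check that the second-input order comparison in the R\'enyi case uses only $c_\infty=1+\delta_\infty/\widetilde n_A$. The ratio $\delta_\infty/\widetilde n_A$ is invariant under the rescaling by $d_A$, which confirms the logarithmic term.
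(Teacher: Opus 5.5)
Your proposal is correct and matches the paper's proof essentially step for step. For the logarithmic bound you apply \Cref{theo:joint-slab-Dt-continuity} to the unnormalized references $\1_A\otimes\omega_B$, $\1_A\otimes\zeta_B$ with operator-norm parameter $\lVert\omega_B-\zeta_B\rVert_\infty\leq\delta$; for the R\'enyi bound you apply \Cref{theo:joint-slab-Dalpha-continuity} to $\pi_A\otimes\omega_B$, $\pi_A\otimes\zeta_B$ with floor $n/d_A$ and $\delta_\infty\leq\delta/d_A$, which yields exactly the $\log(1+\delta/n)$ and $d_A^\alpha n^{-\alpha-1}m^{-t}\delta$ terms.
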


\begin{proof}
Put
\begin{equation*}
 \delta_{\infty,B}:=\lVert\omega_B-\zeta_B\rVert_\infty\leq\delta.
\end{equation*}
The inequality follows from the Jordan decomposition of the
traceless Hermitian operator $\omega_B-\zeta_B$.  For the
unnormalized reference operators one has
\begin{equation*}
 \begin{aligned}
 T(\1_A\otimes\omega_B,\1_A\otimes\zeta_B)
 &=d_A\delta,\\
 \lVert\1_A\otimes(\omega_B-\zeta_B)\rVert_\infty
 &=\delta_{\infty,B}.
 \end{aligned}
\end{equation*}
Thus their trace distance carries an unnecessary factor $d_A$, but
the operator-norm perturbation parameter in
\Cref{theo:joint-slab-Dt-continuity} does not.  The two references
have common lower eigenvalue bound $n$, so that theorem gives
\begin{equation*}
 \begin{aligned}
 &\left|D^t(\omega_{AB}\Vert\1_A\otimes\omega_B)
       -D^t(\zeta_{AB}\Vert\1_A\otimes\zeta_B)\right|\\
 &\qquad\leq C_t(m,n)\varepsilon
       +\log\left(1+\frac{\delta_{\infty,B}}{n}\right)
 \leq C_t(m,n)\varepsilon
       +\log\left(1+\frac{\delta}{n}\right),
 \end{aligned}
\end{equation*}
which is \eqref{eq:conditional-Dt-slab-bound}.

For the R\'enyi quantity, we use
\eqref{eq:Halpha-t-down-normalized-reference}.  The normalized
reference $\pi_A\otimes\omega_B$ has floor $n/d_A$, while
\begin{equation}
 T(\pi_A\otimes\omega_B,\pi_A\otimes\zeta_B)=\delta,
 \qquad
 \lVert\pi_A\otimes(\omega_B-\zeta_B)\rVert_\infty
 =\frac{\delta_{\infty,B}}{d_A}.
\end{equation}
Since $\widetilde n_A=n/d_A$, the second-input terms in
\Cref{theo:joint-slab-Dalpha-continuity} become
\begin{equation*}
 \log\left(1+\frac{\delta_{\infty,B}/d_A}{n/d_A}\right)
 =\log\left(1+\frac{\delta_{\infty,B}}{n}\right)
 \leq\log\left(1+\frac{\delta}{n}\right)
\end{equation*}
for $0<\alpha<1$ and $1<\alpha\leq2$, and
\begin{equation*}
 \left(\frac{n}{d_A}\right)^{-\alpha-1}
 m^{-t}\frac{\delta_{\infty,B}}{d_A}
 =d_A^\alpha n^{-\alpha-1}m^{-t}\delta_{\infty,B}
 \leq d_A^\alpha n^{-\alpha-1}m^{-t}\delta
\end{equation*}
for $\alpha>2$.  Together with the first-input term
$A_{\alpha,t}(m,\widetilde n_A)\varepsilon$, these estimates prove
\eqref{eq:conditional-Dalpha-t-slab-bound}.  In particular, the
factor is $d_A^\alpha$, rather than the weaker $d_A^{\alpha+1}$ that
would result from using only trace distance in the second input.
\end{proof}

\begin{remark}[Fixed local supports]
The theorem also covers states embedded singularly in a larger
ambient space when there are fixed local projections $P_A,P_B$ such
that both $\omega_{AB}$ and $\zeta_{AB}$ are invertible on
$P_A\mathcal H_A\otimes P_B\mathcal H_B$ and vanish on its
orthogonal complement.  One applies the result in these supported
matrix algebras and computes the dimensions and positive eigenvalue
bounds there.  Varying supports are not covered.
\end{remark}

The coefficient in \eqref{eq:conditional-Dalpha-t-slab-bound}
diverges as $\alpha\to1$, because an absolute perturbation of the
power functional is divided by $|\alpha-1|$.  The order-one bound
\eqref{eq:conditional-Dt-slab-bound} must therefore be proved
separately.  For $\alpha>2$,
\eqref{eq:conditional-Dalpha-t-slab-bound} is a continuity estimate
for the algebraic R\'enyi expression; it does not assert DPI.

\subsubsection{Sharp endpoint benchmarks}
\label{subsec:conditional-sharp-endpoints}

The result proved in this paper is
\Cref{theo:conditional-geodesic-slab-continuity}.  In this subsection
we make its Umegaki, down-arrow Petz--R\'enyi, and BS endpoint
specializations explicit.  We state external bounds only when they
apply to the same conditional quantity and hence permit a direct
comparison.  The optimized Petz--R\'enyi result mentioned below
concerns a different quantity and is therefore not reproduced.

We recall $g_K$ from \eqref{eq:def-gK-continuity} and set
\begin{equation}
 r_{A:B}:=\min\{d_A,d_B\},
 \qquad D_{A:B}:=d_A\cdot r_{A:B}.
 \label{eq:effective-conditional-dimension}
\end{equation}
Here $\operatorname{SN}(\omega_{AB})$ denotes the Schmidt number: the
least integer $s$ for which $\omega_{AB}$ has a convex decomposition
into pure states of Schmidt rank at most $s$
\cite{terhal2000schmidt}.
\begin{theorem}[Sharp Umegaki conditional modulus]
\label{theo:sharp-order-one-conditional-modulus}
For arbitrary states $\rho_{AB},\sigma_{AB}$ satisfying
$T(\rho,\sigma)\leq u$,
\begin{equation}
 \left|H(A|B)_\rho-H(A|B)_\sigma\right|
 \leq
 \begin{cases}
  h_2(u)+u\log(D_{A:B}-1),
    &0\leq u\leq1-D_{A:B}^{-1},\\
  \log D_{A:B},
    &1-D_{A:B}^{-1}\leq u\leq1,
 \end{cases}
 \label{eq:sharp-order-one-effective-dimension}
\end{equation}
when $D_{A:B}>1$; if $D_{A:B}=1$, both sides vanish.  Here
$h_2(u)=-u\log u-(1-u)\log(1-u)$ is the binary entropy.  The modulus
is optimal for every distance constraint.
\end{theorem}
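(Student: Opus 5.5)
This is essentially an imported endpoint result; the citation indicates it is \cite[Theorem~1.1 and Proposition~2.1]{berta2026sharpconditional}, with the real-parameter modulus $g_K$ as in \eqref{eq:def-gK-continuity}. The proof has two halves: an upper bound with effective dimension $D_{A:B}$, and an optimality construction. I would organize it as follows.

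\emph{Upper bound.} Write $H(A|B)=\log d_A-D(\rho_{AB}\Vert\pi_A\otimes\rho_B)$ and apply an Alicki--Fannes--Winter type coupling. Set $\varepsilon=T(\rho,\sigma)$ and decompose $\rho-\sigma=\varepsilon(\gamma_+-\gamma_-)$. Form the common state
\[
\omega=\frac{\rho+\varepsilon\gamma_-}{1+\varepsilon}=\frac{\sigma+\varepsilon\gamma_+}{1+\varepsilon}.
\]
The standard ALAFF argument then combines concavity of conditional entropy with the bound $H(A|B)\leq H(A|BX)+h_2$ for a classical flag. This yields a modulus with range $[-\log d_A,\log d_A]$.

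The improvement to $\log D_{A:B}$ comes from using the entropy range more sharply. Conditional entropy satisfies $-\log r_{A:B}\leq H(A|B)\leq\log d_A$, where the lower bound follows from the Schmidt rank: for pure states $H(A|B)=-H(A)\geq-\log r_{A:B}$, and concavity extends this to mixtures. The oscillation is therefore at most $\log(d_Ar_{A:B})=\log D_{A:B}$.

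To reach the sharp Audenaert-type form $h_2(u)+u\log(D-1)$ rather than the weaker ALAFF constants, I would follow the classical sharp route. First reduce to the case where $\rho$ and $\sigma$ differ by a ``flagged'' perturbation. Then view $H(A|B)$ as $-D(\rho_{AB}\Vert\1_A\otimes\rho_B)$ and use the real-valued coupling of \cite[Eq.~(4)]{berta2026sharpconditional}, which expresses the difference through a binary-plus-$(D-1)$-outcome mixture. Monotonicity in $u$ then gives the plateau branch: once $u\geq1-D^{-1}$, the trivial oscillation bound $\log D_{A:B}$ takes over.

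\emph{Optimality.} For each $u$ I would exhibit a pair attaining equality. Take $\rho$ maximally entangled on a $r_{A:B}$-dimensional subspace tensored with a maximally mixed component, so that $\rho$ sits at the most negative conditional entropy on a $D_{A:B}$-dimensional effective space. Take $\sigma=(1-u)\rho+u\,\xi$ with $\xi$ orthogonal-flagged, mimicking the classical extremal pair $(1,0,\dots)$ versus $(1-u,u/(D-1),\dots)$. Computing $H(A|B)$ on both then gives exactly $h_2(u)+u\log(D-1)$, and on the plateau branch the extremal pair saturates $\log D_{A:B}$.

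I expect the main obstacle to be showing that the sharp modulus holds with $D_{A:B}=d_A\,r_{A:B}$ instead of $d_A^2$. That requires the Schmidt-number refinement of the lower entropy bound to interact correctly with the coupling, which is the content of \cite[Proposition~2.1]{berta2026sharpconditional}. Since the paper imports this result rather than reproving it, I would cite that work for the details and verify only the dimension bookkeeping and the trivial case $D_{A:B}=1$.
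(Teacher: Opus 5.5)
Like the paper, you end by citing \cite[Theorem~1.1 and Proposition~2.1]{berta2026sharpconditional}; the paper gives no argument beyond that identification. However, the sketch you offer does not reach the sloped branch $h_2(u)+u\log(D_{A:B}-1)$, which is the whole content of the theorem. The two steps you make concrete give strictly less. The Alicki--Fannes--Winter coupling, even with your sharpened range $-\log r_{A:B}\leq H(A|B)\leq\log d_A$ inserted, yields only $u\log D_{A:B}+(1+u)h_2\bigl(u/(1+u)\bigr)$. This exceeds $h_2(u)+u\log(D_{A:B}-1)$ for every $0<u<1$, since the binary terms differ by $(1+u)\log(1+u)+(1-u)\log(1-u)>0$. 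The range bound alone gives only the plateau $\log D_{A:B}$.

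The step meant to bridge this is not an argument. As the paper notes right after \eqref{eq:def-gK-continuity}, \cite[Eq.~(4)]{berta2026sharpconditional} is just the real-parameter definition of the modulus $g_K$, not a coupling. Your reduction to a ``flagged'' perturbation is also asserted without any mechanism. The analogous step in the Fannes--Audenaert proof passes to sorted spectra, which uses that $H$ is a spectral function, whereas $H(A|B)$ is invariant only under local unitaries. That even the equal-$B$-marginal case was a separate result (\cite[Theorem~5]{berta2025integral}) shows this is where the difficulty lies. The upper bound therefore has to be quoted, as the paper does, and the coupling paragraph should not be presented as its proof.

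Your optimality pair also needs to be pinned down. As written, ``maximally entangled tensored with a maximally mixed component'' does not describe an extremizer. Moreover, for a generic $\xi$ orthogonal to $\rho$ the mixture changes the $B$-marginal, so the computation does not return $h_2(u)+u\log(D_{A:B}-1)$. The choice that works is the one recorded in \eqref{eq:order-one-isotropic-extremizer}:
\begin{itemize}
\item take $\rho=\Phi_s$ maximally entangled on $A_0\otimes B_0$ with $s=r_{A:B}$;
\item take $\xi=(\1_A\otimes P_{B_0}-\Phi_s)/(D_{A:B}-1)$.
\end{itemize}
This $\xi$ is orthogonal to $\Phi_s$, so $T(\rho,\sigma)=u$. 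It also satisfies $\operatorname{Tr}_A\xi=P_{B_0}/s=\operatorname{Tr}_A\Phi_s$, so $H(B)=\log s$ along the whole segment. Hence the conditional-entropy gap equals the joint-entropy gap $h_2(u)+u\log(D_{A:B}-1)$. At $u=1-D_{A:B}^{-1}$ the mixture is $\pi_A\otimes P_{B_0}/s$, which attains the plateau value $\log D_{A:B}$.
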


\Cref{theo:sharp-order-one-conditional-modulus} is
\cite[Theorem~1.1 and Proposition~2.1]
{berta2026sharpconditional}.  It strengthens the
Alicki--Fannes--Winter estimates
\cite[Eq.~(1)]{alicki2004continuity} and
\cite[Lemma~2]{winter2016tight}.  The sharp
equal-conditioning-marginal case had previously been obtained,
independently, in
\cite[Theorem~5]{berta2025integral} and
\cite[Theorem~5 and Remark~4]{audenaert2025fundamental}.

\paragraph{Reformulation in the notation of this paper.}
We next extract the forms needed for our endpoint comparison.  The
following packaging of the state-dependent and Schmidt-number
statements is carried out here and is not part of the quoted statement
of \Cref{theo:sharp-order-one-conditional-modulus}.  We define
\begin{align}
 H_{\min}(A|B)_{\sigma|\sigma}
 &:=-\log\inf\{c>0:\sigma_{AB}
       \leq c\1_A\otimes\sigma_B\},\notag\\
 \kappa_\sigma
 &:=d_A\exp[-H_{\min}(A|B)_{\sigma|\sigma}].
 \label{eq:kappa-anchor-continuity}
\end{align}
Then
\begin{equation}
 H(A|B)_\rho-H(A|B)_\sigma
 \leq\min\{g_{\kappa_\sigma}(u),
             \log d_A-H(A|B)_\sigma\},
 \label{eq:anchor-refined-order-one-bound}
\end{equation}
and
\begin{equation}
 1\leq\kappa_\sigma
 \leq d_A\cdot\operatorname{SN}(\sigma_{AB})
 \leq D_{A:B}\leq d_A^2.
 \label{eq:kappa-Schmidt-hierarchy}
\end{equation}
For every $1\leq s\leq r_{A:B}$,
\begin{equation}
 \sup_{\substack{T(\rho,\sigma)\leq u\\
       \operatorname{SN}(\rho),\operatorname{SN}(\sigma)\leq s}}
 \left|H(A|B)_\rho-H(A|B)_\sigma\right|
 =g_{d_A\cdot s}(u).
 \label{eq:exact-Schmidt-number-modulus}
\end{equation}
Here $d_A\cdot s$ denotes the ordinary product of the dimension
$d_A=\dim A$ and the Schmidt-number cutoff $s$; no tensor power of
$A$ is intended.

Since $H_0^\downarrow=H$,
\eqref{eq:sharp-order-one-effective-dimension} is exactly the sharp
$t=0$ result for the conditional family.  For
each $1\leq s\leq r_{A:B}$, we choose subspaces
$A_0\subseteq A$ and $B_0\subseteq B$ with
$\dim A_0=\dim B_0=s$, and denote the orthogonal projection
onto $B_0$ by $P_{B_0}$.  An extremizing pair is
\begin{equation}
 \sigma=\Phi_s,
 \qquad
 \rho_v=(1-v)\Phi_s+\frac{v}{d_A\cdot s-1}
 (\1_A\otimes P_{B_0}-\Phi_s),
 \quad v=\min\{u,1-(d_A\cdot s)^{-1}\},
 \label{eq:order-one-isotropic-extremizer}
\end{equation}
where $\Phi_s$ is a maximally entangled state on $A_0\otimes B_0$.
This equality family is the one used in the proof of
\cite[Proposition~2.1]{berta2026sharpconditional}.

\begin{remark}[What is proved here at the Umegaki endpoint]
\label{rem:our-Umegaki-conditional-lower-eigenvalue-bound}
Under the hypotheses and notation of
\Cref{theo:conditional-geodesic-slab-continuity}, one has
$H_0^\downarrow=H$ and
$C_0(m,n)=\log((mn)^{-1})$.  Consequently, the literal $t=0$
specialization of the result proved here is
\begin{equation}
 \left|H(A|B)_\omega-H(A|B)_\zeta\right|
 \leq
 \varepsilon\log((mn)^{-1})
 +\log\left(1+\frac{\delta}{n}\right).
 \label{eq:our-Umegaki-conditional-general-specialization}
\end{equation}

At the Umegaki endpoint the identity
$H(A|B)=H(AB)-H(B)$ gives a uniformly stronger consequence of the
same lower eigenvalue assumptions.  For $0<a\leq d^{-1}$, we set
\begin{equation}
 \chi_d(a):=\log\frac{1-(d-1)a}{a}.
 \label{eq:Umegaki-entropy-local-constant}
\end{equation}
Then, with $d_{AB}=d_Ad_B$,
\begin{equation}
 \boxed{
 \left|H(A|B)_\omega-H(A|B)_\zeta\right|
 \leq
 \varepsilon\chi_{d_{AB}}(m)
 +\delta\chi_{d_B}(n).}
 \label{eq:our-Umegaki-conditional-local-bound}
\end{equation}
Indeed, we take states $\tau_0,\tau_1$ on a $d$-dimensional space with
$\tau_0,\tau_1\geq a\1$, set
$\tau_s=(1-s)\tau_0+s\tau_1$, and write
$\Delta=\tau_1-\tau_0$.  Since $\operatorname{Tr}\tau_s=1$,
\begin{equation}
 a\1\leq\tau_s\leq[1-(d-1)a]\1.
\end{equation}
The entropy derivative used in the proof of
\Cref{theo:joint-slab-Dt-continuity} reads
\begin{equation}
 \frac{\mathrm d}{\mathrm ds}H(\tau_s)
 =-\operatorname{Tr}[\Delta\log\tau_s].
\end{equation}
Because $\operatorname{Tr}\Delta=0$,
\eqref{eq:joint-slab-traceless-expectation} therefore gives
\begin{equation}
 \left|\frac{\mathrm d}{\mathrm ds}H(\tau_s)\right|
 \leq T(\tau_0,\tau_1)\chi_d(a).
\end{equation}
Integrating this estimate first for the two states on $AB$ and then
for their $B$ marginals, and using the triangle inequality in
$H(A|B)=H(AB)-H(B)$, proves
\eqref{eq:our-Umegaki-conditional-local-bound}.

Moreover, trace-distance contractivity under partial trace
\cite[Eq.~(1.183)]{watrous2018theory} gives
$\delta\leq\varepsilon$, while
$\chi_d(a)\leq\log a^{-1}$.  Hence
\begin{equation}
 \varepsilon\chi_{d_{AB}}(m)+\delta\chi_{d_B}(n)
 \leq\varepsilon\log m^{-1}+\delta\log n^{-1}
 \leq\varepsilon\log((mn)^{-1}),
 \label{eq:Umegaki-local-improves-general-specialization}
\end{equation}
so \eqref{eq:our-Umegaki-conditional-local-bound} improves the raw
specialization \eqref{eq:our-Umegaki-conditional-general-specialization}.
Combining it with the sharp unrestricted benchmark
\eqref{eq:sharp-order-one-effective-dimension} yields the following
uniform estimate on this restricted class:
\begin{equation}
 \left|H(A|B)_\omega-H(A|B)_\zeta\right|
 \leq\min\left\{
 g_{D_{A:B}}(\varepsilon),
 \varepsilon\chi_{d_{AB}}(m)+\delta\chi_{d_B}(n)
 \right\}.
 \label{eq:combined-Umegaki-conditional-bound}
\end{equation}
The first branch is sharp without positive lower eigenvalue
assumptions.  The second branch, derived here, retains the actual
marginal perturbation and is linear for fixed $m,n$; no novelty or
sharpness is claimed for this elementary refinement.
\end{remark}

\paragraph{Optimized Petz--R\'enyi conditional entropy.}
A sharp unrestricted continuity bound for the optimized quantity
$H_{\alpha,0}^\uparrow$ in the range $\alpha\in[1/2,1)$ was obtained
in \cite[Theorem~2.1]{cheng2026sharprenyi}.  The geodesic family
studied here instead contains the nonoptimized quantity
$H_{\alpha,0}^\downarrow$.  Although
$H_{\alpha,0}^\uparrow\geq H_{\alpha,0}^\downarrow$, pointwise
domination does not compare differences of their values.  The cited
theorem therefore gives no direct benchmark for our bound.  Sharp unrestricted continuity for the down-arrow
Petz conditional entropy remains open
\cite[Section~7.5]{cheng2026sharprenyi}.  The result proved here at
the Petz endpoint is the $t=0$ specialization of
\eqref{eq:conditional-Dalpha-t-slab-bound}, under positive lower
bounds on the relevant smallest eigenvalues.

\paragraph{BS conditional entropy.}
At the other endpoint, $H_1^\downarrow$ is the BS-conditional
entropy of \cite[Eq.~(3)]{bluhm2023continuity}.  For comparison with
\cite[Corollary~6.8]
{bluhm2023continuity}, we define
\begin{align}
 f_{a,b}(p)
 &:=p\log\bigl(p+(1-p)a\bigr)
 +(1-p)\log\bigl((1-p)+pb\bigr),\notag\\
 \ell_\mu&:=1-d_{AB}\mu,\notag\\
 g_\mu^{\rm BS}(u)
 &:=\frac{\ell_\mu+u}{\ell_\mu}
 \bigl(f_{\mu^{-1},\mu^{-1}}+\mu^{-1}h_2\bigr)
 \!\left(\frac{u}{\ell_\mu+u}\right).
 \label{eq:BS-conditional-literature-modulus}
\end{align}
If $0<\mu<d_{AB}^{-1}$, $\rho_{AB},\sigma_{AB}\geq\mu\1_{AB}$, and
$T(\rho,\sigma)\leq u$, that result reads
\begin{equation}
 |H_1^\downarrow(A|B)_\rho-H_1^\downarrow(A|B)_\sigma|
 \leq\frac{2u}{\ell_\mu}\log d_A+g_\mu^{\rm BS}(u).
 \label{eq:BS-conditional-literature-bound}
\end{equation}
Our bound \eqref{eq:conditional-Dt-slab-bound} at $t=1$ reads
\begin{equation}
 |H_1^\downarrow(A|B)_\rho-H_1^\downarrow(A|B)_\sigma|
 \leq
 \left[\log(n^{-1}m^{-1})+n^{-1}m^{-3/2}\right]u
 +\log\left(1+\frac{\delta}{n}\right),
 \label{eq:our-BS-conditional-bound}
\end{equation}
when the smallest eigenvalues of the joint states and their marginals
are bounded below by $m$ and $n$, respectively, and
$\delta=T(\rho_B,\sigma_B)$.  The ALAFF bound of
\cite[Corollary~6.8]{bluhm2023continuity} retains only a positive
lower eigenvalue bound for the joint states, whereas
\eqref{eq:our-BS-conditional-bound} can use the actual marginal
perturbation and marginal lower eigenvalue bound.  Neither estimate is claimed
to dominate the other or to be sharp.  Thus
\eqref{eq:our-BS-conditional-bound} is the $t=1$ endpoint estimate
proved here, while \eqref{eq:BS-conditional-literature-bound} is the
quoted ALAFF benchmark.

\subsubsection{Why a positive lower eigenvalue bound is necessary for every
\texorpdfstring{$t>0$}{t>0}}
\label{subsec:conditional-boundary-discontinuity}

The need for a positive lower eigenvalue bound is structural, not an
artifact of the proof.
The argument starts from the two-qubit coalescing-state construction
in \cite[proof of Proposition~6.7]{bluhm2023continuity}.  The
rank-one calculation below extends it to every fixed $t>0$ and to all
R\'enyi powers.

\begin{lemma}[Rank-one block]
\label{lemma:rank-one-interpolated-block}
For $P=|e\rangle\langle e|$, $p>0$, and $\beta\geq0$ satisfying
$P\leq\operatorname{supp}\beta$, we set
$c=\langle e|\beta^{-1}|e\rangle$, where $\beta^{-1}$ is the
support inverse.  For every $t>0$,
\begin{align}
 D^t(pP\Vert\beta)&=p\log(pc),
 \label{eq:rank-one-Dt-block}\\
 Q_{\alpha,t}(pP\Vert\beta)&=p^\alpha c^{\alpha-1},
 \qquad \alpha>0.
 \label{eq:rank-one-Qalpha-block}
\end{align}
\end{lemma}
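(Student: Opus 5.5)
The plan is to reduce to the common support of the pair, which is the line spanned by $e$, and to read off both quantities from the explicit formulas \eqref{rw:eq:Dt-explicit} and \eqref{rw:eq:power-functional-push-through}, interpreted through the supported, lower-semicontinuous convention. Everything is first derived for an invertible regularization, and the limit is taken at the end.

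First, work on $\operatorname{supp}\beta$ and replace $pP$ by $\rho_\varepsilon:=pP+\varepsilon\beta$, or equivalently by any invertible perturbation on $\operatorname{supp}\beta$. The key observation is that for $t>0$ one has $\rho_\varepsilon^{t/2}\to p^{t/2}P$ in norm. Hence
\[
 K_t(\rho_\varepsilon,\beta)=\rho_\varepsilon^{t/2}\beta^{-1}\rho_\varepsilon^{t/2}\longrightarrow p^{t}P\beta^{-1}P=p^tc\,P,
\]
a rank-one operator. In the invertible formula
\[
 D^t=(1-t)\operatorname{Tr}\rho\log\rho+\operatorname{Tr}\rho\log K_t,
\]
the first term tends to $(1-t)p\log p$, using $0\log0=0$. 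For the second term, $K_t$ has exactly one eigenvalue near $p^tc$, with eigenprojection close to $P$. Its other eigenvalues are small, but they are weighted in $\operatorname{Tr}\rho_\varepsilon\log K_t$ only through the part of $\rho_\varepsilon$ off $P$. This is exactly the supported convention stated in the remark after \Cref{rw:prop:Dt-explicit}: for $t>0$ the logarithm is taken on $\operatorname{supp}\rho$. So the term equals $p\log(p^tc)$. Summing gives
\[
 (1-t)p\log p+tp\log p+p\log c=p\log(pc),
\]
which is \eqref{rw:eq:rank-one-Dt-block}.

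For the power functional, use \eqref{rw:eq:power-functional-push-through} with $\beta\to\alpha-1$ in the exponent:
\[
 Q_{\alpha,t}=\operatorname{Tr}\bigl[\rho^{\alpha-(\alpha-1)t}K_t^{\alpha-1}\bigr].
\]
Take $\rho=pP$ with the power computed on the support, and $K_t=p^tcP$ with $K_t^{\alpha-1}$ taken on its support $P$. Then
\[
 Q_{\alpha,t}=p^{\alpha-(\alpha-1)t}\,(p^tc)^{\alpha-1}=p^\alpha c^{\alpha-1},
\]
which is \eqref{rw:eq:rank-one-Qalpha-block}. The same result follows from \eqref{rw:eq:power-functional}: there $B_t=p^t\beta^{-1/2}P\beta^{-1/2}$ is rank one with trace $p^tc$, and $\sigma^{1/2}\rho^{\alpha(1-t)}\sigma^{1/2}$ pairs with it to give the same value. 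This serves as a consistency check.

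The main obstacle is justifying the limit, since the small eigenvalues of $K_t$ appear inside $\log$ or inside negative powers when $\alpha<1$. I would handle it via first-order perturbation. Let $\lambda_+$ be the large eigenvalue of $K_t(\rho_\varepsilon,\beta)$ and $E_+$ its spectral projection. Then $\operatorname{Tr}\rho_\varepsilon E_+\log\lambda_+\to p\log(p^tc)$. The complementary contribution is bounded by $\lVert(I-E_+)\rho_\varepsilon(I-E_+)\rVert_1\cdot\lvert\log\lambda_{\min}\rvert$. The first factor is $O(\varepsilon)+O(\lVert E_+-P\rVert^2)$, while $\lambda_{\min}\gtrsim\varepsilon^t$. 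So the contribution is $O(\varepsilon^{\min(1,t)}\log\varepsilon^{-1})\to0$, and for powers the analogous $O(\varepsilon^{1-t(1-\alpha)^+}\cdots)$ estimate vanishes when $t>0$. If this estimate is delicate, I would instead simply adopt the supported definition stated in the paper as the meaning of the left-hand sides, making the computation above exact.
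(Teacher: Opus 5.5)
Your proposal is correct and is essentially the paper's proof: compute $K_t=(pP)^{t/2}\beta^{-1}(pP)^{t/2}=p^tcP$ on the support of $P$ and substitute into \eqref{rw:eq:Dt-explicit} and \eqref{rw:eq:power-functional-push-through} using the supported convention, which the paper takes as given rather than re-deriving it through a regularization limit. One slip in your optional limit estimate: in the power case the weight is $\rho_\varepsilon^{\gamma}$ with $\gamma=\alpha+(1-\alpha)t$, not $\rho_\varepsilon$, so for $0<\alpha<1$ the contribution off $E_+$ is $O\bigl(\varepsilon^{-t(1-\alpha)}(\varepsilon^{t}+\varepsilon^{\gamma})\bigr)=O(\varepsilon^{\alpha t})$ rather than $O(\varepsilon^{1-t(1-\alpha)})$, and this still vanishes for $t>0$.
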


\begin{proof}
On the support of $P$,
$(pP)^{t/2}\beta^{-1}(pP)^{t/2}=p^tcP$.
Substitution in \eqref{rw:eq:Dt-explicit} proves
\eqref{eq:rank-one-Dt-block}; substitution in
\begin{equation}
 Q_{\alpha,t}(\rho\Vert\sigma)
 =\operatorname{Tr}\!\left[
 \rho^{t+\alpha(1-t)}
 (\rho^{t/2}\sigma^{-1}\rho^{t/2})^{\alpha-1}
 \right]
 \label{eq:boundary-Q-push-through}
\end{equation}
which is \eqref{rw:eq:power-functional-push-through}, proves
\eqref{eq:rank-one-Qalpha-block}.
\end{proof}

\begin{remark}[Support compression]
The singular case in the lemma is obtained simply by applying its
calculation in the corner $\operatorname{supp}\beta$.
\end{remark}

We take $A=B=\mathbb C^2$.  For $0<u<1$, we set
\begin{align}
 |e_0\rangle&:=|0\rangle,
 &|e_1^{(u)}\rangle&:=\sqrt{1-u}\,|0\rangle+\sqrt u\,|1\rangle,
 \notag\\
 P_0&:=|e_0\rangle\langle e_0|,
 &P_1^{(u)}&:=|e_1^{(u)}\rangle\langle e_1^{(u)}|,
 \label{eq:coalescing-pure-signals}\\
 \omega_u&:=\frac12\sum_{i=0}^1
 |i\rangle\langle i|_A\otimes P_i^{(u)},
 &\omega_0&:=\pi_A\otimes P_0.
 \label{eq:conditional-discontinuity-states}
\end{align}
Their $B$-marginals are
$\beta_u=(P_0+P_1^{(u)})/2$ and $\beta_0=P_0$, and
\begin{equation}
 T(\omega_u,\omega_0)=\frac{\sqrt u}{2},
 \qquad
 \langle e_i^{(u)}|\beta_u^{-1}|e_i^{(u)}\rangle=2
 \quad(i=0,1).
 \label{eq:coalescing-state-identities}
\end{equation}

\begin{theorem}[Discontinuity at rank-deficient states of
$H_t^\downarrow$]
\label{theo:Ht-down-discontinuous}
For every fixed $t\in(0,1]$,
\begin{equation}
 H_t^\downarrow(A|B)_{\omega_u}=0\quad(u>0),
 \qquad
 H_t^\downarrow(A|B)_{\omega_0}=\log2.
 \label{eq:Ht-down-discontinuity-values}
\end{equation}
Consequently, $H_t^\downarrow$ is discontinuous at the
rank-deficient state $\omega_0$, whose smallest eigenvalue is zero,
and there is no dimension-only modulus tending to zero even if the
bound is required only for invertible two-qubit states.  Equivalently,
no such modulus can hold uniformly when their smallest eigenvalues are
allowed to approach zero.
\end{theorem}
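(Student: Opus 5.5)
The computation is a direct evaluation of $D^t(\omega\Vert\1_A\otimes\omega_B)$ using the common-flag rule \eqref{rw:eq:general-flag-rule-Dt} together with \Cref{lemma:rank-one-interpolated-block}. Both states are block diagonal in the classical $A$-basis $\{|0\rangle,|1\rangle\}$. The reference $\1_A\otimes\beta$ carries unit weight on each block. It is therefore convenient to work with the extended (nonnormalized) convention from \eqref{eq:conditional-first-input-scaling}, or equivalently to write $\1_A\otimes\beta=2(\pi_A\otimes\beta)$ and use \eqref{eq:conditional-scaling-convention}. In either form, the divergence splits as a sum over $i$ of the block contributions $D^t\bigl(\tfrac12 P_i\Vert\beta\bigr)$. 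The support condition $P_i\leq\operatorname{supp}\beta$ holds in both cases, since $\beta_u>0$ for $u>0$ and $\beta_0=P_0$ has support containing $P_0$.

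For $u>0$, \Cref{lemma:rank-one-interpolated-block} with $p=\tfrac12$ and $c=\langle e_i^{(u)}|\beta_u^{-1}|e_i^{(u)}\rangle=2$ from \eqref{eq:coalescing-state-identities} gives each block the value $\tfrac12\log(\tfrac12\cdot2)=0$. Hence $H_t^\downarrow(A|B)_{\omega_u}=0$. We still need the identity $c=2$ itself. For this, write $\beta_u=\tfrac12(|e_0\rangle\langle e_0|+|e_1\rangle\langle e_1|)$ with linearly independent vectors $e_0,e_1$. Taking the dual basis $f_j$ with $\langle f_j|e_i\rangle=\delta_{ij}$, we get $\beta_u^{-1}=2\sum_j|f_j\rangle\langle f_j|$, and therefore $\langle e_i|\beta_u^{-1}|e_i\rangle=2$.

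For $u=0$, the state $\omega_0=\pi_A\otimes P_0$ and its reference $\1_A\otimes P_0$ commute. By the classical collapse \eqref{rw:eq:classical-collapse} (compressed to the support), or again by the lemma with $c=1$, each block contributes $\tfrac12\log\tfrac12$. This gives $D^t=-\log2$, hence $H_t^\downarrow(A|B)_{\omega_0}=\log 2$. The case $t>0$ is exactly what the lemma requires, and this is the source of the discrepancy with $t=0$, where the Umegaki conditional entropy is continuous.

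The delicate point is that $\omega_u$ is itself rank deficient, since its blocks are pure. To obtain invertible two-qubit states, I would mix in noise: set $\omega_{u,\lambda}=(1-\lambda)\omega_u+\lambda\pi_{AB}$. For fixed $u>0$, the reference marginal $\beta_u$ is invertible, so $H_t^\downarrow$ is continuous at $\omega_u$ along $\lambda\downarrow0$. This follows from the lower-semicontinuous regularization used to define the supported extension; equivalently, the explicit formula \eqref{rw:eq:Dt-explicit} is continuous once the second argument stays invertible and one uses $0\log0=0$ together with the rank-one limit in the lemma. I expect this step to be the main obstacle. I would treat it by noting that the block-diagonal structure persists and that each block is a fixed qubit pair with an invertible reference.

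Choosing $\lambda=\lambda(u)$ small enough gives invertible states with $T(\omega_{u,\lambda},\omega_0)\leq\sqrt u/2+\lambda\to0$ and $H_t^\downarrow\to 0$. Now compare these with invertible perturbations of $\omega_0$ whose values tend to $\log 2$. Such perturbations exist by the same regularization, because $\omega_0$ is a commuting pair, and commuting pairs are where the classical continuity applies. Together these produce invertible pairs at vanishing trace distance whose $H_t^\downarrow$ values differ by nearly $\log2$. This rules out any dimension-only modulus of continuity.
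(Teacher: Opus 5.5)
Your proposal is correct and follows essentially the same route as the paper. Both use the block decomposition over the classical $A$-label with the rank-one lemma at $p=\tfrac12$, $c=2$, the commuting computation at $u=0$, and regularization by mixing with $\pi_{AB}$ followed by a diagonal choice of parameters; your dual-basis verification of $c=2$ supplies a step the paper only states.

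The paper differs only in two small respects. First, it pairs $\omega_{u,\eta}$ with $\omega_{0,\eta}=\pi_A\otimes\bigl((1-\eta)P_0+\eta\pi_B\bigr)$ at the same $\eta$, whose value is exactly $\log2$ and whose distance is exactly $(1-\eta)\sqrt u/2$, instead of going through $\omega_0$ by the triangle inequality. Second, the limit along $\eta\downarrow0$ is most transparently justified by writing the second term of $D^t$ as $\operatorname{Tr}[\rho^{1-t}\sigma^{1/2}B_t\log B_t\,\sigma^{1/2}]$, which involves only nonnegative powers of $\rho$ and an invertible reference.
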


\begin{proof}
The state and reference are block diagonal in $A$.  In each block of
$\omega_u$, \Cref{lemma:rank-one-interpolated-block} has $p=1/2$
and $c=2$, so
\begin{equation}
 D^t(\omega_u\Vert\1_A\otimes\beta_u)
 =\sum_{i=0}^1\frac12\log\left(\frac12\,2\right)=0.
\end{equation}
At $u=0$ the two operators commute and the divergence is $-\log2$.

To retain invertibility, we introduce
\begin{equation}
 \omega_{u,\eta}:=(1-\eta)\omega_u+\eta\pi_{AB},
 \qquad 0<\eta<1.
 \label{eq:invertible-regularization-omega}
\end{equation}
Then
\begin{align}
 H_t^\downarrow(A|B)_{\omega_{0,\eta}}&=\log2,
 \label{eq:invertible-anchor-product}\\
 \lim_{\eta\downarrow0}
 H_t^\downarrow(A|B)_{\omega_{u,\eta}}&=0
 \quad(u>0),
 \label{eq:invertible-regularized-Ht-limit}\\
 T(\omega_{u,\eta},\omega_{0,\eta})
 &=(1-\eta)\frac{\sqrt u}{2}.
 \label{eq:invertible-regularized-distance}
\end{align}
We choose $u_k\downarrow0$ and then $\eta_k\downarrow0$ so that the
absolute error in \eqref{eq:invertible-regularized-Ht-limit} is at most
$1/k$.  The pairs of invertible states have distance tending to zero and
entropy gap tending to $\log2$.
\end{proof}

\begin{theorem}[Discontinuity at rank-deficient states of
$H_{\alpha,t}^\downarrow$]
\label{theo:Halpha-t-down-discontinuous}
For fixed $t\in(0,1]$ and $\alpha>0$, $\alpha\neq1$, we have
\begin{equation}
 H_{\alpha,t}^\downarrow(A|B)_{\omega_u}=0\quad(u>0),
 \qquad
 H_{\alpha,t}^\downarrow(A|B)_{\omega_0}=\log2.
 \label{eq:Halpha-t-down-discontinuity-values}
\end{equation}
The same regularizations by invertible states show that this quantity is
discontinuous at the rank-deficient state $\omega_0$, whose smallest
eigenvalue is zero.  In particular, it has no dimension-only
continuity modulus that remains uniform when the smallest eigenvalues
of the states are allowed to approach zero.
\end{theorem}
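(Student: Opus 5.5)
The plan mirrors the proof of \Cref{theo:Ht-down-discontinuous}, with \eqref{rw:eq:Dt-explicit} replaced by the power functional \eqref{eq:rank-one-Qalpha-block}. Throughout, $D_\alpha^t=(\alpha-1)^{-1}\log Q_{\alpha,t}$ and $H_{\alpha,t}^\downarrow=-D_\alpha^t(\,\cdot\,\Vert\1_A\otimes\,\cdot\,)$, so it suffices to compute $Q_{\alpha,t}$ for the coalescing states \eqref{eq:conditional-discontinuity-states}.

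\textbf{Step 1: the value at $u>0$.} For $u>0$, both $\omega_u$ and the reference $\1_A\otimes\beta_u$ are block diagonal in the computational basis of $A$. The flagged-sum rule \eqref{rw:eq:general-flag-rule-Qalpha} with common weights $q_i=1$ for the unnormalized reference blocks reduces $Q_{\alpha,t}$ to a sum of block values. Each block is a rank-one operator $\tfrac12 P_i^{(u)}$ against $\beta_u$, where $\beta_u$ is invertible for $u>0$. \Cref{lemma:rank-one-interpolated-block} applies with $p=1/2$ and, by \eqref{eq:coalescing-state-identities}, $c=2$. Hence each block contributes $2^{-\alpha}2^{\alpha-1}=1/2$, so $Q_{\alpha,t}(\omega_u\Vert\1_A\otimes\beta_u)=1$ and $H_{\alpha,t}^\downarrow(A|B)_{\omega_u}=0$. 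The block decomposition holds directly from functional calculus on orthogonal sectors, so the flag rule is not even needed.

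\textbf{Step 2: the value at $u=0$.} At $u=0$ the pair $\pi_A\otimes P_0$ and $\1_A\otimes P_0$ commutes. Working in the support corner $\1_A\otimes P_0$, the classical collapse gives $Q_{\alpha,t}=\sum_i 1\cdot(1/2)^\alpha=2^{1-\alpha}$. Therefore $D_\alpha^t=-\log2$ and $H_{\alpha,t}^\downarrow=\log2$.

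\textbf{Step 3: regularization.} Next I pass to the invertible states $\omega_{u,\eta}$ of \eqref{eq:invertible-regularization-omega}.
\begin{itemize}
\item For $u=0$, the pair $\omega_{0,\eta}=\pi_A\otimes((1-\eta)P_0+\eta\pi_B)$ and its reference $\1_A\otimes(\omega_{0,\eta})_B$ commute. The classical value is exactly $\log 2$ for every $\eta$, as in \eqref{eq:invertible-anchor-product}.
\item For fixed $u>0$, the limit $\eta\downarrow0$ must be handled. The reference marginal stays invertible, with $(\omega_{u,\eta})_B\to\beta_u>0$. Using the push-through form \eqref{eq:boundary-Q-push-through} with $t>0$, $Q_{\alpha,t}(\omega_{u,\eta}\Vert\cdot)$ converges to the supported value computed in Step 1.
\end{itemize}

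\textbf{Main obstacle.} The one delicate point is the continuity claim in Step 3 for fixed $u>0$. In the push-through form, the powers $\rho^{t+\alpha(1-t)}$ and $(\rho^{t/2}\sigma^{-1}\rho^{t/2})^{\alpha-1}$ have a vanishing eigenvalue of $\rho$ on the complement of the support. The exponent $t+\alpha(1-t)$ is positive. For $\alpha<1$ the negative power $\alpha-1$ blows up, but it is multiplied by the factor $\rho^{t+\alpha(1-t)}$. I would verify blockwise, in each $A$ sector, that the product behaves like $\lambda^{t+\alpha(1-t)}\cdot\lambda^{t(\alpha-1)}=\lambda^{\alpha}\to0$ on the small eigendirection. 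This uses $\sigma$ invertible, $t>0$ and $\alpha>0$, and requires a short perturbative spectral check on the $2\times2$ blocks of $\rho^{t/2}\sigma^{-1}\rho^{t/2}$, analogous to \eqref{rw:eq:real-t-spectral-asymptotics}.

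\textbf{Conclusion.} A diagonal choice $u_k\downarrow0$ followed by $\eta_k\downarrow0$ then gives invertible pairs whose trace distance tends to zero while their entropy gap tends to $\log 2$. This rules out any dimension-only modulus.
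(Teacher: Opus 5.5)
Your proposal is correct and follows the paper's proof in outline. For $u>0$ the rank-one lemma with $p=1/2$, $c=2$ gives $Q_{\alpha,t}=1$. Commutativity at $u=0$ gives $2^{1-\alpha}$. The diagonal regularization $(u_k,\eta_k)$ borrowed from the $H_t^\downarrow$ theorem then concludes.

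The only divergence is how you justify $\eta\downarrow0$ for fixed $u>0$. The paper sidesteps your ``main obstacle'' by using the original form \eqref{eq:boundary-Q-nonnegative-powers}, $\operatorname{Tr}[\sigma^{1/2}\rho^{\alpha(1-t)}\sigma^{1/2}(\sigma^{-1/2}\rho^t\sigma^{-1/2})^\alpha]$. There only the reference is inverted, and $(\omega_{u,\eta})_B\to\beta_u>0$ keeps $\sigma^{\pm1/2}$ bounded. The degenerating state enters only through the nonnegative exponents $\alpha(1-t)$, $t$ and $\alpha$. Continuity of $x\mapsto x^s$ on $[0,\infty)$ then gives the limit immediately, with $\rho^0=I$ at $t=1$.

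Your push-through route is valid, but for $0<\alpha<1$ it needs the $2\times2$ asymptotics, and your stated heuristic omits a term. Besides $\lambda^{t+\alpha(1-t)}\lambda^{t(\alpha-1)}=\lambda^{\alpha}$, there is a cross term. The spectral projection onto the small eigenvalue (of order $\lambda^t$) of $\rho^{t/2}\sigma^{-1}\rho^{t/2}$ has a diagonal entry of order $\lambda^t$ along the large eigenvector of $\rho$. There $\rho^{t+\alpha(1-t)}$ is of order one, which produces a contribution of order $\lambda^{t(\alpha-1)}\lambda^{t}=\lambda^{t\alpha}$. This term dominates $\lambda^\alpha$ when $t<1$, but it still vanishes because $t\alpha>0$.

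With that term included, your check goes through and the limit is the supported value $p^\alpha c^{\alpha-1}$, so your argument is complete. Your route shows explicitly why the negative power is harmless; the paper's choice of form simply makes that computation unnecessary.
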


\begin{proof}
For $u>0$, \eqref{eq:rank-one-Qalpha-block} gives
$Q_{\alpha,t}=2(1/2)^\alpha2^{\alpha-1}=1$; at $u=0$,
commutativity gives $Q_{\alpha,t}=2^{1-\alpha}$.  These are precisely
the values in \eqref{eq:Halpha-t-down-discontinuity-values}.  For
fixed $u>0$, continuity of the power-functional expression
\eqref{rw:eq:power-functional} in the form
\begin{equation}
 Q_{\alpha,t}(\rho\Vert\sigma)
 =\operatorname{Tr}\!\left[
 \sigma^{1/2}\rho^{\alpha(1-t)}\sigma^{1/2}
 (\sigma^{-1/2}\rho^t\sigma^{-1/2})^\alpha
 \right]
 \label{eq:boundary-Q-nonnegative-powers}
\end{equation}
along \eqref{eq:invertible-regularization-omega} proves the claim for
invertible states.  At $t=1$, the zero power is interpreted by the
defining limit through invertible states.
\end{proof}

At $t=0$ the order-one obstruction disappears and
\Cref{theo:sharp-order-one-conditional-modulus} applies.  At $t=1$
the construction is exactly the BS discontinuity mechanism of
\cite[Proposition~6.7]{bluhm2023continuity}.  The argument does not
settle the different open problem of sharp unrestricted continuity
for $H_{\alpha,0}^\downarrow$.

%% file: continuity_cmi_bounds.tex
\subsection{Continuity bounds for
\texorpdfstring{$t$}{t}-conditional mutual informations}
\label{subsec:t-CMI-continuity-boundary}

We now transfer the conditional-entropy estimates to the tripartite
quantities introduced in
\Cref{subsec:consolidated-tripartite-tcmi}.  The one-sided orientation
comes first because, for the maximally mixed auxiliary state, it is
exactly a difference of two conditional quantities.  We then treat
all three orientations by using the joint two-input bounds, compare
the Umegaki and BS endpoints with the literature, and finish with the
obstruction to uniform continuity as the smallest eigenvalues tend to
zero.
The transfer by differences of conditional entropies and the triangle
inequality follows the Umegaki and BS treatments in
\cite[Sections~5.2.3 and~6.2.3]{bluhm2023continuity} and the
sandwiched R\'enyi treatment in
\cite[Section~5.3]{bluhm2026unified}.

\subsubsection{The one-sided orientation}

We fix $\tau_C=\pi_C$.  The definitions
\eqref{eq:consolidated-three-CMIs} and the scaling rule
\eqref{eq:conditional-scaling-convention} give
\begin{align}
 I_t^{\os}(A;C|B)_\rho
 &=H_t^\downarrow(C|B)_\rho
   -H_t^\downarrow(C|AB)_\rho,
 \label{eq:one-sided-t-CMI-entropy-difference}\\
 I_{\alpha,t}^{\os}(A;C|B)_\rho
 &=H_{\alpha,t}^\downarrow(C|B)_\rho
   -H_{\alpha,t}^\downarrow(C|AB)_\rho.
 \label{eq:one-sided-alpha-t-CMI-entropy-difference}
\end{align}
For a conditioned system $X$, we define
\begin{equation}
 \mathfrak M_{\alpha,t}^{X|Y}(m,n;x,y)
 :=A_{\alpha,t}\!\left(m,\frac{n}{d_X}\right)x
 +\begin{cases}
 \displaystyle\log\left(1+\frac{y}{n}\right),
 &0<\alpha<1\ \text{or}\ 1<\alpha\leq2,\\[2mm]
 \displaystyle d_X^\alpha n^{-\alpha-1}m^{-t}y,
 &\alpha>2.
 \end{cases}
 \label{eq:tCMI-Renyi-slab-modulus}
\end{equation}

\begin{proposition}[One-sided $t$-CMI continuity]
\label{prop:tCMI-spectral-slab-continuity}
Let $\mathcal H_A,\mathcal H_B,\mathcal H_C$ be finite-dimensional,
and let $\rho_{ABC},\zeta_{ABC}\in
\mathcal S_+(\mathcal H_A\otimes\mathcal H_B\otimes\mathcal H_C)$.
For both inputs, define the one-sided quantity using the same
auxiliary state $\tau_C=\pi_C:=\1_C/d_C$.  Suppose that there are
constants $m_{BC},n_B,m_{ABC},n_{AB}>0$ such that
\begin{align}
 \rho_{BC},\zeta_{BC}&\geq m_{BC}\1_{BC},
 &\rho_B,\zeta_B&\geq n_B\1_B,\notag\\
 \rho_{ABC},\zeta_{ABC}&\geq m_{ABC}\1_{ABC},
 &\rho_{AB},\zeta_{AB}&\geq n_{AB}\1_{AB}.
 \label{eq:tCMI-common-spectral-floors}
\end{align}
For every marginal $X$, we write
$\delta_X:=T(\rho_X,\zeta_X)$.
Then, for every $t\in[0,1]$,
\begin{align}
 &|I_t^{\os}(A;C|B)_\rho-I_t^{\os}(A;C|B)_\zeta|
 \notag\\
 &\quad\leq
 C_t(m_{BC},n_B)\delta_{BC}
 +\log\left(1+\frac{\delta_B}{n_B}\right)
 +C_t(m_{ABC},n_{AB})\delta_{ABC}
 +\log\left(1+\frac{\delta_{AB}}{n_{AB}}\right).
 \label{eq:tCMI-order-one-slab-bound}
\end{align}
For every $t\in[0,1]$, $\alpha>0$, $\alpha\neq1$,
\begin{align}
 &|I_{\alpha,t}^{\os}(A;C|B)_\rho
          -I_{\alpha,t}^{\os}(A;C|B)_\zeta|
 \notag\\
 &\quad\leq
 \mathfrak M_{\alpha,t}^{C|B}
       (m_{BC},n_B;\delta_{BC},\delta_B)
 +\mathfrak M_{\alpha,t}^{C|AB}
       (m_{ABC},n_{AB};\delta_{ABC},\delta_{AB}).
 \label{eq:tCMI-Renyi-slab-bound}
\end{align}
For $\alpha\in(0,1)\cup(1,2]$, this is a continuity bound for the
nonnegative DPI deficit; for $\alpha>2$, it concerns only the
algebraic difference.
\end{proposition}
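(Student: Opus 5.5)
The plan is to reduce everything to the conditional estimates of \Cref{theo:conditional-geodesic-slab-continuity} through the entropy-difference representations \eqref{eq:one-sided-t-CMI-entropy-difference}--\eqref{eq:one-sided-alpha-t-CMI-entropy-difference}. First I would verify these identities carefully. With $\tau_C=\pi_C$, the references $\rho_{AB}\otimes\pi_C$ and $\rho_B\otimes\pi_C$ equal $d_C^{-1}(\1_C\otimes\rho_{AB})$ and $d_C^{-1}(\1_C\otimes\rho_B)$, after reordering tensor factors. Unitary invariance (\Cref{rw:prop:elementary-structural-rules}) permits this reordering, and the scaling rule \eqref{eq:conditional-scaling-convention} extracts the factor $-\log d_C^{-1}=\log d_C$ from each term. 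These constants cancel in the difference, which leaves $-H^\downarrow_t(C|AB)+H^\downarrow_t(C|B)$. The same bookkeeping works for $D^t_\alpha$, since its second-argument scaling is also an additive $-\log c$.

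Next I would subtract the two representations for $\rho$ and $\zeta$ and apply the triangle inequality. This bounds the difference by $|H^\downarrow_t(C|B)_\rho-H^\downarrow_t(C|B)_\zeta|+|H^\downarrow_t(C|AB)_\rho-H^\downarrow_t(C|AB)_\zeta|$. I would then invoke \Cref{theo:conditional-geodesic-slab-continuity} twice, with the roles of the systems relabelled.

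In the first application, the conditioned system is $C$ and the conditioning system is $B$. The joint states are $\rho_{BC},\zeta_{BC}\geq m_{BC}\1$ and the marginals are $\rho_B,\zeta_B\geq n_B\1$. This gives $C_t(m_{BC},n_B)\delta_{BC}+\log(1+\delta_B/n_B)$.

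In the second application, the conditioned system is $C$ and the conditioning system is $AB$. Here the relevant floors are $m_{ABC}$ and $n_{AB}$, and the distances are $\delta_{ABC}$ and $\delta_{AB}$. Adding the two bounds gives \eqref{eq:tCMI-order-one-slab-bound}.

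For the R\'enyi case I would use the second part of the same theorem, with $d_A$ replaced by $d_C$, the dimension of the conditioned system. The first-input coefficient then becomes $A_{\alpha,t}(m,n/d_C)$. The second-input term is $\log(1+y/n)$, or $d_C^\alpha n^{-\alpha-1}m^{-t}y$ when $\alpha>2$. This is exactly $\mathfrak M^{C|B}_{\alpha,t}$ and $\mathfrak M^{C|AB}_{\alpha,t}$ as defined in \eqref{eq:tCMI-Renyi-slab-modulus}. Here $d_X$ should be read as the dimension of the conditioned system, which is $C$ in both terms.

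The closing remark about nonnegativity only requires recalling that for $\alpha\in(0,1)\cup(1,2]$ the one-sided quantity is a DPI deficit, by \Cref{cor:consolidated-basic-examples}.

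The main obstacle is the bookkeeping in the first step. The conditional quantities in \Cref{theo:conditional-geodesic-slab-continuity} are written with the conditioned system first, as $A|B$, while the CMI references place $C$ last. One must therefore check that reordering the tensor factors is a unitary conjugation applied simultaneously to both arguments, so that unitary invariance applies. One must also check that the scaling convention \eqref{eq:conditional-scaling-convention} applies to the unnormalized reference with the generalized-inverse convention, which is harmless here because all operators are invertible. Finally, it should be noted explicitly that the hypotheses \eqref{eq:tCMI-common-spectral-floors} are precisely the four floor conditions \eqref{eq:conditional-common-spectral-slab} required by the two applications. No floor is needed on $\rho_C$.
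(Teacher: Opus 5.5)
Your proposal is correct and follows the paper's own proof. The paper also uses $\tau_C=\pi_C$ and the second-argument scaling rule to write $I_t^{\os}$ and $I_{\alpha,t}^{\os}$ as $H^\downarrow(C|B)-H^\downarrow(C|AB)$, applies the triangle inequality, and invokes the conditional continuity theorem twice, for $C|B$ and $C|AB$, with $d_C$ as the conditioned dimension. Your remarks on reordering tensor factors via unitary invariance and on exactly which four floors are needed (none on $\rho_C$) are accurate and make explicit what the paper leaves implicit.
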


\begin{proof}
We apply \Cref{theo:conditional-geodesic-slab-continuity} to $C|B$ and
$C|AB$ and use the triangle inequality in
\eqref{eq:one-sided-t-CMI-entropy-difference} and
\eqref{eq:one-sided-alpha-t-CMI-entropy-difference}.
\end{proof}

\subsubsection{All three orientations}

For nonnegative $x,y$, we introduce the single-divergence moduli
\begin{align}
 \mathfrak C_t(m,n;x,y)
 &:=C_t(m,n)x+\log\left(1+\frac{y}{n}\right),
 \label{eq:three-orientation-Ct-modulus}\\
 \mathfrak C_{\alpha,t}(m,n;x,y)
 &:=A_{\alpha,t}(m,n)x+
 \begin{cases}
  \displaystyle\log\left(1+\frac{y}{n}\right),
  &0<\alpha<1\ \text{or}\ 1<\alpha\leq2,\\[2mm]
  \displaystyle n^{-\alpha-1}m^{-t}y,&\alpha>2.
 \end{cases}
 \label{eq:three-orientation-Calpha-modulus}
\end{align}
We use the same normalized auxiliary state $\tau_C>0$ for both
tripartite inputs and set
\begin{equation}
 \delta_X:=T(\rho_X,\zeta_X),
 \qquad
 r_X:=\min\{\lambda_{\min}(\rho_X),
             \lambda_{\min}(\zeta_X)\},
 \qquad
 \vartheta:=\lambda_{\min}(\tau_C).
 \label{eq:three-orientation-local-data}
\end{equation}
For the moving product references, we define
\begin{align}
 \delta_{AB:C}
 &:=T(\rho_{AB}\otimes\rho_C,
       \zeta_{AB}\otimes\zeta_C),
 &s_{AB:C}
 &:=\min\{\lambda_{\min}(\rho_{AB})\lambda_{\min}(\rho_C),
           \lambda_{\min}(\zeta_{AB})\lambda_{\min}(\zeta_C)\},
 \notag\\
 \delta_{B:C}
 &:=T(\rho_B\otimes\rho_C,
       \zeta_B\otimes\zeta_C),
 &s_{B:C}
 &:=\min\{\lambda_{\min}(\rho_B)\lambda_{\min}(\rho_C),
           \lambda_{\min}(\zeta_B)\lambda_{\min}(\zeta_C)\}.
 \label{eq:three-orientation-product-data}
\end{align}
Trace-norm contractivity under partial trace
\cite[Eq.~(1.183)]{watrous2018theory} and telescoping give
\begin{equation}
 \delta_X\leq\delta_{ABC},
 \qquad
 \delta_{AB:C}\leq\delta_{AB}+\delta_C,
 \qquad
 \delta_{B:C}\leq\delta_B+\delta_C.
 \label{eq:three-orientation-distance-comparison}
\end{equation}

\begin{theorem}[Local continuity in all three orientations]
\label{theo:three-orientation-local-continuity}
Let $\mathcal H_A,\mathcal H_B,\mathcal H_C$ be finite-dimensional,
let $\rho_{ABC},\zeta_{ABC}\in
\mathcal S_+(\mathcal H_A\otimes\mathcal H_B\otimes\mathcal H_C)$,
and fix the same auxiliary state
$\tau_C\in\mathcal S_+(\mathcal H_C)$ in the definitions of the
one-sided and reverse quantities for both inputs.  With the local
trace-distance and smallest-eigenvalue data defined in
\eqref{eq:three-orientation-local-data}--%
\eqref{eq:three-orientation-product-data}, all of
$r_X$, $s_{AB:C}$, $s_{B:C}$, and $\vartheta$ are strictly positive.
For every $t\in[0,1]$,
\begin{align}
 |I_t^{\os}(\rho)-I_t^{\os}(\zeta)|
 &\leq
 \mathfrak C_t(r_{ABC},\vartheta r_{AB};
                  \delta_{ABC},\delta_{AB})
 +\mathfrak C_t(r_{BC},\vartheta r_B;
                  \delta_{BC},\delta_B),
 \label{eq:three-orientation-os-local}\\
 |I_t^{\ts}(\rho)-I_t^{\ts}(\zeta)|
 &\leq
 \mathfrak C_t(r_{ABC},s_{AB:C};
                  \delta_{ABC},\delta_{AB:C})
 +\mathfrak C_t(r_{BC},s_{B:C};
                  \delta_{BC},\delta_{B:C}),
 \label{eq:three-orientation-ts-local}\\
 |I_t^{\rev}(\rho)-I_t^{\rev}(\zeta)|
 &\leq
 \mathfrak C_t(\vartheta r_{AB},r_{ABC};
                  \delta_{AB},\delta_{ABC})
 +\mathfrak C_t(\vartheta r_B,r_{BC};
                  \delta_B,\delta_{BC}).
 \label{eq:three-orientation-rev-local}
\end{align}
For every $t\in[0,1]$ and
$\alpha\in(0,\infty)\setminus\{1\}$, the same statements hold for
$I_{\alpha,t}^\bullet$ after replacing $\mathfrak C_t$ by
$\mathfrak C_{\alpha,t}$.  For
$\alpha\in(0,1)\cup(1,2]$, these are continuity bounds for
nonnegative R\'enyi DPI deficits; for $\alpha>2$, they bound only the
corresponding algebraic differences.
\end{theorem}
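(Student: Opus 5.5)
Each $t$-CMI in \eqref{eq:consolidated-three-CMIs} is a difference of two $D^t$ values. Each difference is bounded separately by \Cref{theo:joint-slab-Dt-continuity}, applied to the corresponding pair of inputs, and the two bounds are combined with the triangle inequality. In every term, the first argument is a state and the second is a positive operator whose lower eigenvalue bound and perturbation are read off from \eqref{eq:three-orientation-local-data}--\eqref{eq:three-orientation-product-data}. Positivity of $r_X$, $s_{AB:C}$, $s_{B:C}$ and $\vartheta$ is immediate from invertibility of $\rho_{ABC},\zeta_{ABC},\tau_C$, since marginals of invertible states are invertible.

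\emph{One-sided orientation.} In the first term the first arguments are $\rho_{ABC},\zeta_{ABC}\geq r_{ABC}\1$, with distance $\delta_{ABC}$. The references are $\rho_{AB}\otimes\tau_C$ and $\zeta_{AB}\otimes\tau_C$, both $\geq\vartheta r_{AB}\1$, and
\[
 \lVert(\rho_{AB}-\zeta_{AB})\otimes\tau_C\rVert_\infty\leq\lVert\rho_{AB}-\zeta_{AB}\rVert_\infty\leq\delta_{AB}.
\]
So the theorem gives $\mathfrak C_t(r_{ABC},\vartheta r_{AB};\delta_{ABC},\delta_{AB})$, using its $\delta_\infty$-version. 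The $BC$ term is handled identically.

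\emph{Two-sided orientation.} The references are now $\rho_{AB}\otimes\rho_C$ and $\zeta_{AB}\otimes\zeta_C$. Their smallest eigenvalue is the product of the marginal minima, which is at least $s_{AB:C}$, and their $\infty$-norm difference is at most their trace distance $\delta_{AB:C}$. The $BC$ term is handled identically.

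\emph{Reverse orientation.} The roles are swapped: the first arguments are $\rho_{AB}\otimes\tau_C\geq\vartheta r_{AB}\1$ (these are states), and the references are $\rho_{ABC},\zeta_{ABC}\geq r_{ABC}\1$. We need
\[
 T(\rho_{AB}\otimes\tau_C,\zeta_{AB}\otimes\tau_C)=\delta_{AB},
\]
which holds by multiplicativity of the trace norm under tensor products.

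\emph{Hypotheses of the joint theorems.} \Cref{theo:joint-slab-Dt-continuity} and \Cref{lemma:joint-slab-functional-calculus} assume the second arguments are positive contractions of equal trace with a lower bound $n$, and the first arguments are states. All references here are states, hence contractions of trace one. The constants $C_t(m,n)$ were derived assuming $m,n\leq1$, which holds automatically for lower bounds of states.

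\emph{R\'enyi case.} Replace \Cref{theo:joint-slab-Dt-continuity} by \Cref{theo:joint-slab-Dalpha-continuity}, which requires both references to be states; all references above are normalized states, so this holds. The second-input term is the $\delta_\infty$ form, which is dominated by the trace-distance parameters above, giving $\mathfrak C_{\alpha,t}$. For $\alpha\in(0,1)\cup(1,2]$, nonnegativity of the deficits is \Cref{cor:consolidated-basic-examples}. For $\alpha>2$ no DPI is claimed, and the bound applies only to the algebraic difference.

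The main obstacle is bookkeeping rather than analysis: each modulus must receive the correct lower eigenvalue bound and perturbation, in particular the product-state minima and the $\delta_\infty\leq$ trace-distance relaxations.
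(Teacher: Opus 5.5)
Your proposal is correct and is essentially the paper's proof. You split each orientation into its two divergence terms and apply \Cref{theo:joint-slab-Dt-continuity} (respectively \Cref{theo:joint-slab-Dalpha-continuity}) to each term, using the floors $r_X$, $\vartheta r_X$, $s_{AB:C}$, $s_{B:C}$ and bounding $\delta_\infty$ by the trace distance of the normalized moving references; the triangle inequality then combines the two bounds and reproduces exactly the moduli in \eqref{eq:three-orientation-os-local}--\eqref{eq:three-orientation-rev-local}.
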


\begin{proof}
Tensoring with $\tau_C$ preserves trace distance and multiplies the
spectral floor by $\vartheta$.  For moving products,
\begin{equation}
 \eta_1\otimes\xi_1-\eta_2\otimes\xi_2
 = (\eta_1-\eta_2)\otimes\xi_1
   +\eta_2\otimes(\xi_1-\xi_2),
\end{equation}
and multiplicativity of the trace norm on tensor products
\cite[Section~1.1]{watrous2018theory} gives
\begin{equation}
 T(\eta_1\otimes\xi_1,\eta_2\otimes\xi_2)
 \leq T(\eta_1,\eta_2)+T(\xi_1,\xi_2).
\end{equation}
We write each $t$-CMI as the difference of the two divergences in
\eqref{eq:consolidated-three-CMIs}, use the triangle inequality, and
apply \Cref{theo:joint-slab-Dt-continuity}.  Every moving reference
here is a normalized state, so its operator-norm perturbation is at
most the corresponding trace distance used in
\eqref{eq:three-orientation-local-data}--%
\eqref{eq:three-orientation-product-data}.  The R\'enyi statement
then follows from \Cref{theo:joint-slab-Dalpha-continuity}.
\end{proof}

\begin{remark}[Fixed local supports]
The theorem remains valid for ambiently singular inputs when fixed
local projections $P_A,P_B,P_C$ support both tripartite states and
$\tau_C$, and all of these operators are invertible in the resulting
tensor-product corners.  The proof is then applied after local
compression, with all dimensions and positive eigenvalue bounds
computed there.  It does not provide a uniform estimate for changing
supports.
\end{remark}

If only a common global floor
$\rho_{ABC},\zeta_{ABC}\geq\mu\1_{ABC}$ is known, then, for example,
$\operatorname{Tr}_C(\mu\1_{ABC})=d_C\mu\1_{AB}$, and partial trace
gives
\begin{equation}
 \begin{aligned}
 \rho_{AB},\zeta_{AB}&\geq d_C\mu\1_{AB},&
 \rho_{BC},\zeta_{BC}&\geq d_A\mu\1_{BC},\\
 \rho_B,\zeta_B&\geq d_Ad_C\mu\1_B,&
 \rho_C,\zeta_C&\geq d_Ad_B\mu\1_C,
 \end{aligned}
 \label{eq:three-orientation-marginal-floors}
\end{equation}
and every marginal distance is at most
$T(\rho_{ABC},\zeta_{ABC})$.  Thus
\Cref{theo:three-orientation-local-continuity} immediately yields a
dimension-and-$\mu$ bound for all three orientations.

\subsubsection{Comparison with CMI and BS-CMI bounds}
\label{subsec:tCMI-endpoint-comparison}

By \eqref{eq:consolidated-forward-CMI-tzero}, for arbitrary
tripartite states the two forward orientations at $t=0$ equal the
ordinary CMI.  For
$\varepsilon=T(\rho_{ABC},\zeta_{ABC})$, the best established direct
dimension-only estimate is
\begin{equation}
 |I(A:C|B)_\rho-I(A:C|B)_\zeta|
 \leq2\varepsilon\log\min\{d_A,d_C\}
 +2(1+\varepsilon)h_2\!\left(
        \frac{\varepsilon}{1+\varepsilon}\right).
 \label{eq:ordinary-CMI-literature-bound}
\end{equation}
This estimate appears in \cite[Corollary~1]{shirokov2017tight} and
\cite[Corollary~5.7]{bluhm2023continuity}.  The sharp CMI modulus is
still unknown to the best of our knowledge.

\Cref{theo:sharp-order-one-conditional-modulus} gives a complementary estimate
that can be better for asymmetric dimensions or smaller marginal
distances.  With $\delta_X=T(\rho_X,\zeta_X)$, we set
\begin{align}
 D_{C:B}&:=d_C\min\{d_C,d_B\},
 &D_{C:AB}&:=d_C\min\{d_C,d_Ad_B\},\notag\\
 D_{A:B}&:=d_A\min\{d_A,d_B\},
 &D_{A:BC}&:=d_A\min\{d_A,d_Bd_C\}.
 \label{eq:CMI-effective-dimensions}
\end{align}
Using
\begin{equation}
 I(A:C|B)=H(C|B)-H(C|AB)=H(A|B)-H(A|BC),
\end{equation}
the two chain-rule representations imply
\begin{equation}
 \begin{aligned}
 |I(A:C|B)_\rho-I(A:C|B)_\zeta|
 \leq\min\bigl\{&
 g_{D_{C:B}}(\delta_{BC})
 +g_{D_{C:AB}}(\delta_{ABC}),\\
 &g_{D_{A:B}}(\delta_{AB})
 +g_{D_{A:BC}}(\delta_{ABC})\bigr\}.
 \end{aligned}
 \label{eq:ordinary-CMI-two-term-bound}
\end{equation}
We can therefore take the minimum of the right-hand sides of
\eqref{eq:ordinary-CMI-literature-bound} and
\eqref{eq:ordinary-CMI-two-term-bound}.  Although each summand in
the latter is sharp as a conditional-entropy modulus, the resulting
CMI estimate need not be sharp because the triangle inequality
discards cancellations.
Neither of these two $t=0$ estimates requires invertibility.

At $t=1$ and $\tau_C=\pi_C$, the one-sided quantity is, after
relabelling the systems, the one-sided BS-CMI of
\cite[Eq.~(4)]{bluhm2023continuity}.  We assume that
$0<\mu<d_{ABC}^{-1}$ and
$\rho_{ABC},\zeta_{ABC}\geq\mu\1_{ABC}$, and we set
\begin{align}
 \ell_\mu^{ABC}&:=1-d_{ABC}\mu,\notag\\
 G_\mu^{ABC}(u)
 &:=\frac{\ell_\mu^{ABC}+u}{\ell_\mu^{ABC}}
 \bigl(f_{\mu^{-1},\mu^{-1}}+\mu^{-1}h_2\bigr)
 \!\left(\frac{u}{\ell_\mu^{ABC}+u}\right).
 \label{eq:BS-CMI-literature-modulus}
\end{align}
Then \cite[Corollary~6.11]{bluhm2023continuity} gives
\begin{equation}
 \begin{aligned}
 &|I_1^{\os}(A;C|B)_\rho-I_1^{\os}(A;C|B)_\zeta|\leq
 \frac{2\varepsilon}{\ell_\mu^{ABC}}
 \log\min\{d_C,\sqrt{d_{ABC}}\}
 +2G_\mu^{ABC}(\varepsilon).
 \end{aligned}
 \label{eq:BS-CMI-literature-bound}
\end{equation}
Section~3.2 of \cite{bluhm2023continuity} deliberately focuses on
the one-sided BS-CMI, and the quoted ALAFF theorem covers that
orientation only.  No analogous
bound is proved there for the two-sided or reverse orientation.
Thus \Cref{theo:three-orientation-local-continuity} supplies new
estimates for those two orientations under uniform positive lower
bounds on the relevant smallest eigenvalues.  The results of
\cite[Sections~5.1 and~5.3]{bluhm2026unified} concern sandwiched
R\'enyi conditionals and
CMIs, not the geodesic quantities considered here.

\subsubsection{Failure of uniform continuity as the smallest
eigenvalues tend to zero}

We take $A,B,C\cong\mathbb C^2$.  With $P_i^{(u)}$ as in
\eqref{eq:coalescing-pure-signals}, we define the copied-label lift of
the construction in
\cite[proof of Proposition~6.7]{bluhm2023continuity} by
\begin{equation}
 \Omega_{ABC}^{(u)}
 :=\frac12\sum_{i=0}^1
 |i\rangle\langle i|_A\otimes P_i^{(u)}{}_B
 \otimes|i\rangle\langle i|_C.
 \label{eq:copied-label-tripartite-state}
\end{equation}

\begin{corollary}[Forward $t$-CMI discontinuity]
\label{cor:one-sided-t-CMI-discontinuity}
For every $t\in(0,1]$,
\begin{equation}
 \begin{aligned}
 I_t^{\os}(A;C|B)_{\Omega^{(u)}}
 &=I_t^{\ts}(A;C|B)_{\Omega^{(u)}}=0 &&(u>0),\\
 I_t^{\os}(A;C|B)_{\Omega^{(0)}}
 &=I_t^{\ts}(A;C|B)_{\Omega^{(0)}}=\log2.
 \end{aligned}
 \label{eq:one-sided-t-CMI-jump}
\end{equation}
The same identities hold for
$(I_{\alpha,t}^{\os},I_{\alpha,t}^{\ts})$ for every
$\alpha>0$, $\alpha\neq1$.  In the R\'enyi DPI range these are
discontinuities of the deficits; for $\alpha>2$ they concern the
algebraic differences.  The jumps persist along pairs of invertible tripartite states whose trace distance tends to zero.
\end{corollary}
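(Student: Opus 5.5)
The plan is to reduce everything to the cq block computation already carried out for the conditional entropies in \Cref{theo:Ht-down-discontinuous,theo:Halpha-t-down-discontinuous}. The state $\Omega^{(u)}$ is block diagonal in the classical label $i$, which is copied on $A$ and $C$.

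\textbf{Marginals.} First I record the relevant marginals:
\begin{equation*}
 \Omega_{AB}=\omega_u,\qquad \Omega_{BC}\cong\omega_u \text{ (with $C$ in the role of $A$)},\qquad \Omega_B=\beta_u,\qquad \Omega_C=\pi_C,\qquad \Omega_{AC}=\tfrac12\textstyle\sum_i|ii\rangle\langle ii|.
\end{equation*}

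\textbf{One-sided orientation.} With $\tau_C=\pi_C$, identity \eqref{eq:one-sided-t-CMI-entropy-difference} gives
\begin{equation*}
 I_t^{\os}=H_t^\downarrow(C|B)-H_t^\downarrow(C|AB).
\end{equation*}
The first term is $H_t^\downarrow(C|B)_{\omega_u}$, which equals $0$ for $u>0$ and $\log2$ at $u=0$ by \Cref{theo:Ht-down-discontinuous}.

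For the second term, the reference $\1_C\otimes\Omega_{AB}$ and the state $\Omega_{ABC}$ are both block diagonal in the $A$-label. So the common-flag rule \eqref{rw:eq:general-flag-rule-Dt} reduces it to the blocks $\tfrac12P_i^{(u)}\otimes|i\rangle\langle i|_C$ against $\tfrac12P_i^{(u)}\otimes\1_C$. These commute, with $D^t=D=0$ on each block, so $H_t^\downarrow(C|AB)=0$ for all $u\geq0$. Hence $I_t^{\os}$ equals $0$ for $u>0$ and $\log2$ at $u=0$.

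This step has a support subtlety. For $u>0$ the reference $\1\otimes\Omega_{AB}$ is singular, being supported on $|i\rangle\langle i|_A\otimes P_i^{(u)}$. I will handle it with the supported extended-value convention and \Cref{lemma:rank-one-interpolated-block}, which is exactly how the discontinuity theorems were treated.

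\textbf{Two-sided orientation.} Here $\rho_C=\pi_C$, so $I_t^{\ts}$ and $I_t^{\os}$ coincide by definition, as noted in \Cref{prop:consolidated-CMI-incomparability}.

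\textbf{R\'enyi case.} The same bookkeeping goes through with $Q_{\alpha,t}$ via \eqref{rw:eq:general-flag-rule-Qalpha} and \eqref{eq:rank-one-Qalpha-block}. The $C|B$ term reproduces \Cref{theo:Halpha-t-down-discontinuous}, and the $C|AB$ term gives $Q_{\alpha,t}=1$ blockwise. Scaling by $d_C$ is handled by \eqref{eq:conditional-scaling-convention}.

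\textbf{Persistence along invertible states.} I will regularize as in \eqref{eq:invertible-regularization-omega}, setting
\begin{equation*}
 \Omega_\eta=(1-\eta)\Omega^{(u)}+\eta\pi_{ABC}.
\end{equation*}
Then I argue:
\begin{itemize}
 \item At $u=0$, the regularized state stays a commuting object with CMI equal to the regularized classical value. That value tends to $\log2$ as $\eta\to0$; I will check that the classical CMI is continuous in $\eta$.
 \item For fixed $u>0$, continuity of $D^t$ on invertible pairs, with the regularization limit given by the supported extension through \eqref{rw:eq:Dt-explicit}, gives a limit of $0$.
 \item Finally, I use a diagonal choice $u_k\downarrow0$, $\eta_k\downarrow0$. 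The trace distance is $(1-\eta)\sqrt u/2$ by the same computation as \eqref{eq:coalescing-state-identities}, since the copied register does not change it.
\end{itemize}

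\textbf{Main obstacle.} The hardest step is justifying that the $\eta\downarrow0$ limit of each of the two divergence terms equals its supported value when the reference is singular. This is needed especially for the $C|AB$ term, where the support of $\Omega_{AB}$ collapses. I would first try to show lower semicontinuity plus an explicit upper bound by the rank-one formula. Failing that, I would compute both terms directly on the explicit two-dimensional blocks using \eqref{rw:eq:power-functional-push-through}.
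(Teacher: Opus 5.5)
Your proposal is correct and follows the paper's proof. You reduce via \eqref{eq:one-sided-t-CMI-entropy-difference} to $H_t^\downarrow(C|B)-H_t^\downarrow(C|AB)$, kill the $C|AB$ term because $A$ copies $C$, import the $C|B$ values and the $\eta$-regularization from \Cref{theo:Ht-down-discontinuous,theo:Halpha-t-down-discontinuous}, and use $\Omega_C=\pi_C$ (which survives the regularization) for the two-sided orientation. The obstacle you flag is not a real one: within each $A$-block the regularized state and the reference $\1_C\otimes\Omega_{AB}$ commute, so the $C|AB$ term is a classical quantity tending to $0$. The paper avoids even this computation: an $A$-controlled unitary on $B$, which fixes $\pi_{ABC}$, maps $\Omega_{u,\eta}$ to $\Omega_{0,\eta}$, so the two $C|AB$ terms being compared are exactly equal.
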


\begin{proof}
The $BC$ marginal of $\Omega^{(u)}$ is $\omega_u$ with its classical
label renamed $C$, while $A$ is a perfect copy of $C$.  Thus the
$C|AB$ conditional term vanishes, and
\eqref{eq:one-sided-t-CMI-entropy-difference}--%
\eqref{eq:one-sided-alpha-t-CMI-entropy-difference}, together with
\Cref{theo:Ht-down-discontinuous,theo:Halpha-t-down-discontinuous},
give the values above.  Moreover, $\Omega_C^{(u)}=\pi_C$, so the
one-sided and two-sided references coincide.

For the statement restricted to invertible states, we set
\begin{equation}
 \Omega_{u,\eta}:=(1-\eta)\Omega^{(u)}+\eta\pi_{ABC}.
 \label{eq:invertible-tripartite-regularization}
\end{equation}
Its $BC$ marginal is \eqref{eq:invertible-regularization-omega}.  The
two $C|AB$ terms being compared are equal: a unitary controlled by
$A$ rotates the two signal vectors while leaving the maximally mixed
regularizing term invariant, and simultaneous unitary invariance is
\eqref{rw:eq:unitary-invariance-Dft}.  Finally,
\begin{equation}
 T(\Omega_{u,\eta},\Omega_{0,\eta})
 =(1-\eta)\frac{\sqrt u}{2}.
\end{equation}
The diagonal sequences $(u_k,\eta_k)$ constructed in the proofs of
\Cref{theo:Ht-down-discontinuous,theo:Halpha-t-down-discontinuous}
therefore prove the claim.
\end{proof}

At $t=0$ this mechanism disappears and
\eqref{eq:ordinary-CMI-literature-bound} gives a continuity modulus
without assuming a positive lower bound on the smallest eigenvalues.
The construction does not yield a finite reverse quantity at its
singular endpoint, so it does not establish an analogous no-modulus
theorem for $I_t^{\rev}$.

%% file: operational_interpretations_section.tex
\section{Operational interpretations}
\label{sec:operational-interpretations}

For every ordered pair $(\rho,\sigma)$ and every $t\in[0,1]$, \cite{capel2026informationgeometry} constructs a pair-dependent finite classical
modular-record experiment whose likelihood cumulant is generated by
the geodesic modular operator
\cite[Theorem~3.1]{capel2026informationgeometry}.  It proves an exact
finite-blocklength reduction from tests in the associated commutative
algebra to tests of the resulting classical laws
\cite[Proposition~3.2]{capel2026informationgeometry}.  For these
restricted i.i.d. tests, the endpoint slope and curvature give
$D^t$ and its information variance
\cite[Proposition~3.3(i)]{capel2026informationgeometry}.  The same
likelihood surface determines the Stein and second-order exponents
\cite[Proposition~3.3(ii)]{capel2026informationgeometry}, the Chernoff
information \cite[Proposition~3.3(iii)]{capel2026informationgeometry},
and the corresponding Cram\'er rate functions
\cite[Proposition~3.3(iv)]{capel2026informationgeometry}.  At $t=1$, the
construction recovers the optimal reverse-test record of
\cite{matsumoto2010reverse}, and $\widehat D$ is the Stein exponent of
that canonical experiment
\cite[Section~3.1, Eqs.~(57)--(58), and Remark~3.4]{capel2026informationgeometry}.

These statements concern the prescribed modular record, which is part
of the operational task.  They are not statements about unrestricted
quantum discrimination between $\rho^{\otimes n}$ and
$\sigma^{\otimes n}$: the exponent in the ordinary quantum Stein lemma
is the Umegaki relative entropy $D(\rho\Vert\sigma)$
\cite{ogawa2000strong}.  The construction of the record, its exact
classical representation, and this restricted statistical theory all
belong to the companion paper.  The following logarithmic
specialization of its geodesic likelihood-surface theorem is the only
result needed for the capacity theorem below
\cite[Theorem~3.1]{capel2026informationgeometry}.

\begin{proposition}[Canonical modular-record representation]
\label{prop:op-canonical-record}
We take states $\rho,\sigma>0$ on a finite-dimensional Hilbert space
and $t\in[0,1]$, and we write the spectral resolution of the geodesic
modular operator on the Hilbert--Schmidt space as
\begin{equation}
 \Gamma_t(\rho,\sigma)
 =\sum_{z\in\mathcal Z_t}\lambda_{t,z}E_{t,z},
 \qquad \lambda_{t,z}>0,
 \label{eq:op-Gamma-spectral-resolution}
\end{equation}
where the eigenvalues are distinct.  We define
\begin{equation}
 q_t(z):=\big\langle\sigma^{1/2},
 E_{t,z}[\sigma^{1/2}]\big\rangle_{\rm HS},
 \qquad
 p_t(z):=\lambda_{t,z}q_t(z),
 \label{eq:op-pq-def}
\end{equation}
and omit outcomes for which $q_t(z)=0$.  Then $p_t$ and $q_t$ are
probability distributions, $p_t\ll q_t$, and
\begin{equation}
 \frac{p_t(z)}{q_t(z)}=\lambda_{t,z}.
 \label{eq:op-record-likelihood-ratio}
\end{equation}
Moreover, the geodesic relative entropy is exactly the classical
relative entropy of this experiment:
\begin{equation}
 D^t(\rho\Vert\sigma)
 =\sum_{z\in\mathcal Z_t}p_t(z)
   \log\frac{p_t(z)}{q_t(z)}
 =D_{\rm KL}(p_t\Vert q_t).
 \label{eq:op-D-KL}
\end{equation}
\end{proposition}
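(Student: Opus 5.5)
The argument is a direct translation of the scalar spectral measure $\nu_t$ from \eqref{rw:eq:cyclic-spectral-measure} into a finite classical experiment. Since $\Gamma_t(\rho,\sigma)=L_{\rho^{1-t}}R_{B_t}$ is a positive definite operator on the finite-dimensional Hilbert--Schmidt space, it has a finite spectral resolution \eqref{eq:op-Gamma-spectral-resolution}. The projections $E_{t,z}$ are mutually orthogonal and sum to $\1_{\rm HS}$, and all eigenvalues are strictly positive. This is exactly the setting of \Cref{rw:prop:basic-spectral-properties}, with $q_t(z)=\nu_t(\{\lambda_{t,z}\})$.

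First I will check that $q_t$ and $p_t$ are probability distributions. Nonnegativity of $q_t(z)=\lVert E_{t,z}\sigma^{1/2}\rVert_2^2$ is immediate. Normalization follows from $\sum_z E_{t,z}=\1_{\rm HS}$ and $\lVert\sigma^{1/2}\rVert_2^2=\operatorname{Tr}\sigma=1$. For $p_t$, nonnegativity follows from $\lambda_{t,z}>0$. Its total mass is
\[
 \sum_z\lambda_{t,z}q_t(z)=\big\langle\sigma^{1/2},\Gamma_t\sigma^{1/2}\big\rangle_{\rm HS}=1,
\]
by \eqref{rw:eq:gamma-normalization}, which is the first-moment identity \eqref{rw:eq:cyclic-spectral-moments}.

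Discarding the outcomes with $q_t(z)=0$ changes neither total mass, because $p_t(z)=\lambda_{t,z}q_t(z)$ vanishes there as well. On the remaining outcomes $q_t(z)>0$, so $p_t\ll q_t$, and the likelihood ratio \eqref{eq:op-record-likelihood-ratio} holds by definition.

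For \eqref{eq:op-D-KL}, I apply the functional calculus with $f(x)=x\log x$ to the spectral resolution:
\[
 D^t(\rho\Vert\sigma)=\big\langle\sigma^{1/2},f(\Gamma_t)\sigma^{1/2}\big\rangle_{\rm HS}=\sum_z q_t(z)\lambda_{t,z}\log\lambda_{t,z}.
\]
Substituting $p_t(z)=\lambda_{t,z}q_t(z)$ and $\lambda_{t,z}=p_t(z)/q_t(z)$ turns each term into $p_t(z)\log\bigl(p_t(z)/q_t(z)\bigr)$. The omitted outcomes contribute zero on both sides, which gives $D_{\rm KL}(p_t\Vert q_t)$.

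There is no real obstacle in this argument. The only care needed is bookkeeping: the eigenvalues must be grouped as distinct values so that $p_t/q_t$ is well defined, and null outcomes must be excluded consistently in both sums.
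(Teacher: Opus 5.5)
Your proof is correct and follows essentially the paper's argument. The paper imports this proposition from the companion paper, but its support-extension remark and the spectral-measure discussion in \Cref{rw:prop:basic-spectral-properties} verify it exactly as you do. That verification uses $\sum_z q_t(z)=1$, the first-moment identity $\sum_z p_t(z)=\langle\sigma^{1/2},\Gamma_t\sigma^{1/2}\rangle_{\rm HS}=1$, and the substitution $p_t(z)=\lambda_{t,z}q_t(z)$ in the spectral expansion of $D^t=D^t_{x\log x}$.
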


\begin{remark}[Support extension]
The algebraic statement of the proposition, and hence the capacity
theorem below, extends to
$\operatorname{supp}\rho\leq\operatorname{supp}\sigma$.  We
compress to $\operatorname{supp}\sigma$, use the regularized endpoint
definitions at $t=0,1$, allow zero eigenvalues of $\Gamma_t$, and
adopt $0\log0=0$.  Spectral calculus gives
\begin{equation}
 \sum_zq_t(z)=1,
 \qquad
 \sum_zp_t(z)
 =\left\langle\sigma^{1/2},
   \Gamma_t\sigma^{1/2}\right\rangle_{\rm HS}
 =\operatorname{Tr}\rho=1,
\end{equation}
and $p_t(z)=\lambda_{t,z}q_t(z)$ still proves
\eqref{eq:op-D-KL}.  This observation concerns only the canonical
record and its capacity calculation; it does not extend the companion
paper's finite-blocklength or reverse-test assertions.
\end{remark}

\subsection{Capacity per unit cost}
\label{subsec:op-present}

Starting from \Cref{prop:op-canonical-record}, we package these two
laws as a
binary-input memoryless channel
$\mathsf W_t$ with output alphabet $\mathcal Z_t$ by setting
\begin{equation}
 \mathsf W_t(z\mid0)=q_t(z),\qquad
 \mathsf W_t(z\mid1)=p_t(z),\qquad c(0)=0,\quad c(1)=1.
 \label{eq:op-on-off-channel}
\end{equation}
The free letter is a background use, the costly letter is a
$t$-aligned signal pulse, and the output is the prescribed modular
record.  Access to the undephased quantum preparations used to
construct that record is not part of this classical channel.

\begin{theorem}[Capacity per unit cost]
\label{thm:op-capacity}
The capacity per unit cost of $\mathsf W_t$ is
\begin{equation}
 \boxed{
 C_{\rm puc}(\mathsf W_t,c)=D^t(\rho\Vert\sigma)
 \quad\text{nats per costly pulse}.}
 \label{eq:op-capacity}
\end{equation}
Equivalently, $\nu$ costly preparations communicate
$\nu D^t(\rho\Vert\sigma)+o(\nu)$ nats reliably in the sparse-pulse
limit, while arbitrarily many zero-cost background uses are allowed.
\end{theorem}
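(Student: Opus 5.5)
The plan is to reduce the statement to the classical zero-cost-symbol theorem of Verd\'u for capacity per unit cost. That theorem says the following. Suppose a memoryless channel has an input letter $0$ of zero cost, and every other letter $x$ has positive cost $c(x)$. Then the capacity per unit cost equals
\begin{equation*}
 \sup_{x\neq0}\frac{D_{\rm KL}\bigl(\mathsf W(\cdot\mid x)\Vert\mathsf W(\cdot\mid0)\bigr)}{c(x)}.
\end{equation*}
For the binary-input channel $\mathsf W_t$ in \eqref{eq:op-on-off-channel}, the only costly letter is $1$, with $c(1)=1$. The supremum therefore collapses to the single term $D_{\rm KL}(p_t\Vert q_t)$.

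The ordered steps are as follows.

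First, verify the hypotheses of Verd\'u's theorem for $\mathsf W_t$. The output alphabet $\mathcal Z_t$ is finite; it is at most the number of distinct eigenvalues of $\Gamma_t$ on the finite-dimensional Hilbert--Schmidt space. The zero-cost letter exists, namely $0$. By \Cref{prop:op-canonical-record}, both $p_t$ and $q_t$ are probability laws with $p_t\ll q_t$. Hence the divergence $D_{\rm KL}(p_t\Vert q_t)$ is finite, bounded by $\log\max_z\lambda_{t,z}$.

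Second, invoke \eqref{eq:op-D-KL} to identify $D_{\rm KL}(p_t\Vert q_t)=D^t(\rho\Vert\sigma)$. This yields \eqref{eq:op-capacity}.

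Third, translate the definition of capacity per unit cost into the stated operational sentence. Codes using total cost at most $\nu$ carry $\log M$ nats with vanishing error, and the optimal rate satisfies $\log M=\nu C_{\rm puc}+o(\nu)$. The block length is unrestricted, which is the sense in which arbitrarily many zero-cost background uses are allowed.

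For completeness, I would sketch both directions rather than merely cite them.

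\begin{itemize}
\item \emph{Achievability.} Use pulse-position modulation. Each message is a block of length $N$ containing one costly pulse. Decode with a Neyman--Pearson/Stein test of $p_t^{\otimes k}$ against $q_t^{\otimes k}$ on repeated pulses, and use the classical Stein lemma. Since $\log M=\log N$ and the cost is $k$, the ratio $\log M/k$ approaches $D_{\rm KL}(p_t\Vert q_t)$.
\item \emph{Converse.} Bound the mutual information by the cost-weighted divergence. Using the golden formula $I(X;Y)\le\sum_x P(x)D_{\rm KL}(\mathsf W(\cdot\mid x)\Vert\mathsf W(\cdot\mid0))$, one gets $I\le\mathbb E[\text{cost}]\cdot D_{\rm KL}(p_t\Vert q_t)$ per use. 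This is then combined with Fano's inequality.
\end{itemize}

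The main obstacle is not analytic. It is making sure the reduction is legitimate, namely that the channel really is the classical channel $\mathsf W_t$ built from the prescribed modular record, and not the underlying quantum preparations. If the receiver could access the undephased states, the relevant quantity would instead be a quantum divergence, which by the quantum Stein-type converse would be Umegaki-like rather than $D^t$.

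I would handle this by stating explicitly that the output is the classical outcome $z$ with laws $(q_t,p_t)$. The quantum structure then enters only through \Cref{prop:op-canonical-record}. Nothing beyond finiteness of the alphabet and absolute continuity is needed, so the argument is uniform in $t\in[0,1]$ and covers the BS endpoint $t=1$.
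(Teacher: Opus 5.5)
Your proposal is correct and follows the paper's proof. Both reduce to Verd\'u's zero-cost-symbol theorem, observe that the supremum collapses to the single costly letter, and invoke \eqref{eq:op-D-KL} to identify $D_{\rm KL}(p_t\Vert q_t)$ with $D^t(\rho\Vert\sigma)$. Where you sketch PPM achievability and a golden-formula/Fano converse, the paper instead adds the small-$\theta$ expansion $I_\theta(X;Z)=\theta D_{\rm KL}(p_t\Vert q_t)-D_{\rm KL}(m_\theta\Vert q_t)$; both are only supplementary justifications of the same cited theorem.
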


\begin{proof}
The zero-cost-symbol theorem \cite[Theorem~3]{verdu1990capacity} gives
\begin{equation}
 C_{\rm puc}(\mathsf W_t,c)
 =\sup_{x:c(x)>0}
 \frac{D_{\rm KL}(\mathsf W_t(\cdot\mid x)
 \Vert\mathsf W_t(\cdot\mid0))}{c(x)}.
\end{equation}
There is only one nonzero-cost letter, so \eqref{eq:op-D-KL} proves the
claim.  Equivalently, if the pulse is used with probability $\theta$ and
$m_\theta=(1-\theta)q_t+\theta p_t$, then
\begin{equation}
 I_\theta(X;Z)
 =\theta D_{\rm KL}(p_t\Vert q_t)-D_{\rm KL}(m_\theta\Vert q_t).
\end{equation}
Since the alphabet is finite and $p_t\ll q_t$, the last term is
$O(\theta^2)$.  Therefore
$\lim_{\theta\downarrow0}I_\theta(X;Z)/\theta
=D^t(\rho\Vert\sigma)$.
\end{proof}

\begin{remark}[BS capacity per unit cost]
At $t=1$, the companion construction reduces to the canonical optimal
reverse-test record for the BS entropy
\cite[Section~3.1, Eqs.~(57)--(58)]{capel2026informationgeometry},
originally obtained in \cite{matsumoto2010reverse}.  Writing
$\mathsf W_{\rm BS}:=\mathsf W_1$, \Cref{thm:op-capacity} gives
\begin{equation}
 C_{\rm puc}(\mathsf W_{\rm BS},c)=\widehat D(\rho\Vert\sigma)
 \quad\text{nats per costly BS pulse}.
 \label{eq:op-BS-capacity}
\end{equation}
This is the capacity of the classical reverse-test precursor, not the
capacity obtained by unrestricted use of the original quantum states.
\end{remark}

\paragraph{Scope of this result.}
The channel in \eqref{eq:op-on-off-channel} depends on the ordered pair
$(\rho,\sigma)$ and on $t$; it is therefore a prescribed resource rather
than a pair-independent communication model.  At $t=1$ it reduces to
the BS reverse-test channel, while for commuting $\rho$ and $\sigma$ all
values of $t$ give the same classical experiment.  For $0<t<1$, no
reverse-test variational characterization is asserted.  Thus
\Cref{thm:op-capacity}, together with its BS specialization, is the
operational result presented and proved here.  The construction and
exact classical representation of the modular record belong to
\cite[Theorem~3.1]{capel2026informationgeometry}; its finite-blocklength
reduction and restricted statistical consequences are developed in
\cite[Propositions~3.2--3.3]{capel2026informationgeometry}.

%% file: discussion_outlook_section.tex
\section{Discussion and outlook}
\label{sec:discussion-outlook}

The affine-invariant geodesic between the standard and maximal
relative modular operators produces a unified family of quantum
divergences.  Its logarithmic member joins the Umegaki and
Belavkin--Staszewski relative entropies, while its power members join
the Petz and geometric R\'enyi divergences.  This construction retains
the expected algebraic properties, is monotone in the R\'enyi order,
and has a sharply parameter-dependent behavior along the geodesic.
In particular, the relative entropies increase strictly from Umegaki
to BS for every pair of noncommuting invertible states, whereas the direction for
the R\'enyi family changes with the order.  The comparison with the
$(\alpha,z)$ family, the complete matrix characterization of the
$C^1$ universal monotonicity cone, and the representing-measure
characterization of its operator-convex slice show that the
geodesic family is structurally distinct from previously studied
two-parameter families, as detailed in
\Cref{rw:sec:interpolated-divergences,rw:sec:basic-properties,rw:sec:t-monotonicity}.

The central analytic result is data processing for every finite
operator-convex generator and every $t\in[0,1]$.  This includes the
$t$-relative entropies and the $(t,\alpha)$-R\'enyi divergences for
$0<\alpha<1$ and $1<\alpha\leq2$.  Retaining the positive terms in
the proof gives resolvent remainders, an intrinsic Hilbert--Schmidt
defect, a quartic lower bound for $D^t$, and order-dependent bounds for
the R\'enyi family.  Thus the interpolation preserves not only DPI
but also a quantitative record of the information lost under a
channel, as shown in
\Cref{thm:consolidated-dpi,thm:consolidated-resolvent-dpi,thm:consolidated-Dt-quartic,thm:consolidated-Renyi-dpi}.

The equality theory displays an abrupt endpoint transition rather
than an interpolation of recovery classes.  For invertible input and
output states and an equality-determining operator-convex generator,
every nonmaximal parameter $0\leq t<1$ has exactly the Petz
sufficiency class.  At $t=1$ this changes to the generally larger BS
class.  The quadratic generator is the distinguished exception: its
divergence is independent of $t$ and has the BS equality class along
the whole path.  The equivalent likelihood-ratio, functional-calculus,
and state-recovery formulations are collected in
\Cref{sec:consolidated-recovery}.  For partial traces, the same
rigidity implies that all three $t$-QMC classes restricted to
invertible states coincide with the invertible QMC class for $t<1$,
whereas at $t=1$ they coincide with the invertible BS-QMC class,
which can be strictly larger.  The
power-weighted inner-product analysis in
\Cref{subsec:consolidated-weighted-symmetry} explains why all
nonmaximal parameters share one recovery class while the KMS/BS
endpoint is exceptional, as further discussed in
\Cref{cor:consolidated-tQMC-hierarchy,cor:consolidated-f-Renyi-tQMC-hierarchy}.

Although their zero sets agree for $t<1$, the three ordered
$t$-conditional mutual informations are quantitatively different.
They admit no universal ordering, and the available comparisons range
from an additive projective estimate for the two forward orientations
to an exact multiplicative relation at R\'enyi order two and a no-go
result for an analogous universal factor at the BS endpoint, as established in
\Cref{prop:consolidated-CMI-incomparability,prop:consolidated-forward-projective-comparison,prop:consolidated-quadratic-CMI-comparison,prop:consolidated-no-BS-projective-factor}.
The strict-interior estimates in
\Cref{sec:consolidated-quantitative-reconstruction} further show how
the common Petz zero set coexists with genuinely $t$-dependent
nonzero defects and reconstruction formulas.

The continuity theory is organized from general to conditional
quantities.  We first obtain two-input bounds for $(t,f)$-divergences
and sharper logarithmic and power estimates, then pass to conditional
entropies and to all three $t$-CMI orientations.  These bounds are
explicit when the relevant states and marginals have prescribed
positive lower bounds on their smallest eigenvalues.  At the Umegaki,
Petz, and BS endpoints, we compare the resulting estimates with the
appropriate sharp or ALAFF-based results in the literature.  Moreover,
for every fixed $t>0$, the logarithmic and R\'enyi down-arrow
conditional quantities, as well as the corresponding two forward
$t$-CMIs, can be discontinuous when one or more smallest eigenvalues
tend to zero.
The positive lower-eigenvalue assumptions are therefore essential for
this class of uniform estimates, as discussed in
\Cref{sec:continuity-interpolated-conditionals}.

The present paper contains the construction and structural theory of
the geodesic divergences, their DPI and quantitative remainders, the
classification of equality and QMCs, the comparison and reconstruction
results for conditional quantities, the continuity theory, and the
capacity-per-unit-cost theorem in \Cref{thm:op-capacity}.  The last
result imports only the canonical classical representation
$D^t(\rho\Vert\sigma)=D_{\rm KL}(p_t\Vert q_t)$ from the companion
paper
\cite[Theorem~3.1 and Proposition~3.3(i)]{capel2026informationgeometry}.
The construction and exact classical representation of the
pair-dependent experiment for operator-convex generators are proved
there in \cite[Theorem~3.1]{capel2026informationgeometry}.  That work
also contains the Hessian metrics, their universal RLD endpoint, and
their BKM--RLD interpolation
\cite[Theorem~2.3, Corollary~2.4, and Theorem~2.5]{capel2026informationgeometry},
as well as the boundary-index characterization of $t$ and comparison
with the Hasegawa--Petz metric path
\cite[Theorem~2.6 and Proposition~2.7]{capel2026informationgeometry},
the finite-blocklength reduction and restricted Stein, second-order,
Chernoff, and large-deviation results
\cite[Propositions~3.2--3.3]{capel2026informationgeometry}, and the
Busemann and horospherical interpretation
\cite[Propositions~4.1--4.2 and Section~4.3]{capel2026informationgeometry}.

\subsection{Open problems}

The paper leaves a collection of questions worth mentioning here and exploring in future works.

\begin{enumerate}[label=\textbf{\arabic*.},leftmargin=*]

\item \textbf{Structure of the universal monotonicity cone.}
Its $C^1$ part is already characterized exactly by the ordinary-matrix
test in \Cref{rw:thm:universal-cone-matrix-characterization} and its
finite operator-convex slice by the representing-measure polar condition in
\Cref{rw:thm:universal-cone-measure-characterization}.  What remains
open is a simpler structural description: for example, the
identification of its extreme rays, a reduction to a minimal matrix
dimension or test family, or a tractable scalar criterion equivalent
to the polar inequalities.  The counterexamples in
\Cref{rw:prop:universal-cone-differential-test,rw:rem:universal-cone-false-simplifications}
show that the most immediate pointwise derivative conditions do not
provide such a description.

\item \textbf{Maximal data-processing range.}
Operator convexity gives a uniform sufficient condition for DPI, but
it need not describe every generator that is monotone for a fixed
interior value of $t$.  It would be useful to characterize the maximal
fixed-$t$ class of generators satisfying DPI.  For the power family,
this includes deciding whether any R\'enyi orders above two admit DPI
for a nontrivial interior parameter or dimension, or whether a sharp
obstruction rules them out.  No DPI is asserted in that range by the
present continuity results, which concern the algebraic R\'enyi
expressions only.

\item \textbf{Explicit quantitative Petz recovery.}
For $0<t<1$, the intrinsic defect has the correct Petz zero set, and
\Cref{cor:consolidated-nonmaximal-Petz-holder} converts it into a
H\"older estimate for the distance to the Petz-recovered state when
the relevant smallest eigenvalues have fixed positive lower bounds.
The exponent and constant obtained from the \L{}ojasiewicz argument
are nonconstructive.  A major open problem is to obtain computable
dimension- and spectrum-dependent constants, or a state-independent
estimate that does not require positive lower eigenvalue bounds.  Such
constants cannot remain uniform as $t\uparrow1$, because the equality
class then enlarges to the BS class.  It is also open whether the
strict-interior remainders admit a direct formulation through a
pair-independent quantum recovery channel: the positive comparison
map used here is pair dependent and nonlinear, while the simpler
asymmetric map is generally not positive.

\item \textbf{Relations among the three conditional orientations.}
The common zero set of the three $t$-CMIs does not produce either a
universal ordering or a universal multiplicative comparison.  Natural
remaining questions are to find sharp comparison constants under
prescribed lower bounds on the smallest eigenvalues, to determine
whether reference- or spectrum-dependent constants can replace the
failed universal BS factor, and to derive a two-sided reconstruction
estimate with the correct QMC zero set.  The distinction between a shared equality
class and inequivalent nonzero deficits is likely essential in any
such result.

\item \textbf{Sharp continuity with positive lower eigenvalue bounds.}
The moduli in \Cref{sec:continuity-interpolated-conditionals} are
explicit but are generally not claimed to be optimal.  Determining
the sharp moduli for $D_f^t$, $D^t$, $D_\alpha^t$, the down-arrow
conditional entropies, and all three $t$-CMIs remains open.  Even for
the endpoint $t=0$, the sharp continuity modulus of quantum
conditional mutual information is not known.  It would also be useful
to know whether anchor parameters or Schmidt-number information can
sharpen the conditional bounds once positive lower bounds on the
smallest eigenvalues are prescribed.  The endpoint crossovers in
\Cref{fig:umegaki-continuity-bounds-comparison,fig:BS-continuity-bounds-comparison}
compare available analytic estimates but do not establish optimality.

\item \textbf{Continuity as smallest eigenvalues tend to zero.}
The discontinuity constructions in
\Cref{theo:Ht-down-discontinuous,theo:Halpha-t-down-discontinuous,cor:one-sided-t-CMI-discontinuity}
move the conditioning marginal and treat the two forward CMI
orientations.  They therefore leave open whether fixing the
conditioning marginal restores continuity for some $t>0$, and whether
the reverse $t$-CMI admits a dimension-only modulus.  Energy-constrained
and infinite-dimensional analogues would require a replacement for
the finite-dimensional lower-eigenvalue hypotheses.

\end{enumerate}